
\documentclass[Portugues,Final]{ic-tese-v3}

\usepackage{amsmath}
\usepackage{amsthm}
\usepackage{mathtools}
\usepackage{amsfonts}
\usepackage{booktabs, multirow}
\usepackage{float}
\usepackage[latin1,utf8]{inputenc}
\usepackage{tikz}
\usepackage{microtype}
\usepackage[framemethod=TikZ]{mdframed}
\usetikzlibrary{arrows}
\usetikzlibrary{arrows.meta}
\usetikzlibrary{positioning, shapes.arrows, backgrounds}
\usepackage{xcolor}
\definecolor{ferngreen}{rgb}{0.31, 0.47, 0.26}
\usepackage{subcaption}

\usepackage
 [pdfauthor={Alexsandro Oliveira Alexandrino},
  pdftitle={Variações do Problema de Distância de Rearranjos},
  pdfkeywords={Rearranjo de Genomas, Reversões, Transposições, Algoritmos de Aproximação},
  pdfproducer={Latex with hyperref},
  pdfcreator={pdflatex}]
{hyperref}

\mdfdefinestyle{problembox}{ nobreak=true, outerlinewidth=0pt, roundcorner=5pt, leftmargin=10, rightmargin=10, innertopmargin=10pt, innerbottommargin=10pt, splittopskip=\topskip}
\newenvironment{problembox}{\begin{mdframed}[style=problembox]}{\end{mdframed}}


\theoremstyle{definition}
\newtheorem{definition}{Definição}
\numberwithin{definition}{section}

\numberwithin{problem}{section}

\newtheorem{example}{Exemplo}
\numberwithin{example}{section}

\newtheorem{remark}{Observação}
\numberwithin{remark}{section}

\theoremstyle{plain}
\newtheorem{lemma}{Lema}
\numberwithin{lemma}{section}

\newtheorem{theorem}{Teorema}
\numberwithin{theorem}{section}

\newtheorem{corollary}{Corolário}
\numberwithin{corollary}{section}

\usepackage[portuguese, ruled, linesnumbered, vlined]{algorithm2e}
\usepackage{algorithmicx}
\usepackage{algpseudocode}
\makeatletter
\makeatother

\makeatletter
\newcommand{\G}{\mathcal{G}}
\newcommand{\bi}{bi}
\newcommand{\I}{\mathcal{I}}
\newcommand{\Ig}{\mathcal{I}^{ig}}
\newcommand{\deletionElement}{\alpha}
\newcommand{\comp}{\cdot}
\newcommand{\transrev}{\rho\tau}
\newcommand{\revrev}{\rho\rho}
\newcommand{\BI}{\mathcal{BI}}
\newcommand{\insertion}{\phi}
\newcommand{\deletion}{\psi}
\newcommand{\M}{\mathcal{M}}
\newcommand{\Mindel}{\mathcal{M}^{\phi, \psi}}
\newcommand{\cclean}{c_{\mathrm{clean}}}
\newcommand{\cgood}{c_{g}}
\newcommand{\crotulado}{c_{\mathrm{labeled}}}
\newcommand{\pname}[1]{\texttt{\bf{#1}}}
\newcommand{\nprime}{n'}
\newcommand{\Mod}{\,\mathrm{mod}\,}
\newcommand{\anterior}{\texttt{anterior}}
\newcommand{\posterior}{\texttt{posterior}}
\newcommand{\revisao}{\color{black}}
\newcommand{\revisaodois}{\color{black}}
\newcommand{\revisaof}{\color{black}}

\makeatother

\begin{document}

\autor{Alexsandro Oliveira Alexandrino}

\titulo{Variações do Problema de Distância de Rearranjos}


\orientador{Prof. Dr. Zanoni Dias}

\coorientador{Prof. Dr. Ulisses Martins Dias}

\doutorado


\datadadefesa{27}{03}{2024}

\avaliadorA{Prof.~Dr.~Zanoni Dias}{Universidade Estadual de Campinas}
\avaliadorB{Profa.~Dra.~Marie-France Sagot}{Institut National de Recherche en Sciences et Technologies du Numérique}
\avaliadorC{Profa.~Dra.~Maria Emília Machado Telles Walter}{Universidade de Brasília}
\avaliadorD{Prof.~Dr.~Orlando Lee}{Universidade Estadual de Campinas}
\avaliadorE{Prof.~Dr.~Guilherme Pimentel Telles}{Universidade Estadual de Campinas}

\fichacatalografica{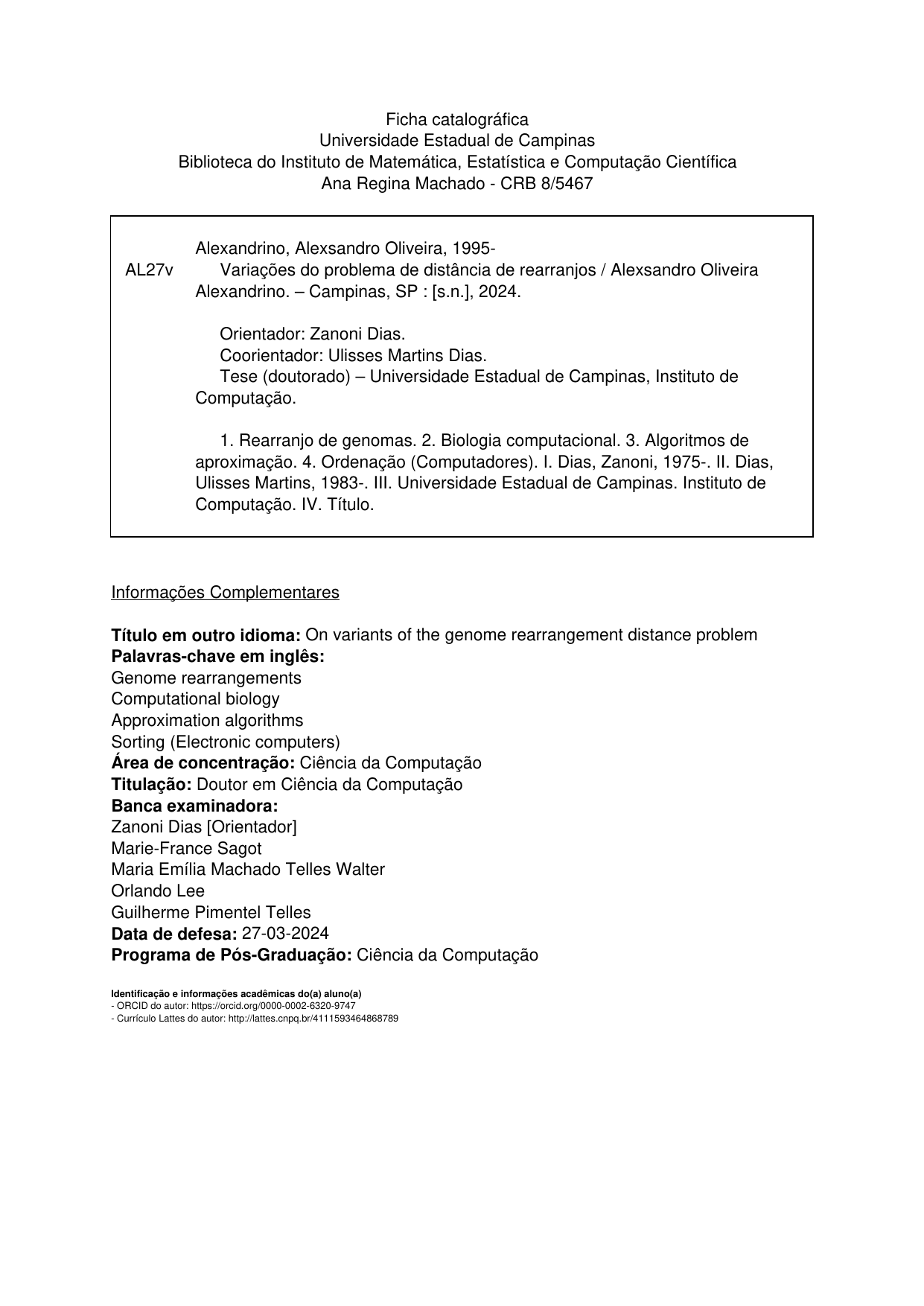}

\paginasiniciais


\begin{epigrafe}
{\it
Quando a educação não é libertadora,\\ o sonho do oprimido é ser o opressor.}

\hfill (Paulo Freire)
\end{epigrafe}

\prefacesection{Agradecimentos}

Esta tese representa o maior marco de uma jornada que começou em 2016, quando comecei o meu trabalho de conclusão de curso da graduação explorando problemas de Rearranjo de Genomas. No início dessa jornada, obtive a ajuda dos professores Criston e Lucas Ismaily, que foram os meus orientadores de TCC. Também gostaria de agradecer às professoras Paulyne e Carla, que me incentivaram a entrar na carreira acadêmica.

Ao ingressar na Unicamp, tive a sorte de ser orientado pelo Zanoni Dias e, além disso, tive a oportunidade de trabalhar como monitor ao seu lado em disciplinas e cursos da Unicamp. Agradeço ao Zanoni por tudo que me ensinou e por sua confiança, e também pela amizade construída ao longo dos anos. Também tive o prazer de ser coorientado pelos professores Carla Lintzmayer e Ulisses Dias no mestrado e doutorado, respectivamente. Gostaria de agradecer a ambos pela colaboração acadêmica e pelo que me ensinaram. 

Gostaria de agradecer ao André, Klairton e Gabriel, pela colaboração na pesquisa e pelas conversas semanais. Além deles, gostaria de agradecer a todos os membros do LOCo e todos os funcionários do IC/Unicamp. 

Alguns amigos fizeram parte dessa jornada desde a graduação na UFC até o doutorado na Unicamp. Gostaria de agradecer a todos e, em especial, à Ana Paula, Daiane, Décio, Italos e Leodécio. 

Durante 2022, tive a oportunidade de viajar para a França e trabalhar com os professores Guillaume Fertin e Géraldine Jean, que me ajudaram muito na minha experiência de intercâmbio. Além disso, fiz grandes amigos em Nantes e gostaria de agradecer a todos eles.

Por fim, e mais importante, gostaria de agradecer a todos da minha família, que sempre estiveram ao meu lado desde 2013, quando saí da minha cidade para ingressar na graduação. Agradeço aos meus pais, Pedro e Marileide, e também à Mara, Alex, Katyeudo e Katyenne.  

Esta tese foi realizada com o apoio da Coordenação de Aperfeiçoamento de Pessoal de Nível Superior - Brasil (CAPES) - Código de Financiamento 001 e do Conselho Nacional de Desenvolvimento Científico e Tecnológico - Brasil (CNPq) - Processo 202292/2020-7.

\begin{resumo}
Considerando um par de genomas de organismos de espécies relacionadas, os problemas de distância de rearranjos têm como objetivo estimar {\revisaof quão distante um deles está em relação ao outro em termos de rearranjos de genomas}, que são eventos mutacionais capazes de modificar o material genético ou a posição relativa de segmentos de um genoma. {\revisaof Considerando o Princípio da Máxima Parcimônia, o termo {\it distância}, ou ainda {\it distância de rearranjos}, é definido como o número mínimo de rearranjos de genomas necessários para transformar um genoma no outro.}

Os primeiros trabalhos que estudaram a distância de rearranjos assumiram que os genomas comparados possuem o mesmo conjunto de genes (genomas balanceados) e, além disso, apenas a ordem relativa dos genes e suas orientações, quando conhecidas, são utilizadas na representação matemática dos genomas. Essas restrições implicam que é possível transformar um genoma em outro usando apenas rearranjos que não alteram a quantidade de material genético no genoma (rearranjos conservativos). Nesse caso, os genomas são representados como permutações, o que deu origem aos problemas de Ordenação de Permutações por Rearranjos. Os principais problemas de Ordenação de Permutações por Rearranjos consideram DCJs, reversões, transposições, ou a combinação de reversões e transposições, sendo que eles possuem complexidade conhecida. Além desses, foram estudados outros problemas que combinam transposições com um ou mais dos seguintes rearranjos: transposições inversas, revrevs e reversões. Apesar de existirem algoritmos de aproximação na literatura para esses problemas, as complexidades deles permaneciam em aberto. Um dos resultados desta tese é a prova de complexidade desses problemas que combinam transposições com transposições inversas, revrevs e reversões. Além disso, apresentamos um novo algoritmo de $1.375$-aproximação para a Ordenação de Permutações por Transposições que possui melhor complexidade de tempo.

Com o avanço da área, novos trabalhos começaram a considerar genomas desbalanceados e a incorporar a distribuição dos tamanhos das regiões intergênicas. Ao considerar genomas desbalanceados, é necessário o uso de inserções e deleções para transformar um genoma em outro. Nesta tese, estudamos tanto o problema de Distância de Rearranjos em genomas desbalanceados considerando apenas a sequência de genes e suas orientações (quando conhecidas), quanto o problema de Distância de Rearranjos Intergênicos em genomas desbalanceados, que incorpora os tamanhos das regiões intergênicas na representação dos genomas, além do uso da sequência de genes e suas orientações (quando conhecidas). Apresentamos novas estruturas e conceitos para problemas que envolvem reversões, transposições e a combinação de reversões e transposições, que são usados em provas de complexidade e algoritmos de aproximação. Além disso, realizamos experimentos em genomas sintéticos e em genomas reais, evidenciando a aplicabilidade dos nossos algoritmos. 
\end{resumo}

\begin{abstract}
Considering a pair of genomes from individuals of related species, the goal of genome rearrangement distance problems is to estimate how distant these genomes are from each other based on genome rearrangements, which are mutational events that modify the genetic material or the relative position from segments of a genome. Using the Principe of Parsimony, the term {\it distance}, or {\it rearrangement distance}, refers to the minimum number of rearrangements necessary to transform one genome into the other.

Seminal works in genome rearrangements assumed that both genomes being compared have the same set of genes (balanced genomes) and, furthermore, only the relative order of genes and their orientations, when they are known, are used in the mathematical representation of the genomes. These restrictions imply that it is possible to transform one genome into the other using only conservative rearrangements, which are rearrangements that do not alter the genetic material from a genome. In this case, the genomes are represented as permutations, originating the Sorting Permutations by Rearrangements problems. The main problems of Sorting Permutations by Rearrangements considered DCJs, reversals, transpositions, or the combination of both reversals and transpositions, and these problems have their complexity known. Besides these problems, other ones were studied involving the combination of transpositions with one or more of the following rearrangements: transreversals, revrevs, and reversals. Although there are approximation results for these problems, their complexity remained open. Some of the results of this thesis are the complexity proofs for these problems. Furthermore, we present a new $1.375$-approximation algorithm, which has better time complexity, for the Sorting Permutations by Transpositions. 

As the field has progressed, new works started to consider unbalanced genomes and to incorporate the size distribution of intergenic regions. When considering unbalanced genomes, it is necessary to use insertions and deletions to transform one genome into another. In this thesis, we studied Rearrangement Distance problems on unbalanced genomes considering only gene order and their orientations (when they are known), as well as Intergenic Rearrangement Distance problems, which incorporate the information regarding the size distribution of intergenic regions, besides the use of gene order and their orientations (when they are known). We present new structures and concepts for problems that include reversals, transpositions, and the combination of reversals and transpositions. These structures and concepts are used in complexity proofs and approximation algorithms. Furthermore, we performed experiments in simulated and real genomes, showing the applicability of our algorithms.
\end{abstract}


\listoffigures

\listoftables




\tableofcontents

\fimdaspaginasiniciais

\chapter{Introdução}\label{cap:label:introducao}

\emph{Rearranjos de genomas} são mutações que alteram grandes trechos de um genoma. Na genômica comparativa, uma das formas mais aceitas para estimar a distância evolutiva é com o uso da \emph{distância de rearranjos de genoma}. Nessa distância, cada rearranjo está associado a um custo, que pode ser unitário, quando simplesmente indica a ocorrência de um rearranjo, ou pode ter um valor que indica uma característica do rearranjo, como a quantidade de {\revisaof quebras na sequência de DNA}. 
{\revisaof
Para os genomas de origem $\G_o$ e destino $\G_d$, a \emph{distância de rearranjos} entre $\G_o$ e $\G_d$ é definida como o menor custo possível de uma sequência de rearranjos que transforma o genoma origem $\G_o$ no genoma destino $\G_d$.
}

Nos problemas de rearranjo de genomas, um genoma é usualmente representado a partir da sua sequência de genes e, {\revisaof dependendo da informação disponível sobre cada gene e das características dos genomas estudados}, diferentes representações matemáticas podem ser utilizadas. Supondo que não existem genes repetidos e que os genomas possuem o mesmo conjunto de genes, um genoma pode ser representado como uma permutação de números inteiros, onde cada elemento da permutação representa um gene. Quando a orientação dos genes é conhecida, essa informação é representada por um sinal ``$+$'' ou ``$-$'' e, nesse caso, as permutações são ditas com sinais. Quando a orientação dos genes é desconhecida, permutações sem sinais são utilizadas para a representação dos genomas. Ao utilizar permutações, o problema de Distância de Rearranjos é equivalente ao problema de Ordenação de Permutações por Rearranjos~\cite{2009-fertin-etal}.

Um \emph{modelo de rearranjo} é o conjunto de rearranjos (ou operações) permitidos em um dado problema de Distância de Rearranjos. Dentre os rearranjos mais estudados na literatura, temos a \emph{reversão}, que inverte um segmento do genoma, e a \emph{transposição}, que troca as posições relativas de dois segmentos adjacentes do genoma. {\revisaof Assim sendo, um modelo de rearranjo pode ser definido de forma a considerar apenas as reversões, outro modelo pode considerar apenas as transposições, enquanto um terceiro modelo pode ainda considerar tanto reversões quanto transposições.}

O estudo dos problemas de Distância de Rearranjo teve início com modelos de rearranjo formados por um único tipo de operação, o que gerou o problema da Ordenação de Permutações por Reversões~\cite{1999a-caprara,1995-kececioglu-sankoff} e o problema da Ordenação de Permutações por Transposições~\cite{1995b-bafna-pevzner}. Com o avanço da área, esses rearranjos foram reunidos em um único modelo, mas sem considerar custos distintos entre as operações~\cite{1998-walter-etal}.
O problema da Ordenação de Permutações com Sinais por Reversões possui algoritmo exato polinomial~\cite{1999-hannenhalli-pevzner}. Já os problemas de Ordenação de Permutações sem Sinais por Reversões ou por Transposições pertencem à classe de problemas NP-difíceis~\cite{2012-bulteau-etal, 1999a-caprara}. Para permutações com ou sem sinais, o problema que considera o modelo de rearranjo contendo tanto reversões quanto transposições também é NP-difícil~\cite{2019b-oliveira-etal}.

Dentre os algoritmos mais conhecidos desenvolvidos para esses problemas, podemos citar o algoritmo exato polinomial para o problema da Ordenação de Permutações com Sinais por Reversões, desenvolvido por Hannenhalli e Pevzner~\cite{1999-hannenhalli-pevzner}, e o algoritmo de $1.375$-aproximação para a Ordenação de Permutações por Transposições, desenvolvido por Elias e Hartman~\cite{2006-elias-hartman}. Recentemente, Silva e coautores~\cite{2022-silva-etal} mostraram que o fator de aproximação do algoritmo de Elias e Hartman~\cite{2006-elias-hartman} ultrapassa o valor de $1.375$ em alguns casos. Silva e coautores~\cite{2022-silva-etal} também mostraram como solucionar o problema e apresentaram um algoritmo com complexidade de tempo de $O(n^6)$. Nesta tese, apresentamos uma nova versão do algoritmo proposto por Elias e Hartman~\cite{2006-elias-hartman} que garante o fator de aproximação de $1.375$ em todos os casos e possui complexidade de tempo de $O(n^5)$. 

Outros rearranjos estudados na literatura são as transposições inversas e as revrevs. Assim como as transposições, essas operações agem em dois segmentos adjacentes de um genoma. Uma \emph{transposição inversa} troca as posições relativas dos dois segmentos adjacentes, assim como as transposições, mas essa operação também inverte um dos segmentos. Já uma \emph{revrev} inverte cada um dos dois segmentos adjacentes, mas não troca as posições relativas desses segmentos. A complexidade dos problemas de Ordenação de Permutações por Rearranjos com modelos que possuem transposições inversas e revrevs eram desconhecidas, apesar de existirem algoritmos de aproximação para esses problemas~\cite{2009-fertin-etal}. Nesta tese, apresentamos provas de dificuldade para o problema de Ordenação de Permutações por Rearranjos considerando modelos de rearranjo contendo transposições junto com combinações de reversões, transposições inversas e revrevs.



Quando os genomas de origem e destino possuem conjuntos de genes distintos, dizemos que esses genomas são desbalanceados e usamos strings para representá-los matematicamente. Assim como nas permutações, quando a orientação dos genes é conhecida, essa informação é representada por um sinal ``$+$'' ou ``$-$''. Já quando a orientação dos genes é desconhecida, apenas usamos strings sem sinais. 

{\revisaof
As reversões, transposições, transposições inversas, e revrevs são operações conservativas, ou seja, são rearranjos que não alteram a quantidade de material genético do genoma. As operações não conservativas mais estudadas são as inserções e deleções de material genético~\cite{2013-willing-etal, 2000-el-mabrouk}. Inserções e deleções são coletivamente chamadas de {\it indels} em trabalhos da área. Nos modelos que possuem {\it indels}, o conjunto de genes do genoma origem e o conjunto de genes do genoma destino podem ser distintos e, por isso, a representação dos genomas é feita usando strings. {\revisaof Nesta tese, apresentamos provas de dificuldade e algoritmos de aproximação para os problemas de Distância de Rearranjos em genomas desbalanceados}, considerando modelos que combinam as operações de reversão, transposição e {\it indel}.

Além disso, estudamos uma outra operação chamada {\it block interchange}, que troca a posição relativa de dois segmentos quaisquer do genoma. Note que uma transposição é um tipo específico de {\it block interchange} em que os segmentos afetados são adjacentes. Nesta tese, também apresentamos algoritmos de aproximação para os problemas de Distância de Rearranjos em genomas desbalanceados, considerando {\it block interchanges} e a combinação de {\it block interchanges} e reversões.
}

Além da ordem relativa em que os genes aparecem no genoma, estudos recentes incorporaram a distribuição do tamanho das \emph{regiões intergênicas} (quantidade de nucleotídeos entre cada par de genes consecutivos) na representação matemática dos genomas. A incorporação dessa informação na representação dos genomas é motivada por evidências de que as regiões intergênicas possibilitam inferir melhores cenários evolucionários~\cite{2016a-biller-etal, 2016b-biller-etal}. Os problemas de Distância de Rearranjos Intergênicos existentes na literatura tratam apenas genomas balanceados. Os modelos de rearranjo considerados possuem operações conservativas e {\it indels}, no entanto, esses {\it indels} podem inserir ou remover apenas nucleotídeos de regiões intergênicas, restringindo os genomas comparados a terem o mesmo conjunto de genes. Nesta tese estudamos os problemas de Distância de Rearranjos Intergênicos em genomas desbalanceados, considerando modelos que combinam as operações de reversão, transposição, e {\it indels}, {\revisaof sendo que os {\it indels} também podem inserir ou remover genes}. Para a maioria dos modelos estudados, apresentamos provas de dificuldade e, além disso, para todos os modelos, apresentamos algoritmos de aproximação. 

Esta tese está organizada da seguinte forma. O Capítulo~\ref{cap:label:fundamentacao} apresenta notações e conceitos gerais para os problemas estudados. O Capítulo~\ref{cap:label:transposicao} apresenta o novo algoritmo de $1.375$-aproximação para a Ordenação de Permutações por Transposições e provas de dificuldade para problemas de Ordenação de Permutações por Transposições e Outros Rearranjos. O Capítulo~\ref{cap:label:indel} considera os problemas de Distância de Rearranjos em genomas desbalanceados e o Capítulo~\ref{cap:label:intergenicos} considera os problemas de Distância de Rearranjos Intergênicos em genomas desbalanceados. Nos capítulos~\ref{cap:label:indel} e \ref{cap:label:intergenicos} apresentamos provas de dificuldade e algoritmos de aproximação para os modelos estudados. {\revisaodois No Capítulo~\ref{cap:label:experimentos}, apresentamos resultados experimentais em genomas sintéticos usando os algoritmos dos capítulos ~\ref{cap:label:indel} e \ref{cap:label:intergenicos}. Além disso, evidenciamos a aplicabilidade de um dos nossos algoritmos em uma base de dados de genomas reais, usando as soluções desse algoritmo na construção de uma árvore filogenética.} Por fim, no Capítulo~\ref{cap:label:conclusao}, apresentamos as considerações finais, sumário dos resultados, e direções de trabalhos futuros.

\chapter{Fundamentação Teórica}\label{cap:label:fundamentacao}


Neste capítulo, apresentamos definições e notações fundamentais para esta tese. As definições apresentadas neste capítulo são gerais para os problemas de distância de rearranjos e são usadas nos capítulos seguintes. No entanto, conceitos específicos para alguma variação de problemas de rearranjos são definidos apenas no capítulo referente a essa variação.

\section{Representação de Genomas}

Nesta seção, apresentamos como os genomas em um problema de Distância de Rearranjos são modelados matematicamente. 

\subsection{Representação da Ordem Relativa dos Genes}

Os problemas clássicos da área de rearranjo de genomas utilizam apenas a ordem relativa dos genes para a representação matemática dos genomas. {\revisao Exceto quando dito expressamente o contrário, nesta tese assumimos que os genomas não possuem genes repetidos, assim como a maior parte da literatura na área.} 

Quando os genomas a serem comparados possuem o mesmo conjunto de genes, dizemos que os genomas são \emph{balanceados} e podemos utilizar permutações de números inteiros para representar a ordem relativa dos genes em cada genoma. Cada elemento da permutação representa um gene. Se a orientação dos genes é considerada, então utilizamos uma \emph{permutação com sinais}, em que cada elemento possui um sinal ``$+$'' ou ``$-$'' que indica a orientação do gene. Caso contrário, utilizamos uma \emph{permutação sem sinais} ou, de forma equivalente, consideramos que todos os elementos possuem sinal ``$+$''. 

\begin{definition}
Denotamos uma \emph{permutação com sinais} de tamanho $n$ por $\pi = (\pi_1\;\pi_2\;\ldots\;$ $\pi_{n-1}\;\pi_n)$, tal que $\pi_i \in (\{+1,+2,\ldots,+({n-1}),+n\} \cup \{{-1},{-2},\ldots,{-({n-1})},$ ${-n}\})$ e $i \neq j$ se, e somente se, $|\pi_i| \neq |\pi_j|$, para quaisquer $i$ e $j$.
\end{definition}

\begin{definition}
Denotamos uma \emph{permutação sem sinais} de tamanho $n$ por $\pi = (\pi_1\;\pi_2\;\ldots\;$ $\pi_{n-1}\;\pi_n)$, tal que $\pi_i \in \{1,2,\ldots,{n-1},n\}$ e $i \neq j$ se, e somente se, $\pi_i \neq \pi_j$, para quaisquer $i$ e $j$.
\end{definition}

Quando os genomas possuem conjuntos distintos de genes, dizemos que os genomas são \emph{desbalanceados} e utilizamos \emph{strings} para representar a ordem relativa dos genes. Cada elemento de uma string representa um gene ou uma sequência contígua de genes exclusiva de um dos genomas. Assim como nas permutações, para o caso em que a orientação dos genes é considerada, cada elemento possui um sinal ``$+$'' ou ``$-$'' que indica a orientação do gene. 

\begin{definition}
Dada uma string $A = (A_1\;A_2\;\ldots\;A_m)$, denotamos por $|A| = m$ o \emph{tamanho} da string $A$ e por $|A_i|$ o \emph{rótulo} que corresponde ao elemento $A_i$ desconsiderando o seu sinal. 
\end{definition}

\begin{definition}
O \emph{alfabeto} $\Sigma_A$ de uma string $A$ é o conjunto de rótulos da string $A$.
\end{definition}

A fim de fazer com que as definições para strings e permutações sejam semelhantes, usamos o alfabeto $\Sigma_A = \{1,2, \ldots, n\} \cup \{\deletionElement\}$, onde $\deletionElement$ é um rótulo usado para elementos que fazem parte do genoma origem $\G_o$, mas não fazem parte do genoma destino $\G_d$.

Uma instância genérica $\I = (\G_o, \G_d)$ para um problema de Distância de Rearranjos consiste em um genoma origem $\G_o$ e um genoma destino $\G_d$. 

A \emph{permutação identidade} ou \emph{string identidade} de $n$ elementos é denotada por $\iota^n = (+1\;+2\;\ldots\;+n)$ ou $\iota^n = (1\;2\;\ldots\;n)$, dependendo se a orientação é considerada ou não na variação do problema. 

Representamos a ordem relativa dos genes do genoma destino $\G_d$ usando a permutação (string) identidade $\iota^n$, onde cada elemento $\iota^n_i$ representa um gene de $\G_d$, que possui correspondência em ambos genomas, ou um segmento contíguo maximal de genes que não possuem correspondência em $\G_o$.

Para a ordem relativa dos genes do genoma origem $\G_o$, usamos uma string $A = (A_1\;A_2\;\ldots\;A_m)$, onde cada elemento $A_i$ representa um gene de $\G_o$, usando o mesmo mapeamento de rótulos e genes usado na representação de $\G_d$, ou um segmento contíguo maximal de genes sem correspondência em $\G_d$. Se $A_i$ mapeia um gene que possui correspondência em ambos os genomas, então $A_i$ tem sinal ``$+$'', se o gene possui a mesma orientação em ambos os genomas, ou $A_i$ tem sinal ``$-$'', caso contrário. Para qualquer elemento $A_i$ que mapeia um segmento contíguo de genes sem correspondência em $\G_d$, definimos $A_i = \deletionElement$, sem qualquer sinal, já que o elemento será removido independentemente do seu conteúdo. 

\begin{example}
	{\revisao
	Considere dois genomas com as sequências de genes $\G_o = (a~b~d~e~i~f~g)$ e $\G_d = (a~c~d~h~f~g)$, onde cada letra representa um gene. Representamos $\G_d$ como a string $\iota^n = (1~2~3~4~5~6)$ e $\G_o$ como a string $A = (1~\deletionElement~3~\deletionElement~5~6)$. Neste exemplo, temos $|A| = 6$, $\Sigma_{A} \cap \Sigma_{\iota^n} = \{1, 3, 5, 6\}$, $\Sigma_A \setminus \Sigma_{\iota^n} = \{\deletionElement\}$ e $\Sigma_{\iota^n} \setminus \Sigma_{A} = \{2, 4\}$. Note que o segmento $(e, i)$ de $\G_o$ é mapeado em um único elemento $\deletionElement$ em $A$.
	}
\end{example}

\begin{definition}
Dada uma permutação $\pi$, denotamos por $-\pi_i$ o elemento $\pi_i$ com orientação invertida. O mesmo é válido para strings.
\end{definition}

\begin{example}
Dado $\pi = ({+4}\;{-3}\;{-2}\;{+1})$, temos que $-\pi_1 = {-4}$, $-\pi_2 = {+3}$, $-\pi_3 = {+2}$ e $-\pi_4 = {-1}$. 
\end{example}

O conjunto de \emph{rótulos comuns} às strings é definido por $\Sigma_A \cap \Sigma_{\iota^n}$, o conjunto de \emph{rótulos exclusivos} ao genoma destino é definido por $\Sigma_{\iota^n} \setminus \Sigma_A$, e o conjunto de \emph{rótulos exclusivos} ao genoma origem é definido por $\Sigma_A \setminus \Sigma_{\iota^n}$, que ou é vazio ou é igual a $\{\deletionElement\}$. Um elemento da string ($A_i$ ou $\iota^n_i$) é classificado como \emph{comum} ou \emph{exclusivo} de acordo com a classificação do seu rótulo. 

A modelagem dos genomas usando exclusivamente a ordem relativa dos genes é chamada de \emph{representação clássica}. Para genomas desbalanceados, uma \emph{instância clássica} é denotada por $\I = (A, \iota^n)$, sendo que as strings possuem sinais ou não, dependendo da informação disponível em relação à orientação dos genes. 

Quando $\G_o$ e $\G_d$ são genomas balanceados, a string $A$ corresponde a uma permutação e, assim, usamos $\pi$ para representar a ordem relativa dos genes de $\G_o$. Exceto quando dito expressamente o contrário, assumimos que a permutação $\pi$ tem tamanho $n$ e, portanto, o genoma destino $\G_d$ é representado pela permutação identidade $\iota^n$. Note que em genomas balanceados ambas as sequências de genes possuem o mesmo tamanho. Quando a orientação dos genes é desconhecida, apenas ignoramos os sinais no mapeamento dos genomas $\G_o$ e $\G_d$. Como $\iota^n$ pode ser inferido a partir do tamanho de $\pi$, uma instância clássica é denotada apenas por $\I = \pi$. 

Note que para genomas balanceados, os problemas de distância de rearranjos entre dois genomas são equivalentes aos problemas de ordenação de uma permutação $\pi$ usando operações de rearranjo.

\subsection{Representação de Regiões Intergênicas}

Regiões intergênicas são sequências de nucleotídeos presentes entre pares de genes contíguos e nas extremidades do genoma. O \emph{tamanho} de uma região intergênica é a quantidade de nucleotídeos que estão nela. Estudos mostram que as regiões intergênicas são mais suscetíveis a mudanças e são quebradas por rearranjos de genomas~\cite{2016a-biller-etal,2016b-biller-etal}. Além disso, ao comparar dois genomas, podemos achar correspondência entre os genes desses genomas, o que não é possível com as regiões intergênicas~\cite{2016a-biller-etal,2016b-biller-etal}. Esses motivos nos levam a representar regiões intergênicas usando os seus tamanhos ao invés de rótulos, o que já é feito em outros trabalhos~\cite{2016b-bulteau-etal,2017-fertin-etal,2018a-oliveira-etal}.

A distribuição do tamanho das regiões intergênicas é modelada matematicamente por uma lista de números inteiros não-negativos. Ao considerar regiões intergênicas, o genoma origem $\G_o$ é modelado pela tupla $(A, \breve{A})$ e o genoma destino $\G_d$ é modelado pela tupla $(\iota^n, \breve{\iota}^n)$. De forma similar ao que é feito nas instâncias em que apenas a ordem relativa dos genes é considerada, ao construir a representação de uma instância intergênica $\Ig = (\G_o, \G_d)$, {\revisaof mapeamos com um único elemento tanto em $A$ quanto em $\iota^n$ os segmentos contíguos maximais de genes que não possuem correspondência no outro genoma. }

A lista $\breve{A}$ tem tamanho $|\breve{A}| = |A| + 1$ e a lista $\breve{\iota}^n$ tem tamanho $|\breve{\iota}^n| = n + 1$. Temos que $\breve{A}_i$ corresponde ao número de nucleotídeos na região intergênica entre $A_{i-1}$ e $A_{i}$, para $2 \leq i \leq |A|$, e os valores $\breve{A}_1$ e $\breve{A}_{|A|+1}$ correspondem ao número de nucleotídeos nas extremidades de $\G_o$. Analogamente, temos que $\breve{\iota}^n_i$ corresponde ao número de nucleotídeos na região intergênica entre $\iota^n_{i-1}$ e $\iota^n_{i}$, para $2 \leq i \leq n$, e os valores $\breve{\iota}^n_{1}$ e $\breve{\iota}^n_{n+1}$ correspondem ao número de nucleotídeos nas extremidades de $\G_d$. 

A modelagem dos genomas que considera a ordem relativa dos genes e a distribuição do tamanho de regiões intergênicas é chamada de \emph{representação intergênica}. Nos problemas intergênicos estudados neste trabalho, sempre tratamos de genomas desbalanceados. Assim, uma \emph{instância intergênica} é denotada por $\Ig = (\G_o, \G_d)$, onde $\G_o = (A, \breve{A})$ e $\G_d = (\iota^n, \breve{\iota}^n)$. 

{\revisao
Nas figuras \ref{cap2:image:example_genome:intergenic} e \ref{cap2:image:example_genome:intergenic2}, apresentamos exemplos de como genomas são modelados matematicamente usando a representação intergênica.
}

\begin{figure*}[t]
\centering

\resizebox{0.8\textwidth}{!}{
\begin{tikzpicture}[scale=0.9, gene/.style={single arrow,
        draw=black,
        fill=#1,
        single arrow head extend=1mm
    }]
    \begin{scope}
        \node(g1) at (-2,0) {$\mathcal{G}_o = $};
    \end{scope}

    \begin{scope}[every node/.style={rectangle, draw=black, fill=white, text width=3mm, align=center}, scale=1]
      \scriptsize
      \node(i1) at (-0.75,0) {0};
      \node(i2) at (0.75,0) {3};
      \node(i3) at (2.25,0) {2};
      \node(i4) at (3.75,0) {3};
      \node(i5) at (5.25,0) {10};
      \node(i6) at (6.75,0) {2};
      \node[dashed](i7) at (8.25,0) {1};
      \node(i8) at (9.75,0) {6};
    \end{scope}

    \begin{scope}[every node/.style={draw=black, fill=white, minimum size=8mm}]
    \node[gene, draw=teal, rotate=180](g1) at (0.1,0) {};
    \node[gene, draw=orange](g2) at (1.4,0) {};
    \node[gene, dashed](g3) at (2.9,0) {};
    \node[gene, draw=blue, rotate=180](g4) at (4.6,0) {};
    \node[gene, draw=ferngreen](g5) at (5.9,0) {};
    \node[gene, dashed](g6) at (7.4,0) {};
    \node[gene, dashed](g7) at (8.9,0) {};
    \end{scope}

    \begin{scope}
      \node(g1) at (0,0) {$h$};
      \node(g2) at (1.5,0) {$d$};
      \node(g3) at (3,0) {$x$};
      \node(g4) at (4.5,0) {$a$};
      \node(g5) at (6,0) {$c$};
      \node(g6) at (7.5,0) {$y$};
      \node(g7) at (9,0) {$z$};
    \end{scope}

    \begin{scope}
        \node(g2) at (-2,-1) {$\mathcal{G}_d = $};
    \end{scope}
    \begin{scope}[every node/.style={circle, draw=black, fill=white, minimum size=8mm}]
    \node[gene, draw=blue](g1) at (-0.1,-1) {};
    \node[gene, draw=ferngreen](g2) at (1.4,-1) {};
    \node[gene, draw=orange](g3) at (2.9,-1) {};
    \node[gene, draw=teal, rotate = 180](g4) at (4.6,-1) {};
    \node[gene, draw=violet](g5) at (5.9,-1) {};
    \end{scope}
    \begin{scope}
      \node(g1) at (-0.1,-1) {$a$};
      \node(g2) at (1.4,-1) {$c$};
      \node(g3) at (2.9,-1) {$d$};
      \node(g4) at (4.6,-1) {$h$};
      \node(g5) at (5.9,-1) {$f$};
    \end{scope}
    \begin{scope}[every node/.style={rectangle, draw=black, fill=white, text width=3mm, align=center}, scale=1]
    \scriptsize
    \node(i1) at (-0.75,-1) {5};
    \node(i2) at (0.75,-1) {2};
    \node(i3) at (2.25,-1) {7};
    \node(i4) at (3.75,-1) {1};
    \node(i5) at (5.25,-1) {0};
    \node(i6) at (6.75,-1) {5};
    \end{scope}
\end{tikzpicture}
}
\caption{\label{cap2:image:example_genome:intergenic}
{\revisao 
Exemplo de dois genomas $\G_o$ e $\G_d$, onde genes são representados por letras dentro de setas, a orientação dos genes é indicada pela orientação das setas, e os tamanhos das regiões intergênicas são representados por números dentro de retângulos. Os genes de $\G_d$ são mapeados da seguinte forma: $a$ é mapeado em $+1$, $c$ é mapeado em $+2$, $d$ é mapeado em $+3$, $h$ é mapeado em $+4$, e $f$ é mapeado em $+5$. Assim, o genoma $\G_d$ é representado por $(\iota^n, \breve{\iota}^n)$, onde $\iota^n = ({+1}~{+2}~{+3}~{+4}~{+5})$ e $\breve{\iota}^n = (5,2,7,1,0,5)$. O gene $x$ e o segmento que vai de $y$ até $z$ em $\G_o$ não estão presentes em $\G_d$. Portanto, ambos são mapeados no elemento $\deletionElement$. O genoma $\G_o$ é representado por $(A, \breve{A})$, onde $A = ({+4}~{+3}~\deletionElement~{-1}~{+2}~\deletionElement)$ e $\breve{A} = (0,3,2,3,10,2,6)$. Os alfabetos $\Sigma_A$ e $\Sigma_{\iota^n}$ são os conjuntos $\{1,2,3,4,\alpha\}$ e $\{1,2,3,4,5\}$, respectivamente.
}
}
\end{figure*}

\begin{figure*}[t]
\centering

\resizebox{0.8\textwidth}{!}{
\begin{tikzpicture}[scale=0.9, gene/.style={circle,
        draw=black,
        fill=#1,
    }]
    \begin{scope}
        \node(g1) at (-2,0) {$\mathcal{G}_o = $};
    \end{scope}

    \begin{scope}[every node/.style={rectangle, draw=black, fill=white, text width=3mm, align=center}, scale=1]
      \scriptsize
      \node(i1) at (-0.75,0) {0};
      \node(i2) at (0.75,0) {3};
      \node(i3) at (2.25,0) {2};
      \node(i4) at (3.75,0) {3};
      \node(i5) at (5.25,0) {10};
      \node(i6) at (6.75,0) {2};
      \node[dashed](i7) at (8.25,0) {1};
      \node(i8) at (9.75,0) {6};
    \end{scope}

    \begin{scope}[every node/.style={draw=black, fill=white, minimum size=8mm}]
    \node[gene, draw=teal, rotate=180](g1) at (0.0,0) {};
    \node[gene, draw=orange](g2) at (1.5,0) {};
    \node[gene, dashed](g3) at (3.0,0) {};
    \node[gene, draw=blue, rotate=180](g4) at (4.5,0) {};
    \node[gene, draw=ferngreen](g5) at (6.0,0) {};
    \node[gene, dashed](g6) at (7.5,0) {};
    \node[gene, dashed](g7) at (9.0,0) {};
    \end{scope}

    \begin{scope}
      \node(g1) at (0,0) {$h$};
      \node(g2) at (1.5,0) {$d$};
      \node(g3) at (3,0) {$x$};
      \node(g4) at (4.5,0) {$a$};
      \node(g5) at (6,0) {$c$};
      \node(g6) at (7.5,0) {$y$};
      \node(g7) at (9,0) {$z$};
    \end{scope}

    \begin{scope}
        \node(g2) at (-2,-1) {$\mathcal{G}_d = $};
    \end{scope}
    \begin{scope}[every node/.style={circle, draw=black, fill=white, minimum size=8mm}]
    \node[gene, draw=blue](g1) at (-0.0,-1) {};
    \node[gene, draw=ferngreen](g2) at (1.5,-1) {};
    \node[gene, draw=orange](g3) at (3.0,-1) {};
    \node[gene, draw=teal, rotate = 180](g4) at (4.5,-1) {};
    \node[gene, draw=violet](g5) at (6.0,-1) {};
    \end{scope}
    \begin{scope}
      \node(g1) at (-0.0,-1) {$a$};
      \node(g2) at (1.5,-1) {$c$};
      \node(g3) at (3.0,-1) {$d$};
      \node(g4) at (4.5,-1) {$h$};
      \node(g5) at (6.0,-1) {$f$};
    \end{scope}
    \begin{scope}[every node/.style={rectangle, draw=black, fill=white, text width=3mm, align=center}, scale=1]
    \scriptsize
    \node(i1) at (-0.75,-1) {5};
    \node(i2) at (0.75,-1) {2};
    \node(i3) at (2.25,-1) {7};
    \node(i4) at (3.75,-1) {1};
    \node(i5) at (5.25,-1) {0};
    \node(i6) at (6.75,-1) {5};
    \end{scope}
\end{tikzpicture}
}
\caption{\label{cap2:image:example_genome:intergenic2}
{\revisao
Exemplo de dois genomas $\G_o$ e $\G_d$, onde genes são representados por letras dentro de círculos, a orientação dos genes é desconhecida, e os tamanhos das regiões intergênicas são representados por números dentro de retângulos. Os genes de $\G_d$ são mapeados da seguinte forma: $a$ é mapeado em $1$, $c$ é mapeado em $2$, $d$ é mapeado em $3$, $h$ é mapeado em $4$, e $f$ é mapeado em $5$. Assim, o genoma $\G_d$ é representado por $(\iota^n, \breve{\iota}^n)$, onde $\iota^n = ({1}~{2}~{3}~{4}~{5})$ e $\breve{\iota}^n = (5,2,7,1,0,5)$. O gene $x$ e o segmento que vai de $y$ até $z$ em $\G_o$ não estão presentes em $\G_d$. Portanto, ambos são mapeados no elemento $\deletionElement$. O genoma $\G_o$ é representado por $(A, \breve{A})$, onde $A = ({4}~{3}~\deletionElement~{1}~{2}~\deletionElement)$ e $\breve{A} = (0,3,2,3,10,2,6)$. Os alfabetos $\Sigma_A$ e $\Sigma_{\iota^n}$ são os conjuntos $\{1,2,3,4,\alpha\}$ e $\{1,2,3,4,5\}$, respectivamente.
}
}
\end{figure*}

\section{Rearranjos de Genomas}

Nesta seção, apresentamos os tipos de rearranjos de genomas que serão estudados no decorrer desta tese, assim como variações desses rearranjos dependendo da representação utilizada para os genomas. 

Os rearranjos de genomas podem afetar apenas a ordem relativa de um ou mais segmentos do genoma (\emph{rearranjos conservativos}) ou eles podem alterar o material genético do genoma (\emph{rearranjos não conservativos}). Dentre os rearranjos conservativos, podemos citar as reversões, transposições e \textit{block interchanges}. Os rearranjos não conservativos considerados neste trabalho são as inserções e deleções. Essas últimas duas operações são chamadas coletivamente de \emph{indels}.

Como uma permutação é um tipo específico de string que possui alfabeto $\Sigma = \{1,2, \ldots, n\}$, definimos as operações de rearranjo usando strings. Apenas os {\it indels} não são operações válidas em permutações. 

\begin{definition}
Para uma operação de rearranjo $\beta$, denotamos por $\G \comp \beta = \G'$ o genoma resultante após $\beta$ ser aplicado em $\G$. Também temos $A \comp \beta = A'$ e $\breve{A} \comp \beta = \breve{A}'$ como resultados da aplicação de $\beta$ na string $A$ e na lista $\breve{A}$, respectivamente.
\end{definition}

Dada uma sequência de operações $S = (\beta_1, \beta_2, \ldots, \beta_k)$ com tamanho $|S| = k$, denotamos por $\G \comp S = (((\G \comp \beta_1) \comp \beta_2) \comp \ldots \comp \beta_{k-1}) \comp \beta_k = \G'$ o genoma resultante da aplicação da sequência $S$ em $\G$. A mesma notação é válida para aplicação de uma sequência de operações em strings e listas de regiões intergênicas.

Agora, apresentamos os rearranjos conservativos e os seus respectivos efeitos na ordem relativa dos genes de um genoma.

\begin{definition}
Considerando a representação clássica, dada uma string $A$ de tamanho $m$, uma \emph{reversão} $\rho(i,j)$, com $1 \leq i \leq j \leq m$, inverte o segmento $(A_i~A_{i+1}~\ldots~A_{j-1}~A_j)$ e, caso $A$ seja uma string com sinais, inverte todos os sinais dos elementos afetados. Mostramos a seguir o resultado da aplicação de $\rho(i,j)$ quando $A$ é uma string com sinais (1) e quando $A$ é uma string sem sinais (2).
\begin{flalign*}
	(1)\quad& A \comp \rho(i,j) = (A_1~\ldots~A_{i-1}~\underline{{-A_j}~{-A_{j-1}}~\ldots~{-A_{i+1}}~{-A_i}}~A_{j+1}~\ldots~A_m)\\
	(2)\quad& A \comp \rho(i,j) = (A_1~\ldots~A_{i-1}~\underline{{A_j}~{A_{j-1}}~\ldots~{A_{i+1}}~{A_i}}~A_{j+1}~\ldots~A_m)
\end{flalign*}
\end{definition}

\begin{definition}
Considerando a representação clássica, dada uma string $A$ de tamanho $m$, uma \emph{transposição} $\tau(i,j,k)$, com $1 \leq i < j < k \leq m+1$, troca as posições relativas dos segmentos $(A_i~A_{i+1}~\ldots~A_{j-1})$ e $(A_j~A_{j+1}~\ldots~A_{k-1})$. Por não afetar a orientação dos elementos, o efeito de uma transposição é o mesmo em strings com e sem sinais. Mostramos a seguir o efeito de $\tau(i,j,k)$ em $A$.
\begin{flalign*}
A \comp \tau(i,j,k) = (A_1~\ldots~A_{i-1}~\underline{A_j~A_{j+1}~\ldots~A_{k-1}}~\underline{A_i~A_{i+1}~\ldots~A_{j-1}}~A_{k}~\ldots~A_m)
\end{flalign*}
\end{definition}

\begin{definition}
Considerando a representação clássica, dada uma string $A$ de tamanho $m$, o rearranjo \emph{\textit{block interchange}} $\BI(i,j,k,l)$, com $1 \leq i \leq j < k \leq l \leq m$, troca as posições relativas dos segmentos $(A_i~\ldots~A_j)$ e $(A_k~\ldots~A_l)$. O efeito de um \textit{block interchange} é o mesmo para strings com ou sem sinais. Mostramos a seguir o efeito de $\BI(i,j,k,l)$ em $A$.
\begin{flalign*}
A \comp \BI(i,j,k,l) = (A_1~\ldots~A_{i-1}~\underline{A_k~\ldots~A_{l}}~A_{j+1}~\ldots~A_{k-1}~\underline{A_i~\ldots~A_{j}}~A_{l+1}~\ldots~A_m)
\end{flalign*}
\end{definition}

Note que uma transposição é um tipo especial de \textit{block interchange} em que os segmentos afetados são adjacentes.

Dados dois segmentos adjacentes $\sigma_1 = (A_i~A_{i+1}~\ldots~A_{j-1})$ e $\sigma_2 = (A_j~A_{j+1}~\ldots~A_{k-1})$, uma \emph{transposição inversa} $\transrev$ troca as posições relativas desses dois segmentos adjacentes e inverte os elementos de um dos dois segmentos afetados, sendo que essa operação também inverte todos os sinais dos elementos presentes no segmento invertido, quando aplicada a uma string com sinais.

\begin{definition}
Considerando a representação clássica, dada uma string $A$ de tamanho $m$, uma transposição inversa Tipo 1 $\transrev_1(i,j,k)$ e uma transposição inversa Tipo 2 $\transrev_2(i,j,k)$, com $1 \leq i < j < k \leq m+1$, afetam $A$ da seguinte forma, dependendo se $A$ é uma string com sinais (1) ou se $A$ é uma string sem sinais (2):
\begin{flalign*}
(1)~& A \comp \transrev_1(i,j,k) = (A_1~\ldots~A_{i-1}~\underline{A_j~A_{j+1}~\ldots~A_{k-1}}~\underline{{-A_{j-1}}~\ldots~{-A_{i+1}}~{-A_{i}}}~A_{k}~\ldots~A_m) \\
(2)~& A \comp \transrev_1(i,j,k) = (A_1~\ldots~A_{i-1}~\underline{A_j~A_{j+1}~\ldots~A_{k-1}}~\underline{A_{j-1}~\ldots~A_{i+1}~A_{i}}~A_{k}~\ldots~A_m) \\
(1)~& A \comp \transrev_2(i,j,k) = (A_1~\ldots~A_{i-1}~\underline{{-A_{k-1}}~\ldots~{-A_{j+1}}~{-A_j}}~\underline{A_i~A_{i+1}~\ldots~A_{j-1}}~A_{k}~\ldots~A_m) \\
(2)~& A \comp \transrev_2(i,j,k) = (A_1~\ldots~A_{i-1}~\underline{A_{k-1}~\ldots~A_{j+1}~A_j}~\underline{A_i~A_{i+1}~\ldots~A_{j-1}}~A_{k}~\ldots~A_m)
\end{flalign*}
\end{definition}

Dados dois segmentos adjacentes $\sigma_1 = (A_i~A_{i+1}~\ldots~A_{j-1})$ e $\sigma_2 = (A_j~A_{j+1}~\ldots~A_{k-1})$, uma \emph{revrev} $\revrev$ inverte esses dois segmentos adjacentes, sendo que essa operação também inverte todos os sinais dos elementos afetados, quando aplicada em uma string com sinais. As revrevs não trocam a posição relativa dos segmentos $\sigma_1$ e $\sigma_2$.

\begin{definition}
Considerando a representação clássica, dada uma string $A$ de tamanho $m$, uma \emph{revrev} $\revrev(i,j,k)$, com $1 \leq i < j < k \leq m+1$, afeta $A$ da seguinte forma, dependendo se $A$ é uma string com sinais (1) ou se $A$ é uma string sem sinais (2):
{\small
\begin{flalign*}
(1)~& A \comp \revrev(i,j,k) = (A_1~\ldots~A_{i-1}~\underline{{-A_{j-1}}~\ldots~{-A_{i+1}}~{-A_{i}}}~\underline{{-A_{k-1}}~\ldots~{-A_{j+1}}~{-A_j}}~A_{k}~\ldots~A_m) \\
(2)~& A \comp \revrev(i,j,k) = (A_1~\ldots~A_{i-1}~\underline{{A_{j-1}}~\ldots~{A_{i+1}}~{A_{i}}}~\underline{{A_{k-1}}~\ldots~{A_{j+1}}~{A_j}}~A_{k}~\ldots~A_m)
\end{flalign*}
}
\end{definition}

Nesta tese, estudamos também as inserções e deleções ({\it indels}). Essas operações afetam o alfabeto e o tamanho da string em que são aplicadas, sendo operações essenciais em problemas de rearranjos que comparam genomas desbalanceados, já que rearranjos conservativos não conseguem balancear os genomas comparados. As próximas definições apresentam formalmente os {\it indels} e seus respectivos efeitos em strings.

\begin{definition}
Considerando a representação clássica, dada uma string $A$ de tamanho $m$, uma \emph{inserção} $\insertion(i,\sigma)$ adiciona a string $\sigma$ após o $i$-ésimo elemento de $A$, onde $0 \leq i \leq m$ e $\sigma$ é uma string não vazia sem repetições. Mostramos a seguir o efeito de $\insertion(i,\sigma)$ em $A$.
\begin{flalign*}
A \comp \insertion(i,\sigma) = (A_1~\ldots~A_i~\underline{\sigma_1~\ldots~\sigma_{|\sigma|}}~A_{i+1}~\ldots~A_m)
\end{flalign*}
\end{definition}

Note que, em uma inserção $\insertion(i,\sigma)$ aplicada na string $A$, é necessário que $\sigma$ seja do mesmo tipo (com ou sem sinais) que a string $A$.

\begin{definition}
Considerando a representação clássica, dada uma string $A$ de tamanho $m$, uma \emph{deleção} $\deletion(i,j)$, com $1 \leq i \leq j \leq m$, remove o segmento $(A_i~\ldots~A_j)$. Mostramos a seguir o efeito de $\deletion(i,j)$ em $A$.
\begin{flalign*}
A \comp \deletion(i,j) = (A_1~\ldots~A_{i-1}~A_{j+1}~\ldots~A_m)
\end{flalign*}
\end{definition}

\subsection{Efeito dos Rearranjos de Genomas em Regiões Intergênicas}

Nesta seção, mostramos como os principais rearranjos de genomas (reversões, transposições e {\it indels}) afetam tanto a ordem relativa dos genes quanto as regiões intergênicas de um genoma.

\begin{definition}
Considerando a representação intergênica, dado um genoma $\G = (A, \breve{A})$ com $|A| = m$, uma \emph{reversão intergênica} $\rho^{(i,j)}_{(x,y)}$, com $1 \leq i \leq j \leq m$, $0 \leq x \leq \breve{A}_i$ e $0 \leq y \leq \breve{A}_{j+1}$, quebra as regiões intergênicas $\breve{A}_i$ em $(x, x')$ e $\breve{A}_{j+1}$ em $(y, y')$, onde $x' = \breve{A}_i - x$ e $y' = \breve{A}_{j+1} - y$, e inverte o segmento $(x'~A_i~\breve{A}_{i+1}~A_{i+1}~\ldots~\breve{A}_{j}~A_j~y)$. Além disso, caso $A$ seja uma string com sinais, todos os sinais dos elementos afetados da string $A$ são invertidos. 
Essa reversão transforma $\G$ em $\G \comp \rho^{(i,j)}_{(x,y)} = (A', \breve{A}')$, onde: 
\begin{flalign*}
A' = A \comp \rho^{(i,j)}_{(x,y)} &= (A_1~\ldots~A_{i-1}~\underline{{-A_j}~\ldots~{-A_i}}~A_{j+1}~\ldots~A_m)\\ 
\breve{A}' = \breve{A} \comp \rho^{(i,j)}_{(x,y)} &= (\breve{A}_1, \ldots, \breve{A}_{i-1}, x + y, \underline{\breve{A}_j, \ldots, \breve{A}_{i+1}}, x' + y', \breve{A}_{j+2}, \ldots, \breve{A}_{m+1}) 
\end{flalign*}

O efeito de uma reversão intergênica é similar quando $A$ é uma string sem sinais, sendo necessário apenas ignorar os sinais da string. Mostramos a seguir o efeito de $\rho^{(i,j)}_{(x,y)}$ em $\G = (A, \breve{A})$ quando $A$ é uma string com sinais. 
\begin{align*}
\begin{tikzpicture}[scale=1]
    \begin{scope}[>={Stealth[black]},
                  every edge/.style={draw=black}, every node/.style={inner sep=0pt, minimum size = 0pt}]
    \node[label=\phantom{}] (bi1) at (2.25, -0.5) {};
    \node[label=\phantom{}] (bi2) at (5.25, -0.5) {};
    \path [{Bar}-{Bar}] (bi1) edge node [black, pos=0.5, sloped, below] {} (bi2);
    \end{scope}
    \begin{scope}
        \node(g1) at (-1.6,0) {$\mathcal{G} = $};
        \node(i2) at (0.75,0) {$\ldots$};
        \node(i4) at (3.75,0) {$\ldots$};
    \node(i6) at (6.75,0) {$\ldots$};
    \end{scope}
    \begin{scope}[every node/.style={circle, draw=black, fill=white, text width=6.2mm, inner sep = 0pt, align=center}]
    \node(g1)[draw=black] at (0,0) {\scriptsize $A_{1}$};
    \node[draw=black](g2) at (1.5,0) {\scriptsize $A_{i-1}$};
    \node[draw=black](g3) at (3,0) {\scriptsize $A_{i}$};
    \node[draw=black](g4) at (4.5,0) {\scriptsize $A_{j}$};
    \node[draw=black](g5) at (6,0) {\scriptsize $A_{j+1}$};
    \node[draw=black](g6) at (7.5,0) {\scriptsize$A_m$};
    \end{scope}
    \begin{scope}[every node/.style={rectangle, draw=black, fill=white, text width=6.2mm, align=center}, scale=1]
    \scriptsize
    \node(i1) at (-0.75,0) {$\breve{A}_1$};
    \node[label=above:$\breve{A}_i$](i3) at (2.25,0) {$x|x'$};
    \node[label=above:$\breve{A}_{j+1}$](i5) at (5.25,0) {$y|y'$};
    \node[text width=6.7mm](i7) at (8.28,0) {$\breve{A}_{m+1}$};
    \end{scope}
\end{tikzpicture}\\
\begin{tikzpicture}[scale=1]
    \begin{scope}
        \node(g2) at (-2.2,0) {$\mathcal{G} \comp \rho^{(i,j)}_{(x,y)} = $};
        \node(i2) at (0.75,0) {$\ldots$};
        \node(i4) at (3.75,0) {$\ldots$};
    \node(i6) at (6.75,0) {$\ldots$};
    \end{scope}
    \begin{scope}[every node/.style={circle, draw=black, fill=white, text width=6.2mm, inner sep = 0pt, align=center}]
    \node(g1)[draw=black] at (0,0) {\scriptsize $A_{1}$};
    \node[draw=black](g2) at (1.5,0) {\scriptsize $A_{i-1}$};
    \node[draw=black](g3) at (3,0) {\scriptsize $-A_{j}$};
    \node[draw=black](g4) at (4.5,0) {\scriptsize $-A_{i}$};
    \node[draw=black](g5) at (6,0) {\scriptsize $A_{j+1}$};
    \node[draw=black](g6) at (7.5,0) {\scriptsize $A_m$};
    \end{scope}
    \begin{scope}[every node/.style={rectangle, draw=black, fill=white, text width=6.2mm, align=center}, scale=1]
    \scriptsize
    \node(i1) at (-0.75,0) {$\breve{A}_1$};
    \node(i3) at (2.25,0) {$x|y$};
    \node(i5) at (5.25,0) {$x'|y'$};
    \node[text width=6.7mm](i7) at (8.28,0) {$\breve{A}_{m+1}$};
    \end{scope}
\end{tikzpicture}
\end{align*}
\end{definition}

\begin{definition}
Considerando a representação intergênica, dado um genoma $\G = (A, \breve{A})$ com $|A| = m$, uma \emph{transposição intergênica} $\tau^{(i,j,k)}_{(x,y,z)}$, com $1 \leq i < j < k \leq m+1$, $0 \leq x \leq \breve{A}_i$, $0 \leq y \leq \breve{A}_j$ e $0 \leq z \leq \breve{A}_k$, quebra as regiões intergênicas $\breve{A}_i$ em $(x, x')$, $\breve{A}_j$ em $(y, y')$ e $\breve{A}_k$ em $(z, z')$, onde $x' = \breve{A}_i - x$, $y' = \breve{A}_j - y$ e $z' = \breve{A}_k - z$, e troca as posições relativas dos segmentos adjacentes $(x'~A_i~\breve{A}_{i+1}~A_{i+1}~\ldots~\breve{A}_{j-1}~A_{j-1}~y)$ e $(y'~A_j~\breve{A}_{j+1}~\ldots~\breve{A}_{k-1}~A_{k-1}~z)$. Essa transposição transforma $\G$ em $\G \comp \tau^{(i,j,k)}_{(x,y,z)} = (A', \breve{A}')$, onde:
\begin{flalign*}
\small
A' &= (A_1~\ldots~A_{i-1}~\underline{A_{j}~\ldots~A_{k-1}}~\underline{A_{i}~\ldots~A_{j-1}}~A_k~\ldots~A_m) \\
\breve{A}' &= (\breve{A}_1, \ldots, \breve{A}_{i-1}, x+y', \underline{\breve{A}_{j+1},\ldots,\breve{A}_{k-1}},z+x',\underline{\breve{A}_{i+1}, \ldots, \breve{A}_{j-1}}, y+z', \breve{A}_{k+1}, \ldots,\breve{A}_{m+1})
\end{flalign*}

Mostramos a seguir o efeito de $\tau^{(i,j,k)}_{(x,y,z)}$ em $\G = (A, \breve{A})$.
\begin{align*}
\begin{tikzpicture}[scale=1.05]
    \begin{scope}
        \node(g1) at (-1.6,0) {$\mathcal{G} = $};
        \node(dots1) at (0.75,0) {$\ldots$};
        \node(dots2) at (3.75,0) {$\ldots$};
        \node(dots3) at (6,0) {$\ldots$};
        \node(dots4) at (9.75,0) {$\ldots$};
    \node(i6) at (6.75,0) {$\ldots$};
    \end{scope}
    \begin{scope}[every node/.style={circle, draw=black, fill=white, text width=6.8mm, inner sep = 0pt, align=center}]
    \node(g1)[draw=black] at (0,0) {\scriptsize $A_{1}$};
    \node[draw=black](g2) at (1.5,0) {\scriptsize $A_{i-1}$};
    \node[draw=black](g3) at (3,0) {\scriptsize $A_{i}$};
    \node[draw=black](g4) at (4.5,0) {\scriptsize $A_{j-1}$};
    \node[draw=black](g5) at (6,0) {\scriptsize $A_{j}$};
    \node[draw=black](g6) at (7.5,0) {\scriptsize $A_{k{-}1}$};
    \node[draw=black](g7) at (9,0) {\scriptsize $A_{k}$};
    \node[draw=black](g8) at (10.5,0) {\scriptsize $A_{m}$};
    \end{scope}
    \begin{scope}[every node/.style={rectangle, draw=black, fill=white, text width=6.2mm, align=center}, scale=1]
    \scriptsize
    \node(i1) at (-0.75,0) {$\breve{A}_1$};
    \node[label=above:$\breve{A}_i$](i3) at (2.25,0) {$x|x'$};
    \node[label=above:$\breve{A}_j$](i5) at (5.25,0) {$y|y'$};
    \node[label=above:$\breve{A}_k$](i7) at (8.25,0) {$z|z'$};
    \node[text width=6.7mm](i8) at (11.28,0) {$\breve{A}_{m+1}$};
    \end{scope}
    \begin{scope}[>={Stealth[black]},
                  every edge/.style={draw=black}, every node/.style={inner sep=0pt, minimum size = 0pt}]
    \node[label=\phantom{}] (bi1) at (2.25, -0.5) {};
    \node[label=\phantom{}] (bi2) at (5.25, -0.5) {};
    \node[label=\phantom{}] (bi3) at (8.25, -0.5) {};
    \path [{Bar}-{Bar}] (bi1) edge node [black, pos=0.5, sloped, below] {} (bi2);
    \path [-{Bar}] (5.2, -0.5) edge node [black, pos=0.5, sloped, below] {} (bi3);
    \end{scope}
\end{tikzpicture}\\
\begin{tikzpicture}[scale=1.05]
    \begin{scope}
        \node(g2) at (-2.2,0) {$\mathcal{G} \comp \tau^{(i,j,k)}_{(x,y,z)} = $};
        \node(dots1) at (0.75,0) {$\ldots$};
        \node(dots2) at (3.75,0) {$\ldots$};
        \node(dots3) at (6,0) {$\ldots$};
        \node(dots4) at (9.75,0) {$\ldots$};
    \node(i6) at (6.75,0) {$\ldots$};
    \end{scope}
    \begin{scope}[every node/.style={circle, draw=black, fill=white, text width=6.8mm, inner sep = 0pt, align=center}]
    \node(g1)[draw=black] at (0,0) {\scriptsize $A_{1}$};
    \node[draw=black](g2) at (1.5,0) {\scriptsize $A_{i-1}$};
    \node[draw=black](g3) at (3,0) {\scriptsize $A_{j}$};
    \node[draw=black](g4) at (4.5,0) {\scriptsize $A_{k-1}$};
    \node[draw=black](g5) at (6,0) {\scriptsize $A_{i}$};
    \node[draw=black](g6) at (7.5,0) {\scriptsize $A_{j{-}1}$};
    \node[draw=black](g7) at (9,0) {\scriptsize $A_{k}$};
    \node[draw=black](g8) at (10.5,0) {\scriptsize $A_{m}$};
    \end{scope}
    \begin{scope}[every node/.style={rectangle, draw=black, fill=white, text width=6.2mm, align=center}, scale=1]
    \scriptsize
    \node(i1) at (-0.75,0) {$\breve{A}_1$};
    \node(i3) at (2.25,0) {$x|y'$};
    \node(i5) at (5.25,0) {$z|x'$};
    \node(i7) at (8.25,0) {$y|z'$};
    \node[text width=6.7mm](i8) at (11.28,0) {$\breve{A}_{m+1}$};
    \end{scope}
\end{tikzpicture}
\end{align*}
\end{definition}

\begin{definition}
Considerando a representação intergênica, dado um genoma $\G = (A, \breve{A})$ com $|A| = m$, uma \emph{inserção intergênica} $\phi_{(x)}^{(i, \sigma, \breve{\sigma})}$, tal que $0 \leq i \leq m$, $ 0 \leq x \leq \breve{A}_{i+1}$, $\sigma$ é uma string sem repetições e $\breve{\sigma}$ é uma lista de inteiros de tamanho $|\breve{\sigma}| = |\sigma| + 1$, adiciona o segmento $(\breve{\sigma}_1~\sigma_1~\ldots~\breve{\sigma}_{|\sigma|}~\sigma_{|\sigma|}~\breve{\sigma}_{|\sigma|+1})$ após o $x$-ésimo nucleotídeo da região intergênica $\breve{A}_{i+1}$. A inserção $\phi_{(x)}^{(i, \sigma, \breve{\sigma})}$ transforma $\G$ em $\G \comp \phi_{(x)}^{(i, \sigma, \breve{\sigma})} = (A', \breve{A}')$, onde:
\begin{flalign*}
A' &= A \cdot \phi^{(i,\sigma,\breve{\sigma})}_{x} = (A_1~\ldots~A_{i}~\underline{\sigma_1~\ldots~\sigma_{|\sigma|}}~A_{i+1}~\ldots~A_{m}) \\
\breve{A}' &= \breve{A} \cdot \phi^{(i,\sigma,\breve{\sigma})}_{x} = (\breve{A}_1,\ldots,\breve{A}_i,\underline{x+\breve{\sigma}_1,\breve{\sigma}_2,\ldots,\breve{\sigma}_{|\sigma|},\breve{\sigma}_{|\sigma|+1}+x'},\breve{A}_{i+2},\ldots,\breve{A}_{m+1}) \\
x' &= \breve{A}_{i+1} - x
\end{flalign*} 

Ao contrário de uma inserção considerando a representação clássica, em uma inserção intergênica a string $\sigma$ pode ser vazia. Nesse caso, a inserção intergênica $\phi_{(x)}^{(i, \sigma, \breve{\sigma})}$ altera apenas a região intergênica $\breve{A}_{i+1}$. Mostramos a seguir o efeito de $\phi_{(x)}^{(i, \sigma, \breve{\sigma})}$ em $\G = (A, \breve{A})$.
\begin{align*}
\begin{tikzpicture}[scale=1]
    \begin{scope}
        \node(g1) at (-3.53,0) {$\mathcal{G} = $};
        \node(s2) at (0.75,0) {$\ldots$};
        \node(s4) at (3.75,0) {$\ldots$};
        \node(phantom) at (7.7,0) {};
    \end{scope}
    \begin{scope}[every node/.style={circle, draw=black, fill=white, text width=6.2mm, inner sep = 0pt, align=center}]
    \node(g1)[draw=black] at (0,0) {\scriptsize $A_{1}$};
    \node[draw=black](g2) at (1.5,0) {\scriptsize $A_{i}$};
    \node[draw=black](g3) at (3,0) {\scriptsize $A_{i+1}$};
    \node[draw=black](g4) at (4.5,0) {\scriptsize $A_{m}$};
    \end{scope}
    \begin{scope}[every node/.style={rectangle, draw=black, fill=white, text width=6.2mm, align=center}, scale=1]
    \scriptsize
    \node(i1) at (-0.75,0) {$\breve{A}_1$};
    \node[label=above:$\breve{A}_{i+1}$](i3) at (2.25,0) {$x|x'$};
    \node[text width=6.7mm](i5) at (5.28,0) {$\breve{A}_{m+1}$};
    \end{scope}
\end{tikzpicture}\\
\begin{tikzpicture}[scale=1]
    \begin{scope}[>={Stealth[black]},
                  every edge/.style={draw=black}, every node/.style={inner sep=0pt, minimum size = 0pt}]
    \node[label=\phantom{}] (bi1) at (2.25, 0.5) {};
    \node[label=\phantom{}] (bi2) at (6.15, 0.5) {};
    \path [{Bar}-{Bar}] (bi1) edge node [black, pos=0.5, sloped, below] {} (bi2);
    \end{scope}
    \begin{scope}
        \node(g2) at (-2.3,0) {$\mathcal{G} \comp \phi_{(x)}^{(i, \sigma, \breve{\sigma})} = $};
        \node(i2) at (0.75,0) {$\ldots$};
        \node(i4) at (4.50,0) {$\ldots$};
        \node(i8) at (7.80,0) {$\ldots$};
    \end{scope}
    \begin{scope}[every node/.style={circle, draw=black, fill=white, text width=6.2mm, inner sep = 0pt, align=center}]
    \node(g1)[draw=black] at (0,0) {\scriptsize $A_{1}$};
    \node[draw=black](g2) at (1.5,0) {\scriptsize $A_{i}$};
    \node[draw=black](g3) at (3,0) {\scriptsize $\sigma_{1}$};
    \node[draw=black](g4) at (5.25,0) {\scriptsize $\sigma_{|\sigma|}$};
    \node[draw=black](g5) at (7.05,0) {\scriptsize $A_{i+1}$};
    \node[draw=black](g6) at (8.55,0) {\scriptsize $A_{m}$};
    \end{scope}
    \begin{scope}[every node/.style={rectangle, draw=black, fill=white, text width=6.2mm, align=center}, scale=1]
    \scriptsize
    \node(i1) at (-0.75,0) {$\breve{A}_1$};
    \node(i3) at (2.25,0) {$x|{\breve{\sigma}_1}$};
    \node(i5) at (3.75,0) {$\breve{\sigma}_{2}$};
    \node[text width=9mm](i6) at (6.15,0) {$\breve{\sigma}_{|\breve{\sigma}|}|{x'}$};
    \node[text width=6.7mm](i5) at (9.33,0) {$\breve{A}_{m+1}$};
    \end{scope}
\end{tikzpicture}
\end{align*}
\end{definition}

\begin{definition}
Considerando a representação intergênica, dado um genoma $\G = (A, \breve{A})$ com $|A| = m$, uma \emph{deleção intergênica} $\psi_{(x,y)}^{(i,j)}$, tal que $1 \leq i \leq j \leq m+1$, $0 \leq x \leq \breve{A}_i$ e $0 \leq y \leq \breve{A}_{j}$, remove o segmento que inicia após o $x$-ésimo nucleotídeo de $\breve{A}_i$ e termina no $y$-ésimo nucleotídeo de $\breve{A}_{j}$, transformando $\G$ em $\G \comp \psi_{(x,y)}^{(i,j)} = (A', \breve{A}')$, onde:
\begin{flalign*}
A' &= A \cdot \psi^{(i,j)}_{(x,y)} = (A_1~\ldots~A_{i-1}~A_j~\ldots~A_m)\\ 
\breve{A}' &= \breve{A} \cdot \psi^{(i,j)}_{(x,y)} = (\breve{A}_1,\ldots,\breve{A}_{i-1}, x + y', \breve{A}_{j+1}, \ldots, \breve{A}_{m+1})\\ 
y' &= \breve{A}_j - y
\end{flalign*}

Quando $i = j$, uma deleção $\psi_{(x,y)}^{(i,j)}$ não remove elementos de $A$ e apenas altera a região intergênica $\breve{A}_j$, portanto, essa operação deve atender a restrição $0 \leq x \leq y \leq \breve{A}_j$. Mostramos a seguir o efeito de $\psi_{(x,y)}^{(i,j)}$ em $\G = (A, \breve{A})$.
\begin{gather*}
\begin{tikzpicture}[scale=1]
    \begin{scope}[>={Stealth[black]},
                  every edge/.style={draw=black}, every node/.style={inner sep=0pt, minimum size = 0pt}]
    \node[label=\phantom{}] (bi1) at (2.25, -0.5) {};
    \node[label=\phantom{}] (bi2) at (5.25, -0.5) {};
    \path [{Bar}-{Bar}] (bi1) edge node [black, pos=0.5, sloped, below] {} (bi2);
    \end{scope}
    \begin{scope}
        \node(g1) at (-1.78,0) {$\mathcal{G} = $};
        \node(i2) at (0.75,0) {$\ldots$};
        \node(i4) at (3.75,0) {$\ldots$};
    \node(i6) at (6.75,0) {$\ldots$};
    \end{scope}
    \begin{scope}[every node/.style={circle, draw=black, fill=white, text width=6.2mm, inner sep = 0pt, align=center}]
    \node(g1)[draw=black] at (0,0) {\scriptsize $A_{1}$};
    \node[draw=black](g2) at (1.5,0) {\scriptsize $A_{i-1}$};
    \node[draw=black](g3) at (3,0) {\scriptsize $A_{i}$};
    \node[draw=black](g4) at (4.5,0) {\scriptsize $A_{j-1}$};
    \node[draw=black](g5) at (6,0) {\scriptsize $A_{j}$};
    \node[draw=black](g6) at (7.5,0) {\scriptsize$A_{m}$};
    \end{scope}
    \begin{scope}[every node/.style={rectangle, draw=black, fill=white, text width=6.2mm, align=center}, scale=1]
    \scriptsize
    \node(i1) at (-0.75,0) {$\breve{A}_1$};
    \node[label=above:$\breve{A}_i$](i3) at (2.25,0) {$x|x'$};
    \node[label=above:$\breve{A}_j$](i5) at (5.25,0) {$y|y'$};
    \node[text width=6.7mm](i7) at (8.28,0) {$\breve{A}_{m+1}$};
    \end{scope}
\end{tikzpicture}\\
\begin{tikzpicture}[scale=1]
    \begin{scope}
        \node(g2) at (-3.85,0) {$\mathcal{G} \comp \psi_{(x,y)}^{(i,j)} = $};
        \node(i2) at (0.75,0) {$\ldots$};
        \node(i6) at (3.75,0) {$\ldots$};
        \node(phantom) at (8.35,0) {};
    \end{scope}
    \begin{scope}[every node/.style={circle, draw=black, fill=white, text width=6.2mm, inner sep = 0pt, align=center}]
    \node(g1)[draw=black] at (0,0) {\scriptsize $A_{1}$};
    \node[draw=black](g2) at (1.5,0) {\scriptsize $A_{i-1}$};
    \node[draw=black](g5) at (3,0) {\scriptsize $A_{j}$};
    \node[draw=black](g6) at (4.5,0) {\scriptsize$A_{m}$};
    \end{scope}
    \begin{scope}[every node/.style={rectangle, draw=black, fill=white, text width=6.2mm, align=center}, scale=1]
    \scriptsize
    \node(i1) at (-0.75,0) {$\breve{A}_1$};
    \node(i3) at (2.25,0) {$x|{y'}$};
    \node[text width=6.7mm](i7) at (5.28,0) {$\breve{A}_{m+1}$};
    \end{scope}
\end{tikzpicture}
\end{gather*}
\end{definition}

Neste trabalho, seguimos as restrições apresentadas por Willing e coautores~\cite{2013-willing-etal}: dada uma instância clássica $\I = (A, \iota^n)$ ou uma instância intergênica $\Ig = ((A, \breve{A}), (\iota^n, \breve{\iota}^n))$, temos que um elemento $A_i$ só pode ser removido se $|A_i| \notin \Sigma_{\iota^n}$, e um elemento $x$ só pode ser inserido em $A$ se $x \in \Sigma_{\iota^n} \setminus \Sigma_{A}$. Ou seja, adicionamos restrições para prevenir que os {\it indels} removam e, posteriormente, insiram os mesmos genes (ou vice-versa).

\section{Problemas de Distância de Rearranjos}

Um \emph{modelo de rearranjos} $\M$ é o conjunto de operações permitidas em um problema de Distância de Rearranjos.

Em problemas de distância, podemos considerar que todas as operações utilizadas na solução possuem o mesmo custo (\emph{abordagem não ponderada}) ou que a aplicação de uma determinada operação possui um custo associado a ela (\emph{abordagem ponderada}). A não ser que seja dito expressamente o contrário, assumimos que os problemas de distância de rearranjos consideram uma abordagem não ponderada. A seguir, apresentamos esses problemas considerando cada tipo de instância.

Dados um modelo $\M$ e uma permutação $\pi$ de tamanho $n$, a \emph{distância de ordenação} $d_{\M}(\pi)$ é o número mínimo de rearranjos do modelo $\M$ necessários para ordenar $\pi$ (transformar $\pi$ em $\iota^n$), ou seja, temos $d_{\M}(\pi) = |S|$ tal que $\pi \comp S = \iota^n$, a sequência $S$ tem tamanho mínimo e $\beta \in \M$, para todo $\beta \in S$.

\begin{problembox}
\textbf{Ordenação de Permutações por Rearranjos}\\
\textbf{Entrada:} Uma instância clássica de genomas balanceados $\I = \pi$.\\
\textbf{Objetivo:} Considerando um modelo de rearranjos $\M$, encontrar uma sequência $S = (\beta_1, \beta_2, \ldots, \beta_{|S|})$ de tamanho mínimo (i.e., $|S| = d_{\M}(\pi)$) que ordena $\pi$ e é formada apenas de rearranjos pertencentes a $\M$.
\end{problembox}

Na abordagem ponderada, considerando uma função de custo $w : \M \rightarrow \mathbb{R}$, dizemos que um rearranjo $\beta$ possui custo $w(\beta)$. Para uma sequência $S = (\beta_1, \ldots, \beta_\ell)$, temos que $w(S) = \sum_{i = 1}^{\ell} w(\beta_i)$.

Dados um modelo $\M$, uma função de custo $w$ e uma permutação $\pi$, a distância de ordenação $d^w_{\M}(\pi)$ é o custo mínimo necessário para ordenar $\pi$ usando apenas operações de $\M$, ou seja, $d^w_{\M}(\pi) = w(S)$, com $S = (\beta_1, \ldots, \beta_\ell)$, tal que $\beta_i \in \M$, para todo $\beta_i \in S$, $\pi \comp S = \iota^n$ e o valor de $w(S)$ é mínimo.

\begin{problembox}
\textbf{Ordenação de Permutações por Rearranjos Ponderados}\\
\textbf{Entrada:} Uma instância clássica de genomas balanceados $\I = \pi$.\\
\textbf{Objetivo:} {\revisaof Considerando um modelo de rearranjos $\M$ e uma função de custo $w$, encontrar uma sequência $S$ de custo mínimo que ordena $\pi$ e é formada apenas de rearranjos pertencentes a $\M$, ou seja, $w(S) = d^w_{\M}(\pi)$ e $\beta \in \M$, para todo $\beta \in S$}.
\end{problembox}

Dados um modelo $\M$ e uma instância clássica de genomas desbalanceados $\I = (A, \iota^n)$, a distância de rearranjos $d_{\M}(A, \iota^n)$ (ou $d_{\M}(\I)$) é o número mínimo de rearranjos do modelo $\M$ necessários para transformar a string $A$ na string $\iota^n$, ou seja, temos $d_{\M}(A, \iota^n) = |S|$ tal que $A \comp S = \iota^n$, a sequência $S$ tem tamanho mínimo e $\beta \in \M$, para todo $\beta \in S$.

\begin{problembox}
\textbf{Distância de Rearranjos}\\
\textbf{Entrada:} Uma instância clássica de genomas desbalanceados $\I = (A, \iota^n)$.\\
\textbf{Objetivo:} Considerando um modelo de rearranjos $\M$, encontrar uma sequência $S = (\beta_1, \beta_2, \ldots, \beta_{|S|})$ de tamanho mínimo (i.e., $|S| = d_{\M}(A, \iota^n)$) que transforma $A$ em $\iota^n$ e é formada apenas de rearranjos pertencentes a $\M$.
\end{problembox}

Dados um modelo $\M$ e uma instância intergênica $\Ig = (\G_o, \G_d)$, com $\G_o = (A, \breve{A})$ e $\G_d = (\iota^n, \breve{\iota}^n)$, a distância de rearranjos intergênicos $d_{\M}(\G_o, \G_d)$ (ou $d_{\M}(\Ig)$) é o número mínimo de rearranjos do modelo $\M$ necessários para transformar o genoma origem $\G_o$ no genoma destino $\G_d$, ou seja, temos $d_{\M}(\G_o, \G_d) = |S|$ tal que $A \comp S = \iota^n$, $\breve{A} \comp S = \breve{\iota}^n$, a sequência $S$ tem tamanho mínimo e $\beta \in \M$, para todo $\beta \in S$.

\begin{problembox}
\textbf{Distância de Rearranjos Intergênicos}\\
\textbf{Entrada:} Uma instância intergênica $\Ig = (\G_o, \G_d)$, com $\G_o = (A, \breve{A})$ e $\G_d = (\iota^n, \breve{\iota}^n)$.\\
\textbf{Objetivo:} Considerando um modelo de rearranjos $\M$, encontrar uma sequência $S = (\beta_1, \beta_2, \ldots, \beta_{|S|})$ de tamanho mínimo (i.e., $|S| = d_{\M}(\G_o, \G_d)$) que transforma $\G_o$ em $\G_d$ e é formada apenas de rearranjos pertencentes a $\M$.
\end{problembox}

\section{Breakpoints}

O conceito de \emph{breakpoints} é bastante usado em problemas de distância de rearranjos, sendo útil na definição de limitantes e algoritmos, além de já ter sido utilizado em provas de NP-dificuldade~\cite{2019b-oliveira-etal,2012-bulteau-etal}. De forma geral, existe um {\it breakpoint} entre dois elementos consecutivos do genoma origem se esses dois elementos não são consecutivos no genoma destino. A definição de um {\it breakpoint} depende do modelo de rearranjos utilizado e da representação dos genomas. A seguir, apresentamos cada tipo de {\it breakpoint}.

\subsection{Breakpoints em Permutações}\label{cap2:sec:breakpoints_permutacao}

Nesta seção, apresentamos definições de {\it breakpoints} para uma instância clássica de genomas balanceados $\I = \pi$. 

\begin{definition}
Dada uma permutação $\sigma$ de tamanho $n$, obtemos a \emph{versão estendida} de $\sigma$ ao adicionar os elementos $\sigma_0 = +0$ e $\sigma_{n+1} = +(n+1)$. Caso $\sigma$ seja uma permutação sem sinais, temos que $\sigma_0 = 0$ e $\sigma_{n+1} = n+1$. Esses elementos são adicionados nas permutações apenas para facilitar algumas notações e, portanto, eles nunca são afetados por rearranjos de genomas.
\end{definition}

Nas próximas definições, assumimos que $\pi$ e $\iota^n$ estão nas suas versões estendidas. 

\begin{definition}\label{cap2:def:breakpoint_r}
Um \emph{breakpoint de reversões sem sinais} existe entre um par de elementos consecutivos $(\pi_i, \pi_{i+1})$ se $|\pi_{i+1} - \pi_i| \neq 1$, para $0 \leq i \leq n$.
\end{definition}

\begin{example}
	Para a permutação sem sinais $\pi = (0~4~3~5~1~2~6~7)$, que está na versão estendida, temos os seguintes {\it breakpoints} de reversões sem sinais (representados pelo símbolo $\circ$):
	\begin{align*}
		\pi = (0 \, \circ \, 4~3  \, \circ \, 5 \, \circ \, 1~2 \, \circ \, 6~7)
	\end{align*}
\end{example}

\begin{definition}\label{cap2:def:breakpoint_t}
Um \emph{breakpoint de transposições} existe entre um par de elementos consecutivos $(\pi_i, \pi_{i+1})$ se $\pi_{i+1} - \pi_i \neq 1$, para $0 \leq i \leq n$.
\end{definition}

\begin{example}
	Para a permutação sem sinais $\pi = (0~4~3~5~1~2~6~7)$, que está na versão estendida, temos os seguintes {\it breakpoints} de transposições (representados pelo símbolo $\circ$):
	\begin{align*}
		\pi = (0 \, \circ \, 4 \, \circ \, 3  \, \circ \, 5 \, \circ \, 1~2 \, \circ \, 6~7)
	\end{align*}
\end{example}

\begin{definition}\label{cap2:def:breakpoint_sr}
Um \emph{breakpoint de reversões com sinais} existe entre um par de elementos consecutivos $(\pi_i, \pi_{i+1})$ se $\pi_{i+1} - \pi_i \neq 1$, para $0 \leq i \leq n$.
\end{definition}

\begin{example}
  Para a permutação com sinais $\pi = (+0~{-4}~{-3}~{+5}~{+1}~{+2}~{-6}~{+7})$, que está na versão estendida, temos os seguintes {\it breakpoints} de reversões com sinais (representados pelo símbolo $\circ$):
  \begin{align*}
    \pi = (+0\, \circ \,{-4}~{-3}\, \circ \,{+5}\, \circ \,{+1}~{+2}\, \circ \,{-6}\, \circ \,{+7})
  \end{align*}
\end{example}

Agora, apresentamos qual tipo de {\it breakpoint} é usado em cada modelo de rearranjos. Dessa forma, quando o modelo de rearranjos está explícito, não precisamos indicar qual o tipo de {\it breakpoint} está sendo considerado.

Para permutações com sinais, todos os modelos considerados neste trabalho utilizam a definição de {\it breakpoint} de reversões com sinais e, assim, a permutação identidade com sinais é a única que não possui {\it breakpoints}.   

Note que a \emph{permutação inversa sem sinais} $\eta^n = (n~({n-1})~({n-2})~\ldots~2~1)$ possui apenas $2$ {\it breakpoints} de reversões sem sinais, enquanto essa mesma permutação possui $n+1$ {\it breakpoints} de transposições. Essa permutação pode ser transformada em $\iota^n$ usando apenas uma reversão ou apenas duas operações, considerando modelos que possuem transposições combinadas com transposições inversas ou revrevs. No entanto, para transformar $\eta^n$ em $\iota^n$ usando apenas transposições, precisamos de $\Theta(n)$ operações~\cite{meidanis2002lower}.

Sendo assim, para permutações sem sinais, os modelos que possuem alguma operação que causa inversão de segmento(s) (e.g., reversão, transposição inversa e revrev) utilizam a definição de {\it breakpoint} de reversões sem sinais. Por último, o modelo que possui apenas transposições ou \textit{block interchanges} utiliza a definição de {\it breakpoint} de transposições. Em ambos os casos, a permutação identidade sem sinais é a única que não possui {\it breakpoints}. 

Note que se um modelo de rearranjos possui apenas transposições ou \textit{block interchanges}, então apenas permutações sem sinais podem ser consideradas, já que ambas as operações não alteram sinais de elementos.

\begin{definition}
Dado um modelo $\M$, $b_{\M}(\pi)$ denota o número de {\it breakpoints} em $\pi$.
\end{definition}

\begin{definition}
Dado um modelo $\M$ e um rearranjo (ou sequência de rearranjos) $\beta$, $\Delta b_\M(\pi, \beta) = b_\M(\pi) - b_\M(\pi \comp \beta)$ denota a variação no número de {\it breakpoints} após $\beta$ ser aplicado em $\pi$.
\end{definition}

\begin{definition}
Considerando um tipo de {\it breakpoints}, uma \emph{strip} é uma sequência maximal de elementos sem {\it breakpoints} entre elementos consecutivos dessa sequência.
\end{definition}

Para permutações sem sinais, uma {\it strip} $(\pi_i~\pi_{i+1}~\ldots~\pi_j)$, com $0 \leq i < j \leq n + 1$, é \emph{crescente} se $\pi_{k+1} > \pi_{k}$, para todo $i \leq k < j$. Caso contrário, a {\it strip} é \emph{decrescente} e temos que $\pi_{k+1} < \pi_{k}$, para todo $i \leq k < j$. Um \emph{singleton} é uma {\it strip} de tamanho um. Um {\it singleton} é crescente se ele é igual a $(\pi_0)$ ou $(\pi_{n+1})$, caso contrário, ele é dito decrescente. Note que os elementos $\pi_0$ e $\pi_{n+1}$ sempre pertencem a {\it strips} crescentes. 

Para permutações com sinais, uma {\it strip} é classificada como \emph{positiva} se todos os elementos possuem sinal ``$+$''. Caso contrário, essa {\it strip} é classificada como \emph{negativa}. Note que pelas definições de breakpoints apresentadas nesta seção, todos os elementos de uma {\it strip} devem possuir o mesmo sinal. 

\subsection{Breakpoints em Genomas Desbalanceados}\label{cap2:sec:breakpoints_indels}

Nesta seção, apresentamos definições de {\it breakpoints} para uma instância clássica de genomas desbalanceados $\I = (A, \iota^n)$. 

\begin{definition}
Dada uma instância $(A, \iota^n)$, obtemos as \emph{versões estendidas} de $A$ e $\iota^n$ ao adicionar os elementos $A_0 = +0$, $A_{|A| + 1} = +(n+1)$, $\iota^n_0 = +0$ e $\iota^n_{n+1} = +(n+1)$. Caso as strings sejam sem sinais, então apenas desconsideramos os sinais dos novos elementos. Assim como em permutações estendidas, os elementos adicionados não são afetados por rearranjos. {\revisaof Além disso, não incluímos os elementos $0$ e $n+1$ nos alfabetos $\Sigma_{A}$ e $\Sigma_{\iota^n}$.}
\end{definition}

Nas próximas definições, assumimos que $A$ e $\iota^n$ estão nas suas versões estendidas. 

\begin{definition}
Dado um elemento $x$, seja $\anterior(x, \I)$ e $\posterior(x, \I)$ definidos como a seguir.
{\footnotesize
\begin{align*}
  \anterior(x, \I) =
  \begin{cases}
    \max(\{y \in \Sigma_{A} \cap \Sigma_{\iota^n} | y < x\}), & \text{if $min(\Sigma_{A} \cap \Sigma_{ \iota^n}) < x \leq n + 1$} \\
    0,& \text{if $x = min(\Sigma_{A} \cap \Sigma_{\iota^n})$} \\
    \deletionElement, &\text{if $x = \deletionElement$}
  \end{cases}
\end{align*}
\begin{align*}
  \posterior(x, \I) =
  \begin{cases}
    \min(\{y \in \Sigma_{A} \cap \Sigma_{ \iota^n} | y > x\}), & \text{if $0 \leq x < max(\Sigma_{A} \cap \Sigma_{ \iota^n})$} \\
    n + 1,& \text{if $x = max(\Sigma_{A} \cap \Sigma_{ \iota^n})$} \\
    \deletionElement, &\text{if $x = \deletionElement$}
  \end{cases}
\end{align*}
}
\end{definition}

\begin{example}
  Para a string estendida $A = (0~4~3~\deletionElement~1~2~6)$ e considerando $\iota^n$ com $n = 5$, temos os seguintes valores:
  \begin{align*}
    \anterior&(A_1 = 4, \I) = 3,
    &\qquad
    \posterior&(A_0 = 0, \I) = 1,\\
    \anterior&(A_2 = 3, \I) = 2,
    &\qquad
    \posterior&(A_1 = 4, \I) = 6, \\
    \anterior&(A_3 = \deletionElement, \I) = \deletionElement,
    &\qquad
    \posterior&(A_2 = 3, \I) = 4, \\
    \anterior&(A_4 = 1, \I) = 0,
    &\qquad
    \posterior&(A_3 = \deletionElement, \I) = \deletionElement, \\
    \anterior&(A_5 = 2, \I) = 1, &\qquad
    \posterior&(A_4 = 1, \I) = 2, \\
    \anterior&(A_6 = 6, \I) = 4,
    &\qquad
    \posterior&(A_5 = 2, \I) = 3.
  \end{align*}
\end{example}

\begin{definition}\label{cap2:def:breakpoint_indel_r}
	Considerando que $A$ e $\iota^n$ são strings sem sinais, para $0 \leq i \leq |A|$, um \emph{breakpoint de reversões sem sinais} existe entre um par de elementos consecutivos $(A_i, A_{i+1})$ se $A_{i+1} \neq \posterior(A_i, \I)$ e $A_{i+1} \neq \anterior(A_i, \I)$.
\end{definition}

\begin{example}
	Para $\I = (A, \iota^n)$ tal que $A = (0~4~3~\deletionElement~1~2~6~7~10)$ e $n = 9$, temos os seguintes {\it breakpoints} de reversões sem sinais (representados pelo símbolo $\circ$):
	\begin{align*}
		(0~\circ~4~3~\circ~\deletionElement~\circ~1~2~\circ~6~7~10).
	\end{align*}
\end{example}

\begin{definition}\label{cap2:def:breakpoint_indel_t}
	Considerando que $A$ e $\iota^n$ são strings sem sinais, para $0 \leq i \leq |A|$, um \emph{breakpoint de transposição} existe entre um par de elementos consecutivos $(A_i, A_{i+1})$ se $A_{i+1} \neq \posterior(A_i, \I)$.
\end{definition}

\begin{example}
  Para $\I = (A, \iota^n)$ tal que $A = (0~4~3~\deletionElement~1~2~6~7~10)$ e $n = 9$, temos os seguintes {\it breakpoints} de transposição (representados pelo símbolo $\circ$):
  \begin{align*}
    (0~\circ~4~\circ~3~\circ~\deletionElement~\circ~1~2~\circ~6~7~10).
  \end{align*}
\end{example}

As definições de {\it strips} são análogas às definições apresentadas para {\it breakpoints} em permutações. No entanto, existe o seguinte caso adicional na classificação de {\it strips}: uma {\it strip} formada apenas de elementos iguais a $\deletionElement$ é considerada uma {\it strip} crescente. 

\begin{definition}
	Dado um modelo $\M$, $b_\M(A, \iota^n)$ (ou $b_\M(\I)$) denota o número de {\it breakpoints} para a instância $\I = (A, \iota^n)$.
\end{definition}

\begin{definition}
	Dado um modelo $\M$ e um rearranjo (ou sequência de rearranjos) $\beta$, $\Delta b_\M(\I, \beta) = \Delta b_\M(A, \iota^n, \beta) = b_\M(A, \iota^n) - b_\M(A \comp \beta, \iota^n)$ denota a variação no número de {\it breakpoints} após $\beta$ ser aplicado em $A$.
\end{definition}

\subsection{Breakpoints Intergênicos}\label{cap2:sec:breakpoints_intergenicos}

O conceito de {\it breakpoints}, apresentado na Seção~\ref{cap2:sec:breakpoints_indels}, também é válido para uma instância intergênica $\Ig = (\G_o, \G_d)$, com $\G_o = (A, \breve{A})$ e $\G_d = (\iota^n, \breve{\iota}^n)$. No entanto, os {\it breakpoints} consideram apenas os elementos das strings $A$ e $\iota^n$, ignorando os tamanhos das regiões intergênicas. Dessa forma, apresentamos os {\it breakpoints} intergênicos, que consideram tanto os elementos das strings quanto os tamanhos das regiões intergênicas de pares de elementos adjacentes. Para {\it breakpoints} intergênicos, também consideramos que as strings $A$ e $\iota^n$ estão nas suas versões estendidas.

\begin{definition}
	Dada uma instância intergênica $\Ig = (\G_o, \G_d)$, com $\G_o = (A, \breve{A})$ e $\G_d = (\iota^n, \breve{\iota}^n)$, e um modelo de rearranjo $\M$, dizemos que $(A_i, A_{i+1})$ é um \emph{{\it breakpoint} intergênico}, para $0 \leq i \leq |A|$, se uma dessas duas condições é verdadeira: $(A_i, A_{i+1})$ é um {\it breakpoint}; ou $(A_i, A_{i+1})$ não é um {\it breakpoint}, $A_i \neq \deletionElement$, e $\breve{A}_{i+1} \neq \breve{\iota}^n_j$, onde $\breve{\iota}^n_j$ é a região intergênica do genoma de destino $\G_d$ entre os dois elementos de $\iota^n$ que correspondem aos elementos $A_i$ e $A_{i+1}$.
\end{definition}

Assim como na Seção~\ref{cap2:sec:breakpoints_indels}, consideramos que as strings $A$ e $\iota^n$ estão nas suas versões estendidas. A partir do modelo $\M$, podemos inferir qual a definição de {\it breakpoints} que será considerada ({\it breakpoints} de reversões sem sinais ou {\it breakpoints} de transposições). 

\begin{definition}
	Seja $(A_i, A_{i+1})$ um {\it breakpoint} intergênico, tal que $(A_i, A_{i+1})$ não é um {\it breakpoint} e $\breve{\iota}^n_j$ é a região intergênica com $j = \max(A_i, A_{i+1})$, ou seja, $\breve{\iota}^n_j$ é a região intergênica do genoma de destino $\G_d$ que está entre os dois elementos de $\iota^n$ que correspondem aos elementos $A_i$ e $A_{i+1}$. Dizemos que o {\it breakpoint} intergênico $(A_i, A_{i+1})$ é \emph{sobrecarregado} se $\breve{A}_{i+1} > \breve{\iota}^n_j$. Caso contrário, $(A_i, A_{i+1})$ é \emph{subcarregado}.
\end{definition}

\begin{example}\label{cap2:example:breakpoint_intergenico}
	Considere $\G_o = (A, \breve{A})$ e $\G_d = (\iota^n, \breve{\iota}^n)$, onde $n = 6$, $A$ é a string estendida $(0~5~\deletionElement~3~2~1~6~7)$, $\breve{A} = (2,15,10,8,10,18,15)$, e $\breve{\iota}^n = (2,10,15,9,11,14,10)$. Os valores entre parênteses representam os tamanhos das regiões intergênicas.
	\begin{align*}
		\G_o &= (0~(2)~5~(15)~\deletionElement~(10)~3~(8)~2~(10)~1~(18)~6~(15)~7)\\
		\G_d &= (0~(2)~1~(10)~2~(15)~3~(9)~4~(11)~5~(14)~6~(10)~7)
	\end{align*}
	Considerando {\it breakpoints} de reversão sem sinais, temos os seguintes {\it breakpoints} intergênicos: $(0,5)$, $(5, \deletionElement)$, $(\deletionElement, 3)$, $(3, 2)$, $(1,6)$, e $(6,7)$. O {\it breakpoint} intergênico $(3, 2)$ é subcarregado e o {\it breakpoint} $(6,7)$ é sobrecarregado.
\end{example}

\begin{definition}
	Para um modelo $\M$, o número de {\it breakpoints} intergênicos é denotado por $\bi_\M(\Ig) = \bi_\M(\G_o, \G_d)$. Dado um rearranjo (ou sequência de operações) $\beta$, denotamos por $\Delta \bi_\M(\Ig, \beta) = \Delta \bi_\M(\G_o, \G_d, \beta) = \bi_\M(\G_o, \G_d) - \bi_\M(\G_o \comp \beta, \G_d)$ a variação no número de {\it breakpoints} intergênicos causada por $\beta$. 
\end{definition}

As definições de {\it strips} são análogas às usadas nas seções anteriores.

\section{Grafo de Ciclos}

A estrutura grafo de ciclos, também chamada de grafo de {\it breakpoints}, foi inicialmente apresentada para o problema da Ordenação de Permutações com Sinais por Reversões~\cite{1996-bafna-pevzner}. O grafo de ciclos é utilizado para representar uma instância de um problema de Distância de Rearranjos usando uma única estrutura matemática. No entanto, em sua forma original, o grafo de ciclos pode ser usado apenas em permutações. Como parte deste trabalho, nós propomos duas adaptações chamadas de \emph{grafo de ciclos rotulado} e \emph{grafo de ciclos ponderado e rotulado}. O grafo de ciclos rotulado é usado para uma instância clássica de genomas balanceados ou desbalanceados, enquanto o grafo de ciclos ponderado e rotulado é usado para uma instância intergênica de genomas balanceados ou desbalanceados. A seguir, apresentamos esses três tipos de grafos.

\subsection{Grafo de Ciclos para Permutações}\label{cap2:sec:grafo_ciclos_permutacoes}

Para as próximas definições, consideramos que permutações estão na sua forma estendida. {\revisao O \emph{grafo de ciclos} de uma permutação $\pi$ é o grafo não direcionado $G(\pi) = (V, E_o \cup E_d)$, onde $V = \{+\pi_0, -\pi_1, +\pi_1, -\pi_2, +\pi_2, \ldots, -\pi_n, +\pi_n, -\pi_{n+1}\}$ é o conjunto de vértices, $E_o = \{(-\pi_i, +\pi_{i-1})~|~1 \leq i \leq n+1\}$ é o conjunto de \emph{arestas de origem}, e $E_d = \{(-\iota^n_i, +\iota^n_{i-1})~|~1 \leq i \leq n+1\}$ é o conjunto de \emph{arestas de destino}.}

Arestas de origem conectam vértices correspondentes a elementos adjacentes na permutação $\pi$, enquanto arestas de destino conectam vértices correspondentes a elementos adjacentes na permutação $\iota^n$. Essas arestas também são chamadas de arestas pretas (origem) e arestas cinzas (destino) na literatura de rearranjos de genomas~\cite{2009-fertin-etal}. 

Em um \emph{ciclo alternante}, arestas com uma extremidade em comum (arestas adjacentes) possuem tipos distintos (aresta de origem e aresta de destino). Todo vértice de $G(\pi)$ é extremidade de exatamente uma aresta de origem e uma aresta de destino. Dessa forma, existe uma decomposição única do grafo $G(\pi)$ em ciclos alternantes~\cite{1996-bafna-pevzner}. É importante ressaltar que essas definições de grafo de ciclos funcionam para todos os modelos estudados considerando permutações com sinais. No entanto, no caso de permutações sem sinais, as propriedades do grafo de ciclos apresentadas nesta seção só valem para os modelos que utilizam {\it breakpoints} de transposição (Definição~\ref{cap2:def:breakpoint_t}). Por exemplo, as propriedades apresentadas nesta seção não são válidas para reversões em permutações sem sinais.

Daqui em diante, dizemos que a aresta de origem $(-\pi_i, +\pi_{i-1})$ tem índice $i$ e é denotada por $e_i$. Além disso, a aresta de destino $(-\iota^n_i, +\iota^n_{i-1})$ tem índice $i$ e é denotada por $e'_i$. Normalmente, nos referimos a uma aresta de origem apenas pelo seu índice. 

Utilizamos a seguinte convenção presente na literatura para o desenho de $G(\pi)$: desenhamos os vértices em uma linha horizontal na ordem em que esses elementos aparecem em $\pi$, sempre colocando o vértice $-\pi_i$ à esquerda do vértice $+\pi_i$, {\revisao ou seja, desenhamos os vértices da esquerda para a direita seguindo a ordem da sequência $(+\pi_0, -\pi_1, +\pi_1, -\pi_2, +\pi_2, \ldots, -\pi_n, +\pi_n, -\pi_{n+1})$}; desenhamos arestas de origem como linhas horizontais; e desenhamos arestas de destino como arcos. Devido ao posicionamento dos vértices de $G(\pi)$, as arestas de origem são dispostas da esquerda para direita de forma que os seus índices formam a sequência $1, 2, \ldots, n, n+1$. 

Um $m$-ciclo é um ciclo com $m$ arestas de origem e $m$ arestas de destino. Além disso, dizemos que um $m$-ciclo possui \emph{tamanho} $m$. Um $1$-ciclo também é chamado de ciclo \emph{unitário} ou \emph{trivial}. Uma permutação $\pi$ é uma \emph{permutação simples} se todos os ciclos em $G(\pi)$ possuem tamanho menor ou igual a $3$. Também classificamos um $m$-ciclo de acordo com a paridade de $m$, sendo que um $m$-ciclo é par, quando $m$ é par, ou ímpar, quando $m$ é ímpar. 

Representamos um $m$-ciclo $C$ usando a lista de índices das suas arestas, sendo que essa lista é construída percorrendo as arestas do ciclo $C$, iniciando pelo vértice mais à direita de $C$ e percorrendo a aresta de origem incidente a esse vértice: $C = (o_1, d_1, o_2, d_2, \ldots,$ $o_m, d_m)$, tal que $o_1 > o_j$, para todo $1 < j \leq m$.
Na maioria dos casos, usamos uma representação simplificada de $C$ que possui apenas os índices das arestas de origem, já que as arestas de destino podem ser inferidas a partir dessa informação. 

As figuras~\ref{cap2:fig_exemplo_grafo} e \ref{cap2:fig_exemplo_grafo2} mostram exemplos de grafos de ciclos para uma permutação sem sinais e uma permutação com sinais, respectivamente. Também exemplificamos algumas notações utilizadas nessas figuras.

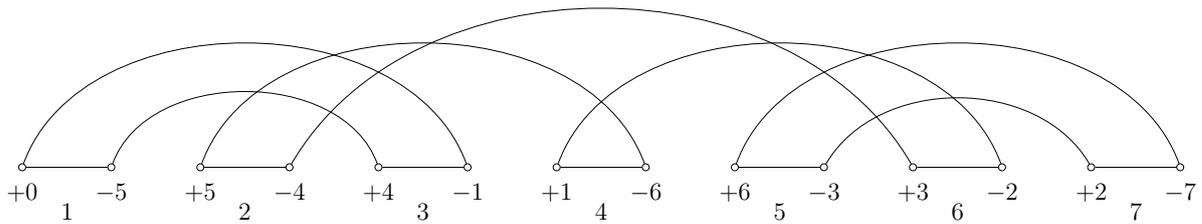
\begin{figure}
  \resizebox{\linewidth}{!}{
  \large
  \begin{tikzpicture}[scale=1.2]
    \begin{scope}[every node/.style={inner sep=0.5mm, draw, circle, minimum size = 0pt}]
      \node[label=below:$+0$] (p0) at (0,0) {};
      \node[label=below:$-5$] (m5) at (1.5,0) {};
      \node[label=below:$+5$] (p5) at (3,0) {};
      \node[label=below:$-4$] (m4) at (4.5,0) {};
      \node[label=below:$+4$] (p4) at (6,0) {};
      \node[label=below:$-1$] (m1) at (7.5,0) {};
      \node[label=below:$+1$] (p1) at (9,0) {};
      \node[label=below:$-6$] (m6) at (10.5,0) {};
      \node[label=below:$+6$] (p6) at (12,0) {};
      \node[label=below:$-3$] (m3) at (13.5,0) {};
      \node[label=below:$+3$] (p3) at (15,0) {};
      \node[label=below:$-2$] (m2) at (16.5,0) {};
      \node[label=below:$+2$] (p2) at (18,0) {};
      \node[label=below:$-7$] (m7) at (19.5,0) {};

    \end{scope}
    \begin{scope}[>={Stealth[black]},
                  every edge/.style={draw=black}]
      \path [-] (p0) edge (m5);
      \node[draw=none, fill=none, align=center, minimum width=1cm, text width=1cm] at (0.75, -0.75) {${1}$};
      \path [-] (p5) edge (m4);
      \node[draw=none, fill=none, align=center, minimum width=1cm, text width=1cm] at (3.75, -0.75) {$2$};
      \path [-] (p4) edge (m1);
      \node[draw=none, fill=none, align=center, minimum width=1cm, text width=1cm] at (6.75, -0.75) {${3}$};
      \path [-] (p1) edge (m6);
      \node[draw=none, fill=none, align=center, minimum width=1cm, text width=1cm] at (9.75, -0.75) {$4$};
      \path [-] (p6) edge (m3);
      \node[draw=none, fill=none, align=center, minimum width=1cm, text width=1cm] at (12.75, -0.75) {$5$};
      \path [-] (p3) edge (m2);
      \node[draw=none, fill=none, align=center, minimum width=1cm, text width=1cm] at (15.75, -0.75) {$6$};
      \path [-] (p2) edge (m7);
      \node[draw=none, fill=none, align=center, minimum width=1cm, text width=1cm] at (18.75, -0.75) {$7$};
    \end{scope}
    \begin{scope}[>={Stealth[black]},
                  every edge/.style={draw=black}]
      \path [-] (p0) edge [bend left=70] (m1);
      \path [-] (p1) edge [bend left=70] (m2);
      \path [-] (p2) edge [bend right=60] (m3);
      \path [-] (p3) edge [bend right=60] (m4);
      \path [-] (p4) edge [bend right=70] (m5);
      \path [-] (p5) edge [bend left=70] (m6);
      \path [-] (p6) edge [bend left=70] (m7);
    \end{scope}
  \end{tikzpicture}
  }
  \caption{\label{cap2:fig_exemplo_grafo}
  {\revisao
  Grafo de ciclos $G(\pi)$ da permutação sem sinais $\pi=(5~4~1~6~3~2)$. Linhas horizontais e arcos representam arestas de origem e arestas de destino, respectivamente. O índice de uma aresta de origem é indicado por um número abaixo dessa aresta. Neste exemplo, temos três ciclos em $G(\pi)$: $C_1 = (3, 1)$, $C_2 = (6, 2, 4)$ e $C_3 = (7,5)$. O ciclo $C_2$ é ímpar e os ciclos $C_1$ e $C_3$ são pares.
  }
  }
  
\end{figure}

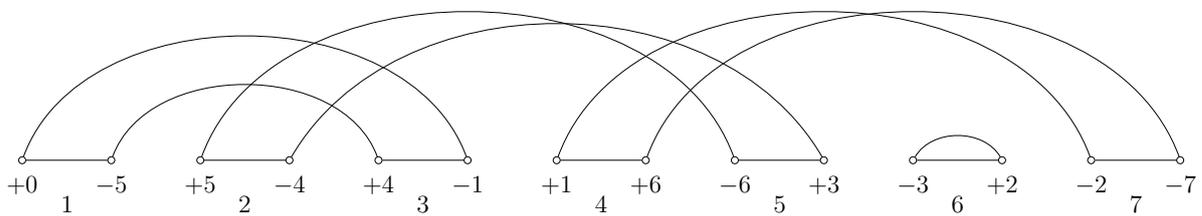
\begin{figure}
  \resizebox{\linewidth}{!}{
  \large
  \begin{tikzpicture}[scale=1.2]
    \begin{scope}[every node/.style={inner sep=0.5mm, draw, circle, minimum size = 0pt}]
      \node[label=below:$+0$] (p0) at (0,0) {};
      \node[label=below:$-5$] (m5) at (1.5,0) {};
      \node[label=below:$+5$] (p5) at (3,0) {};
      \node[label=below:$-4$] (m4) at (4.5,0) {};
      \node[label=below:$+4$] (p4) at (6,0) {};
      \node[label=below:$-1$] (m1) at (7.5,0) {};
      \node[label=below:$+1$] (p1) at (9,0) {};
      \node[label=below:$+6$] (p6) at (10.5,0) {};
      \node[label=below:$-6$] (m6) at (12,0) {};
      \node[label=below:$+3$] (p3) at (13.5,0) {};
      \node[label=below:$-3$] (m3) at (15,0) {};
      \node[label=below:$+2$] (p2) at (16.5,0) {};
      \node[label=below:$-2$] (m2) at (18,0) {};
      \node[label=below:$-7$] (m7) at (19.5,0) {};

    \end{scope}
    \begin{scope}[>={Stealth[black]},
                  every edge/.style={draw=black}]
      \path [-] (p0) edge (m5);
      \node[draw=none, fill=none, align=center, minimum width=1cm, text width=1cm] at (0.75, -0.75) {${1}$};
      \path [-] (p5) edge (m4);
      \node[draw=none, fill=none, align=center, minimum width=1cm, text width=1cm] at (3.75, -0.75) {$2$};
      \path [-] (p4) edge (m1);
      \node[draw=none, fill=none, align=center, minimum width=1cm, text width=1cm] at (6.75, -0.75) {${3}$};
      \path [-] (p1) edge (p6);
      \node[draw=none, fill=none, align=center, minimum width=1cm, text width=1cm] at (9.75, -0.75) {$4$};
      \path [-] (m6) edge (p3);
      \node[draw=none, fill=none, align=center, minimum width=1cm, text width=1cm] at (12.75, -0.75) {$5$};
      \path [-] (m3) edge (p2);
      \node[draw=none, fill=none, align=center, minimum width=1cm, text width=1cm] at (15.75, -0.75) {$6$};
      \path [-] (m2) edge (m7);
      \node[draw=none, fill=none, align=center, minimum width=1cm, text width=1cm] at (18.75, -0.75) {$7$};
    \end{scope}
    \begin{scope}[>={Stealth[black]},
                  every edge/.style={draw=black}]
      \path [-] (p0) edge [bend left=70] (m1);
      \path [-] (p1) edge [bend left=70] (m2);
      \path [-] (p2) edge [bend right=60] (m3);
      \path [-] (p3) edge [bend right=60] (m4);
      \path [-] (p4) edge [bend right=70] (m5);
      \path [-] (p5) edge [bend left=70] (m6);
      \path [-] (p6) edge [bend left=70] (m7);
    \end{scope}
  \end{tikzpicture}
  }
  \caption{Grafo de ciclos $G(\pi)$ da permutação com sinais $\pi = ({+5}~{+4}~{+1}~{-6}~{-3}~{-2})$. Neste exemplo, temos 4 ciclos em $G(\pi)$: $C_1 = (3,1)$, $C_2 = (5, 2)$, $C_3 = (6)$ e $C_4 = (7, 4)$.\label{cap2:fig_exemplo_grafo2}}
  
\end{figure}

O número de ciclos e o número de ciclos ímpares em $G(\pi)$ são denotados por $c(\pi)$ e $c_{odd}(\pi)$, respectivamente. Para um rearranjo (ou sequência de rearranjos) $\beta$, usamos $\Delta c(\pi, \beta)$ para denotar a variação no número de ciclos causado pelo rearranjo $\beta$, ou seja, $\Delta c(\pi, \beta) = c(\pi \cdot \beta) - c(\pi)$. De forma similar, temos $\Delta c_{odd}(\pi, \beta) = c_{odd}(\pi \cdot \beta) - c_{odd}(\pi)$. 

Dado um $m$-ciclo $C = (o_1, o_2, \ldots, o_m)$, classificamos $C$ como \emph{não orientado} se $o_1 > o_2 > \ldots > o_m$. Caso contrário, classificamos $C$ como \emph{orientado}. Na Figura~\ref{cap2:fig_exemplo_grafo}, o ciclo $C_2 = (6, 2, 4)$ é orientado enquanto os ciclos $C_1 = (3, 1)$ e $C_3 = (7,5)$ são não orientados. 

Dizemos que três arestas de origem com índices $o_i$, $o_j$ e $o_k$, com $i < j < k$, que pertencem ao mesmo ciclo $C$ formam uma \emph{tripla orientada} se pelo menos uma das condições seguintes é verdadeira: $o_i > o_k > o_j$; $o_j > o_i > o_k$, ou $o_k > o_j > o_i$. Bafna e Pevzner~\cite{1998-bafna-pevzner} provaram que todo ciclo orientado possui uma tripla orientada $o_i$, $o_j$ e $o_k$, com $i < j < k$, tal que $o_i > o_k > o_j$ e $k = j+1$. Além disso, eles demonstraram que uma transposição $\tau$ tem $\Delta c(\pi, \tau) = 2$ se, e somente se, essa transposição é aplicada em três arestas de origem que formam uma tripla orientada.

Dado um $m$-ciclo $C = (o_1, o_2, \ldots, o_m)$, uma aresta de origem $e_{o_i}$ é dita \emph{convergente} se ela é percorrida da direita para a esquerda. Caso contrário, $e_{o_i}$ é dita \emph{divergente}. Note que a aresta $e_{o_1}$ é sempre convergente seguindo a convenção de como o ciclo é percorrido quando listando os índices. Um par de arestas de origem $(o_i, o_j)$ é divergente se uma das arestas é convergente e a outra é divergente. Um ciclo $C$ é \emph{divergente} se pelo menos uma aresta de $C$ é divergente. Se todas arestas de $C$ são convergentes, então $C$ é \emph{convergente}.

Um grafo de ciclos $G(\pi)$ possui arestas divergentes se, e somente se, existe algum elemento com sinal ``$-$'' em $\pi$. Portanto, para permutações sem sinais, todos os ciclos de $G(\pi)$ são convergentes.

Agora, apresentamos algumas definições que serão usadas apenas em modelos que contém transposições.

Dizemos que uma transposição $\tau$ é uma $m$-transposição se $\Delta c_{odd}(\pi, \tau) = m$. Por exemplo, uma $2$-transposição é uma operação que aumenta o número de ciclos ímpares em $2$. Uma $(2,2)$-sequência é um par de $2$-transposições que podem ser aplicadas consecutivamente em $\pi$, ou seja, $(\tau, \tau')$ é uma $(2,2)$-sequência se $\Delta c_{odd}(\pi, \tau) = \Delta c_{odd}(\pi \cdot \tau, \tau') = 2$.

Dizemos que uma tripla orientada $(o_i, o_j, o_k)$, com $i < j < k$, é \emph{válida} se a transposição $\tau$ que age nessas arestas é uma $2$-transposição. Ou seja, uma transposição aplicada em uma tripla orientada válida aumenta tanto o número de ciclos quanto o número de ciclos ímpares em $2$ unidades.

\subsection{Grafo de Ciclos Rotulado}\label{cap2:sec:grafo_ciclos_strings}

Nesta seção, apresentamos uma adaptação do grafo de ciclos que torna possível o seu uso em instâncias clássicas de genomas desbalanceados. 

\begin{definition}
	{\revisaof Dada uma instância clássica $\I = (A, \iota^n)$, definimos as \emph{strings simplificadas} $\pi^A = (\pi^A_1~\ldots~\pi^A_{\nprime})$ e $\pi^\iota = (\pi^\iota_1~\ldots~\pi^\iota_{\nprime})$ como cópias de $A$ e $\iota^n$, respectivamente, mas removendo elementos que não pertencem ao conjunto $\Sigma_A \cap \Sigma_{\iota^n}$.}
\end{definition}

\begin{example}
{\revisao
Considerando a instância $\I = (A, \iota^n)$, com $A = (4~\deletionElement~2~5~\deletionElement~1)$ e $\iota^n = (1~2~3~4~5)$, temos $\pi^A = (4~2~5~1)$ e $\pi^{\iota} = (1~2~4~5)$, com $\nprime = 4$.
}
\end{example}

Para as próximas definições, consideramos uma instância clássica $\I = (A, \iota^n)$ e assumimos que as strings simplificadas $\pi^A$ e $\pi^\iota$ estão nas suas versões estendidas. Além disso, consideramos que $|\pi^A| = |\pi^\iota| = \nprime$ é a quantidade de elementos das strings simplificadas desconsiderando os elementos da versão estendida ($0$ e $n+1$), ou seja, $\pi^A = (0~\pi^A_1~\pi^A_2~\ldots~\pi^A_{\nprime}~{(n+1)})$ e $\pi^\iota = (0~\pi^\iota_1~\pi^\iota_2~\ldots~\pi^\iota_{\nprime}~{(n+1)})$.

O \emph{grafo de ciclos rotulado} é o grafo não direcionado $G(\I) = G(A, \iota^n) = (V, E_o \cup E_d, \ell)$, onde $V = \{+\pi^A_0, -\pi^A_1, +\pi^A_1, -\pi^A_2, +\pi^A_2, \ldots, -\pi^A_{\nprime}, +\pi^A_{\nprime}, -\pi^A_{\nprime+1} \}$ é o conjunto de vértices, $E_o$ é o conjunto de \emph{arestas de origem}, $E_d$ é o conjunto de \emph{arestas de destino}, e $\ell : (E_o \cup E_d) \rightarrow (\Sigma_{\iota^n} \setminus \Sigma_{A}) \cup \{\alpha, \emptyset\}$ é uma função que atribui um rótulo a cada aresta do grafo.

Arestas de origem conectam vértices que correspondem a elementos adjacentes em $\pi^A$, enquanto arestas de destino conectam vértices que correspondem a elementos adjacentes em $\pi^\iota$. O conjunto de arestas de origem é definido como $E_o = \{e_i = (+\pi^A_{i-1}, -\pi^A_{i}) : 1 \leq i \leq \nprime+1\}$, sendo que a aresta de origem $e_i = (+\pi^A_{i-1}, -\pi^A_{i})$ possui índice $i$. O rótulo $\ell(e_i) = \emptyset$ se $\pi^A_{i-1}$ e $\pi^A_{i}$ são consecutivos em $A$. Caso contrário, temos que $\ell(e_i) = \deletionElement$. O conjunto de arestas de destino é definido como $E_d = \{e'_i = (+\pi^{\iota}_{i-1}, -\pi^{\iota}_i): 1 \leq i \leq \nprime + 1\}$, sendo que a aresta de destino $e'_i = (+\pi^{\iota}_{i-1}, -\pi^{\iota}_i)$ possui índice $i$. O rótulo $\ell(e'_i) = \emptyset$ se $\pi^{\iota}_{i-1}$ e $\pi^{\iota}_{i}$ são consecutivos em $\iota^n$. Caso contrário, temos o rótulo $\ell(e'_i) = \pi^{\iota}_{i-1} + 1$.

Assim como o grafo de ciclos para permutações, no grafo de ciclos rotulado $G(\I)$ cada vértice de $G(\I)$ é extremidade de exatamente uma aresta de origem e uma aresta de destino. Portanto, existe uma decomposição única do grafo $G(\I)$ em ciclos alternantes. Todas as notações e definições relacionadas a ciclos para um grafo de ciclos (Seção~\ref{cap2:sec:grafo_ciclos_permutacoes}) se estendem para um grafo de ciclos rotulado. 

Para uma aresta de destino ou origem $e$, dizemos que essa aresta é \emph{limpa} se o seu rótulo é vazio (i.e., $\ell(e) = \emptyset$). Caso contrário, o seu rótulo é não vazio (i.e., $\ell(e) \neq \emptyset$) e dizemos que a aresta $e$ é \emph{rotulada}.

A forma como desenhamos o grafo é similar ao grafo de ciclos para permutações, sendo que representamos arestas rotuladas como linhas tracejadas e colocamos os rótulos correspondentes acima das arestas. Essa representação visual é apresentada na Figura~\ref{cap2:fig:exemplo_grafo_ciclos_rotulado}. 

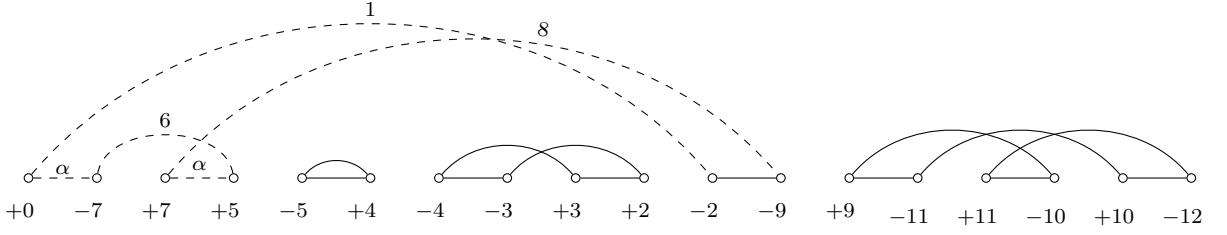
\begin{figure}[t]
\centering
\begin{tikzpicture}[scale=0.9]
\scriptsize
\begin{scope}[every node/.style={inner sep=0.4mm, draw, circle, minimum size = 0pt}]
    \node[label=below:$+0$\phantom{+}] (p0) at (3,0) {};
    \node[label=below:$-7$\phantom{+}] (m7) at (4,0) {};
    \node[label=below:$+7$\phantom{+}] (p7) at (5,0) {};
    \node[label=below:$+5$\phantom{+}] (p5) at (6,0) {};
    \node[label=below:$-5$\phantom{+}] (m5) at (7,0) {};
    \node[label=below:$+4$\phantom{+}] (p4) at (8,0) {};
    \node[label=below:$-4$\phantom{+}] (m4) at (9,0) {};
    \node[label=below:$-3$\phantom{+}] (m3) at (10,0) {};
    \node[label=below:$+3$\phantom{+}] (p3) at (11,0) {};
    \node[label=below:$+2$\phantom{+}] (p2) at (12,0) {};
    \node[label=below:$-2$\phantom{+}] (m2) at (13,0) {};
    \node[label=below:$-9$\phantom{+}] (m9) at (14,0) {};
    \node[label=below:$+9$\phantom{+}] (p9) at (15,0) {};
    \node[label=below:$-11$\phantom{+}] (m11) at (16,0) {};
    \node[label=below:$+11$\phantom{+}] (p11) at (17,0) {};
    \node[label=below:$-10$\phantom{+}] (m10) at (18,0) {};
    \node[label=below:$+10$\phantom{+}] (p10) at (19,0) {};
    \node[label=below:$-12$\phantom{+}] (m12) at (20,0) {};
\end{scope}

\begin{scope}[>={Stealth[black]},
              every edge/.style={draw=black}]
    \path [-] (m5) edge node [black, pos=0.5, sloped, above] {} (p4);
    \path [-] (m4) edge node [black, pos=0.5, sloped, above] {} (m3);
    \path [-] (p3) edge node [black, pos=0.5, sloped, above] {} (p2);
    \path [-] (m2) edge node [black, pos=0.5, sloped, above] {} (m9);

    \path [-] (p9) edge node [black, pos=0.5, sloped, above] {} (m11);
    \path [-] (p11) edge node [black, pos=0.5, sloped, above] {} (m10);
    \path [-] (p10) edge node [black, pos=0.5, sloped, above] {} (m12);
\end{scope}

\begin{scope}[>={Stealth[black]},
              dashed]
    \path [-] (p0) edge node [black, pos=0.5, sloped, above, yshift=-0.05cm] {$\deletionElement$} (m7);
    \path [-] (p7) edge node [black, pos=0.5, sloped, above] {$\deletionElement$} (p5);
\end{scope}

\begin{scope}[>={Stealth[black]},
              every edge/.style={draw=black}]
    \path [-] (p2) edge  [bend right=50] (m3);
    \path [-] (p3) edge  [bend right=50] (m4);
    \path [-] (p4) edge  [bend right=50] (m5);

    \path [-] (p9) edge  [bend left=50] (m10);
    \path [-] (p11) edge  [bend left=50] (m12);
    \path [-] (m11) edge  [bend left=50] (p10);
\end{scope}

\begin{scope}[>={Stealth[black]},
              dashed]
    \path [-] (p0) edge  [bend left=50] node [black, pos=0.5, sloped, above] {$1$} (m2);
    \path [-] (p5) edge  [bend right=80] node [black, pos=0.5, sloped, above] {$6$} (m7);
    \path [-] (p7) edge  [bend left=50] node [black, pos=0.6, sloped, above] {$8$} (m9);
\end{scope}

\end{tikzpicture}

\caption{\label{cap2:fig:exemplo_grafo_ciclos_rotulado}
Grafo de ciclos rotulado $G(\I) = G(A, \iota^n)$ para as strings $\iota^n$, com $n = 11$, e $A = (\deletionElement~{+7}~{\deletionElement}~{-5}~{-4}~{+3}~{-2}~{+9}~{+11}~{+10})$. Existem quatro ciclos nesse grafo. O ciclo $C_1 = (6, 1, 2)$ é um ciclo rotulado divergente. Todos os outros ciclos são ciclos limpos. O ciclo $C_2 = (3)$ é um ciclo unitário, o ciclo $C_3 = (5, 4)$ é um ciclo divergente, e o ciclo $C_4 = (9, 7, 8)$ é um ciclo orientado.
}
\end{figure}

{\revisaof
Um ciclo $C$ é \emph{limpo} se todas as suas arestas de origem são limpas, caso contrário, dizemos que $C$ é \emph{rotulado}. 

\begin{definition}
	O número de ciclos limpos em $G(\I)$ é denotado por $\cclean(A, \iota^n)$ ou $\cclean(\I)$. 
\end{definition}

\begin{definition}
	O número de ciclos rotulados em $G(\I)$ é denotado por $\crotulado(A, \iota^n)$ ou $\crotulado(\I)$. 
\end{definition}
}

{\revisaof Note que o grafo $G(A, \iota^n)$ possui $n+1$ ciclos unitários limpos se, e somente se, $A = \iota^n$. Como o número de arestas de origem pode ser alterado por inserções, precisamos de uma nova definição para $\Delta c(\I, \beta)$, que denota a variação na quantidade de ciclos relativo à quantidade de arestas de origem.}

\begin{definition}
	Dada uma operação (ou uma sequência de operações) $\beta$, definimos $\Delta c(\I, \beta) = \Delta c(A, \iota^n, \beta) = (|\pi^A| + 1 - c(A, \iota^n)) - (|\pi^A \comp \beta| + 1 - c(A \comp \beta, \iota^n))$.
\end{definition}

\begin{definition}
	Dada uma operação (ou uma sequência de operações) $\beta$, definimos $\Delta \cclean(\I, \beta) = \Delta \cclean(A, \iota^n, \beta) = (|\pi^A| + 1 - \cclean(A, \iota^n)) - (|\pi^A \comp \beta| + 1 - \cclean(A \comp \beta, \iota^n))$.
\end{definition}

\subsection{Grafo de Ciclos Rotulado e Ponderado}\label{cap2:sec:grafo_ciclos_intergenico}

\begin{figure*}[t]
\centering
\resizebox{\textwidth}{!}{
\begin{tikzpicture}[scale=1]
\scriptsize
\begin{scope}[every node/.style={inner sep=0.4mm, draw, circle, minimum size = 0pt}]
    \node[label=below:$+0$\phantom{+}] (p0) at (0,0) {};
    \node[label=below:$-4$\phantom{+}] (m4) at (1.5,0) {};
    \node[label=below:$+4$\phantom{+}] (p4) at (3,0) {};
    \node[label=below:$-3$\phantom{+}] (m3) at (4.5,0) {};
    \node[label=below:$+3$\phantom{+}] (p3) at (6,0) {};
    \node[label=below:$+1$\phantom{+}] (p1) at (7.5,0) {};
    \node[label=below:$-1$\phantom{+}] (m1) at (9,0) {};
    \node[label=below:$-2$\phantom{+}] (m2) at (10.5,0) {};
    \node[label=below:$+2$\phantom{+}] (p2) at (12,0) {};
    \node[label=below:$-6$\phantom{+}] (m6) at (13.5,0) {};
    \node[label=below:$+6$\phantom{+}] (p6) at (15,0) {};
    \node[label=below:$-8$\phantom{+}] (m8) at (16.5,0) {};
    \node[label=below:$+8$\phantom{+}] (p8) at (18,0) {};
    \node[label=below:$-7$\phantom{+}] (m7) at (19.5,0) {};
    \node[label=below:$+7$\phantom{+}] (p7) at (21,0) {};
    \node[label=below:$-9$\phantom{+}] (m9) at (22.5,0) {};
\end{scope}

\begin{scope}[>={Stealth[black]},
              every edge/.style={draw=black}]
    \path [-] (p0) edge node [black, pos=0.5, sloped, below, yshift=-0.00cm] {$w = 0$} (m4);
    \path [-] (p4) edge node [black, pos=0.5, sloped, below, yshift=-0.00cm] {$w = 3$} (m3);
    \path [-] (m1) edge node [black, pos=0.5, sloped, below, yshift=-0.00cm] {$w = 10$} (m2);
    
    \path [-] (p6) edge node [black, pos=0.5, sloped, below, yshift=-0.00cm] {$w = 10$} (m8);
    \path [-] (p8) edge node [black, pos=0.5, sloped, below, yshift=-0.00cm] {$w = 10$} (m7);
    \path [-] (p7) edge node [black, pos=0.5, sloped, below, yshift=-0.00cm] {$w = 10$} (m9);
\end{scope}

\begin{scope}[>={Stealth[black]},
              every edge/.style={draw=black}]
    \path [-] (p3) edge [dashed] node [black, pos=0.5, sloped, below, yshift=-0.0cm] {$w = 5$} node [black, pos=0.5, sloped, above, yshift=+0.05cm] {$\ell = \deletionElement$} (p1);
    \path [-] (p2) edge [dashed] node [black, pos=0.5, sloped, below, yshift=-0.0cm] {$w = 8$} node [black, pos=0.5, sloped, above, yshift=+0.05cm] {$\ell = \deletionElement$} (m6);
\end{scope}

\begin{scope}[>={Stealth[black]},
              every edge/.style={draw=black}]
    \path [-] (p0) edge [bend left=50] node [black, pos=0.5, sloped, below, yshift=-0.05cm] {$w=5$} (m1);
    \path [-] (p1) edge [bend left=60] node [black, pos=0.5, sloped, below, yshift=-0.05cm] {$w=2$}  (m2);
    \path [-] (m3) edge [bend left=50] node [black, pos=0.5, sloped, below, yshift=-0.05cm] {$w=7$}  (p2);
    \path [-] (m4) edge [bend left=50] node [black, pos=0.5, sloped, below, yshift=-0.05cm] {$w=1$}  (p3);
    
    \path [-] (p6) edge [bend left=50] node [black, pos=0.3, sloped, below, yshift=-0.05cm] {$w=15$}  (m7);
    \path [-] (p7) edge [bend right=70] node [black, pos=0.5, sloped, below, yshift=-0.05cm] {$w=5$}  (m8);
    \path [-] (p8) edge [bend left=50] node [black, pos=0.7, sloped, below, yshift=-0.05cm] {$w=10$}  (m9);
\end{scope}

\begin{scope}[>={Stealth[black]},
              every edge/.style={draw=black}]
    \path [-] (p4) edge [dashed, bend left=50] node [black, pos=0.5, sloped, below, yshift=-0.05cm] {$w=5$} node [black, pos=0.5, sloped, above] {$\ell = 5$} (m6);
\end{scope}

\end{tikzpicture}
}
\caption{\label{cap2:fig:labeled_weighted_cycle_graph}
Grafo de ciclos rotulado e ponderado para $\Ig = (\G_o, \G_d)$, com $\G_o = (A, \breve{A})$ e $\G_d = (\iota^n, \breve{\iota}^n)$, onde $n = 8$, $A = ({+0}~{+4}~{+3}~\deletionElement~{-1}~{+2}~\deletionElement~{+6}~{+8}~{+7}~{+9})$, $\breve{A} = (0,3,2,3,10,2,6,10,10,10)$, $\iota^n = ({+0}~{+1}~{+2}~{+3}~{+4}~{+5}~{+6}~{+7}~{+8}~{+9})$ e $\breve{\iota}^n = (5,2,7,1,2,3,15,5,10)$. Neste exemplo temos os ciclos $C_1 = (4,1,3)$, $C_2 = (5,2)$, $C_3 = (8,6,7)$. O ciclo $C_1$ é um ciclo divergente, rotulado e desbalanceado (a soma dos custos das arestas de origem é igual a $15$, enquanto a soma dos custos das arestas de destino é igual a $8$). O ciclo $C_2$ é um ciclo convergente, rotulado e desbalanceado. O ciclo $C_3$ é um ciclo convergente, balanceado e limpo.}
\end{figure*}

Nesta seção, apresentamos uma adaptação do grafo de ciclos que torna possível o seu uso em instâncias intergênicas de genomas desbalanceados. 

Para as próximas definições, consideramos uma instância intergênica $\Ig = (\G_o, \G_d)$, onde $\G_o = (A, \breve{A})$ e $\G_d = (\iota^n, \breve{\iota}^n)$. Além disso, assumimos que as strings simplificadas $\pi^A$ e $\pi^\iota$ estão nas suas versões estendidas. Consideramos que $|\pi^A| = |\pi^\iota| = \nprime$ é a quantidade de elementos das strings simplificadas desconsiderando os elementos da versão estendida ($0$ e $n+1$), ou seja, $\pi^A = (0~\pi^A_1~\pi^A_2~\ldots~\pi^A_{\nprime}~{(n+1)})$ e $\pi^\iota = (0~\pi^\iota_1~\pi^\iota_2~\ldots~\pi^\iota_{\nprime}~{(n+1)})$.

O \emph{grafo de ciclos rotulado e ponderado} é o grafo não direcionado $G(\Ig) = G(\G_o, \G_d) = (V, E_o \cup E_d, \ell, w)$, onde $V = \{+\pi^A_0, -\pi^A_1, +\pi^A_1, -\pi^A_2, +\pi^A_2, \ldots, -\pi^A_{\nprime}, +\pi^A_{\nprime}, -\pi^A_{\nprime+1} \}$ é o conjunto de vértices, $E_o$ é o conjunto de \emph{arestas de origem}, $E_d$ é o conjunto de \emph{arestas de destino}, $\ell : (E_o \cup E_d) \rightarrow (\Sigma_{\iota^n} \setminus \Sigma_{A}) \cup \{\alpha, \emptyset\}$ é uma função que atribui um rótulo a cada aresta do grafo, e $w : (E_o \cup E_d) \rightarrow \mathbb{Z}^*$ é uma função de custo que relaciona arestas a tamanhos de regiões intergênicas.

O conjunto de arestas de origem $E_o$, o conjunto de arestas de destino $E_d$ e a função $\ell$ são definidos de forma análoga ao grafo de ciclos rotulado $G(A, \iota^n)$. {\revisaof Além disso, todas as definições usadas para o grafo de ciclos rotulado $G(A, \iota^n)$ são válidas para o grafo de ciclos rotulado e ponderado $G(\Ig)$, exceto pelas definições de ciclos limpos e rotulados.}

{\revisaof
\begin{definition}
Um ciclo $C$ é \emph{limpo} se todas as suas arestas de origem e todas as suas arestas de destino são limpas, caso contrário, dizemos que $C$ é \emph{rotulado}. 
\end{definition}

\begin{definition}
	O número de ciclos limpos em $G(\Ig)$ é denotado por $\cclean(\Ig)$ e o número de ciclos rotulados em $G(\Ig)$ é denotado por $\crotulado(\Ig)$.
\end{definition}
}

A seguir, definimos a função de custo $w$ e as definições adicionais relacionadas ao custo das arestas do grafo. Para $1 \leq i \leq \nprime+1$, uma aresta de origem $e_i = (+\pi^A_{i-1}, -\pi^A_i)$ possui custo $w(e_i) = \sum_{k = {i'} + 1}^{j'} \breve{A}_k$, onde $A_{i'} = \pi^A_{i-1}$ e $A_{j'} = \pi^A_i$, ou seja, o custo de $e_i$ é igual à soma dos tamanhos das regiões intergênicas entre os elementos $\pi^A_{i-1}$ e $\pi^A_i$ em $\G_o$. Para $1 \leq i \leq \nprime + 1$, a aresta de destino $e'_i = (+\pi^{\iota}_{i-1}, -\pi^{\iota}_i)$ possui custo $w(e'_i) = \sum_{k = {i'} + 1}^{j'} \breve{\iota}^n_k$, onde $i' = \pi^{\iota}_{i-1}$ e $j' = \pi^{\iota}_i$, ou seja, o custo de $e'_i$ é igual à soma dos tamanhos das regiões intergênicas entre os elementos $\pi^{\iota}_{i-1}$ e $\pi^{\iota}_i$ em $\G_d$.

\begin{definition}
Um ciclo $C = (o_1, d_1, o_2, d_2, \ldots, o_m, d_m)$ é \emph{balanceado} se a soma dos custos das arestas de destino é igual à soma dos custos das arestas de origem, ou seja, $\sum_{i=1}^{m} w(e'_{d_i}) = \sum_{i=1}^{m} w(e_{o_i})$. Caso contrário, o ciclo $C$ é \emph{desbalanceado}.
\end{definition}

Um ciclo desbalanceado é classificado em \emph{positivo} ou \emph{negativo}. Um ciclo positivo $C = (o_1, d_1, o_2, d_2, \ldots, o_m, d_m)$ é um ciclo tal que a soma dos custos das arestas de destino é maior que a soma dos custos das arestas de origem, ou seja, $\sum_{i=1}^{m} w(e'_{d_i}) > \sum_{i=1}^{m} w(e_{o_i})$. Um ciclo negativo $C = (o_1, d_1, o_2, d_2, \ldots, o_m, d_m)$ é um ciclo tal que a soma dos custos das arestas de destino é menor que a soma dos custos das arestas de origem, ou seja, $\sum_{i=1}^{m} w(e'_{d_i}) < \sum_{i=1}^{m} w(e_{o_i})$. A Figura~\ref{cap2:fig:labeled_weighted_cycle_graph} mostra um exemplo de um grafo de ciclos rotulado e ponderado. 

\begin{definition}
	Dizemos que um ciclo $C$ é \emph{bom} se ele é balanceado e limpo. 
	Caso contrário, $C$ é desbalanceado ou rotulado, e dizemos que $C$ é um ciclo \emph{ruim}. 
	O número de ciclos bons em $G(\Ig)$ é denotado por $\cgood(\G_o, \G_d)$ ou $\cgood(\Ig)$.
\end{definition}

Note que o grafo $G(\Ig)$ possui apenas ciclos unitários bons se, e somente se, $\G_o = \G_d$. Em outra palavras, $\G_o = \G_d$ se, e somente se, $|\pi^A| + 1 - \cgood(\G_o, \G_d) = 0$.

\begin{definition}
	Dada uma operação (ou uma sequência de operações) $\beta$, a variação na quantidade de arestas de origem menos a quantidade de ciclos bons do grafo é denotada por $\Delta \cgood(\Ig, \beta) = \Delta \cgood(\G_o, \G_d, \beta) = (|\pi^A| + 1 - \cgood(\G_o, \G_d)) - (|\pi^{A'}| + 1 - \cgood(\G'_o, \G_d))$, onde $\G_o \comp \beta = \G'_o = (A', \breve{A}')$.
\end{definition}

\chapter{Ordenação de Permutações por Transposições e Outros Rearranjos}\label{cap:label:transposicao}

Muitos algoritmos de aproximação para problemas de distância de rearranjos foram desenvolvidos mesmo sendo que as complexidades desses problemas ainda estavam em aberto~\cite{2009-fertin-etal}. 

Em 1995, Kececioglu e Sankoff~\cite{1995-kececioglu-sankoff} apresentaram uma $2$-aproximação para o problema da Ordenação de Permutações sem Sinais por Reversões. Esses autores também apresentaram algoritmos exatos, porém esses algoritmos não possuem complexidade polinomial. Apenas em 1999, Caprara~\cite{1999a-caprara} provou que esse problema é NP-difícil. Até o momento, o melhor algoritmo de aproximação conhecido para esse problema possui fator de aproximação de $1.375$~\cite{2002-berman-etal}.

Em 1996, Bafna e Pevzner~\cite{1996-bafna-pevzner} também estudaram reversões, mas consideraram que a orientação dos genes era conhecida. Os autores apresentaram uma $1.5$-aproximação para a Ordenação de Permutações com Sinais por Reversões. Em 1999, Hannenhalli e Pevzner~\cite{1999-hannenhalli-pevzner} apresentaram um algoritmo polinomial exato para esse problema, sendo esse um dos trabalhos mais importantes na área de genômica comparativa. 

Em 1995, Bafna e Pevzner~\cite{1995b-bafna-pevzner} apresentaram uma $1.5$-aproximação para a Ordenação de Permutações (sem Sinais) por Transposições. Após isso, em 2006, Elias e Hartman~\cite{2006-elias-hartman} conseguiram melhorar o fator de aproximação, apresentando um algoritmo com fator de aproximação igual a $1.375$. Apenas em 2012, Bulteau e coautores~\cite{2012-bulteau-etal} apresentaram uma prova de que o problema é NP-difícil. 

A $1.375$-aproximação de Elias e Hartman~\cite{2006-elias-hartman} tem complexidade de tempo quadrática e depende de um processo que transforma a permutação de entrada em uma permutação simples. Após essa transformação, o próximo passo do algoritmo é aplicar uma $(2,2)$-sequência na nova permutação para garantir o fator de aproximação nos passos seguintes. No entanto, um estudo recente~\cite{2022-silva-etal} mostrou que essa busca deve ser feita antes do processo de simplificação da permutação, pois existem casos em que há uma $(2,2)$-sequência para a permutação de entrada $\pi$, mas não há uma $(2,2)$-sequência para a permutação simplificada gerada a partir de $\pi$ pelo algoritmo de Elias e Hartman~\cite{2006-elias-hartman}. Portanto, em alguns casos, o algoritmo de Elias e Hartman~\cite{2006-elias-hartman} falha em garantir a $1.375$-aproximação.

Silva e coautores~\cite{2022-silva-etal} apresentaram um novo algoritmo para a Ordenação de Permutações por Transposições que garante o fator de aproximação de $1.375$ em todos os casos usando uma abordagem algébrica. Esse algoritmo usa uma busca exaustiva para achar a $(2,2)$-sequência inicial e possui complexidade de tempo de $O(n^6)$. O algoritmo de Elias e Hartman~\cite{2006-elias-hartman} é quadrático porque faz a busca da $(2,2)$-sequência inicial, caso alguma exista, na permutação simplificada, que possui características que tornam esse procedimento mais simples. No entanto, para uma permutação qualquer, o melhor resultado conhecido até então para achar uma $(2,2)$-sequência, caso alguma exista, tem complexidade de tempo de $O(n^6)$~\cite{2022-silva-etal}. 

Na Seção~\ref{cap3:secao_novo_algoritmo} deste capítulo, apresentamos um algoritmo com complexidade de $O(n^5)$ para achar uma $(2,2)$-sequência, caso alguma exista, em uma permutação qualquer. Esse procedimento é essencial para obter um fator de aproximação igual a $1.375$ para a Ordenação de Permutações por Transposições. A partir dessa melhoria, nós propomos uma nova versão do algoritmo de $1.375$-aproximação com complexidade de tempo de $O(n^5)$.

Outro problema muito estudado na literatura é o da Ordenação de Permutações (com ou sem Sinais) por Reversões e Transposições. Em 1998, Walter e coautores~\cite{1998-walter-etal} apresentaram uma $2$-aproximação para a versão com sinais do problema. Após isso, Rahman e coautores~\cite{2008-rahman-etal} mostraram outro algoritmo que possui fator de aproximação igual a $2$, para permutações com sinais, e mostraram uma $2k$-aproximação, para permutações sem sinais, onde $k$ é o fator de aproximação para o problema da Decomposição Máxima de Ciclos Alternantes em um grafo de {\it breakpoints}. O melhor fator de aproximação conhecido para o problema da Decomposição Máxima de Ciclos Alternantes é de $17/12 + \epsilon$~\cite{2013-chen}, para qualquer $\epsilon$ positivo.

Em 2019, Oliveira e coautores~\cite{2019b-oliveira-etal} provaram que a Ordenação de Permutações com ou sem Sinais por Reversões e Transposições é NP-difícil. Além da abordagem não ponderada, Oliveira e coautores~\cite{2019b-oliveira-etal} também estudaram a versão ponderada do problema onde uma reversão tem custo $w_{\rho}$ e uma transposição tem custo $w_{\tau}$. Os autores provaram que a Ordenação de Permutações (com ou sem Sinais) por Reversões e Transposições Ponderadas é NP-difícil para quaisquer valores de $w_{\rho}$ e $w_{\tau}$ tal que $w_{\tau}/w_{\rho} \leq 1.5$.

Além dos modelos mais estudados de reversões, transposições e a combinação de ambas operações, também foram propostos algoritmos de aproximação e exatos para modelos que possuem transposições inversas e revrevs, apesar da complexidade do problema de distância de rearranjos com esses modelos ser desconhecida~\cite{2009-fertin-etal, hartmann2017genome}. Gu e coautores~\cite{1999-gu-etal} criaram uma $2$-aproximação para o problema da Ordenação de Permutações com Sinais por Reversões, Transposições e Transposições Inversas. Lou e Zhu~\cite{2008-lou-zhu} desenvolveram um algoritmo com fator de aproximação de $2.25$ para o mesmo modelo de rearranjos, mas considerando permutações sem sinais.

Lin e Xue~\cite{2001-lin-xue} estudaram o problema da Ordenação de Permutações com Sinais por Reversões, Transposições, Transposições Inversas e Revrevs e desenvolveram uma $1.75$-aproximação para esse problema. Já para o problema da Ordenação de Permutações com Sinais por Transposições, Transposições Inversas e Revrevs, o melhor resultado é um algoritmo com fator de aproximação de $1.5$~\cite{2005-hartman-sharan}. 

Para a abordagem ponderada, neste trabalho, consideramos uma função de custo baseada no tipo de rearranjo. O custo de uma reversão é indicado por $w_{\rho}$, {\revisaof enquanto os custos de uma transposição, transposição inversa ou revrev são indicados simplesmente por $w_{\tau}$}. Utilizamos o mesmo custo $w_{\tau}$ para essas últimas três operações devido ao fato de que todas elas afetam três adjacências do genoma, enquanto as reversões afetam apenas duas adjacências. 

As reversões são os rearranjos mais observados em muitos cenários evolutivos~\cite{1996-blanchette-etal}. Por esse motivo, o peso de uma reversão ($w_{\rho}$) tende a ser menor do que o peso de uma transposição ou rearranjos similares ($w_{\tau}$). Bader e coautores~\cite{2008-bader-etal} estudaram o problema da Ordenação de Permutações com Sinais por Reversões, Transposições e Transposições Inversas Ponderadas e apresentaram uma $1.5$-aproximação para valores de $w_{\rho}$ e $w_{\tau}$ tal que $1 \leq w_{\tau}/w_{\rho} \leq 2$. Eriksen~\cite{2002-eriksen} já havia considerado o mesmo problema e apresentado uma $7/6$-aproximação, mas esse fator de aproximação só é garantido quando $w_{\tau}/w_{\rho} = 2$.

Na Seção~\ref{cap3:secao_complexidade} deste capítulo, considerando que $w_{\tau}/w_{\rho} \leq 1.5$, provamos que os problemas de Ordenação de Permutações com ou sem Sinais por Rearranjos são NP-difíceis para modelos que possuem transposições juntamente com um ou mais dos seguintes rearranjos: reversões, transposições inversas e revrevs.

\section{Uma $1.375$-Aproximação Mais Eficiente para Transposições}\label{cap3:secao_novo_algoritmo}

{\revisaof 
Nesta seção, apresentamos um novo algoritmo de $1.375$-aproximação para a Ordenação de Permutações por Transposições com complexidade de tempo de $O(n^5)$. 

A $1.375$-aproximação de Elias e Hartman~\cite{2006-elias-hartman} tem complexidade de tempo quadrática, mas possui um erro em um dos passos do algoritmo, a busca de uma $(2,2)$-sequência na permutação inicial, que faz com que o algoritmo não garanta o fator de aproximação de $1.375$ para todas as instâncias~\cite{2022-silva-etal}. Silva e coautores~\cite{2022-silva-etal} mostraram um novo algoritmo de $1.375$-aproximação que corrige esse problema e possui complexidade de $O(n^6)$. O nosso algoritmo é uma modificação do algoritmo de Elias e Hartman~\cite{2006-elias-hartman} que garante uma melhoria na complexidade de tempo em relação ao algoritmo proposto por Silva e coautores~\cite{2022-silva-etal}. Essa melhoria é alcançada pelo uso de um novo procedimento para a busca de $2$-transposições em uma permutação qualquer. Note que um algoritmo de busca exaustiva para a busca de uma $2$-transposição, caso exista, tem complexidade de tempo de $O(n^3)$, já que precisamos testar todas as combinações de triplas $(i,j,k)$, com $1 \leq i < j < k \leq n+1$. Nesta seção, apresentamos um algoritmo para a busca de uma $2$-transposição, caso exista, com complexidade de tempo de $O(n^2)$.}

A seguir, listamos resultados conhecidos na literatura sobre a variação no número de ciclos ímpares causada por uma transposição e, também, sobre um limitante para a distância $d_{\tau}(\pi)$ relacionado com a quantidade de ciclos ímpares em $G(\pi)$.

\begin{lemma}[Bafna e Pevzner~\cite{1998-bafna-pevzner}, Lema 2.3]
  Para qualquer permutação $\pi$ e transposição $\tau$, temos que $\Delta c_{odd}(\pi, \tau) = \{-2,0,2\}$.
\end{lemma}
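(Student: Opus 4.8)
O plano é provar duas afirmações independentes — que $\Delta c_{odd}(\pi, \tau)$ é par e que $|\Delta c_{odd}(\pi, \tau)| \leq 3$ — e combiná-las.

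Primeiro, estabeleceria um invariante de paridade do grafo de ciclos. Como todo ciclo de $G(\pi)$ alterna arestas de origem e arestas de destino, um $m$-ciclo contém exatamente $m$ arestas de origem; além disso, o número total de arestas de origem em $G(\pi)$ é $n+1$, pois $E_o = \{(-\pi_i, +\pi_{i-1}) : 1 \leq i \leq n+1\}$. Logo, a soma dos tamanhos de todos os ciclos de $G(\pi)$ é igual a $n+1$ e, reduzindo módulo $2$, o número de ciclos ímpares satisfaz $c_{odd}(\pi) \equiv n+1 \pmod 2$. Como uma transposição não altera o tamanho $n$ da permutação, temos $c_{odd}(\pi \comp \tau) \equiv n+1 \equiv c_{odd}(\pi) \pmod 2$, ou seja, $\Delta c_{odd}(\pi, \tau)$ é par.

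Segundo, limitaria $|\Delta c_{odd}(\pi, \tau)|$ analisando qual parte do grafo a transposição $\tau = \tau(i,j,k)$ modifica. Pela definição de $\tau$ e de $G(\pi)$, a operação remove exatamente as três arestas de origem $e_i, e_j, e_k$ e adiciona três novas arestas de origem incidentes aos mesmos seis vértices, mantendo inalteradas todas as arestas de destino e todas as demais arestas de origem. Portanto, todo ciclo de $G(\pi)$ que não usa nenhuma de $e_i, e_j, e_k$ também é ciclo de $G(\pi \comp \tau)$, e vice-versa; as arestas $e_i, e_j, e_k$ pertencem a no máximo três ciclos de $G(\pi)$, e as três novas arestas de origem pertencem a no máximo três ciclos de $G(\pi \comp \tau)$. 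Escrevendo $c_{odd}(\pi)$ como a soma entre o número de ciclos ímpares que evitam $\{e_i, e_j, e_k\}$ e o número de ciclos ímpares (no máximo três) que contêm alguma dessas arestas, e analogamente para $\pi \comp \tau$, as primeiras parcelas coincidem e cada segunda parcela pertence a $\{0, 1, 2, 3\}$; logo $|\Delta c_{odd}(\pi, \tau)| \leq 3$.

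Combinando os dois fatos, $\Delta c_{odd}(\pi, \tau)$ é um inteiro par de valor absoluto no máximo $3$, portanto $\Delta c_{odd}(\pi, \tau) \in \{-2, 0, 2\}$. O passo que exige mais cuidado — mais uma questão de contabilidade do que uma dificuldade real — é justificar que uma transposição altera o grafo de ciclos apenas localmente: que exatamente três arestas de origem mudam, que as novas arestas ligam os mesmos seis vértices e que nenhuma aresta de destino é afetada. Fixado isso a partir das definições, os dois limitantes são imediatos; essa abordagem segue a linha do argumento original de Bafna e Pevzner~\cite{1998-bafna-pevzner}.
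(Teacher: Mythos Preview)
The paper does not give its own proof of this lemma; it is quoted verbatim as a known result from Bafna and Pevzner~\cite{1998-bafna-pevzner}, so there is no in-paper argument to compare against. Your proposal is correct and is essentially the classical Bafna--Pevzner argument: the parity invariant $c_{odd}(\pi)\equiv n+1\pmod 2$ forces $\Delta c_{odd}$ to be even, and the locality of a transposition (exactly three black edges replaced, gray edges untouched) bounds $|\Delta c_{odd}|\le 3$, whence $\Delta c_{odd}\in\{-2,0,2\}$.
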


\begin{lemma}[Bafna e Pevzner~\cite{1998-bafna-pevzner}, Teorema 2.4]
  Para qualquer permutação $\pi$, temos que $d_{\tau}(\pi) \geq \frac{n+1-c_{odd}(\pi)}{2}$.
\end{lemma}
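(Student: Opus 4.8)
The plan is to combine the preceding lemma (no transposition increases the number of odd cycles by more than $2$) with the observation that the sorted permutation $\iota^n$ is precisely the configuration with the maximum possible number of odd cycles, and then conclude by a telescoping argument over an optimal sorting sequence.

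First I would establish the boundary value $c_{odd}(\iota^n) = n+1$. Taking $\iota^n$ in its extended form, every source edge $e_i$ coincides with the destination edge $e'_i$ for $1 \le i \le n+1$, so the unique decomposition of $G(\iota^n)$ into alternating cycles consists of exactly $n+1$ trivial ($1$-)cycles, each of which is odd; hence $c_{odd}(\iota^n) = n+1$. This is also the only permutation without breakpoints of transpositions, which is consistent with the bound being tight precisely there.

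Next, let $S = (\tau_1, \tau_2, \ldots, \tau_t)$ be an optimal sequence of transpositions sorting $\pi$, so that $t = d_{\tau}(\pi)$ and $\pi \comp S = \iota^n$. Write $\pi^{(0)} = \pi$ and $\pi^{(i)} = \pi^{(i-1)} \comp \tau_i$, so $\pi^{(t)} = \iota^n$. Telescoping the per-step variations gives
\[
c_{odd}(\iota^n) - c_{odd}(\pi) = \sum_{i=1}^{t} \bigl( c_{odd}(\pi^{(i)}) - c_{odd}(\pi^{(i-1)}) \bigr) = \sum_{i=1}^{t} \Delta c_{odd}(\pi^{(i-1)}, \tau_i).
\]
By the preceding lemma each term satisfies $\Delta c_{odd}(\pi^{(i-1)}, \tau_i) \le 2$, so the right-hand side is at most $2t$. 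Substituting $c_{odd}(\iota^n) = n+1$ and rearranging yields $t \ge \frac{n+1 - c_{odd}(\pi)}{2}$, which is the claimed bound since $t = d_{\tau}(\pi)$.

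I do not expect a genuine obstacle: this is an amortization/counting argument, and the combinatorial heart — that a single transposition cannot create more than two new odd cycles — has already been isolated in the previous lemma. The only points demanding care are the boundary computation $c_{odd}(\iota^n) = n+1$ (handling the extended endpoints so that $G(\iota^n)$ really is $n+1$ disjoint trivial cycles) and the bookkeeping of the telescoping sum.
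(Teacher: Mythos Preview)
Your argument is correct and is exactly the standard proof of this bound: $c_{odd}(\iota^n)=n+1$, each transposition changes $c_{odd}$ by at most $2$ (the preceding lemma), and a telescoping sum over an optimal sequence yields the inequality. The paper does not actually prove this lemma; it merely cites Bafna and Pevzner's Theorem~2.4, whose original proof is precisely the one you outline.
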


Agora, apresentamos propriedades sobre as $2$-transposições em ciclos pares e ímpares. O próximo lema caracteriza os tipos de ciclos afetados por uma $2$-transposição.

\begin{lemma}[Christie~\cite{1998b-christie}, Lemas 3.2.5 e 3.3.1]\label{cap3:lemma:types_of_2_transpositions}
  Se existe uma $2$-transposição aplicada nas arestas de origem $o_i, o_j,$ e $o_k$, então essas arestas pertencem a dois ciclos pares ou todas as três arestas pertencem ao mesmo ciclo orientado.
\end{lemma}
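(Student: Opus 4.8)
O plano é fazer uma análise por casos segundo a distribuição das três arestas de origem $o_i$, $o_j$ e $o_k$ entre os ciclos de $G(\pi)$. Antes disso, eu registraria os fatos estruturais já conhecidos sobre o efeito de uma transposição $\tau$ sobre as três arestas de origem em que ela age (decorrentes da análise local do religamento dessas arestas, devida a Bafna e Pevzner e a Christie): $\Delta c(\pi, \tau) \in \{-2, 0, 2\}$; se as três arestas estão em três ciclos distintos, elas são fundidas em um único ciclo, isto é, $\Delta c(\pi,\tau) = -2$; se estão em dois ciclos, estes são reorganizados em dois ciclos, isto é, $\Delta c(\pi,\tau) = 0$; e se estão em um único ciclo, então $\Delta c(\pi,\tau) = 2$ (o ciclo parte-se em três) exatamente quando a tripla é orientada, caso contrário $\Delta c(\pi,\tau) = 0$ e o ciclo permanece único. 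Usaria ainda o fato elementar de que um ciclo não orientado $C = (o_1, \ldots, o_m)$ não possui tripla orientada, pois $o_1 > o_2 > \cdots > o_m$ dá $o_i > o_j > o_k$ para quaisquer $i < j < k$, e essa ordem não satisfaz nenhuma das três condições da definição de tripla orientada.

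O primeiro caso a eliminar é o de três ciclos distintos. Aqui eles se fundem em um só, de modo que o número de ciclos ímpares passa de $k$, com $0 \leq k \leq 3$, para no máximo $1$; logo $\Delta c_{odd}(\pi, \tau) \leq 1 - k \leq 1 < 2$, e esse caso nunca produz uma $2$-transposição. No caso de dois ciclos $C_1$ e $C_2$ (necessariamente duas das arestas em um deles e a terceira no outro), a transposição os transforma em dois ciclos $C_1'$ e $C_2'$ com $|C_1'| + |C_2'| = |C_1| + |C_2|$, de modo que a paridade do número de ciclos ímpares entre esses ciclos é preservada. Escrevendo $\Delta c_{odd}(\pi,\tau) = k' - k$ com $k, k' \in \{0, 1, 2\}$ e $k \equiv k' \pmod 2$, a única solução de $k' - k = 2$ é $k = 0$ e $k' = 2$, o que força $C_1$ e $C_2$ a serem ambos ciclos pares — a primeira alternativa do enunciado.

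Resta o caso de um único ciclo $C$ de tamanho $m$. Se a tripla $(o_i, o_j, o_k)$ não for orientada, então $C$ permanece um único ciclo de tamanho $m$, donde $\Delta c_{odd}(\pi,\tau) = 0$ e $\tau$ não é uma $2$-transposição. Se a tripla for orientada, então, pelo fato elementar acima, $C$ não pode ser não orientado, isto é, $C$ é orientado — a segunda alternativa do enunciado. Juntando os três casos, conclui-se que toda $2$-transposição ou age sobre dois ciclos pares (duas arestas em um, uma no outro) ou age sobre três arestas de um mesmo ciclo orientado.

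Espero que o ponto mais delicado seja justificar com rigor os fatos estruturais sobre $\Delta c(\pi,\tau)$ em função de quantos ciclos são tocados — em especial, que três ciclos distintos são sempre fundidos em exatamente um (e não reorganizados em três, o que abriria uma brecha para uma $2$-transposição) e que dois ciclos são sempre reorganizados em exatamente dois. Uma vez disponíveis esses fatos, o restante é apenas a contagem de paridades acima. Uma alternativa seria reproduzir diretamente a prova de Christie (Lemas 3.2.5 e 3.3.1), tratando separadamente ciclos pares e ímpares; mas organizar o argumento pelo número de ciclos tocados me parece tornar a verificação mais uniforme e deixar claro por que o caso de três ciclos distintos está excluído.
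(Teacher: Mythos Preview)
O artigo não apresenta demonstração própria deste lema: ele é citado diretamente de Christie~\cite{1998b-christie} (Lemas~3.2.5 e~3.3.1), de modo que não há prova no texto com a qual comparar a sua abordagem.

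Dito isso, o seu plano está correto e bem organizado. A análise por número de ciclos tocados, combinada com o fato (já citado no artigo, via Bafna--Pevzner) de que $\Delta c(\pi,\tau)=2$ se, e somente se, a tripla é orientada, cobre todos os casos de forma limpa. A contagem de paridades no caso de dois ciclos está correta, e a observação de que um ciclo não orientado não admite tripla orientada (porque $o_1>o_2>\cdots>o_m$ dá sempre $o_i>o_j>o_k$ para $i<j<k$) fecha o caso de um ciclo. No caso de três ciclos, a sua desigualdade pode inclusive ser apertada para $\Delta c_{odd}(\pi,\tau)\in\{0,-2\}$, já que a paridade do ciclo resultante é exatamente a paridade de $k$; de todo modo, o limitante $\leq 1$ basta.

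Você mesmo identifica corretamente o único ponto que requer cuidado: os valores exatos de $\Delta c$ em função de quantos ciclos são tocados (três ciclos $\to$ um; dois ciclos $\to$ dois; um ciclo $\to$ um ou três conforme a orientação da tripla). Esses fatos são padrão e aparecem em Bafna--Pevzner~\cite{1998-bafna-pevzner}; basta citá-los no lugar de reprová-los, como o próprio artigo faz ao longo do Capítulo~3.
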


Quando existem pelo menos dois ciclos pares no grafo, Christie~\cite{1998b-christie} provou que é possível encontrar uma $2$-transposição em tempo linear. Portanto, nos próximos lemas, focamos apenas em $2$-transposições que afetam ciclos orientados, sendo que dividimos a análise em três casos para ciclos orientados ímpares.

\begin{lemma}\label{cap3:lemma:2-transposition-even-cycle}
Se existe um ciclo orientado $C$ em $G(\pi)$ que é um ciclo par, então existe uma tripla orientada válida $(o_i, o_j, o_k)$ em $C = (o_1, o_2, \ldots, o_m)$, com $i < j < k$, tal que $k = j+1$.
\end{lemma}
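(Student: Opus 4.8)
O plano é usar o resultado de Bafna e Pevzner para obter uma tripla orientada com $k = j+1$ e, em seguida, mostrar que tal tripla é automaticamente válida quando o ciclo é par, calculando os tamanhos dos três ciclos gerados pela transposição correspondente.

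Começaria observando que, como $C$ é orientado, o resultado de Bafna e Pevzner~\cite{1998-bafna-pevzner} citado acima garante a existência de uma tripla orientada $(o_i, o_j, o_k)$ em $C = (o_1, \ldots, o_m)$, com $i < j < k$, $o_i > o_k > o_j$ e $k = j+1$. Como essas três arestas de origem pertencem a um mesmo ciclo e formam uma tripla orientada, a transposição $\tau$ que age nelas satisfaz $\Delta c(\pi, \tau) = 2$, isto é, $\tau$ quebra $C$ em três ciclos $C_1$, $C_2$ e $C_3$. Restaria então mostrar que $\tau$ é uma $2$-transposição, ou seja, que $\Delta c_{odd}(\pi, \tau) = 2$.

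O passo central seria determinar os tamanhos de $C_1$, $C_2$ e $C_3$ em função das posições $i < j < k$ na listagem de $C$. Ao remover as três arestas de origem escolhidas de $C$, restam três arcos; como a configuração é orientada (o resultado são três ciclos, e não um único), cada arco é fechado por exatamente uma nova aresta de origem, tornando-se exatamente um dos ciclos $C_1$, $C_2$, $C_3$. Contando as arestas de destino ao longo de cada arco, obteria os tamanhos $s_1 = j - i$ (arco de $o_i$ a $o_j$), $s_2 = k - j$ (arco de $o_j$ a $o_k$) e $s_3 = m - k + i$ (arco que dá a volta), com $s_1 + s_2 + s_3 = m$. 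A observação chave é que $k = j+1$ força $s_2 = 1$, de modo que $C_2$ é um ciclo unitário (portanto ímpar) e $s_3 = m - 1 - s_1$.

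Por fim, concluiria com um argumento de paridade: como $C$ é par, ele não contribui para $c_{odd}$ antes da operação; após a operação, $C_2$ contribui com um ciclo ímpar e, como $s_1 + s_3 = m - 1$ é ímpar (pois $m$ é par), exatamente um entre $s_1$ e $s_3$ é ímpar, isto é, exatamente um entre $C_1$ e $C_3$ é ímpar. Logo $c_{odd}$ aumenta exatamente $2$ unidades, $\Delta c_{odd}(\pi, \tau) = 2$, a transposição $\tau$ é uma $2$-transposição e $(o_i, o_j, o_k)$ é uma tripla orientada válida. O obstáculo principal será justificar cuidadosamente a fórmula dos tamanhos $s_1, s_2, s_3$ — em particular, que no caso de quebra em três ciclos cada arco se fecha em um ciclo distinto (equivalentemente, as três novas arestas de origem não conectam arcos diferentes), o que decorre de que três caminhos religados por três novas arestas de origem em três ciclos alternantes só podem se fechar de forma biunívoca; feito isso, o restante é apenas o breve argumento de paridade acima.
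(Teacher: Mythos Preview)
Your proposal is correct and follows essentially the same approach as the paper: invoke Bafna--Pevzner to obtain an oriented triple with $k=j+1$, observe that the resulting transposition splits $C$ into three cycles one of which is unitary (from $k-j=1$), and use the parity of $m$ to conclude that exactly one of the remaining two cycles is odd, giving $\Delta c_{odd}=2$. The paper's proof is slightly less explicit about the size formulas $j-i$, $k-j$, $m-k+i$ (though it uses exactly these in the subsequent lemmas), but the argument is the same.
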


\begin{proof}
Bafna e Pevzner~\cite[Lema 2.3]{1998-bafna-pevzner} mostraram que todo ciclo orientado $C$ possui tripla orientada $(o_i, o_j, o_k)$, com $i < j < k$, tal que $o_i > o_k > o_j$ e $k = j + 1$. Uma transposição aplicada nessas arestas de origem transforma $C$ em três ciclos $C_1$, $C_2$, $C_3$, tal que pelo menos um deles é um ciclo unitário, sendo que esse ciclo unitário possui a aresta de destino $d_j$ que é adjacente às arestas de origem $o_j$ e $o_k$. Suponha, sem perda de generalidade, que $C_1$ é o ciclo unitário que contém a aresta de destino $d_j$. Como $C$ é um ciclo par, sabemos que dentre os ciclos $C_2$ e $C_3$ temos um ciclo par e um ciclo ímpar. Portanto, essa transposição adiciona dois ciclos ímpares no grafo de ciclos e é uma $2$-transposição, o que implica que $(o_i, o_j, o_k)$ é uma tripla orientada válida.
\end{proof}

\begin{lemma}\label{cap3:lemma:2-transposition-odd-cycle-case1}
  Se existe uma tripla orientada válida $(o_i, o_j, o_k)$ em um ciclo ímpar $C = (o_1, o_2, \ldots, o_m)$, tal que $i < j < k$ e $o_i > o_k > o_j$, então existe uma tripla orientada válida $(o_{i'}, o_{j'}, o_{k'})$ em $C$, com $i' < j' < k'$, tal que $i' \in \{1,2\}$ ou $k' = j' + 1$.
\end{lemma}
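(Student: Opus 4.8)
The plan is to peel off a chain of easy sub-cases until only one residual configuration remains, and then to settle that configuration with a single explicitly chosen triple built from the two leftmost (largest-index) source edges of $C$.

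First I would recall, from the analysis of transpositions on oriented triples, that applying a transposition to an oriented triple of a cycle $C$ with $|C| = m$ whose three source edges occupy positions $a < b < c$ in the cyclic traversal of $C$ splits $C$ into exactly three cycles, with $b - a$, $c - b$ and $m + a - c$ source edges respectively. Since these three numbers sum to the odd number $m$ when $C$ is odd, either all three are odd (so $\Delta c_{odd} = 2$) or exactly one of them is (so $\Delta c_{odd} = 0$); hence an oriented triple of an odd cycle is \emph{valid} exactly when $b - a$, $c - b$ and $m + a - c$ are all odd. Applied to the triple $(o_i, o_j, o_k)$ of the hypothesis this says $j - i$, $k - j$ and $m + i - k$ are all odd, a fact I will use repeatedly; I will also use that $o_1$ is the maximum index in $C$, so that a triple $(o_1, o_b, o_c)$ with $1 < b < c$ is oriented if and only if $o_b < o_c$ (a Type-1 triple).

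Next come the easy reductions. If $k = j + 1$ the triple $(o_i,o_j,o_k)$ itself satisfies $k' = j' + 1$, and if $i \le 2$ it satisfies $i' \in \{1,2\}$; so assume $k \ge j + 2$ and $i \ge 3$, whence $j \ge 4$ and $k \ge 5$ and position $k - 1$ lies strictly between $j$ and $k$. The triple $(o_i, o_{k-1}, o_k)$ has position gaps $(k-1) - i$, $1$ and $m + i - k$, all odd (the first because $k \equiv i \pmod 2$, the last by hypothesis), so it is valid as soon as it is oriented; as $o_i > o_k$, it is oriented unless $o_k < o_{k-1} < o_i$, so if $o_{k-1} \notin (o_k, o_i)$ we are done with $k' = j'+1$. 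Assume then $o_k < o_{k-1} < o_i$. The triple $(o_1, o_j, o_k)$ is oriented (since $o_1 > o_k > o_j$) with gaps $j - 1$, $k - j$ and $m + 1 - k$; because $k \equiv j + 1 \pmod 2$ these are all odd exactly when $j$ is even, so if $j$ is even we are done with $i' = 1$. Assume then $j$ is odd, which forces $i$ and $k$ even. Now $(o_2, o_j, o_k)$ has gaps $j - 2$, $k - j$ and $m + 2 - k$, all odd, and since $o_j < o_k$ it is oriented unless $o_j < o_2 < o_k$; so if $o_2 \notin (o_j, o_k)$ we are done with $i' = 2$.

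The only surviving configuration is $j$ odd, $i$ and $k$ even, $o_j < o_2 < o_k$ and $o_k < o_{k-1} < o_i$; this is the crux, since here every attempt to repair the triple locally runs into a parity clash. I would finish it with the triple $(o_1, o_2, o_{k-1})$: its positions $1 < 2 < k-1$ are genuine (as $k - 1 \ge 4$); it is oriented because $o_1$ is the maximum and $o_2 < o_k < o_{k-1}$ gives $o_2 < o_{k-1}$; and its gaps $1$, $k - 3$ and $m + 2 - k$ are all odd (as $k$ is even and $m$ is odd), so it is a valid oriented triple with $i' = 1$. I expect this last configuration to be the main obstacle: the point is to see that the fix is not a local modification of $(o_i, o_j, o_k)$ but the simultaneously oriented-and-parity-correct triple on positions $1$, $2$, $k-1$. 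A secondary technical matter is to make the validity criterion (the parity characterisation above) precise from the earlier material on $2$-transpositions and oriented triples.
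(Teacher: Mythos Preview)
Your proof is correct and uses essentially the same approach as the paper: the same four candidate triples $(o_i,o_{k-1},o_k)$, $(o_1,o_j,o_k)$, $(o_2,o_j,o_k)$ and $(o_1,o_2,o_{k-1})$ appear, and the parity bookkeeping is identical. The only difference is the order of the case analysis---the paper branches first on the parity of $i$ and only afterwards on the positions of $o_2$ and $o_{k-1}$, whereas you try $(o_i,o_{k-1},o_k)$ up front (which works regardless of parity) and then fall back to the parity split; both routes terminate in the same residual configuration handled by $(o_1,o_2,o_{k-1})$.
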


\begin{proof}
  Note que uma $2$-transposição que afeta apenas um ciclo também aumenta o número de ciclos no grafo em duas unidades. A $2$-transposição que age nas arestas $(o_i, o_j, o_k)$ transforma o ciclo $C$ em três ciclos $D$, $D'$ e $D''$, tal que $D$ possui as arestas de destino do caminho que vai de $o_i$ até $o_j$, $D'$ possui as arestas de destino do caminho que vai de $o_j$ até $o_k$, e $D''$ possui as arestas de destino do caminho que vai de $o_k$ até $o_i$. Portanto, o tamanho desses ciclos são $|D| = j - i$, $|D'| = k-j$, $|D''| = |C| + i - k$. Como essa $2$-transposição afeta o ciclo ímpar $C$, temos que $|D|, |D'|,$ e $|D''|$ são valores ímpares.

  Quando $i \in \{1,2\}$ ou $k = j+1$, temos que a própria tripla $(o_i, o_j, o_k)$ já atende as condições descritas no enunciado do lema. Caso contrário, temos que $i \geq 3$ e $k \geq j+3$. A seguir, dividimos a prova em casos dependendo da paridade dos valores de $i$ ou $k$. Para cada caso, mostramos que existe uma tripla orientada válida que atende as condições do lema.

  Se $i$ é ímpar, então $j$ deve ser par e $k$ deve ser ímpar. Lembre-se que pela nossa definição de como as arestas de origem de um ciclo são listadas, temos que $o_1 > o_i$ para qualquer ciclo. Portanto, ($o_1$, $o_j$, $o_k$) é uma tripla orientada válida, já que $o_1 > o_i$ e os valores $j-1$, $k-j$, e $|C| + 1 - k$ são todos ímpares. 

  Se $i$ é par, então $j$ deve ser ímpar e $k$ deve ser par. Se $o_2 > o_k$, então ($o_2$, $o_j$, $o_k$) é uma tripla orientada válida, já que $j-2$, $k-j$, e $|C|+2-k$ são valores ímpares. Se $o_2 < o_k$, ainda dividimos a prova nos seguintes casos.

  \begin{itemize}
      \item Se $o_{k-1} > o_k > o_2$, então $(o_1, o_2, o_{k-1})$ é uma tripla orientada válida, pois $o_1 > o_{k-1} > o_2$ e, já que $k$ é par, temos que os valores $1$, $(k-1)-2$ e $|C|+1-(k-1)$, que correspondem aos tamanhos dos ciclos criados por uma transposição que age nessa tripla, são ímpares.
      \item Se $o_{k} > o_{k-1}$, então $(o_i, o_{k-1}, o_{k})$ é uma tripla orientada válida, pois $o_i > o_{k} > o_{k-1}$ e os valores $(k-1)-i$, $1$ e $|C|+i-k$, que correspondem aos tamanhos dos ciclos criados por uma transposição que age nessa tripla, são todos ímpares.
  \end{itemize}
\end{proof}

\begin{lemma}\label{cap3:lemma:2-transposition-odd-cycle-case2}
  Se existe uma tripla orientada válida $(o_i, o_j, o_k)$ em um ciclo ímpar $C = (o_1, o_2, \ldots, o_m)$, tal que $i < j < k$ e $o_k > o_j > o_i$, então existe uma tripla orientada válida $(o_{i'}, o_{j'}, o_{k'})$ em $C$, com $i' < j' < k'$, tal que $i' = 1$.
\end{lemma}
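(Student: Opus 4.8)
The plan is to reduce everything to a short parity computation and then simply replace the first index of the given triple by $1$. First I would make explicit the validity criterion that is already used implicitly in the proof of Lemma~\ref{cap3:lemma:2-transposition-odd-cycle-case1}: if $(o_a, o_b, o_c)$ with $a < b < c$ is an oriented triple of the odd cycle $C = (o_1, o_2, \ldots, o_m)$, then the transposition acting on it splits $C$ into three cycles of sizes $b - a$, $c - b$ and $m + a - c$; since $m$ is odd, this transposition is a $2$-transposition (i.e.\ the triple is valid) if and only if all three sizes are odd, which happens exactly when $a$, $b$, $c$ alternate in parity (equivalently $a \not\equiv b$ and $b \not\equiv c$ modulo $2$).

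Next I would exploit the convention $o_1 > o_t$ for all $t > 1$. Under the hypothesis $o_k > o_j > o_i$, the index $i$ cannot equal $1$, since that would force $o_j > o_1$; hence $i \geq 2$. Also, because $i < j < k$ and $o_i < o_j < o_k$, any two of the edges $o_i, o_j, o_k$ form an increasing pair, so both $(o_1, o_i, o_j)$ and $(o_1, o_j, o_k)$ are oriented triples: each has the form $o_1 > o_{k'} > o_{j'}$, which (as $o_1$ is the global maximum) is the only orientation pattern available to a triple whose first index is $1$. Both candidates start at index $1$, as required, and the inequalities $i \geq 2$ and $j > i$ make their indices strictly increasing.

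It then remains only to pick the candidate that is valid. Since the given triple is valid, $i$, $j$, $k$ alternate in parity, so exactly one of the consecutive pairs $(i, j)$, $(j, k)$ consists of an even index followed by an odd index: it is $(i, j)$ when $i$ is even and $(j, k)$ when $i$ is odd. In the former case the triple $(o_1, o_i, o_j)$ has index parities $(\text{odd}, \text{even}, \text{odd})$ and in the latter case $(o_1, o_j, o_k)$ has index parities $(\text{odd}, \text{even}, \text{odd})$; either way the parities alternate, so the chosen triple is a valid oriented triple with $i' = 1$.

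The one place that needs real care — and the step I would double-check most — is the validity criterion of the first paragraph: I must verify that the size formula $b - a$, $c - b$, $m + a - c$ holds for \emph{every} oriented triple, not merely for the special triple with $o_a > o_c > o_b$ and $c = b + 1$ produced by Bafna and Pevzner. This follows because a transposition on an oriented triple creates exactly three cycles (that is the characterization $\Delta c = 2$) and the destination-edge sets of these three cycles are precisely the three arcs into which the three chosen source edges split $C$. Once this is established, the remainder is the routine parity bookkeeping above.
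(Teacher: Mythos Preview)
Your proof is correct and takes essentially the same approach as the paper: both replace the first index of the given triple by $1$, choosing between $(o_1,o_i,o_j)$ and $(o_1,o_j,o_k)$ according to parity, and verify validity via the odd-size condition on the three resulting cycles. Your version is slightly tidier in that you observe $i=1$ is actually impossible under the hypothesis $o_k>o_j>o_i$ (the paper treats it as a trivial-but-vacuous case) and you make the alternating-parity validity criterion explicit rather than recomputing the three cycle sizes each time.
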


\begin{proof}
  Se $i = 1$, então a tripla $(o_i, o_j, o_k)$ já atende as condições descritas no enunciado do lema. Caso contrário, temos que $i \geq 2$.

  Se $j$ é par, então $k$ é ímpar e $(o_1, o_j, o_k)$ é uma tripla orientada válida, pois $o_1 > o_k > o_j$ e uma transposição aplicada nessa tripla cria ciclos com tamanhos $j-1$, $k-j$ e $|C| + 1 - k$, que são todos ímpares. Caso contrário, temos que $j$ é ímpar, $i$ é par e $(o_1, o_i, o_j)$ é uma tripla orientada válida, pois $o_1 > o_j > o_i$ e uma transposição aplicada nessa tripla cria ciclos com tamanhos $i-1$, $j-i$, $|C| + 1 - j$, que são todos ímpares.
\end{proof}

\begin{lemma}\label{cap3:lemma:2-transposition-odd-cycle-case3}
  Se existe uma tripla orientada válida $(o_i, o_j, o_k)$ em um ciclo ímpar $C = (o_1, o_2, \ldots, o_m)$, tal que $i < j < k$ e $o_j > o_i > o_k$, então existe uma tripla orientada válida $(o_{i'}, o_{j'}, o_{k'})$ em $C$, com $i' < j' < k'$, tal que pelo menos uma dessas condições é verdadeira: $i' \in \{1,2\}$; $j' = i' + 1$; ou $k' = j' + 1$.
\end{lemma}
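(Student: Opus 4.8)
The plan is to mirror the structure of the proofs of Lemmas~\ref{cap3:lemma:2-transposition-odd-cycle-case1} and~\ref{cap3:lemma:2-transposition-odd-cycle-case2}: assuming the given triple is not already in the desired form, I would exhibit a new valid oriented triple that is. First I would dispose of the trivial case: if $i\in\{1,2\}$, or $j=i+1$, or $k=j+1$, then $(o_i,o_j,o_k)$ itself satisfies the conclusion, so from here on I assume $i\ge 3$, $j\ge i+2$, and $k\ge j+2$.

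Next I would isolate the arithmetic that governs validity. As in the proof of Lemma~\ref{cap3:lemma:2-transposition-odd-cycle-case1}, a transposition acting on source edges at indices $a<b<c$ of $C$ breaks $C$ into cycles of sizes $b-a$, $c-b$, and $|C|+a-c$; since $C$ is odd, these three sizes are all odd if and only if the middle index $b$ has parity opposite to that of both outer indices (equivalently $a\equiv c\not\equiv b \pmod 2$). So, once orientation of a candidate triple has been checked, its validity is just a parity condition on its three indices. Applying this to $(o_i,o_j,o_k)$ shows that $i$ and $k$ have the same parity while $j$ has the opposite one, which splits the argument into the two cases ``$i,k$ even, $j$ odd'' and ``$i,k$ odd, $j$ even''.

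In each of these two cases I would run a sub-case analysis on the relative order of $o_1$, $o_2$ and of the neighbours $o_{i\pm1}$, $o_{j\pm1}$, $o_{k\pm1}$ with respect to $o_i$, $o_j$, $o_k$. The candidate triples to try are those whose index set contains $1$ or $2$, or contains two consecutive indices, such as $(o_1,\cdot,\cdot)$, $(o_2,\cdot,\cdot)$, $(o_a,o_{j-1},o_j)$, $(o_a,o_j,o_{j+1})$, or $(o_a,o_{k-1},o_k)$ with $a\in\{1,2,i,j\}$; for each, orientation is verified by pinning down which of the three patterns $o_a>o_c>o_b$, $o_b>o_a>o_c$, $o_c>o_b>o_a$ holds for that sub-case, and validity is the parity test from the previous step. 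The output triple automatically meets the ``$i'\in\{1,2\}$ or $j'=i'+1$ or $k'=j'+1$'' requirement by construction.

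The main obstacle is precisely the feature that makes this lemma harder than Lemmas~\ref{cap3:lemma:2-transposition-odd-cycle-case1} and~\ref{cap3:lemma:2-transposition-odd-cycle-case2}: here $o_j>o_i>o_k$, so replacing $o_i$ by the global maximum $o_1$ gives the triple $(o_1,o_j,o_k)$ with strictly \emph{decreasing} values, which is unoriented; hence one cannot simply ``buy'' an index $1$ as in the earlier cases, and the sub-case tree genuinely branches more. One is forced to play $o_2$ and the immediate neighbours of $j$ and $k$ against one another, and a handful of stubborn configurations — for example $o_i<o_{j+1}<o_j$, where neither $(o_i,o_j,o_{j+1})$ nor $(o_1,o_j,o_{j+1})$ is oriented — need a dedicated choice of triple. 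I do not expect any individual sub-case to be deep; the real work is checking that the finitely many sub-cases are exhaustive and that each triple produced is simultaneously oriented, valid (by parity), and in the required form.
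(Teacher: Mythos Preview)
Your plan is correct and matches the paper's proof: the paper first disposes of the case $j$ odd by taking $(o_1,o_i,o_j)$ (oriented since $o_1>o_j>o_i$, valid by parity), and for $j$ even runs exactly the kind of nested sub-case analysis you describe, branching on the position of $o_2$, then $o_{i+1}$, then $o_{j+1}$ relative to the known values and exhibiting a triple of the required shape in each leaf. Your diagnosis of the obstacle --- that $(o_1,o_j,o_k)$ is strictly decreasing and hence unoriented --- is precisely why the case tree here is deeper than in Lemmas~\ref{cap3:lemma:2-transposition-odd-cycle-case1} and~\ref{cap3:lemma:2-transposition-odd-cycle-case2}.
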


\begin{proof}
  Note que se $i \in \{1,2\}$, ou $j = i+1$, ou $k = j+1$, então a tripla $(o_i, o_j, o_k)$ já atende as condições do enunciado do lema. Caso contrário, temos que $i \geq 3$, $j \geq i+3$ e $k \geq j+3$.

  Se $j$ é ímpar, então $i$ e $k$ são pares. Consequentemente, temos a tripla orientada válida $(o_1, o_i, o_j)$, pois $o_1 > o_j > o_i$ e os valores $i-1$, $j-i$ e $|C| + 1 - j$ são ímpares.

  Como temos muitos casos quando $j$ é par, iremos apenas listar as condições e a tripla orientada válida correspondente que atende as condições do enunciado do lema, sendo fácil conferir que ela é de fato uma tripla orientada válida. De agora em diante, assuma que $j$ é par e ambos os valores de $i$ e $k$ são ímpares.  

  Primeiramente, consideramos o valor da aresta de origem $o_2$:
  \begin{itemize}
    \item Se $o_2 > o_j$, então $(o_2, o_i, o_j)$ é uma tripla orientada válida.
    \item Se $o_2 < o_i$, então $(o_1, o_2, o_i)$ é uma tripla orientada válida.
  \end{itemize}

  Caso as condições acima sejam ambas falsas, temos que $o_j > o_2 > o_i$. Agora, considere a aresta de origem $o_{i+1}$, que é a aresta de origem subsequente a $o_i$ no ciclo $C$. Lembre-se que $i+1 < j$ e ambos $i+1$ e $j$ são pares.
  \begin{itemize}
      \item Se $o_{i+1} < o_k$, então $(o_1, o_{i+1}, o_k)$ é uma tripla orientada válida.
      \item Se $o_{i+1} > o_2$, então $(o_i, o_{i+1}, o_k)$ é uma tripla orientada válida.
      \item Se $o_2 > o_{i+1} > o_i$, então $(o_2, o_i, o_{i+1})$ é uma tripla orientada válida.
  \end{itemize}

  Caso as condições anteriores sejam todas falsas, temos que $o_1 > o_j > o_2 > o_i > o_{i+1} > o_k$. Agora, considere a aresta de origem $o_{j+1}$, a aresta de origem subsequente a $o_j$ no ciclo $C$. Lembre-se que $j+1 < k$ e $j+1$ é ímpar.

  \begin{itemize}
    \item Se $o_{j+1} > o_{i}$, então $(o_1, o_{i+1}, o_{j+1})$ é uma tripla orientada válida.
    \item Caso contrário, temos $o_{j+1} < o_{i}$ e, portanto, $(o_i, o_{j}, o_{j+1})$ é uma tripla orientada válida. 
  \end{itemize}

  Dessa forma, cobrimos todos os casos possíveis e, para cada um dos casos, mostramos que existe uma tripla orientada válida que atende as condições do enunciado do lema.
\end{proof}

Os lemas~\ref{cap3:lemma:2-transposition-even-cycle} ao \ref{cap3:lemma:2-transposition-odd-cycle-case3} implicam no seguinte resultado.

\begin{corollary}\label{cap3:corollary_search}
Se existe uma $2$-transposição afetando um ciclo orientado $C$ de $G(\pi)$, então existe uma $2$-transposição aplicada em três arestas de origem $o_i$, $o_j$ e $o_k$ do ciclo $C$, com $i < j < k$, tal que pelo menos uma destas condições é verdadeira: $i \in \{1,2\}$; $j = i + 1$; ou $k = j + 1$.
\end{corollary}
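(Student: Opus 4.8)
The plan is to assemble the corollary directly from the case analysis already carried out in Lemmas~\ref{cap3:lemma:2-transposition-even-cycle} through~\ref{cap3:lemma:2-transposition-odd-cycle-case3}. First I would observe that if there is a $2$-transposition affecting an oriented cycle $C$ of $G(\pi)$, then by Lemma~\ref{cap3:lemma:types_of_2_transpositions} all three source edges $o_i, o_j, o_k$ it acts on belong to the same oriented cycle $C$ (the two-even-cycles alternative of that lemma is excluded since we are assuming the $2$-transposition affects a single oriented cycle). Hence the triple $(o_i, o_j, o_k)$ is a valid oriented triple in $C$, and it falls into one of the configurations classified by the possible orderings of $o_i, o_j, o_k$: either $o_i > o_k > o_j$, or $o_k > o_j > o_i$, or $o_j > o_i > o_k$ — these are exactly the three cases in the definition of an oriented triple.

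Next I would split on the parity of $|C|$. If $C$ is even, Lemma~\ref{cap3:lemma:2-transposition-even-cycle} immediately gives a valid oriented triple $(o_{i'}, o_{j'}, o_{k'})$ with $k' = j'+1$, which satisfies the conclusion. If $C$ is odd, I would apply whichever of Lemmas~\ref{cap3:lemma:2-transposition-odd-cycle-case1}, \ref{cap3:lemma:2-transposition-odd-cycle-case2}, \ref{cap3:lemma:2-transposition-odd-cycle-case3} matches the ordering type of the original valid triple: the first handles $o_i > o_k > o_j$ and yields $i' \in \{1,2\}$ or $k' = j'+1$; the second handles $o_k > o_j > o_i$ and yields $i' = 1$; the third handles $o_j > o_i > o_k$ and yields $i' \in \{1,2\}$, $j' = i'+1$, or $k' = j'+1$. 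In every case the resulting triple satisfies at least one of the three stated conditions, so the corollary follows.

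The one point requiring a little care — and the closest thing to an obstacle — is the bookkeeping that the hypotheses of the three odd-cycle lemmas collectively cover all oriented triples: I need to confirm that the three orderings $o_i > o_k > o_j$, $o_k > o_j > o_i$, $o_j > o_i > o_k$ are an exhaustive list of the ways a valid oriented triple in an odd cycle can be arranged. This is precisely the definition of ``tripla orientada'' (at least one of $o_i > o_k > o_j$, $o_j > o_i > o_k$, $o_k > o_j > o_i$ holds), so any valid oriented triple satisfies at least one of the three, and we may pick one such ordering and feed the triple to the corresponding lemma. Since the conclusions of all four lemmas are special cases of ``$i \in \{1,2\}$ or $j = i+1$ or $k = j+1$'', chaining them yields the corollary with no further computation.

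\begin{proof}
Suponha que existe uma $2$-transposição afetando um ciclo orientado $C$ de $G(\pi)$, aplicada nas arestas de origem $o_i$, $o_j$ e $o_k$. Pelo Lema~\ref{cap3:lemma:types_of_2_transpositions}, ou essas arestas pertencem a dois ciclos pares distintos, ou todas as três pertencem a um mesmo ciclo orientado. Como a $2$-transposição afeta o ciclo orientado $C$, as três arestas pertencem a $C$, e portanto $(o_i, o_j, o_k)$ é uma tripla orientada válida em $C$.

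Se $C$ é um ciclo par, o Lema~\ref{cap3:lemma:2-transposition-even-cycle} garante a existência de uma tripla orientada válida $(o_{i'}, o_{j'}, o_{k'})$ em $C$, com $i' < j' < k'$, tal que $k' = j' + 1$, e o resultado segue. Assuma então que $C$ é um ciclo ímpar. Como $(o_i, o_j, o_k)$ é uma tripla orientada, pelo menos uma das seguintes condições é verdadeira: $o_i > o_k > o_j$; $o_k > o_j > o_i$; ou $o_j > o_i > o_k$. Fixe uma dessas condições que seja satisfeita pela tripla.

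Se $o_i > o_k > o_j$, o Lema~\ref{cap3:lemma:2-transposition-odd-cycle-case1} garante uma tripla orientada válida $(o_{i'}, o_{j'}, o_{k'})$ em $C$, com $i' < j' < k'$, tal que $i' \in \{1,2\}$ ou $k' = j' + 1$. Se $o_k > o_j > o_i$, o Lema~\ref{cap3:lemma:2-transposition-odd-cycle-case2} garante uma tripla orientada válida $(o_{i'}, o_{j'}, o_{k'})$ em $C$, com $i' < j' < k'$, tal que $i' = 1$. Se $o_j > o_i > o_k$, o Lema~\ref{cap3:lemma:2-transposition-odd-cycle-case3} garante uma tripla orientada válida $(o_{i'}, o_{j'}, o_{k'})$ em $C$, com $i' < j' < k'$, tal que $i' \in \{1,2\}$, $j' = i' + 1$, ou $k' = j' + 1$. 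Em todos os casos, a tripla orientada válida obtida satisfaz pelo menos uma das condições $i' \in \{1,2\}$, $j' = i' + 1$, ou $k' = j' + 1$. Como a transposição que age em uma tripla orientada válida é uma $2$-transposição, o resultado segue.
\end{proof}
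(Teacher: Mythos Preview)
Your proof is correct and follows exactly the approach the paper intends: the corollary is stated immediately after Lemmas~\ref{cap3:lemma:2-transposition-even-cycle}--\ref{cap3:lemma:2-transposition-odd-cycle-case3} with the remark that those lemmas imply it, and your argument spells out precisely that implication by splitting on the parity of $C$ and, in the odd case, on which of the three oriented-triple orderings holds. The only step you leave slightly implicit---that a $2$-transposition on a single cycle must act on an oriented triple (so that the odd-cycle lemmas apply)---follows from the Bafna--Pevzner fact quoted in the paper that $\Delta c(\pi,\tau)=2$ iff $\tau$ acts on an oriented triple, together with the observation that $\Delta c_{odd}=2$ on a single cycle forces $\Delta c=2$.
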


Dado uma tripla orientada válida $(o_i, o_j, o_k)$, com $i < j < k$, definimos os parâmetros de uma $2$-transposição $\tau$ que age nessa tripla da seguinte forma:
\begin{itemize}
  \item Se $o_i > o_k > o_j$, então $\tau = \tau(o_j, o_k, o_i)$;
  \item Se $o_j > o_i > o_k$, então $\tau = \tau(o_k, o_i, o_j)$;
  \item Se $o_k > o_j > o_i$, então $\tau = \tau(o_i, o_j, o_k)$.
\end{itemize}

A partir desses resultados, apresentamos o Algoritmo~\ref{cap3:search_2transposition}, que recebe uma permutação $\pi$ de entrada e retorna uma $2$-transposição, caso exista, ou indica que não existe $2$-transposição para essa permutação. 

\begin{lemma}\label{cap3:lemma:2transposition_quadratic}
Dada uma permutação $\pi$, se existe pelo menos uma $2$-transposição que pode ser aplicada em $G(\pi)$, então o Algoritmo~\ref{cap3:search_2transposition} retorna uma $2$-transposição. Caso contrário, o algoritmo retorna que não existe $2$-transposição para $G(\pi)$.
\end{lemma}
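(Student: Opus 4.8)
The plan is to prove the two directions of the statement separately, in both cases using a case analysis on which cycles a $2$-transposition may affect, as given by Lemma~\ref{cap3:lemma:types_of_2_transpositions} and Corollary~\ref{cap3:corollary_search}. First I would recall the structure of Algorithm~\ref{cap3:search_2transposition}: it first tests whether $G(\pi)$ contains at least two even cycles and, if so, returns the $2$-transposition produced by the linear-time procedure of Christie~\cite{1998b-christie}; otherwise it iterates over every oriented cycle $C = (o_1, o_2, \ldots, o_m)$ of $G(\pi)$ and, for each triple $(o_i, o_j, o_k)$ with $i < j < k$ satisfying $i \in \{1,2\}$ or $j = i+1$ or $k = j+1$, builds the candidate transposition $\tau$ from the parameter rules stated just above the lemma and checks whether $\Delta c_{odd}(\pi, \tau) = 2$; it returns the first $\tau$ passing this test, and reports that no $2$-transposition exists if the search ends without success.

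For soundness, the key observation is that every non-trivial output has been explicitly certified: it is either returned by Christie's procedure, which is guaranteed to be a $2$-transposition whenever $G(\pi)$ has two even cycles, or it has passed the test $\Delta c_{odd}(\pi, \tau) = 2$, which is exactly the definition of a $2$-transposition. For completeness, suppose some $2$-transposition $\tau_0$ exists. If the algorithm returns in the even-cycles branch we are done. Otherwise $G(\pi)$ has fewer than two even cycles, so by Lemma~\ref{cap3:lemma:types_of_2_transpositions} the three source edges of $\tau_0$ cannot lie in two distinct even cycles and must therefore all lie in a single oriented cycle $C$; by Corollary~\ref{cap3:corollary_search} there is then a $2$-transposition acting on three source edges $o_i, o_j, o_k$ of $C$ with $i \in \{1,2\}$ or $j = i+1$ or $k = j+1$, and this triple belongs to the family enumerated by the algorithm, so the algorithm returns a $2$-transposition. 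Hence if a $2$-transposition exists the algorithm returns one, and contrapositively, if the algorithm reports that none exists, then indeed none exists.

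The step that requires the most care is checking that the parameter rules correctly translate a valid oriented triple $(o_i, o_j, o_k)$ into the actual transposition acting on those edges (so that the $\Delta c_{odd}$ test is applied to the right operation), and that the restricted family of triples enumerated by the algorithm is precisely the one whose non-emptiness is guaranteed by Corollary~\ref{cap3:corollary_search} — i.e., that Lemmas~\ref{cap3:lemma:2-transposition-even-cycle}--\ref{cap3:lemma:2-transposition-odd-cycle-case3}, as packaged into that corollary, cover all three orientation patterns $o_i > o_k > o_j$, $o_j > o_i > o_k$, and $o_k > o_j > o_i$ of a valid triple. Finally I would note, as a side remark supporting the $O(n^2)$ running time, that for each cycle there are $O(n)$ triples with $i \in \{1,2\}$, $O(n)$ with $j = i+1$, and $O(n)$ with $k = j+1$, each tested in $O(n)$ time, while detecting two even cycles and running Christie's procedure is linear.
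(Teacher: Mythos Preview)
Your proof is correct and follows essentially the same route as the paper: use Lemma~\ref{cap3:lemma:types_of_2_transpositions} to split into the two-even-cycles case (handled by Christie's procedure) and the single-oriented-cycle case, then invoke Corollary~\ref{cap3:corollary_search} to guarantee that the algorithm's restricted enumeration over triples with $i\in\{1,2\}$, $j=i+1$, or $k=j+1$ cannot miss a $2$-transposition. Your side remark on running time is slightly off --- for a cycle of size $m$ there are $\Theta(m^2)$ triples with $i\in\{1,2\}$ (not $O(n)$), each tested in $O(1)$ time (not $O(n)$), and the $O(n^2)$ bound comes from $\sum_{C}|C|^2 \le \big(\sum_{C}|C|\big)^2 = O(n^2)$ --- but this is outside the lemma's statement and does not affect correctness.
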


\begin{proof}
Pelo Lema~\ref{cap3:lemma:types_of_2_transpositions}, qualquer $2$-transposição $\tau$ é aplicada em dois ciclos pares ou em um único ciclo orientado.

Se existe um par de ciclos pares em $G(\pi)$, então o algoritmo sempre acha uma $2$-transposição de acordo com o Lema~{3.2.5} de Christie~\cite{1998b-christie}.

Se a $2$-transposição é aplicada em um ciclo orientado $C = (o_1, o_2, \ldots, o_m)$, então existe uma $2$-transposição aplicada em uma tripla orientada válida $(o_{i'}, o_{j'}, o_{k'})$, com $i' < j' < k'$, tal que $i' \in \{1, 2\}$, ou $j' = i' + 1$, ou $k' = j'+1$ (Corolário~\ref{cap3:corollary_search}). Como o algoritmo faz uma busca exaustiva verificando todas as triplas $(o_i, o_j, o_k)$, com $i < j < k$, tal que $i \in \{1,2\}$, ou $j = i + 1$, ou $k = j + 1$, então o algoritmo irá encontrar uma $2$-transposição nesse caso.

Se não existe $2$-transposição, então o Algoritmo~\ref{cap3:search_2transposition} retorna vazio na sua última linha.
\end{proof}

{\revisaof Para analisar a complexidade do Algoritmo~\ref{cap3:search_2transposition}, verificamos primeiramente que a construção do grafo de ciclos $G(\pi)$ leva tempo linear.} A complexidade de tempo para as linhas~\ref{cap3:transp_ciclos_pares_inicio} e~\ref{cap3:transp_ciclos_pares_fim} também é linear, como mostrado na prova do Lema~{3.2.5} de Christie~\cite{1998b-christie}. A complexidade de tempo da busca nas linhas \ref{cap3:beggining_search_1}--\ref{cap3:end_search_1} é a seguinte, onde $c$ é uma constante relacionada às operações das linhas \ref{cap3:constant_search_1}--\ref{cap3:end_search_1}.

$$
\sum_{C \in G(\pi)} \sum_{j = 2}^{|C|-1} \sum_{k=j+1}^{|C|} c = \sum_{C \in G(\pi)} \sum_{j = 2}^{|C|-1} (|C|-j) c < c \sum_{C \in G(\pi)} |C|^2 = O(n^2),
$$
já que $\sum_{C \in G(\pi)} |C| \leq n + 1$.

De forma similar, podemos provar que a complexidade de tempo das linhas \ref{cap3:beggining_search_2}--\ref{cap3:end_search_2} também é quadrática. Dessa forma, temos que o Algoritmo~\ref{cap3:search_2transposition} possui complexidade de tempo de $O(n^2)$.

\begin{algorithm}[tb]
    \caption{Busca por uma $2$-transposição\label{cap3:search_2transposition}}
    \DontPrintSemicolon
    \Entrada{Uma permutação $\pi$}
    \Saida{Uma $2$-transposição $\tau$, caso exista, ou $\emptyset$}

    Construa o grafo $G(\pi)$\;
    \Se{existem pelo menos dois ciclos pares em $G(\pi)$}{\label{cap3:transp_ciclos_pares_inicio}
        {\bf retorne} a $2$-transposição do Lema~{3.2.5} de Christie~\cite{1998b-christie}\;\label{cap3:transp_ciclos_pares_fim}
    }\Senao{
        \Para{todo ciclo orientado $C = (o_1, o_2, \ldots, o_m)$ em $G(\pi)$\label{cap3:beggining_search_1}}{
             \Para{todo $j \in \{2, \ldots, m-1\}$}{
                \Para{todo $k \in \{j+1, \ldots, m\}$}{
                    \uSe{$(o_1, o_j, o_k)$ é uma tripla orientada válida\label{cap3:constant_search_1}}{
                        {\bf retorne} a $2$-transposição que age em $(o_1, o_j, o_k)$\;
                    }\SenaoSe{$(o_2, o_j, o_k)$ é uma tripla orientada válida}{
                        {\bf retorne} a $2$-transposição que age em $(o_2, o_j, o_k)$\label{cap3:end_search_1}\;
                    }
                }
             }   
        }
        \Para{todo ciclo orientado $C = (o_1, o_2, \ldots, o_m)$ em $G(\pi)$\label{cap3:beggining_search_2}}{
            \Para{todo $i \in \{3, \ldots, m-2\}$}{
               \Para{todo $j \in \{i+1, \ldots, m-1\}$}{
                  \Se{$(o_i, o_j, o_{j+1})$ é uma tripla orientada válida}{
                    {\bf retorne} a $2$-transposição que age em $(o_i, o_j, o_{j+1})$\;
                  }
                }
               \Para{todo $k \in \{i+2, \ldots, m\}$}{
                \Se{$(o_i, o_{i+1}, o_k)$ é uma tripla orientada válida}{
                    {\bf retorne} a $2$-transposição que age em $(o_i, o_{i+1}, o_k)$\label{cap3:end_search_2}\;
                }
             }
         }
        }
        {\bf retorne} $\emptyset$ \Comment{não existe $2$-transposição para $G(\pi)$}
    }
\end{algorithm}

Agora, apresentamos como alcançar uma complexidade de tempo de $O(n^5)$ para o algoritmo de $1.375$-aproximação proposto por Elias e Hartman~\cite{2006-elias-hartman}. Primeiro, notamos que o Algoritmo~\ref{cap3:search_2transposition} não pode ser usado para listar todas as possíveis $2$-transposições aplicadas em uma permutação, apesar de que o algoritmo sempre retorna uma $2$-transposição se $G(\pi)$ admite pelo menos uma $2$-transposição.

{\revisaof
O Algoritmo~\ref{cap3:alg:2-2transp} retorna uma $(2,2)$-sequência, caso exista, ou indica que não existe uma $(2,2)$-sequência para $G(\pi)$. Esse algoritmo testa todas as combinações de triplas $(i,j,k)$, com $1 \leq i < j < k \leq n+1$, a fim de encontrar $2$-transposições para $G(\pi)$. Para toda $2$-transposição $\tau(i,j,k)$ encontrada na linha~\ref{cap3:alg2_inside_loop}, o Algoritmo~\ref{cap3:alg:2-2transp} utiliza o Algoritmo~\ref{cap3:search_2transposition} para verificar se existe uma $2$-transposição para o grafo $G(\pi \comp \tau(i,j,k))$.

Por fim, o Algoritmo~\ref{cap3:alg:1375-o5} usa o Algoritmo~\ref{cap3:alg:2-2transp} como sub-rotina para encontrar uma $(2,2)$-sequência, caso exista, e também usa o algoritmo de Elias e Hartman~\cite{2006-elias-hartman}, o qual chamamos de \texttt{Algoritmo\_EH}, que é aplicado após a busca de uma $(2,2)$-sequência para a permutação $\pi$.
}

\begin{algorithm}[h]
    \caption{Retorna uma $(2,2)$-sequência, caso exista, ou uma sequência vazia}\label{cap3:alg:2-2transp}
    \DontPrintSemicolon
    
    \SetKwFunction{FEH}{Algoritmo\_EH}
    \SetKwFunction{FAlgoUm}{Algoritmo\_\ref{cap3:search_2transposition}}
    \Entrada{Uma permutação $\pi$}
    \Saida{Uma $(2,2)$-sequência, caso exista, ou $(\emptyset, \emptyset)$}
    
    Construa $G(\pi)$\;
    Seja $z$ o número de arestas de origem em $G(\pi)$\;
    \Para{todo $i \in \{1, \ldots, z-2\}$}{\label{cap3:alg2_first_loop_transp}
        \Para{todo $j \in \{i+1, \ldots, z-1\}$}{
            \Para{todo $k \in \{j+1, \ldots, z\}$}{\label{cap3:alg2_last_loop_transp}
                \Se{$\tau(i,j,k)$ é uma $2$-transposição}{\label{cap3:alg2_inside_loop}
                    $\pi' \gets \pi \cdot \tau(i,j,k)$\;
                    $\tau' \gets \FAlgoUm(\pi')$\;
                    \Se{$\tau' \neq \emptyset$}{
                        {\bf retorne} ($\tau,\tau'$)\;\label{cap3:return-22sequence}
                    }
                }
            }
        }
    }
    {\bf retorne} $(\emptyset, \emptyset)$\; \Comment{Não existe uma $(2,2)$-sequência para $\pi$}
\end{algorithm}

\begin{algorithm}[h]
    \caption{Uma $1.375$-Aproximação Mais Eficiente para Transposições\label{cap3:alg:1375-o5}}
    \DontPrintSemicolon
    
    \SetKwFunction{FAlgoDois}{Algoritmo\_\ref{cap3:alg:2-2transp}}
    \SetKwFunction{FEH}{Algoritmo\_EH}
    \SetKwFunction{FAlgoUm}{Algoritmo\_\ref{cap3:search_2transposition}}

    \Entrada{Uma permutação $\pi$}
    \Saida{Uma sequência de transposições $\tau_1,\ldots,\tau_r$ que ordena $\pi$}
 
    $(\tau_1,\tau_2) \gets$ \FAlgoDois{$\pi$}\;\label{cap3:alg3:busca_sequencia}
    \uSe{$\tau_1 \neq \emptyset$}{
        $\pi' \gets \pi \cdot \tau_1 \cdot \tau_2$\Comment{existe uma $(2,2)$-sequência para $\pi$}\;
        {\bf retorne} $(\tau_1,\tau_2)$ + \FEH{$\pi'$}\;\label{cap3:alg3:alg_eh1}
    }
    \Senao() {
        {\bf retorne} \FEH{$\pi$}\;\label{cap3:alg3:alg_eh2}
    }    
\end{algorithm}

Lembre-se que o principal problema do \texttt{Algoritmo\_EH} em garantir o fator de aproximação de $1.375$ é possivelmente negligenciar uma $(2,2)$-sequência após transformar $\pi$ em uma permutação simples $\hat{\pi}$.

Primeiramente, iremos analisar o Algoritmo~\ref{cap3:alg:2-2transp} que busca por uma $(2,2)$-sequência para uma permutação $\pi$. Para cada transposição $\tau$ gerada usando todos os possíveis valores para $i$, $j$ e $k$, se $\tau$ é uma $2$-transposição, então o algoritmo aplica essa operação em $\pi$, resultando em $\pi' = \pi \comp \tau$. Após isso, o Algoritmo~\ref{cap3:alg:2-2transp} usa o Algoritmo~\ref{cap3:search_2transposition} com $\pi'$ como sub-rotina. Se o Algoritmo~\ref{cap3:search_2transposition} retorna uma segunda $2$-transposição, então o Algoritmo~\ref{cap3:alg:2-2transp} retorna uma $(2,2)$-sequência na linha~\ref{cap3:return-22sequence}. Se não existe uma $(2,2)$-sequência para $\pi$, então o algoritmo retorna uma sequência vazia.

O Algoritmo~\ref{cap3:alg:2-2transp} usa laços aninhados nas linhas~\ref{cap3:alg2_first_loop_transp}--\ref{cap3:alg2_last_loop_transp} para a busca da primeira $2$-transposição $\tau(i,j,k)$, usando todas as combinações para os índices $i$, $j$ e $k$. Para executar as instruções dentro desses laços aninhados (linhas~\ref{cap3:alg2_inside_loop}--\ref{cap3:return-22sequence}), é necessário complexidade de tempo de $O(n^2)$. Portanto, o Algoritmo~\ref{cap3:alg:2-2transp} executa em tempo $O(n^5)$.

Após usar o Algoritmo~\ref{cap3:alg:2-2transp} na linha~\ref{cap3:alg3:busca_sequencia}, o Algoritmo~\ref{cap3:alg:1375-o5} usa a $1.375$-aproximação de Elias e Hartman~\cite{2006-elias-hartman} (\texttt{Algoritmo\_EH}) na linha \ref{cap3:alg3:alg_eh1} ou na linha \ref{cap3:alg3:alg_eh2}.
Como o \texttt{Algoritmo\_EH} tem complexidade de tempo quadrática, concluímos que o Algoritmo~\ref{cap3:alg:1375-o5} possui complexidade de tempo de $O(n^5)$.

É garantido que o Algoritmo~\ref{cap3:alg:1375-o5} possui fator de aproximação de $1.375$ para qualquer permutação, pois ele garante a aplicação de uma $(2,2)$-sequência na permutação de entrada, caso ela exista, e apenas depois disso usa o \texttt{Algoritmo\_EH}, que transforma a permutação de entrada em uma permutação simples.

\subsection{Resultados Experimentais}

Nesta seção, apresentamos resultados experimentais para o Algoritmo~\ref{cap3:alg:1375-o5}. Comparamos os resultados do nosso algoritmo com os resultados apresentados para os algoritmos propostos por Elias e Hartman~\cite{2006-elias-hartman} e Silva e coautores~\cite{2022-silva-etal}. 

Testamos os algoritmos para todas as permutações de tamanho $n \leq 12$ e comparamos os tamanhos das sequências de ordenação retornadas pelos algoritmos com as distâncias exatas disponíveis no sistema GRAAu~\cite{2014a-galvao-dias}. O GRAAu é uma ferramenta de auditoria para algoritmos de ordenação de permutações por rearranjos, sendo que essa ferramenta possui um banco de dados com os valores das distâncias exatas para permutações pequenas.  

A Tabela~\ref{cap3:tab:all-results} sumariza os resultados dos algoritmos testados, os quais são identificados por:
\begin{itemize}
  \item {\bf ALG3}: Algoritmo~\ref{cap3:alg:1375-o5} apresentado nesta seção;
  \item {\bf EH}: Algoritmo apresentado por Elias e Hartman~\cite{2006-elias-hartman};
  \item {\bf SKRW}: Algoritmo apresentado por Silva e coautores~\cite{2022-silva-etal}. 
\end{itemize}

As permutações são agrupadas por tamanho e cada linha apresenta resultados para o grupo de permutações de tamanho $n$, indicado na primeira coluna da tabela. As colunas {\bf APROX MAX} e {\bf APROX MED} representam o valor máximo e o valor médio do fator de aproximação observado, respectivamente. A coluna {\bf DIST MED} representa a média dos tamanhos das sequências de ordenação retornadas pelos algoritmos e a coluna {\bf \% SOLUÇÕES ÓTIMAS} indica o percentual de instâncias em que uma solução ótima foi encontrada.

O algoritmo de Elias e Hartman~\cite{2006-elias-hartman} retornou sequências de ordenação com um fator de aproximação acima de $1.375$ (comparado com a distância exata) em $2$ instâncias de tamanho $8$, $20$ instâncias de tamanho $9$, $110$ instâncias de tamanho $10$, $440$ instâncias de tamanho $11$ e, por último, $1448$ instâncias de tamanho $12$. Por outro lado, o algoritmo proposto nesta seção e o algoritmo proposto por Silva e coautores~\cite{2022-silva-etal} não retornaram nenhuma sequência de ordenação com aproximação acima de $1.333$.

{\revisaof
Na Tabela~\ref{cap3:tab:all-results}, os valores para o fator de aproximação máximo dos algoritmos {\bf ALG3} e {\bf SKRW} foram os mesmos, considerando todos os grupos de permutações. Ao analisar os valores médios dos fatores de aproximação para os algoritmos {\bf ALG3} e {\bf SKRW}, concluímos que nos grupos de tamanho menor ou igual $6$, o valor de $1.0$ foi mantido, o que indica que, para todas as instâncias nesses grupos, ambos os algoritmos encontraram sequências de ordenação ótimas (a coluna {\bf \% SOLUÇÕES ÓTIMAS} também indica a mesma informação). Para grupos de permutações com tamanhos maiores que $6$, ao comparar os três algoritmos, concluímos que o algoritmo apresentado nesta seção obteve melhores resultados de aproximação máxima, aproximação média, médias das distâncias, e percentual de solução ótimas encontradas.
}

É importante destacar que o percentual de instâncias em que o algoritmo {\bf ALG3} encontra uma solução ótima foi maior que $88$\% para todos os grupos de permutações. O Algoritmo {\bf SKRW} mantém esse comportamento apenas para grupos de permutações com tamanho menor ou igual a $9$. Assim, concluímos que o algoritmo proposto nesta seção traz uma melhoria do ponto de vista da qualidade da solução, além da melhoria da complexidade do algoritmo já demonstrada anteriormente.

{\revisaof
Optamos por não comparar os tempos de execução das implementações dos algoritmos por não considerarmos que a análise seria justa, dados os seguintes motivos: (i) o algoritmo {\bf SKRW} é implementado em uma linguagem de programação distinta da utilizada na implementação do algoritmo {\bf ALG3}; (ii) o algoritmo {\bf EH} possui complexidade de tempo consideravelmente menor do que os outros algoritmos, apesar de não ser um algoritmo de aproximação válido, como mostrado na Tabela~\ref{cap3:tab:all-results}.
}

\begin{table}[tbh]
\caption{\revisaof
Comparação entre os resultados experimentais do Algoritmo~\ref{cap3:alg:1375-o5} e os resultados dos algoritmos propostos por Elias e Hartman~\cite{2006-elias-hartman} ({\bf EH}) e Silva e coautores~\cite{2022-silva-etal} ({\bf SKRW}), em todas permutações de tamanho $n \leq 12$, excluindo a permutação identidade $\iota^n$.}
\label{cap3:tab:all-results}
\centering
\setlength\tabcolsep{5pt}
\resizebox{0.99\textwidth}{!}{
\centering
\begin{tabular}{ccccccccccccc}
\toprule
 & \multicolumn{3}{c}{{\bf APROX MAX}} & \multicolumn{3}{c}{{\bf APROX MED}} & \multicolumn{3}{c}{{\bf DIST MED}} & \multicolumn{3}{c}{{\bf \% SOLUÇÕES ÓTIMAS}} \\
\cmidrule(r){2-4} \cmidrule(r){5-7} \cmidrule(r){8-10} \cmidrule(r){11-13}
n & {\bf EH}  & {\bf SKRW} & {\bf ALG3}  & {\bf EH} & {\bf SKRW} & {\bf ALG3}  & {\bf EH} & {\bf SKRW} & {\bf ALG3}  & {\bf EH} & {\bf SKRW} & {\bf ALG3} \\
\midrule
2  & 1.00 & 1.00 & 1.00 & 1.00   & 1.00    & 1.00    & 1.00   & 1.00   & 1.00   & 100.00 & 100.00 & 100.00 \\
3  & 1.00 & 1.00 & 1.00 & 1.00   & 1.00    & 1.00    & 1.20   & 1.20   & 1.20   & 100.00 & 100.00 & 100.00 \\
4  & 1.00 & 1.00 & 1.00 & 1.00   & 1.00    & 1.00    & 1.6086 & 1.6086 & 1.6086 & 100.00 & 100.00 & 100.00 \\
5  & 1.00 & 1.00 & 1.00 & 1.00   & 1.00    & 1.00    & 2.0924 & 2.0924 & 2.0924 & 100.00 & 100.00 & 100.00 \\
6  & 1.33 & 1.00 & 1.00 & 1.0004 & 1.00    & 1.00    & 2.6063 & 2.6050 & 2.6050 & 99.86  & 100.00 & 100.00 \\
7  & 1.33 & 1.25 & 1.25 & 1.0129 & 1.0113 & 1.0014 & 3.1762 & 3.1704 & 3.1311 & 94.90  & 95.47  & 99.40 \\
8  & 1.50 & 1.25 & 1.25 & 1.0210 & 1.0183 & 1.0042 & 3.7178 & 3.7076 & 3.6512 & 91.64  & 92.65  & 98.29 \\
9  & 1.50 & 1.25 & 1.25 & 1.0301 & 1.0256 & 1.0085 & 4.2796 & 4.2603 & 4.1846 & 86.62  & 88.54  & 96.10 \\
10 & 1.50 & 1.25 & 1.25 & 1.0341 & 1.0282 & 1.0125 & 4.8051 & 4.7772 & 4.7032 & 83.80  & 86.53  & 93.94 \\
11 & 1.50 & 1.33 & 1.33 & 1.0392 & 1.0321 & 1.0170 & 5.3526 & 5.3157 & 5.2367 & 79.40  & 82.98  & 90.88 \\
12 & 1.50 & 1.33 & 1.33 & 1.0415 & 1.0336 & 1.0206 & 5.8694 & 5.8248 & 5.7514 & 76.67  & 80.91  & 88.27 \\
\bottomrule
\end{tabular}
}
\end{table}

\section{Complexidade de Problemas com Transposições e Outros Rearranjos}\label{cap3:secao_complexidade}

Nesta seção, apresentamos provas de NP-dificuldade para problemas de Ordenação de Permutações por Rearranjos considerando os seguintes modelos de rearranjos:

\begin{itemize}
  \item $\M_1 = \{\tau, \transrev \}$: Transposições e Transposições Inversas;
  \item $\M_2 = \{\rho, \tau, \transrev \}$: Reversões, Transposições e Transposições Inversas;
  \item $\M_3 = \{\tau, \revrev \}$: Transposições e Revrevs;
  \item $\M_4 = \{\rho, \tau, \revrev \}$: Reversões, Transposições e Revrevs;
  \item $\M_5 = \{\tau, \transrev, \revrev \}$: Transposições, Transposições Inversas e Revrevs.
  \item $\M_6 = \{\rho, \tau, \transrev, \revrev \}$: Reversões, Transposições, Transposições Inversas e Revrevs.
\end{itemize}

Além disso, denotamos por $\tau$ o modelo que possui apenas transposições. A seguir, apresentamos formalmente a versão de decisão dos problemas estudados.

\begin{problembox}
\textbf{Ordenação de Permutações por Rearranjos (\pname{SbR})}\\
\textbf{Entrada:} Uma permutação $\pi$ e um inteiro $k$.\\
\textbf{Objetivo:} Considerando um modelo de rearranjos $\M$, decidir se é possível ordenar $\pi$ com uma sequência de rearranjos $S$, tal que $|S| \leq k$ e $\beta \in \M$, para todo $\beta \in S$. Ou seja, determinar se $d_{\M}(\pi) \leq k$.
\end{problembox}

\begin{problembox}
\textbf{Ordenação de Permutações por Rearranjos Ponderados (\pname{SbWR})}\\
\textbf{Entrada:} Uma permutação $\pi$ e um inteiro $k$.\\
\textbf{Objetivo:} Considerando um modelo de rearranjos $\M$ e pesos $\{w_\rho, w_\tau\}$, decidir se é possível ordenar $\pi$ com uma sequência de rearranjos $S$, tal que $w(S) \leq k$ e $\beta \in \M$, para todo $\beta \in S$. Ou seja, determinar se $d^w_{\M}(\pi) \leq k$.
\end{problembox}

Quando $w_{\rho} = w_{\tau}$, esse problema é equivalente à abordagem não ponderada. Se um modelo não contém reversões, então consideramos que $w_{\rho} = \infty$.

Seja $\pi$ uma permutação sem sinais e $k = b_{\tau}(\pi)/3$. Um limitante bastante conhecido para a distância de transposições é $d_t(\pi) \geq b_{\tau}(\pi)/3$~\cite{2012-bulteau-etal}. 
As provas de dificuldade apresentadas nesta seção são baseadas em reduções do problema NP-difícil \pname{B3T}~\cite{2012-bulteau-etal}. 

\begin{problembox}
\textbf{Ordenação de Permutações por Transposições Ótimas (\pname{B3T})}\\
\textbf{Entrada:} Uma permutação sem sinais $\pi$.\\ 
\textbf{Objetivo:} Decidir se é possível ordenar a permutação $\pi$ usando exatamente $b_{\tau}(\pi)/3$ transposições, ou seja, determinar se $d_{\tau}(\pi) = b_{\tau}(\pi)/3$.
\end{problembox}

Como a permutação identidade $\iota^n$ é a única com zero {\it breakpoints}, podemos interpretar o processo de ordenar $\pi$ como o de remover todos os {\it breakpoints} de $\pi$. Agora, mostramos como os rearranjos podem afetar o número de {\it breakpoints} em uma permutação e também definir limitantes para a distância. Também apresentamos como os rearranjos estudados afetam o número de {\it breakpoints} em algumas famílias de permutações, que serão úteis nas provas de complexidade.

\begin{lemma}\label{cap3:lemma:lb_signed}
  Para qualquer permutação com sinais $\pi$, pesos $w_{\rho}$ e $w_{\tau}$, e modelo $\M \in \{{\M_1}, {\M_2}, {\M_3}, {\M_4}, {\M_5}, {\M_6}\}$, temos que
  $$d^w_{\M}(\pi) \geq \min\left\{\frac{w_\rho}{2}, \frac{w_\tau}{3}\right\} b_{\M}(\pi).$$
\end{lemma}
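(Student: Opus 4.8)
The goal is a lower bound on the weighted distance in terms of the number of breakpoints $b_{\M}(\pi)$, valid for all six rearrangement models that mix transpositions with reversals, transreversals, and revrevs. The natural strategy is the classical breakpoint-accounting argument: I would show that any single rearrangement $\beta$ in these models can decrease the breakpoint count by at most a fixed amount that depends only on whether $\beta$ is a reversal (cost $w_\rho$) or one of the three-adjacency operations (cost $w_\tau$), and then divide. Concretely, the plan is to establish that $\Delta b_{\M}(\pi,\rho) \leq 2$ for a reversal and $\Delta b_{\M}(\pi,\beta) \leq 3$ when $\beta \in \{\tau, \transrev, \revrev\}$, where all breakpoints here are signed-reversal breakpoints (Definition~\ref{cap2:def:breakpoint_sr}), since the excerpt states that every model on signed permutations uses that breakpoint definition.

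First I would fix the breakpoint notion: for signed permutations in their extended version, a breakpoint sits between $\pi_i$ and $\pi_{i+1}$ whenever $\pi_{i+1} - \pi_i \neq 1$, and $\iota^n$ is the unique breakpoint-free permutation. Then I would argue operation by operation. A reversal $\rho(i,j)$ modifies only the two adjacencies $(\pi_{i-1},\pi_i)$ and $(\pi_j,\pi_{j+1})$ — the internal adjacencies of the reversed segment are merely relabeled with flipped signs and flipped order, so $(\pi_k,\pi_{k+1})$ a breakpoint iff $(-\pi_{k+1},-\pi_k)$ is, which it is since $-\pi_k - (-\pi_{k+1}) = \pi_{k+1}-\pi_k$. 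Hence only those two boundary adjacencies change status, giving $\Delta b \leq 2$. A transposition $\tau(i,j,k)$ touches exactly three adjacencies (at positions $i-1/i$, $j-1/j$, $k-1/k$), so $\Delta b \leq 3$. For a transreversal $\transrev$ and a revrev $\revrev$ I would make the same observation: each of these acts on two adjacent segments and alters only three junctions between maximal unaffected blocks, while the interior adjacencies inside each reversed piece preserve breakpoint status by the sign-flip identity above. Therefore in every model $\M \in \{\M_1,\dots,\M_6\}$ each operation of cost $w_\tau$ removes at most $3$ breakpoints and each operation of cost $w_\rho$ removes at most $2$.

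Given this, let $S = (\beta_1,\dots,\beta_\ell)$ be an optimal sorting sequence, with $a$ of its operations being reversals and $b = \ell - a$ being three-adjacency operations. Since sorting means eliminating all $b_{\M}(\pi)$ breakpoints and each step removes at most $3$ (resp. $2$), we get $b_{\M}(\pi) \leq 2a + 3b$. On the other hand $d^w_{\M}(\pi) = w(S) = a\,w_\rho + b\,w_\tau$. Writing $c = \min\{w_\rho/2, w_\tau/3\}$, we have $w_\rho \geq 2c$ and $w_\tau \geq 3c$, so
\[
d^w_{\M}(\pi) = a\,w_\rho + b\,w_\tau \geq 2ac + 3bc = c\,(2a+3b) \geq c\,b_{\M}(\pi),
\]
which is exactly the claimed inequality. (For models with no reversals, $w_\rho = \infty$ and the bound reduces to $d^w_\M(\pi) \geq (w_\tau/3) b_\M(\pi)$, consistent with the convention in the excerpt.)

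**Main obstacle.** The only delicate point is the per-operation bound $\Delta b \leq 3$ for transreversals and revrevs: one must verify carefully, from the explicit definitions of $\transrev_1$, $\transrev_2$, and $\revrev$ given in the excerpt, that reversing a contiguous block of a signed permutation never creates or destroys a breakpoint strictly inside the block — this rests on the identity $(-\pi_{k+1})-(-\pi_k) = \pi_{k+1}-\pi_k$ together with the fact that the reversed block is written in reverse order, so consecutive positions inside it correspond to consecutive positions (in swapped order) of the original. Once that interior-invariance is nailed down, the count of boundary junctions (two for a reversal, three for each of the others) is immediate and the rest is the arithmetic above. I would isolate the interior-invariance claim as a small sublemma and then dispatch the six models uniformly.
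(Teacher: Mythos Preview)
Your proposal is correct and follows essentially the same approach as the paper: bound the breakpoint change per operation by the number of adjacencies it disturbs (two for reversals, three for transpositions, transreversals, and revrevs), then conclude that removing all $b_\M(\pi)$ breakpoints costs at least $\min\{w_\rho/2, w_\tau/3\}\,b_\M(\pi)$. Your write-up is in fact more detailed than the paper's --- you spell out the interior-invariance of breakpoints under segment reversal and carry out the final arithmetic via explicit counts $a$ and $b$, whereas the paper simply asserts the per-operation bounds from the adjacency count and phrases the conclusion as ``minimum cost per breakpoint removed''.
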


\begin{proof}
  Como uma reversão $\rho(i,j)$ quebra apenas as adjacências $(\pi_{i-1}, \pi_i)$ e $(\pi_j, \pi_{j+1})$, então no máximo dois {\it breakpoints} podem ser removidos ou adicionados e, portanto, $-2 \leq \Delta b_\M(\pi, \rho) \leq 2$. De forma similar, se uma operação $\beta$ é uma transposição, transposição inversa ou revrev, então apenas três adjacências de $\pi$ são quebradas e, portanto, no máximo três {\it breakpoints} podem ser removidos ou adicionados (i.e., $-3 \leq \Delta b_\M(\pi, \beta) \leq 3$). 

  Sendo assim, o custo mínimo para remover um {\it breakpoint} é igual a $\min\left\{\frac{w_\rho}{2}, \frac{w_\tau}{3}\right\}$. Como qualquer sequência $S$ que ordena $\pi$ remove $b_\M(\pi)$ {\it breakpoints}, concluímos que $S$ possui custo maior ou igual a $\min\left\{\frac{w_\rho}{2}, \frac{w_\tau}{3}\right\} b_{\M}(\pi)$.
\end{proof}

\begin{lemma}\label{cap3:lemma:signed_var_b}
  Para qualquer permutação com sinais $\pi$ tal que $\pi$ possui apenas {\it strips} positivas, temos que:
  \begin{itemize}
     \item $\Delta b_{\rho}(\pi, \rho) \leq 0$, para qualquer reversão ${\rho}$;
     \item $\Delta b_{\transrev}(\pi, \transrev) \leq 1$, para qualquer transposição inversa $\transrev$;
     \item $\Delta b_{\revrev}(\pi, \revrev) \leq 1$, para qualquer revrev ${\revrev}$.
   \end{itemize}
\end{lemma}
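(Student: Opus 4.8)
The plan is to track, for each of the three operations, exactly how it changes the set of adjacencies of $\pi$, splitting those adjacencies into two groups: the \emph{internal} ones, which lie strictly inside a block that is either left untouched or merely reversed together with a sign flip of all its elements, and the \emph{junction} ones, which are the two (for a reversão) or three (for a transposição inversa or a revrev) adjacencies created at the boundaries of the rearranged blocks. The first step is the observation that flipping the signs of an entire block preserves its internal breakpoint structure: if $(\pi_l,\pi_{l+1})$ is replaced by $(-\pi_{l+1},-\pi_l)$, then $(-\pi_l)-(-\pi_{l+1})=\pi_{l+1}-\pi_l$, so that internal adjacency is a breakpoint exactly when it was before. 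Hence the number of internal breakpoints is invariant under all three operations, and $\Delta b$ equals the change in the number of breakpoints among the junction adjacencies only. I also note that for signed permutations all models use the breakpoint definition for reversões com sinais (Definição~\ref{cap2:def:breakpoint_sr}), so $b_\rho$, $b_{\transrev}$ and $b_{\revrev}$ denote the same quantity.

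The second step uses the hypothesis that $\pi$ has only positive strips, which forces every element of $\pi$ --- including the extended elements $\pi_0=+0$ and $\pi_{n+1}=+(n+1)$ --- to carry the sign ``$+$''. From this I would record two elementary facts about junctions created by sign flips: a junction of the form $(p,-x)$ with $p\ge 0$ and $x\ge 1$ satisfies $(-x)-p\le -1\neq 1$, hence is always a breakpoint; and a junction of the form $(-x,q)$ with $x\ge 1$ and $q\ge 1$ satisfies $q-(-x)=q+x\ge 2\neq 1$, hence is always a breakpoint. One then checks that in each operation the endpoints that may degenerate to $0$ or $n+1$ (namely $\pi_{i-1}$ and $\pi_k$, resp. $\pi_{j+1}$) always land on the ``$p\ge 0$'' or ``$q\ge 1$'' side, so the two facts apply uniformly.

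The third step is a short case analysis per operation, reading the resulting junctions directly off the definitions of the rearrangements. For a reversão $\rho(i,j)$ the two junctions are $(\pi_{i-1},-\pi_j)$ and $(-\pi_i,\pi_{j+1})$, of the two forms above, so both are breakpoints afterwards; since there were at most two junction breakpoints before, $\Delta b_\rho(\pi,\rho)\le 2-2=0$. For a transposição inversa (both $\transrev_1$ and $\transrev_2$) and for a revrev, among the three resulting junctions exactly two are of the forms above (hence breakpoints afterwards), while the third --- the adjacency between the two unflipped neighbours in $\transrev_1$/$\transrev_2$, or the adjacency $(-\pi_i,-\pi_{k-1})$ between the two flipped blocks in a revrev, whose value difference $\pi_i-\pi_{k-1}$ may equal $1$ --- need not be a breakpoint; since there were at most three junction breakpoints before, $\Delta b\le 3-2=1$ in each case, which is the stated bound.

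The only part that requires care --- and which I expect to be a matter of stating things precisely rather than a genuine difficulty --- is the bookkeeping of which block moves where and whose signs get flipped, so that every post-operation junction is correctly classified as ``always a breakpoint'' or ``possibly not'', and so that the degenerate positions $\pi_0$ and $\pi_{n+1}$ are handled. Once the junctions are correctly identified, the rest follows immediately from the two elementary facts and the invariance of the internal breakpoint count.
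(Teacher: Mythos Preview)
Your proposal is correct and follows essentially the same approach as the paper: both arguments hinge on the observation that, since all elements of $\pi$ are positive, any junction adjacency pairing a positive element with a negated one is necessarily a breakpoint, and then count how many such forced breakpoints appear among the two (reversão) or three (transposição inversa, revrev) junctions. Your presentation is slightly more systematic --- explicitly separating internal from junction adjacencies and writing out the numerical inequalities --- whereas the paper states the same sign argument more tersely and uses contradiction for the reversão case, but the underlying idea is identical.
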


\begin{proof}
  Note que $\pi$ possuir apenas {\it strips} positivas implica que todos elementos de $\pi$ possuem sinal ``$+$''.

  Considere uma reversão $\rho$ e seja $\pi' = \pi \comp \rho = (\pi_1~\ldots~\pi_{i-1}~\underline{{-\pi_j}~\ldots~{-\pi_i}}~\pi_{j+1}$ $\ldots~\pi_n)$, com $1 \leq i \leq j \leq n$. Provaremos, por contradição, que não existe reversão que remove {\it breakpoints} de $\pi$. Suponha que $\Delta b_{\rho}(\pi, \rho) > 0$, o que implica que $(\pi_{i-1}, -\pi_{j})$ não é um {\it breakpoint} ou $(-\pi_{i}, \pi_{j+1})$ não é um {\it breakpoint}. Se $(\pi_{i-1}, -\pi_{j})$ não é um {\it breakpoint}, então $\pi_{i-1}$ e $-\pi_{j}$ devem ter o mesmo sinal. De forma similar, se $(-\pi_{i}, \pi_{j+1})$ não é um {\it breakpoint}, então $-\pi_{i}$ e $\pi_{j+1}$ devem ter o mesmo sinal. Em ambos os casos chegamos a uma contradição, pois $\pi$ possui apenas elementos com sinal ``$+$'' e, portanto, os elementos desses pares possuem sinais distintos. Sendo assim, $\Delta b_{\rho}(\pi, \rho) \leq 0$.

  Considere uma transposição inversa tipo 1 $\transrev_1$ e seja $\pi' =$ $\pi \comp \transrev_1 =$  $(\pi_1~\ldots~\pi_{i-1}$ $\underline{\pi_j~\ldots~\pi_{k-1}}$ $\underline{-\pi_{j-1}~\ldots~-\pi_{i}}$ $\pi_{k}~\ldots~\pi_n)$, com $1 \leq i < j < k \leq n+1$. Usando um argumento similar ao usado para reversões, temos que os pares $(\pi_{k-1}, -\pi_{j-1})$ e $(-\pi_{i}, \pi_k)$ devem ser {\it breakpoints} e apenas o par $(\pi_{i-1}, \pi_j)$ pode não ser um {\it breakpoint}. Portanto, $\Delta b_{\transrev}(\pi, \transrev_1) \leq 1$. Usamos um argumento análogo para uma transposição inversa tipo 2. Agora, considere uma revrev $\revrev$ e seja $\pi' = \pi \comp \revrev = (\pi_1~\ldots~\pi_{i-1}$ $\underline{-\pi_{j-1}~\ldots~-\pi_{i}}$ $\underline{-\pi_{k-1}~\ldots~-\pi_{j}}$ $\pi_{k}~\ldots~\pi_n)$, com $1 \leq i < j < k \leq n+1$. Usando um argumento similar ao usado para reversões, temos que os pares $(\pi_{i-1}, -\pi_{j-1})$ e $(-\pi_j, \pi_k)$ devem ser {\it breakpoints} e apenas o par $(-\pi_{i}, -\pi_{k-1})$ pode não ser um {\it breakpoint}. Portanto, $\Delta b_{\revrev}(\pi, \revrev) \leq 1$.
\end{proof}

\begin{theorem}\label{cap3:theorem:complexity_signed}
Para os modelos $\M = \{\M_1, \M_2, \M_3, \M_4, \M_5,$ $ \M_6\}$ e $w_{\tau}/w_{\rho} \leq 1.5$, temos que o problema da Ordenação de Permutações por Rearranjos Ponderados \pname{SbWR} é NP-difícil para permutações com sinais.
\end{theorem}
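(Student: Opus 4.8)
\emph{Proof proposal.} The plan is to reduce from \pname{B3T}. Given an unsigned permutation $\pi$ (in its extended version), if $b_\tau(\pi)$ is not a multiple of $3$ I output a trivial negative instance, since then $d_\tau(\pi)=b_\tau(\pi)/3$ is impossible ($d_\tau(\pi)$ is an integer and a transposition changes at most three breakpoints). Otherwise I build the signed permutation $\pi'$ obtained from $\pi$ by giving every element the sign ``$+$'', and set $k=\frac{w_\tau}{3}\,b_\tau(\pi)$, which equals the integer $\frac{b_\tau(\pi)}{3}\,w_\tau$; the \pname{SbWR} instance is $(\pi',k)$. The first observation is that, since every element of $\pi'$ is positive, $\pi'_{i+1}-\pi'_i=\pi_{i+1}-\pi_i$ for all $i$ (including the extended ends), so the signed-reversal breakpoints of $\pi'$ are exactly the transposition breakpoints of $\pi$; hence $b_\M(\pi')=b_\tau(\pi)$. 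Because $w_\tau/w_\rho\le 1.5$ we have $\min\{w_\rho/2,\ w_\tau/3\}=w_\tau/3$, so Lemma~\ref{cap3:lemma:lb_signed} gives $d^w_\M(\pi')\ge \frac{w_\tau}{3}\,b_\tau(\pi)=k$ for every model in the list (this also covers $\M_1,\M_3,\M_5$, where $w_\rho=\infty$).

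For the forward direction, if $d_\tau(\pi)=b_\tau(\pi)/3$ then there is a sequence of $b_\tau(\pi)/3$ transpositions sorting $\pi$. Applying the same operations to $\pi'$ works, because transpositions do not alter signs and $\iota^n$ with all ``$+$'' signs is the signed identity; this sorts $\pi'$ with cost $\frac{b_\tau(\pi)}{3}\,w_\tau=k$, so $d^w_\M(\pi')=k$ and $(\pi',k)$ is a positive instance.

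For the converse, suppose $(\pi',k)$ is a positive instance, so there is a sorting sequence $S$ of $\pi'$ with $w(S)\le k$; combined with the lower bound, $w(S)=k$. Let $r,t,q$ be the numbers of reversals, of transpositions, and of operations that are transreversals or revrevs in $S$. Each reversal changes at most two breakpoints and each of the other operations at most three, so $b_\M(\pi')\le 2r+3t+3q$; and using $w_\rho\ge\frac{2}{3}w_\tau$, $w(S)=r\,w_\rho+(t+q)\,w_\tau\ge \frac{w_\tau}{3}(2r+3t+3q)\ge \frac{w_\tau}{3}\,b_\M(\pi')=k$. Equality throughout forces: (a) $r=0$ whenever $w_\rho>\frac{2}{3}w_\tau$ (so in particular for the models without reversals and whenever $w_\tau/w_\rho<1.5$), while in the boundary case $w_\tau/w_\rho=1.5$ every reversal of $S$ must remove exactly two breakpoints; and (b) every transposition, transreversal and revrev of $S$ must remove exactly three breakpoints. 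The key step is ruling out the ``bad'' operations: let $\beta_i$ be the first operation of $S$ that is not a transposition. Then $\beta_1,\ldots,\beta_{i-1}$ are transpositions, which preserve signs, so the permutation $\sigma=\pi'\comp\beta_1\comp\cdots\comp\beta_{i-1}$ still has every element positive, i.e.\ only positive strips. By Lemma~\ref{cap3:lemma:signed_var_b}, $\Delta b_\M(\sigma,\beta_i)\le 0$ if $\beta_i$ is a reversal and $\Delta b_\M(\sigma,\beta_i)\le 1$ if $\beta_i$ is a transreversal or a revrev, contradicting (a)/(b), which demanded a decrease of $2$ or $3$. Hence $S$ consists only of transpositions, with $3t=b_\M(\pi')=b_\tau(\pi)$; ignoring the (all ``$+$'') signs, these $t=b_\tau(\pi)/3$ transpositions sort $\pi$, so $d_\tau(\pi)\le b_\tau(\pi)/3$, and since the reverse inequality always holds, $d_\tau(\pi)=b_\tau(\pi)/3$. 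The reduction is clearly polynomial, which gives NP-hardness of \pname{SbWR} for all six models.

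I expect the main obstacle to be exactly this converse direction: Lemma~\ref{cap3:lemma:signed_var_b} only bounds the effect of reversals, transreversals and revrevs \emph{while the permutation has only positive strips}, so one cannot apply it blindly inside $S$. The argument must first push the weight/breakpoint inequalities to their tight form (``every operation removes the maximum possible number of breakpoints'') and only then exploit the fact that the first non-transposition operation of $S$ acts on a permutation that is still entirely positive; getting the bookkeeping of $r,t,q$ and the boundary case $w_\tau/w_\rho=1.5$ right is where the care is needed.
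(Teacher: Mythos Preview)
Your proof is correct and follows essentially the same route as the paper: the identical reduction from \pname{B3T} via $\pi'=(+\pi_1\ \ldots\ +\pi_n)$ and $k=w_\tau b_\tau(\pi)/3$, the identification $b_\M(\pi')=b_\tau(\pi)$, the lower bound from Lemma~\ref{cap3:lemma:lb_signed} forcing $w(S)=k$, and the ``first non-transposition'' argument combined with Lemma~\ref{cap3:lemma:signed_var_b}. Your explicit $r,t,q$ bookkeeping and the separate handling of $b_\tau(\pi)\not\equiv 0\pmod 3$ are extra care the paper omits, but the underlying argument is the same.
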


\begin{proof}
  Considere o modelo de rearranjos $\M = \M_6$ nesta prova. A demonstração é similar para os outros modelos, já que a nossa estratégia é mostrar que em uma instância satisfeita (i.e., uma instância em que é possível ordenar $\pi$ com custo menor ou igual a $k$) apenas transposições são usadas para ordenar a permutação $\pi$. Portanto, um argumento similar pode ser usado para os outros modelos, pois eles possuem um subconjunto das operações permitidas em $\M_6$.

  Agora, apresentamos uma redução do problema \pname{B3T} para o problema \pname{SbWR} considerando permutações com sinais. Dada uma instância $\pi = (\pi_1~\ldots~\pi_n)$ para o problema \pname{B3T}, construímos a instância $(\pi', k)$ para \pname{SbWR}, onde $\pi'$ é a permutação com sinais $({+\pi_1}~{+\pi_2}~\ldots~{+\pi_n})$ e $k = w_{\tau} b_{\tau}(\pi)/3$.

  Mostramos que a instância $\pi$ é ordenada usando $b_{\tau}(\pi)/3$ transposições se, e somente se, $d^w_{\M}(\pi') \leq w_{\tau} b_{\tau}(\pi)/3$.

  ($\rightarrow$) Se a permutação $\pi$ é ordenada por uma sequência $S$ de tamanho $b_{\tau}(\pi)/3$, então $S$ também ordena a permutação $\pi'$, já que $\pi'$ possui apenas elementos positivos, e $w(S) = w_\tau b_\tau(\pi)/3$. Note que $S$ possui apenas transposições e cada transposição tem custo $w_\tau$.

  ($\leftarrow$) Se $\pi'$ é ordenada por uma sequência $S$ de custo igual ou menor a $w_\tau b_\tau(\pi)/3$, então afirmamos que $S$ possui apenas transposições e, portanto, $S$ também ordena $\pi$ e $S$ tem tamanho $b_\tau(\pi)/3$.

  Note que o custo mínimo para remover um {\it breakpoint} em $\pi'$ é igual a $\min \{ w_\rho/2,$ $w_\tau/3 \} = w_\tau/3$, já que $w_{\tau}/w_{\rho} \leq 1.5$. Como $\pi'$ possui apenas elementos positivos, $\pi_{i+1} - \pi_i = 1$ se, e somente se, $\pi'_{i+1} - \pi'_i = 1$ e, portanto, $b_{\tau}(\pi) = b_{\M}(\pi')$. Sendo assim, o limitante inferior do Lema~\ref{cap3:lemma:lb_signed} se torna $w_\tau b_{\M}(\pi')/3 = w_\tau b_{\tau}(\pi)/3$.
  
  Temos que $w(S)$ é igual ao limitante inferior $w_\tau b_{\tau}(\pi)/3$ e, consequentemente, todo rearranjo de $S$ deve remover exatamente $w' \times 3/w_\tau$ {\it breakpoints}, onde $w'$ é o custo do rearranjo. Note que uma sequência que ordena $\pi'$ remove $b_{\tau}(\pi)$ {\it breakpoints}. Agora, suponha que $S$ tem algum rearranjo que não é uma transposição. Considere que $S = (\beta_1, \beta_2, \ldots, \beta_{|S|})$. Seja $\beta_i$ o primeiro rearranjo de $S$ a ser aplicado que não é uma transposição, ou seja, todos rearranjos em $(\beta_1, \beta_2, \ldots, \beta_{i-1})$ são transposições, $\beta_i$ não é uma transposição e o valor de $i$ é mínimo. Antes de $\beta_i$ ser aplicado, todas {\it strips} na permutação são positivas, já que apenas transposições foram aplicadas anteriormente e elas não alteram o sinal dos elementos. Pelo Lema~\ref{cap3:lemma:signed_var_b}, uma reversão não remove {\it breakpoints} e uma transposição inversa ou revrev removem no máximo um {\it breakpoint} em permutações que possuem apenas {\it strips} positivas, o que contradiz o fato de que todo rearranjo de $S$ remove $w' \times 3/w_\tau$ {\it breakpoints}. Portanto, $S$ possui apenas transposições e $S$ possui tamanho $b_{\tau}(\pi)/3$.
\end{proof}

\begin{corollary}\label{cap3:theorem:complexity_signed_unweighted}
Para os modelos $\M = \{\M_1, \M_2, \M_3, \M_4, \M_5,$ $ \M_6\}$, temos que o problema da Ordenação de Permutações por Rearranjos \pname{SbR} é NP-difícil para permutações com sinais.
\end{corollary}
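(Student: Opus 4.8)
The plan is to derive Corollary~\ref{cap3:theorem:complexity_signed_unweighted} directly from Theorem~\ref{cap3:theorem:complexity_signed} by observing that the unweighted problem \pname{SbR} is exactly the special case of the weighted problem \pname{SbWR} in which all rearrangements have the same cost. First I would fix an arbitrary weight assignment with $w_\rho = w_\tau$; then the ratio $w_\tau/w_\rho = 1 \leq 1.5$, so the hypothesis of Theorem~\ref{cap3:theorem:complexity_signed} is satisfied for every model $\M \in \{\M_1, \M_2, \M_3, \M_4, \M_5, \M_6\}$. Under this uniform cost function we have $w(S) = w_\tau\,|S|$ for any sequence $S$, hence $d^w_\M(\pi) = w_\tau\, d_\M(\pi)$ and the decision ``$d^w_\M(\pi) \leq k$'' is equivalent to ``$d_\M(\pi) \leq \lfloor k/w_\tau\rfloor$''. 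This is a trivial, cost-preserving reduction from \pname{SbWR} (restricted to equal weights) to \pname{SbR}, showing \pname{SbR} is at least as hard.

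Concretely, I would argue as follows. Given an instance $(\pi, k)$ of \pname{SbWR} with $w_\rho = w_\tau$ (and, if the model omits reversions, the convention $w_\rho = \infty$ is irrelevant since no reversion can appear in any solution), build the instance $(\pi, \lfloor k/w_\tau \rfloor)$ of \pname{SbR}. A sequence $S$ of allowed rearrangements with $w(S) \leq k$ exists if and only if $w_\tau |S| \leq k$, i.e. $|S| \leq \lfloor k/w_\tau \rfloor$, which is precisely the \pname{SbR} acceptance condition. Since the construction is polynomial and Theorem~\ref{cap3:theorem:complexity_signed} already establishes NP-hardness of \pname{SbWR} for these models with $w_\tau/w_\rho \leq 1.5$, it follows that \pname{SbR} is NP-hard for each of $\M_1,\dots,\M_6$ on signed permutations.

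Alternatively — and this may be the cleaner way to present it — one can simply inspect the proof of Theorem~\ref{cap3:theorem:complexity_signed}: the reduction there sends a \pname{B3T} instance $\pi$ to $(\pi', k)$ with $k = w_\tau\, b_\tau(\pi)/3$, and the correctness argument shows that any cost-$\leq k$ solution consists solely of transpositions of cost $w_\tau$, hence has length exactly $b_\tau(\pi)/3$. Setting $w_\rho = w_\tau = 1$ collapses this to a reduction from \pname{B3T} to \pname{SbR} with bound $k = b_\tau(\pi)/3$, which gives NP-hardness of the unweighted problem verbatim. I would phrase the corollary's proof as: ``Take $w_\rho = w_\tau$ in Theorem~\ref{cap3:theorem:complexity_signed}; then the weighted and unweighted problems coincide, so the NP-hardness is preserved.''

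There is essentially no obstacle here — the statement is a routine corollary. The only minor point to be careful about is the handling of models that exclude reversions (where $w_\rho = \infty$): one must note that this convention plays no role once reversions are anyway forbidden, so the equal-weight specialization is legitimate for all six models. Everything else is immediate from the linearity of $w$ under a uniform cost and the already-proved Theorem~\ref{cap3:theorem:complexity_signed}.
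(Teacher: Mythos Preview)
Your proposal is correct and matches the paper's intended reasoning: the paper states the corollary without proof, relying on the remark (made just before the problem definitions) that ``Quando $w_{\rho} = w_{\tau}$, esse problema é equivalente à abordagem não ponderada.'' Your specialization $w_\rho = w_\tau$ (yielding $w_\tau/w_\rho = 1 \leq 1.5$) is exactly this observation, and your handling of the reversal-free models is a correct clarification of a point the paper leaves implicit.
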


\begin{lemma}\label{cap3:lemma:lb_unsigned}
  Para qualquer permutação sem sinais $\pi$, pesos $w_{\rho}$ e $w_{\tau}$, e modelo $\M \in \{{\M_1}, {\M_2}, {\M_3}, {\M_4}, {\M_5}, {\M_6}\}$, temos que
  $$d^w_{\M}(\pi) \geq \min\left\{\frac{w_\rho}{2}, \frac{w_\tau}{3}\right\} b_{\M}(\pi).$$
\end{lemma}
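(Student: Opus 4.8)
The plan is to mirror the argument of Lemma~\ref{cap3:lemma:lb_signed} almost verbatim, the only new point being that for an unsigned permutation every model in $\{\M_1, \ldots, \M_6\}$ contains an operation that reverses a segment, so the relevant notion of breakpoint is the one of Definition~\ref{cap2:def:breakpoint_r} (\emph{breakpoint de reversões sem sinais}), and I must check that the per-operation bounds on $\Delta b_\M$ still hold under this definition.

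First I would observe that a reversal $\rho(i,j)$ modifies $\pi$ only at the two boundary adjacencies $(\pi_{i-1},\pi_i)$ and $(\pi_j,\pi_{j+1})$: each internal adjacency $(\pi_k,\pi_{k+1})$ with $i \leq k < j$ is replaced by $(\pi_{k+1},\pi_k)$, and since the condition $|\pi_{k+1}-\pi_k| \neq 1$ is symmetric, reversing the internal order neither creates nor destroys a breakpoint inside the segment. Hence at most two breakpoints can be added or removed, i.e.\ $-2 \leq \Delta b_\M(\pi,\rho) \leq 2$. Next I would note that a transposition $\tau(i,j,k)$, a transreversal $\transrev$, or a revrev $\revrev$ alters only three boundary adjacencies (the three points where the two moved/reversed blocks meet their neighbours), while the internal order of each block is either preserved or reversed, which again — by the same symmetry of the unsigned-reversal breakpoint condition — changes no internal breakpoint. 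Therefore $-3 \leq \Delta b_\M(\pi,\beta) \leq 3$ for any such $\beta$.

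From these bounds the minimum cost of removing a single breakpoint is $\min\{w_\rho/2, w_\tau/3\}$. Since the unsigned identity $\iota^n$ is the only permutation with zero breakpoints (Section~\ref{cap2:sec:breakpoints_permutacao}), any sequence $S$ that sorts $\pi$ must remove exactly $b_\M(\pi)$ breakpoints; distributing this removal over the operations of $S$ and using the per-operation cost lower bound gives $w(S) \geq \min\{w_\rho/2, w_\tau/3\}\, b_\M(\pi)$, and minimizing over all sorting sequences yields $d^w_\M(\pi) \geq \min\{w_\rho/2, w_\tau/3\}\, b_\M(\pi)$. No genuine obstacle arises here; the only thing to be careful about is the switch in breakpoint definition relative to the signed case, which is why I would explicitly record that reversing a block's internal order is breakpoint-neutral for unsigned reversal breakpoints.
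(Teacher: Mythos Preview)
Your proposal is correct and matches the paper's approach: the paper's proof is literally ``Similar à prova do Lema~\ref{cap3:lemma:lb_signed}'', and you have spelled out exactly that argument with the appropriate change to unsigned reversal breakpoints. The extra care you take to note that the symmetry of the condition $|\pi_{k+1}-\pi_k|\neq 1$ makes internal reversals breakpoint-neutral is a useful elaboration but not a departure from the intended proof.
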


\begin{proof}
  Similar à prova do Lema~\ref{cap3:lemma:lb_signed}.
\end{proof}

\begin{lemma}\label{cap3:lemma:unsigned_var_b}
  Para qualquer permutação sem sinais $\pi$ tal que $\pi$ possui apenas {\it strips} crescentes, temos que: 
  \begin{itemize}
     \item $\Delta b_{\rho}(\pi, \rho) \leq 0$, para qualquer reversão ${\rho}$;
     \item $\Delta b_{\transrev}(\pi, \transrev) \leq 1$, para qualquer transposição inversa $\transrev$;
     \item $\Delta b_{\revrev}(\pi, \revrev) \leq 1$, para qualquer revrev ${\revrev}$.
  \end{itemize}
\end{lemma}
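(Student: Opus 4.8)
The plan is to mimic the proof of Lemma~\ref{cap3:lemma:signed_var_b}, replacing the sign-based reasoning used there (a reversed element turns negative and therefore cannot sit consecutively to an unaffected, positive element) by an argument that exploits the combinatorics of increasing strips together with the fact that every value occurs at a unique position. The starting observation is the following: if $\pi$ has only increasing strips, then every non-breakpoint adjacency $(\pi_k,\pi_{k+1})$ satisfies $\pi_{k+1}=\pi_k+1$ (it lies inside a strip, which is increasing, and not being a breakpoint forces $|\pi_{k+1}-\pi_k|=1$); in particular no adjacency of $\pi$ has $\pi_{k+1}=\pi_k-1$. Thus, whenever a pair of adjacent values differs in absolute value by more than $1$, or differs by exactly $1$ but ``the wrong way'', that pair is automatically a breakpoint. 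I would record this as a small auxiliary fact to be invoked uniformly in all three cases.

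For the reversal, write $\pi'=\pi\comp\rho(i,j)$ with $i\le j$ (the case $i=j$ being trivial). Since $|x-y|$ is invariant under reversing a block, only the two boundary adjacencies change: $(\pi_{i-1},\pi_i)$ and $(\pi_j,\pi_{j+1})$ in $\pi$ become $(\pi_{i-1},\pi_j)$ and $(\pi_i,\pi_{j+1})$ in $\pi'$. The key step is to show that each new boundary adjacency that is a non-breakpoint forces one of the old boundary adjacencies to have been a breakpoint: if $(\pi_{i-1},\pi_j)$ is a non-breakpoint then $\pi_j\in\{\pi_{i-1}-1,\pi_{i-1}+1\}$, and in either subcase the value $\pi_{i-1}\pm 1$ already sits at position $j$ (or can be located away from position $i$, resp.\ $j+1$), so it cannot equal $\pi_i$, resp.\ $\pi_{j+1}$; combined with the auxiliary fact that no adjacency decreases by $1$, this pins $(\pi_{i-1},\pi_i)$ or $(\pi_j,\pi_{j+1})$ down as a breakpoint. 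Matching up these implications subcase by subcase should give that the number of breakpoints among the new boundary pairs is at least the number among the old ones, i.e.\ $\Delta b_{\rho}(\pi,\rho)\le 0$; verifying that the matching of implications always closes the inequality is the delicate point here.

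For the transposition inverse and the revrev the same bookkeeping is repeated, now with three boundary adjacencies changing instead of two — the interior of every reversed or merely relocated block keeps its breakpoint status because $|x-y|$ is unchanged. For $\transrev_1(i,j,k)$ the three boundaries $(\pi_{i-1},\pi_i),(\pi_{j-1},\pi_j),(\pi_{k-1},\pi_k)$ become $(\pi_{i-1},\pi_j),(\pi_{k-1},\pi_{j-1}),(\pi_i,\pi_k)$; analogously for $\transrev_2$, and for $\revrev(i,j,k)$ the new boundaries are $(\pi_{i-1},\pi_{j-1}),(\pi_i,\pi_{k-1}),(\pi_j,\pi_k)$. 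Using the position-uniqueness argument I would show that at least two of the three new boundary adjacencies must be breakpoints, so at most one breakpoint can be removed in net, which yields $\Delta b_{\transrev}(\pi,\transrev)\le 1$ and $\Delta b_{\revrev}(\pi,\revrev)\le 1$. I expect the main obstacle to be exactly this last step: the case analysis branches on the sign of the value-difference ($+1$ versus $-1$) at each of the three new boundaries, and in several branches one must chase the location of a specific value to reach a contradiction; handling both $\transrev_1$ and $\transrev_2$ (and the two reversed blocks of $\revrev$) roughly triples the number of subcases relative to the reversal, so the plan is to isolate a single position-uniqueness statement strong enough to be applied mechanically in each branch.
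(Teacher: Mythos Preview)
Your matching step for the reversal case cannot close, because the claimed bound $\Delta b_{\rho}(\pi,\rho)\le 0$ is actually false. Take $\pi=(1,2,3,6,7,8,4,5)$, whose strips $(0,1,2,3)$, $(6,7,8)$, $(4,5)$, $(9)$ are all increasing, and apply $\rho(4,7)$ to get $\pi'=(1,2,3,4,8,7,6,5)$. At the two boundary adjacencies we pass from one breakpoint (namely $(3,6)$; the pair $(4,5)$ is not a breakpoint) to zero (both $(3,4)$ and $(6,5)$ are non-breakpoints), so $\Delta b_{\rho}(\pi,\rho)=1$. This is precisely the collision you flagged as ``delicate'': here the left new non-breakpoint satisfies $\pi_j=\pi_{i-1}+1$ and the right new non-breakpoint satisfies $\pi_{j+1}=\pi_i-1$, and each of your implications forces only the \emph{same} old adjacency $(\pi_{i-1},\pi_i)$ to be a breakpoint, so two new non-breakpoints purchase only one old breakpoint.

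The paper's own argument fails at the same spot. It asserts that ``all strips in $(\pi'_i,\ldots,\pi'_j)$ are decreasing'' and then derives a contradiction from the increasing strip through $\pi_{i-1}$ crossing into the reversed block; but if $(\pi_{j-1},\pi_j)$ is a breakpoint in $\pi$ (so $\pi_j$ is the left end of its strip) the resulting strip in $\pi'$ touches the reversed block at a single element and no direction conflict arises --- exactly the situation in the counterexample above. What your case analysis \emph{can} be pushed to prove is the weaker bound $\Delta b_{\rho}(\pi,\rho)\le 1$ (in each of the four sign-pattern cases either a contradiction or a forced old non-breakpoint appears), and this already suffices for Theorem~\ref{cap3:theorem:wsr_unsigned}, since there a reversal would have to remove $3w_\rho/w_\tau\ge 2$ breakpoints to match the optimal ratio. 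Be equally cautious with the transreversal and revrev parts: the paper's intermediate claim that two specific new boundaries are always breakpoints rests on the same ``reversed block is decreasing'' reasoning and has the same single-element loophole, so verify the $\le 1$ bounds directly rather than via that stronger assertion.
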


\begin{proof}
  Considere uma reversão $\rho$ e seja $\pi' = \pi \comp \rho(i,j) = (\pi_1~\ldots~\pi_{i-1}~\underline{{\pi_j}~\ldots~{\pi_i}}~\pi_{j+1}$ $\ldots~\pi_n)$. {\revisaof Suponha, por contradição, que $\Delta b_{\rho}(\pi, \rho) > 0$, o que indica que ou $(\pi_{i-1}, \pi_j)$ não é um {\it breakpoint} ou $(\pi_i, \pi_{j+1})$ não é um {\it breakpoint}.}

  Note que todas as {\it strips} em $(\pi'_i, \ldots, \pi'_j)$ são decrescentes, pois $\pi$ possui apenas {\it strips} crescentes. Seja $\sigma = (\pi_{i'}, \ldots, \pi_{i-1})$ a {\it strip} crescente contendo o elemento $\pi_{i-1}$ em $\pi$. Se $(\pi'_{i-1}, \pi'_{i})$ não é um {\it breakpoint}, então a {\it strip} $\sigma$ se torna a {\it strip} $\sigma' = (\pi_{i'}, \ldots, \pi_{i-1}, \pi_{j}, \ldots, \pi_{j'})$ em $\pi'$. Como $\sigma$ é crescente, temos que $\pi_{i-1} = \pi_0$ ou $i' < i - 1$. Então, a {\it strip} $\sigma'$ deve ser uma {\it strip} crescente, o que contradiz o fato de que as {\it strips} em $(\pi'_i, \ldots, \pi'_j)$ são todas decrescentes. Alcançamos uma contradição semelhante se $(\pi'_{j}, \pi'_{j+1})$ não é um {\it breakpoint}. Portanto, temos que $\Delta b_{\rho}(\pi, \rho) \leq 0$.

  Considere uma transposição inversa tipo 1 $\transrev_1$ e seja $\pi' = \pi \comp \transrev_1(i,j,k) =$ $(\pi_1~\ldots~\pi_{i-1}$ $\underline{\pi_j~\ldots~\pi_{k-1}}~\underline{\pi_{j-1}~\ldots~\pi_{i}}$ $\pi_{k}~\ldots~\pi_n)$. Usando um argumento similar ao apresentado para reversões, temos que os pares $(\pi_{k-1}, \pi_{j-1})$ e $(\pi_i, \pi_k)$ são {\it breakpoints} e apenas o par $(\pi_{i-1}, \pi_j)$ pode não ser um {\it breakpoint}. Portanto, temos que $\Delta b_{\transrev}(\pi, \transrev) \leq 1$. Usamos um argumento similar para uma transposição inversa tipo 2.

  Considere uma revrev $\revrev$ e seja $\pi' = \pi \comp \revrev(i,j,k) = (\pi_1~\ldots~\pi_{i-1}$ $\underline{\pi_{j-1}~\ldots~\pi_{i}}~\underline{\pi_{k-1}~\ldots~\pi_{j}}$ $\pi_{k}~\ldots~\pi_n)$. Usando um argumento similar ao apresentado para reversões, temos que os pares $(\pi_{i-1}, \pi_{j-1})$ e $(\pi_j, \pi_k)$ devem ser {\it breakpoints} e apenas o par $(\pi_{i}, \pi_{k-1})$ pode não ser um {\it breakpoint}. Portanto, temos que $\Delta b_{\revrev}(\pi, \revrev) \leq 1$.
\end{proof}

\begin{theorem}\label{cap3:theorem:wsr_unsigned}
Para os modelos $\M = \{\M_1, \M_2, \M_3, \M_4, \M_5,$ $ \M_6\}$ e $w_{\tau}/w_{\rho} \leq 1.5$, temos que o problema da Ordenação de Permutações por Rearranjos Ponderados \pname{SbWR} é NP-difícil para permutações sem sinais.
\end{theorem}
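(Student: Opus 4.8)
\section*{Plano de demonstração (proposta)}

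O plano é seguir o modelo da prova do Teorema~\ref{cap3:theorem:complexity_signed}, fazendo uma redução do problema NP-difícil \pname{B3T}, porém com a modificação que o caso sem sinais exige: em vez de mapear a instância $\pi$ de \pname{B3T} nela mesma, vamos \emph{duplicar} cada um de seus elementos. Dada uma instância $\pi = (\pi_1~\ldots~\pi_n)$ de \pname{B3T}, construímos a permutação sem sinais $\pi'$ de tamanho $2n$ substituindo o valor $v$ que ocorre na posição $p$ pelo par de valores consecutivos $2v-1, 2v$ nas posições $2p-1, 2p$, ou seja, $\pi' = (2\pi_1 - 1~2\pi_1~2\pi_2 - 1~2\pi_2~\ldots~2\pi_n - 1~2\pi_n)$, e definimos $k = w_{\tau}\, b_{\tau}(\pi)/3$; a redução devolve a instância $(\pi', k)$ de \pname{SbWR}, claramente computável em tempo polinomial. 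Dois fatos estruturais guiam todo o argumento. Primeiro, $b_{\M}(\pi') = b_{\tau}(\pi)$: nenhum {\it breakpoint} ocorre dentro de um bloco $(2v-1, 2v)$, e o espaço entre os blocos de dois valores $\pi$-consecutivos $a$ e $b$ é um {\it breakpoint} de $\pi'$ exatamente quando $b \neq a+1$, que é precisamente a condição que define um {\it breakpoint} de transposições de $\pi$ (uma breve verificação por casos, incluindo os elementos sentinela $0$ e $2n+1$, completa isso). Segundo --- e essa é a razão pela qual a duplicação é necessária aqui, mas não era no caso com sinais --- \emph{toda {\it strip} de $\pi'$ é crescente}, pois cada sequência maximal sem {\it breakpoints} é uma concatenação de blocos de valores consecutivos, logo uma sequência de inteiros consecutivos em ordem crescente.

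A direção direta é imediata: se $\pi$ é uma instância positiva, uma sequência ótima $S$ de transposições que ordena $\pi$, com $|S| = b_{\tau}(\pi)/3$, se traduz índice a índice na sequência $S' = (\tau(2a-1, 2b-1, 2c-1) : \tau(a,b,c) \in S)$, que ordena $\pi'$ transformando-a em $\iota^{2n}$ (a duplicação de $\iota^n$) usando $b_{\tau}(\pi)/3$ transposições; logo $d^w_{\M}(\pi') \leq w_{\tau}\, b_{\tau}(\pi)/3 = k$. Para a direção inversa, reutilizaria o argumento da prova do Teorema~\ref{cap3:theorem:complexity_signed}: pelo Lema~\ref{cap3:lemma:lb_unsigned} e por $w_{\tau}/w_{\rho} \leq 1.5$, toda sequência que ordena $\pi'$ tem custo no mínimo $(w_{\tau}/3)\, b_{\M}(\pi') = k$, de modo que uma sequência $S'$ com $w(S') \leq k$ satisfaz $w(S') = k$ e, consequentemente, cada operação de $S'$ remove exatamente $3w'/w_{\tau}$ {\it breakpoints}, onde $w'$ é o seu custo --- ou seja, toda transposição remove $3$ {\it breakpoints}, toda transposição inversa ou revrev remove $3$ {\it breakpoints}, e toda reversão remove $2$ {\it breakpoints} (o último caso só possível quando $w_{\tau}/w_{\rho} = 1.5$).

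O ingrediente novo, e que espero ser o principal obstáculo, é mostrar que a permutação alcançada imediatamente antes da primeira operação de $S'$ que não é uma transposição ainda tem apenas {\it strips} crescentes, de modo que o Lema~\ref{cap3:lemma:unsigned_var_b} se aplica: uma reversão removeria $\leq 0$ {\it breakpoints} e uma transposição inversa ou revrev removeria $\leq 1$ {\it breakpoint}, ambos contradizendo a igualdade $w(S') = k$. Na prova com sinais isso era imediato, pois transposições não alteram sinais; aqui eu provaria, por indução, que uma transposição que remove $3$ {\it breakpoints} leva uma permutação duplicada em uma permutação duplicada: em uma permutação duplicada todos os {\it breakpoints} ocorrem em posições ímpares, portanto uma transposição que remove $3$ {\it breakpoints} precisa cortar em três posições ímpares e, assim, permuta grupos inteiros de blocos consecutivos, preservando a estrutura duplicada --- e permutações duplicadas têm apenas {\it strips} crescentes. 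Segue que $S'$ é formada por $b_{\tau}(\pi)/3$ transposições que agem sobre blocos inteiros, as quais se projetam em uma sequência de transposições de mesmo tamanho que ordena $\pi$; portanto $d_{\tau}(\pi) \leq b_{\tau}(\pi)/3$, e junto com o limitante conhecido $d_{\tau}(\pi) \geq b_{\tau}(\pi)/3$ obtemos $d_{\tau}(\pi) = b_{\tau}(\pi)/3$, isto é, $\pi$ é uma instância positiva de \pname{B3T}. Por fim, observaria que para $\M \in \{\M_1, \ldots, \M_5\}$ o mesmo argumento se aplica, apenas omitindo as operações ausentes do modelo (e lendo $w_{\rho} = \infty$ quando $\rho \notin \M$), o que conclui a prova de NP-dificuldade para os seis modelos.
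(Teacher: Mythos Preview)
Sua proposta está correta e segue essencialmente a mesma rota da prova do artigo: mesma redução de \pname{B3T} via duplicação $\pi'_{2i-1}=2\pi_i-1$, $\pi'_{2i}=2\pi_i$, mesmo $k=w_\tau b_\tau(\pi)/3$, mesma verificação $b_\M(\pi')=b_\tau(\pi)$, e mesma estratégia na volta --- aplicar o Lema~\ref{cap3:lemma:lb_unsigned} para forçar que toda operação remova $3w'/w_\tau$ {\it breakpoints} e usar o Lema~\ref{cap3:lemma:unsigned_var_b} sobre a primeira operação que não é transposição. A única diferença é de apresentação no ponto central da volta: o artigo afirma diretamente que ``transposições que removem três {\it breakpoints} não criam {\it strips} decrescentes'' e, ao final, observa que tais transposições não quebram os pares $(\pi'_{2i-1},\pi'_{2i})$; você chega à mesma conclusão argumentando que toda transposição que remove três {\it breakpoints} corta em três posições ímpares e portanto preserva a estrutura duplicada. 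Ambos os argumentos são válidos e levam à mesma projeção de $S'$ para uma sequência de transposições que ordena $\pi$.
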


\begin{proof}
  Considere o modelo de rearranjos $\M = \M_6$ nesta prova. A demonstração é similar para os outros modelos, já que a nossa estratégia é mostrar que em uma instância satisfeita (i.e., uma instância em que é possível ordenar $\pi$ com custo menor ou igual a $k$) apenas transposições são usadas para ordenar a permutação $\pi$. Portanto, um argumento similar pode ser usado para os outros modelos, pois eles possuem um subconjunto das operações permitidas em $\M_6$.

  Para esta prova, também apresentamos uma redução do problema \pname{B3T} para o problema \pname{SbWR}, mas considerando permutações sem sinais. Dada uma instância $\pi = (\pi_1~\ldots~\pi_n)$ para o problema \pname{B3T}, construímos a instância $(\pi', k)$ para \pname{SbWR}, onde $\pi'$ é uma permutação sem sinais com $2n$ elementos tal que $\pi'_{2i-1} = 2\pi_{i} - 1$ e $\pi'_{2i} = 2\pi_{i}$, com $1 \leq i \leq n$, e $k = w_{\tau} b_{\tau}(\pi)/3$.

  Mostramos que a instância $\pi$ é ordenada por $b_\tau(\pi)/3$ transposições se, e somente se, $d^w_{\M}(\pi') \leq w_{\tau} b_{\tau}(\pi)/3$.

  ($\rightarrow$) Se $\pi$ é ordenada por uma sequência $S = (\tau_1, \tau_2, \ldots, \tau_{|S|})$ com $|S| = b_\tau(\pi)/3$, então construímos uma sequência $S' = (\tau'_1, \tau'_2, \ldots, \tau'_{|S|})$ tal que, para toda transposição $\tau_i = \tau(x,y,z)$ em $S$, temos que $\tau'_i = \tau(2x-1, 2y-1, 2z-1)$. Como todo elemento de $\pi$ foi mapeado em dois elementos consecutivos em $\pi'$, a sequência $S'$ ordena $\pi'$ e $w(S') = w_\tau b_{\tau}(\pi)/3$.

  ($\leftarrow$) Se $\pi'$ é ordenada por uma sequência $S'$ com $w(S') \leq w_{\tau} b_{\tau}(\pi)/3$, então afirmamos que existe uma sequência $S$ tal que $S$ possui apenas transposições, $S$ ordena $\pi$, e $|S| = b_{\tau}(\pi)/3$. 

  Como $\pi'_{2i-1}$ e $\pi'_{2i}$ são elementos consecutivos, os pares $(\pi'_{2i-1},\pi'_{2i})$ não são {\it breakpoints}, para qualquer $1 \leq i \leq n$. Assim, temos que todas as {\it strips} de $\pi'$ são crescentes pois $\pi'_{2i} > \pi'_{2i-1}$. Além disso, para $0 \leq i \leq n$, temos que:
  \begin{itemize}
      \item se $\pi_{i+1} - \pi_i = 1$, então $\pi'_{2i+1} - \pi'_{2i} = 2\pi_{i+1} - 1 - 2\pi_{i} = 1$;
      \item se $\pi_{i+1} - \pi_i > 1$, então $\pi'_{2i+1} - \pi'_{2i} = 2\pi_{i+1} - 1 - 2\pi_{i} = 2(\pi_{i+1} - \pi_i) - 1 > 1$;
      \item se $\pi_{i+1} - \pi_i < 1$, então $\pi'_{2i+1} - \pi'_{2i} = 2\pi_{i+1} - 1 - 2\pi_{i} = 2(\pi_{i+1} - \pi_i) - 1 < 1$.
  \end{itemize}

  Dessa forma, $b_{\tau}(\pi) = b_{\M}(\pi')$. Note que o custo mínimo para remover um {\it breakpoint} em $\pi'$ é igual a $\min \{ w_\rho/2, w_\tau/3 \} = w_\tau/3$, pois $w_\tau/w_\rho \leq 1.5$. Então, o limitante inferior do Lema~\ref{cap3:lemma:lb_unsigned} se torna $w_\tau b_{\M}(\pi')/3 = w_\tau b_{\tau}(\pi)/3$.

  Temos que $w(S')$ é igual ao limitante inferior $w_\tau b_{\tau}(\pi)/3$ e, consequentemente, todo rearranjo de $S'$ deve remover exatamente $w' \times 3/w_\tau$ {\it breakpoints}, onde $w'$ é o custo do rearranjo. 

  Suponha, por contradição, que existe rearranjo em $S'$ que não seja uma transposição. Considere $S' = (\beta_1, \beta_2, \ldots, \beta_\ell)$ e seja $\beta_i$ o primeiro rearranjo de $S'$ a ser aplicado que não é uma transposição, ou seja, todos rearranjos em $(\beta_1, \beta_2, \ldots, \beta_{i-1})$ são transposições, $\beta_i$ não é uma transposição e o valor de $i$ é mínimo. Antes de $\beta_i$ ser aplicado, todas as {\it strips} na permutação são crescentes pois transposições que removem três {\it breakpoints} não criam {\it strips} decrescentes. 

  Pelo Lema~\ref{cap3:lemma:unsigned_var_b}, $\beta_i$ não remove $w' \times 3/w_\tau$ {\it breakpoints}, onde $w'$ é o custo de $\beta_i$, o que é uma contradição. Portanto, $S'$ possui apenas transposições.

  Como $w(S') = w_\tau b_{\tau}(\pi)/3$, a sequência $S'$ tem $b_{\tau}(\pi)/3$ transposições. Note que as transposições de $S'$ não quebram os pares $(\pi'_{2i-1}, \pi'_{2i})$, para $1 \leq i \leq n$, pois cada transposição remove três {\it breakpoints}.

  Considere $S' = (\tau'_1, \tau'_2, \ldots, \tau'_{|S'|})$, com $|S'| = b_{\tau}(\pi)/3$. Agora, construímos uma sequência $S = (\tau_1, \tau_2, \ldots, \tau_{|S'|})$ que ordena $\pi$, tal que $\tau_i = \tau((x+1)/2, (y+1)/2, (z+1)/2)$ para $\tau'_i = \tau(x,y,z)$, com $1 \leq i \leq |S'|$.
\end{proof}

\begin{corollary}\label{cap3:theorem:complexity_unsigned_unweighted}
Para os modelos $\M = \{\M_1, \M_2, \M_3, \M_4, \M_5,$ $ \M_6\}$, temos que o problema da Ordenação de Permutações por Rearranjos \pname{SbR} é NP-difícil para permutações sem sinais.
\end{corollary}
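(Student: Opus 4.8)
O plano é obter o resultado como uma especialização imediata do Teorema~\ref{cap3:theorem:wsr_unsigned}, que já estabelece a NP-dificuldade da versão ponderada \pname{SbWR} para permutações sem sinais sempre que $w_{\tau}/w_{\rho} \leq 1.5$. Primeiro, escolho a função de custo unitária: para os modelos que contêm reversões ($\M_2$, $\M_4$ e $\M_6$) tomo $w_{\rho} = w_{\tau} = 1$, e para os modelos sem reversões ($\M_1$, $\M_3$ e $\M_5$) tomo $w_{\tau} = 1$, com $w_{\rho} = \infty$ pela convenção adotada no texto. Em ambos os casos, toda operação permitida no modelo tem custo $1$ (lembrando que transposições, transposições inversas e revrevs compartilham o custo $w_{\tau}$), de modo que para qualquer sequência $S$ vale $w(S) = |S|$; logo $d^w_{\M}(\pi) = d_{\M}(\pi)$ e decidir se $d^w_{\M}(\pi) \leq k$ coincide exatamente com o problema \pname{SbR}.

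Em seguida, verifico que a hipótese do Teorema~\ref{cap3:theorem:wsr_unsigned} é atendida: quando $w_{\rho} = w_{\tau} = 1$ tem-se $w_{\tau}/w_{\rho} = 1 \leq 1.5$, e quando $w_{\rho} = \infty$ tem-se $w_{\tau}/w_{\rho} = 0 \leq 1.5$. Resta observar que a redução do problema \pname{B3T} construída na prova do Teorema~\ref{cap3:theorem:wsr_unsigned} permanece polinomial sob essa escolha de pesos: o parâmetro de saída $k = w_{\tau}\,b_{\tau}(\pi)/3$ torna-se $k = b_{\tau}(\pi)/3$, ainda um inteiro calculável em tempo polinomial a partir de $\pi$ (pois contar {\it breakpoints} de transposição é imediato), e a permutação $\pi'$ produzida na redução não depende dos pesos. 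Portanto, a mesma redução testemunha a NP-dificuldade de \pname{SbR} para cada um dos modelos $\M_1$ a $\M_6$.

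Não espero obstáculo técnico neste passo, pois é uma especialização direta do teorema anterior. O único ponto que exige atenção é garantir que, nos modelos sem reversões, a atribuição $w_{\rho} = \infty$ não comprometa nem a redução nem a desigualdade $w_{\tau}/w_{\rho} \leq 1.5$; isso segue da convenção, já estabelecida no início da seção, de que rearranjos ausentes do modelo recebem custo infinito, tornando a hipótese trivialmente verdadeira. Combinando o argumento para o caso com reversões ($w_{\rho} = w_{\tau} = 1$) e para o caso sem reversões ($w_{\rho} = \infty$), conclui-se a NP-dificuldade de \pname{SbR} em permutações sem sinais para todos os seis modelos.
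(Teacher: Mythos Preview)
Your proposal is correct and takes essentially the same approach as the paper: the corollary follows by specializing Teorema~\ref{cap3:theorem:wsr_unsigned} to unit costs, using the observation (stated explicitly in the text) that when $w_{\rho} = w_{\tau}$ the weighted problem coincides with the unweighted one, together with the convention $w_{\rho} = \infty$ for models without reversals. Your treatment is in fact more detailed than the paper's, which leaves the corollary without an explicit proof.
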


\section{Conclusões}

Neste capítulo estudamos problemas de distância de rearranjos utilizando a representação clássica de genomas que envolvem transposições, ou seja, estudamos problemas de Ordenação de Permutações por Transposições e Outros Rearranjos. 

Na Seção~\ref{cap3:secao_novo_algoritmo}, nós apresentamos um novo algoritmo de aproximação com fator de $1.375$ para a Ordenação de Permutações por Transposições. Esse algoritmo tem complexidade de tempo de $O(n^5)$, o que representa uma melhoria em relação ao algoritmo recém publicado por Silva e coautores~\cite{2022-silva-etal} que corrige um problema no algoritmo de Elias e Hartman~\cite{2006-elias-hartman}, sendo esse último um dos algoritmos mais conhecidos da área de rearranjos de genomas. Além da melhoria na complexidade de tempo, nossos testes em permutações pequenas mostraram que o nosso algoritmo também apresenta melhoria na qualidade das soluções encontradas, sendo que o nosso algoritmo encontrou soluções ótimas em mais casos do que os outros dois algoritmos, além de apresentar um valor médio do fator de aproximação menor do que os outros dois algoritmos.

Na Seção~\ref{cap3:secao_complexidade}, nós demonstramos que os problemas de Ordenação de Permutações (com ou sem Sinais) por Rearranjos Ponderados são NP-difíceis para $12$ modelos de rearranjos que incluem transposições junto com a combinação de reversões, transposições inversas e revrevs, considerando que o custo de uma reversão é igual a $w_{\rho}$, os custos de uma transposição, de uma transposição inversa ou de uma revrev são os mesmos e iguais a $w_{\tau}$, e atendendo a restrição de que $w_{\tau}/w_{\rho} \leq 1.5$. Note que quando $w_{\tau}/w_{\rho} = 1$, a versão ponderada é equivalente a não ponderada. Portanto, a prova de NP-dificuldade também é válida para a Ordenação de Permutações (com ou sem Sinais) por Rearranjos considerando esses $12$ modelos.

Quando $w_{\tau}/w_{\rho} > 1.5$, a complexidade desses problemas permanece aberta, assim como a complexidade do problema de Ordenação de Permutações por Reversões e Transposições Ponderadas com a mesma restrição de pesos. Uma direção para trabalhos futuros é o estudo da complexidade nesses casos.

\chapter{Distância em Genomas Desbalanceados}\label{cap:label:indel}

{\revisaof Os primeiros trabalhos da área de rearranjo de genomas envolveram apenas permutações. A partir do ano 1999, trabalhos considerando genomas com conjuntos distintos de genes foram introduzidos~\cite{sankoff1999genome}.} Em 2000, El-Mabrouk~\cite{2000-el-mabrouk} estudou o problema da Distância de Reversões e Indels em Strings com Sinais, apresentando heurísticas baseadas no algoritmo exato de Hannenhalli e Pevzner~\cite{1999-hannenhalli-pevzner} para a versão do problema com permutações. 

Yancopoulos e coautores~\cite{yancopoulos2005efficient} estudaram uma operação de rearranjo chamada DCJ (\textit{Double-Cut-and-Join}). Um DCJ consegue simular reversões e outras operações de rearranjos, mas algumas operações de rearranjo, como as transposições, transposições inversas, revrevs e {\it block interchanges}, não podem ser reproduzidas com uma única operação de DCJ, sendo necessário o uso de duas ou mais operações de DCJ~\cite{2009-fertin-etal}. Tanto a Ordenação de Permutações por DCJs~\cite{bergeron2006unifying} quanto a Distância de DCJs e Indels~\cite{braga2011double} possuem algoritmos polinomiais exatos.  

Usando como base os resultados para a Distância de DCJs e Indels~\cite{braga2011double}, Willing e coautores~\cite{2013-willing-etal} criaram algoritmos polinomiais exatos para a Distância de Reversões e Indels em Strings com Sinais considerando classes específicas de grafos de {\it breakpoints}. Apenas recentemente, em 2020, Willing e coautores~\cite{2021-willing-etal} estenderam o algoritmo anterior e desenvolveram um algoritmo exato polinomial que funciona para qualquer instância do problema da Distância de Reversões e Indels em Strings com Sinais.

Neste capítulo, apresentamos algoritmos de aproximação para a Distância de Rearranjos em Strings com ou sem Sinais para reversões, transposições, a combinação de reversões e transposições, {\it block interchanges}, e a combinação de reversões e {\it block interchanges}. Além disso, exceto para os modelos com {\it block interchanges} e o modelo com reversões para strings com sinais, apresentamos demonstrações que esses problemas são NP-difíceis.

\section{Complexidade dos Problemas}\label{cap4:section:complexidade}

Nesta seção, apresentamos provas de NP-dificuldade para problemas de Distância de Rearranjos em Genomas Desbalanceados considerando os seguintes modelos de rearranjos:

\begin{itemize}
	\item $\Mindel_{\rho}=\{\rho, \psi, \phi\}$: reversões e {\it indels} em strings sem sinais;
	\item $\Mindel_{\tau}=\{\tau, \psi, \phi\}$: transposições e {\it indels} em strings sem sinais;
	\item $\Mindel_{\rho,\tau}=\{\rho, \tau, \psi, \phi\}$: reversões, transposições, e {\it indels} em strings com ou sem sinais.
\end{itemize}

\begin{lemma}\label{cap4:lemma:complexidade}
	O problema de Distância de Rearranjos é NP-difícil para os modelos $\Mindel_{\rho}$ e $\Mindel_{\tau}$, considerando strings sem sinais, e para o modelo $\Mindel_{\rho,\tau}$, considerando strings com ou sem sinais.
\end{lemma}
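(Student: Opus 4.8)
The plan is to reduce from the NP-hard problem \pname{B3T} (Sorting Unsigned Permutations by Optimal Transpositions) in exactly the same spirit as the complexity proofs already carried out in Chapter~\ref{cap:label:transposicao} (Theorems~\ref{cap3:theorem:complexity_signed} and~\ref{cap3:theorem:wsr_unsigned}). The key conceptual obstacle compared to those proofs is that the models here contain \emph{indels}, which are non-conservative operations that can change the alphabet and the length of the string; so the first order of business is to rule out indels from any optimal sorting sequence on the instances we construct. The natural way to do that is to build instances in which the origin and target genomes are \emph{balanced} (same gene set), so that the indel restrictions from Willing et al.~\cite{2013-willing-etal} (an element can be deleted only if its label is not in $\Sigma_{\iota^n}$, and inserted only if its label is in $\Sigma_{\iota^n}\setminus\Sigma_A$) make indels literally inapplicable. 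Then the problem on such instances degenerates to sorting a permutation by the conservative operations of the model, and the lower bound $d_\tau(\pi) \geq b_\tau(\pi)/3$ together with the breakpoint-variation lemmas of the previous chapter finishes the argument.

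\textbf{First steps.} First I would handle $\Mindel_{\tau} = \{\tau,\psi,\phi\}$. Given a \pname{B3T} instance $\pi = (\pi_1~\ldots~\pi_n)$, construct the unbalanced-genome instance $\I = (A, \iota^n)$ with $A$ equal to $\pi$ viewed as a string over $\Sigma = \{1,\ldots,n\}$ — i.e.\ take $A$ to be balanced with $\iota^n$ — and set the decision threshold $k = b_\tau(\pi)/3$. Since $\Sigma_A = \Sigma_{\iota^n}$, we have $\Sigma_A \setminus \Sigma_{\iota^n} = \emptyset$ and $\Sigma_{\iota^n}\setminus\Sigma_A = \emptyset$, so neither insertions nor deletions are applicable; hence $d_{\Mindel_\tau}(\I) = d_\tau(\pi)$, and the breakpoint notion of Definition~\ref{cap2:def:breakpoint_indel_t} coincides with Definition~\ref{cap2:def:breakpoint_t}, so $b_{\Mindel_\tau}(\I) = b_\tau(\pi)$. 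It then follows directly that $d_{\Mindel_\tau}(\I) \leq k$ iff $d_\tau(\pi) = b_\tau(\pi)/3$, which is exactly the \pname{B3T} question. The same construction with $A = \pi$ works for $\Mindel_\rho = \{\rho,\psi,\phi\}$ and $\Mindel_{\rho,\tau} = \{\rho,\tau,\psi,\phi\}$ on unsigned strings, using the unsigned-reversal breakpoint and the lower bound of Lemma~\ref{cap3:lemma:lb_unsigned} restricted to the relevant model together with Lemma~\ref{cap3:lemma:unsigned_var_b}: on a balanced instance with no decreasing strips initially, any optimal sorting sequence of cost $b_\tau(\pi)/3$ can contain only transpositions (reversals remove no breakpoints from an all-increasing-strip permutation, and each used operation must remove three breakpoints), so the sequence sorts $\pi$ by $b_\tau(\pi)/3$ transpositions. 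For the signed version of $\Mindel_{\rho,\tau}$ I would instead take $A = (+\pi_1~\ldots~+\pi_n)$ against $\iota^n$, which is balanced and has only positive strips, and invoke Lemma~\ref{cap3:lemma:signed_var_b} in the same way. For the unsigned models the small care already used in Theorem~\ref{cap3:theorem:wsr_unsigned} — doubling each element as $\pi'_{2i-1} = 2\pi_i - 1$, $\pi'_{2i} = 2\pi_i$ so that no optimal operation ever splits a consecutive pair and all strips stay increasing — should be carried over verbatim to make the "only transpositions are used" step clean.

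\textbf{Main obstacle and closing.} The principal technical point is the forward-backward equivalence of breakpoint counts and distances across the three breakpoint conventions (transposition, unsigned reversal, signed reversal) in the presence of the indel machinery, i.e.\ verifying that on a \emph{balanced} instance the labeled-string / indel definitions of Section~\ref{cap2:sec:breakpoints_indels} collapse to the permutation definitions of Section~\ref{cap2:sec:breakpoints_permutacao} and that the lower bounds (Lemmas~\ref{cap3:lemma:lb_signed} and~\ref{cap3:lemma:lb_unsigned}) remain valid when $\M$ additionally contains $\psi$ and $\phi$ — which is immediate because indels are inapplicable, hence contribute nothing, on balanced instances. Once that bookkeeping is settled, each of the three reductions is a one-line mapping, polynomial-time computable, and the equivalence "\pname{B3T} instance is a yes-instance" $\Leftrightarrow$ "$d_{\M}(\I) \leq b_\tau(\pi)/3$" follows by exactly the argument structure of Theorems~\ref{cap3:theorem:complexity_signed} and~\ref{cap3:theorem:wsr_unsigned}. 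I therefore expect the proof to be short, essentially a corollary of the machinery already developed, with the only real content being the observation that balancedness neutralizes the indels.
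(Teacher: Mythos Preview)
Your key observation --- that on a balanced instance the indel restrictions make both $\psi$ and $\phi$ inapplicable, so the distance under $\Mindel$ collapses to the distance under the conservative operations alone --- is exactly the heart of the matter, and it is also what the paper uses. However, your reduction breaks for the model $\Mindel_\rho = \{\rho,\psi,\phi\}$. You propose to reduce from \pname{B3T} with threshold $k = b_\tau(\pi)/3$ and then argue that ``any optimal sorting sequence of cost $b_\tau(\pi)/3$ can contain only transpositions''; but $\Mindel_\rho$ does not contain transpositions at all, so on a balanced instance the only available operations are reversals. The question then becomes whether $d_\rho(\pi') \le b_\tau(\pi)/3$, and since $d_\rho(\pi') \ge b_\rho(\pi')/2 = b_\tau(\pi)/2$ on the doubled permutation, the answer is always \textsc{no} for any nontrivial $\pi$, independently of the \pname{B3T} answer. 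The reduction therefore does not preserve yes/no instances for $\Mindel_\rho$.

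The paper's proof avoids this by choosing, for each model, the \emph{matching} NP-hard source problem rather than \pname{B3T} throughout: Sorting Unsigned Permutations by Reversals~\cite{1999a-caprara} for $\Mindel_\rho$, Sorting by Transpositions~\cite{2012-bulteau-etal} for $\Mindel_\tau$, and Sorting (Signed or Unsigned) Permutations by Reversals and Transpositions~\cite{2019b-oliveira-etal} for $\Mindel_{\rho,\tau}$. In each case one simply takes $A = \pi$ (balanced with $\iota^n$), notes that indels are inapplicable, and concludes $d_{\Mindel}(A,\iota^n) = d_{\M}(\pi)$ for the corresponding conservative model $\M$. No breakpoint-variation argument, no doubling, no ``only transpositions are used'' step is needed. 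Your route for $\Mindel_\tau$ and $\Mindel_{\rho,\tau}$ is correct but unnecessarily indirect --- you are re-proving, via \pname{B3T} and Lemmas~\ref{cap3:lemma:signed_var_b}/\ref{cap3:lemma:unsigned_var_b}, hardness results that are already available in the literature and can be invoked in one line. To fix your proof, keep the ``balancedness neutralizes indels'' observation, but replace the source problem for $\Mindel_\rho$ by Caprara's unsigned reversal result; the other two cases will then simplify analogously.
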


\begin{proof}
	Considere o modelo $\Mindel_{\rho}$. O problema da Ordenação de Permutações sem Sinais por Reversões (\pname{SbR}) já foi provado ser NP-difícil~\cite{1999a-caprara}. A versão de decisão desse problema tem como entrada uma permutação $\pi$ e um inteiro positivo $k$, consistindo em decidir se a permutação $\pi$ pode ser ordenada por no máximo $k$ reversões.

	De forma similar, a versão de decisão do problema da Distância de Reversões e Indels em Strings sem Sinais (\pname{RID}) recebe como entrada uma instância $\I = (A, \iota^n, k)$, e consiste em decidir se a string $A$ pode ser transformada em $\iota^n$ usando no máximo $k$ operações de reversões ou {\it indels}.

	Nesta demonstração, apresentamos uma redução do problema \pname{SbR} para o problema \pname{RID}. Dada uma instância $(\pi, k)$ para \pname{SbR}, tal que $\pi$ tem tamanho $n$, criamos a instância $(A, \iota^n, k)$, com $A = \pi$, para o problema \pname{RID}. Note que $\pi$ pode ser ordenada por uma sequência de reversões $S$ com $|S| \leq k$ se, e somente se, $A$ pode ser transformada em $\iota^n$ usando no máximo $k$ operações do modelo $\Mindel_{\rho}$, já que $\Sigma_{A} \setminus \Sigma_{\iota^n} = \Sigma_{\iota^n} \setminus \Sigma_{A} = \emptyset$ e {\it indels} não podem ser usados nas sequências de rearranjos.

	A prova é similar para os outros modelos, já que a Ordenação de Permutações por Transposições~\cite{2012-bulteau-etal} e a Ordenação de Permutações com ou sem Sinais por Reversões e Transposições~\cite{2019b-oliveira-etal} são NP-difíceis.
\end{proof}

\section{Algoritmos de Aproximação Usando Breakpoints}\label{cap4:section:breakpoints}

Nesta seção, apresentamos algoritmos de aproximação para o problema de Distância de Rearranjos em Strings sem Sinais considerando os modelos $\Mindel_{\rho}$, $\Mindel_{\tau}$ e $\Mindel_{\rho,\tau}$. Sempre consideramos que as strings de uma instância $\I = (A, \iota^n)$ estão nas suas versões estendidas.

Usamos o conceito de {\it breakpoints}, apresentado na Seção~\ref{cap2:sec:breakpoints_indels}, para a definição de limitantes para a distância e a criação dos algoritmos de aproximação. A seguir, apresentamos uma outra definição utilizada nos limitantes para a distância:

\begin{definition}\label{cap4:def:delta_phi}
Dada uma operação (ou sequência de rearranjos) $\beta$ e uma instância $\I = (A, \iota^n)$, definimos $\Delta \Phi(\I, \beta) = \Delta \Phi(A, \iota^n, \beta) = |\Sigma_{\iota^n} \setminus \Sigma_{A}| - |\Sigma_{\iota^n} \setminus \Sigma_{A'}|$, onde $A' = A \comp \beta$.
\end{definition}

Para uma operação ou sequência $\beta$, se $\Delta \Phi(\I, \beta) > 0$, então $\beta$ diminui a quantidade de elementos que precisam ser adicionados para que as strings se tornem balanceadas.

Assim como nos problemas de Ordenação de Permutações por Rearranjos, usamos o conceito de {\it breakpoints} de reversões sem sinais para o modelo $\Mindel_{\rho,\tau}$. {\revisaof Para simplificar a notação, usamos $b_{\rho}$ e $b_{\tau}$ para indicar {\it breakpoints} de reversões sem sinais e {\it breakpoints} de transposições, respectivamente.} Os próximos lemas mostram como um rearranjo afeta o valor de $b_{\M}(\I) + |\Sigma_{\iota^n} \setminus \Sigma_{A}|$.

\begin{lemma}\label{cap4:lemma:breakpoint_insertion}
	Para qualquer inserção $\insertion$ e $\I = (A, \iota^n)$, temos que:
	\begin{align*}
		\Delta \Phi(\I, \insertion) + \Delta b_{\rho}(\I, \insertion) \leq 2,\\
		\Delta \Phi(\I, \insertion) + \Delta b_{\tau}(\I, \insertion) \leq 2.
	\end{align*}
\end{lemma}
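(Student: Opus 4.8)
The plan is to analyze directly the effect of a single insertion $\insertion = \insertion(i, \sigma)$ on the two quantities $|\Sigma_{\iota^n} \setminus \Sigma_A|$ and $b_{\M}(\I)$, for $\M \in \{\rho, \tau\}$, and bound their combined change. Recall that, by the restriction of Willing et al.\ adopted in the excerpt, an element $x$ may be inserted into $A$ only if $x \in \Sigma_{\iota^n} \setminus \Sigma_A$; hence every element of $\sigma$ is a common label that was previously missing from $A$. First I would observe that an insertion of $\sigma$ after position $i$ acts locally: it replaces the single adjacency $(A_i, A_{i+1})$ of the extended string by the $|\sigma|+1$ adjacencies $(A_i, \sigma_1), (\sigma_1, \sigma_2), \ldots, (\sigma_{|\sigma|}, A_{i+1})$, while leaving all other adjacencies of $A$ untouched. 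So $\Delta b_{\M}(\I, \insertion)$ depends only on how many of these new adjacencies are breakpoints versus whether the destroyed adjacency $(A_i, A_{i+1})$ was a breakpoint.

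The key combinatorial step is this: to gain more than one ``unit'' of progress, the inserted block must simultaneously reduce the breakpoint count and reduce the deficit $|\Sigma_{\iota^n}\setminus\Sigma_A|$, but each inserted element can help at most one of these in a way that is bounded. Concretely, $\Delta\Phi(\I,\insertion) = |\sigma|$, since exactly the $|\sigma|$ labels of $\sigma$ are removed from $\Sigma_{\iota^n}\setminus\Sigma_A$. On the breakpoint side, the best possible case is that the old adjacency $(A_i, A_{i+1})$ was a breakpoint and the inserted block ``heals'' it completely, so that all $|\sigma|+1$ new adjacencies are non-breakpoints; this requires $A_i, \sigma_1, \ldots, \sigma_{|\sigma|}, A_{i+1}$ to form a consecutive run (increasing, in the case of $b_{\rho}$, or in general as allowed by the $\anterior$/$\posterior$ definitions of Section~\ref{cap2:sec:breakpoints_indels}). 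In that extremal case $\Delta b_{\M}(\I,\insertion) = 1$, but then the $|\sigma|$ inserted labels are forced to be exactly the missing labels that fill the gap between $A_i$ and $A_{i+1}$, and crucially this gives no extra room: the sum is $|\sigma| - (|\sigma| - 1) = 1 \le 2$. Wait — that would only give $1$; the bound $2$ must come from allowing one new breakpoint to be destroyed at {\it each} end of the block. I would therefore split the argument by how many of the two ``boundary'' adjacencies $(A_i, \sigma_1)$-side and $(\sigma_{|\sigma|}, A_{i+1})$-side end up non-breakpoints: at most one breakpoint is removed overall (the old adjacency $(A_i,A_{i+1})$), and at most $|\sigma|+1$ new breakpoints are created, so $\Delta b_{\M}(\I,\insertion) \ge -|\sigma| $ and $\Delta b_{\M}(\I,\insertion) \le 1$. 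Adding, $\Delta\Phi + \Delta b_{\M} = |\sigma| + \Delta b_{\M} \le |\sigma| + 1$, which is not yet $\le 2$ for large $|\sigma|$ — so the real content is that a single insertion removing $|\sigma|$ missing labels can reduce the breakpoint count by at most $1$ \emph{beyond what the labels themselves force}, i.e.\ one must show $\Delta b_{\M}(\I,\insertion) \le 2 - |\sigma|$.

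The heart of the proof, then, is the inequality $\Delta b_{\M}(\I, \insertion) \le 2 - |\sigma|$ for each of the two breakpoint notions. I would prove it by counting: among the $|\sigma| + 1$ new adjacencies created inside $(A_i, \sigma_1, \ldots, \sigma_{|\sigma|}, A_{i+1})$, the $|\sigma| - 1$ internal adjacencies $(\sigma_t, \sigma_{t+1})$ are non-breakpoints only if the $\sigma_t$ are consecutive labels, but since $\sigma$ has no repeats and its elements lie in $\Sigma_{\iota^n}\setminus\Sigma_A$, at most all $|\sigma|-1$ internal ones and the two boundary ones can be non-breakpoints. Thus the number of new breakpoints is at least $(|\sigma|+1) - (|\sigma|+1) = 0$ in the best case, and the number removed is at most $1$; the subtle point is that making all internal adjacencies non-breakpoints is consistent with making both boundary adjacencies non-breakpoints only when the whole run $A_i, \sigma_1, \ldots, \sigma_{|\sigma|}, A_{i+1}$ is a consecutive strip, in which case the single destroyed adjacency $(A_i, A_{i+1})$ was indeed a breakpoint, giving $\Delta b_{\M} = 1 - 0 = 1 \le 2 - |\sigma|$ precisely when $|\sigma| \le 1$; for $|\sigma| \ge 2$ a more careful case analysis (on whether $A_i = \deletionElement$ or $A_{i+1} = \deletionElement$, and on the relative order of $\posterior(A_i,\I)$ and the labels of $\sigma$) shows at least $|\sigma| - 1$ of the internal adjacencies {\it remain} breakpoints, recovering $\Delta b_{\M} \le 2 - |\sigma|$. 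I expect this internal case analysis — in particular handling the $\deletionElement$-labelled endpoints and the difference between the $b_{\rho}$ definition (Definition~\ref{cap2:def:breakpoint_indel_r}) and the $b_{\tau}$ definition (Definition~\ref{cap2:def:breakpoint_indel_t}) — to be the main obstacle; the $\Delta\Phi$ bookkeeping is routine. Once $\Delta b_{\M}(\I,\insertion) \le 2 - |\sigma| = 2 - \Delta\Phi(\I,\insertion)$ is established for both $\M = \rho$ and $\M = \tau$, rearranging gives the two claimed inequalities immediately.
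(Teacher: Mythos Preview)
Your overall approach matches the paper's: compute $\Delta\Phi(\I,\insertion) = |\sigma|$, then show $\Delta b_{\M}(\I,\insertion) \le 2 - |\sigma|$ by arguing that the $|\sigma|-1$ internal adjacencies $(\sigma_t,\sigma_{t+1})$ are all breakpoints in the new instance while at most one old breakpoint $(A_i,A_{i+1})$ is destroyed. You correctly isolate this internal-adjacency count as the crux.

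The gap is in how you propose to establish that count. The ``main obstacle'' you anticipate --- a case analysis on whether $A_i = \deletionElement$ or $A_{i+1} = \deletionElement$, on the relative order of $\posterior(A_i,\I)$ and the labels of $\sigma$, and on the two breakpoint definitions --- is unnecessary, and your tentative reasoning before it (``making all internal adjacencies non-breakpoints is consistent with \ldots'') goes in the wrong direction. The missing observation is a direct consequence of the genome-representation convention from Chapter~\ref{cap:label:fundamentacao}: every label in $\Sigma_{\iota^n}\setminus\Sigma_A$ stands for a \emph{maximal} contiguous block of genes of $\G_d$ absent from $\G_o$. Maximality forces that between any two distinct labels $x,y \in \Sigma_{\iota^n}\setminus\Sigma_A$ there is at least one common label $z \in \Sigma_A \cap \Sigma_{\iota^n}$ with $\min(x,y) < z < \max(x,y)$. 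Since $z$ remains in $\Sigma_{A'}\cap\Sigma_{\iota^n}$ after the insertion, neither $\posterior(\sigma_t,\I')$ nor $\anterior(\sigma_t,\I')$ can equal $\sigma_{t+1}$, so \emph{every} internal pair $(\sigma_t,\sigma_{t+1})$ is a breakpoint under both Definition~\ref{cap2:def:breakpoint_indel_r} and Definition~\ref{cap2:def:breakpoint_indel_t}, with no case split required. The paper states this in one line and the inequality $\Delta\Phi + \Delta b_{\M} \le |\sigma| + (1 - (|\sigma|-1)) = 2$ follows immediately by the arithmetic you already set up.
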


\begin{proof}
	Considere a inserção $\insertion(i,\sigma)$ da string $\sigma$ após o $i$-ésimo elemento de $A$ e seja $A' = A \comp \insertion(i,\sigma)$. 

	Note que $|\sigma| = \Delta \Phi(\I, \insertion)$. Além disso, o único {\it breakpoint} que pode ser removido é o {\it breakpoint} entre os elementos $A_i$ e $A_{i+1}$, caso exista. 

	Lembramos que, como cada par de elementos adjacentes em $\sigma$ representam segmentos maximais contíguos do genoma de destino, todo par de elementos distintos do conjunto $\Sigma_{\iota^n} \setminus \Sigma_{A}$ são não adjacentes em $\iota^n$.

	Se $|\sigma| = 1$, então o limitante é válido. Caso contrário, pelo menos $|\sigma| - 1$ {\it breakpoints} foram adicionados na string, já que existe um {\it breakpoint} entre cada par $(\sigma_k, \sigma_{k+1})$, com $1 \leq k < |\sigma|$. Portanto, temos que $\Delta \Phi(\I, \insertion) + \Delta b_{\rho}(\I, \insertion) \leq 2$ e $\Delta \Phi(\I, \insertion) + \Delta b_{\tau}(\I, \insertion) \leq 2$.
\end{proof}

\begin{lemma}\label{cap4:lemma:breakpoint_deletion}
	Para qualquer deleção $\deletion$ e $\I = (A, \iota^n)$, temos que:
	\begin{align*}
		\Delta \Phi(\I, \deletion) + \Delta b_{\rho}(\I, \deletion) \leq 2,\\
		\Delta \Phi(\I, \deletion) + \Delta b_{\tau}(\I, \deletion) \leq 2.
	\end{align*}
\end{lemma}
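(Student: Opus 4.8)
The plan is to mirror the structure of the proof of Lemma~\ref{cap4:lemma:breakpoint_insertion}, exploiting the restriction (following Willing et al.) that an element may be removed only when its label lies outside $\Sigma_{\iota^n}$. First I would observe that, by this restriction, every element erased by a valid deletion $\deletion(i,j)$ has label $\deletionElement$; hence $A_i = A_{i+1} = \cdots = A_j = \deletionElement$, and in particular no common element is removed.

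Second, I would compute $\Delta\Phi(\I,\deletion)$. Writing $A' = A \comp \deletion(i,j)$, the set of common labels is unchanged, $\Sigma_{A'} \cap \Sigma_{\iota^n} = \Sigma_A \cap \Sigma_{\iota^n}$, because a deletion introduces no new label and removes only occurrences of $\deletionElement \notin \Sigma_{\iota^n}$. Therefore $\Sigma_{\iota^n}\setminus\Sigma_{A'} = \Sigma_{\iota^n}\setminus\Sigma_A$ and $\Delta\Phi(\I,\deletion) = 0$, so both inequalities reduce to showing $\Delta b_{\rho}(\I,\deletion) \le 2$ and $\Delta b_{\tau}(\I,\deletion) \le 2$.

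Third, I would bound the breakpoint variation. The deletion $\deletion(i,j)$ replaces the block of adjacencies $(A_{i-1},A_i),(A_i,A_{i+1}),\ldots,(A_j,A_{j+1})$ by the single adjacency $(A_{i-1},A_{j+1})$. Each interior adjacency $(A_k,A_{k+1})$ with $i \le k < j$ is a pair $(\deletionElement,\deletionElement)$; since $\posterior(\deletionElement,\I) = \anterior(\deletionElement,\I) = \deletionElement$, such a pair is never a breakpoint, either of reversals or of transpositions, so none of these interior adjacencies contributes to $b_{\M}(\I)$. Consequently, at most the two boundary adjacencies $(A_{i-1},A_i)$ and $(A_j,A_{j+1})$ could have been breakpoints of $\I$, whereas $(A_{i-1},A_{j+1})$ is possibly a breakpoint of $(A',\iota^n)$; hence the number of breakpoints drops by at most $2$, i.e.\ $\Delta b_{\rho}(\I,\deletion) \le 2$ and $\Delta b_{\tau}(\I,\deletion) \le 2$. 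Combining with $\Delta\Phi(\I,\deletion) = 0$ yields the claim.

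The main obstacle, and really the only place care is needed, is the third step: arguing rigorously, from the definitions phrased through $\anterior$ and $\posterior$ together with the convention that a strip made only of $\deletionElement$'s is increasing, that every interior $(\deletionElement,\deletionElement)$ adjacency is a non-breakpoint, and then checking the degenerate configurations ($i=j$, the extended-version endpoints $A_0$ and $A_{|A|+1}$, and the case where $A_{i-1}$ or $A_{j+1}$ is itself $\deletionElement$), in all of which the bound only becomes slacker.
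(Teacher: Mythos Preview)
Your proposal is correct and follows essentially the same approach as the paper: both argue that all deleted elements equal $\deletionElement$, that $\Delta\Phi(\I,\deletion)=0$ since $\Sigma_{\iota^n}\setminus\Sigma_A$ is unchanged, and that because no $(\deletionElement,\deletionElement)$ adjacency is a breakpoint, only the two boundary adjacencies $(A_{i-1},A_i)$ and $(A_j,A_{j+1})$ can be removed. Your version is more explicit about invoking $\posterior$ and $\anterior$ and about the degenerate cases, but the argument is the same.
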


\begin{proof}
	Considere a deleção $\deletion(i,j)$ que remove o segmento $(A_i, \ldots, A_j)$. Note que todos os elementos de $(A_i, \ldots, A_j)$ possuem valor $\deletionElement$. De acordo com as definições de {\it breakpoints} apresentadas na Seção~\ref{cap2:sec:breakpoints_indels}, não existe {\it breakpoint} entre um par de elementos quando ambos são iguais a $\deletionElement$. Dessa forma, não existem {\it breakpoints} entre elementos de $(A_i, \ldots, A_j)$. Assim, os únicos {\it breakpoints} que podem ser removidos, caso existam, estão nas posições $(A_{i-1}, A_{i})$ e $(A_j, A_{j+1})$. Como uma deleção não afeta o conjunto $|\Sigma_{\iota^n} \setminus \Sigma_{A}|$, os limitantes são válidos.
\end{proof}

\begin{lemma}\label{cap4:lemma:breakpoint_reversals}
	Para qualquer reversão $\rho$ e $\I = (A, \iota^n)$, temos que:
	\begin{align*}
		\Delta \Phi(\I, \rho) + \Delta b_{\rho}(\I, \rho) \leq 2.
	\end{align*}
\end{lemma}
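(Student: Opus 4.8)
The plan is to split the claimed bound into two independent contributions. First I would observe that a reversal is a \emph{conservative} rearrangement: applying $\rho(i,j)$ to $A$ neither inserts nor removes any label, so $\Sigma_{A \comp \rho} = \Sigma_A$ and hence
$\Delta \Phi(\I, \rho) = |\Sigma_{\iota^n} \setminus \Sigma_A| - |\Sigma_{\iota^n} \setminus \Sigma_{A \comp \rho}| = 0$.
It therefore suffices to prove $\Delta b_{\rho}(\I, \rho) \leq 2$.

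For that, write $A' = A \comp \rho(i,j) = (A_1~\ldots~A_{i-1}~A_j~A_{j-1}~\ldots~A_i~A_{j+1}~\ldots~A_m)$ with $m = |A|$. The key step is to argue that the only pairs of consecutive elements whose breakpoint status can differ between $A$ and $A'$ are the two boundary pairs $(A_{i-1}, A_i)$ and $(A_j, A_{j+1})$, which become $(A_{i-1}, A_j)$ and $(A_i, A_{j+1})$ in $A'$. Every other consecutive pair of $A'$ is either a pair $(A_k, A_{k+1})$ with $k < i-1$ or $k > j$, which is untouched, or a pair $(A_{k+1}, A_k)$ with $i \le k < j$ coming from the reversed interior. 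For the interior pairs I would invoke the symmetry of Definition~\ref{cap2:def:breakpoint_indel_r}: one checks from the definitions of $\anterior$ and $\posterior$ that $A_{k+1} = \posterior(A_k, \I)$ iff $A_k = \anterior(A_{k+1}, \I)$, and $A_{k+1} = \anterior(A_k, \I)$ iff $A_k = \posterior(A_{k+1}, \I)$ (and when one of the two elements equals $\deletionElement$ both cases collapse to a comparison against $\deletionElement$). Hence $(A_k, A_{k+1})$ is a breakpoint in $A$ exactly when $(A_{k+1}, A_k)$ is a breakpoint in $A'$, so the interior contributes zero to $\Delta b_{\rho}(\I, \rho)$.

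Since the only two pairs whose status may change are the two boundary pairs, and each of them can at best turn from a breakpoint into a non-breakpoint, we obtain $\Delta b_{\rho}(\I, \rho) \le 2$. Combining this with $\Delta \Phi(\I, \rho) = 0$ gives $\Delta \Phi(\I, \rho) + \Delta b_{\rho}(\I, \rho) \le 2$, as required. The only mildly delicate point I anticipate is the symmetry verification for interior pairs that involve $\deletionElement$-labelled elements or the extended sentinels $0$ and $n+1$; everything else follows immediately from the definitions, and this lemma is notably simpler than Lemmas~\ref{cap4:lemma:breakpoint_insertion} and~\ref{cap4:lemma:breakpoint_deletion} precisely because $\Delta \Phi$ vanishes identically for reversals.
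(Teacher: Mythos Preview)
Your proposal is correct and follows essentially the same approach as the paper: both observe that $\Delta \Phi(\I,\rho)=0$ since a reversal is conservative, then argue that interior consecutive pairs $(A_k,A_{k+1})$ and their reversed counterparts $(A_{k+1},A_k)$ have the same breakpoint status under the symmetric Definition~\ref{cap2:def:breakpoint_indel_r}, so only the two boundary pairs can change. The paper states this symmetry in one line without unpacking the $\anterior$/$\posterior$ verification you sketched, but the argument is the same.
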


\begin{proof}
	Considere a reversão $\rho(i,j)$ e seja $A' = A \comp \rho(i,j) = (A_1~\ldots~A_{i-1}~A_j~\ldots$ $A_i~A_{j+1}~\ldots~A_n)$.

	Para $i \leq k < j$, o par $(A_k, A_{k+1})$ é um {\it breakpoint} se, e somente se, o par $(A'_{k'}, A'_{k'+1})$ é um {\it breakpoint}, tal que $A_k = A'_{k'+1}$ e $A_{k+1} = A'_{k'}$. Dessa forma, essa reversão só pode remover {\it breakpoints} entre os pares de elementos $(A_{i-1}, A_i)$ e $(A_j, A_{j+1})$, caso existam. Note que $\Delta \Phi(\I, \rho) = 0$, pois uma reversão não adiciona elementos. Portanto, no melhor cenário, dois {\it breakpoints} são removidos e o limitante é válido.
\end{proof}

\begin{lemma}\label{cap4:lemma:breakpoint_transposition}
  Para qualquer transposição $\tau$ e $\I = (A, \iota^n)$, temos que:
	\begin{align*}
		\Delta \Phi(\I, \tau) + \Delta b_{\rho}(\I, \tau) \leq 3,\\
		\Delta \Phi(\I, \tau) + \Delta b_{\tau}(\I, \tau) \leq 3.
	\end{align*}
\end{lemma}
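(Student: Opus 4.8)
The plan is to follow the same strategy used in the proof of Lemma~\ref{cap4:lemma:breakpoint_reversals}. First I would note that a transposition never changes the alphabet of $A$, so $\Sigma_{A'} = \Sigma_{A}$ where $A' = A \comp \tau$, and hence $\Delta \Phi(\I, \tau) = 0$. This reduces both inequalities to showing $\Delta b_{\rho}(\I, \tau) \leq 3$ and $\Delta b_{\tau}(\I, \tau) \leq 3$, that is, that a transposition removes at most three breakpoints of either type.

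Next I would write $A' = A \comp \tau(i,j,k) = (A_1~\ldots~A_{i-1}~A_j~\ldots~A_{k-1}~A_i~\ldots~A_{j-1}~A_k~\ldots~A_m)$ and observe that the only consecutive pairs of $A$ that fail to be consecutive pairs of $A'$ are $(A_{i-1}, A_i)$, $(A_{j-1}, A_j)$, and $(A_{k-1}, A_k)$; every other pair $(A_\ell, A_{\ell+1})$ still appears, in the same order, as a consecutive pair somewhere in $A'$. Since the definitions of breakpoint (Definitions~\ref{cap2:def:breakpoint_indel_r} and~\ref{cap2:def:breakpoint_indel_t}) depend on a pair of labels only through the functions $\anterior(\cdot, \I)$ and $\posterior(\cdot, \I)$, which are functions of $\I$ alone and not of the positions, a preserved pair is a breakpoint in $A'$ exactly when it was a breakpoint in $A$. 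Therefore $b_{\M}(A', \iota^n) \geq b_{\M}(A, \iota^n) - 3$ for a model $\M$ using either breakpoints of unsigned reversals or breakpoints of transpositions, which gives $\Delta b_{\rho}(\I, \tau) \leq 3$ and $\Delta b_{\tau}(\I, \tau) \leq 3$; combined with $\Delta \Phi(\I, \tau) = 0$ this finishes the argument.

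I would then quickly dispatch the boundary cases: when $i = 1$ the first cut pair is $(A_0, A_1) = (0, A_1)$ involving an extended element, which is still covered by the same definitions; and since $1 \leq i < j < k \leq m+1$, the three cut positions are pairwise distinct, so there is no double counting. The new pairs created in $A'$, namely $(A_{i-1}, A_j)$, $(A_{k-1}, A_i)$, and $(A_{j-1}, A_k)$, may or may not be breakpoints, but that only affects lower bounds on $b_{\M}(A')$, not the upper bound on $\Delta b_{\M}$. The single mildly delicate point is the verification that preserved adjacencies retain their breakpoint status, which is immediate from the position independence of $\anterior$ and $\posterior$; I expect no real obstacle, since the proof is a direct adaptation of Lemma~\ref{cap4:lemma:breakpoint_reversals}.
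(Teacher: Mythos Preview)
Your proposal is correct and follows exactly the approach the paper uses: the paper's proof simply states that the argument is similar to Lemma~\ref{cap4:lemma:breakpoint_reversals}, noting that a transposition affects three adjacencies of $A$ instead of two. Your write-up is in fact more detailed than what the paper provides.
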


\begin{proof}
	Similar à prova do Lema~\ref{cap4:lemma:breakpoint_reversals}, mas devemos considerar que uma transposição afeta três adjacências de $A$.
\end{proof}

Com esses lemas, podemos apresentar limitantes para a distância de rearranjos considerando os modelos $\Mindel_{\rho}$, $\Mindel_{\tau}$ e $\Mindel_{\rho,\tau}$.

\begin{lemma}\label{cap4:lemma:breakpoints_lower_bound}
	Para qualquer instância $\I = (A, \iota^n)$ de strings sem sinais, temos que:
	\begin{align*}
		d_{\Mindel_{\rho}}(A, \iota^n) \geq \frac{b_{\rho}(\I) + |\Sigma_{\iota^n} \setminus \Sigma_{A}|}{2},\\
		d_{\Mindel_{\tau}}(A, \iota^n) \geq \frac{b_{\tau}(\I) + |\Sigma_{\iota^n} \setminus \Sigma_{A}|}{3},\\
		d_{\Mindel_{\rho,\tau}}(A, \iota^n) \geq \frac{b_{\rho}(\I) + |\Sigma_{\iota^n} \setminus \Sigma_{A}|}{3}.
	\end{align*}
\end{lemma}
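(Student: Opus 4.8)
The plan is to derive each of the three lower bounds by the same ``potential + breakpoint'' counting argument, using the per-operation inequalities from Lemmas~\ref{cap4:lemma:breakpoint_insertion}--\ref{cap4:lemma:breakpoint_transposition}. Fix an instance $\I = (A, \iota^n)$ of strings sem sinais and let $S = (\beta_1, \ldots, \beta_{|S|})$ be an optimal sequence transforming $A$ into $\iota^n$ using only operations of the relevant model $\M$, so $|S| = d_{\M}(A, \iota^n)$. Define the quantity $\mu_\M(\I) = b_\M(\I) + |\Sigma_{\iota^n} \setminus \Sigma_A|$. The key observation is that this is exactly the potential that each operation can decrease by at most a fixed amount: for any single operation $\beta$ of $\M$, the change in $\mu_\M$ is $\Delta b_\M(\I, \beta) + \Delta \Phi(\I, \beta)$, and the cited lemmas bound this by $2$ when $\M = \Mindel_\rho$, by $3$ when $\M = \Mindel_\tau$, and by $3$ when $\M = \Mindel_{\rho,\tau}$ (in the last case using the maximum over reversals and transpositions of the respective bounds, noting $2 \le 3$).

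First I would establish the ``boundary conditions'' of the telescoping sum. Since $\iota^n$ is breakpoint-free (it is the identity, so every consecutive pair $(\iota^n_i, \iota^n_{i+1})$ satisfies $\iota^n_{i+1} = \posterior(\iota^n_i, \I)$), we have $b_\M(\iota^n, \iota^n) = 0$; and since after applying all of $S$ the source string equals $\iota^n$, we have $\Sigma_{\iota^n} \setminus \Sigma_{\iota^n} = \emptyset$. Hence $\mu_\M$ evaluated on the final configuration is $0$, while on the initial configuration it is $\mu_\M(\I) = b_\M(\I) + |\Sigma_{\iota^n} \setminus \Sigma_A|$. Then I would write the drop in potential as a telescoping sum over the operations of $S$,
\begin{equation*}
\mu_\M(\I) - 0 = \sum_{t=1}^{|S|} \bigl(\Delta b_\M(\I_{t-1}, \beta_t) + \Delta \Phi(\I_{t-1}, \beta_t)\bigr),
\end{equation*}
where $\I_0 = \I$ and $\I_t = (A \comp (\beta_1, \ldots, \beta_t), \iota^n)$. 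Each summand is at most $2$ for $\Mindel_\rho$ and at most $3$ for $\Mindel_\tau$ and $\Mindel_{\rho,\tau}$, by the per-operation lemmas; so $\mu_\M(\I) \le c \cdot |S| = c \cdot d_\M(\I)$ with $c = 2$ or $c = 3$ accordingly, which rearranges to the three stated inequalities.

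The steps I would carry out in order are: (i) record $b_\M(\iota^n,\iota^n) = 0$ and that the exclusive-label set is empty at the end; (ii) set up the telescoping identity for $\mu_\M$ along $S$; (iii) invoke Lemma~\ref{cap4:lemma:breakpoint_insertion} for insertions, Lemma~\ref{cap4:lemma:breakpoint_deletion} for deletions, Lemma~\ref{cap4:lemma:breakpoint_reversals} for reversals (with $b_\rho$), Lemma~\ref{cap4:lemma:breakpoint_transposition} for transpositions (with the appropriate breakpoint type), to bound each term; (iv) sum and divide by the per-operation constant. I do not expect any real obstacle here — the only point needing a little care is making sure the correct breakpoint notion ($b_\rho$ vs.\ $b_\tau$) is used consistently for each model and that, for $\Mindel_{\rho,\tau}$, both a reversal (bound $2$) and a transposition (bound $3$) respect the common bound $3$, so the division by $3$ is legitimate. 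One should also note that even an operation that does not change the alphabet (a reversal or a transposition) still fits the identity because then $\Delta\Phi = 0$, so no case is lost.
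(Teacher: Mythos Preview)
Your proposal is correct and matches the paper's proof essentially line for line: define the potential $b_\M(\I) + |\Sigma_{\iota^n}\setminus\Sigma_A|$, note it vanishes exactly when $A=\iota^n$, and bound its drop per operation via Lemmas~\ref{cap4:lemma:breakpoint_insertion}--\ref{cap4:lemma:breakpoint_transposition}, then sum. The only difference is presentational --- you write out the telescoping sum explicitly and remark that reversals (bound $2$) fit under the common bound $3$ in the mixed model, details the paper leaves implicit.
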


\begin{proof}
	Considere o modelo $\Mindel_{\rho}$. Uma instância $\I' = (A', \iota^n)$ possui $b_{\rho}(\I') + |\Sigma_{\iota^n} \setminus \Sigma_{A'}| = 0$ se, e somente se, $A' = \iota^n$. Portanto, toda sequência de rearranjos que transforma $A$ em $\iota^n$ deve diminuir o valor de $b_{\rho}(\I) + |\Sigma_{\iota^n} \setminus \Sigma_{A}|$ para $0$. Pelos lemas~\ref{cap4:lemma:breakpoint_insertion}, \ref{cap4:lemma:breakpoint_deletion} e \ref{cap4:lemma:breakpoint_reversals}, qualquer reversão ou {\it indel} diminui esse valor em no máximo $2$ e, portanto, o limitante para $d_{\Mindel_{\rho}}(A, \iota^n)$ é válido.

	A prova é similar para os modelos que contém transposições, mas considerando o limitante do Lema~\ref{cap4:lemma:breakpoint_transposition}.
\end{proof}

Os algoritmos apresentados nesta seção são algoritmos gulosos que priorizam os rearranjos com maior valor de $\Delta \Phi(\I, \beta) + \Delta b_{\M}(\I, \beta)$. 
Os próximos lemas apresentam casos em que sempre é possível achar um {\it indel} com $\Delta \Phi(\I, \beta) + \Delta b_{\M}(\I, \beta) > 0$. Essas operações serão úteis nos três algoritmos de aproximação.

\begin{lemma}\label{cap4:lemma:deletion_breakpoint}
	Para qualquer instância $\I = (A, \iota^n)$, tal que $|\Sigma_{A} \setminus \Sigma_{\iota^n}| > 0$, existe uma deleção $\deletion$ tal que $\Delta \Phi(\I, \deletion) + \Delta b_{\M}(\I, \deletion) \geq 1$.
\end{lemma}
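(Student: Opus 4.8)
The plan is to take a maximal block of $\deletionElement$-elements in $A$ and delete it. First I would observe that $|\Sigma_{A}\setminus\Sigma_{\iota^n}|>0$ forces $\Sigma_{A}\setminus\Sigma_{\iota^n}=\{\deletionElement\}$, so the extended string $A$ contains at least one element equal to $\deletionElement$; choose a maximal run of consecutive such elements, occupying positions $i$ through $j$ with $1\le i\le j\le |A|$, and set $\deletion=\deletion(i,j)$. Since $\deletionElement\notin\Sigma_{\iota^n}$, this deletion complies with the indel restriction adopted from Willing et al. Let $A'=A\comp\deletion$.

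The next step is to note that $\Delta\Phi(\I,\deletion)=0$: a deletion only removes $\deletionElement$-elements, so $\Sigma_{A'}\cap\Sigma_{\iota^n}=\Sigma_{A}\cap\Sigma_{\iota^n}$, whence $\Sigma_{\iota^n}\setminus\Sigma_{A'}=\Sigma_{\iota^n}\setminus\Sigma_{A}$ and the two cardinalities in Definition~\ref{cap4:def:delta_phi} coincide. It therefore suffices to prove $\Delta b_{\M}(\I,\deletion)\ge 1$, and this is the core of the argument, carried out by a purely local breakpoint count around the deleted block. Here I would use the fact that $A_{i-1},A_{j+1}\neq\deletionElement$ — the block is maximal and the extended endpoints $A_0=0$, $A_{|A|+1}=n+1$ are never $\deletionElement$ — together with the facts that $\anterior(\deletionElement,\I)=\posterior(\deletionElement,\I)=\deletionElement$, while for any $x\neq\deletionElement$ one has $\anterior(x,\I),\posterior(x,\I)\in(\Sigma_{A}\cap\Sigma_{\iota^n})\cup\{0,n+1\}$, hence both distinct from $\deletionElement$. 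By the definitions of $b_{\rho}$ and $b_{\tau}$ in Section~\ref{cap2:sec:breakpoints_indels}, it follows that $(A_{i-1},A_i)$ and $(A_j,A_{j+1})$ are both breakpoints, whereas no pair strictly inside the block is a breakpoint. Applying $\deletion$ removes exactly the pairs $(A_{i-1},A_i),\dots,(A_j,A_{j+1})$ and creates the single new adjacency $(A_{i-1},A_{j+1})$; all other adjacencies are untouched. Hence $b_{\M}(A',\iota^n)\le b_{\M}(A,\iota^n)-2+1$, i.e. $\Delta b_{\M}(\I,\deletion)\ge 1$, and combined with $\Delta\Phi(\I,\deletion)=0$ this gives $\Delta\Phi(\I,\deletion)+\Delta b_{\M}(\I,\deletion)\ge 1$.

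I expect the only delicate point to be bookkeeping at the extended boundary — in particular the degenerate situation where the maximal run is the whole non-extended string, so that $A'=(0~n+1)$. In that case no separate argument is needed: the conservative estimate ``at most one new breakpoint'' still gives $b_{\M}(A',\iota^n)\le b_{\M}(A,\iota^n)-2+1$ (indeed $b_{\M}(A',\iota^n)=0$ there), so the inequality $\Delta b_{\M}(\I,\deletion)\ge 1$ holds uniformly. The same proof works verbatim for $\Mindel_{\rho}$, $\Mindel_{\tau}$ and $\Mindel_{\rho,\tau}$, since for both $b_{\rho}$ and $b_{\tau}$ the two boundary pairs are breakpoints and the newly created pair contributes at most one.
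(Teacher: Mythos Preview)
Your proposal is correct and follows essentially the same approach as the paper: both take a maximal block (equivalently, a strip) of $\deletionElement$-elements, observe that the two boundary adjacencies are breakpoints while no interior adjacency is, and conclude that deleting the block removes two breakpoints while creating at most one. Your write-up is more explicit than the paper's in verifying via $\anterior$/$\posterior$ that the boundary pairs are indeed breakpoints and in handling the degenerate boundary case, but the underlying argument is identical.
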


\begin{proof}
	Seja $(A_i, \ldots, A_j)$ uma {\it strip} em $A$ com $A_k = \deletionElement$, para $i \leq k \leq j$. Tal {\it strip} deve existir em $A$ já que $|\Sigma_{A} \setminus \Sigma_{\iota^n}| > 0$. Por definição, uma {\it strip} é uma sequência maximal e, portanto, existem {\it breakpoints} entre ambos os pares $(A_{i-1}, A_i)$ e $(A_j, A_{j+1})$.

	A deleção $\deletion(i,j)$ tem o seguinte efeito em $A$:
  \begin{align*}
    A &= (A_1~\ldots~A_{i-1}~A_i~\ldots~A_j~A_{j+1}~\ldots~A_n), \\
    A' &= A \comp \deletion(i,j) = (A_1~\ldots~A_{i-1}~A_{j+1}~\ldots~A_n).
  \end{align*}

  {\revisaof 
  Como o par ($A_{i-1}$, $A_{j+1})$ pode formar um {\it breakpoint}, o número de {\it breakpoints} diminui em pelo menos $1$, enquanto o tamanho de $\Sigma_{\iota^n} \setminus \Sigma_{A}$ permanece o mesmo. Portanto, temos que $\Delta \Phi(\I, \deletion) + \Delta b_{\M}(\I, \deletion) \geq 1$.
  }
\end{proof}

\begin{lemma}\label{cap4:lemma:insertion_no_breakpoint}
	Para qualquer instância $\I = (A, \iota^n)$, tal que $A$ não possui {\it breakpoints} e $|\Sigma_{\iota^n} \setminus \Sigma_{A}| > 0$, existe uma inserção $\insertion$ tal que $\Delta \Phi(\I, \insertion) + \Delta b_{\M}(\I, \insertion) = 1$.
\end{lemma}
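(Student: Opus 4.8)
A estratégia é explorar de forma essencial a hipótese de que $A$ (na sua versão estendida) não possui {\it breakpoints}, o que força uma estrutura bastante rígida sobre $A$. O primeiro passo é observar que $A$ não contém nenhum elemento $\deletionElement$: como $A_0 = 0$ e $A_{|A|+1} = n+1$ são rótulos numéricos, a existência de um $\deletionElement$ em $A$ obrigaria um par adjacente formado por um rótulo numérico e um $\deletionElement$, e tal par é sempre um {\it breakpoint} (pois $\posterior$ e $\anterior$ de um rótulo numérico nunca resultam em $\deletionElement$, e $\posterior(\deletionElement,\I) = \anterior(\deletionElement,\I) = \deletionElement$), tanto pela Definição~\ref{cap2:def:breakpoint_indel_r} quanto pela Definição~\ref{cap2:def:breakpoint_indel_t}. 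O segundo passo é mostrar que, sendo $A$ uma única {\it strip} iniciada em $0$, uma indução simples sobre as posições estabelece que $A$ é estritamente crescente e percorre exatamente os rótulos de $\Sigma_A \cap \Sigma_{\iota^n}$ em ordem, isto é, $A = (0~c_1~\ldots~c_r~(n+1))$ com $\{c_1 < \cdots < c_r\} = \Sigma_A$; no passo indutivo, o próximo elemento não pode ser a alternativa $\anterior(A_i,\I) = A_{i-1}$, pois os elementos de $A$ são distintos, forçando $A_{i+1} = \posterior(A_i,\I) > A_i$. Em particular, $\Sigma_A \setminus \Sigma_{\iota^n} = \emptyset$ e $\Sigma_{\iota^n}\setminus\Sigma_A = \{1,\ldots,n\}\setminus\{c_1,\ldots,c_r\}$.

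De posse dessa caracterização, o terceiro passo é escolher $x = \min(\Sigma_{\iota^n}\setminus\Sigma_A)$, que existe pois $|\Sigma_{\iota^n}\setminus\Sigma_A| > 0$. A minimalidade de $x$ garante que $x-1$ é $0$ ou um rótulo comum, logo $x-1$ aparece exatamente uma vez em $A$, digamos $A_p = x-1$; como $A$ é crescente e $x \notin \Sigma_A$, tem-se $A_{p+1} = \posterior(x-1,\I) \geq x+1$ (ou $A_{p+1} = n+1$). A operação candidata é a inserção $\insertion = \insertion(p,(x))$, que coloca o único elemento $x$ logo após $A_p$ — uma operação válida, já que $x \in \Sigma_{\iota^n}\setminus\Sigma_A$.

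O último passo é verificar que $\Delta\Phi(\I,\insertion) + \Delta b_{\M}(\I,\insertion) = 1$. É imediato que $\Delta\Phi(\I,\insertion) = 1$, pois a inserção acrescenta a $A$ apenas o rótulo faltante $x$. Para $\Delta b_{\M}$, basta mostrar que $A' = A \comp \insertion$ também não possui {\it breakpoints}: as únicas adjacências novas são $(x-1,x)$ e $(x,A_{p+1})$, e na instância resultante $\I'$ tem-se $\posterior(x-1,\I') = x$ e $\posterior(x,\I') = A_{p+1}$ (o único rótulo comum acrescentado foi $x$), de modo que nenhuma delas é {\it breakpoint}; as demais adjacências de $A'$ coincidem com as de $A$ e não são afetadas, pois $\posterior$ e $\anterior$ só mudam em torno de $x-1$, cuja única ocorrência em $A$ já foi tratada. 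Logo $b_{\M}(A',\iota^n) = 0 = b_{\M}(A,\iota^n)$, ou seja, $\Delta b_{\M}(\I,\insertion) = 0$, o que conclui.

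O ponto mais delicado deve ser a caracterização de $A$ como {\it strip} estritamente crescente que percorre todos os rótulos comuns: é aí que se precisa lidar com cuidado com os elementos estendidos $0$ e $n+1$ (em particular, justificar que $\anterior(0,\I)$ não fornece uma alternativa legítima para a escolha de $A_1$) e garantir que a dicotomia entre {\it breakpoints} de reversões sem sinais e de transposições não altera o raciocínio. Estabelecida essa estrutura, o restante é uma verificação direta de que a inserção escolhida não introduz novos {\it breakpoints}.
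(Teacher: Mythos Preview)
Your proof is correct and follows essentially the same approach as the paper: both argue that the absence of breakpoints forces $A$ to be strictly increasing (and free of $\deletionElement$), then insert a single missing label $x$ immediately after $x-1$, verifying that $\Delta\Phi = 1$ and no breakpoint is created. The paper's version is terser---it simply takes any gap $(A_i,A_{i+1})$ with $A_i \neq A_{i+1}-1$ and inserts $A_i+1$---whereas you make the specific choice $x = \min(\Sigma_{\iota^n}\setminus\Sigma_A)$ and spell out the structural characterization of $A$ more explicitly, but the underlying idea is the same.
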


\begin{proof}
	Como $A$ não possui {\it breakpoints}, todos os elementos de $A$ estão em ordem crescente. Seja $(A_i, A_{i+1})$ um par de elementos tal que $A_i \neq A_{i+1} - 1$. Como $|\Sigma_{\iota^n} \setminus \Sigma_{A}| > 0$, deve existir pelo menos um par de elementos no qual essa condição é válida. A inserção $\insertion(i, \sigma)$, com $\sigma = (A_i + 1)$, diminui o tamanho de $\Sigma_{\iota^n} \setminus \Sigma_{A}$ em $1$ e não altera o número de {\it breakpoints}. Portanto, $\Delta \Phi(\I, \insertion) + \Delta b_{\M}(\I, \insertion) = 1$.
\end{proof}

\subsection{Algoritmo de $2$-Aproximação para Modelo com Reversões e Indels}

Apresentamos um algoritmo guloso com fator de aproximação igual a $2$ para a Distância de Reversões e Indels em Strings sem Sinais. Nesta seção, consideramos que $\M = \Mindel_{\rho}$. Os próximos lemas apresentam casos em que sempre é possível achar uma reversão que remove {\it breakpoints}. 


\begin{lemma}\label{cap4:lemma:reversal_breakpoint_decreasing}
	Para qualquer instância $\I = (A, \iota^n)$, se $b_{\M}(\I, \beta) > 0$, $|\Sigma_{A} \setminus \Sigma_{\iota^n}| = 0$, e $A$ possui pelo menos uma {\it strip} decrescente, então existe uma reversão que remove pelo menos um {\it breakpoint} de $A$. 
\end{lemma}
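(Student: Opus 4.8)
The plan is to exhibit an explicit reversal that removes a breakpoint, using the existence of a decreasing strip together with the fact that $A$ is already balanced (so $\deletionElement$ does not appear and every label of $A$ lies in $\Sigma_{\iota^n}$). First I would pick a decreasing strip $\sigma = (A_i~\ldots~A_j)$ of $A$, which exists by hypothesis. Since $A$ is balanced and extended, the endpoints $A_0 = 0$ and $A_{|A|+1} = n+1$ always belong to increasing strips, so $\sigma$ is a ``genuine'' interior strip; in particular $A_i \neq 0$ and $A_j \neq n+1$. Because $\sigma$ is decreasing we have $A_i > A_j$, and because it is maximal there are breakpoints both at $(A_{i-1}, A_i)$ and at $(A_j, A_{j+1})$ (using Definition~\ref{cap2:def:breakpoint_indel_r}, i.e. $A_i \neq \anterior(A_{i-1},\I)$, $A_i \neq \posterior(A_{i-1},\I)$, and likewise on the right).

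Next I would focus on the smallest element of the decreasing strip, which is $A_j$, and let $x = A_j - 1 = \anterior(A_j, \I)$ (this is a legitimate common label since $A$ is balanced and $A_j \neq 0$; if $A_j$ is the minimum common label then $\anterior(A_j,\I) = 0$, handled identically). The value $x$ sits somewhere in $A$, say $A_{p} = x$. I would then consider the reversal that brings $x$ adjacent to $A_j$: depending on whether $p < i$ or $p > j$, reverse the block from position $p+1$ through $j$ (so that the reversed block begins with $A_j$ and is placed right after $A_p = x$) or symmetrically reverse from $i$ through $p-1$. After the reversal the pair $(x, A_j)$ becomes adjacent, and since $x = A_j - 1$ this is not a breakpoint. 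I would then bound the net change: the reversal $\rho(\cdot,\cdot)$ creates at most two new breakpoints at its two cut points but, by the argument just given, destroys the breakpoint it was designed to destroy, so I must verify that it does not simultaneously re-create a breakpoint at the same location it just fixed — this is where the maximality and the decreasing nature of $\sigma$ are used to guarantee a strictly positive gain is not required; the lemma only asks that \emph{at least one} breakpoint be removed \emph{somewhere}, which is weaker. Concretely I would argue: if reversing to merge $x$ with $A_j$ incidentally patches a cut that was a breakpoint, all the better; if it opens a new breakpoint, it is still a distinct pair from $(x, A_j)$, so the count of breakpoints strictly decreases at that one adjacency while we track whether the reversal as a whole is non-increasing.

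The main obstacle I anticipate is the careful bookkeeping at the two cut points of the chosen reversal: I need to rule out the pathological case where ``fixing'' $(x, A_j)$ destroys the breakpoint we relied on at $(A_{i-1}, A_i)$ or $(A_j, A_{j+1})$ but creates two fresh breakpoints, leaving a net of zero removed breakpoints. The clean way around this is to choose $x$ adjacent (in label value) to an \emph{endpoint} of the decreasing strip and to reverse so that one cut point of the reversal coincides with an \emph{already existing} breakpoint of $A$; then at that cut point we can only improve or stay equal, at the other cut point we lose at most one, and at the merge point $(x, A_j)$ we gain one — for a net $\Delta b_{\M}(\I,\rho) \ge 0$ with at least one breakpoint genuinely removed. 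I would also handle separately the degenerate sub-case $j = i$ (singleton decreasing strip), where the same ``bring the predecessor of $A_i$ next to $A_i$'' move works verbatim, and the sub-case where $A_j$ is the minimum common label, replacing $x = A_j - 1$ by the left sentinel $0$. Finally, since $|\Sigma_A \setminus \Sigma_{\iota^n}| = 0$ there are no $\deletionElement$-strips to worry about, so the classification of strips into increasing/decreasing is exhaustive and the argument is complete.
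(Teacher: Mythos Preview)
Your argument has a genuine gap: you allow yourself to pick \emph{any} decreasing strip, but without a more careful choice the reversal you describe need not remove a breakpoint. The issue is that the element $x = \anterior(A_j,\I)$ can sit at the \emph{left} end of another decreasing strip, in which case the cut at position $p$ is \emph{not} an existing breakpoint, and the reversal can trade one breakpoint for another. Concretely, take $A = (0~4~3~6~5~1~2~7)$ (extended, $n=6$). The decreasing strips are $(4,3)$ and $(6,5)$. If you pick $(6,5)$ then $A_j=5$, $x=4$ at position $p=1$, and your reversal $\rho(2,4)$ yields $(0~4~5~6~3~1~2~7)$. Both strings have exactly four breakpoints, so $\Delta b_\M(\I,\rho)=0$: no breakpoint was removed.

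There is also a bookkeeping confusion that hides this failure: you speak of ``two cut points'' and a separate ``merge point $(x,A_j)$'', but a reversal $\rho(a,b)$ alters only the two adjacencies $(A_{a-1},A_a)$ and $(A_b,A_{b+1})$, and your merge point \emph{is} the left cut point. So your accounting ``lose at most one at one cut, stay equal at the other, gain one at the merge'' is double-counting the same adjacency. The paper's proof avoids all of this with one extra word: pick the decreasing strip whose rightmost element $A_j$ is \emph{minimum}. Then the strip containing $\anterior(A_j,\I)$ cannot be decreasing (its right endpoint would be smaller than $A_j$), hence it is increasing and $\anterior(A_j,\I)$ is its rightmost element $A_{j'}$. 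Now one cut of the reversal lands exactly at the strip boundary $(A_{j'},A_{j'+1})$, which is already a breakpoint, and after the reversal that adjacency becomes $(A_{j'},A_j)$, which is not --- so at least one breakpoint is certainly removed.
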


\begin{proof}
	Seja $(A_i, \ldots, A_j)$ uma {\it strip} decrescente tal que $A_j$ é mínimo. Seja $(A_{i'}, \ldots, A_{j'})$ a {\it strip} contendo o elemento $\anterior(A_j, \I)$. Pela nossa escolha de $A_j$, a {\it strip} $(A_{i'}, \ldots, A_{j'})$ é crescente e $A_{j'} = \anterior(A_j, \I)$.

	Se $i' < i$, então a reversão $\rho(j'+1, j)$ remove o {\it breakpoint} entre as posições $j'$ e $j'+1$.
  {\small
	  \begin{align*}
	    A &= (0~A_1~\ldots~A_{i'}~\ldots~A_{j'}~A_{j'+1}~\ldots~A_i~\ldots~A_j~A_{j+1}~\ldots~A_m~{n+1}),\\
	    A \comp \rho(j'+1, j) &= (0~A_1~\ldots~A_{i'}~\ldots~A_{j'}~\underline{A_j~\ldots~A_i~\ldots~A_{j'+1}}~A_{j+1}~\ldots~A_m~{n+1}).
	  \end{align*}
  }

  Caso contrário, temos que $i' > i$ e a reversão $\rho(j+1, j')$ remove o {\it breakpoint} entre as posições $j$ e $j+1$.
  {\small
	  \begin{align*}
	    A &= (0~A_1~\ldots~A_{i}~\ldots~A_{j}~A_{j+1}~\ldots~A_{i'}~\ldots~A_{j'}~A_{j'+1}~\ldots~A_m~{n+1}),\\
	    A \comp \rho(j+1, j') &=
	    (0~A_1~\ldots~A_{i}~\ldots~A_{j}~\underline{A_{j'}~\ldots~A_{i'}~\ldots~A_{j+1}}~A_{j'+1}~\ldots~A_m~{n+1}).
	  \end{align*}
  }
\end{proof}

\begin{lemma}\label{cap4:lemma:reversal_breakpoint_no_decreasing}
	Para qualquer instância $\I = (A, \iota^n)$ tal que $b_{\M}(\I, \beta) > 0$, $|\Sigma_{A} \setminus \Sigma_{\iota^n}| = 0$ e $A$ possui pelo menos uma {\it strip} decrescente, se qualquer reversão que remove pelo menos um {\it breakpoint} de $A$ resulta em uma string com apenas {\it strips} crescentes, então existe apenas uma reversão que remove {\it breakpoints} de $A$ e essa reversão remove dois {\it breakpoints}.
\end{lemma}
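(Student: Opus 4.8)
The plan is to combine the construction in the proof of Lemma~\ref{cap4:lemma:reversal_breakpoint_decreasing} with the hypothesis so as to force a very rigid structure on $A$, and then to show that this structure admits exactly one breakpoint‑removing reversal, which removes two breakpoints. Throughout, $\M = \Mindel_{\rho}$ and we use the breakpoint notion of Definition~\ref{cap2:def:breakpoint_indel_r}.

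First I would record a few preliminary observations. (i) Since $A_0 = 0$ and $A_{m+1} = n+1$ both lie in increasing strips, a decreasing strip is neither the first nor the last strip of $A$, so $A$ has at least three strips and $b_{\M}(\I) \geq 2$. (ii) For common elements $x,y$ one has $\anterior(x) = y$ iff $\posterior(y) = x$, and if $(A_s, A_{s+1})$ is not a breakpoint with $A_s > A_{s+1}$ then $A_{s+1} = \anterior(A_s)$ and $A_s = \posterior(A_{s+1})$; moreover $\anterior(x)$ is never strictly interior to a decreasing strip (otherwise $x$ would be a common value lying strictly between two consecutive elements of that strip, contradicting maximality of $\anterior$), so $\anterior(x)$ is always the last element of an increasing strip or the last (smallest) element of a decreasing strip, and dually for $\posterior(x)$ and first elements. (iii) Reversing a contiguous segment changes the breakpoint status only at the two segment endpoints and flips the orientation of every strip entirely contained in the segment. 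Then I would invoke Lemma~\ref{cap4:lemma:reversal_breakpoint_decreasing}: pick a decreasing strip $D = (A_i~\ldots~A_j)$ with $A_j$ minimum and let $X = (A_{i'}~\ldots~A_{j'})$ be the strip containing $\anterior(A_j)$; as shown there $X$ is increasing with $A_{j'} = \anterior(A_j)$, and $\rho^{*} = \rho(j'+1, j)$ (case $i' < i$) or $\rho^{*} = \rho(j+1, j')$ (case $i' > i$) removes at least one breakpoint. Treating the case $i' < i$ (the other is symmetric), I apply the hypothesis to $\rho^{*}$: $A \comp \rho^{*}$ has only increasing strips. Since positions $j'$ and $j$ each precede a breakpoint, the reversed block $[\,j'+1,\,j\,]$ is bounded by breakpoints; hence by (iii) every strip outside the block is unchanged and must therefore be increasing, while every strip inside the block is flipped, so it must be decreasing of length at least $2$ (a decreasing singleton would stay decreasing, an increasing strip would become decreasing). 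In particular $D$ is the last strip of the block and the block contains all decreasing strips of $A$.

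The crux is to prove the block is exactly $D$, i.e.\ $j'+1 = i$. Suppose not, and let $D_1 = (A_{j'+1}~\ldots~A_{j_1})$ be the first (decreasing, length $\geq 2$) strip of the block, with $j_1 < i$. By (ii), $\anterior(A_{j_1})$ — which satisfies $A_j \leq \anterior(A_{j_1}) < A_{j_1}$ — is the last element of some strip $W$. If $W$ lies entirely among positions $< i$, then the reversal joining the end of $W$ to $A_{j_1}$ removes the breakpoint incident to $A_{j_1}$, acts only on positions $< i$, and thus leaves $D$ as a decreasing strip, contradicting the hypothesis applied to that reversal. The remaining possibility, that $\anterior(A_{j_1})$ lies in or after $D$, is ruled out by a short computation: using the structure just obtained it must equal $\min D = A_j$, so $\posterior(A_j) = A_{j_1}$; but $(A_{j-1}, A_j)$ is not a breakpoint with $A_{j-1} > A_j$, so by (ii) $A_{j-1} = \posterior(A_j) = A_{j_1}$, impossible since these are distinct positions of the extended string. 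Hence $j'+1 = i$, so $D$ is the unique decreasing strip of $A$ and $\rho^{*} = \rho(i,j)$ reverses exactly $D$.

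Finally, with $j' = i-1$ the left boundary of $\rho^{*} = \rho(i,j)$ becomes $(A_{i-1}, A_j) = (\anterior(A_j), A_j)$, which is not a breakpoint, so $\rho^{*}$ removes the breakpoint before $D$. If it did not also remove the breakpoint after $D$, i.e.\ if $(A_i, A_{j+1})$ were still a breakpoint, I would use (ii) on $\posterior(A_i)$ (or $\anterior(A_{j+1})$) to exhibit a second breakpoint‑removing reversal avoiding $D$, or derive a contradiction exactly as above; hence $\Delta b_{\M}(\I, \rho^{*}) = 2$. For uniqueness, any breakpoint‑removing reversal $\rho(p,q) \neq \rho^{*}$ must, by the hypothesis together with the argument establishing the rigid structure, contain the unique decreasing strip $D$ and contain no other strip entirely; combining this with the healing requirement at the boundary positions and the fact that the strips flanking $D$ are increasing of length $\geq 2$ (or extremal) forces $p = i$ and $q = j$, a contradiction. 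I expect the step ruling out a second decreasing strip inside the block — equivalently, a second breakpoint‑removing reversal — to be the main obstacle: the sub‑case in which the relevant $\anterior$-value falls inside the reversed block is delicate and is precisely where the classical Kececioglu–Sankoff argument must be adapted to the string‑with‑indels setting, relying on the minimality of $A_j$ and the length‑$\geq 2$ property.
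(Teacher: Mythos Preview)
Your strategy—first using the specific reversal $\rho^{*}$ of Lemma~\ref{cap4:lemma:reversal_breakpoint_decreasing} to force the rigid structure ``$D$ is the unique decreasing strip of $A$'', then reading off the two conclusions—can be made to work, but it has an unjustified step and is substantially more laborious than the paper's argument.

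The weak point is your Case~2 in proving that the block equals $D$. You assert that if $\anterior(A_{j_1})$ lies at a position $\geq i$ then ``using the structure just obtained it must equal $A_j$'', but the structure you obtained (block $=$ decreasing strips of length $\geq 2$, outside $=$ increasing strips) does not by itself rule out that $\anterior(A_{j_1})$ is the last element of an increasing strip $W$ at some position $q>j$. You need to re-invoke the hypothesis here, exactly as in your Case~1 but with the roles of $D$ and $D_1$ swapped: the reversal $\rho(j_1+1,q)$ removes the breakpoint at $(A_{j_1},A_{j_1+1})$ while leaving $D_1$ (positions $\leq j_1$) untouched and decreasing, a contradiction. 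With this third sub-case added, your structural claim goes through; your subsequent contradiction via $A_{j-1}=\posterior(A_j)=A_{j_1}$ then handles the remaining genuine case $\anterior(A_{j_1})=A_j$.

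The paper takes the opposite order and thereby avoids the structural lemma entirely. It proves uniqueness first, in two lines: if $\rho(i,j)$ and $\rho(x,y)$ with $i<x$ both remove a breakpoint and both leave only increasing strips, then every strip inside each interval is decreasing in $A$; the decreasing strips in the nonempty set $[i,j]\setminus[x,y]$ are not touched by $\rho(x,y)$ and hence persist, a contradiction. With uniqueness already in hand, the two-breakpoint claim is immediate: assuming $\rho(i,j)$ fails to heal (say) the left boundary, run the Lemma~\ref{cap4:lemma:reversal_breakpoint_decreasing} construction on the minimum-last-element decreasing strip inside $[i,j]$; the resulting breakpoint-removing reversal must, by uniqueness, be $\rho(i,j)$ itself, which by construction does heal the left boundary. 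Your two final paragraphs, which are currently only sketches (``I would use (ii) on $\posterior(A_i)$\ldots''; ``forces $p=i$ and $q=j$''), become unnecessary under this ordering.
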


\begin{proof}
	Suponha, por contradição, que existem duas reversões distintas $\rho(i,j)$ e $\rho(x,y)$ que removem {\it breakpoints} de $A$. Assuma, sem perda de generalidade, que $i < x$. Note que se $\rho$ é uma reversão que deixa $A$ sem {\it strips} decrescentes, então as {\it strips} no intervalo afetado por $\rho$ devem ser todas {\it strips} decrescentes, já que uma {\it strip} crescente nesse intervalo seria transformada em uma {\it strip} decrescente em $A \comp \rho$. As reversões $\rho(i,j)$ e $\rho(x,y)$ são distintas e, portanto, o intervalo $[i, j] - [x, y]$ é não vazio. Além disso, as {\it strips} contidas em $[i, j] - [x, y]$ são decrescentes, já que essas {\it strips} são afetadas por $\rho(i,j)$. Portanto, após aplicar a reversão $\rho(x,y)$, as {\it strips} decrescentes contidas em $[i, j] - [x, y]$ continuam sendo {\it strips} decrescentes em $A \comp \rho(x,y)$, o que é uma contradição. 

	Nesse ponto, considere que $\rho(i,j)$ é a reversão que remove {\it breakpoints} de $A$ e que $A' = A \comp \rho(i,j)$. Todas as {\it strips} fora do intervalo $[i, j]$ são crescentes e, como mencionado, as {\it strips} em $[i, j]$ são decrescentes. Portanto, existem {\it breakpoints} nos pares $(A_{i-1}, A_{i})$ e $(A_{j-1}, A_{j})$.

	Suponha, por contradição, que $\rho(i,j)$ remove apenas um único {\it breakpoint} de $A$. Sem perda de generalidade, assuma que o {\it breakpoint} entre as posições $i-1$ e $i$ não é removido por $\rho(i,j)$. Seja $(A_{x}, \ldots, A_{y})$ a {\it strip} em $[i,j]$ tal que $A_{y}$ é mínimo, e seja $(A_{x'}, \ldots, A_{y'})$ a {\it strip} crescente que contém o elemento $\anterior(A_y, \I)$. 

	Se $x' > y$, então existe reversão que remove um {\it breakpoint} e deixa uma {\it strip} decrescente na string resultante, o que é uma contradição.
	{\small
    \begin{align*}
      A &= (0~A_1~\ldots~A_{x}~\ldots~A_{y}~A_{y+1}~\ldots~A_{x'}~\ldots~A_{y'}~A_{y'+1}~\ldots~A_m~{n+1}),\\
      A \comp \rho(y+1, y') &=
      (0~A_1~\ldots~A_{x}~\ldots~A_{y}~\underline{A_{y'}~\ldots~A_{x'}~\ldots~A_{y+1}}~A_{y'+1}~\ldots~A_m~{n+1})
    \end{align*}
	}

	Portanto, temos que $x' < x$ e $\rho(y'+1, y)$ é uma reversão que remove um {\it breakpoint} entre as posições $y'$ e $y'+1$. Já que $\rho(i,j)$ é a única reversão que remove {\it breakpoints} de $A$, temos que $\rho(i,j) = \rho(y'+1, y)$ e o {\it breakpoint} entre as posições $y' = i-1$ e $y' + 1 = i$ é removido, o que também é uma contradição. Portanto, concluímos que $\rho(i,j)$ remove dois {\it breakpoints} e o resultado enunciado neste lema é válido.
\end{proof}

O Algoritmo $\ref{cap4:algorithm_reversals}$ é um algoritmo guloso que, a cada iteração, escolhe a operação $\beta$ com valor máximo de $\Delta \Phi(\I, \beta) + \Delta b_{\M}(\I, \beta)$. Se o algoritmo chega em um ponto onde não existem operações com $\Delta \Phi(\I, \beta) + \Delta b_{\M}(\I, \beta) > 0$, então a string não possui {\it strips} decrescentes (Lema~\ref{cap4:lemma:reversal_breakpoint_decreasing}). Nesse caso, o algoritmo inverte a {\it strip} $(A_i, \ldots, A_j)$ tal que $i > 0$ e $i$ é mínimo. Dessa forma, na próxima iteração existirá pelo menos uma {\it strip} decrescente na instância e, portanto, existe pelo menos uma operação $\beta$ com $\Delta \Phi(\I, \beta) + \Delta b_{\M}(\I, \beta) > 0$. 
{\revisaof O algoritmo termina quando todos os {\it breakpoints} forem removidos e o conjunto de rótulos das duas strings forem iguais. }

Note que $b_{\M}(\I) + |\Sigma_{\iota^n} \setminus \Sigma_{A}| \in O(n)$. Portanto, o loop do algoritmo executa $O(n)$ vezes. Para encontrar a operação com valor máximo de $\Delta \Phi(\I, \beta) + \Delta b_{\M}(\I, \beta)$ temos que testar todas as possíveis combinações de reversões e {\it indels}, o que leva tempo $O(n^2)$. Portanto, o Algoritmo $\ref{cap4:algorithm_reversals}$ possui complexidade de tempo de $O(n^3)$. Os próximos lemas e teoremas apresentam um limitante superior no número de operações usadas pelo algoritmo e uma prova para o fator de aproximação.

\begin{algorithm}[h]
	\caption{$2$-Aproximação para a Distância de Reversões e Indels em Strings sem Sinais\label{cap4:algorithm_reversals}}
	\DontPrintSemicolon
  \Entrada{Uma instância $\I = (A, \iota^n)$}
  \Saida{Uma sequência de rearranjos que transforma $A$ em $\iota^n$}
  Seja $S \gets \emptyset$\;
  \Enqto{$b_{\M}(\I) + |\Sigma_{\iota^n} \setminus \Sigma_{A}| > 0$}{
  	Seja $\beta$ a operação em $\Mindel_{\rho}$ com valor máximo de $\Delta \Phi(\I, \beta) + \Delta b_{\M}(\I, \beta)$\;
  	\uSe{$\Delta \Phi(\I, \beta) + \Delta b_{\M}(\I, \beta) > 0$}{
  		$A \gets A \comp \beta$\;
  		Adicione $\beta$ na sequência $S$\;
  	}\Senao(){
  		Seja $(A_i, \ldots, A_j)$ uma {\it strip} tal que $i > 0$ e $i$ é mínimo\;
  		$A \gets A \comp \rho(i,j)$\;
  		Adicione $\rho(i,j)$ na sequência $S$\;
  	}
  }
  {\bf retorne} a sequência $S$\;
\end{algorithm}

\begin{lemma}\label{cap4:lemma:reversal_breakpoint_sorting_decreasing}
	Para qualquer instância $\I = (A, \iota^n)$, tal que $A$ possui pelo menos uma {\it strip} decrescente, o Algoritmo~\ref{cap4:algorithm_reversals} transforma $A$ em $\iota^n$ usando no máximo $b_{\M}(\I) + |\Sigma_{\iota^n} \setminus \Sigma_{A}| - 1$ operações.
\end{lemma}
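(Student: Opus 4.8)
The plan is to follow the evolution of the potential $\Phi(\I) = b_{\M}(\I) + |\Sigma_{\iota^n}\setminus\Sigma_{A}|$ (here $\M = \Mindel_{\rho}$), which by Lemma~\ref{cap4:lemma:breakpoints_lower_bound} equals $0$ exactly when $A = \iota^n$; hence Algorithm~\ref{cap4:algorithm_reversals} must bring $\Phi$ from $\Phi_0 := b_{\M}(\I) + |\Sigma_{\iota^n}\setminus\Sigma_{A}|$ down to $0$. First I would record the per-operation behaviour: by Lemmas~\ref{cap4:lemma:breakpoint_insertion}, \ref{cap4:lemma:breakpoint_deletion} and~\ref{cap4:lemma:breakpoint_reversals}, any reversal or {\it indel} decreases $\Phi$ by at most $2$; each operation chosen in the ``if'' branch decreases $\Phi$ by at least $1$ by construction; and the operation of the ``else'' branch (reversing a whole {\it strip} $(A_i~\ldots~A_j)$) leaves $\Phi$ unchanged, since reversing a {\it strip} creates and destroys no internal {\it breakpoints} and its two endpoint pairs cannot lose a {\it breakpoint} (otherwise the greedy rule would not have reached the ``else'' branch). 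Writing $p$ for the number of ``if''-branch operations, $q$ for how many of them decrease $\Phi$ by $2$, and $m$ for the number of ``else''-branch operations, the total decrease is $p + q = \Phi_0$ and the number of operations performed is $\Phi_0 - q + m$; so it suffices to prove $q \ge m+1$.

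Call an operation a \emph{flattening} step if it turns a string that has a decreasing {\it strip} into one that has none. The core of the argument is the claim that a flattening step chosen by the algorithm is a reversal that decreases $\Phi$ by exactly $2$: an insertion cannot remove a decreasing {\it strip}, and a deletion only removes an $\deletionElement$-{\it strip}, so a flattening step is a reversal; if its resulting string is $\iota^n$ then, since the previous string had a decreasing {\it strip} and therefore at least two {\it breakpoints}, the reversal must remove exactly two {\it breakpoints}; and if its resulting string is not $\iota^n$, I would argue, using Lemma~\ref{cap4:lemma:reversal_breakpoint_no_decreasing} together with a local analysis of the reversal's cut points, that a reversal removing a single {\it breakpoint} cannot destroy every decreasing {\it strip}, so the greedy choice, being a flattening step of positive gain, has gain $2$. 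Granting this claim, I would count flattening steps. Note that an ``else''-branch operation reverses an increasing {\it strip} and thus is never a flattening step, so every flattening step is an ``if''-branch operation and is counted in $q$. The ``else'' steps occur at times $t_1 < \cdots < t_m$; each $t_i$ creates a decreasing {\it strip}, while immediately before $t_i$ (and, by hypothesis, before time $0$) there is none, so a flattening step occurs before $t_1$, one occurs strictly between $t_{i-1}$ and $t_i$ for each $i$, and one more occurs after $t_m$ because the run ends at $\iota^n$; these $m+1$ flattening steps are distinct gain-$2$ operations, whence $q \ge m+1$ and the bound $\Phi_0 - 1$ follows. (When $m = 0$ the same count still yields one flattening step, so $q \ge 1$.)

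The main obstacle is precisely the non-identity case of the flattening claim: ruling out a greedy move that removes a single {\it breakpoint} while eliminating every decreasing {\it strip}. Since the greedy rule only guarantees a move of maximum gain, this requires the sharper statement that \emph{every} flattening reversal of positive gain removes two {\it breakpoints}; when the maximum gain equals $1$, Lemma~\ref{cap4:lemma:reversal_breakpoint_no_decreasing} already forbids that all {\it breakpoint}-removing reversals be flattening, and one finishes by a case analysis of how a reversal of a union of maximal decreasing {\it strips} acts on the pairs $(A_{i-1},A_i)$ and $(A_j,A_{j+1})$ at its ends (using that all {\it strips} inside the reversed interval are decreasing and all {\it strips} outside it are increasing). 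This step is also where the hypothesis that $A$ has a decreasing {\it strip} is genuinely used: it is what upgrades the generic bound of $\Phi_0$ operations to $\Phi_0 - 1$.
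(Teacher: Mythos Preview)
Your approach is genuinely different from the paper's. The paper argues by induction on the potential $\Phi=b_{\M}(\I)+|\Sigma_{\iota^n}\setminus\Sigma_A|$: at each step it looks at the operation $\beta$ the algorithm picks, and when $\beta$ is a reversal whose result has no decreasing strip it invokes Lemma~\ref{cap4:lemma:reversal_breakpoint_no_decreasing} to conclude that $\beta$ removed two breakpoints, bundles $\beta$ with the next ``else''-branch reversal, and falls back on the inductive hypothesis. Your global amortization (counting flattening steps against ``else'' steps to get $q\ge m+1$) is cleaner in one respect: you never have to argue about what happens in the iteration immediately after a flattening step.

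The real obstacle, however, is exactly the one you flag, and your proposed resolution does not close it. You need that every flattening step the algorithm performs has gain $2$. For reversals you appeal to Lemma~\ref{cap4:lemma:reversal_breakpoint_no_decreasing}, but that lemma's hypothesis is that \emph{every} breakpoint-removing reversal leaves no decreasing strip; what you actually have is merely that the single max-gain reversal $\beta$ chosen by the greedy rule is flattening, possibly coexisting with other gain-$1$ reversals that are \emph{not} flattening. Since the algorithm does not specify tie-breaking among max-gain operations, you must prove the stronger statement that no flattening reversal of gain $1$ exists at all. Your sketched ``case analysis on the endpoint pairs, using that all strips inside the interval are decreasing and all outside are increasing'' is not enough as stated: you also need that every interior strip of the reversed interval has length at least $2$ (a reversed singleton remains a decreasing singleton unless it merges at a boundary), and that the elements forcing a breakpoint at one endpoint cannot be placed elsewhere without creating a decreasing strip outside the interval. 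None of this is carried out.

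There is a second, smaller gap in your treatment of indels. Saying ``an insertion cannot remove a decreasing strip'' ignores that inserting an element into $\Sigma_A$ changes \texttt{anterior}/\texttt{posterior} globally and can therefore create breakpoints---and split strips---far from the insertion point; you need an argument that these new pieces still contain a decreasing strip. For deletions, the reason you give (``a deletion only removes an $\deletionElement$-strip'') is not the right one: removing an $\deletionElement$-strip can merge its two neighbours and absorb a decreasing singleton into an adjacent increasing strip. What actually saves you there is that such a merge forces the deletion to have gain $2$, but that is a separate observation you should make explicit.
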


\begin{proof}
	Provaremos por indução no valor de $b_{\M}(\I) + |\Sigma_{\iota^n} \setminus \Sigma_{A}|$ que o resultado é válido.

	Note que não existe instância com $b_{\M}(\I) = 1$ de acordo com a definição de {\it breakpoints}. Além disso, como existe {\it strip} decrescente em $A$, então temos $b_{\M}(\I) > 0$. Portanto, como caso base, considere que $b_{\M}(\I) = 2$ e $|\Sigma_{\iota^n} \setminus \Sigma_{A}| \geq 0$. Sejam $(A_i, A_{i+1})$ e $(A_j, A_{j+1})$ os dois {\it breakpoints} em $A$. Essa string possui exatamente três {\it strips}: $(A_0, \ldots, A_i)$; $(A_{i+1}, \ldots, A_j)$; $(A_{j+1}, \ldots, A_{m+1})$, onde $|A| = m$. Como $A$ possui uma {\it strip} decrescente, temos que a {\it strip} $(A_{i+1}, \ldots, A_j)$ é decrescente e a reversão $\rho(i + 1, j)$ remove os dois {\it breakpoints} de $A$, já que após essa reversão todas as {\it strips} são crescentes (Lema~\ref{cap4:lemma:reversal_breakpoint_no_decreasing}). Quando $b_{\M}(\I) = 0$, pelo Lema~\ref{cap4:lemma:insertion_no_breakpoint}, sempre existe uma inserção $\phi$ com $\Delta \Phi(\I, \beta) = 1$. Portanto, para transformar $A$ em $\iota^n$, o algoritmo usa $1 + |\Sigma_{\iota^n} \setminus \Sigma_{A}| = b_{\M}(\I) + |\Sigma_{\iota^n} \setminus \Sigma_{A}| - 1$ operações.

	Suponha que o resultado é válido para qualquer instância $\I = (A, \iota^n)$ tal que $A$ contém uma {\it strip} decrescente e $b_{\M}(\I) + |\Sigma_{\iota^n} \setminus \Sigma_{A}| \leq k-1$.

	Para uma instância $\I = (A, \iota^n)$ tal que $A$ contém uma {\it strip} decrescente e $b_{\M}(\I) + |\Sigma_{\iota^n} \setminus \Sigma_{A}| = k$, temos os seguintes casos dependendo da operação $\beta$ escolhida pelo algoritmo.

	{\bf Caso 1}: Suponha que a operação escolhida $\beta$ é um {\it indel}. Uma operação de {\it indel} não torna {\it strips} decrescentes em crescentes e só é escolhida pelo algoritmo se $\Delta \Phi(\I, \beta) + \Delta b_{\M}(\I, \beta) > 0$. Seja $\I' = (A', \iota^n)$ com $A' = A \comp \beta$. Essa instância possui  $b_{\M}(\I') + |\Sigma_{\iota^n} \setminus \Sigma_{A'}| = k' \leq k - 1$ e, pela hipótese de indução, o algoritmo transforma a string $A'$ em $\iota^n$ usando no máximo $k'-1$ operações. Portanto, o algoritmo usa no máximo $1 + (k' - 1) = k' \leq k-1 = b_{\M}(\I) + |\Sigma_{\iota^n} \setminus \Sigma_{A}| - 1$ operações para transformar $A$ em $\iota^n$. 

	{\bf Caso 2}: Suponha que $\beta$ é uma reversão: Como $A$ possui uma {\it strip} decrescente, pelo Lema~\ref{cap4:lemma:reversal_breakpoint_decreasing}, temos que $\Delta \Phi(\I, \beta) + \Delta b_{\M}(\I, \beta) > 0$. Seja $\I' = (A', \iota^n)$ com $A' = A \comp \beta$. Essa instância possui  $b_{\M}(\I') + |\Sigma_{\iota^n} \setminus \Sigma_{A'}| = k' \leq k - 1$. 

	Se $A'$ possui pelo menos uma {\it strip} decrescente, então o algoritmo transforma $A'$ em $\iota^n$ usando no máximo $k'-1$ operações (hipótese de indução). Portanto, o algoritmo usa no máximo $1 + (k' - 1) = k' \leq k-1 = b_{\M}(\I) + |\Sigma_{\iota^n} \setminus \Sigma_{A}| - 1$ operações para transformar $A$ em $\iota^n$. 

	Se $A'$ não possui {\it strips} decrescentes, então $b_{\M}(\I') + |\Sigma_{\iota^n} \setminus \Sigma_{A'}| = b_{\M}(\I) + |\Sigma_{\iota^n} \setminus \Sigma_{A}| - 2 = k - 2 = k'$, já que dois {\it breakpoints} foram removidos por $\beta$ (Lema~\ref{cap4:lemma:reversal_breakpoint_no_decreasing}). Na próxima iteração, o algoritmo escolhe uma reversão $\beta'$ que inverte uma {\it strip} de $A'$, gerando a string $A'' = A' \comp \beta'$ que possui uma {\it strip} decrescente. Seja $\I'' = (A'', \iota^n)$ e note que $b_{\M}(\I') + |\Sigma_{\iota^n} \setminus \Sigma_{A'}| = b_{\M}(\I'') + |\Sigma_{\iota^n} \setminus \Sigma_{A''}| = k'$. Pela hipótese de indução, o algoritmo transforma $A''$ em $\iota^n$ usando no máximo $k' - 1$ operações. Portanto, o algoritmo transforma $A$ em $\iota^n$ usando no máximo  $2 + k' - 1 = 1 + (k - 2) = k - 1$ operações. 
\end{proof}

\begin{lemma}\label{cap4:lemma:reversal_breakpoint_sorting}
	Para qualquer instância $\I = (A, \iota^n)$, o Algoritmo~\ref{cap4:algorithm_reversals} transforma $A$ em $\iota^n$ usando no máximo $b_{\M}(\I) + |\Sigma_{\iota^n} \setminus \Sigma_{A}|$ operações.
\end{lemma}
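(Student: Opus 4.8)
The plan is to reduce the general statement to the previously established Lemma~\ref{cap4:lemma:reversal_breakpoint_sorting_decreasing}, which already handles the case where $A$ contains at least one decreasing strip. So the whole proof amounts to dealing with the complementary case: instances $\I = (A, \iota^n)$ in which $A$ has \emph{no} decreasing strip (all strips increasing). Note that the bound to prove is $b_{\M}(\I) + |\Sigma_{\iota^n} \setminus \Sigma_{A}|$ (without the $-1$), so we have one extra operation of slack to absorb a possible ``setup'' move.

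First I would dispose of the trivial situations. If $b_{\M}(\I) = 0$ and $|\Sigma_{\iota^n} \setminus \Sigma_{A}| = 0$, then $A = \iota^n$ and the algorithm makes $0$ operations, matching the bound. If $b_{\M}(\I) = 0$ but $|\Sigma_{\iota^n} \setminus \Sigma_{A}| > 0$, then by Lemma~\ref{cap4:lemma:insertion_no_breakpoint} each iteration finds an insertion $\insertion$ with $\Delta\Phi(\I,\insertion) + \Delta b_{\M}(\I,\insertion) = 1$, and there are no breakpoints to worry about (inserting in an already-sorted string keeps it breakpoint-free, since consecutive $\deletionElement$-free inserted singletons fit exactly between their neighbors — this should be argued carefully or cited from the insertion lemma's proof); hence the algorithm uses exactly $|\Sigma_{\iota^n}\setminus\Sigma_A|$ operations. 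Also, if $A$ has a decreasing strip we invoke Lemma~\ref{cap4:lemma:reversal_breakpoint_sorting_decreasing} directly, getting at most $b_{\M}(\I) + |\Sigma_{\iota^n}\setminus\Sigma_A| - 1$ operations, which is within the claimed bound.

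The remaining case is $b_{\M}(\I) > 0$ and $A$ has only increasing strips. Here I would argue by what the algorithm does on its first iteration. If there is some operation $\beta$ (necessarily an indel, by Lemma~\ref{cap4:lemma:reversal_breakpoint_decreasing} there is no breakpoint-removing reversal in this configuration, but a deletion removing a $\deletionElement$-strip or an insertion could still qualify) with $\Delta\Phi(\I,\beta) + \Delta b_{\M}(\I,\beta) > 0$: then after applying it we get $\I' = (A', \iota^n)$ with potential $k' \le k-1$ where $k = b_{\M}(\I) + |\Sigma_{\iota^n}\setminus\Sigma_A|$. Since indels never turn an increasing strip into a decreasing one, either $A'$ still has only increasing strips — and we recurse on the smaller potential — or (this cannot actually happen for indels, so the recursion is clean) ... in any case the induction on $k$ closes with $1 + (k'-1 \text{ or } k') \le k$. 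The genuinely new subcase is when \emph{no} operation has positive $\Delta\Phi + \Delta b_{\M}$: then the algorithm's \texttt{else} branch reverses the first strip $(A_i,\ldots,A_j)$ with $i>0$, producing $A''$ which now \emph{has} a decreasing strip, with the same potential $b_{\M}(\I'') + |\Sigma_{\iota^n}\setminus\Sigma_{A''}| = k$ (a reversal of a whole strip changes no breakpoints and no alphabet). Then Lemma~\ref{cap4:lemma:reversal_breakpoint_sorting_decreasing} applies to $A''$, giving at most $k-1$ further operations, for a total of $1 + (k-1) = k$ — exactly the bound, and this is where the extra ``$+1$'' of slack is spent.

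The main obstacle I anticipate is making the case analysis airtight about which operations can have positive $\Delta\Phi + \Delta b_{\M}$ when all strips are increasing: one must be sure that deletions (which by Lemma~\ref{cap4:lemma:deletion_breakpoint} exist with positive gain whenever $|\Sigma_A\setminus\Sigma_{\iota^n}|>0$) are correctly folded into the ``indel'' subcase, and that after such a deletion the hypothesis ``$A'$ has no decreasing strip'' is preserved so the induction variable genuinely decreases. A secondary subtlety is the base of the induction and the bookkeeping that the strip-reversal ``setup'' move is only ever needed \emph{once} — because after it we are permanently in the decreasing-strip regime covered by the earlier lemma — so the $-1$ versus no-$-1$ discrepancy between the two lemmas is resolved exactly by that single move. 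Once those points are pinned down, the induction on $b_{\M}(\I) + |\Sigma_{\iota^n}\setminus\Sigma_A|$ goes through routinely.
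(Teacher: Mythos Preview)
Your proposal is correct and follows essentially the same route as the paper. The paper's proof is terser: if $A$ has a decreasing strip, invoke Lemma~\ref{cap4:lemma:reversal_breakpoint_sorting_decreasing} directly; otherwise, observe that every operation the algorithm applies before it first enters the \texttt{else} branch has $\Delta\Phi + \Delta b_{\M} \geq 1$ (since the \texttt{else} branch is only reached when no such operation exists), the setup reversal costs one operation without changing the potential, and from that point Lemma~\ref{cap4:lemma:reversal_breakpoint_sorting_decreasing} bounds the remainder by $k'-1$---giving a total of at most $k$.

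One remark: your worry about whether an indel could create a decreasing strip is unnecessary. The paper's counting argument does not depend on staying in the ``no decreasing strip'' regime after each indel; it only uses that every operation taken in the \texttt{if} branch has gain $\geq 1$, regardless of what strips look like afterward. If an indel happened to create a decreasing strip, that would only help (Lemma~\ref{cap4:lemma:reversal_breakpoint_sorting_decreasing} would apply immediately with the $-1$). So you can drop that part of the case analysis and the proof becomes as short as the paper's.
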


\begin{proof}
	Se $A$ possui uma {\it strip} decrescente, então o resultado desse lema segue diretamente do Lema~\ref{cap4:lemma:reversal_breakpoint_sorting_decreasing}. Caso contrário, usando um argumento similar ao usado no Caso $2$ do Lema~\ref{cap4:lemma:reversal_breakpoint_sorting_decreasing}, temos que o algoritmo eventualmente usa uma reversão para tornar uma {\it strip} crescente em uma {\it strip} decrescente e, além disso, a string resultante $A'$ é transformada em $\iota^n$ usando no máximo $b_{\M}(\I') + |\Sigma_{\iota^n} \setminus \Sigma_{A'}| - 1$ operações (Lema~\ref{cap4:lemma:reversal_breakpoint_sorting_decreasing}). Como todas as operações usadas antes dessa reversão satisfazem a condição $\Delta \Phi(\I, \beta) + \Delta b_{\M}(\I, \beta) \geq 1$, concluímos que o algoritmo usa no máximo $b_{\M}(\I) + |\Sigma_{\iota^n} \setminus \Sigma_{A}|$ operações para transformar $A$ em $\iota^n$.
\end{proof}

\begin{theorem}\label{cap4:theorem:reversal_breakpoint_sorting}
	O Algoritmo~\ref{cap4:algorithm_reversals} é uma $2$-aproximação para o problema da Distância de Reversões e Indels em Strings sem Sinais.
\end{theorem}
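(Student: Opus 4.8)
The plan is to combine the lower bound of Lemma~\ref{cap4:lemma:breakpoints_lower_bound} with the upper bound on the number of operations used by Algorithm~\ref{cap4:algorithm_reversals} given in Lemma~\ref{cap4:lemma:reversal_breakpoint_sorting}. First I would recall that Lemma~\ref{cap4:lemma:reversal_breakpoint_sorting} guarantees that the algorithm returns a sequence $S$ transforming $A$ into $\iota^n$ with $|S| \leq b_{\M}(\I) + |\Sigma_{\iota^n} \setminus \Sigma_{A}|$, where $\M = \Mindel_{\rho}$. Second, Lemma~\ref{cap4:lemma:breakpoints_lower_bound} gives $d_{\Mindel_{\rho}}(A, \iota^n) \geq \frac{b_{\rho}(\I) + |\Sigma_{\iota^n} \setminus \Sigma_{A}|}{2}$, i.e. $b_{\rho}(\I) + |\Sigma_{\iota^n} \setminus \Sigma_{A}| \leq 2\, d_{\Mindel_{\rho}}(A, \iota^n)$.

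Chaining these two inequalities directly yields
\[
|S| \;\leq\; b_{\M}(\I) + |\Sigma_{\iota^n} \setminus \Sigma_{A}| \;\leq\; 2\, d_{\Mindel_{\rho}}(A, \iota^n),
\]
so the returned sequence has length at most twice the optimum, which is exactly the approximation factor claimed. One should also note (it follows from the termination condition of the while loop, namely $b_{\M}(\I) + |\Sigma_{\iota^n} \setminus \Sigma_{A}| = 0 \iff A = \iota^n$) that the output is indeed a valid sorting sequence, and that the complexity analysis preceding the algorithm shows it runs in $O(n^3)$ time, so it is a genuine polynomial-time approximation algorithm.

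There is no real obstacle left at this point: all the structural work has been done in the earlier lemmas. The only care needed is bookkeeping — making sure the $b_{\M}$ appearing in Lemma~\ref{cap4:lemma:reversal_breakpoint_sorting} is the reversal-breakpoint count $b_{\rho}$ used in the lower bound (which it is, by the convention fixed for the model $\Mindel_{\rho}$), and observing that the feasibility of $S$ plus the two bounds together give the statement. I would therefore write the proof as a two-line citation of Lemmas~\ref{cap4:lemma:breakpoints_lower_bound} and~\ref{cap4:lemma:reversal_breakpoint_sorting}, inserting the chained inequality above and a sentence confirming correctness of the output.

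\begin{proof}
Pelo Lema~\ref{cap4:lemma:reversal_breakpoint_sorting}, o Algoritmo~\ref{cap4:algorithm_reversals} retorna uma sequência $S$ que transforma $A$ em $\iota^n$ com $|S| \leq b_{\M}(\I) + |\Sigma_{\iota^n} \setminus \Sigma_{A}|$, onde $\M = \Mindel_{\rho}$. Pelo Lema~\ref{cap4:lemma:breakpoints_lower_bound}, temos $d_{\Mindel_{\rho}}(A, \iota^n) \geq \frac{b_{\rho}(\I) + |\Sigma_{\iota^n} \setminus \Sigma_{A}|}{2}$, ou seja, $b_{\rho}(\I) + |\Sigma_{\iota^n} \setminus \Sigma_{A}| \leq 2\, d_{\Mindel_{\rho}}(A, \iota^n)$. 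Combinando as duas desigualdades,
\[
|S| \;\leq\; b_{\rho}(\I) + |\Sigma_{\iota^n} \setminus \Sigma_{A}| \;\leq\; 2\, d_{\Mindel_{\rho}}(A, \iota^n).
\]
Como o laço do algoritmo só termina quando $b_{\M}(\I) + |\Sigma_{\iota^n} \setminus \Sigma_{A}| = 0$, o que ocorre se, e somente se, $A = \iota^n$, a sequência $S$ de fato transforma $A$ em $\iota^n$. Portanto, o Algoritmo~\ref{cap4:algorithm_reversals} é uma $2$-aproximação para o problema da Distância de Reversões e Indels em Strings sem Sinais.
\end{proof}
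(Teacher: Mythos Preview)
Your proposal is correct and matches the paper's own proof, which simply states that the result follows directly from Lemmas~\ref{cap4:lemma:breakpoints_lower_bound} and~\ref{cap4:lemma:reversal_breakpoint_sorting}. Your version just spells out the chained inequality and the termination argument explicitly, but the approach is identical.
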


\begin{proof}
	Segue diretamente dos lemas~\ref{cap4:lemma:breakpoints_lower_bound} e \ref{cap4:lemma:reversal_breakpoint_sorting}.
\end{proof}

\subsection{Algoritmos de $3$-Aproximação para Modelos com Transposições}

Nesta seção, consideramos o problema da Distância de Transposições em Strings sem Sinais e o problema da Distância de Reversões e Transposições em Strings sem Sinais. Os próximos lemas apresentam casos em que sempre existe uma transposição que remove {\it breakpoints}.

\begin{lemma}\label{cap4:lemma:transposition_breakpoint}
	Para qualquer instância $\I = (A, \iota^n)$, tal que $b_{\tau}(\I) > 0$ e $|\Sigma_{A} \setminus \Sigma_{\iota^n}| = 0$, existe uma transposição que remove pelo menos um {\it breakpoint} de $A$. 
\end{lemma}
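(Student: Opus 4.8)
O plano é decompor $A$ em suas \emph{strips} maximais e mostrar que o primeiro \emph{breakpoint} sempre induz uma configuração na qual uma única transposição une \emph{strips}. Como estamos usando \emph{breakpoints} de transposição, toda \emph{strip} de $A$ é crescente; como $b_{\tau}(\I) > 0$, a string estendida $A$ se decompõe em pelo menos duas \emph{strips}. Seja $S_1 = (A_0~\ldots~A_p)$ a primeira delas e $a = A_p$ o seu último elemento, de modo que $(A_p, A_{p+1})$ é um \emph{breakpoint}. Seja $w = A_{p+1}$ o primeiro elemento da \emph{strip} $S_2$, e sejam $v = \posterior(a, \I)$ e $u = \anterior(w, \I)$. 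Como $|\Sigma_A \setminus \Sigma_{\iota^n}| = 0$, não existem elementos $\deletionElement$, logo $v, u \in \Sigma_A$ ocorrem em $A$; escrevemos $A_\ell = v$ e $A_q = u$.

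Primeiro, eu estabeleceria os seguintes fatos estruturais. (i) $v \neq w$ (pois $(a,w)$ ser um \emph{breakpoint} força $w \neq \posterior(a, \I)$), e daí $v$ é o primeiro elemento de uma \emph{strip} $S_h$ com $h \geq 3$: se $v$ fosse interior à sua \emph{strip}, o elemento anterior a $v$ em $A$ seria $\anterior(v, \I) = a = A_p$, forçando $v = A_{p+1} = w$, uma contradição; o mesmo argumento exclui $h \in \{1, 2\}$. (ii) Simetricamente, $u \neq a$, e daí $u$ é o último elemento de uma \emph{strip} $S_t$ com $t \geq 3$: se $u$ fosse interior, o elemento seguinte a $u$ em $A$ seria $\posterior(u, \I) = w = A_{p+1}$, forçando $u = A_p = a$; e $S_2$ não pode terminar em $u < w$ pois \emph{strips} são crescentes. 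Em particular, $p+1 < \ell$ e $p+1 < j \leq q$, onde $j$ denota a posição inicial da terceira \emph{strip} $S_3$.

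Em seguida, eu dividiria a prova conforme a posição relativa de $S_t$ e $S_h$. Se $q \geq \ell$, aplico $\tau(p+1, \ell, q+1)$: essa transposição quebra exatamente os três \emph{breakpoints} $(A_p, A_{p+1})$, $(A_{\ell-1}, A_\ell)$ (o \emph{breakpoint} antes de $S_h$) e $(A_q, A_{q+1})$ (o \emph{breakpoint} depois de $S_t$), e cria as adjacências $(a, v)$, $(u, w)$ e $(A_{\ell-1}, A_{q+1})$; como $v = \posterior(a, \I)$ e $w = \posterior(u, \I)$, as duas primeiras não são \emph{breakpoints}, logo $\Delta b_{\tau}(\I, \tau) \geq 3 - 1 = 2$. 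Se $q < \ell$ (de modo que $S_t$ está estritamente antes de $S_h$, e portanto depois de $S_2$), aplico $\tau(p+1, j, q+1)$ com $j$ sendo o início de $S_3$: essa transposição quebra os três \emph{breakpoints} $(A_p, A_{p+1})$, $(A_{j-1}, A_j)$ (\emph{breakpoint} entre $S_2$ e $S_3$) e $(A_q, A_{q+1})$ (\emph{breakpoint} depois de $S_t$), e cria $(A_p, A_j)$, $(u, w)$ e $(A_{j-1}, A_{q+1})$; pelo menos $(u, w)$ não é um \emph{breakpoint}, logo $\Delta b_{\tau}(\I, \tau) \geq 3 - 2 = 1$. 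Em ambos os casos a transposição remove ao menos um \emph{breakpoint}.

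O ponto mais delicado será verificar os limites dos índices para que cada $\tau(\cdot, \cdot, \cdot)$ seja uma transposição válida (em particular $\ell \geq p+2$, $p+1 < j \leq q$ e $q \leq |A|$, este último porque $u = \anterior(w, \I) < n+1$), e conferir que os três \emph{breakpoints} quebrados são dois a dois distintos, de modo que a contagem acima seja correta; os sub-casos $\ell = q$ e $j = q$ (blocos de um único elemento) devem ser tratados explicitamente. Também é preciso descartar a situação degenerada de apenas duas \emph{strips} --- que de fato não ocorre quando $b_{\tau}(\I) > 0$, pois $\posterior(a, \I)$ não pode pertencer a $S_1$ (ele excede $\max S_1 = a$) nem a $S_2$ sem contradizer que $a$ seja interior a $S_2$ --- garantindo assim que $S_3$, e portanto as \emph{strips} $S_h$ e $S_t$, de fato existem.
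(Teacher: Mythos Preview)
Your argument is correct, but it is considerably more elaborate than what the paper does. The paper uses only the element $v=\posterior(a,\I)$: since all strips are increasing, $v$ must be the first element of its strip $(A_j,\ldots,A_{k-1})$, and the single transposition $\tau(p+1,j,k)$ places $v$ right after $a$, destroying the breakpoint $(a,w)$; that is already enough for the lemma. You recover this same structural fact about $v$, but you also bring in $u=\anterior(w,\I)$ and split into two cases depending on the relative positions of $u$ and $v$. What this buys you is a sharper conclusion in Case~1 (two breakpoints removed instead of one), at the cost of the extra bookkeeping about $S_t$, the index bounds, and the exclusion of the two-strip degenerate case. Both approaches rest on the same key observation---that $\posterior(a,\I)$ starts a strip strictly after $S_2$---so your proof is a valid but strictly stronger variant of the paper's one-line argument.
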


\begin{proof}
  {\revisaof 
	  Seja $(A_0, \ldots, A_i)$ a primeira {\it strip} de $A$. Seja $(A_j,\ldots, A_{k-1})$ a {\it strip} que contém o elemento $\posterior(A_i, \I)$. Note que como existem apenas {\it strips} crescentes ao considerar {\it breakpoints} de transposição, temos que $A_j = \posterior(A_i, \I)$. Portanto, a transposição $\tau(i+1,j,k)$ remove o {\it breakpoint} que existe entre $A_i$ e $A_{i+1}$.
	  \begin{align*}
	    A &= (A_1~\ldots~A_{i}~A_{i+1}~\ldots~A_{j-1}~A_{j}~\ldots~A_{k-1}~A_k~\ldots~A_m),\\
	    A \comp \tau(i+1,j,k) &= (A_1~\ldots~A_i~\underline{A_j~\ldots~A_{k-1}}~\underline{A_{i+1}~\ldots~A_{j-1}}~A_k~\ldots~A_m).
	  \end{align*}
  }
\end{proof}



\begin{algorithm}[h]
	\caption{$3$-Aproximação para a Distância de Transposições e Indels em Strings sem Sinais\label{cap4:algorithm_transpositions}}
	\DontPrintSemicolon
  \Entrada{Uma instância $\I = (A, \iota^n)$}
  \Saida{Uma sequência de rearranjos que transforma $A$ em $\iota^n$}
  Seja $S \gets \emptyset$\;
  \Enqto{$b_{\tau}(\I) + |\Sigma_{\iota^n} \setminus \Sigma_{A}| > 0$\label{cap4:breakpoints_alg_transp_loop}}{
  	Seja $\beta$ a operação em $\Mindel_{\tau}$ com valor máximo de $\Delta \Phi(\I, \beta) + \Delta b_{\tau}(\I, \beta)$\;
		$A \gets A \comp \beta$\;
		Adicione $\beta$ na sequência $S$\;
  }
	{\bf retorne} a sequência $S$\;
\end{algorithm}

O Algoritmo~\ref{cap4:algorithm_transpositions} também é um algoritmo guloso que escolhe a operação $\beta$ com valor máximo de $\Delta \Phi(\I, \beta) + \Delta b_{\tau}(\I, \beta)$. Para transposições e {\it indels}, os lemas~\ref{cap4:lemma:deletion_breakpoint}, \ref{cap4:lemma:insertion_no_breakpoint} e \ref{cap4:lemma:transposition_breakpoint} garantem que sempre existe uma operação que satisfaz a condição $\Delta \Phi(\I, \beta) + \Delta b_{\tau}(\I, \beta) > 0$. A complexidade de tempo do Algoritmo~\ref{cap4:algorithm_transpositions} é de $O(n^4)$, já que testar todas as possíveis transposições possui complexidade de tempo de $O(n^3)$ e o loop da linha~\ref{cap4:breakpoints_alg_transp_loop} executa no máximo $O(n)$ vezes.

\begin{lemma}\label{cap4:lemma:ub_sorting_transpositions}
		Para qualquer instância $\I = (A, \iota^n)$, o Algoritmo~\ref{cap4:algorithm_transpositions} transforma $A$ em $\iota^n$ usando no máximo $b_{\tau}(\I) + |\Sigma_{\iota^n} \setminus \Sigma_{A}|$ operações.
\end{lemma}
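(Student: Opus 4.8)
The plan is to prove the upper bound by showing that the greedy Algorithm~\ref{cap4:algorithm_transpositions} always makes progress: at every iteration it reduces the potential $b_{\tau}(\I) + |\Sigma_{\iota^n} \setminus \Sigma_{A}|$ by at least one unit, and this potential starts at $b_{\tau}(\I) + |\Sigma_{\iota^n} \setminus \Sigma_{A}|$ and ends at $0$ exactly when $A = \iota^n$. First I would recall that the loop condition in line~\ref{cap4:breakpoints_alg_transp_loop} is precisely $b_{\tau}(\I) + |\Sigma_{\iota^n} \setminus \Sigma_{A}| > 0$, and that an instance has $b_{\tau}(\I') + |\Sigma_{\iota^n} \setminus \Sigma_{A'}| = 0$ if and only if $A' = \iota^n$ (this mirrors the argument in the proof of Lemma~\ref{cap4:lemma:breakpoints_lower_bound}). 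Hence it suffices to argue that whenever the loop body executes, the chosen operation $\beta$ satisfies $\Delta \Phi(\I, \beta) + \Delta b_{\tau}(\I, \beta) \geq 1$.

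The key step is a case analysis on the current string $A$, using the three existence lemmas already proved. Since $\beta$ is picked to maximize $\Delta \Phi(\I, \beta) + \Delta b_{\tau}(\I, \beta)$ over all operations in $\Mindel_{\tau}$, it is enough to exhibit \emph{some} operation with value $\geq 1$ in each case. If $|\Sigma_{A} \setminus \Sigma_{\iota^n}| > 0$, i.e.\ $A$ still contains an element equal to $\deletionElement$, then Lemma~\ref{cap4:lemma:deletion_breakpoint} gives a deletion $\psi$ with $\Delta \Phi(\I, \psi) + \Delta b_{\M}(\I, \psi) \geq 1$. Otherwise $|\Sigma_{A} \setminus \Sigma_{\iota^n}| = 0$; if additionally $b_{\tau}(\I) > 0$, then Lemma~\ref{cap4:lemma:transposition_breakpoint} gives a transposition removing at least one breakpoint, so $\Delta \Phi(\I, \tau) + \Delta b_{\tau}(\I, \tau) \geq 1$ (a transposition does not change $\Sigma_A$). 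Finally, if $b_{\tau}(\I) = 0$ but the potential is still positive, then $|\Sigma_{\iota^n} \setminus \Sigma_{A}| > 0$ and Lemma~\ref{cap4:lemma:insertion_no_breakpoint} supplies an insertion $\phi$ with $\Delta \Phi(\I, \phi) + \Delta b_{\M}(\I, \phi) = 1$. In all three cases the maximizing operation $\beta$ has $\Delta \Phi(\I, \beta) + \Delta b_{\tau}(\I, \beta) \geq 1$.

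From here the bound is immediate: each iteration appends exactly one operation to $S$ and strictly decreases the nonnegative integer potential $b_{\tau}(\I) + |\Sigma_{\iota^n} \setminus \Sigma_{A}|$ by at least one, so the total number of iterations (hence $|S|$) is at most the initial value of the potential, namely $b_{\tau}(\I) + |\Sigma_{\iota^n} \setminus \Sigma_{A}|$. One should also note that the potential cannot become negative — by Lemma~\ref{cap4:lemma:breakpoint_transposition} and the insertion/deletion lemmas each operation decreases it by at most $3$, and it is bounded below by $0$ since $b_{\tau} \geq 0$ and $|\Sigma_{\iota^n} \setminus \Sigma_{A}| \geq 0$ — which confirms that the loop terminates with $A = \iota^n$.

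The main obstacle, and the only subtle point, is making precise the claim that a decrease in the combined potential $\Delta \Phi(\I, \beta) + \Delta b_{\tau}(\I, \beta)$ of the instance before the operation equals the actual drop in the potential $b_{\tau}(\I) + |\Sigma_{\iota^n} \setminus \Sigma_{A}|$ across that step; this requires unwinding the definitions of $\Delta \Phi$ (Definition~\ref{cap4:def:delta_phi}) and $\Delta b_{\tau}$, and checking that $\Delta \Phi(\I,\beta) = |\Sigma_{\iota^n}\setminus\Sigma_A| - |\Sigma_{\iota^n}\setminus\Sigma_{A'}|$ and $\Delta b_{\tau}(\I,\beta) = b_{\tau}(\I) - b_{\tau}(\I')$ indeed sum telescopically. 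This is routine but must be stated. Everything else follows directly from the already-established existence lemmas, so I expect the proof to be short.
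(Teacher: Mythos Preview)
Your proposal is correct and follows essentially the same approach as the paper: the paper's proof is a single sentence stating that the result follows directly from the existence of an operation $\beta$ with $\Delta \Phi(\I, \beta) + \Delta b_{\tau}(\I, \beta) > 0$ (citing Lemmas~\ref{cap4:lemma:deletion_breakpoint}, \ref{cap4:lemma:insertion_no_breakpoint}, and \ref{cap4:lemma:transposition_breakpoint}) together with the fact that $b_{\tau}(\I) + |\Sigma_{\iota^n} \setminus \Sigma_{A}| = 0$ if and only if $A = \iota^n$. Your case analysis simply unpacks this reference into its three constituent cases, which is fine.
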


\begin{proof}
	Segue diretamente do fato de que sempre existe uma operação que satisfaz a condição $\Delta \Phi(\I, \beta) + \Delta b_{\tau}(\I, \beta) > 0$ (lemas~\ref{cap4:lemma:deletion_breakpoint}, \ref{cap4:lemma:insertion_no_breakpoint} e \ref{cap4:lemma:transposition_breakpoint}) e do fato de que $b_{\tau}(\I) + |\Sigma_{\iota^n} \setminus \Sigma_{A}| = 0$ se, e somente se, $A = \iota^n$.
\end{proof}

\begin{theorem}
	O Algoritmo~\ref{cap4:algorithm_transpositions} é uma $3$-aproximação para o problema da Distância de Transposições e Indels em Strings sem Sinais.
\end{theorem}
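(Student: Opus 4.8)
The plan is to combine the lower bound from Lemma~\ref{cap4:lemma:breakpoints_lower_bound} with the upper bound on the number of operations used by Algorithm~\ref{cap4:algorithm_transpositions} established in Lemma~\ref{cap4:lemma:ub_sorting_transpositions}. Concretely, for any instance $\I = (A, \iota^n)$ of unsigned strings, Lemma~\ref{cap4:lemma:breakpoints_lower_bound} gives $d_{\Mindel_{\tau}}(A, \iota^n) \geq \frac{b_{\tau}(\I) + |\Sigma_{\iota^n} \setminus \Sigma_{A}|}{3}$, while Lemma~\ref{cap4:lemma:ub_sorting_transpositions} guarantees that the algorithm produces a sequence $S$ with $|S| \leq b_{\tau}(\I) + |\Sigma_{\iota^n} \setminus \Sigma_{A}|$ that transforms $A$ into $\iota^n$ using only operations from $\Mindel_{\tau}$.

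First I would note that $S$ is a feasible solution: every operation chosen by the algorithm belongs to $\Mindel_{\tau}$ (it picks the operation maximizing $\Delta \Phi(\I, \beta) + \Delta b_{\tau}(\I, \beta)$ among operations in $\Mindel_{\tau}$), and by Lemma~\ref{cap4:lemma:ub_sorting_transpositions} the loop terminates exactly when $b_{\tau}(\I) + |\Sigma_{\iota^n} \setminus \Sigma_{A}| = 0$, which by that same lemma is equivalent to $A = \iota^n$. Hence $S$ indeed sorts the instance. Then I would chain the two inequalities: letting $\mathrm{OPT} = d_{\Mindel_{\tau}}(A, \iota^n)$, we have
\begin{align*}
|S| \leq b_{\tau}(\I) + |\Sigma_{\iota^n} \setminus \Sigma_{A}| \leq 3 \cdot \frac{b_{\tau}(\I) + |\Sigma_{\iota^n} \setminus \Sigma_{A}|}{3} \leq 3 \cdot \mathrm{OPT}.
\end{align*}
This shows the algorithm outputs a solution of size at most $3$ times the optimum, i.e., it is a $3$-approximation. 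I would also recall that the time complexity $O(n^4)$ was already argued in the paragraph preceding Lemma~\ref{cap4:lemma:ub_sorting_transpositions}, so the algorithm is polynomial, completing the requirements for an approximation algorithm.

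There is essentially no obstacle here: the theorem is an immediate corollary of the two lemmas, exactly as the analogous Theorem~\ref{cap4:theorem:reversal_breakpoint_sorting} for the reversal model followed from Lemmas~\ref{cap4:lemma:breakpoints_lower_bound} and~\ref{cap4:lemma:reversal_breakpoint_sorting}. The only thing to be careful about is that the lower bound and the upper bound are stated in terms of the \emph{same} quantity $b_{\tau}(\I) + |\Sigma_{\iota^n} \setminus \Sigma_{A}|$ (this is why the ratio works out to exactly $3$), and that the feasibility of $S$ — termination of the greedy loop — is what genuinely requires the earlier structural lemmas (Lemmas~\ref{cap4:lemma:deletion_breakpoint}, \ref{cap4:lemma:insertion_no_breakpoint}, \ref{cap4:lemma:transposition_breakpoint}), but that work is already packaged inside Lemma~\ref{cap4:lemma:ub_sorting_transpositions}. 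So the proof reduces to a single line citing both lemmas.
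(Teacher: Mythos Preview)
Your proposal is correct and matches the paper's own proof exactly: the paper simply states that the result follows directly from Lemmas~\ref{cap4:lemma:breakpoints_lower_bound} and~\ref{cap4:lemma:ub_sorting_transpositions}. You have merely spelled out the one-line inequality chain that those two lemmas imply, which is precisely the intended argument.
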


\begin{proof}
	Segue diretamente dos lemas~\ref{cap4:lemma:breakpoints_lower_bound} e \ref{cap4:lemma:ub_sorting_transpositions}.
\end{proof}

Agora, consideramos o modelo $\Mindel_{\rho,\tau}$. No próximo lema, mostramos que sempre é possível achar uma reversão ou transposição que remove {\it breakpoints} quando $A$ não possui nenhum elemento com rótulo igual a $\deletionElement$.

\begin{lemma}\label{cap4:lemma:breakpoints_unsigned_rt}
	Para qualquer instância $\I = (A, \iota^n)$, tal que $b_{\rho}(\I) > 0$ e $|\Sigma_{A} \setminus \Sigma_{\iota^n}| = 0$, existe uma reversão ou transposição que remove pelo menos um {\it breakpoint} de $A$.
\end{lemma}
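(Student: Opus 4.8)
The plan is to mirror the structure of Lemma~\ref{cap4:lemma:transposition_breakpoint} but allow both operations, distinguishing cases according to whether $A$ has a decreasing strip. First I would fix the first strip $(A_0~\ldots~A_i)$ of $A$, which is increasing since $A_0 = 0$ belongs to an increasing strip, and consider the element $x = \posterior(A_i, \I)$, which exists because $b_{\rho}(\I) > 0$ (so $A_i \neq n+1$). Let $(A_{j}~\ldots~A_{j'})$ be the strip containing $x$. Since $|\Sigma_A \setminus \Sigma_{\iota^n}| = 0$, every strip is entirely comparable with $\iota^n$, so this strip is either increasing with $A_j = x$, or decreasing with $A_{j'} = x$.

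If $A_j = x$ (the strip containing $x$ is increasing and $x$ is its left endpoint), then, exactly as in the proof of Lemma~\ref{cap4:lemma:transposition_breakpoint}, the transposition $\tau(i+1, j, k)$, where $k = j' + 1$, brings the block starting at $A_j$ immediately after $A_i$, removing the breakpoint between $A_i$ and $A_{i+1}$. If instead the strip containing $x$ is decreasing with $A_{j'} = x$, then I would use a reversal: the reversal $\rho(j, j')$ turns that decreasing strip into an increasing one whose left endpoint becomes $x$; if this strip is not adjacent to the first strip, we still have to create the adjacency, but observe that after this reversal we are in the previous situation, so one more operation removes a breakpoint — however, to stay within a single-operation claim I would instead argue directly. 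The cleaner route: if $A_{i+1}$ itself starts a decreasing strip, pick $y = \anterior(A_{i+1}, \I)$ (or symmetrically handle the case via $x$); since all strips are comparable, a reversal acting on the appropriate interval places $A_{i+1}$'s predecessor adjacent to $A_{i+1}$, removing a breakpoint. The key point is that because there are no $\deletionElement$-labelled elements, every pair $(\anterior(z,\I), z)$ can be realized as an adjacency by reversing the strip that currently holds $\anterior(z,\I)$ together with, if needed, a transposition — but for a single breakpoint removal, one of the two operations always suffices depending on the relative order of the two strips involved, exactly as in Lemmas~\ref{cap4:lemma:reversal_breakpoint_decreasing} and~\ref{cap4:lemma:transposition_breakpoint}.

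Concretely, I would organize the proof as follows. Let $(A_0~\ldots~A_i)$ be the first strip and $x = \posterior(A_i,\I)$; let $(A_j~\ldots~A_{j'})$ be the strip of $x$. If this strip is increasing ($A_j = x$), apply $\tau(i+1, j, j'+1)$ to remove the breakpoint $(A_i, A_{i+1})$; this is valid since $i+1 < j$ (the strip of $x$ is different from and to the right of the first strip, because $A_{i+1} \neq x$ as $(A_i,A_{i+1})$ is a breakpoint). If the strip of $x$ is decreasing ($A_{j'} = x$) and lies to the right of position $i$, then the reversal $\rho(i+1, j')$ produces a string in which $A_i$ is followed by $x$ — more carefully, reversing the segment from position $i+1$ through $j'$ puts $A_{j'} = x$ in position $i+1$, removing the breakpoint $(A_i, A_{i+1})$, provided the segment $(A_{i+1}~\ldots~A_{j-1})$ between the two strips consists of exactly the strips that, when reversed, do not recreate the breakpoint at position $i$; the maximality of strips and $A_j = \posterior(A_i,\I)$ being decreasing's right endpoint guarantees $A_{j'} = x = A_i + 1$, so the new pair $(A_i, x)$ is not a breakpoint. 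The remaining subtlety is when $x$ lies to the left of the first strip, which cannot happen since $(A_0~\ldots~A_i)$ is the \emph{first} strip, so every other strip is to its right.

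The main obstacle I anticipate is handling the reversal case cleanly: when the strip containing $\posterior(A_i, \I)$ is decreasing, a single reversal $\rho(i+1, j')$ reverses everything between the first strip and that strip, and I must verify that the breakpoint count strictly decreases — i.e., that the pair newly formed at position $i$ (namely $(A_i, A_{j'}) = (A_i, A_i+1)$) is not a breakpoint and that I have not merely shuffled breakpoints around. This is exactly the type of bookkeeping carried out in Lemma~\ref{cap4:lemma:reversal_breakpoint_decreasing}, so I would reuse that argument: choose the decreasing strip $(A_j~\ldots~A_{j'})$ with $A_{j'}$ minimal (equivalently, the one containing $\posterior(A_i,\I)$ is a natural candidate), and then precisely as in Lemma~\ref{cap4:lemma:reversal_breakpoint_decreasing} either $\rho(j'+1, j)$ or $\rho(j+1, j')$ removes a breakpoint depending on whether that strip sits before or after the strip holding its predecessor. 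Since $A$ has a breakpoint and no exclusive-origin elements, such a decreasing strip (if $A \neq \iota^n$ with all strips increasing, a transposition as above works; if some strip is decreasing, the reversal argument applies), so in every case we exhibit an operation in $\Mindel_{\rho,\tau}$ removing at least one breakpoint, which completes the proof.
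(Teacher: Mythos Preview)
Your final paragraph is exactly the paper's proof: split on whether $A$ has a decreasing strip, invoking Lemma~\ref{cap4:lemma:reversal_breakpoint_decreasing} for a reversal if so, and Lemma~\ref{cap4:lemma:transposition_breakpoint} for a transposition if not (when all strips are increasing the $b_\rho$- and $b_\tau$-breakpoints coincide, so that lemma applies verbatim). Everything before that --- the explicit construction around the first strip and $\posterior(A_i,\I)$, the direct reversal $\rho(i+1,j')$, the discussion of which endpoint of the strip holds $x$ --- is unnecessary detour; the paper's proof is literally those two sentences citing the two earlier lemmas.
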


\begin{proof}
	Se $A$ possui uma {\it strip} decrescente, então existe uma reversão que remove pelo menos um {\it breakpoint} em $A$ (Lema~\ref{cap4:lemma:reversal_breakpoint_decreasing}). Caso contrário, a string $A$ possui apenas {\it strips} crescentes e, pelo Lema~\ref{cap4:lemma:transposition_breakpoint}, existe uma transposição que remove pelo menos um {\it breakpoint} em $A$.
\end{proof}

\begin{algorithm}[h]
	\caption{$3$-Aproximação para a Distância de Reversões, Transposições e Indels em Strings sem Sinais\label{cap4:algorithm_reversals_transpositions}}
	\DontPrintSemicolon
  \Entrada{Uma instância $\I = (A, \iota^n)$}
  \Saida{Uma sequência de rearranjos que transforma $A$ em $\iota^n$}
  Seja $S \gets \emptyset$\;
  \Enqto{$b_{\rho}(\I) + |\Sigma_{\iota^n} \setminus \Sigma_{A}| > 0$\label{cap4:breakpoints_alg_rev_transp_loop}}{
  	Seja $\beta$ a operação em $\Mindel_{\rho,\tau}$ com valor máximo de $\Delta \Phi(\I, \beta) + \Delta b_{\rho}(\I, \beta)$\;
		$A \gets A \comp \beta$\;
		Adicione $\beta$ na sequência $S$\;
  }
	{\bf retorne} a sequência $S$\;
\end{algorithm}

O Algoritmo~\ref{cap4:algorithm_reversals_transpositions} é similar ao Algoritmo~\ref{cap4:algorithm_transpositions}, exceto pelo fato de que consideramos o modelo $\Mindel_{\rho,\tau}$ e a definição de {\it breakpoints} de reversões sem sinais. A complexidade desse algoritmo também é $O(n^4)$.

\begin{lemma}\label{cap4:lemma:ub_sorting_reversals_transpositions}
		Para qualquer instância $\I = (A, \iota^n)$, o Algoritmo~\ref{cap4:algorithm_reversals_transpositions} transforma $A$ em $\iota^n$ usando no máximo $b_{\rho}(\I) + |\Sigma_{\iota^n} \setminus \Sigma_{A}|$ operações.
\end{lemma}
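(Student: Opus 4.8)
The plan is to mirror the potential-function argument behind Lemma~\ref{cap4:lemma:ub_sorting_transpositions}, now combining the three ``progress'' lemmas with Lemma~\ref{cap4:lemma:breakpoints_unsigned_rt}. First I would record the identity that, for any operation $\beta$ with $A' = A \comp \beta$ and $\I' = (A', \iota^n)$,
\[
\Delta \Phi(\I, \beta) + \Delta b_{\rho}(\I, \beta) = \bigl(b_{\rho}(\I) + |\Sigma_{\iota^n}\setminus\Sigma_{A}|\bigr) - \bigl(b_{\rho}(\I') + |\Sigma_{\iota^n}\setminus\Sigma_{A'}|\bigr),
\]
which follows directly from Definition~\ref{cap4:def:delta_phi} and the definition of $\Delta b_{\rho}$. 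Thus the quantity maximized by Algorithm~\ref{cap4:algorithm_reversals_transpositions} is exactly the decrease of the potential $\Phi(\I) := b_{\rho}(\I) + |\Sigma_{\iota^n}\setminus\Sigma_{A}|$, and this potential is a nonnegative integer that equals $0$ if and only if $A = \iota^n$ (both summands are nonnegative and vanish together precisely when the strings coincide); in particular the loop of the algorithm halts exactly when $A = \iota^n$.

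The key step is to show that whenever $\Phi(\I) > 0$ there is an operation $\beta \in \Mindel_{\rho,\tau}$ with $\Delta\Phi(\I,\beta) + \Delta b_{\rho}(\I,\beta) \ge 1$. I would argue by cases on $\Sigma_A$. If $|\Sigma_A \setminus \Sigma_{\iota^n}| > 0$, Lemma~\ref{cap4:lemma:deletion_breakpoint} yields a deletion with $\Delta\Phi + \Delta b_{\rho} \ge 1$. Otherwise $|\Sigma_A \setminus \Sigma_{\iota^n}| = 0$, and either $b_{\rho}(\I) > 0$ or $b_{\rho}(\I) = 0$. In the first subcase, Lemma~\ref{cap4:lemma:breakpoints_unsigned_rt} provides a reversal or transposition removing at least one breakpoint; since neither operation alters $\Sigma_A$ we get $\Delta\Phi = 0$ and hence $\Delta\Phi + \Delta b_{\rho} \ge 1$. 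In the second subcase, $\Phi(\I) > 0$ together with $b_{\rho}(\I) = 0$ forces $|\Sigma_{\iota^n} \setminus \Sigma_A| > 0$, so Lemma~\ref{cap4:lemma:insertion_no_breakpoint} supplies an insertion with $\Delta\Phi + \Delta b_{\rho} = 1$. Hence the maximum over all admissible operations is at least $1$, so the algorithm never gets stuck.

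Finally I would conclude: since the greedy choice attains the maximum decrease, each iteration lowers $\Phi$ by at least $1$, so $\Phi$ is strictly decreasing along the run and the loop performs at most $\Phi(\I) = b_{\rho}(\I) + |\Sigma_{\iota^n}\setminus\Sigma_A|$ iterations, each appending exactly one rearrangement to $S$; hence the returned sequence transforms $A$ into $\iota^n$ with at most that many operations. I do not expect a genuine obstacle here: the argument is a routine telescoping once the case analysis above is in place, and the only point needing slight care is verifying exhaustiveness of the cases --- in particular that the combination ``$b_{\rho}(\I) = 0$ and $|\Sigma_A \setminus \Sigma_{\iota^n}| = 0$'' with $\Phi(\I) > 0$ always leaves an admissible insertion available.
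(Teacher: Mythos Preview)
Your proposal is correct and follows essentially the same approach as the paper: the paper's proof simply says ``similar to the proof of Lemma~\ref{cap4:lemma:ub_sorting_transpositions}'', which amounts to exactly the potential-function argument you spell out, invoking Lemmas~\ref{cap4:lemma:deletion_breakpoint}, \ref{cap4:lemma:insertion_no_breakpoint}, and~\ref{cap4:lemma:breakpoints_unsigned_rt} to guarantee a strictly positive decrease of $b_\rho(\I)+|\Sigma_{\iota^n}\setminus\Sigma_A|$ at every step. Your case analysis and termination argument are just a more explicit rendering of that same idea.
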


\begin{proof}
	Similar à prova do Lema~\ref{cap4:lemma:ub_sorting_transpositions}.
\end{proof}

\begin{theorem}
	O Algoritmo~\ref{cap4:algorithm_reversals_transpositions} é uma $3$-aproximação para o problema da Distância de Reversões, Transposições e Indels em Strings sem Sinais.
\end{theorem}

\begin{proof}
	Segue diretamente dos lemas~\ref{cap4:lemma:breakpoints_lower_bound} e \ref{cap4:lemma:ub_sorting_reversals_transpositions}.
\end{proof}

\section{Algoritmos de Aproximação Usando Grafo de Ciclos Rotulado}\label{cap4:section:grafo_ciclos}

Nesta seção, apresentamos algoritmos de $2$-aproximação para os seguintes modelos:

\begin{itemize}
	\item $\Mindel_{\tau}=\{\tau, \psi, \phi\}$: transposições e {\it indels} em strings sem sinais;
	\item $\Mindel_{\BI}=\{\BI, \psi, \phi\}$: {\it block interchanges} e {\it indels} em strings sem sinais;
	\item $\Mindel_{\rho,\tau}=\{\rho, \tau, \psi, \phi\}$: reversões, transposições, e {\it indels} em strings com sinais;
	\item $\Mindel_{\rho,\BI}=\{\rho, \BI, \psi, \phi\}$: reversões, {\it block interchanges}, e {\it indels} em strings com sinais.
\end{itemize}

Esses algoritmos e os limitantes inferiores apresentados nesta seção utilizam os conceitos relacionados ao grafo de ciclos rotulado (Seção~\ref{cap2:sec:grafo_ciclos_strings}). Sempre consideramos que as strings de uma instância $\I = (A, \iota^n)$ e suas formas simplificadas ($\pi^A$ e $\pi^\iota$) estão nas suas versões estendidas.

Além das definições de ciclos limpos e rotulados, que são conceitos específicos de grafo de ciclos rotulados, a seguir introduzimos o conceito de \emph{runs}. Os \emph{runs} são úteis na definição de limitantes inferiores e na criação de algoritmos. 

Um \emph{run de inserção} é um caminho maximal que inicia e termina com arestas de destino rotuladas e toda aresta de origem nesse caminho é uma aresta limpa. De forma similar, um \emph{run de deleção} é um caminho maximal que inicia e termina com arestas de origem rotuladas e toda aresta de destino nesse caminho é uma aresta limpa. O número de {\it runs} (inserção ou deleção) de um ciclo $C$ é denotado por $\Lambda(C)$.

\begin{lemma}\label{cap4:lemma:even_runs}
	Para toda instância $\I = (A, \iota^n)$ e grafo $G(\I)$, qualquer ciclo $C \in G(\I)$ possui zero, um, ou um número par de {\it runs}.
\end{lemma}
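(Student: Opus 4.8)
A ideia central é analisar a estrutura alternada de um ciclo $C$ em $G(\I)$ em termos de como as arestas rotuladas aparecem ao longo dele. Primeiro observo que um ciclo $C = (o_1, d_1, o_2, d_2, \ldots, o_m, d_m)$ é uma sequência alternada fechada de arestas de origem e arestas de destino. Cada aresta (de origem ou destino) é limpa ou rotulada. Um {\it run de inserção} consiste em um trecho maximal que começa e termina em arestas de destino rotuladas sem arestas de origem rotuladas no interior, e um {\it run de deleção} é o análogo com os papéis de origem e destino trocados. A observação-chave é que {\it runs} de inserção e {\it runs} de deleção são \emph{separados} por arestas rotuladas do tipo oposto: ao percorrer o ciclo, entre um run de inserção e o próximo run de deleção deve haver pelo menos uma aresta de origem rotulada (que não pertence a nenhum run de inserção, mas pertence a um run de deleção), e vice-versa.

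O plano é o seguinte. Primeiro, se $C$ não possui nenhuma aresta rotulada, então $C$ é um ciclo limpo e $\Lambda(C) = 0$, o que cobre o primeiro caso do enunciado. Segundo, se todas as arestas rotuladas de $C$ são do mesmo tipo — digamos, todas são arestas de destino rotuladas (o caso de origem é simétrico) — então não existe nenhum run de deleção, e qualquer maximalidade força a existência de exatamente um único run de inserção que contém todas essas arestas de destino rotuladas (percorrendo ciclicamente, as arestas de origem entre elas são todas limpas); logo $\Lambda(C) = 1$. Terceiro, e este é o caso principal, suponha que $C$ possui pelo menos uma aresta de origem rotulada e pelo menos uma aresta de destino rotulada. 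Percorrendo $C$ ciclicamente, considere a subsequência (cíclica) das arestas rotuladas na ordem em que aparecem, rotulando cada uma como do tipo $D$ (destino) ou do tipo $O$ (origem). Afirmo que os {\it runs} estão em bijeção com os \emph{blocos maximais} dessa subsequência cíclica que são homogêneos em tipo: um bloco maximal de arestas do tipo $D$ consecutivas (sem aresta $O$ entre elas, no sentido cíclico) é precisamente um run de inserção, e um bloco maximal de arestas do tipo $O$ consecutivas é precisamente um run de deleção. Como a subsequência cíclica de tipos tem pelo menos um $D$ e pelo menos um $O$, o número de blocos homogêneos maximais ao percorrer um círculo é necessariamente par: os blocos de tipo $D$ e os blocos de tipo $O$ se alternam ciclicamente, então há o mesmo número de cada, digamos $t$ de cada tipo, totalizando $2t$ runs. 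Portanto $\Lambda(C) = 2t$ é par.

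O passo que espero ser o principal obstáculo é formalizar com cuidado a correspondência entre {\it runs} e blocos homogêneos da subsequência cíclica de arestas rotuladas, em particular verificar a \emph{maximalidade} nas duas pontas de um run: preciso mostrar que um trecho que começa e termina em arestas de destino rotuladas com origens limpas no interior, \emph{e} que é maximal com essa propriedade, corresponde exatamente a um bloco maximal de $D$'s na subsequência cíclica — ou seja, que imediatamente antes da primeira aresta $D$ do bloco e imediatamente depois da última, a próxima aresta rotulada encontrada é do tipo $O$ (ou não há mais arestas rotuladas, o que recai no segundo caso). É preciso também tratar com atenção a diferença entre ``caminho'' e ``caminho cíclico'': como $C$ é um ciclo, a maximalidade deve ser entendida no sentido cíclico, e um run não pode ``dar a volta'' passando por uma aresta rotulada do tipo oposto. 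Uma vez estabelecida essa bijeção, a paridade segue da observação elementar de que, em uma sequência circular com símbolos de dois tipos ambos presentes, o número de maximal runs homogêneos é par. Os casos degenerados ($0$ ou $1$ run) correspondem, respectivamente, a nenhum símbolo rotulado e a apenas um tipo de símbolo rotulado presente, exatamente como no enunciado.
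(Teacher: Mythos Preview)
Your proposal is correct and rests on the same key observation as the paper: insertion runs and deletion runs necessarily alternate as one traverses the cycle, so if there are at least two runs the total must be even. The paper simply argues by contradiction---assuming $x>1$ odd, noting that consecutive runs $r_i$ and $r_{(i \bmod x)+1}$ are of different types by definition, and observing that this forces $r_1$ and $r_x$ to have the same type while also being cyclically adjacent. Your route is a constructive case analysis (no labeled edges; only one type of labeled edge; both types present) together with an explicit bijection between runs and maximal homogeneous blocks in the cyclic sequence of labeled-edge types. This is more verbose but has the advantage of making the $\Lambda(C)=0$ and $\Lambda(C)=1$ cases explicit, whereas the paper's contradiction argument treats only the case $x>1$.
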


\begin{proof}
	Suponha, por contradição, que um ciclo $C$ em $G(\I)$ possui $x$ {\it runs}, tal que $x > 1$ e $x$ é ímpar. Sejam $r_1, r_2, \ldots, r_x$ os {\it runs} de $C$ na ordem que eles são percorridos no ciclo, considerando que começamos da aresta de origem mais à direita, ou seja, da aresta de origem com maior valor de índice. Pela definição de um {\it run}, para qualquer $1 \leq i \leq x$, os {\it runs} $r_i$ e $r_{j}$ devem ser de tipos diferentes, onde $j = (i~\Mod~{x}) + 1$. No entanto, como $x$ é um número ímpar maior que $1$ e existem apenas dois tipos de {\it runs}, $r_1$ e $r_x$ devem ser do mesmo tipo, o que é uma contradição.
\end{proof}

Dizemos que uma inserção $\insertion$ remove um {\it run} de inserção $r$ se todas as arestas de destino rotuladas de $r$ são transformadas em arestas limpas após a aplicação da inserção $\insertion$. De forma similar, dizemos que uma deleção $\deletion$ remove um {\it run} de deleção $r$ se todas as arestas de origem rotuladas de $r$ se tornam arestas limpas após a aplicação da deleção $\deletion$. 

{\revisaof
Note que para cada elemento inserido na string $A$, temos que um par de vértices, uma aresta de origem e uma aresta de destino são inseridos no grafo de ciclos rotulado. Portanto, uma inserção $\insertion(i,\sigma)$ adiciona $2|\sigma|$ vértices, $|\sigma|$ arestas de origem e $|\sigma|$ arestas de destino no grafo. No entanto, uma deleção afeta apenas o rótulo de uma única aresta de origem, já que uma deleção só pode ser aplicada em uma sequência contígua de elementos iguais a $\deletionElement$. Dessa forma, um {\it run} de deleção pode ser removido somente se esse {\it run} corresponde a uma única aresta de origem rotulada.
}

O \emph{potencial de indels} de um ciclo $C$ relaciona o número de {\it runs} em um ciclo e o número de {\it indels} necessários para remover todos os {\it runs} desse ciclo. 

\begin{definition}
	Dado um ciclo $C$, o \emph{potencial de indels} de $C$ é igual a:
	\begin{align*}
		\lambda(C) =
		\begin{cases}
			\Bigl\lceil \frac{\Lambda(C) + 1}{2} \Bigr\rceil, &\text{se $\Lambda(C) > 0$} \\
			0, &\text{caso contrário}.
		\end{cases}
	\end{align*}
\end{definition}

Definimos $\lambda(\I) = \lambda(A, \iota^n) = \sum_{C \in G(A, \iota^n)} \lambda(C)$. Dado um rearranjo (ou sequência de rearranjos) $\beta$, denotamos $\Delta \lambda(\I, \beta) = \Delta \lambda(A, \iota^n, \beta) = \lambda(A, \iota^n) - \lambda(A \comp \beta, \iota^n)$.

Para um ciclo $C$ com {\it runs} $r_1, r_2, \ldots, r_x$, com $x > 2$, ao remover um {\it run} $r_i$ de $C$, os seus {\it runs} adjacentes $r_{j}$ e $r_{k}$ formam um único {\it run} no novo ciclo, onde $k = (i~\Mod~{x}) + 1$ e $j = i-1$, se $i > 1$, ou $j = x$, se $i = 1$. Dessa forma, uma operação $\beta$ que remove um {\it run} de um ciclo tem $\Delta \lambda(\I, \beta) = 1$.

\begin{definition}
	O \emph{potencial de inserções} $\lambda_\insertion(C)$ de um ciclo $C$ é denotado por:
	\begin{align*}
		\lambda_\insertion(C) =
		\begin{cases}
			\lambda(C) - 1, &\text{se $\Lambda(C) > 1$} \\ 
			1, &\text{se $C$ possui apenas um {\it run} de inserção} \\
			0, &\text{se $C$ não possui {\it runs} de inserção}.
		\end{cases}
	\end{align*}
\end{definition}

Além disso, definimos $\lambda_\insertion(\I) = \lambda_\insertion(A, \iota^n) = \sum_{C \in G(A, \iota^n)} \lambda_\insertion(C)$. Dado um rearranjo (ou sequência de rearranjos) $\beta$, denotamos $\Delta \lambda_\insertion(\I, \beta) = \Delta \lambda_\insertion(A, \iota^n, \beta) = \lambda_\insertion(A, \iota^n) - \lambda_\insertion(A \comp \beta, \iota^n)$. 

{\revisaof
Lembre que, para um grafo de ciclos rotulado $G(\I)$, as definições de ciclos limpos e ciclos rotulados consideram apenas as arestas de origem. Dessa forma, chegamos aos resultados das observações~\ref{cap4:remark:equivalencia_lambda_ciclo} e \ref{cap4:remark:equivalencia_lambda_soma}.
}

\begin{remark}\label{cap4:remark:equivalencia_lambda_ciclo}
{\revisaof
Para todo ciclo limpo $C$, temos que $\lambda_\insertion(C) = \lambda({C})$ e, para qualquer ciclo rotulado $C'$, temos que $\lambda_\insertion(C') = \lambda(C') - 1$.
}
\end{remark}

\begin{remark}\label{cap4:remark:equivalencia_lambda_soma}
Sejam $H$ e $H'$ os conjuntos de ciclos limpos e de ciclos rotulados de $G(\I)$, respectivamente. Temos que:
\begin{align*}
    \lambda(\I) &= \sum_{C \in H} \lambda(C) +  \sum_{C \in H'} \lambda(C) \\
    &= \sum_{C \in H} \lambda_\insertion(C) +  \sum_{C \in H'} (\lambda_\insertion(C) + 1) \\
    &= \left(\sum_{C \in H} \lambda_\insertion(C) +  \sum_{C \in H'} \lambda_\insertion(C) \right) + \crotulado(\I) \\
    &= \lambda_\insertion(\I) + \crotulado(\I).
\end{align*}

Dessa forma,

\begin{align*}
	&|\pi^A| + 1 - c(\I) + \lambda(\I) = \\
	&|\pi^A| + 1 - (\cclean(\I) + \crotulado(\I)) + (\lambda_\insertion(\I) + \crotulado(\I)) = \\
	&|\pi^A| + 1 - \cclean(\I) + \lambda_\insertion(\I).
\end{align*}
\end{remark}

O grafo $G(\I)$ só possui ciclos unitários e potencial de {\it indel} igual a zero se, e somente se, $A = \iota^n$. Em outras palavras, $A = \iota^n$ se, e somente se, $n + 1 - c(\I) + \lambda(\I) = n + 1 - \cclean(\I) + \lambda_\insertion(\I) = 0$. Dessa forma, transformar $A$ em $\iota^n$ pode ser interpretado como tornar $|\pi^A| + 1 - c(\I) + \lambda(\I) = |\pi^A| + 1 - \cclean(\I) + \lambda_\insertion(\I) = 0$. Observe que quando $A = \iota^n$, temos que $|\pi^A| = n$.  

Agora, apresentamos limitantes para o valor de $\Delta c(\I, \beta) + \Delta \lambda(\I, \beta)$ dependendo do tipo do rearranjo $\beta$.

\begin{lemma}\label{cap4:lemma:lb_graph_deletion}
Para qualquer deleção $\deletion$ e instância $\I = (A, \iota^n)$, temos que $\Delta c(\I, \deletion) + \Delta \lambda(\I, \deletion) \leq 1$.
\end{lemma}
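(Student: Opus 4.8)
O objetivo é mostrar que uma deleção $\deletion$ não pode diminuir a quantidade $|\pi^A| + 1 - c(\I) + \lambda(\I)$ em mais de uma unidade, ou seja, $\Delta c(\I, \deletion) + \Delta \lambda(\I, \deletion) \leq 1$. Primeiro eu recordaria que uma deleção intergênica/clássica $\deletion(i,j)$ só pode ser aplicada sobre uma sequência contígua de elementos iguais a $\deletionElement$, de modo que $|\pi^A|$ não é alterado por $\deletion$ (os elementos removidos não pertencem a $\Sigma_A \cap \Sigma_{\iota^n}$ e portanto não figuram em $\pi^A$). Logo $\Delta c(\I, \deletion) = c(A \comp \deletion, \iota^n) - c(A, \iota^n)$ reduz-se à variação no número de ciclos do grafo de ciclos rotulado, e basta limitar $\Delta c(\I,\deletion) + \Delta\lambda(\I,\deletion)$ analisando o efeito local de $\deletion$ no grafo.

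\textbf{Passos principais.} Em seguida eu analisaria o que $\deletion$ faz no grafo de ciclos rotulado. Como discutido imediatamente antes do enunciado, uma deleção afeta apenas o rótulo de uma única aresta de origem (aquela cujo rótulo $\deletionElement$ corresponde à sequência de $\deletionElement$'s deletada): ela pode transformar essa aresta rotulada em uma aresta limpa, mas não altera os vértices, nem as adjacências, nem a decomposição em ciclos do grafo. Portanto $\Delta c(\I, \deletion) = 0$ sempre. Resta mostrar que $\Delta \lambda(\I, \deletion) \leq 1$. Eu dividiria em casos conforme o efeito sobre a aresta de origem rotulada $e$ em questão, dentro do ciclo $C$ que a contém: se $e$ já era limpa (nada a deletar de fato muda) ou se $C$ não tem {\it runs}, então $\Delta\lambda = 0$; caso contrário, $e$ pertence a um {\it run} de deleção $r$ de $C$, e como uma deleção só pode tornar limpa \emph{uma} aresta de origem por vez, ela só remove $r$ inteiramente quando $r$ consiste exatamente dessa única aresta $e$. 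Nesse caso, a remoção de um {\it run} diminui $\Lambda(C)$ em uma unidade (fundindo os dois {\it runs} adjacentes quando $\Lambda(C) > 2$), e pela fórmula $\lambda(C) = \lceil (\Lambda(C)+1)/2 \rceil$ isso diminui $\lambda(C)$ em no máximo $1$; se $r$ tem mais de uma aresta rotulada, $\deletion$ apenas encurta $r$ sem removê-lo, deixando $\Lambda(C)$ inalterado e portanto $\Delta\lambda = 0$. Em todos os casos, $\Delta c(\I,\deletion) + \Delta\lambda(\I,\deletion) \leq 0 + 1 = 1$.

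\textbf{Ponto delicado.} O passo que exige mais cuidado é verificar a afirmação de que uma deleção remove no máximo um {\it run} e, quando o faz, isso corresponde a $\Delta\lambda(\I,\deletion)=1$, incluindo o subcaso $\Lambda(C) \in \{1, 2\}$ — em particular quando $C$ tinha um único {\it run} de deleção formado só por $e$, pós-deleção $C$ passa a não ter {\it runs} ($\lambda$ cai de $1$ para $0$), e quando $\Lambda(C)=2$ a remoção funde os dois {\it runs} mas um deles era de inserção, de modo que resta um {\it run} ($\lambda$ cai de $\lceil 3/2\rceil = 2$ para $1$); em ambos a variação é exatamente $1$. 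Também convém observar explicitamente, invocando o Lema~\ref{cap4:lemma:even_runs}, que um ciclo não pode ter número ímpar de {\it runs} maior que $1$, o que simplifica a enumeração de casos. Uma vez estabelecido que $\deletion$ não toca em $\pi^A$, não cria nem destrói ciclos, e diminui $\lambda$ em no máximo $1$, o limitante segue imediatamente.
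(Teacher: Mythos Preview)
Your proposal is correct and follows essentially the same approach as the paper: observe that a deletion touches only the label of a single source edge, so $\Delta c(\I,\deletion)=0$, and then argue that clearing one labeled source edge can remove at most one \emph{run}, giving $\Delta\lambda(\I,\deletion)\leq 1$. One minor slip: when $\Lambda(C)>2$ and the unique labeled edge forming $r$ is cleaned, the two adjacent runs merge and $\Lambda(C)$ drops by~$2$ (not~$1$), but this only strengthens the bound on $\lambda(C)$, so your conclusion is unaffected.
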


\begin{proof}
Uma deleção pode apenas remover o rótulo de uma aresta de origem, já que deve ser aplicada a uma sequência contígua de elementos $\deletionElement$. No melhor cenário, essa operação remove um {\it run} de deleção que é formado por apenas uma aresta de origem rotulada e diminui o potencial de {\it indel} em uma unidade, ou seja, $\Delta c(\I, \deletion) = 1$. Os ciclos e vértices de $G(\I)$ não são afetados e, portanto, $\Delta c(\I, \deletion) = 0$.
\end{proof}

\begin{lemma}\label{cap4:lemma:lb_graph_insertion}
Para qualquer inserção $\insertion$ e instância $\I = (A, \iota^n)$, temos que $\Delta c(\I, \insertion) + \Delta \lambda(\I, \insertion) \leq 1$.
\end{lemma}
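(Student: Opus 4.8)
The plan is to mimic the accounting done in Remark~\ref{cap4:remark:equivalencia_lambda_soma} and in the proof of Lemma~\ref{cap4:lemma:lb_graph_deletion}, but now tracking the three quantities that an insertion can move: the number $|\pi^A|$ of origin edges, the number $c(\I)$ of cycles, and the indel potential $\lambda(\I)$. Recall that $\Delta c(\I, \insertion) = (|\pi^A| + 1 - c(\I)) - (|\pi^A \comp \insertion| + 1 - c(A \comp \insertion, \iota^n))$, so since an insertion $\insertion(i,\sigma)$ adds exactly $|\sigma|$ origin edges and $|\sigma|$ destination edges to $G(\I)$, we have $\Delta c(\I, \insertion) = c(\I) - c(A\comp\insertion,\iota^n) + |\sigma|$. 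The whole statement therefore reduces to proving
\[
\big(c(\I) - c(A\comp\insertion,\iota^n)\big) + |\sigma| + \big(\lambda(\I) - \lambda(A\comp\insertion,\iota^n)\big) \le 1 .
\]

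First I would observe that adding $|\sigma|$ new origin/destination edge pairs to a cycle $C$ cannot increase the number of cycles by more than... actually the reverse: I need an upper bound on $\Delta c$, i.e.\ a \emph{lower} bound on how many cycles we end up with. The key structural fact is that inserting the string $\sigma$ after position $i$ modifies exactly one origin edge $e$ of $G(\I)$ (the one joining $+\pi^A_{i-1}$-side and $-\pi^A_i$-side between the relevant vertices), replacing it by a path through $2|\sigma|$ new vertices alternating $|\sigma|$ new origin edges and $|\sigma|$ new destination edges; the destination edges of $\sigma$, being between consecutive maximal destination-exclusive segments, are all labeled, and because no two labels are $\iota^n$-consecutive each new destination edge lands in its own position. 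Splicing a new alternating path of $2k$ edges into a single edge of one cycle can increase the cycle count by at most $k-1$ if it splits that cycle, or leave it unchanged, etc.; the clean bookkeeping is that $c(A\comp\insertion,\iota^n) \ge c(\I) + |\sigma| - 1$, hence $\Delta c(\I,\insertion) \le 1$ already when $\lambda$ is ignored. Then I would show $\Delta\lambda(\I,\insertion) \le 0$ unless $\Delta c(\I,\insertion) \le 0$: an insertion can only create or lengthen runs of deletion and can only \emph{remove} runs of insertion, and removing one insertion run lowers the potential by exactly $1$ (as already argued in the text just after the definition of $\lambda_\insertion$), but such a removal does not simultaneously produce the cycle gain. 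I would do a short case analysis on whether the insertion is placed inside a labeled destination edge of some run (removing part of a run) or between two existing runs / in a clean stretch (possibly creating a new run), tracking $\Delta c + \Delta\lambda$ in each case and checking it never exceeds $1$.

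The main obstacle will be the case where a single insertion simultaneously splits a cycle (gaining a unit in $\Delta c$) and also shortens or eliminates an insertion run (gaining a unit in $\Delta \lambda$), which would naively give $2$; the crux is to prove this cannot happen, i.e.\ that an insertion achieving $\Delta c = 1$ must be a clean insertion interior to a single labeled destination edge that leaves the run structure of the affected run only subdivided, not shortened in the potential-reducing sense, so that $\Delta\lambda \le 0$ there. I would handle this by using the parity structure from Lemma~\ref{cap4:lemma:even_runs} (a cycle has $0$, $1$, or an even number of runs) together with the observation that the single new origin edge introduced by an effective cycle-splitting insertion must be clean, so the newly created sub-ciclos redistribute the old runs without decreasing $\Lambda$ on net — making $\Delta\lambda \le 0$. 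Assembling the cases then yields $\Delta c(\I,\insertion) + \Delta\lambda(\I,\insertion) \le 1$ in all situations, completing the proof.
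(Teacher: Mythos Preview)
Your proposal has a sign error and a structural misunderstanding that together derail the argument. First, from the definition $\Delta c(\I,\insertion) = (|\pi^A|+1-c(\I)) - (|\pi^{A\comp\insertion}|+1-c(A\comp\insertion,\iota^n))$ and $|\pi^{A\comp\insertion}| = |\pi^A|+|\sigma|$, one gets $\Delta c(\I,\insertion) = c(A\comp\insertion,\iota^n) - c(\I) - |\sigma|$, the negative of what you wrote. With the correct sign, the best case is that each of the $|\sigma|$ new origin edges creates its own cycle, giving $c(A\comp\insertion,\iota^n) \le c(\I) + |\sigma|$ and hence $\Delta c(\I,\insertion) \le 0$ always; your claimed inequality $c(A\comp\insertion,\iota^n) \ge c(\I)+|\sigma|-1$ is false (e.g.\ in Figure~\ref{cap4:fig:insertion_multi_cycles} two elements are inserted and the cycle count stays at $2$). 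Second, the picture ``one origin edge is replaced by an alternating path of new origin and destination edges'' is not how the graph changes: the $|\sigma|+1$ new origin edges do form a contiguous path, but each inserted element $\sigma_j$ splits a \emph{destination} edge elsewhere in the graph --- the labelled destination edge that previously skipped over $\sigma_j$ --- so the new destination edges attach the path to possibly many different cycles. Consequently an insertion can merge cycles, and the dichotomy ``$\Delta\lambda \le 0$ unless $\Delta c \le 0$'' is vacuous (the second branch always holds) while the needed bound $\Delta\lambda \le 1$ does \emph{not} hold in isolation when several cycles are touched.

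The paper's argument is organised around exactly this phenomenon: it cases on how many cycles, and within a cycle how many insertion runs, have their destination-edge labels removed. When a single run of a single cycle is hit, $\Delta\lambda \le 1$ and $\Delta c \le 0$. When $x$ runs of one cycle are hit, the insertion necessarily creates at least $x$ new cycles each carrying a deletion run, and a direct computation on $\lceil(\Lambda+1)/2\rceil$ shows the total indel potential drops by at most $1$. When $k$ distinct cycles are hit, one can have $\Delta\lambda = k$, but those $k$ cycles are glued together through the inserted path of origin edges, forcing $\Delta c \le -(k-1)$. To repair your approach you would need to abandon the ``local splicing'' picture and instead track which labelled destination edges are split by the insertion; the bookkeeping then becomes essentially the paper's case analysis.
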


\begin{proof}
	Considere uma inserção que remove o rótulo de arestas de destino de um mesmo ciclo $C$. Se as arestas afetadas pertencem ao mesmo {\it run} de inserção, no melhor cenário, esse {\it run} é removido e $\Delta \lambda(\I, \insertion) = 1$.
	No melhor cenário, um novo ciclo é criado para cada aresta de origem inserida no grafo e $\Delta c(\I, \insertion) = 0$.

	Se as arestas afetadas pertencem a $x$ diferentes {\it runs} de um mesmo ciclo $C$, no melhor cenário, $x$ {\it runs} de inserção são removidos, mas essa inserção cria pelo menos $x$ ciclos que possuem {\it runs} de deleção, como mostrado na Figura~\ref{cap4:fig:insertion_run_multiple}. 
	Sejam $D_1, D_2, \ldots, D_y$ os ciclos criados pela inserção que possuem exatamente um {\it run}, e sejam $D'_1, D'_2, \ldots, D'_z$ os ciclos criados pela inserção que possuem mais que um {\it run}. Pelo Lema~\ref{cap4:lemma:even_runs}, $\Lambda(D'_i)$ é par para todo $1 \leq i \leq z$. Note que $y + z \geq x$ e que $\Lambda(C) = x + \sum_{1 \leq i \leq y} \Lambda(D_i) + \sum_{1 \leq i \leq z} \Lambda(D'_i) = x + y + \sum_{1 \leq i \leq z} \Lambda(D'_i)$. A soma do potencial de {\it indel} dos novos ciclos é:
	\begin{align*}
		&~\sum_{1 \leq i \leq y} \left \lceil \frac{\Lambda(D_i) + 1}{2} \right \rceil + \sum_{1 \leq i \leq z} \left \lceil \frac{\Lambda(D'_i) + 1}{2} \right \rceil \\
		=&~ y + \sum_{1 \leq i \leq z} \left (\frac{\Lambda(D'_i)}{2} + 1 \right)
		~\,=~ y + z + \frac{\sum_{1 \leq i \leq z} \Lambda(D'_i)}{2}
		\\
		=&~ \frac{2y + 2z + \sum_{1 \leq i \leq z} \Lambda(D'_i)}{2}
		\geq~ \frac{y + z + x + \sum_{1 \leq i \leq z} \Lambda(D'_i)}{2} \\
		=&~ \frac{\Lambda(C) + z}{2}
		\geq~ \frac{\Lambda(C)}{2} \geq \lambda(C) - 1.
	\end{align*}

	Portanto, temos que $\Delta \lambda(\I, \insertion) \leq 1$.
	Como anteriormente, no melhor cenário, um novo ciclo é criado para cada elemento inserido no grafo e $\Delta c(\I, \insertion) = 0$.

	Agora, considere uma inserção que remove o rótulo de arestas de destino de $k$ ciclos distintos. Se um {\it run} de inserção é removido de cada ciclo, então $\Delta \lambda(\I, \insertion) = k$. No entanto, esses ciclos são unidos pelos novos elementos adicionados no grafo e, consequentemente, $\Delta c(\I, \insertion) \leq -(k-1)$. Portanto, temos que $\Delta c(\I, \insertion) + \Delta \lambda(\I, \insertion) \leq 1$. A Figura~\ref{cap4:fig:insertion_multi_cycles} mostra um exemplo de tal operação. Quando mais de um {\it run} é removido de cada ciclo, um argumento similar ao usado anteriormente pode ser usado.
\end{proof}

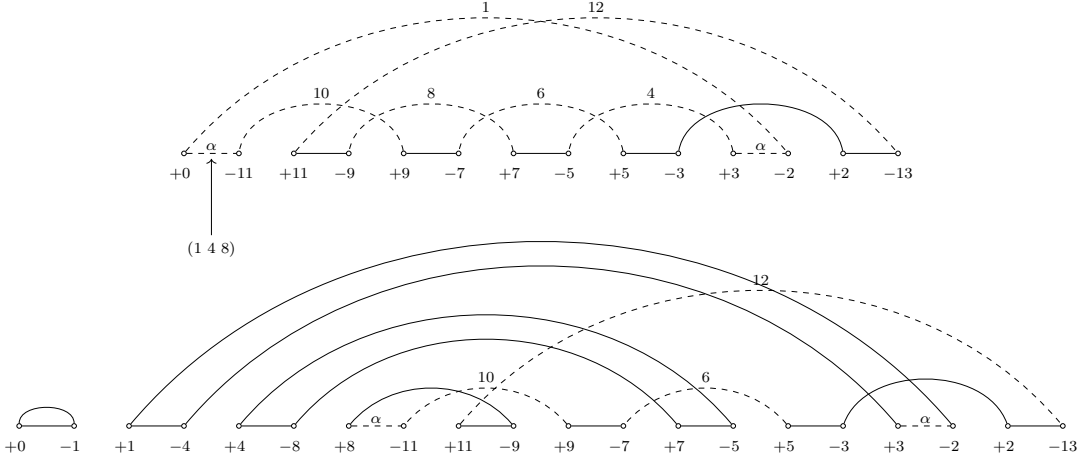
\begin{figure}[t]
\centering
\resizebox{0.9\textwidth}{!}{
\begin{tikzpicture}[scale=1]
\scriptsize
\begin{scope}[every node/.style={inner sep=0.3mm, draw, circle, minimum size = 0pt}]
    \node[label=below:$+0$\phantom{1}] (p0) at (0,0) {};
    \node[label=below:$-11$] (m6) at (1,0) {};
    \node[label=below:$+11$] (p6) at (2,0) {};
    \node[label=below:$-9$\phantom{1}] (m5) at (3,0) {};
    \node[label=below:$+9$\phantom{1}] (p5) at (4,0) {};
    \node[label=below:$-7$\phantom{1}] (m4) at (5,0) {};
    \node[label=below:$+7$\phantom{1}] (p4) at (6,0) {};
    \node[label=below:$-5$\phantom{1}] (m3) at (7,0) {};
    \node[label=below:$+5$\phantom{1}] (p3) at (8,0) {};
    \node[label=below:$-3$\phantom{1}] (m2) at (9,0) {};
    \node[label=below:$+3$\phantom{1}] (p2) at (10,0) {};
    \node[label=below:$-2$\phantom{1}] (m1) at (11,0) {};
    \node[label=below:$+2$\phantom{1}] (p1) at (12,0) {};
    \node[label=below:$-13$] (m8) at (13,0) {};
\end{scope}

\begin{scope}[every edge/.style={draw=black}]
    \path [<-] (0.5, -0.1) edge (0.5, -1.5);
    \node[] (insertion) at (0.5, -1.75) {(1~4~8)};
\end{scope}

\begin{scope}[every edge/.style={draw=black}]
    \path [-] (p5) edge node [black, pos=0.5, sloped, above] {} (m4);
    \path [-] (p3) edge node [black, pos=0.5, sloped, above] {} (m2);
    \path [-] (p1) edge node [black, pos=0.5, sloped, above] {} (m8);
    \path [-] (p4) edge node [black, pos=0.5, sloped, above] {} (m3);
    \path [-] (p6) edge node [black, pos=0.5, sloped, above] {} (m5);
\end{scope}

\begin{scope}[dashed]
    \path [-] (p0) edge node [black, pos=0.5, sloped, above, yshift=-0.05cm] {$\deletionElement$} (m6);
    \path [-] (p2) edge node [black, pos=0.5, sloped, above, yshift=-0.05cm] {$\deletionElement$} (m1);
\end{scope}

\begin{scope}[every edge/.style={draw=black}]
    \path [-] (p1) edge  [bend right=80] (m2);
\end{scope}

\begin{scope}[dashed]
    \path [-] (p5) edge  [bend right=80] node [black, pos=0.5, sloped, above] {$10$} (m6);
    \path [-] (p0) edge  [bend left=50] node [black, pos=0.5, sloped, above] {$1$} (m1);
    \path [-] (p2) edge  [bend right=80] node [black, pos=0.5, sloped, above] {$4$} (m3);
    \path [-] (p4) edge  [bend right=80] node [black, pos=0.5, sloped, above] {$8$} (m5);
    \path [-] (p3) edge  [bend right=80] node [black, pos=0.5, sloped, above] {$6$} (m4);
    \path [-] (p6) edge  [bend left=50] node [black, pos=0.5, sloped, above] {$12$} (m8);
\end{scope}

\begin{scope}[every node/.style={inner sep=0.3mm, draw, circle, minimum size = 0pt}]
    \node[label=below:$+0$\phantom{1}] (2p0) at (-3,-5) {};
    \node[label=below:$-1$\phantom{1}] (2m1) at (-2,-5) {};
    \node[label=below:$+1$\phantom{1}] (2p1) at (-1,-5) {};
    \node[label=below:$-4$\phantom{1}] (2m4) at (0,-5) {};
    \node[label=below:$+4$\phantom{1}] (2p4) at (1,-5) {};
    \node[label=below:$-8$\phantom{1}] (2m8) at (2,-5) {};
    \node[label=below:$+8$\phantom{1}] (2p8) at (3,-5) {};
    \node[label=below:$-11$] (2m11) at (4,-5) {};
    \node[label=below:$+11$] (2p11) at (5,-5) {};
    \node[label=below:$-9$\phantom{1}] (2m9) at (6,-5) {};
    \node[label=below:$+9$\phantom{1}] (2p9) at (7,-5) {};
    \node[label=below:$-7$\phantom{1}] (2m7) at (8,-5) {};
    \node[label=below:$+7$\phantom{1}] (2p7) at (9,-5) {};
    \node[label=below:$-5$\phantom{1}] (2m5) at (10,-5) {};
    \node[label=below:$+5$\phantom{1}] (2p5) at (11,-5) {};
    \node[label=below:$-3$\phantom{1}] (2m3) at (12,-5) {};
    \node[label=below:$+3$\phantom{1}] (2p3) at (13,-5) {};
    \node[label=below:$-2$\phantom{1}] (2m2) at (14,-5) {};
    \node[label=below:$+2$\phantom{1}] (2p2) at (15,-5) {};
    \node[label=below:$-13$] (2m13) at (16,-5) {};
\end{scope}

\begin{scope}[every edge/.style={draw=black}]
    \path [-] (2p1) edge node [black, pos=0.5, sloped, above] {} (2m4);
    \path [-] (2p4) edge node [black, pos=0.5, sloped, above] {} (2m8);
    \path [-] (2p9) edge node [black, pos=0.5, sloped, above] {} (2m7);
    \path [-] (2p7) edge node [black, pos=0.5, sloped, above] {} (2m5);
    \path [-] (2p5) edge node [black, pos=0.5, sloped, above] {} (2m3);
    \path [-] (2p2) edge node [black, pos=0.5, sloped, above] {} (2m13);
    \path [-] (2p0) edge node [black, pos=0.5, sloped, above] {} (2m1);
    \path [-] (2p11) edge node [black, pos=0.5, sloped, above] {} (2m9);
\end{scope}

\begin{scope}[dashed]
  
  \path [-] (2p3) edge node [black, pos=0.5, sloped, above, yshift=-0.05cm] {$\deletionElement$} (2m2);
  \path [-] (2p8) edge node [black, pos=0.5, sloped, above, yshift=-0.05cm] {$\deletionElement$} (2m11);

\end{scope}

\begin{scope}[every edge/.style={draw=black}]
    \path [-] (2p0) edge  [bend left=90] (2m1);
    \path [-] (2p1) edge  [bend left=50] (2m2);
    \path [-] (2p2) edge  [bend right=70] (2m3);
    \path [-] (2p3) edge  [bend right=50] (2m4);
    \path [-] (2p4) edge  [bend left=50] (2m5);
    \path [-] (2p7) edge  [bend right=50] (2m8);
    \path [-] (2p8) edge  [bend left=50] (2m9);
\end{scope}

\begin{scope}[dashed]
    \path [-] (2p5) edge  [bend right=50] node [black, pos=0.5, sloped, above] {$6$} (2m7);
    \path [-] (2p9) edge  [bend right=50] node [black, pos=0.5, sloped, above] {$10$} (2m11);
    \path [-] (2p11) edge  [bend left=50] node [black, pos=0.5, sloped, above] {$12$} (2m13);
\end{scope}
\end{tikzpicture}
}
\caption{
Exemplo de uma inserção que remove o rótulo de arestas de destino de diferentes {\it runs} de um mesmo ciclo. Neste exemplo, temos $A = (0~\alpha~11~9~7~5~3~\alpha~2~13)$ e $\iota^n$ com $n = 12$. A operação aplicada em $A$ é a inserção $\insertion(0, (1~4~8))$.
\label{cap4:fig:insertion_run_multiple}}
\end{figure}

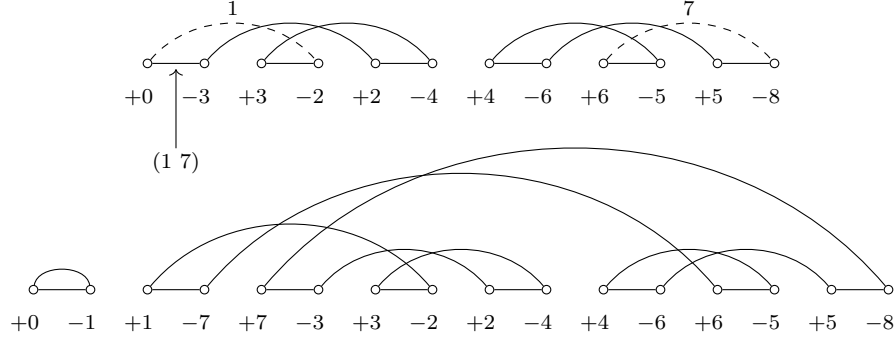
\begin{figure}[t]
\centering
\begin{tikzpicture}[scale=0.75]
\scriptsize
\begin{scope}[every node/.style={inner sep=0.4mm, draw, circle, minimum size = 0pt}]
    \node[label=below:$+0$\phantom{+}] (p0) at (0,0) {};
    \node[label=below:$-3$\phantom{+}] (m3) at (1,0) {};
    \node[label=below:$+3$\phantom{+}] (p3) at (2,0) {};
    \node[label=below:$-2$\phantom{+}] (m2) at (3,0) {};
    \node[label=below:$+2$\phantom{+}] (p2) at (4,0) {};
    \node[label=below:$-4$\phantom{+}] (m4) at (5,0) {};
    \node[label=below:$+4$\phantom{+}] (p4) at (6,0) {};
    \node[label=below:$-6$\phantom{+}] (m6) at (7,0) {};
    \node[label=below:$+6$\phantom{+}] (p6) at (8,0) {};
    \node[label=below:$-5$\phantom{+}] (m5) at (9,0) {};
    \node[label=below:$+5$\phantom{+}] (p5) at (10,0) {};
    \node[label=below:$-8$\phantom{+}] (m8) at (11,0) {};
\end{scope}

\begin{scope}[every edge/.style={draw=black}]
    \path [<-] (0.5, -0.1) edge (0.5, -1.5);
    \node[] (insertion) at (0.5, -1.75) {(1~7)};
\end{scope}

\begin{scope}[every edge/.style={draw=black}]
    \path [-] (p0) edge node [black, pos=0.5, sloped, above] {} (m3);
    \path [-] (p3) edge node [black, pos=0.5, sloped, above] {} (m2);
    \path [-] (p2) edge node [black, pos=0.5, sloped, above] {} (m4);
    \path [-] (p4) edge node [black, pos=0.5, sloped, above] {} (m6);
    \path [-] (p6) edge node [black, pos=0.5, sloped, above] {} (m5);
    \path [-] (p5) edge node [black, pos=0.5, sloped, above] {} (m8);
\end{scope}

\begin{scope}[every edge/.style={draw=black}]
    \path [-] (p2) edge  [bend right=50] (m3);
    \path [-] (p3) edge  [bend left=50] (m4);
    \path [-] (p4) edge  [bend left=50] (m5);
    \path [-] (p5) edge  [bend right=50] (m6);
\end{scope}

\begin{scope}[dashed]
    \path [-] (p0) edge  [bend left=50] node [black, pos=0.5, sloped, above] {$1$} (m2);
    \path [-] (p6) edge  [bend left=50] node [black, pos=0.5, sloped, above] {$7$} (m8);
\end{scope}

\begin{scope}[every node/.style={inner sep=0.4mm, draw, circle, minimum size = 0pt}]
    \node[label=below:$+0$\phantom{+}] (2p0) at (-2,-4) {};
    \node[label=below:$-1$\phantom{+}] (2m1) at (-1,-4) {};
    \node[label=below:$+1$\phantom{+}] (2p1) at (0,-4) {};
    \node[label=below:$-7$\phantom{+}] (2m7) at (1,-4) {};
    \node[label=below:$+7$\phantom{+}] (2p7) at (2,-4) {};
    \node[label=below:$-3$\phantom{+}] (2m3) at (3,-4) {};
    \node[label=below:$+3$\phantom{+}] (2p3) at (4,-4) {};
    \node[label=below:$-2$\phantom{+}] (2m2) at (5,-4) {};
    \node[label=below:$+2$\phantom{+}] (2p2) at (6,-4) {};
    \node[label=below:$-4$\phantom{+}] (2m4) at (7,-4) {};
    \node[label=below:$+4$\phantom{+}] (2p4) at (8,-4) {};
    \node[label=below:$-6$\phantom{+}] (2m6) at (9,-4) {};
    \node[label=below:$+6$\phantom{+}] (2p6) at (10,-4) {};
    \node[label=below:$-5$\phantom{+}] (2m5) at (11,-4) {};
    \node[label=below:$+5$\phantom{+}] (2p5) at (12,-4) {};
    \node[label=below:$-8$\phantom{+}] (2m8) at (13,-4) {};
\end{scope}

\begin{scope}[every edge/.style={draw=black}]
    \path [-] (2p0) edge node [black, pos=0.5, sloped, above] {} (2m1);
    \path [-] (2p1) edge node [black, pos=0.5, sloped, above] {} (2m7);
    \path [-] (2p7) edge node [black, pos=0.5, sloped, above] {} (2m3);
    \path [-] (2p3) edge node [black, pos=0.5, sloped, above] {} (2m2);
    \path [-] (2p2) edge node [black, pos=0.5, sloped, above] {} (2m4);
    \path [-] (2p4) edge node [black, pos=0.5, sloped, above] {} (2m6);
    \path [-] (2p6) edge node [black, pos=0.5, sloped, above] {} (2m5);
    \path [-] (2p5) edge node [black, pos=0.5, sloped, above] {} (2m8);
\end{scope}

\begin{scope}[every edge/.style={draw=black}]
    \path [-] (2p0) edge  [bend left=80] (2m1);
    \path [-] (2p1) edge  [bend left=50] (2m2);
    \path [-] (2p2) edge  [bend right=50] (2m3);
    \path [-] (2p3) edge  [bend left=50] (2m4);
    \path [-] (2p4) edge  [bend left=50] (2m5);
    \path [-] (2p5) edge  [bend right=50] (2m6);
    \path [-] (2p6) edge  [bend right=50] (2m7);
    \path [-] (2p7) edge  [bend left=50] (2m8);
\end{scope}

\begin{scope}[dashed]
\end{scope}
\end{tikzpicture}

\caption{
Exemplo de uma inserção que remove o rótulo de arestas de destino de diferentes ciclos. Neste exemplos, temos $A = (0~3~2~4~6~5~8)$ e $\iota^n$ com $n = 7$. A operação aplicada em $A$ é a inserção $\insertion(0, (1~7))$.
\label{cap4:fig:insertion_multi_cycles}}
\end{figure}

\begin{lemma}\label{cap4:lemma:lb_graph_bi}
	Para qualquer {\it block interchange} $\BI$ e instância $\I = (A, \iota^n)$, temos que $\Delta c(\I, \BI) + \Delta \lambda(\I, \BI) \leq 2$.
\end{lemma}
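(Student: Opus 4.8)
A estratégia é analisar como um \textit{block interchange} $\BI$ afeta, simultaneamente, o número de ciclos $c(\I)$ e o potencial de indels $\lambda(\I)$ do grafo de ciclos rotulado. A ideia central é que um \textit{block interchange} corresponde, em termos de arestas de origem, a cortar três posições de $A$ e recolar os blocos; portanto ele remove e recria no máximo três arestas de origem do grafo (mais precisamente, substitui três arestas de origem por três novas), de forma análoga ao efeito de uma transposição. Primeiro eu lembraria o resultado clássico para permutações de que um \textit{block interchange} satisfaz $\Delta c(\pi, \BI) \leq 2$ (Christie), e observaria que, quando $\BI$ não afeta rótulos nem cria/remove \textit{runs}, o mesmo limitante vale para o grafo rotulado, com $\Delta \lambda(\I, \BI) = 0$, dando $\Delta c(\I, \BI) + \Delta \lambda(\I, \BI) \leq 2$.

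Em seguida eu trataria os casos em que $\BI$ de fato altera o potencial de indels. Como um \textit{block interchange} apenas rearranja segmentos de $A$ sem inserir ou remover elementos, ele não altera $|\pi^A|$ nem o conjunto $\Sigma_A$, logo não muda os rótulos das arestas, apenas pode mudar a quais ciclos essas arestas pertencem e, consequentemente, a distribuição de \textit{runs} entre os ciclos. O ponto a estabelecer é: se $\BI$ diminui o potencial de indels em $t$ unidades (ou seja, $\Delta \lambda(\I, \BI) = t$), então necessariamente $\Delta c(\I, \BI) \leq 2 - t$. Para isso, eu usaria uma análise de casos sobre quantos \textit{runs} podem ser ``fundidos'' ou eliminados por uma operação que mexe em apenas três arestas de origem: cada redução de uma unidade no potencial de indels de um ciclo está atrelada ou à remoção de um \textit{run} (o que, pela definição de $\lambda$ via $\lceil (\Lambda(C)+1)/2 \rceil$, requer mexer na fronteira entre \textit{runs}, i.e., em uma aresta de origem específica) ou à junção de ciclos com \textit{runs} (o que custa ciclos, análogo ao argumento já feito no Lema~\ref{cap4:lemma:lb_graph_insertion}). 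Combinando esses dois mecanismos com o limite de três arestas de origem manipuladas, chega-se ao compromisso desejado entre $\Delta c$ e $\Delta \lambda$.

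Concretamente, eu organizaria a prova assim: (i) escrever explicitamente o efeito de $\BI(i,j,k,l)$ nas arestas de origem de $G(\I)$, identificando as (no máximo) três arestas de origem cujos extremos mudam; (ii) observar que se nenhuma dessas arestas é rotulada nem está na fronteira de um \textit{run}, então $\Delta \lambda(\I, \BI) \leq 0$ e recai-se no limitante de ciclos $\Delta c(\I, \BI) \leq 2$ herdado do caso de permutações (aplicado ao grafo subjacente, ignorando rótulos); (iii) nos demais casos, mostrar por um argumento de contagem — essencialmente idêntico em espírito ao usado na prova do Lema~\ref{cap4:lemma:lb_graph_insertion}, usando o Lema~\ref{cap4:lemma:even_runs} para controlar a paridade do número de \textit{runs} nos ciclos resultantes — que qualquer ganho em $\Delta \lambda$ é compensado por perda equivalente em $\Delta c$, mantendo a soma $\leq 2$.

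\textbf{Principal obstáculo.} A parte delicada será o passo (iii): o \textit{block interchange} é mais flexível que a transposição (pode agir em blocos não adjacentes), então há mais configurações possíveis de como os três cortes interagem com os \textit{runs} e com a estrutura de ciclos. Será preciso um argumento cuidadoso — possivelmente quebrado em subcasos conforme os três cortes incidam dentro de um único ciclo, de dois ciclos ou de três ciclos distintos, e conforme coincidam ou não com fronteiras de \textit{runs} — para garantir que em todos eles vale $\Delta c(\I, \BI) + \Delta \lambda(\I, \BI) \leq 2$. A ferramenta-chave para fechar cada subcaso é a mesma desigualdade de soma de tetos usada no Lema~\ref{cap4:lemma:lb_graph_insertion} combinada com o fato de que juntar $k$ ciclos custa pelo menos $k-1$ ciclos; o cuidado está em enumerar corretamente os subcasos sem omissões.
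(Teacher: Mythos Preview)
Há um erro de base na sua contagem: um \textit{block interchange} $\BI(i,j,k,l)$ quebra \emph{quatro} adjacências de $A$ (antes de $A_i$, depois de $A_j$, antes de $A_k$, depois de $A_l$), e portanto substitui quatro arestas de origem por quatro novas --- não três, como você afirma. A transposição é exatamente o caso degenerado $j+1=k$ em que duas dessas quebras coincidem e restam três. Essa diferença não é cosmética: ela muda tanto o intervalo possível de $\Delta c$ (que para \textit{block interchange} é $\{-2,0,2\}$, podendo afetar de um a quatro ciclos distintos) quanto o número máximo de fronteiras de \textit{runs} que a operação toca, e portanto o ganho máximo em $\Delta\lambda$.

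Concretamente, seu plano de casos ``conforme os três cortes incidam dentro de um único ciclo, de dois ciclos ou de três ciclos distintos'' omite o caso em que as quatro arestas afetadas pertencem a quatro ciclos distintos. Nesse caso $\Delta c(\I,\BI) = -2$ (os quatro ciclos se fundem dois a dois), mas em compensação até quatro pares de \textit{runs} podem ser colados, dando $\Delta\lambda(\I,\BI) = 4$; é justamente esse balanço que fecha a conta em $2$. O artigo organiza a prova exatamente por esse critério --- número de ciclos afetados ($1$, $2$, $3$ ou $4$), com subcasos conforme o número de ciclos resultantes --- e em cada caso exibe o melhor cenário para $\Delta\lambda$ e verifica que ele compensa exatamente o correspondente $\Delta c$. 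Sua estratégia geral (trocar ganho em $\lambda$ por perda em $c$, usando a aritmética de $\lceil(\Lambda+1)/2\rceil$ e a paridade do Lema~\ref{cap4:lemma:even_runs}) está correta em espírito, mas precisa ser refeita sobre quatro arestas e com a enumeração completa de casos.
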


\begin{proof}
	Dividimos essa demonstração de acordo com o número de ciclos afetados por $\BI$~\cite{1996-christie}.

	Se $\BI$ afeta quatro ciclos $C_1, C_2, C_3$ e $C_4$, então essa operação transforma esses quatro ciclos em dois novos ciclos $C'_1$ e $C'_2$. Assim, temos que $\Delta c(\I, \BI) = -2$. No melhor cenário, dois {\it runs} de deleção e dois {\it runs} de inserção de $C_1$ e $C_3$ são unidos em $C'_1$. Analogamente, dois {\it runs} de deleção e dois {\it runs} de inserção de $C_2$ e $C_4$ são unidos em $C'_2$. Neste caso, $\Lambda(C'_1) = \Lambda(C_1) + \Lambda(C_3) - 2$ e $\Lambda(C'_2) = \Lambda(C_2) + \Lambda(C_4) - 2$. Portanto, $\Delta \lambda(A, \iota^n, \BI) = 4$ e $\Delta c(\I, \BI) + \Delta \lambda(\I, \BI) = 2$. Um exemplo é apresentado na Figura~\ref{cap4:fig:bound_bi_case1}.

	Se $\BI$ afeta três ciclos $C_1, C_2$ e $C_3$, então essa operação une esses três ciclos em um novo ciclo $C'$. Assim, temos que $\Delta c(\I, \BI) = -2$. De forma similar ao caso anterior, no melhor cenário, o número de {\it runs} diminui em quatro e $\Lambda(C') = \Lambda(C_1) + \Lambda(C_2) + \Lambda(C_3) - 4$. Portanto, $\Delta \lambda(A, \iota^n, \BI) = 4$ e $\Delta c(\I, \BI) + \Delta \lambda(\I, \BI) = 2$.

	Se $\BI$ afeta dois ciclos $C_1$ e $C_2$, então essa operação transforma esses dois ciclos em dois novos ciclos ou em quatro novos ciclos. Se $\BI$ transforma $C_1$ e $C_2$ em dois novos ciclos $C'_1$ e $C'_2$, então, no melhor cenário, o número de {\it runs} diminui em quatro com $\Lambda(C'_1) = \Lambda(C_1) - 2$ e $\Lambda(C'_2) = \Lambda(C_1) - 2$. Dessa forma, o potencial de {\it indel} diminui em uma unidade para cada ciclo e $\Delta \lambda(\I, \BI) = 2$. Como $\Delta c(\I, \BI) = 0$, temos que $\Delta c(\I, \BI) + \Delta \lambda(\I, \BI) = 2$.

	Se $\BI$ afeta dois ciclos $C_1$ e $C_2$, transformando esses ciclos em quatro novos ciclos $C'_1$, $C'_2$, $C'_3$ e $C'_4$, então, no melhor cenário, dois pares de {\it runs} de deleção são unidos, porém note que cada ciclo possui pelo menos um {\it run} de inserção, como mostrado na Figura~\ref{cap4:fig:bound_bi_case3}. Portanto, $\Lambda(C'_1) = X$, tal que $1 \leq X < \Lambda(C_1)$, $\Lambda(C'_2) = \min(\Lambda(C_1) - X - 2, 1)$, $\Lambda(C'_3) = Y$, tal que $1 \leq Y < \Lambda(C_2)$, e $\Lambda(C'_4) = \min(\Lambda(C_2) - Y - 2, 1)$. Portanto, o potencial de {\it indel} do grafo permanece o mesmo ($\Delta \lambda(\I, \BI) = 0$) e $\Delta c(\I, \BI) = 2$.

	Se $\BI$ afeta um único ciclo $C_1$, então essa operação transforma esse ciclo em um novo ciclo ou em três novos ciclos. Se $\BI$ não aumenta o número de ciclos do grafo, então, no melhor cenário, essa operação pode diminuir o número de {\it runs} do ciclo em quatro e $\Delta c(\I, \BI) + \Delta \lambda(\I, \BI) = 2$. Se $\BI$ transforma $C_1$ em três novos ciclos $C'_1, C'_2$ e $C'_3$, então, no melhor cenário, dois pares de {\it runs} de deleção são unidos, mas note que cada ciclo possui pelo menos um {\it run} de inserção. De forma similar ao caso anterior, o potencial de {\it indel} do grafo permanece o mesmo e $\Delta c(\I, \BI) + \Delta \lambda(\I, \BI) = 2$.
\end{proof}

\begin{figure}[t]
\centering
\begin{tikzpicture}[scale=0.9]
\scriptsize

\begin{scope}[every node/.style={inner sep=0.4mm, draw, circle, minimum size = 0pt}]
    \node[label=below:$+0$] (2p0) at (0,0) {};
    \node[label=below:$-2$] (2m2) at (1,0) {};
    \node[label=below:$+2$] (2p2) at (2,0) {};
    \node[label=below:$-4$] (2m4) at (3,0) {};
    \node[label=below:$+4$] (2p4) at (4,0) {};
    \node[label=below:$-6$] (2m6) at (5,0) {};
    \node[label=below:$+6$] (2p6) at (6,0) {};
    \node[label=below:$-8$] (2m8) at (7,0) {};
\end{scope}

\begin{scope}[>={Stealth[black]},
              dashed]
    \path [-] (2p0) edge node [black, pos=0.5, sloped, above, yshift=-0.05cm] {$\deletionElement$} (2m2);
    \path [-] (2p6) edge node [black, pos=0.5, sloped, above, yshift=-0.05cm] {$\deletionElement$} (2m8);
    \path [-] (2p2) edge node [black, pos=0.5, sloped, above, yshift=-0.05cm] {$\deletionElement$} (2m4);
    \path [-] (2p4) edge node [black, pos=0.5, sloped, above, yshift=-0.05cm] {$\deletionElement$} (2m6);
\end{scope}

\begin{scope}[>={Stealth[black]},
              every edge/.style={draw=black}, every node/.style={inner sep=0pt, minimum size = 0pt}]
\node[label=\phantom{}] (bi1) at (0.5, -0.8) {};
\node[label=\phantom{}] (bi2) at (2.5, -0.8) {};
\path [{Bar}-{Bar}] (bi1) edge node [black, pos=0.5, sloped, below] {$(2~\deletionElement)$} (bi2);

\node[label=\phantom{}] (bi3) at (4.5, -0.8) {};
\node[label=\phantom{}] (bi4) at (6.5, -0.8) {};
\path [{Bar}-{Bar}] (bi3) edge node [black, pos=0.5, sloped, below] {$( \deletionElement~6)$} (bi4);
\end{scope}

\begin{scope}[>={Stealth[black]},
              every edge/.style={draw=black}]
\end{scope}

\begin{scope}[>={Stealth[black]},
              dashed]
    \path [-] (2p0) edge  [bend left=80] node [black, pos=0.5, sloped, above] {$1$} (2m2);
    \path [-] (2p2) edge  [bend left=80] node [black, pos=0.5, sloped, above] {$3$} (2m4);
    \path [-] (2p4) edge  [bend left=80] node [black, pos=0.5, sloped, above] {$5$} (2m6);
    \path [-] (2p6) edge  [bend left=80] node [black, pos=0.5, sloped, above] {$7$} (2m8);
\end{scope}

\begin{scope}[every node/.style={inner sep=0.4mm, draw, circle, minimum size = 0pt}]
    \node[label=below:$+0$] (p0) at (0,-3) {};
    \node[label=below:$-6$] (m6) at (1,-3) {};
    \node[label=below:$+6$] (p6) at (2,-3) {};
    \node[label=below:$-4$] (m4) at (3,-3) {};
    \node[label=below:$+4$] (p4) at (4,-3) {};
    \node[label=below:$-2$] (m2) at (5,-3) {};
    \node[label=below:$+2$] (p2) at (6,-3) {};
    \node[label=below:$-8$] (m8) at (7,-3) {};
\end{scope}

\begin{scope}[>={Stealth[black]},
              dashed]
    \path [-] (p0) edge node [black, pos=0.5, sloped, above, yshift=-0.05cm] {$\deletionElement$} (m6);
    \path [-] (p2) edge node [black, pos=0.5, sloped, above, yshift=-0.05cm] {$\deletionElement$} (m8);
\end{scope}

\begin{scope}[>={Stealth[black]},
              every edge/.style={draw=black}]
              
    \path [-] (p6) edge node [black, pos=0.5, sloped, above] {} (m4);
    \path [-] (p4) edge node [black, pos=0.5, sloped, above] {} (m2);
\end{scope}

\begin{scope}[>={Stealth[black]},
              dashed]
    \path [-] (p0) edge  [bend left=50] node [black, pos=0.5, sloped, above] {$1$} (m2);
    \path [-] (p2) edge  [bend right=50] node [black, pos=0.5, sloped, above] {$3$} (m4);
    \path [-] (p4) edge  [bend right=50] node [black, pos=0.5, sloped, above] {$5$} (m6);
    \path [-] (p6) edge  [bend left=50] node [black, pos=0.5, sloped, above] {$7$} (m8);
\end{scope}

\end{tikzpicture}

\caption{
Exemplo de uma operação de {\it block interchange} que age em quatro ciclos. Neste exemplo, temos $A = (0~\deletionElement~2~\deletionElement~4~\deletionElement~6~\deletionElement~8)$ e $\iota^n$ com $n = 7$. O potencial de {\it indel} do grafo original é igual a $4 \times \lceil{(2 + 1)/2}\rceil = 8$ e o potencial de {\it indel} do novo grafo é igual a $\lceil{(2 + 1)/2}\rceil + \lceil{(2 + 1)/2}\rceil = 4$.
\label{cap4:fig:bound_bi_case1}
}

\end{figure}
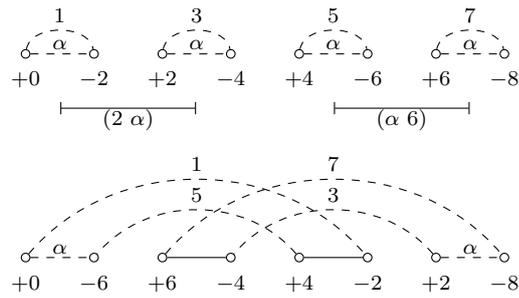
\begin{figure}[t]
\centering
\begin{tikzpicture}[scale=0.9]
\scriptsize
\begin{scope}[every node/.style={inner sep=0.4mm, draw, circle, minimum size = 0pt}]
    \node[label=below:$+0$] (p0) at (0,0) {};
    \node[label=below:$-6$] (m6) at (1,0) {};
    \node[label=below:$+6$] (p6) at (2,0) {};
    \node[label=below:$-4$] (m4) at (3,0) {};
    \node[label=below:$+4$] (p4) at (4,0) {};
    \node[label=below:$-2$] (m2) at (5,0) {};
    \node[label=below:$+2$] (p2) at (6,0) {};
    \node[label=below:$-8$] (m8) at (7,0) {};
\end{scope}

\begin{scope}[>={Stealth[black]},
              every edge/.style={draw=black}, every node/.style={inner sep=0pt, minimum size = 0pt}]
\node[label=\phantom{}] (bi1) at (0.5, -0.8) {};
\node[label=\phantom{}] (bi2) at (2.5, -0.8) {};
\path [{Bar}-{Bar}] (bi1) edge node [black, pos=0.5, sloped, below] {$(6~\deletionElement)$} (bi2);

\node[label=\phantom{}] (bi3) at (4.5, -0.8) {};
\node[label=\phantom{}] (bi4) at (6.5, -0.8) {};
\path [{Bar}-{Bar}] (bi3) edge node [black, pos=0.5, sloped, below] {$(\deletionElement~2)$} (bi4);
\end{scope}

\begin{scope}[>={Stealth[black]},
              dashed]
    \path [-] (p0) edge node [black, pos=0.5, sloped, above, yshift=-0.05cm] {$\deletionElement$} (m6);
    \path [-] (p6) edge node [black, pos=0.5, sloped, above, yshift=-0.05cm] {$\deletionElement$} (m4);
    \path [-] (p4) edge node [black, pos=0.5, sloped, above, yshift=-0.05cm] {$\deletionElement$} (m2);
    \path [-] (p2) edge node [black, pos=0.5, sloped, above, yshift=-0.05cm] {$\deletionElement$} (m8);
\end{scope}

\begin{scope}[>={Stealth[black]},
              every edge/.style={draw=black}]
\end{scope}

\begin{scope}[>={Stealth[black]},
              dashed]
    \path [-] (p0) edge  [bend left=70] node [black, pos=0.5, sloped, above] {$1$} (m2);
    \path [-] (p2) edge  [bend right=70] node [black, pos=0.5, sloped, above] {$3$} (m4);
    \path [-] (p4) edge  [bend right=70] node [black, pos=0.5, sloped, above] {$5$} (m6);
    \path [-] (p6) edge  [bend left=70] node [black, pos=0.5, sloped, above] {$7$} (m8);
\end{scope}

\begin{scope}[every node/.style={inner sep=0.4mm, draw, circle, minimum size = 0pt}]
    \node[label=below:$+0$] (2p0) at (0,-2.5) {};
    \node[label=below:$-2$] (2m2) at (1,-2.5) {};
    \node[label=below:$+2$] (2p2) at (2,-2.5) {};
    \node[label=below:$-4$] (2m4) at (3,-2.5) {};
    \node[label=below:$+4$] (2p4) at (4,-2.5) {};
    \node[label=below:$-6$] (2m6) at (5,-2.5) {};
    \node[label=below:$+6$] (2p6) at (6,-2.5) {};
    \node[label=below:$-8$] (2m8) at (7,-2.5) {};
\end{scope}

\begin{scope}[>={Stealth[black]},
              dashed]
    \path [-] (2p0) edge node [black, pos=0.5, sloped, above, yshift=-0.05cm] {$\deletionElement$} (2m2);
    \path [-] (2p6) edge node [black, pos=0.5, sloped, above, yshift=-0.05cm] {$\deletionElement$} (2m8);
\end{scope}

\begin{scope}[>={Stealth[black]},
              every edge/.style={draw=black}]
    \path [-] (2p2) edge node [black, pos=0.5, sloped, above] {} (2m4);
    \path [-] (2p4) edge node [black, pos=0.5, sloped, above] {} (2m6);
\end{scope}

\begin{scope}[>={Stealth[black]},
              every edge/.style={draw=black}]
\end{scope}

\begin{scope}[>={Stealth[black]},
              dashed]
    \path [-] (2p0) edge  [bend left=80] node [black, pos=0.5, sloped, above] {$1$} (2m2);
    \path [-] (2p2) edge  [bend left=80] node [black, pos=0.5, sloped, above] {$3$} (2m4);
    \path [-] (2p4) edge  [bend left=80] node [black, pos=0.5, sloped, above] {$5$} (2m6);
    \path [-] (2p6) edge  [bend left=80] node [black, pos=0.5, sloped, above] {$7$} (2m8);
\end{scope}

\end{tikzpicture}

\caption{
Exemplo de uma operação de {\it block interchange} que age em dois ciclos e cria quatro novos ciclos. Neste exemplo, temos $A = (0~\deletionElement~6~\deletionElement~4~\deletionElement~2~\deletionElement~8)$ e $\iota^n$ com $n = 7$. O potencial de {\it indel} do grafo original é igual a 
$\lceil{(4 + 1)/ 2}\rceil + \lceil{(4 + 1)/ 2}\rceil = 6$ e o potencial de indel no novo grafo é igual a $\lceil{(2 + 1)/ 2}\rceil + \lceil{(1 + 1)/ 2}\rceil + \lceil{(1 + 1)/ 2}\rceil + \lceil{(2 + 1)/ 2}\rceil = 6$.
\label{cap4:fig:bound_bi_case3}
}
\end{figure}
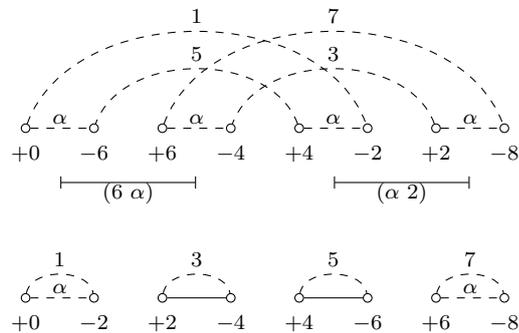

\begin{lemma}\label{cap4:lemma:change_score_transp}
	Para qualquer transposição $\tau$ e instância $\I = (A, \iota^n)$, temos que $\Delta c(\I, \tau) + \Delta \lambda(\I, \tau) \leq 2$.
\end{lemma}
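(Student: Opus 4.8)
A estratégia é a mesma das provas dos lemas anteriores sobre {\it block interchanges} (Lema~\ref{cap4:lemma:lb_graph_bi}) e {\it indels} (lemas~\ref{cap4:lemma:lb_graph_deletion} e \ref{cap4:lemma:lb_graph_insertion}): analisar, caso a caso, como uma transposição pode agir sobre os ciclos do grafo de ciclos rotulado $G(\I)$ e, em cada caso, limitar simultaneamente $\Delta c(\I, \tau)$ e $\Delta \lambda(\I, \tau)$. Como uma transposição é um tipo especial de {\it block interchange} em que os segmentos afetados são adjacentes, muitos dos casos já estão cobertos pela análise do Lema~\ref{cap4:lemma:lb_graph_bi}, mas uma transposição não pode agir em quatro ciclos distintos nem criar quatro novos ciclos; ela afeta no máximo três ciclos (pois quebra apenas três arestas de origem) e pode aumentar o número de ciclos em no máximo $2$. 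Lembrando que $\Delta c(\I, \tau) = \Delta c(A, \iota^n, \tau)$ usa a convenção $(|\pi^A| + 1 - c(\I)) - (|\pi^A \comp \beta| + 1 - c(A\comp\beta,\iota^n))$ e que uma transposição não altera o número de arestas de origem (logo $|\pi^A| = |\pi^A \comp \tau|$), temos $\Delta c(\I, \tau) = c(A\comp\tau,\iota^n) - c(\I) = \Delta c(\pi^A, \tau)$ no sentido usual do grafo de ciclos.

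Primeiro eu enumeraria os casos pelo número de ciclos afetados, como em Christie~\cite{1996-christie}. Se $\tau$ age em três ciclos distintos $C_1, C_2, C_3$, ela os une em um único ciclo $C'$, de modo que $\Delta c(\I, \tau) = -2$; no melhor cenário, {\it runs} adjacentes dos três ciclos se fundem e o número total de {\it runs} diminui, dando $\Lambda(C') \geq \Lambda(C_1) + \Lambda(C_2) + \Lambda(C_3) - 4$ (analogamente ao argumento do Lema~\ref{cap4:lemma:lb_graph_bi} para três ciclos), donde $\Delta \lambda(\I, \tau) \leq 4$ e $\Delta c(\I,\tau) + \Delta \lambda(\I,\tau) \leq 2$. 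Se $\tau$ age em dois ciclos $C_1, C_2$, ela pode transformá-los em um único ciclo ($\Delta c(\I,\tau) = -1$) ou em três ciclos ($\Delta c(\I,\tau) = 1$); em ambos os subcasos, estimo a variação no potencial de {\it indel} reutilizando os argumentos combinatórios do Lema~\ref{cap4:lemma:lb_graph_bi} — fusão/quebra de {\it runs}, com o cuidado de que cada novo ciclo herda pelo menos um {\it run} quando o número de ciclos aumenta —, obtendo $\Delta \lambda(\I,\tau) \leq 3$ no primeiro subcaso e $\Delta\lambda(\I,\tau) \leq 1$ no segundo, de modo que a soma permanece $\leq 2$. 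Finalmente, se $\tau$ age em um único ciclo $C_1$, ela o mantém como um ciclo ou o divide em três; no caso em que cria três ciclos, cada um contém pelo menos um {\it run} de inserção (quando $C_1$ tinha algum), e um argumento análogo ao do caso de {\it block interchange} em um ciclo mostra que o potencial de {\it indel} não aumenta além do que os ciclos criados absorvem, ou seja, $\Delta c(\I,\tau) + \Delta\lambda(\I,\tau) \leq 2$.

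O passo que espero ser o mais delicado é o controle do potencial de {\it indel} nos subcasos em que $\tau$ aumenta o número de ciclos (dois ciclos viram três, ou um ciclo vira três): aqui é preciso argumentar que a distribuição dos {\it runs} entre os novos ciclos não permite uma redução de $\lambda$ maior do que $3 - \Delta c$. Isso depende essencialmente do fato, já usado no Lema~\ref{cap4:lemma:lb_graph_bi}, de que cada ciclo recém-criado carrega pelo menos um {\it run} de inserção quando o ciclo original possuía {\it runs} de inserção (uma inserção introduz arestas, mas uma transposição apenas rearranja, de modo que os {\it runs} de inserção só podem ser particionados, nunca eliminados, por uma transposição), combinado com o Lema~\ref{cap4:lemma:even_runs} sobre a paridade do número de {\it runs} em cada ciclo. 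Com esses dois ingredientes, o fecho teto em $\lambda(C) = \lceil (\Lambda(C)+1)/2 \rceil$ fornece a cota desejada por uma conta de arredondamento idêntica em espírito àquela já feita no Lema~\ref{cap4:lemma:lb_graph_insertion}. Assim, em todos os casos concluímos que $\Delta c(\I, \tau) + \Delta \lambda(\I, \tau) \leq 2$.
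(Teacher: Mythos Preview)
A demonstração do artigo ocupa uma única frase: como uma transposição é um caso particular de {\it block interchange} (com segmentos adjacentes), o Lema~\ref{cap4:lemma:lb_graph_bi} aplica-se diretamente e fornece a cota $\Delta c(\I,\tau) + \Delta\lambda(\I,\tau) \leq 2$ sem nenhum trabalho adicional. Você observa essa relação logo no início do seu plano, mas não percebe que ela já encerra o argumento; todos os casos que você enumera são subcasos da análise feita no Lema~\ref{cap4:lemma:lb_graph_bi}, e não há nada a refazer.

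Além de redundante, a sua reanálise introduz um erro concreto: quando $\tau$ age em exatamente dois ciclos (duas arestas de origem em um ciclo, uma no outro), os vértices são rearranjados em dois novos ciclos e $\Delta c(\I,\tau) = 0$, não $-1$ nem $+1$ como você afirma. Para transposições vale sempre $\Delta c \in \{-2, 0, 2\}$ (Bafna e Pevzner~\cite{1998-bafna-pevzner}). Esse engano compromete as cotas que você anuncia para esse subcaso, embora não afete a conclusão final --- justamente porque o Lema~\ref{cap4:lemma:lb_graph_bi} já cobre tudo. A lição aqui é que, uma vez identificado que a operação em questão é um caso especial de outra já tratada, basta invocar o lema correspondente.
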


\begin{proof}
	Segue diretamente do Lema~\ref{cap4:lemma:lb_graph_bi} e do fato de que uma transposição é um caso particular de {\it block interchange} que troca a posição relativa de dois segmentos adjacentes.
\end{proof}

\begin{lemma}\label{cap4:lemma:lb_graph_reversal}
	Para qualquer reversão $\rho$ e instância $\I = (A, \iota^n)$, temos que $\Delta c(\I, \rho) + \Delta \lambda(\I, \rho) \leq 1$.
\end{lemma}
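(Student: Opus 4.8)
The plan is to track how a reversal acts on the labeled cycle graph $G(\I)$, bound $\Delta c(\I,\rho)$ and $\Delta\lambda(\I,\rho)$ separately, and then combine. First I would observe that a reversal $\rho(i,j)$ on the signed string $A$ reverses, inside $\pi^A$, the contiguous block of common elements lying between positions $i$ and $j$ --- say $\pi^A_a,\ldots,\pi^A_b$ --- together with the $\deletionElement$-elements among them, which stay between the same consecutive common elements. Hence $\rho$ changes only the two boundary source edges $e_a$ and $e_{b+1}$ of $G(\I)$: each internal source edge $e_k$ with $a<k\le b$ keeps its endpoints (in reversed order, with flipped signs) and keeps its label, and the destination edges and their labels are untouched because $\pi^\iota=\iota^n$ does not change. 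Up to the labels of $e_a$ and $e_{b+1}$, the transformation of $G(\I)$ is therefore exactly the one induced by the signed reversal $\rho(a,b)$ on the cycle graph of $\pi^A$; by the classical cycle-graph argument a reversal changes the number of cycles by at most one, so $\Delta c(\I,\rho)\le 1$, with $\Delta c(\I,\rho)=1$ only when $\rho$ splits a single cycle $C$ into two cycles $C_1$ and $C_2$.

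Next I would record two invariants that pin down the labels. The number of labeled source edges of $G(\I)$ is preserved by $\rho$: the internal source edges are unchanged, and a direct check of the positions of the $\deletionElement$-elements inside the reversed block shows that $e_a$ is labeled after $\rho$ exactly when $e_{b+1}$ was labeled before, and vice versa --- the reversal merely reverses those $\deletionElement$-elements, bijecting the gaps between consecutive common elements that contain a $\deletionElement$. The destination edges and their labels are, as noted, invariant. Consequently the only changes to the run structure of $G(\I)$ occur in the neighborhoods of $e_a$ and $e_{b+1}$, and I would argue from this locality together with the two invariants that $\Lambda(\I)=\sum_C\Lambda(C)$ cannot decrease when a cycle is split and decreases by at most two otherwise: each of the two reconnection points can coalesce at most one pair of runs, and the label swap on $\{e_a,e_{b+1}\}$ is neutral for the count of labeled source edges, so it cannot by itself destroy a run without creating or lengthening another.

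The lemma then follows by a short case analysis on $\Delta c(\I,\rho)\in\{-1,0,1\}$, using $\lambda(C)=\lceil(\Lambda(C)+1)/2\rceil$ for $\Lambda(C)>0$ (and $0$ otherwise), the superadditivity $\lceil x/2\rceil+\lceil y/2\rceil\ge\lceil(x+y)/2\rceil$, and the parity of $\Lambda$ on cycles (Lemma~\ref{cap4:lemma:even_runs}). If $\rho$ splits $C$ into $C_1,C_2$, then $\Lambda(C_1)+\Lambda(C_2)\ge\Lambda(C)$ gives $\lambda(C_1)+\lambda(C_2)\ge\lambda(C)$, so $\Delta\lambda(\I,\rho)\le 0$ and $\Delta c(\I,\rho)+\Delta\lambda(\I,\rho)\le 1$. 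If $\rho$ merges $C_1,C_2$ into $C'$, then $\Lambda(C')\ge\Lambda(C_1)+\Lambda(C_2)-2$, and inspecting the parities of $(\Lambda(C_1),\Lambda(C_2))$ gives $\lambda(C_1)+\lambda(C_2)-\lambda(C')\le 2$, hence $\Delta\lambda(\I,\rho)\le 2$ and $\Delta c(\I,\rho)+\Delta\lambda(\I,\rho)\le -1+2=1$. Finally, if $\Delta c(\I,\rho)=0$ (a reversal on two source edges of one cycle $C$ that does not split it, or one touching no labeled edge), then $C$ and its image have the same numbers of labeled source edges and of labeled destination edges, so by Lemma~\ref{cap4:lemma:even_runs} the drop in $\Lambda(C)$ translates into $\Delta\lambda(\I,\rho)\le 1$, giving $\Delta c(\I,\rho)+\Delta\lambda(\I,\rho)\le 1$ once more.

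The main obstacle is the run bookkeeping in the middle step: making rigorous that, in the merging and non-splitting cases, the combined effect of the two reconnection points and the label swap on $e_a,e_{b+1}$ lowers $\Lambda(\I)$ by at most two, and never when $\rho$ splits a cycle. I would handle this exactly as in the proofs of Lemmas~\ref{cap4:lemma:lb_graph_insertion} and~\ref{cap4:lemma:lb_graph_bi}, drawing the at most two affected stretches of the cycle(s) and tracking each run incident to $e_a$ or $e_{b+1}$; the key point to verify is that whenever relabeling $e_a$ or $e_{b+1}$ destroys a run, the companion relabeling forced by the invariance of the labeled-source-edge count simultaneously creates or extends one, so the label changes contribute $0$ net to $\Lambda$ and only the two reconnections can each remove at most one run.
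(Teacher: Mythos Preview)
Your overall strategy---case split on $\Delta c(\I,\rho)\in\{-1,0,1\}$ and bound $\Delta\lambda$ in each case---matches the paper's. But the two ``invariants'' you build the run bookkeeping on are \emph{false}, and this is a genuine gap.

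Concretely, take
\[
A=\ldots\ \pi^A_{a-1}\ \deletionElement\ \deletionElement\ \pi^A_a\ \ldots\ \pi^A_b\ \pi^A_{b+1}\ \ldots
\]
so that $e_a$ is labeled (two $\deletionElement$'s) and $e_{b+1}$ is clean. Apply $\rho(i,j)$ with $A_i$ equal to the \emph{second} $\deletionElement$ and $A_j=\pi^A_b$. After the reversal,
\[
A'=\ldots\ \pi^A_{a-1}\ \deletionElement\ (-\pi^A_b)\ \ldots\ (-\pi^A_a)\ \deletionElement\ \pi^A_{b+1}\ \ldots,
\]
so \emph{both} boundary source edges are now labeled. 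This refutes both the swap claim (``$e_a$ labeled after $\iff$ $e_{b+1}$ labeled before'') and the count-preservation claim. The point you missed is that $\rho(i,j)$ acts on $A$, not on $\pi^A$: the interval $[i,j]$ may slice through a block of $\deletionElement$'s at either end, so the $\deletionElement$'s contributing to $e_a$ and $e_{b+1}$ are \emph{not} simply swapped. What is invariant is the total number of $\deletionElement$'s, which only tells you the labeled-source-edge count among $\{e_a,e_{b+1}\}$ changes by at most one---not by zero.

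Because your derivation of the $\Lambda$-bounds (``$\Lambda$ cannot decrease when a cycle is split, and decreases by at most two otherwise'') rests on these invariants, it is not justified; in particular, in the split case the paper does \emph{not} claim $\Lambda(C_1)+\Lambda(C_2)\ge\Lambda(C)$, and in fact allows two deletion runs to coalesce. The paper's proof sidesteps label tracking entirely: for each of the three cases it directly identifies the best-case run merging, writes down the resulting $\Lambda$'s, and computes $\lambda$ via the formula $\lambda(C)=\lceil(\Lambda(C)+1)/2\rceil$ together with the parity lemma. If you want to salvage your approach, you would need to redo the run bookkeeping allowing the labeled-source-edge count on the two boundary edges to change by $\pm 1$, and check that even then the claimed $\Delta\lambda$ bounds survive; but at that point the argument is no simpler than the paper's direct best-case analysis.
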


\begin{proof}
	Bafna e Pevzner~\cite{1996-bafna-pevzner} mostraram que uma reversão $\rho$ modifica um grafo de ciclos das seguintes formas:
	
	\begin{itemize}
		\item Se a reversão $\rho$ afeta duas arestas de origem de dois ciclos $C_1$ e $C_2$, então $\rho$ une esses dois ciclos em um ciclo $C'$ e, portanto, $\Delta c(\I, \rho) = -1$.
		\item Se a reversão $\rho$ afeta duas arestas de um mesmo ciclo $C$, então ou (i) $\rho$ cria um novo ciclo $C'$ com os mesmos vértices que $C$ ou (ii) $\rho$ transforma $C$ em dois ciclos $C'_1$ e $C'_2$. No subcaso (i) temos $\Delta c(\I, \rho) = 0$ e no subcaso (ii) temos $\Delta c(\I, \rho) = 1$.
	\end{itemize}

	Suponha que $\rho$ afeta arestas de origem de dois ciclos $C_1$ e $C_2$ unindo esses dois ciclos em um ciclo $C'$. No melhor cenário, dois {\it runs} de deleção e dois {\it runs} de inserção de $C_1$ e $C_2$ são unidos em $C'$. Assim, $\Lambda(C') = \Lambda(C_1) + \Lambda(C_2) - 2$. Como nesse cenário existem pelo menos um {\it run} de inserção e um {\it run} de deleção em cada um dos ciclos $C_1, C_2,$ e $C'$, temos que:
	\begin{align*}
		\lambda(C') &= \Bigl\lceil \frac{\Lambda(C_1) + \Lambda(C_2) - 2 + 1}{2} \Bigr\rceil \\
								&= \Bigl\lceil \frac{\Lambda(C_1) + \Lambda(C_2) - 1}{2} \Bigr\rceil \\
								&= \frac{\Lambda(C_1) + \Lambda(C_2)}{2} \\
								&= \Bigl\lceil \frac{\Lambda(C_1) + 1}{2} \Bigr\rceil - 1 + \Bigl\lceil \frac{\Lambda(C_2) + 1}{2} \Bigr\rceil - 1 \\
								&= \lambda(C_1) + \lambda(C_2) - 2.
	\end{align*}

	Lembre-se que $\Lambda(C)$ é par sempre que $\Lambda(C) \geq 2$ (Lema~\ref{cap4:lemma:even_runs}). Sendo assim, temos $\Delta \lambda(\I, \rho) = 2$ e $\Delta c(\I, \rho) + \Delta \lambda(\I, \rho) = 1$.

	Suponha que $\rho$ afeta duas arestas de um mesmo ciclo $C$ e transforma esse ciclo em um ciclo $C'$. No melhor cenário, dois {\it runs} de deleção e dois {\it runs} de inserção são unidos e, portanto, $\Lambda(C') = \Lambda(C) - 2$ e $\lambda(C') = \lambda(C) - 1$. Sendo assim, $\Delta c(\I, \rho) + \Delta \lambda(\I, \rho) = 1$.

	Por último, suponha que $\rho$ afeta duas arestas de um mesmo ciclo $C$ e transforma esse ciclo em dois ciclos $C'_1$ e $C'_2$. No melhor cenário, dois {\it runs} de deleção são unidos, mas note que cada ciclo $C'_1$ e $C'_2$ possui pelo menos um {\it run} de inserção. Suponha, sem perda de generalidade, que os {\it runs} de deleção unidos estão em $C'_1$. Sendo assim, $\Lambda(C'_1) = X$, tal que $2 \leq X < \Lambda(C)$, e $\Lambda(C'_2) = \min(\Lambda(C) - X - 2, 1)$. 

	Note que $X$ é par. Suponha que $\min(\Lambda(C) - X - 2, 1) = \Lambda(C) - X - 2$. Como nesse cenário $C$ possui pelo menos dois {\it runs} de inserção e dois {\it runs} de deleção, temos que:
	\begin{align*}
		\lambda(C'_1) + \lambda(C'_2) &= \Bigl\lceil \frac{\Lambda(C'_1) + 1}{2} \Bigr\rceil + \Bigl\lceil \frac{\Lambda(C'_2) + 1}{2} \Bigr\rceil \\
		&= \Bigl\lceil \frac{X + 1}{2} \Bigr\rceil + \Bigl\lceil \frac{\Lambda(C) - X - 2 + 1}{2} \Bigr\rceil \\
		&=  \frac{X + 2}{2} + \Bigl\lceil \frac{\Lambda(C) - X - 1}{2} \Bigr\rceil \\
		&=  \Bigl\lceil \frac{X + 2 + \Lambda(C) - X - 1}{2} \Bigr\rceil \\
		&=  \Bigl\lceil \frac{X + 2 + \Lambda(C) - X - 1}{2} \Bigr\rceil \\
		&=  \Bigl\lceil \frac{\Lambda(C) + 1}{2} \Bigr\rceil\\
		&= \lambda(C).
	\end{align*}

	Se $\min(\Lambda(C) - X - 2, 1) = 1$, então $X = \Lambda(C) - 2$ e temos que: 
	\begin{align*}
		\lambda(C'_1) + \lambda(C'_2) &= \Bigl\lceil \frac{\Lambda(C'_1) + 1}{2} \Bigr\rceil + \Bigl\lceil \frac{\Lambda(C'_2) + 1}{2} \Bigr\rceil \\
		&= \frac{X + 2}{2} + \Bigl\lceil \frac{1 + 1}{2} \Bigr\rceil \\
		&= \frac{(\Lambda(C) - 2) + 2}{2} + 1 \\
		&= \frac{\Lambda(C)}{2} + 1 \\
		&= \lambda(C).
	\end{align*}

	Portanto, temos que $\Delta c(\I, \rho) + \Delta \lambda(\I, \rho) = 1$.
\end{proof}

Com esses resultados, podemos definir limitantes inferiores para o valor de $d_{\M}(\I)$, considerando $\M = \{\Mindel_{\tau}, \Mindel_{\BI}, \Mindel_{\rho,\tau}, \Mindel_{\rho,\BI}\}$. 

\begin{lemma}\label{cap4:lemma:lb_distance_graph}
	Para qualquer instância $\I = (A, \iota^n)$, temos que
	\begin{align*}
	d_{\Mindel_{\BI}}(\I) \geq \left\lceil \frac{|\pi^A| + 1 - c(\I) + \lambda(\I)}{2} \right\rceil, \\
	d_{\Mindel_{\tau}}(\I) \geq \left\lceil \frac{|\pi^A| + 1 - c(\I) + \lambda(\I)}{2} \right\rceil, \\
	d_{\Mindel_{\rho,\BI}}(\I) \geq \left\lceil \frac{|\pi^A| + 1 - c(\I) + \lambda(\I)}{2} \right\rceil, \\
	d_{\Mindel_{\rho,\tau}}(\I) \geq \left\lceil \frac{|\pi^A| + 1 - c(\I) + \lambda(\I)}{2} \right\rceil.
	\end{align*}
\end{lemma}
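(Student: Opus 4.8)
The plan is to use the quantity $V(\I) = |\pi^A| + 1 - c(\I) + \lambda(\I)$ as a progress measure and to show that each rearrangement of the models under consideration decreases it by at most $2$. First I would note that, directly from the definitions of $\Delta c(\I, \beta)$ and $\Delta \lambda(\I, \beta)$, for any operation $\beta$ with $A \comp \beta = A'$ and $\I' = (A', \iota^n)$ one has $\Delta c(\I, \beta) + \Delta \lambda(\I, \beta) = V(\I) - V(\I')$; thus the left-hand sides of Lemmas~\ref{cap4:lemma:lb_graph_deletion}, \ref{cap4:lemma:lb_graph_insertion}, \ref{cap4:lemma:lb_graph_bi}, \ref{cap4:lemma:change_score_transp} and \ref{cap4:lemma:lb_graph_reversal} are exactly the drop of $V$ caused by $\beta$. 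Second, I would recall from the discussion preceding the statement that $V(\I') = 0$ if and only if $A' = \iota^n$, and that $V(\I')$ is a nonnegative integer for every instance (since $c(\I') \le |\pi^{A'}| + 1$ and $\lambda(\I') \ge 0$).

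The main step is a telescoping argument over a sorting sequence. Fix the model $\M \in \{\Mindel_{\tau}, \Mindel_{\BI}, \Mindel_{\rho,\tau}, \Mindel_{\rho,\BI}\}$ and let $S = (\beta_1, \ldots, \beta_k)$ be any sequence of operations of $\M$ with $A \comp S = \iota^n$. Write $\I^{(0)} = \I$ and let $\I^{(t)}$ be the instance obtained after applying $\beta_1, \ldots, \beta_t$, so that $\I^{(k)} = (\iota^n, \iota^n)$ and $V(\I^{(k)}) = 0$. Then
\[
V(\I) = V(\I^{(0)}) - V(\I^{(k)}) = \sum_{t=1}^{k}\bigl(V(\I^{(t-1)}) - V(\I^{(t)})\bigr).
\]
Each $\beta_t$ is a reversal, a transposition, a block interchange, an insertion, or a deletion (whichever is allowed in $\M$), so by Lemmas~\ref{cap4:lemma:lb_graph_deletion}, \ref{cap4:lemma:lb_graph_insertion}, \ref{cap4:lemma:lb_graph_bi}, \ref{cap4:lemma:change_score_transp} and \ref{cap4:lemma:lb_graph_reversal} every summand is at most $2$ (and at most $1$ for reversals, insertions and deletions). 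Hence $V(\I) \le 2k$, and since $k \in \mathbb{Z}$ this gives $k \ge \lceil V(\I)/2 \rceil$. Taking the minimum over all such $S$ yields $d_{\M}(\I) \ge \lceil (|\pi^A| + 1 - c(\I) + \lambda(\I))/2 \rceil$, which is the claimed inequality for all four models simultaneously.

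The point that needs care is that the number of source edges of the graph is not invariant along $S$: an insertion adds source edges, destination edges and vertices, so the term $|\pi^A| + 1$ changes from step to step. This is exactly why one cannot bound the variation of $c(\I)$ or of $\lambda(\I)$ in isolation, and why $\Delta c$ and $\Delta \lambda$ were defined relative to $|\pi^A| + 1$. I expect the only real work to be checking that $V(\I^{(t-1)}) - V(\I^{(t)})$ literally equals $\Delta c(\I^{(t-1)}, \beta_t) + \Delta \lambda(\I^{(t-1)}, \beta_t)$, so that the cited per-operation bounds apply verbatim; once this identity is in place, the lower bound follows immediately and uniformly for the four models.
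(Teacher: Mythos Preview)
Your proof is correct and follows essentially the same approach as the paper: define the potential $V(\I)=|\pi^A|+1-c(\I)+\lambda(\I)$, observe it vanishes exactly when $A=\iota^n$, and use Lemmas~\ref{cap4:lemma:lb_graph_deletion}--\ref{cap4:lemma:lb_graph_reversal} to bound its drop per operation by $2$. Your write-up is a bit more explicit than the paper's (you spell out the telescoping sum and check that $\Delta c+\Delta\lambda$ literally equals the drop in $V$, absorbing the variation of $|\pi^A|$ under insertions), but the argument is the same.
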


\begin{proof}
Considere o modelo $\Mindel_{\BI}$. Como $|\pi^{A'}| + 1 - c(A', \iota^n) + \lambda(A', \iota^n) = 0$ somente se $A' = \iota^n$, qualquer sequência de rearranjos $S$ que transforma $A$ em $\iota^n$ deve tornar o valor $|\pi^A| + 1 - c(A, \iota^n) + \lambda(A, \iota^n)$ igual a zero. Pelos lemas~\ref{cap4:lemma:lb_graph_deletion}, \ref{cap4:lemma:lb_graph_insertion} e \ref{cap4:lemma:lb_graph_bi}, qualquer rearranjo $\beta$ em $\Mindel_{\BI}$ satisfaz $\Delta c(\I, \beta) + \Delta \lambda(\I, \beta) \leq 2$ e, portanto, $|S| \geq \left\lceil \frac{|\pi^A| + 1 - c(\I) + \lambda(\I)}{2} \right\rceil$. 

A demonstração é similar para os outros modelos e também considera os resultados dos lemas~\ref{cap4:lemma:change_score_transp} e \ref{cap4:lemma:lb_graph_reversal}.
\end{proof}

\begin{lemma}\label{cap4:lemma:lb_graph_transp}
	Para qualquer instância $\I = (A, \iota^n)$, temos que
	\begin{align*}
		d_{\Mindel_{\tau}}(\I) \geq \left \lceil \frac{|\pi^A| + 1 - \cclean(\I) + \lambda_{\insertion}(\I)}{2} \right \rceil, \\
		d_{\Mindel_{\rho,\tau}}(\I) \geq \left \lceil \frac{|\pi^A| + 1 - \cclean(\I) + \lambda_{\insertion}(\I)}{2} \right \rceil.
	\end{align*}
\end{lemma}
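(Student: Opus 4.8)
The plan is to reduce the two new lower bounds to the already-proven Lemma~\ref{cap4:lemma:lb_distance_graph} together with the identity established in Remark~\ref{cap4:remark:equivalencia_lambda_soma}. Recall that the remark showed $|\pi^A| + 1 - c(\I) + \lambda(\I) = |\pi^A| + 1 - \cclean(\I) + \lambda_{\insertion}(\I)$; hence the right-hand side of each inequality in the statement equals the corresponding right-hand side in Lemma~\ref{cap4:lemma:lb_distance_graph} for the models $\Mindel_{\tau}$ and $\Mindel_{\rho,\tau}$. Therefore the two claimed bounds are literally the same inequalities already proven, just rewritten with the substitution coming from the remark.

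Concretely, first I would invoke Lemma~\ref{cap4:lemma:lb_distance_graph} to get $d_{\Mindel_{\tau}}(\I) \geq \left\lceil \frac{|\pi^A| + 1 - c(\I) + \lambda(\I)}{2} \right\rceil$ and similarly for $\Mindel_{\rho,\tau}$. Then I would apply the chain of equalities in Remark~\ref{cap4:remark:equivalencia_lambda_soma}, which rests on two facts: that $\lambda(\I) = \lambda_{\insertion}(\I) + \crotulado(\I)$ (obtained by splitting the sum over clean and labeled cycles and using Remark~\ref{cap4:remark:equivalencia_lambda_ciclo}), and that $c(\I) = \cclean(\I) + \crotulado(\I)$ by definition. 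Adding and subtracting $\crotulado(\I)$ gives $|\pi^A| + 1 - c(\I) + \lambda(\I) = |\pi^A| + 1 - \cclean(\I) + \lambda_{\insertion}(\I)$, so the ceiling expressions coincide and the bound transfers verbatim.

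The only subtlety worth checking is that Remark~\ref{cap4:remark:equivalencia_lambda_ciclo} (and hence the remark that depends on it) is stated for the labeled cycle graph $G(\I)$ of a classical instance, where clean/labeled status depends only on source edges; this is exactly the setting of the current section, so the identity applies without modification. I do not anticipate a real obstacle here: the proof is essentially a one-line substitution, and writing "Segue diretamente do Lema~\ref{cap4:lemma:lb_distance_graph} e da Observação~\ref{cap4:remark:equivalencia_lambda_soma}" captures it. If more detail is desired, one can reproduce the short algebraic chain from the remark inline. The main thing to be careful about is not to over-claim for the other two models ($\Mindel_{\BI}$ and $\Mindel_{\rho,\BI}$): the $\lambda_{\insertion}$ version is stated here only for the two transposition-based models, presumably because the block-interchange bounds in terms of $\cclean$ and $\lambda_{\insertion}$ are handled separately or are not needed, so the proof should restrict attention to $\Mindel_{\tau}$ and $\Mindel_{\rho,\tau}$ exactly as the statement does.

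\begin{proof}
	Segue diretamente do Lema~\ref{cap4:lemma:lb_distance_graph} e da Observação~\ref{cap4:remark:equivalencia_lambda_soma}. De fato, pelo Lema~\ref{cap4:lemma:lb_distance_graph}, temos $d_{\Mindel_{\tau}}(\I) \geq \left\lceil \frac{|\pi^A| + 1 - c(\I) + \lambda(\I)}{2} \right\rceil$ e $d_{\Mindel_{\rho,\tau}}(\I) \geq \left\lceil \frac{|\pi^A| + 1 - c(\I) + \lambda(\I)}{2} \right\rceil$. Pela Observação~\ref{cap4:remark:equivalencia_lambda_soma},
	\begin{align*}
		|\pi^A| + 1 - c(\I) + \lambda(\I) = |\pi^A| + 1 - \cclean(\I) + \lambda_{\insertion}(\I),
	\end{align*}
	de modo que
	\begin{align*}
		d_{\Mindel_{\tau}}(\I) &\geq \left \lceil \frac{|\pi^A| + 1 - \cclean(\I) + \lambda_{\insertion}(\I)}{2} \right \rceil, \\
		d_{\Mindel_{\rho,\tau}}(\I) &\geq \left \lceil \frac{|\pi^A| + 1 - \cclean(\I) + \lambda_{\insertion}(\I)}{2} \right \rceil,
	\end{align*}
	como queríamos demonstrar.
\end{proof}
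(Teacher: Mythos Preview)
Your proof is correct and follows essentially the same approach as the paper, which simply states that the lemma follows directly from Lema~\ref{cap4:lemma:lb_distance_graph} together with Observações~\ref{cap4:remark:equivalencia_lambda_ciclo} and~\ref{cap4:remark:equivalencia_lambda_soma}. Your version is slightly more detailed in spelling out the substitution, but the idea is identical.
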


\begin{proof}
	Segue diretamente do Lema~\ref{cap4:lemma:lb_distance_graph} e das observações \ref{cap4:remark:equivalencia_lambda_ciclo} e \ref{cap4:remark:equivalencia_lambda_soma}.
\end{proof}

\subsection{Algoritmos de $2$-Aproximação para Modelos com Block Interchanges}

Nesta seção, apresentamos algoritmos de $2$-aproximação para o problema de Distância de Rearranjos considerando os modelos $\Mindel_{\BI}$ e $\Mindel_{\rho,\BI}$. Os próximos lemas apresentam operações que diminuem o valor de $\lambda(\I)$ ou que aumentam o número de ciclos, dependendo de características do grafo de ciclos rotulado $G(\I)$. 

\begin{lemma}\label{cap4:lemma:insertion_remove_run}
	Para qualquer instância $\I = (A, \iota^n)$, se $G(\I)$ possui pelo menos um {\it run} de inserção, então existe uma inserção $\insertion$ com  $\Delta c(\I, \insertion) + \Delta \lambda(\I, \insertion) = 1$.
\end{lemma}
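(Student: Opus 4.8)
O plano é construir, a partir de um {\it run} de inserção $r$ de $G(\I)$, uma única inserção $\insertion$ que \emph{remove} $r$ (torna limpas todas as arestas de destino rotuladas de $r$) e que, além disso, cria exatamente um ciclo unitário limpo para cada elemento inserido, sem unir o ciclo de $r$ a nenhum outro ciclo. Dessas duas propriedades seguirá imediatamente que $\Delta c(\I, \insertion) = 0$ e $\Delta \lambda(\I, \insertion) = 1$, logo $\Delta c(\I, \insertion) + \Delta \lambda(\I, \insertion) = 1$. Observo inicialmente que $G(\I)$ possui um {\it run} de inserção se, e somente se, possui alguma aresta de destino rotulada, o que ocorre exatamente quando $\Sigma_{\iota^n} \setminus \Sigma_A \neq \emptyset$; portanto há sempre elementos faltantes disponíveis para inserir.

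Primeiro eu fixaria um {\it run} de inserção $r$ contido em um ciclo $C$ e o escreveria na ordem em que suas arestas são percorridas em $C$, digamos $r = (e'_{j_1}, e_{k_1}, e'_{j_2}, e_{k_2}, \ldots, e_{k_{t-1}}, e'_{j_t})$, onde $e'_{j_1}, \ldots, e'_{j_t}$ são arestas de destino rotuladas e $e_{k_1}, \ldots, e_{k_{t-1}}$ são arestas de origem limpas. Pela definição de rótulo de aresta de destino, $\ell(e'_{j_\ell})$ é o menor elemento de um bloco maximal $B_\ell$ de elementos consecutivos de $\Sigma_{\iota^n} \setminus \Sigma_A$; como arestas de destino distintas têm rótulos distintos, os blocos $B_1, \ldots, B_t$ são dois a dois disjuntos, e cada $B_\ell$ ocupa o espaço entre os dois elementos comuns que são extremos de $e'_{j_\ell}$ em $\iota^n$.

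O passo central — e o que eu espero ser o mais trabalhoso — seria exibir uma posição $p$ em $A$ e uma ordenação da string $\sigma$ obtida concatenando os blocos $B_1, \ldots, B_t$, de modo que a inserção $\insertion(p, \sigma)$ tenha os seguintes efeitos: (i) cada aresta $e'_{j_\ell}$ é substituída por um caminho de arestas de destino limpas, de forma que o {\it run} $r$ é removido; (ii) toda nova aresta de origem criada pela inserção é limpa. Para obter $p$ e a ordem de $\sigma$, percorreria o {\it run} $r$: como as arestas de origem $e_{k_1}, \ldots, e_{k_{t-1}}$ são limpas, os pares de elementos comuns que elas representam são adjacentes em $A$, e isso permite localizar um único ponto de $A$ no qual todos os blocos podem ser inseridos consecutivamente, cada $B_\ell$ imediatamente entre os extremos comuns de $e'_{j_\ell}$, na ordem ditada pelo percurso de $r$. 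Essa parte exige acompanhar com cuidado como as arestas em torno de $r$ se relacionam com as adjacências de $A$ e de $\iota^n$, e também confirmar que o ciclo $C$ não é partido em mais de um ciclo não trivial; é aí que mora a dificuldade técnica.

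Feito isso, a verificação das consequências é rotineira. Por (i) e (ii), a inserção adiciona $|\sigma| = \sum_{\ell=1}^{t} |B_\ell|$ arestas de origem limpas e $|\sigma|$ arestas de destino limpas; cada elemento inserido forma um ciclo unitário limpo e o ciclo $C$ é transformado em um único ciclo, sem afetar os demais. Logo $c(A \comp \insertion, \iota^n) = c(\I) + |\sigma|$ e $|\pi^A \comp \insertion| = |\pi^A| + |\sigma|$, donde $\Delta c(\I, \insertion) = 0$. Quanto ao potencial de {\it indels}: como $r$ é removido de $C$, se $\Lambda(C) = 1$ então $C$ se torna limpo e sua contribuição cai de $\lambda(C) = 1$ para $0$; se $\Lambda(C) > 1$, então os {\it runs} vizinhos de $r$ se fundem e $\lambda(C)$ diminui em $1$, exatamente como observado logo antes do enunciado. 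Os ciclos unitários criados são limpos e contribuem $0$, e os demais ciclos permanecem inalterados, portanto $\Delta \lambda(\I, \insertion) = 1$. Concluímos que $\Delta c(\I, \insertion) + \Delta \lambda(\I, \insertion) = 0 + 1 = 1$.
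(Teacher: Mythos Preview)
Your overall strategy --- pick an insertion run $r$ in some cycle $C$, insert at a single position one element for each labeled target edge of $r$, and then check that $\Delta c = 0$ and $\Delta\lambda = 1$ --- is exactly the paper's approach. The gap is in your description of the resulting cycle structure, which is where the argument actually lives.

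You write that ``cada elemento inserido forma um ciclo unitário limpo e o ciclo $C$ é transformado em um único ciclo''. This is not what happens. Inserting the string $\sigma = (x_1,\ldots,x_k)$ at a single position of $A$ places all new origin edges in one spot, but the new \emph{target} edges attach each $\pm x_i$ to the endpoints $v_{o_i}$ and $v_{o_i+1}$ of the $i$-th labeled target edge of $r$, which are scattered along the run. So $C$ is not kept intact: it is split. Concretely (this is what the paper verifies), if you insert after the element of $A$ corresponding to the first vertex $v_1$ of $r$, you get a unit cycle on $(v_1,-x_1)$, then for each $1\le i<k$ a cycle whose vertices are $+x_i, v_{o_i+1},\ldots,v_{o_{i+1}}, -x_{i+1}$ (a segment of the old run closed off by two new vertices), and finally $+x_k$ is attached to the remnant of $C$. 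That is $k$ new cycles plus the remnant, so $\Delta c(\I,\insertion)=0$, but not for the reason you give. For $\Delta\lambda(\I,\insertion)=1$ you need that each of those $k$ new cycles has no runs; this holds because all their target edges are the newly cleaned ones and all their origin edges are either brand new (hence clean) or old origin edges of $r$ (clean by the definition of insertion run) --- not because they are unit cycles.

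Two smaller points. First, in this paper's representation each labeled target edge corresponds to exactly one missing element of $\iota^n$ (maximal exclusive segments of $\G_d$ are collapsed to a single symbol), so your ``blocks'' $B_\ell$ always have size one; you should say this rather than treat $|B_\ell|>1$ as a live case. Second, your sentence about the clean origin edges of $r$ letting you ``localizar um único ponto de $A$ no qual todos os blocos podem ser inseridos \ldots\ cada $B_\ell$ imediatamente entre os extremos comuns de $e'_{j_\ell}$'' conflates adjacencies in $A$ (origin edges) with adjacencies in $\iota^n$ (target edges); inserting at one point of $A$ cannot place each $x_\ell$ between the $\iota^n$-extremes of its own target edge. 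The paper simply inserts after $v_1$ and then checks the cycle decomposition directly.
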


\begin{proof}
	Considere o {\it run} de inserção $(v_1, v_2, \ldots, v_{j})$ de um ciclo $C$, tal que $v_1$ possui o mesmo sinal que o seu elemento correspondente em $A$, ou seja, existe um elemento $A_i$ em $A$ tal que $v_1 = A_i$. Note que $(v_1, v_2)$ é uma aresta de destino rotulada pela definição de {\it runs}. Sejam $o_1, o_2, \ldots, o_k$ valores inteiros tal que $(v_{o_i}, v_{o_i+1})$ é a $i$-ésima aresta de destino rotulada desse {\it run}.

	Construímos $\sigma = (x_1, x_2, \ldots, x_k)$ da seguinte forma: para cada $1 \leq i \leq k$, se $v_{o_i+1}$ possui sinal ``$-$'', então $x_i = \ell((v_{o_i}, v_{o_i+1}))$; caso contrário, $x_i = -\ell((v_{o_i}, v_{o_i+1}))$. A inserção de $\sigma$ após o elemento de $A$ correspondente ao vértice $v_1$ remove o {\it run} e adiciona $k$ ciclos no grafo. 

	Um ciclo unitário é criado com os vértices $(v_1, -x_1)$. Para cada elemento $x_i$, com $1 \leq i < k$, existe um ciclo $(+x_i, v_{o_i+1}, v_{o_i+2}, \ldots, v_{o_{i+1}}, -x_{i+1}, +x_{i})$. O último vértice $+x_k$ pertence ao que sobrou do ciclo $C$ ou $+x_k$ pertence a um ciclo unitário, no caso em que todas as arestas de destino de $C$ pertencem ao {\it run} que foi removido pela inserção. Um exemplo dessa operação é apresentado na Figura~\ref{cap4:fig:insertion_remove_run}. 

	Se $\Lambda(C) \leq 2$, então a remoção de um {\it run} de $C$ reduz tanto o número de {\it runs} quanto o potencial de {\it indel} em um. Caso contrário, ao remover o {\it run} de inserção, dois {\it runs} de deleção são unidos. Nesse caso, o número de {\it runs} de $C$ diminui em dois e o potencial de {\it indel} diminui em um. Como a inserção adiciona $k$ elementos em $A$ e $k$ ciclos limpos no grafo, temos que $\Delta c(\I, \insertion) = 0$. Portanto, $\Delta c(\I, \insertion) + \Delta \lambda(\I, \insertion) = 1$.
\end{proof}

\begin{figure}[t]
\centering
\begin{tikzpicture}[scale=0.8]
\scriptsize
\begin{scope}[every node/.style={inner sep=0.4mm, draw, circle, minimum size = 0pt}]
    \node[label=below:$+0$] (p0) at (3,0) {};
    \node[label=below:$-7$] (m7) at (4,0) {};
    \node[label=below:$+7$] (p7) at (5,0) {};
    \node[label=below:$+5$] (p5) at (6,0) {};
    \node[label=below:$-5$] (m5) at (7,0) {};
    \node[label=below:$+4$] (p4) at (8,0) {};
    \node[label=below:$-4$] (m4) at (9,0) {};
    \node[label=below:$-3$] (m3) at (10,0) {};
    \node[label=below:$+3$] (p3) at (11,0) {};
    \node[label=below:$+2$] (p2) at (12,0) {};
    \node[label=below:$-2$] (m2) at (13,0) {};
    \node[label=below:$-9$] (m9) at (14,0) {};
\end{scope}

\begin{scope}[>={Stealth[black]},
              every edge/.style={draw=black}]
    \path [-] (p7) edge node [black, pos=0.5, sloped, above] {} (p5);
    \path [-] (m5) edge node [black, pos=0.5, sloped, above] {} (p4);
    \path [-] (m4) edge node [black, pos=0.5, sloped, above] {} (m3);
    \path [-] (p3) edge node [black, pos=0.5, sloped, above] {} (p2);
    \path [-] (m2) edge node [black, pos=0.5, sloped, above] {} (m9);
\end{scope}

\begin{scope}[>={Stealth[black]},
              dashed]
    \path [-] (p0) edge node [black, pos=0.5, sloped, above, yshift=-0.05cm] {$\deletionElement$} (m7);

\end{scope}

\begin{scope}[>={Stealth[black]},
              every edge/.style={draw=black}]
    \path [-] (p2) edge  [bend right=50] (m3);
    \path [-] (p3) edge  [bend right=50] (m4);
    \path [-] (p4) edge  [bend right=50] (m5);
\end{scope}

\begin{scope}[>={Stealth[black]},
              dashed]
    \path [-] (p0) edge  [bend left=40] node [black, pos=0.5, sloped, above] {$1$} (m2);
    \path [-] (p5) edge  [bend right=80] node [black, pos=0.5, sloped, above] {$6$} (m7);
    \path [-] (p7) edge  [bend left=40] node [black, pos=0.5, sloped, above] {$8$} (m9);
\end{scope}

\begin{scope}[every node/.style={inner sep=0.4mm, draw, circle, minimum size = 0pt}]
    \node[label=below:$+0$] (p0) at (0,-4) {};
    \node[label=below:$-1$] (m1) at (1,-4) {};
    \node[label=below:$+1$] (p1) at (2,-4) {};
    \node[label=below:$+8$] (p8) at (3,-4) {};
    \node[label=below:$-8$] (m8) at (4,-4) {};
    \node[label=below:$-6$] (m6) at (5,-4) {};
    \node[label=below:$+6$] (p6) at (6,-4) {};
    \node[label=below:$-7$] (m7) at (7,-4) {};
    \node[label=below:$+7$] (p7) at (8,-4) {};
    \node[label=below:$+5$] (p5) at (9,-4) {};
    \node[label=below:$-5$] (m5) at (10,-4) {};
    \node[label=below:$+4$] (p4) at (11,-4) {};
    \node[label=below:$-4$] (m4) at (12,-4) {};
    \node[label=below:$-3$] (m3) at (13,-4) {};
    \node[label=below:$+3$] (p3) at (14,-4) {};
    \node[label=below:$+2$] (p2) at (15,-4) {};
    \node[label=below:$-2$] (m2) at (16,-4) {};
    \node[label=below:$-9$] (m9) at (17,-4) {};
\end{scope}

\begin{scope}[>={Stealth[black]},
              every edge/.style={draw=black}]
    \path [-] (p0) edge node [black, pos=0.5, sloped, above] {} (m1);
    \path [-] (p1) edge node [black, pos=0.5, sloped, above] {} (p8);
    \path [-] (m8) edge node [black, pos=0.5, sloped, above] {} (m6);
    \path [-] (p7) edge node [black, pos=0.5, sloped, above] {} (p5);
    \path [-] (m5) edge node [black, pos=0.5, sloped, above] {} (p4);
    \path [-] (m4) edge node [black, pos=0.5, sloped, above] {} (m3);
    \path [-] (p3) edge node [black, pos=0.5, sloped, above] {} (p2);
    \path [-] (m2) edge node [black, pos=0.5, sloped, above] {} (m9);
\end{scope}

\begin{scope}[>={Stealth[black]},
              dashed]
    \path [-] (p6) edge node [black, pos=0.5, sloped, above, yshift=-0.05cm] {$\deletionElement$} (m7);
\end{scope}

\begin{scope}[>={Stealth[black]},
              every edge/.style={draw=black}]
    \path [-] (p2) edge  [bend right=50] (m3);
    \path [-] (p3) edge  [bend right=50] (m4);
    \path [-] (p4) edge  [bend right=80] (m5);
\end{scope}

\begin{scope}[>={Stealth[black]},
              every edge/.style={draw=black}]
    \path [-] (p0) edge  [bend left=80] node [black, pos=0.5, sloped, above] {} (m1);
    \path [-] (p1) edge  [bend left=30] node [black, pos=0.5, sloped, above] {} (m2);
    \path [-] (p5) edge  [bend right=50] node [black, pos=0.5, sloped, above] {} (m6);
    \path [-] (p7) edge  [bend right=50] node [black, pos=0.5, sloped, above] {} (m8);
    \path [-] (p8) edge  [bend left=30] node [black, pos=0.5, sloped, above] {} (m9);
    \path [-] (p6) edge  [bend left=80] node [black, pos=0.5, sloped, above] {} (m7);
\end{scope}
\end{tikzpicture}

\caption{Exemplo de uma inserção que remove um {\it run} de um ciclo. Neste exemplo, temos o {\it run} de inserção $(+0, -2, -9, +7, +5, -7)$. A inserção de $\sigma = ({+1}~{-8}~{+6})$ no início da string do genoma origem remove esse {\it run} e cria três novos ciclos.
\label{cap4:fig:insertion_remove_run}}
\end{figure}
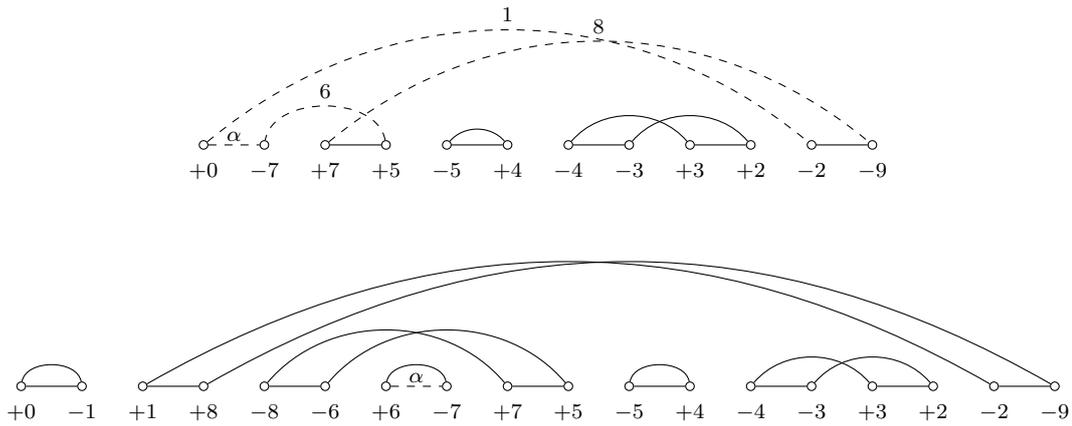

\begin{lemma}\label{cap4:lemma:block_interchange_add_cycles}
	Para qualquer instância $\I = (A, \iota^n)$, se $|\pi^A| + 1 - c(\I) > 0$, $G(\I)$ só possui arestas de destino limpas e $G(\I)$ não possui ciclos divergentes, então existe {\it block interchange} $\BI$ tal que $\Delta c(\I, \BI) + \Delta \lambda(\I, \BI) \geq 1$.
\end{lemma}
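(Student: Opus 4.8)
O plano é reduzir ao caso de permutações e aplicar a análise de Christie~\cite{1996-christie} para {\it block interchanges}. Como $G(\I)$ não possui ciclos divergentes, $A$ não tem elemento comum com sinal ``$-$'', logo a subsequência $\pi^A$ dos elementos comuns de $A$ é uma permutação sem sinais cujo grafo de ciclos (com {\it breakpoints} de transposição) tem a mesma decomposição em ciclos de $G(\I)$, porém com algumas arestas de origem rotuladas com $\deletionElement$. A hipótese $|\pi^A| + 1 - c(\I) > 0$ diz que essa permutação não está ordenada, então $G(\I)$ tem um ciclo $C$ com $|C| = m \geq 2$. Ademais, como toda aresta de destino de $G(\I)$ é limpa, nenhum ciclo tem {\it run} de inserção e o número de {\it runs} de um ciclo coincide com a sua quantidade de arestas de origem rotuladas, que pelo Lema~\ref{cap4:lemma:even_runs} é $0$, $1$ ou par.

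Primeiro, eu trataria o caso em que algum ciclo não trivial $C$ é limpo. Aplicando o resultado de Christie~\cite{1996-christie} à parte limpa de $G(\I)$ (que é o grafo de ciclos de uma permutação), existe um {\it block interchange} $\BI$ agindo apenas em arestas de origem limpas com $\Delta c(\I, \BI) = 2$, seja dividindo $C$ em três ciclos, seja unindo $C$ a outro ciclo não trivial quando $C$ é um $2$-ciclo (tal ciclo deve existir pela paridade de $|\pi^A| + 1 - c(\I)$). Como $\BI$ não toca nem cria aresta rotulada, $\Delta \lambda(\I, \BI) = 0$, e assim $\Delta c(\I,\BI) + \Delta \lambda(\I, \BI) = 2 \geq 1$.

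Segundo, o caso em que todo ciclo não trivial é rotulado. Fixado um tal ciclo $C$ com $k \geq 1$ arestas de origem rotuladas, temos $\Lambda(C) = k$ e $\lambda(C) = \lceil (k+1)/2 \rceil$. Usando novamente~\cite{1996-christie}, eu escolheria um {\it block interchange} $\BI$ agindo dentro de $C$ com $\Delta c(\I, \BI) = 2$, com as três posições de corte em arestas de origem limpas e escolhidas de modo a distribuir as $k$ arestas rotuladas entre os três ciclos resultantes da forma mais concentrada possível; no melhor caso todas ficam em um único ciclo resultante, dando $\Delta \lambda(\I, \BI) = 0$ e ganho $2$. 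Quando os cortes que atingem $\Delta c = 2$ obrigam a separar as arestas rotuladas, um cálculo com os tetos da definição de $\lambda$ --- usando que $\Lambda(C)$ e os números de {\it runs} dos novos ciclos estão em $\{0,1\}$ ou são pares (Lema~\ref{cap4:lemma:even_runs}) --- mostra que, salvo se surgirem três novos ciclos todos com quantidade par e positiva de arestas rotuladas, a separação custa no máximo uma unidade de potencial, isto é, $\Delta \lambda(\I, \BI) \geq -1$; logo $\Delta c(\I, \BI) + \Delta \lambda(\I, \BI) \geq 2 - 1 = 1$.

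O principal obstáculo é justamente essa contabilidade. É preciso provar que, entre os {\it block interchanges} com $\Delta c = 2$ em um ciclo rotulado $C$, sempre existe um cujos cortes caem em arestas de origem limpas e cuja partição induzida das arestas rotuladas de $C$ nunca produz três novos ciclos todos com número par e positivo de arestas rotuladas --- a única configuração que custaria duas unidades de $\lambda$. Eliminar essa configuração, via Lema~\ref{cap4:lemma:even_runs} e uma análise de casos sobre o tamanho e a orientação de $C$ (com os $2$-ciclos tratados à parte), é a parte central da demonstração.
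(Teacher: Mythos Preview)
Your proposal has a basic misreading of the indel potential that undermines the whole accounting. You assert that, since every destination edge is clean, ``o número de {\it runs} de um ciclo coincide com a sua quantidade de arestas de origem rotuladas'', and then set $\Lambda(C)=k$ for a cycle with $k$ labeled origin edges. This is false: a deletion run is a \emph{maximal} path starting and ending at labeled origin edges with clean destination edges in between, so when all destination edges are clean nothing stops that path from swallowing every labeled origin edge of $C$. Hence $\Lambda(C)\in\{0,1\}$ and $\lambda(C)\in\{0,1\}$ for every cycle, which is also what Lema~\ref{cap4:lemma:even_runs} forces (your own computation $\Lambda(C)=k$ with $k=3$ already contradicts it). With the correct reading, $\lambda(\I)$ is simply the number of labeled cycles, and your ceiling calculations, the worry about ``três novos ciclos todos com número par e positivo de arestas rotuladas'', and the plan to cut only at clean origin edges all become irrelevant.

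The paper's proof goes in a different direction and uses a trick you did not identify. It takes the standard cycle-increasing block interchange---the oriented triple of Bafna--Pevzner when $C$ is oriented, or the interleaving pair of non-oriented cycles otherwise---\emph{without} restricting to clean origin edges. The crucial observation is that when a cut falls on a labeled origin edge, the $\alpha$-block sitting between the two common elements can be placed on either side of the cut by adjusting the segment endpoints of the $\BI$. This freedom lets you push all the $\alpha$ elements touching the cut onto a single resulting edge, so that one of the unitary cycles produced is guaranteed clean. Because $\lambda$ of each resulting cycle is at most $1$, making one of them clean gives $\Delta\lambda(\I,\BI)\geq -1$ directly, and together with $\Delta c(\I,\BI)=2$ you are done. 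Your Case~1 also has a gap along these lines: if the clean non-oriented cycle $C$ must be paired with an interleaving cycle $D$ that happens to be labeled, the block interchange necessarily touches origin edges of $D$, so your claim ``$\BI$ não toca nem cria aresta rotulada, $\Delta\lambda=0$'' is unjustified without the $\alpha$-moving argument.
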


\begin{proof}

Considere que $G(\I)$ possui um ciclo orientado $C = (o_1, \ldots, o_\ell)$. Nesse caso, sempre existe tripla orientada $o_i$, $o_j$ e $o_k$, com $i < j < k$, tal que $o_i > o_k > o_j$ e $k = j+1$~\cite{1998-bafna-pevzner}. 
Uma operação de {\it block interchange} aplicada nessas três arestas de origem cria três ciclos $C'$, $C''$ e $C'''$~\cite{1998-bafna-pevzner}. Sejam $\sigma_1$ e $\sigma_2$ os dois segmentos afetados por esse {\it block interchange}. Se todas as arestas de origem afetadas são rotuladas, podemos mover os elementos $\deletionElement$ de forma que esses rótulos fiquem todos no mesmo ciclo. Para fazer isso, incluímos em $\sigma_1$ qualquer elemento $\deletionElement$ correspondente ao rótulo da aresta de origem mais à esquerda (aresta de origem com índice $o_j$), e incluímos em $\sigma_2$ qualquer elemento $\deletionElement$ correspondente ao rótulo da aresta de origem mais à direita (aresta de origem com índice $o_i$). Dessa forma, os elementos $\deletionElement$ das três arestas de origem afetadas são acumulados em uma só aresta de origem do novo grafo, como mostrado na Figura~\ref{cap4:fig:bi_oriented_cycles}. Uma operação análoga é usada se apenas duas das três arestas de origem afetadas são rotuladas. 

Como $k = j+1$, podemos garantir que o ciclo mais à esquerda é um ciclo unitário limpo. Esse ciclo possui a aresta de destino que é adjacente às arestas de origem $e_{o_j}$ e $e_{o_k}$. No entanto, não podemos garantir que o ciclo mais à direita é sempre unitário. Note que não existe aresta de destino rotulada e, portanto, $\lambda(C') + \lambda(C'') + \lambda(C''') \leq \lambda(C) + 1$, pois um dos ciclos é unitário e limpo. Dessa forma, $\Delta \lambda(\I, \BI) \geq -1$, $\Delta c(\I, \BI) = 2$ e $\Delta c(\I, \BI) + \Delta \lambda(\I, \BI) \geq 1$.

\begin{figure}[t]
\centering
\begin{tikzpicture}[scale=0.7]
\scriptsize
\begin{scope}[every node/.style={inner sep=0.4mm, draw, circle, minimum size = 0pt}]
    \node[label=below:$+0$] (p0) at (0,0) {};
    \node[label=below:$-2$] (m2) at (1.5,0) {};
    \node[label=below:$+2$] (p2) at (3,0) {};
    \node[label=below:$-1$] (m1) at (4.5,0) {};
    \node[label=below:$+1$] (p1) at (6,0) {};
    \node[label=below:$-3$] (m3) at (7.5,0) {};
\end{scope}

\begin{scope}[>={Stealth[black]},
              every edge/.style={draw=black}, every node/.style={inner sep=0.1pt, minimum size = 0pt}]
\node[label=\phantom{}] (bi1) at (.7, -0.8) {};
\node[label=\phantom{}] (bi2) at (3.7, -0.8) {};
\path [{Bar}-{Bar}] (bi1) edge node [black, pos=0.5, sloped, below] {$(\deletionElement~2)$} (bi2);

\node[label=\phantom{}] (bi3) at (3.8, -0.8) {};
\node[label=\phantom{}] (bi4) at (6.7, -0.8) {};
\path [{Bar}-{Bar}] (bi3) edge node [black, pos=0.5, sloped, below] {$(1~\deletionElement)$} (bi4);
\end{scope}

\begin{scope}[>={Stealth[black]},
              dashed]
    \path [-] (p0) edge node [black, pos=0.5, sloped, above, yshift=-0.05cm] {$\deletionElement$} (m2);
    \path [-] (p2) edge node [black, pos=0.5, sloped, above, yshift=-0.05cm] {$\deletionElement$} (m1);
    \path [-] (p1) edge node [black, pos=0.5, sloped, above, yshift=-0.05cm] {$\deletionElement$} (m3);
\end{scope}

\begin{scope}[>={Stealth[black]},
              every edge/.style={draw=black}]
    \path [-] (p0) edge  [bend left=50] (m1);
    \path [-] (p1) edge  [bend right=50] (m2);
    \path [-] (p2) edge  [bend left=50] (m3);
\end{scope}

\begin{scope}[every node/.style={inner sep=0.4mm, draw, circle, minimum size = 0pt}]
    \node[label=below:$+0$] (2p0) at (0,-2) {};
    \node[label=below:$-1$] (2m1) at (1.5,-2) {};
    \node[label=below:$+1$] (2p1) at (3,-2) {};
    \node[label=below:$-2$] (2m2) at (4.5,-2) {};
    \node[label=below:$+2$] (2p2) at (6,-2) {};
    \node[label=below:$-3$] (2m3) at (7.5,-2) {};
\end{scope}

\begin{scope}[>={Stealth[black]},
              dashed]
    \path [-] (2p1) edge node [black, pos=0.5, sloped, above, yshift=-0.05cm] {$\deletionElement$} (2m2);
\end{scope}

\begin{scope}[>={Stealth[black]},
              every edge/.style={draw=black}]
    \path [-] (2p0) edge node [black, pos=0.5, sloped, above] {} (2m1);
    \path [-] (2p2) edge node [black, pos=0.5, sloped, above] {} (2m3);
\end{scope}

\begin{scope}[>={Stealth[black]},
              every edge/.style={draw=black}]
    \path [-] (2p0) edge  [bend left=80] (2m1);
    \path [-] (2p1) edge  [bend left=80] (2m2);
    \path [-] (2p2) edge  [bend left=80] (2m3);
\end{scope}

\end{tikzpicture}

\caption{Exemplo de uma operação de {\it block interchange} que afeta uma tripla orientada de um ciclo orientado e cria três novos ciclos. Os elementos $\deletionElement$ são movidos de forma que apenas uma das arestas de origem afetadas permanece rotulada. Neste exemplo, temos $A = (0~\deletionElement~2~\deletionElement~1~\deletionElement~3)$ e $\iota^n$ com $n = 2$. 
\label{cap4:fig:bi_oriented_cycles}}
\end{figure}
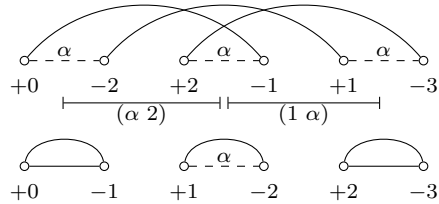

Agora, considere que $G(\I)$ possui apenas ciclos não orientados. Seja $C = (o_1, \ldots, o_\ell)$ um ciclo de $G(\I)$. Bafna e Pevzner~\cite{1998-bafna-pevzner} mostraram que para todo par de arestas de origem $(e_{o_i}, e_{o_j}$) de $C$, com $o_i > o_j$, existe um ciclo $D = (o'_1, \ldots, o'_{\ell'})$ com arestas de origem $e_{o'_x}$ e $e_{o'_y}$ tal que ou $o_i > o'_x > o_j > o'_y$ ou $o'_x > o_i > o'_y > o_j$. Assuma, sem perda de generalidade, que $o_i > o'_x > o_j > o'_y$, $o_i = o_1$ e $o_j = o_\ell$. Uma operação de {\it block interchange} que age nessas quatro arestas de origem cria quatro novos ciclos $C'$, $C''$, $D'$ e $D''$: $C'$ é formado pelo caminho que vai de $e_{o_i}$ até $e_{o_j}$ com uma aresta de origem incidente ao primeiro vértice e ao último vértice desse caminho; $C''$ é formado pelo caminho que vai de $e_{o_j}$ até $e_{o_i}$ com uma aresta de origem incidente ao primeiro vértice e ao último vértice desse caminho; $D'$ e $D''$ seguem um padrão similar aos dos ciclos $C'$ e $C''$. Como $o_i = o_1$ e $o_j = o_\ell$, existe apenas uma aresta de destino no caminho de $e_{o_j}$ até $e_{o_i}$ e portanto, $C''$ é um ciclo unitário.

O primeiro segmento afetado pela operação começa na aresta de origem $o'_y$, incluindo qualquer elemento $\deletionElement$ correspondente ao rótulo da aresta $e_{o'_y}$, e termina na aresta de origem $o_j$ sem incluir qualquer elemento $\deletionElement$. O segundo segmento afetado pela operação começa com a aresta de origem $o'_x$, sem incluir qualquer elemento $\deletionElement$, e termina com a aresta de origem $o_i$, incluindo qualquer elemento $\deletionElement$ correspondente ao rótulo da aresta $e_{o_i}$. Um exemplo é mostrado na Figura~\ref{cap4:fig:bi_non_oriented_cycles}. Dessa forma, garantimos que $D''$ é um ciclo unitário limpo. Note que não existe aresta de destino rotulada e, portanto, $\lambda(C') + \lambda(C'') + \lambda(D') + \lambda(D'') \leq \lambda(C) + \lambda(D) + 1$. Sendo assim, temos $\Delta \lambda(\I, \BI) \geq -1$, $\Delta c(\I, \BI) = 2$ e $\Delta c(\I, \BI) + \Delta \lambda(\I, \BI) \geq 1$.
\end{proof}

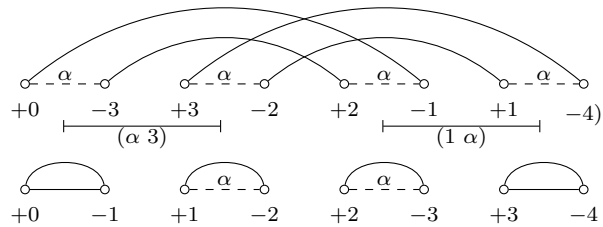
\begin{figure}[t]
\centering
\begin{tikzpicture}[scale=0.7]
\scriptsize
\begin{scope}[every node/.style={inner sep=0.4mm, draw, circle, minimum size = 0pt}]
    \node[label=below:$+0$] (p0) at (0,0) {};
    \node[label=below:$-3$] (m3) at (1.5,0) {};
    \node[label=below:$+3$] (p3) at (3,0) {};
    \node[label=below:$-2$] (m2) at (4.5,0) {};
    \node[label=below:$+2$] (p2) at (6,0) {};
    \node[label=below:$-1$] (m1) at (7.5,0) {};
    \node[label=below:$+1$] (p1) at (9,0) {};
    \node[label=below:$-4)$] (m4) at (10.5,0) {};
\end{scope}

\begin{scope}[>={Stealth[black]},
              every edge/.style={draw=black}, every node/.style={inner sep=0.1pt, minimum size = 0pt}]
\node[label=\phantom{}] (bi1) at (.7, -0.8) {};
\node[label=\phantom{}] (bi2) at (3.7, -0.8) {};
\path [{Bar}-{Bar}] (bi1) edge node [black, pos=0.5, sloped, below] {$(\deletionElement~3)$} (bi2);

\node[label=\phantom{}] (bi3) at (6.7, -0.8) {};
\node[label=\phantom{}] (bi4) at (9.7, -0.8) {};
\path [{Bar}-{Bar}] (bi3) edge node [black, pos=0.5, sloped, below] {$(1~\deletionElement)$} (bi4);
\end{scope}

\begin{scope}[>={Stealth[black]},
              dashed]
    \path [-] (p0) edge node [black, pos=0.5, sloped, above, yshift=-0.05cm] {$\deletionElement$} (m3);
    \path [-] (p3) edge node [black, pos=0.5, sloped, above, yshift=-0.05cm] {$\deletionElement$} (m2);
    \path [-] (p2) edge node [black, pos=0.5, sloped, above, yshift=-0.05cm] {$\deletionElement$} (m1);
    \path [-] (p1) edge node [black, pos=0.5, sloped, above, yshift=-0.05cm] {$\deletionElement$} (m4);
\end{scope}

\begin{scope}[>={Stealth[black]},
              every edge/.style={draw=black}]
    \path [-] (p0) edge  [bend left=40] (m1);
    \path [-] (p1) edge  [bend right=40] (m2);
    \path [-] (p2) edge  [bend right=40] (m3);
    \path [-] (p3) edge  [bend left=40] (m4);
\end{scope}

\begin{scope}[every node/.style={inner sep=0.4mm, draw, circle, minimum size = 0pt}]
    \node[label=below:$+0$] (2p0) at (0,-2) {};
    \node[label=below:$-1$] (2m1) at (1.5,-2) {};
    \node[label=below:$+1$] (2p1) at (3,-2) {};
    \node[label=below:$-2$] (2m2) at (4.5,-2) {};
    \node[label=below:$+2$] (2p2) at (6,-2) {};
    \node[label=below:$-3$] (2m3) at (7.5,-2) {};
    \node[label=below:$+3$] (2p3) at (9,-2) {};
    \node[label=below:$-4$] (2m4) at (10.5,-2) {};
\end{scope}

\begin{scope}[>={Stealth[black]},
              dashed]
    \path [-] (2p1) edge node [black, pos=0.5, sloped, above, yshift=-0.05cm] {$\deletionElement$} (2m2);
    \path [-] (2p2) edge node [black, pos=0.5, sloped, above, yshift=-0.05cm] {$\deletionElement$} (2m3);
\end{scope}

\begin{scope}[>={Stealth[black]},
              every edge/.style={draw=black}]
    \path [-] (2p0) edge node [black, pos=0.5, sloped, above] {} (2m1);
    \path [-] (2p3) edge node [black, pos=0.5, sloped, above] {} (2m4);
\end{scope}

\begin{scope}[>={Stealth[black]},
              every edge/.style={draw=black}]
    \path [-] (2p0) edge  [bend left=80] (2m1);
    \path [-] (2p1) edge  [bend left=80] (2m2);
    \path [-] (2p2) edge  [bend left=80] (2m3);
    \path [-] (2p3) edge  [bend left=80] (2m4);
\end{scope}

\end{tikzpicture}

\caption{Exemplo de uma operação de {\it block interchange} que age em dois ciclos não orientados e cria quatro novos ciclos. Neste exemplo, temos $A = (0~\deletionElement~3~\deletionElement~2~\deletionElement~1~\deletionElement~4)$ e $\iota^n$ com $n = 3$. Os elementos $\deletionElement$ são movidos de forma que apenas duas das quatro arestas de origem afetadas permanecem rotuladas.
\label{cap4:fig:bi_non_oriented_cycles}}
\end{figure}

Lembramos que em uma instância de strings sem sinais, o grafo de ciclos rotulado $G(\I)$ possui apenas ciclos convergentes. Já em uma instância de strings com sinais, o grafo de ciclos rotulado $G(\I)$ contém ciclos divergentes se, e somente se, existe pelo menos um elemento com sinal ``$-$''. No próximo lema, apresentamos como usar reversões para lidar com ciclos divergentes.

\begin{lemma}\label{cap4:lemma:reversal_divergent}
	Para qualquer instância $\I = (A, \iota^n)$, se $|\pi^A| + 1 - c(\I) > 0$, $G(\I)$ só possui arestas de destino limpas e $G(\I)$ possui um ciclo divergente, então existe reversão $\rho$ com $\Delta c(\I, \rho) + \Delta \lambda(\I, \rho) = 1$. 
\end{lemma}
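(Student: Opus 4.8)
The plan is to mimic, in the labeled setting, the classical treatment of oriented (divergent) cycles for sorting signed permutations by reversals, while controlling the indel potential through the extra freedom of relocating blocks of $\deletionElement$ across the endpoints of the reversal.

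First I would record a structural consequence of the hypothesis that every destination edge of $G(\I)$ is clean: no cycle can contain a labeled destination edge, hence no insertion run, so every run is a deletion run and, since the deletion runs of a cycle are the maximal arcs separated by labeled destination edges, each cycle has at most one run. Therefore $\Lambda(C) \le 1$ and $\lambda(C) \in \{0,1\}$ for every cycle $C$, with $\lambda(C) = 1$ exactly when $C$ is labeled, so $\Delta\lambda(\I,\rho)$ reduces to the change in the number of labeled cycles. I would also note that the only reversal effects capable of producing $\Delta c(\I,\rho) + \Delta\lambda(\I,\rho) = 1$ here are those that split a cycle (merging or keeping a cycle gives at most $0$ under these hypotheses), so the target is a splitting reversal.

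Next I would pick a divergent cycle $C$ of $G(\I)$, which is necessarily non-trivial since every unitary cycle is convergent. A divergent cycle contains a divergent source edge, and by Bafna and Pevzner~\cite{1996-bafna-pevzner} it therefore admits a pair of source edges on which a reversal $\rho$ splits $C$ into two cycles $C_1'$ and $C_2'$; since a reversal does not change $|\pi^A|$, this yields $\Delta c(\I,\rho) = 1$. If $C$ is clean, then $C_1'$ and $C_2'$ are clean, $\Delta\lambda(\I,\rho) = 0$, and $\Delta c(\I,\rho) + \Delta\lambda(\I,\rho) = 1$, as wanted.

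The remaining and most delicate case is when $C$ is labeled, since a careless split may place labeled source edges of $C$ into both $C_1'$ and $C_2'$, making $\Delta\lambda(\I,\rho) = -1$ and giving only $0$. I must arrange the split so that at most one of the two new cycles is labeled. The idea is the same bookkeeping as in Lemma~\ref{cap4:lemma:block_interchange_add_cycles}: when realizing $\rho$ as $\rho(i,j)$ on the string, one may include or exclude, at the endpoints $i$ and $j$, the blocks of $\deletionElement$ that produce the labels of the affected source edges, and by choosing the reversed segment to absorb the $\deletionElement$'s of every labeled source edge of $C$ one forces all of $C$'s labels into a single one of $C_1', C_2'$, leaving the other clean. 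Then $\Delta\lambda(\I,\rho) = 0$, so $\Delta c(\I,\rho) + \Delta\lambda(\I,\rho) \ge 1$, and Lemma~\ref{cap4:lemma:lb_graph_reversal} forces equality. The hard part will be exactly this last step: proving that a \emph{splitting} reversal of $C$ — which is constrained to act on an oriented pair of source edges — can always be chosen so that its reversed segment does not separate the labeled source edges of $C$; this needs a careful selection of the oriented pair relative to the positions of the $\deletionElement$-blocks, and is where the argument will require the most care.
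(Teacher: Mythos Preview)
Your structural observation in the first paragraph---that under clean destination edges every cycle has at most one (deletion) run and hence $\lambda(C)\in\{0,1\}$---is correct and helpful. The gap is in your handling of the labeled case. A reversal $\rho(i,j)$ can relocate $\deletionElement$-blocks only at the \emph{two} source edges it cuts; it cannot move the $\deletionElement$'s sitting on other labeled source edges of $C$. So for a general splitting (divergent) pair, labeled source edges of $C$ may land on both sides of the split no matter how you slide the endpoints, forcing $\Delta\lambda=-1$ and $\Delta c+\Delta\lambda=0$. Your flagged ``hard part''---choosing the oriented pair relative to the $\deletionElement$-positions so as not to separate labeled edges---is therefore not just a matter of care; in this generality it can genuinely fail.

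The paper sidesteps the difficulty with a much simpler choice: take a \emph{consecutive} divergent pair $(e_{o_x},e_{o_{x+1}})$ in the cycle's listing (such a pair exists since $e_{o_1}$ is convergent and $C$ is divergent). A reversal on these two edges splits $C$ into a \emph{unitary} cycle $C'$ (containing just the destination edge $e'_{d_x}$) and a $(k{-}1)$-cycle $C''$. The unitary cycle has a single source edge, built only from the two cut edges, so by placing the reversal endpoints on the appropriate side of the adjacent $\deletionElement$'s one makes that one edge clean; every other source edge of $C$---labeled or not---automatically goes to $C''$. Thus $\lambda(C')=0$, $\lambda(C'')=\lambda(C)$, $\Delta\lambda(\I,\rho)=0$, and $\Delta c(\I,\rho)+\Delta\lambda(\I,\rho)=1$, with no case analysis on label positions needed.
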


\begin{proof}
	Seja $C = (o_1, o_2, \ldots, o_k)$ um ciclo divergente em $G(\I)$ e seja $(e_{o_x}, e_{o_{x+1}})$ um par de arestas de origem divergentes com $x$ mínimo. Uma reversão aplicada nessas arestas de origem transforma o ciclo $C$ em um ciclo unitário $C'$ e um outro ciclo $C''$~\cite{1996-bafna-pevzner}. Essa reversão pode ser aplicada de uma forma que qualquer elemento $\deletionElement$ é movido para o ciclo $C''$, o que faz $\lambda(C') = 0$ e $\lambda(C'') = \lambda(C)$, já que $G(\I)$ só possui arestas de destino limpas. Dessa forma, temos $\Delta c(\I, \rho) + \Delta \lambda(\I, \rho) = 1$. Um exemplo dessa operação é apresentado na Figura~\ref{cap4:fig:reversal_divergent_cycle}.    
\end{proof}

\begin{figure}[t]
\centering
\begin{tikzpicture}[scale=0.7]
\scriptsize
\begin{scope}[every node/.style={inner sep=0.4mm, draw, circle, minimum size = 0pt}]
    \node[label=below:$+0$] (p0) at (0,0) {};
    \node[label=below:$-2$] (m2) at (1.5,0) {};
    \node[label=below:$+2$] (p2) at (3,0) {};
    \node[label=below:$+1$] (p1) at (4.5,0) {};
    \node[label=below:$-1$] (m1) at (6,0) {};
    \node[label=below:$-3$] (m3) at (7.5,0) {};
\end{scope}

\begin{scope}[>={Stealth[black]},
              every edge/.style={draw=black}, every node/.style={inner sep=0.1pt, minimum size = 0pt}]
\node[label=\phantom{}] (bi1) at (.7, -0.8) {};
\node[label=\phantom{}] (bi2) at (6.7, -0.8) {};
\path [{Bar}-{Bar}] (bi1) edge node [black, pos=0.5, sloped, below] {$(\deletionElement~2~\deletionElement~{-1})$} (bi2);

\end{scope}

\begin{scope}[>={Stealth[black]},
              dashed]
    \path [-] (p0) edge node [black, pos=0.5, sloped, above, yshift=-0.05cm] {$\deletionElement$} (m2);
    \path [-] (p2) edge node [black, pos=0.5, sloped, above, yshift=-0.05cm] {$\deletionElement$} (p1);
    \path [-] (m1) edge node [black, pos=0.5, sloped, above, yshift=-0.05cm] {$\deletionElement$} (m3);
\end{scope}

\begin{scope}[>={Stealth[black]},
              every edge/.style={draw=black}]
    \path [-] (p0) edge  [bend left=50] (m1);
    \path [-] (p1) edge  [bend right=50] (m2);
    \path [-] (p2) edge  [bend left=50] (m3);
\end{scope}

\begin{scope}[every node/.style={inner sep=0.4mm, draw, circle, minimum size = 0pt}]
    \node[label=below:$+0$] (p0) at (0,-2.5) {};
    \node[label=below:$-1$] (m1) at (1.5,-2.5) {};
    \node[label=below:$+1$] (p1) at (3,-2.5) {};
    \node[label=below:$+2$] (p2) at (4.5,-2.5) {};
    \node[label=below:$-2$] (m2) at (6,-2.5) {};
    \node[label=below:$-3$] (m3) at (7.5,-2.5) {};
\end{scope}



\begin{scope}[>={Stealth[black]},
              dashed]
    \path [-] (p1) edge node [black, pos=0.5, sloped, above, yshift=-0.05cm] {$\deletionElement$} (p2);
    \path [-] (m2) edge node [black, pos=0.5, sloped, above, yshift=-0.05cm] {$\deletionElement$} (m3);
\end{scope}

\begin{scope}[>={Stealth[black]},
              every edge/.style={draw=black}]
    \path [-] (p0) edge node [black, pos=0.5, sloped, above, yshift=-0.05cm] {} (m1);
    \path [-] (p0) edge  [bend left=50] (m1);
    \path [-] (p1) edge  [bend left=50] (m2);
    \path [-] (p2) edge  [bend left=50] (m3);
\end{scope}

\end{tikzpicture}

\caption{\label{cap4:fig:reversal_divergent_cycle}
Exemplo de uma reversão que age em um ciclo divergente e cria dois novos ciclos. Neste exemplo, temos $A = (0~\deletionElement~{2}~\deletionElement~{-1}~\deletionElement~3)$ e $\iota^n$ com $n = 2$. Essa reversão move qualquer elemento $\deletionElement$ de forma que o ciclo unitário criado possui apenas arestas limpas.
}
\end{figure}
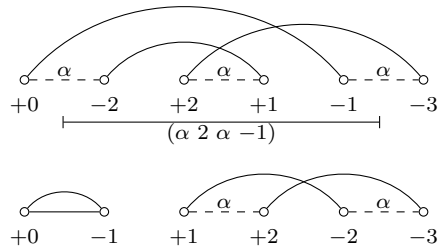

\begin{lemma}\label{cap4:lemma:deletion_remove_runs}
	Para qualquer instância $\I = (A, \iota^n)$, se $|\pi^A| + 1 - c(\I) = 0$ e $G(\I)$ possui pelo menos um {\it run} de deleção, então existe deleção $\deletion$ com $\Delta c(\I, \deletion) + \Delta \lambda(\I, \deletion) = 1$.
\end{lemma}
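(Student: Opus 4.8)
O plano é explorar a hipótese $|\pi^A| + 1 - c(\I) = 0$ para forçar uma estrutura bem simples em $G(\I)$. Como todo ciclo de $G(\I)$ possui ao menos uma aresta de origem e o número total de arestas de origem é $|\pi^A| + 1$, a igualdade $c(\I) = |\pi^A| + 1$ só ocorre quando cada ciclo tem exatamente uma aresta de origem, isto é, quando todos os ciclos são unitários. Assim, cada ciclo unitário $C$ tem exatamente uma aresta de origem $e$ e uma aresta de destino $e'$, ambas incidentes ao mesmo par de vértices (e, de passagem, isso força $\pi^A = \pi^\iota$, pois os elementos comuns de $A$ aparecem em ordem crescente).

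Em seguida, eu observaria que todo {\it run} está contido em um único ciclo; como um ciclo unitário possui uma só aresta de origem, um {\it run} de deleção de $G(\I)$ corresponde necessariamente a uma aresta de origem rotulada $e$ (com rótulo $\deletionElement$) de algum ciclo unitário $C$. Pela convenção de modelagem, um segmento contíguo maximal de genes exclusivos de $\G_o$ é mapeado em um único elemento $\deletionElement$; logo a aresta $e$ estar rotulada significa que existe exatamente um elemento $\deletionElement$ em $A$ entre os dois elementos comuns que correspondem às extremidades de $e$. A operação a ser escolhida é então a deleção $\deletion$ que remove esse único elemento $\deletionElement$ (o que é sempre válido, visto que ele forma, sozinho, uma sequência contígua de elementos iguais a $\deletionElement$).

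O passo seguinte é quantificar o efeito dessa deleção. Pelo raciocínio já usado no Lema~\ref{cap4:lemma:lb_graph_deletion}, uma deleção não altera $|\pi^A|$ nem a decomposição em ciclos de $G(\I)$, de modo que $\Delta c(\I, \deletion) = 0$, e a única mudança estrutural é que $\ell(e)$ passa de $\deletionElement$ para $\emptyset$. Como $C$ é unitário, antes da deleção tem-se $\Lambda(C) \in \{1, 2\}$, conforme a aresta de destino $e'$ seja limpa ou rotulada, e portanto $\lambda(C) \in \{1, 2\}$; depois da deleção, $\Lambda(C)$ cai em exatamente uma unidade (para $0$ ou $1$) e, pela definição de $\lambda$, $\lambda(C)$ também cai em exatamente uma unidade. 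Nenhum outro ciclo é afetado, logo $\Delta \lambda(\I, \deletion) = 1$ e $\Delta c(\I, \deletion) + \Delta \lambda(\I, \deletion) = 1$.

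O ponto que exige algum cuidado — e essencialmente o único — é a verificação da aritmética de $\lambda$ nos dois subcasos $\Lambda(C)=1$ e $\Lambda(C)=2$, isto é, confirmar que a queda unitária de $\Lambda(C)$ produz queda unitária de $\lambda(C)$ (passando de $\lceil 2/2\rceil$ para $0$, ou de $\lceil 3/2\rceil$ para $\lceil 2/2\rceil$). Fora isso, a demonstração é direta: a restrição a ciclos unitários elimina de saída os fenômenos de fusão ou divisão de {\it runs} que tornam delicadas as contagens nas operações mais gerais (como nos lemas~\ref{cap4:lemma:lb_graph_insertion} e \ref{cap4:lemma:lb_graph_reversal}).
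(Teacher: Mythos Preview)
Your proposal is correct and follows essentially the same approach as the paper's own proof: both use $|\pi^A|+1-c(\I)=0$ to force all cycles to be unitary, pick a unitary cycle whose source edge is labeled, and remove the single $\deletionElement$ segment corresponding to that label. Your version is simply more explicit, spelling out why exactly one $\deletionElement$ sits between the two common elements and checking the $\lambda$ arithmetic in the subcases $\Lambda(C)\in\{1,2\}$, whereas the paper records only the conclusion.
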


\begin{proof}
	Já que $|\pi^A| + 1 - c(\I) = 0$, todo ciclo do grafo $G(\I)$ é um ciclo unitário. Cada ciclo possui no máximo um {\it run} de inserção e um {\it run} de deleção. Seja $C$ um ciclo unitário de $G(\I)$ com uma aresta de origem rotulada. Note que esse ciclo possui um {\it run} de deleção que é formado apenas por essa aresta de origem. Seja $\deletion$ uma operação que remove o segmento correspondente ao rótulo dessa aresta de origem. Dessa forma, o {\it run} de deleção é removido e $\Delta c(\I, \deletion) + \Delta \lambda(\I, \deletion) = 1$.
\end{proof}

Com esses lemas, podemos apresentar os algoritmos~\ref{cap4:algorithm:bi_cycle_graph} e \ref{cap4:algorithm:bi_rev_cycle_graph}. Esses algoritmos possuem complexidade de tempo de $O(n^2)$, já que o grafo de ciclos rotulado $G(\I)$ pode ser criado em tempo linear, todos os laços de repetição executam $O(n)$ vezes, e qualquer operação dentro dos laços pode ser realizada em tempo linear.

No Teorema~\ref{cap4:theorem:2_approx_bi_cycle_graph} demonstramos que esses algoritmos possuem fator de aproximação igual a $2$ para os problemas de Distância de Rearranjos considerando os modelos $\Mindel_{\BI}$ e $\Mindel_{\rho,\BI}$.

\begin{theorem}\label{cap4:theorem:2_approx_bi_cycle_graph}
	Os algoritmos~\ref{cap4:algorithm:bi_cycle_graph} e \ref{cap4:algorithm:bi_rev_cycle_graph} são algoritmos de $2$-aproximação para os problemas da Distância de Block Interchanges e Indels e da Distância de Block Interchanges, Reversões e Indels, respectivamente.
\end{theorem}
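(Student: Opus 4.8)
The plan is to derive the approximation factor by sandwiching the number of operations used by each algorithm between the lower bound of Lemma~\ref{cap4:lemma:lb_distance_graph} and a matching linear upper bound. Write $N(\I) = |\pi^A| + 1 - c(\I) + \lambda(\I)$. By the definitions of $\Delta c$ and $\Delta \lambda$, applying an operation $\beta$ that turns $\I = (A,\iota^n)$ into $\I' = (A\comp\beta,\iota^n)$ decreases this quantity by exactly $\Delta c(\I,\beta) + \Delta \lambda(\I,\beta)$; moreover $N(\I) = 0$ if and only if $A = \iota^n$. Hence it suffices to show that, whenever $A \neq \iota^n$, each algorithm can always choose an operation $\beta$ (of the kind permitted by its model) with $\Delta c(\I,\beta) + \Delta \lambda(\I,\beta) \geq 1$: this gives termination after at most $N(\I)$ operations, and then $N(\I)/\lceil N(\I)/2 \rceil \leq 2$ yields the factor.

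First I would establish the existence of a good operation by a case analysis on the structure of $G(\I)$, in the order the algorithms process it. (i) If $G(\I)$ has a labeled destination edge, then it has an insertion run, so Lemma~\ref{cap4:lemma:insertion_remove_run} supplies an insertion with $\Delta c + \Delta \lambda = 1$; since conservative operations never create labeled destination edges, once these are removed they stay removed. (ii) Otherwise all destination edges are clean. If $|\pi^A| + 1 - c(\I) > 0$, then for $\Mindel_{\rho,\BI}$ we use Lemma~\ref{cap4:lemma:reversal_divergent} when $G(\I)$ has a divergent cycle and Lemma~\ref{cap4:lemma:block_interchange_add_cycles} otherwise; for $\Mindel_{\BI}$ the strings are unsigned, so $G(\I)$ has no divergent cycle and Lemma~\ref{cap4:lemma:block_interchange_add_cycles} applies directly. (iii) If $|\pi^A| + 1 - c(\I) = 0$, every cycle of $G(\I)$ is unitary; if some source edge is still labeled there is a deletion run and Lemma~\ref{cap4:lemma:deletion_remove_runs} gives a deletion with $\Delta c + \Delta \lambda = 1$; if no source edge is labeled either, then, all destination edges being clean as well, $\lambda(\I) = 0$ and every cycle is a clean unitary cycle, i.e.\ $A = \iota^n$, so the algorithm has already stopped.

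Then I would assemble the bound: starting from $\I$, each of the at most $N(\I)$ iterations strictly decreases $N$ by at least one, so the returned sequence $S$ satisfies $|S| \leq N(\I)$; combining with $d_{\Mindel_{\BI}}(\I) \geq \lceil N(\I)/2 \rceil$ (resp.\ $d_{\Mindel_{\rho,\BI}}(\I) \geq \lceil N(\I)/2 \rceil$) from Lemma~\ref{cap4:lemma:lb_distance_graph} gives $|S| \leq 2\,d_{\Mindel_{\BI}}(\I)$ (resp.\ $|S| \leq 2\,d_{\Mindel_{\rho,\BI}}(\I)$). The main obstacle I anticipate is the bookkeeping in steps (i)--(iii): checking that the operations produced by the cited lemmas really are of the type allowed by each model (insertions and deletions in both; block interchanges only in $\Mindel_{\BI}$ and $\Mindel_{\rho,\BI}$; reversals only in $\Mindel_{\rho,\BI}$), and verifying that the processing order --- insertion runs first, then block interchanges or reversals, then deletion runs --- never re-introduces a previously eliminated feature, in particular that conservative operations leave the set of labeled destination edges intact and that the terminal configuration with $N(\I) = 0$ is indeed $A = \iota^n$.
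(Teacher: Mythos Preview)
Your proposal is correct and follows essentially the same approach as the paper: every operation chosen by the algorithm satisfies $\Delta c(\I,\beta) + \Delta\lambda(\I,\beta) \geq 1$, so the output sequence has length at most $|\pi^A| + 1 - c(\I) + \lambda(\I)$, and comparing with the lower bound of Lemma~\ref{cap4:lemma:lb_distance_graph} yields the factor~$2$. The paper's proof is much terser (it simply asserts the per-operation inequality by implicit reference to the lemmas cited in the algorithm descriptions), whereas you spell out the case analysis and the phase-preservation arguments explicitly; your additional bookkeeping about labeled destination edges not being re-created by conservative operations is exactly what justifies the sequential while-loop structure of the algorithms, and is left implicit in the paper.
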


\begin{proof}
	Considere o Algoritmo~\ref{cap4:algorithm:bi_cycle_graph}. Qualquer operação $\beta$ aplicada pelo algoritmo satisfaz $\Delta c(\I, \beta) + \Delta \lambda(\I, \beta) \geq 1$ e, portanto, a sequência retornada pelo algoritmo possui tamanho de no máximo ${|\pi^A| + 1 - c(\I) + \lambda(\I)}$. Pelo Lema~\ref{cap4:lemma:lb_distance_graph}, esse algoritmo é uma $2$-aproximação para a Distância de Block Interchanges e Indels.

	A prova é similar para o Algoritmo \ref{cap4:algorithm:bi_rev_cycle_graph}.
\end{proof}

\begin{algorithm}[h]
	\caption{Algoritmo de $2$-Aproximação para a Distância de Block Interchanges e Indels\label{cap4:algorithm:bi_cycle_graph}}
	\DontPrintSemicolon
	\Entrada{Uma instância $\I = (A, \iota^n)$}
	\Saida{Uma sequência de rearranjos que transforma $A$ em $\iota^n$}
	Seja $S \gets \emptyset$\;

	\Enqto{$G(\I)$ possui {\it runs} de inserção}{
		Seja $\phi$ uma inserção com $\Delta c(\I, \insertion) + \Delta \lambda(\I, \insertion) = 1$ (Lema~\ref{cap4:lemma:insertion_remove_run})\;
		$A \gets A \comp \phi$\;
		Adicione $\phi$ na sequência $S$\;
	}

	\Enqto{$|\pi^A| + 1 - c(\I) > 0$}{
		Seja $\BI$ uma operação com $\Delta c(\I, \BI) + \Delta \lambda(\I, \BI) \geq 1$ (Lema~\ref{cap4:lemma:block_interchange_add_cycles})\;
		$A \gets A \comp \BI$\;
		Adicione $\BI$ na sequência $S$\;
	}

	\Enqto{$G(\I)$ possui {\it runs} de deleção}{
		Seja $\psi$ uma deleção que remove um {\it run} de deleção (Lema~\ref{cap4:lemma:deletion_remove_runs})\;
		$A \gets A \comp \psi$\;
		Adicione $\psi$ na sequência $S$\;
	}

	{\bf retorne} a sequência $S$\;
\end{algorithm}

\begin{algorithm}[h]
	\caption{Algoritmo de $2$-Aproximação para a Distância de Block Interchanges, Reversões e Indels\label{cap4:algorithm:bi_rev_cycle_graph}}
	\DontPrintSemicolon
	\Entrada{Uma instância $\I = (A, \iota^n)$}
	\Saida{Uma sequência de rearranjos que transforma $A$ em $\iota^n$}
	Seja $S \gets \emptyset$\;

	\Enqto{$G(\I)$ possui {\it runs} de inserção}{
		Seja $\phi$ uma inserção com $\Delta c(\I, \insertion) + \Delta \lambda(\I, \insertion) = 1$ (Lema~\ref{cap4:lemma:insertion_remove_run})\;
		$A \gets A \comp \phi$\;
		Adicione $\phi$ na sequência $S$\;
	}

	\Enqto{$|\pi^A| + 1 - c(\I) > 0$}{
		\uSe{$G(\I)$ possui ciclos divergentes}{
			Seja $\rho$ uma operação com $\Delta c(\I, \rho) + \Delta \lambda(\I, \rho) = 1$ (Lema~\ref{cap4:lemma:reversal_divergent})\;
			$A \gets A \comp \rho$\;
			Adicione $\rho$ na sequência $S$\;
		}\Senao(){
			Seja $\BI$ uma operação com $\Delta c(\I, \BI) + \Delta \lambda(\I, \BI) \geq 1$ (Lema~\ref{cap4:lemma:block_interchange_add_cycles})\;
			$A \gets A \comp \BI$\;
			Adicione $\BI$ na sequência $S$\;
		}
	}

	\Enqto{$G(\I)$ possui {\it runs} de deleção}{
		Seja $\psi$ uma deleção que remove um {\it run} de deleção (Lema~\ref{cap4:lemma:deletion_remove_runs})\;
		$A \gets A \comp \psi$\;
		Adicione $\psi$ na sequência $S$\;
	}

	{\bf retorne} a sequência $S$\;
\end{algorithm}

\subsection{Algoritmos de $2$-Aproximação para Modelos com Transposições}

Nesta seção, apresentamos algoritmos de $2$-aproximação para o problema de Distância de Rearranjos considerando os modelos $\Mindel_{\tau}$ e $\Mindel_{\rho,\tau}$. 

Temos os seguintes corolários que seguem diretamente dos lemas~\ref{cap4:lemma:insertion_remove_run}, \ref{cap4:lemma:reversal_divergent} e \ref{cap4:lemma:deletion_remove_runs}.

\begin{corollary}\label{cap4:corollary:insertion_remove_run}
	Para qualquer instância $\I = (A, \iota^n)$, se $G(\I)$ possui pelo menos um {\it run} de inserção, então existe uma inserção $\insertion$ com  $\Delta \cclean(\I, \insertion) + \Delta \lambda_{\insertion}(\I, \insertion) = 1$.
\end{corollary}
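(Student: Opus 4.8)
O plano é reaproveitar diretamente a inserção $\insertion$ fornecida pelo Lema~\ref{cap4:lemma:insertion_remove_run} e mostrar que, para qualquer operação $\beta$, vale a identidade
\[
\Delta c(\I, \beta) + \Delta \lambda(\I, \beta) = \Delta \cclean(\I, \beta) + \Delta \lambda_\insertion(\I, \beta).
\]
Estabelecida essa igualdade, o corolário é imediato: o Lema~\ref{cap4:lemma:insertion_remove_run} garante uma inserção $\insertion$ com $\Delta c(\I, \insertion) + \Delta \lambda(\I, \insertion) = 1$, e portanto essa mesma inserção satisfaz $\Delta \cclean(\I, \insertion) + \Delta \lambda_\insertion(\I, \insertion) = 1$.

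Para provar a identidade, primeiro eu escreveria $\I' = (A', \iota^n)$ com $A' = A \comp \beta$ e expandiria as quatro definições de $\Delta c$, $\Delta \lambda$, $\Delta \cclean$ e $\Delta \lambda_\insertion$. Reagrupando os termos (os termos $|\pi^A|$ e $|\pi^{A'}|$ aparecem em ambas as somas), obtém-se que $\Delta c(\I, \beta) + \Delta \lambda(\I, \beta)$ coincide com a diferença $\bigl(|\pi^A| + 1 - c(\I) + \lambda(\I)\bigr) - \bigl(|\pi^{A'}| + 1 - c(\I') + \lambda(\I')\bigr)$, e analogamente $\Delta \cclean(\I, \beta) + \Delta \lambda_\insertion(\I, \beta)$ coincide com $\bigl(|\pi^A| + 1 - \cclean(\I) + \lambda_\insertion(\I)\bigr) - \bigl(|\pi^{A'}| + 1 - \cclean(\I') + \lambda_\insertion(\I')\bigr)$. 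Pela Observação~\ref{cap4:remark:equivalencia_lambda_soma} aplicada a $\I$ e a $\I'$ --- a mesma manipulação de $c = \cclean + \crotulado$ e $\lambda = \lambda_\insertion + \crotulado$ vale para qualquer instância --- cada expressão entre parênteses na primeira diferença é igual à correspondente na segunda, donde as duas diferenças são iguais.

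O argumento é essencialmente contábil, portanto não antevejo uma dificuldade real; o único ponto que exige atenção é observar que a Observação~\ref{cap4:remark:equivalencia_lambda_soma} está enunciada para uma instância genérica e, por isso, pode ser legitimamente aplicada também à instância pós-operação $\I'$, o que é precisamente o que faz as duas diferenças telescopadas casarem termo a termo.
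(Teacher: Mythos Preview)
Your proposal is correct and matches the paper's intended argument: the paper states that the corollary follows directly from Lema~\ref{cap4:lemma:insertion_remove_run}, and the bridge you spell out via Observa\c{c}\~ao~\ref{cap4:remark:equivalencia_lambda_soma} (applied to both $\I$ and $\I' = (A\comp\beta,\iota^n)$) is precisely the accounting identity the paper relies on implicitly. The same identity is used explicitly just afterwards in the proof of Lema~\ref{cap4:lemma:lb_graph_transp}, so your use of it here is exactly in line with the paper's approach.
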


\begin{corollary}\label{cap4:corollary:reversal_divergent}
	Para qualquer instância $\I = (A, \iota^n)$, se $|\pi^A| + 1 - c(\I) > 0$, $G(\I)$ só possui arestas de destino limpas e $G(\I)$ possui um ciclo divergente, então existe uma reversão $\rho$ com $\Delta \cclean(\I, \rho) + \Delta \lambda_{\insertion}(\I, \rho) = 1$. 
\end{corollary}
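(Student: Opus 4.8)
The plan is to derive this corollary directly from Lemma~\ref{cap4:lemma:reversal_divergent} by translating the statement about $c(\I)$ and $\lambda(\I)$ into the corresponding statement about $\cclean(\I)$ and $\lambda_{\insertion}(\I)$, using Remark~\ref{cap4:remark:equivalencia_lambda_soma}. First I would recall that Remark~\ref{cap4:remark:equivalencia_lambda_soma} establishes the identity
\begin{align*}
|\pi^A| + 1 - c(\I) + \lambda(\I) = |\pi^A| + 1 - \cclean(\I) + \lambda_\insertion(\I),
\end{align*}
and that this identity holds for every instance, in particular both before and after applying a rearrangement $\beta$. Subtracting the post-operation version from the pre-operation version gives
\begin{align*}
\Delta c(\I, \beta) + \Delta \lambda(\I, \beta) = \Delta \cclean(\I, \beta) + \Delta \lambda_\insertion(\I, \beta),
\end{align*}
since $|\pi^{A'}| = |\pi^A|$ for a reversion (reversions do not change the number of common elements, hence $\pi^A$ keeps its length).

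With that equality in hand, the proof becomes immediate: under the hypotheses of the corollary — namely $|\pi^A| + 1 - c(\I) > 0$, all destination edges of $G(\I)$ clean, and $G(\I)$ containing a divergent cycle — Lemma~\ref{cap4:lemma:reversal_divergent} guarantees the existence of a reversion $\rho$ with $\Delta c(\I, \rho) + \Delta \lambda(\I, \rho) = 1$. By the displayed identity applied with $\beta = \rho$, this is exactly $\Delta \cclean(\I, \rho) + \Delta \lambda_\insertion(\I, \rho) = 1$, which is what we want.

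The only point that requires a small amount of care is the claim that $|\pi^{A'}| = |\pi^A|$ when $A' = A \comp \rho$; this is where I would be most careful, though it is not really an obstacle. A reversion is a conservative operation that neither inserts nor deletes elements, so $\Sigma_{A'} = \Sigma_{A}$, hence $\Sigma_{A'} \cap \Sigma_{\iota^n} = \Sigma_{A} \cap \Sigma_{\iota^n}$ and the simplified strings have the same length. Therefore the two ``$|\pi^A| + 1$'' terms in the pre- and post-operation instances of the Remark~\ref{cap4:remark:equivalencia_lambda_soma} identity cancel when we subtract, leaving precisely the equality between $\Delta c + \Delta \lambda$ and $\Delta \cclean + \Delta \lambda_\insertion$. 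I would state the corollary's proof in one or two sentences, citing Lemma~\ref{cap4:lemma:reversal_divergent} and Remark~\ref{cap4:remark:equivalencia_lambda_soma}, exactly in the same style as the proof sketches already used for Corollary~\ref{cap4:corollary:insertion_remove_run} and the other corollaries in this subsection.
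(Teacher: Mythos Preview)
Your proposal is correct and matches the paper's approach: the paper simply states that this corollary follows directly from Lemma~\ref{cap4:lemma:reversal_divergent}, and you have spelled out exactly how, via the identity in Remark~\ref{cap4:remark:equivalencia_lambda_soma}. One small remark: the step $|\pi^{A'}| = |\pi^A|$ is actually unnecessary, since the $|\pi^A|+1$ terms are already built into the definitions of $\Delta c$ and $\Delta \cclean$, so the identity $\Delta c + \Delta \lambda = \Delta \cclean + \Delta \lambda_\insertion$ holds for \emph{any} operation $\beta$ once you subtract the two instances of Remark~\ref{cap4:remark:equivalencia_lambda_soma}; but this does not affect the validity of your argument.
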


\begin{corollary}\label{cap4:corollary:deletion_remove_runs}
	Para qualquer instância $\I = (A, \iota^n)$, se $|\pi^A| + 1 - c(\I) = 0$ e $G(\I)$ possui pelo menos um {\it run} de deleção, então existe uma deleção $\deletion$ com $\Delta \cclean(\I, \deletion) + \Delta \lambda_{\insertion}(\I, \deletion) = 1$.
\end{corollary}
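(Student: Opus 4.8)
The plan is to derive Corollary~\ref{cap4:corollary:deletion_remove_runs} directly from Lemma~\ref{cap4:lemma:deletion_remove_runs} by observing that the quantity $\Delta \cclean(\I, \beta) + \Delta \lambda_\insertion(\I, \beta)$ coincides with $\Delta c(\I, \beta) + \Delta \lambda(\I, \beta)$ for \emph{every} operation $\beta$ and every instance $\I$. Since the hypotheses of the corollary are literally those of Lemma~\ref{cap4:lemma:deletion_remove_runs}, once this equality is in place the corollary is immediate.

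First I would unfold the definitions of the two variation quantities. For an operation $\beta$, writing $\I' = (A', \iota^n)$ with $A' = A \comp \beta$, we have
\[
\Delta c(\I, \beta) + \Delta \lambda(\I, \beta) = \bigl(|\pi^A| + 1 - c(\I) + \lambda(\I)\bigr) - \bigl(|\pi^{A'}| + 1 - c(\I') + \lambda(\I')\bigr),
\]
and, analogously,
\[
\Delta \cclean(\I, \beta) + \Delta \lambda_\insertion(\I, \beta) = \bigl(|\pi^A| + 1 - \cclean(\I) + \lambda_\insertion(\I)\bigr) - \bigl(|\pi^{A'}| + 1 - \cclean(\I') + \lambda_\insertion(\I')\bigr).
\]
By Remark~\ref{cap4:remark:equivalencia_lambda_soma}, for any instance $\J = (B, \iota^n)$ one has $|\pi^B| + 1 - c(\J) + \lambda(\J) = |\pi^B| + 1 - \cclean(\J) + \lambda_\insertion(\J)$. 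Applying this identity at the endpoint $\I$ and again at the endpoint $\I'$, the two parenthesized potentials in the first display equal those in the second, so the two right-hand sides are equal; hence $\Delta c(\I, \beta) + \Delta \lambda(\I, \beta) = \Delta \cclean(\I, \beta) + \Delta \lambda_\insertion(\I, \beta)$.

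It then remains to invoke Lemma~\ref{cap4:lemma:deletion_remove_runs}: under the assumptions $|\pi^A| + 1 - c(\I) = 0$ and $G(\I)$ having at least one deletion run, that lemma provides a deletion $\deletion$ with $\Delta c(\I, \deletion) + \Delta \lambda(\I, \deletion) = 1$. By the equality established above, this same deletion satisfies $\Delta \cclean(\I, \deletion) + \Delta \lambda_\insertion(\I, \deletion) = 1$, which is exactly the claim.

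The only point requiring care is that Remark~\ref{cap4:remark:equivalencia_lambda_soma} must be applied at \emph{both} endpoints $\I$ and $\I'$, so that the \emph{difference} of potentials (not merely a single value) is preserved when passing from the $(c, \lambda)$ formulation to the $(\cclean, \lambda_\insertion)$ formulation. There is no further obstacle: the corollary is a purely bookkeeping consequence of the already-proved lemma and the already-established identity.
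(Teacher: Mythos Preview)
Your proof is correct and matches the paper's intended argument: the paper simply states that the corollary follows directly from Lemma~\ref{cap4:lemma:deletion_remove_runs}, and the bridge between the $(c,\lambda)$ and $(\cclean,\lambda_\insertion)$ formulations is exactly the identity of Remark~\ref{cap4:remark:equivalencia_lambda_soma}, which you invoke at both endpoints to pass from one variation to the other.
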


Se $G(\I)$ só possui arestas de destino limpas, então qualquer transposição $\tau$ possui $\lambda_{\insertion}(\I, \tau) = 0$, já que uma transposição não altera o conjunto $\Sigma_A \cap \Sigma_{\iota^n}$. Os próximos quatro lemas mostram como podemos usar transposições para criar novos ciclos limpos quando $G(\I)$ não possui arestas de destino rotuladas. O primeiro deles é usado quando existem ciclos orientados em $G(\I)$. Os outros lemas podem ser usados quando o grafo $G(\I)$ satisfaz uma dessas condições: (i) existem dois ciclos rotulados e um deles é não unitário; (ii) existe um único ciclo não unitário rotulado; (iii) todos os ciclos não unitários são limpos.

\begin{lemma}\label{cap4:lemma:transposition_oriented_cycles}
	Para qualquer instância $\I = (A, \iota^n)$, tal que $|\pi^A| + 1 - c(\I) > 0$, $G(\I)$ só possui arestas de destino limpas e $G(\I)$ não possui ciclos divergentes, se existe um ciclo orientado $C \in G(\I)$, então existe uma transposição $\tau$ com $\Delta \cclean(\I, \tau) + \Delta \lambda_{\insertion}(\I, \tau) \geq 1$.
\end{lemma}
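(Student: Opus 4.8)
The plan is to reduce this lemma to the classical fact (Bafna--Pevzner) that an oriented cycle admits an oriented triple on which a transposition increases the number of cycles by $2$, and then to argue that the $\alpha$-labels on the affected source edges can be redistributed so that the newly created cycles are all clean (since there are no labeled destination edges by hypothesis).

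First I would recall that since $G(\I)$ has only clean destination edges, any transposition $\tau$ satisfies $\Delta\lambda_\insertion(\I,\tau)=0$: a transposition does not change $\Sigma_A\cap\Sigma_{\iota^n}$, hence does not change the set or labels of destination edges, and so $\lambda_\insertion$ of the graph can only change through source labels; but a transposition that acts on a single oriented cycle, as constructed below, will leave the total source-label content of that cycle unchanged. So it suffices to find a transposition $\tau$ with $\Delta\cclean(\I,\tau)\geq 1$. Next I would take an oriented cycle $C=(o_1,\ldots,o_m)\in G(\I)$ and invoke Bafna and Pevzner~\cite{1998-bafna-pevzner}: there is an oriented triple $o_i,o_j,o_k$ with $i<j<k$, $o_i>o_k>o_j$ and $k=j+1$, and the transposition acting on these three source edges transforms $C$ into three cycles $C'$, $C''$, $C'''$, where $C'$ (the one containing the destination edge $d_j$ adjacent to $e_{o_j}$ and $e_{o_k}$, with $k=j+1$) is a unit cycle.

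Then comes the label-bookkeeping step, which is where I would follow the pattern of Lemma~\ref{cap4:lemma:block_interchange_add_cycles} and Figure~\ref{cap4:fig:bi_oriented_cycles}. The two segments $\sigma_1,\sigma_2$ of the transposition each have flexible boundaries at the $\alpha$-elements corresponding to the labels of the three affected source edges: I would include in $\sigma_1$ any $\alpha$-block tied to the leftmost affected source edge (index $o_j$) and in $\sigma_2$ any $\alpha$-block tied to the rightmost affected source edge (index $o_i$), forcing all the $\alpha$-content of the three affected source edges to accumulate on a single source edge of the new graph. Because $k=j+1$, the cycle $C'$ has no source edges other than the one adjacent to $d_j$, and by this redistribution that source edge can be made clean; hence $C'$ is a clean unit cycle. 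Since there are no labeled destination edges, every label appearing in $C$, $C'$, $C''$, $C'''$ is a source label, and the total source-label content is preserved, so $C''$ and $C'''$ together carry exactly the labels $C$ carried; in particular no new labeled cycle is created beyond what $C$ already contributed, while one extra clean cycle ($C'$) is produced. Therefore $\Delta\cclean(\I,\tau)\geq 1$, and combined with $\Delta\lambda_\insertion(\I,\tau)=0$ we get $\Delta\cclean(\I,\tau)+\Delta\lambda_\insertion(\I,\tau)\geq 1$.

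The main obstacle I anticipate is making the label-redistribution argument fully rigorous in the edge cases: when some of the three affected source edges are already clean (so there is no $\alpha$-block to move), when the oriented triple lies partly inside an existing deletion run, and when $C$ itself contributes enough labels that $C''$ or $C'''$ ends up rotulado --- one must check carefully that $\cclean$ still strictly increases and does not merely redistribute cleanliness. I expect to handle this exactly as Lemma~\ref{cap4:lemma:block_interchange_add_cycles} does, by noting that the potential count $\lambda(C')+\lambda(C'')+\lambda(C''')\le\lambda(C)+1$ with the $+1$ coming from the guaranteed clean unit cycle, and then translating this via Remarks~\ref{cap4:remark:equivalencia_lambda_ciclo} and~\ref{cap4:remark:equivalencia_lambda_soma} into the $\cclean$/$\lambda_\insertion$ language; a small figure analogous to Figure~\ref{cap4:fig:bi_oriented_cycles} would make the construction transparent.
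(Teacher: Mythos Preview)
Your proposal is correct and follows essentially the same approach as the paper: apply the Bafna--Pevzner oriented triple with $k=j+1$, split $C$ into three cycles with a guaranteed unit cycle, and adjust the segment boundaries so the $\alpha$-elements are shifted away from that unit cycle, making it clean. The paper's version is slightly leaner---since all destination edges are clean, $\lambda_\insertion(\I)=0$ identically and it suffices to ensure only the unit cycle is clean (as in Figure~\ref{cap4:fig:transposition_oriented_cycles}), so the accumulation of all labels onto a single edge and the case analysis you anticipate in your last paragraph are not needed.
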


\begin{proof}
Como existe um ciclo orientado $C = (o_1, \ldots, o_\ell)$ em $G(\I)$, sempre existe tripla orientada $o_i$, $o_j$ e $o_k$, com $i < j < k$, tal que $o_i > o_k > o_j$ e $k = j+1$~\cite{1998-bafna-pevzner}. Uma transposição aplicada nessas arestas de origem transforma $C$ em três ciclos $C'$, $C''$ e $C'''$ tal que pelo menos um deles é unitário. Dessa forma, temos $\Delta c(\I, \tau) = 2$. 

Se $C$ é um ciclo limpo, então os três novos ciclos também são limpos e, portanto, $\Delta \cclean(\I, \tau) = 2$. Caso contrário, $C$ é rotulado e, portanto, pelo menos um dos três ciclos também é rotulado. No entanto, podemos garantir que o ciclo unitário é sempre limpo ao mover qualquer elemento $\deletionElement$ para um dos outros dois ciclos, como mostrado no exemplo da Figura~\ref{cap4:fig:transposition_oriented_cycles}. Portanto, temos que $\Delta \cclean(\I, \tau) + \Delta \lambda_{\insertion}(\I, \tau) \geq 1$.
\end{proof}

\begin{figure}[t]
\centering
\begin{tikzpicture}[scale=0.7]
\scriptsize
\begin{scope}[every node/.style={inner sep=0.4mm, draw, circle, minimum size = 0pt}]
    \node[label=below:$+0$] (p0) at (0,0) {};
    \node[label=below:$-2$] (m2) at (1.5,0) {};
    \node[label=below:$+2$] (p2) at (3,0) {};
    \node[label=below:$-1$] (m1) at (4.5,0) {};
    \node[label=below:$+1$] (p1) at (6,0) {};
    \node[label=below:$-3$] (m3) at (7.5,0) {};
\end{scope}

\begin{scope}[every edge/.style={draw=black}, every node/.style={inner sep=0pt, minimum size = 0pt}]
\node[label=\phantom{}] (bi1) at (.7, -0.8) {};
\node[label=\phantom{}] (bi2) at (3.75, -0.8) {};
\path [{Bar}-{Bar}] (bi1) edge node [black, pos=0.5, sloped, below] {$(2~\deletionElement)$} (bi2);

\node[label=\phantom{}] (bi3) at (3.7, -0.8) {};
\node[label=\phantom{}] (bi4) at (6.7, -0.8) {};
\path [-{Bar}] (bi3) edge node [black, pos=0.5, sloped, below] {$(1)$} (bi4);
\end{scope}

\begin{scope}[dashed]
    \path [-] (p0) edge node [black, pos=0.5, sloped, above, yshift=-0.05cm] {$\deletionElement$} (m2);
    \path [-] (p2) edge node [black, pos=0.5, sloped, above, yshift=-0.05cm] {$\deletionElement$} (m1);
    \path [-] (p1) edge node [black, pos=0.5, sloped, above, yshift=-0.05cm] {$\deletionElement$} (m3);
\end{scope}

\begin{scope}[every edge/.style={draw=black}]
    \path [-] (p0) edge  [bend left=50] (m1);
    \path [-] (p1) edge  [bend right=50] (m2);
    \path [-] (p2) edge  [bend left=50] (m3);
\end{scope}

\begin{scope}[every node/.style={inner sep=0.4mm, draw, circle, minimum size = 0pt}]
    \node[label=below:$+0$] (2p0) at (0,-2) {};
    \node[label=below:$-1$] (2m1) at (1.5,-2) {};
    \node[label=below:$+1$] (2p1) at (3,-2) {};
    \node[label=below:$-2$] (2m2) at (4.5,-2) {};
    \node[label=below:$+2$] (2p2) at (6,-2) {};
    \node[label=below:$-3$] (2m3) at (7.5,-2) {};
\end{scope}

\begin{scope}[every edge/.style={draw=black}]
    \path [-] (2p1) edge node [black, pos=0.5, sloped, above, yshift=-0.05cm] {} (2m2);
\end{scope}

\begin{scope}[dashed]
    \path [-] (2p0) edge node [black, pos=0.5, sloped, above] {$\deletionElement$} (2m1);
    \path [-] (2p2) edge node [black, pos=0.5, sloped, above] {$\deletionElement$} (2m3);
\end{scope}

\begin{scope}[every edge/.style={draw=black}]
    \path [-] (2p0) edge  [bend left=80] (2m1);
    \path [-] (2p1) edge  [bend left=80] (2m2);
    \path [-] (2p2) edge  [bend left=80] (2m3);
\end{scope}

\end{tikzpicture}

\caption{Exemplo de uma transposição que age em uma tripla orientada e cria três novos ciclos. A transposição move os elementos $\deletionElement$ de forma que o ciclo unitário criado é sempre limpo. Neste exemplo, temos $A = (0~\deletionElement~2~\deletionElement~1~\deletionElement~3)$ e $\iota^n$ com $n = 2$.
\label{cap4:fig:transposition_oriented_cycles}}
\end{figure}
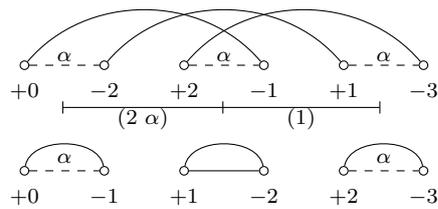

\begin{lemma}\label{cap4:lemma:transposition_nonoriented_multiple_cycles}
	Para qualquer instância $\I = (A, \iota^n)$, tal que $|\pi^A| + 1 - c(\I) > 0$, $G(\I)$ só possui arestas de destino limpas e $G(\I)$ não possui ciclos divergentes ou ciclos orientados, se existem pelo menos dois ciclos $C$ e $D$ que são não unitários e rotulados em $G(\I)$, então existe uma transposição $\tau$ com $\Delta \cclean(\I, \tau) + \Delta \lambda_{\insertion}(\I, \tau) \geq 1$.
\end{lemma}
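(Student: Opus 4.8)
The plan is to produce a single transposition $\tau$ and to verify $\Delta \cclean(\I, \tau) + \Delta \lambda_{\insertion}(\I, \tau) \geq 1$. First I would reduce the goal. A transposition changes neither $|\pi^A|$ nor the set $\Sigma_A \cap \Sigma_{\iota^n}$, so it creates no labeled destination edge; since by hypothesis $G(\I)$ already has only clean destination edges, both $G(\I)$ and $G(A \comp \tau, \iota^n)$ have no labeled destination edge at all, hence no runs of insertion, so $\lambda_{\insertion}(\I) = \lambda_{\insertion}(A \comp \tau, \iota^n) = 0$ and $\Delta \lambda_{\insertion}(\I, \tau) = 0$. Thus it suffices to exhibit a transposition with $\Delta \cclean(\I, \tau) \geq 1$. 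Writing $c = \cclean + \crotulado$, this means we must make one labeled cycle become clean. Because $G(\I)$ has no oriented cycle, no transposition can have $\Delta c(\I,\tau) = 2$: by the result of Bafna and Pevzner~\cite{1998-bafna-pevzner} a cycle-creating transposition must act on an oriented triple, and a non-oriented cycle $(o_1 > o_2 > \cdots > o_m)$ has none. A merge of three cycles ($\Delta c = -2$) cannot help either, since the $\deletionElement$ elements survive and the resulting cycle stays labeled. So the only route is a transposition with $\Delta c(\I, \tau) = 0$ that turns a labeled cycle clean by sweeping all of its $\deletionElement$ elements onto another cycle that is already labeled.

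Next I would assemble the structural tools. By hypothesis every cycle of $G(\I)$ is non-oriented and convergent, and, since all destination edges are clean, Lemma~\ref{cap4:lemma:even_runs} forces each labeled cycle to have exactly one run of deletion. I would then use the interleaving structure for non-oriented cycles~\cite{1998-bafna-pevzner} together with the standard description of how a transposition acting on three origin edges acts on the cycle decomposition, in particular the fact that when the three edges are confined to two distinct cycles (two edges in one, one edge in the other) the number of cycles is preserved. Finally, I would exploit the freedom granted by the labels: a transposition may cut $A$ at any position, in particular strictly inside a block of $\deletionElement$ elements that labels an origin edge, so whenever a cut falls on a labeled origin edge we may send that whole block to either of the two origin edges it produces.

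The construction then runs as follows. With $C$ and $D$ the two given non-unitary labeled cycles (swapping their roles if needed), I would pick two origin edges of $C$ and one origin edge of $D$ whose indices and cyclic positions force $\Delta c(\I, \tau) = 0$ for the transposition $\tau$ acting on them, and I would choose the three cut positions and the side to which each $\deletionElement$ block is attached so that, in the two cycles $C'$ and $D'$ produced by $\tau$, every $\deletionElement$ element formerly lying in $C$ or in $D$ ends up in $D'$. Then $C'$ is clean, so $\crotulado$ drops by at least one while $c$ is unchanged, giving $\Delta \cclean(\I, \tau) \geq 1$ and therefore $\Delta \cclean(\I, \tau) + \Delta \lambda_{\insertion}(\I, \tau) \geq 1$, as required.

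The hard part will be proving that such a choice of three edges and cut positions always exists. The obstruction is that $C$ and $D$ may each carry $\deletionElement$ blocks on several origin edges while a transposition touches only three of them, so one cannot simply "cut away" all the labels; here the single-run structure must be invoked to argue that the $\deletionElement$ blocks of each cycle can all be gathered on the side of the cuts that ends up inside $D'$. I expect the argument to split into cases according to whether $C$ and $D$ interleave and according to where the labeled origin edges of $C$ sit relative to the chosen cuts, with the case of $2$-cycles treated separately as the tightest one (there $C$ offers only two edges and the third edge must be a labeled edge of $D$, so there is little room). In each case I would exhibit the transposition explicitly, reusing the $\deletionElement$-routing technique already developed in the proofs of Lemma~\ref{cap4:lemma:block_interchange_add_cycles} and Lemma~\ref{cap4:lemma:transposition_oriented_cycles}.
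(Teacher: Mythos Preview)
Your overall direction is right: reduce to $\Delta \lambda_\insertion = 0$ (correct, since all destination edges are clean), rule out $\Delta c = \pm 2$, and aim for a $\Delta c = 0$ transposition that produces one new clean cycle. But you are making the last step much harder than it needs to be.

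The paper does not try to sweep \emph{all} the $\deletionElement$ blocks of $C$ into $D'$. Instead it picks the two origin edges of $C$ that are \emph{consecutive in the cyclic order}, namely $e_{o_1}$ and $e_{o_\ell}$ (which share the destination edge $e'_{d_\ell}$), together with the minimum-index edge $e_{{o'}_k}$ of $D$ (assuming ${o'}_k < o_\ell$ after possibly swapping $C$ and $D$). The transposition on these three edges then yields a \emph{unitary} cycle $C'$ --- its single origin edge has exactly the same endpoints as $e'_{d_\ell}$ --- plus an $(\ell+k-1)$-cycle $D'$ that absorbs everything else. Since $C'$ has only one origin edge, and that edge is newly formed from pieces of $e_{o_1}$ and $e_{o_\ell}$, the $\deletionElement$-routing trick you cite from Lemmas~\ref{cap4:lemma:block_interchange_add_cycles} and~\ref{cap4:lemma:transposition_oriented_cycles} suffices to make that one edge clean. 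Every other origin edge of $C$, labeled or not, lands in $D'$ automatically, as do all of $D$'s labels. Hence $\cclean$ goes up by one while $c$ is unchanged, giving $\Delta \cclean = 1$.

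So the ``hard part'' you flag --- that $C$ may carry labels on several origin edges while $\tau$ touches only two of them --- simply disappears once you force $C'$ to be unitary by choosing cycle-adjacent edges of $C$. Your single-run argument and the anticipated case analysis on interleaving and on $2$-cycles are unnecessary; the paper's proof is a single short paragraph. Your plan might be salvageable with the case analysis you outline, but the decisive simplification is missing.
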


\begin{proof}
Sejam $C = (o_1,\ldots,o_\ell)$ e $D = ({o'}_1,\ldots,{o'}_k)$, tal que $\ell \geq 2$ e $k \geq 2$, dois ciclos rotulados em $G(\I)$. Suponha, sem perda de generalidade, que $o_\ell > {o'}_k$. Uma transposição $\tau$ aplicada nas arestas de origem $e_{o_1}$, $e_{o_\ell}$ e $e_{{o'}_k}$ cria dois ciclos $C'$ e $D'$, tal que $C'$ é unitário e $D'$ é um $(\ell+k-1)$-ciclo. Podemos escolher a transposição de forma que $C'$ seja um ciclo limpo, para isso basta mover qualquer $\deletionElement$ das arestas $e_{o_1}$ e $e_{o_\ell}$ para o ciclo $D'$, como mostrado na Figura~\ref{cap4:fig:transp_two_non_oriented_cycles}. Portanto, temos $\Delta \cclean(\I, \tau) + \Delta \lambda_{\insertion}(\I, \tau) \geq 1$.
\end{proof}

\begin{figure}[tbh]
\centering
\begin{tikzpicture}[scale=0.7]
\scriptsize
\begin{scope}[every node/.style={inner sep=0.4mm, draw, circle, minimum size = 0pt}]
    \node[label=below:$+0$] (p0) at (0,0) {};
    \node[label=below:$-3$] (m3) at (1.5,0) {};
    \node[label=below:$+3$] (p3) at (3,0) {};
    \node[label=below:$-2$] (m2) at (4.5,0) {};
    \node[label=below:$+2$] (p2) at (6,0) {};
    \node[label=below:$-1$] (m1) at (7.5,0) {};
    \node[label=below:$+1$] (p1) at (9,0) {};
    \node[label=below:$-4)$] (m4) at (10.5,0) {};
\end{scope}

\begin{scope}[every edge/.style={draw=black}, every node/.style={inner sep=0pt, minimum size = 0pt}]
\node[label=\phantom{}] (bi1) at (.7, -0.8) {};
\node[label=\phantom{}] (bi2) at (3.75, -0.8) {};
\path [{Bar}-{Bar}] (bi1) edge node [black, pos=0.5, sloped, below] {$(3)$} (bi2);

\node[label=\phantom{}] (bi3) at (3.7, -0.8) {};
\node[label=\phantom{}] (bi4) at (9.7, -0.8) {};
\path [-{Bar}] (bi3) edge node [black, pos=0.5, sloped, below] {$(\deletionElement~2~\deletionElement~1~\deletionElement)$} (bi4);
\end{scope}

\begin{scope}[dashed]
    \path [-] (p0) edge node [black, pos=0.5, sloped, above, yshift=-0.05cm] {$\deletionElement$} (m3);
    \path [-] (p3) edge node [black, pos=0.5, sloped, above, yshift=-0.05cm] {$\deletionElement$} (m2);
    \path [-] (p2) edge node [black, pos=0.5, sloped, above, yshift=-0.05cm] {$\deletionElement$} (m1);
    \path [-] (p1) edge node [black, pos=0.5, sloped, above, yshift=-0.05cm] {$\deletionElement$} (m4);
\end{scope}

\begin{scope}[every edge/.style={draw=black}]
    \path [-] (p0) edge  [bend left=50] (m1);
    \path [-] (p1) edge  [bend right=50] (m2);
    \path [-] (p2) edge  [bend right=50] (m3);
    \path [-] (p3) edge  [bend left=50] (m4);
\end{scope}

\begin{scope}[every node/.style={inner sep=0.4mm, draw, circle, minimum size = 0pt}]
    \node[label=below:$+0$] (2p0) at (0,-3) {};
    \node[label=below:$-2$] (2m2) at (1.5,-3) {};
    \node[label=below:$+2$] (2p2) at (3,-3) {};
    \node[label=below:$-1$] (2m1) at (4.5,-3) {};
    \node[label=below:$+1$] (2p1) at (6,-3) {};
    \node[label=below:$-3$] (2m3) at (7.5,-3) {};
    \node[label=below:$+3$] (2p3) at (9,-3) {};
    \node[label=below:$-4$] (2m4) at (10.5,-3) {};
\end{scope}

\begin{scope}[dashed]
    \path [-] (2p0) edge node [black, pos=0.5, sloped, above, yshift=-0.05cm] {$\deletionElement$} (2m2);
    \path [-] (2p1) edge node [black, pos=0.5, sloped, above, yshift=-0.05cm] {$\deletionElement$} (2m3);
    \path [-] (2p2) edge node [black, pos=0.5, sloped, above, yshift=-0.05cm] {$\deletionElement$} (2m1);
\end{scope}

\begin{scope}[every edge/.style={draw=black}]
    \path [-] (2p3) edge node [black, pos=0.5, sloped, above] {} (2m4);
\end{scope}

\begin{scope}[every edge/.style={draw=black}]
    \path [-] (2p0) edge  [bend left=70] (2m1);
    \path [-] (2p1) edge  [bend right=70] (2m2);
    \path [-] (2p2) edge  [bend left=70] (2m3);
    \path [-] (2p3) edge  [bend left=70] (2m4);
\end{scope}

\end{tikzpicture}

\caption{Exemplo de uma transposição que age em dois ciclos rotulados não orientados e cria um novo ciclo limpo. Neste exemplo, temos $A = (0~\deletionElement~3~\deletionElement~2~\deletionElement~1~\deletionElement~4)$ e $\iota^n$ com $n = 3$.
\label{cap4:fig:transp_two_non_oriented_cycles}}
\end{figure}

\begin{lemma}\label{cap4:lemma:transposition_nonoriented_single_cycle}
Para qualquer instância $\I = (A, \iota^n)$, tal que $|\pi^A| + 1 - c(\I) > 0$, $G(\I)$ só possui arestas de destino limpas e $G(\I)$ não possui ciclos divergentes ou ciclos orientados, se existe apenas um único ciclo rotulado $C$ em $G(\I)$ e $C$ é não unitário, então existe uma sequência $S_\tau$ de $k$ transposições tal que $\Delta \cclean(\I, S_\tau) + \Delta \lambda_{\insertion}(\I, S_\tau) = k$, para $k \in \{2,3\}$.
\end{lemma}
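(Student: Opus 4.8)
A ideia segue a mesma linha dos lemas~\ref{cap4:lemma:transposition_oriented_cycles} e \ref{cap4:lemma:transposition_nonoriented_multiple_cycles}: como $G(\I)$ só possui arestas de destino limpas, nenhuma transposição cria {\it runs} de inserção, e como não há ciclos orientados nem divergentes, $C$ é não orientado e convergente; logo nenhuma transposição aplicada \emph{dentro} de $C$ aumenta o número de ciclos, pois uma transposição em três arestas de origem de um mesmo ciclo tem $\Delta c = 2$ apenas quando essas arestas formam uma tripla orientada~\cite{1998-bafna-pevzner}, e um ciclo não orientado não possui triplas orientadas. O plano é construir $S_\tau$ em duas ou três etapas: primeiro uma transposição ``de aquecimento'' $\tau_0$ que reorganiza $C$ num ciclo orientado $C_0$ de mesmo tamanho, sem perder progresso (isto é, com $\Delta \cclean(\I, \tau_0) + \Delta \lambda_\insertion(\I, \tau_0) = 0$), e depois uma ou duas transposições que exploram $C_0$, agora orientado, para destacar ciclos unitários limpos e/ou fundir {\it runs} de deleção.

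Primeiro eu verificaria a existência de $\tau_0$: pela estrutura de entrelaçamento de Bafna e Pevzner~\cite{1998-bafna-pevzner}, todo ciclo não orientado de tamanho suficiente admite uma transposição interna que o torna orientado; como essa operação não cria ciclos, basta deslizar os eventuais elementos $\deletionElement$ afetados para dentro do ciclo resultante, de modo que nenhum {\it run} de deleção seja quebrado e $\lambda_\insertion$ não aumente. Em seguida aplico a $C_0$ a transposição $\tau_1$ associada a uma tripla orientada com $k = j+1$, que decompõe $C_0$ em três ciclos, um deles unitário. Deslocando os elementos $\deletionElement$ das arestas de origem afetadas para os dois ciclos não unitários, como nas figuras~\ref{cap4:fig:transposition_oriented_cycles} e \ref{cap4:fig:transp_two_non_oriented_cycles}, garanto que o ciclo unitário criado é limpo, donde $\Delta \cclean(\cdot, \tau_1) \geq 1$, e pelo Lema~\ref{cap4:lemma:even_runs} a redistribuição dos {\it runs} de deleção entre os dois ciclos não unitários pode ser feita de forma que $\Delta \lambda_\insertion(\cdot, \tau_1) \geq 0$.

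A parte delicada é a análise por casos conforme o número $\Lambda(C)$ de {\it runs} de deleção de $C$ e o tamanho de $C$. Quando $\Lambda(C) \geq 2$, as arestas de origem rotuladas podem ser alinhadas já em $\tau_0$ e $\tau_1$ de modo que $\tau_1$, além de destacar o ciclo unitário limpo, funda dois {\it runs} de deleção, contribuindo $2$ para $\Delta \cclean + \Delta \lambda_\insertion$; somada à contribuição $0$ de $\tau_0$, a sequência $S_\tau = (\tau_0, \tau_1)$ tem tamanho $2$ e $\Delta \cclean(\I, S_\tau) + \Delta \lambda_\insertion(\I, S_\tau) = 2$, dando $k = 2$. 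Quando $\Lambda(C) = 1$ (em particular $\lambda_\insertion(C) = 0$ e não há {\it run} a fundir), ou quando a posição do único {\it run} ou o tamanho de $C$ impedem o ganho duplo em $\tau_1$, introduzo uma terceira transposição $\tau_2$ aplicada ao maior ciclo não unitário produzido por $\tau_1$ --- orientado ou tornável orientado por $\tau_0'$-análogo --- que destaca mais um ciclo unitário limpo, obtendo $S_\tau = (\tau_0, \tau_1, \tau_2)$ de tamanho $3$ e contribuição total $3$, dando $k = 3$; casos de tamanho pequeno, notadamente $|C| = 2$, são tratados à parte, eventualmente emparelhando $C$ com um ciclo vizinho. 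Em todas as situações as operações agem apenas dentro de $C$ e de seus ciclos descendentes, de modo que os ciclos limpos do restante de $G(\I)$ não são afetados.

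O principal obstáculo é precisamente esse controle fino de como os três cortes de cada transposição quebram ou fundem os {\it runs} de deleção do ciclo afetado e a verificação de que os elementos $\deletionElement$ podem sempre ser deslizados de modo a manter limpos os ciclos unitários criados e a fechar a contagem em exatamente $k$; é dessa análise que emerge a dicotomia $k \in \{2,3\}$.
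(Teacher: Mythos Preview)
Your plan has two structural gaps. First, under the hypotheses of the lemma all destination edges are clean, so no cycle has an insertion run; hence $\lambda_\insertion(D)=0$ for every cycle $D$ and this stays $0$ under transpositions. In particular $\Delta\lambda_\insertion(\I,S_\tau)=0$ for any sequence $S_\tau$ of transpositions, and the labeled cycle $C$ has exactly one (deletion) run, so $\Lambda(C)=1$ always. Your case $\Lambda(C)\geq 2$ is therefore vacuous, and the intended ``ganho de fundir dois runs de deleção'' is not available: the whole statement is purely about $\Delta\cclean$.

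Second, the ``warm-up'' $\tau_0$ acting \emph{inside} $C$ cannot do what you want. When $|C|=2$ there are only two source edges, so no internal transposition exists; for $|C|\geq 3$, Bafna--Pevzner do \emph{not} guarantee that a transposition on three edges of a non-oriented cycle makes it oriented --- what they show is that $C$ is interleaved with some \emph{other} cycle $D$ (or two cycles $D,E$), and it is a transposition on the edges of those other cycles that orients $C$. The paper's proof exploits exactly this, together with the hypothesis that $C$ is the \emph{only} labeled cycle, so every interleaving cycle is clean. For $|C|=2$ it applies one transposition on two edges of $C$ and one edge of a clean interleaving $D$, producing a labeled unitary cycle and a clean \emph{oriented} cycle, and then splits the latter into three clean cycles ($k=2$, $\Delta\cclean=2$). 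For $|C|\geq 3$ it first applies a transposition on clean cycles $D$ (and possibly $E$) to orient $C$ with $\Delta\cclean=0$, then splits the now-oriented $C$ to extract one clean unitary cycle while simultaneously orienting a clean cycle built from $D,E$, and finally splits that clean oriented cycle ($k=3$, $\Delta\cclean=3$). Your sketch sends the $|C|=2$ case to a parenthetical ``tratado à parte'' and otherwise insists that ``as opera\c{c}\~oes agem apenas dentro de $C$'', which is precisely what does not work here.
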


\begin{proof}
Dividimos a prova de acordo com o número de arestas de origem do ciclo rotulado $C$. 
Suponha que $C = (o_1, o_2)$ é o único ciclo rotulado de $G(\I)$. Bafna e Pevzner \cite[Lema~4.6]{1998-bafna-pevzner} mostraram que existe outro ciclo não orientado $D = (o'_1, \ldots, o'_k)$, com $k \geq 2$, tal que ou $o_1 > o'_x > o_2 > o'_{x+1}$ ou $o'_x > o_1 > o'_{x+1} > o_2$, para algum $1 \leq x \leq k-1$. Como $C$ é o único ciclo rotulado, $D$ deve ser um ciclo limpo. Suponha, sem perda de generalidade, que $o_1 > o'_x > o_2 > o'_{x+1}$. Podemos aplicar duas transposições nesses ciclos que criam dois novos ciclos limpos no grafo. Primeiramente, aplicamos uma transposição $\tau_1$ nas arestas de origem  $e_{o_1}$, $e_{o_2}$ e $e_{o'_{x+1}}$ que gera um ciclo unitário rotulado $C'$ e um ciclo limpo $D'$ que deve ser orientado~\cite{1998-bafna-pevzner}. Para que $D'$ seja limpo, basta que a transposição mova qualquer $\deletionElement$ para a aresta do ciclo unitário $C'$. Agora, usando o Lema~\ref{cap4:lemma:transposition_oriented_cycles}, aplicamos uma transposição $\tau_2$ no ciclo limpo $D'$ que o transforma em três ciclos limpos. Portanto, temos $\Delta \cclean(\I, S_\tau) + \Delta \lambda_{\insertion}(\I, S_\tau) = 2$ com $S = (\tau_1, \tau_2)$. A Figura~\ref{cap4:fig:transp_long_non_oriented_1} mostra um exemplo dessa sequência de transposições.

Suponha que $C = (o_1, \ldots, o_\ell)$, com $\ell \geq 3$, é o único ciclo rotulado de $G(\I)$. De forma similar ao caso anterior, aplicamos três transposições que geram três novos ciclos limpos. Primeiramente, aplicamos uma transposição $\tau_1$ em um ou dois ciclos não unitários limpos $D$ e $E$, que devem sempre existir~\cite[Teorema~4.7]{1998-bafna-pevzner}, que torna $C$ em um ciclo orientado rotulado $C'$, tal que $\Delta c(\I, \tau_1) = \Delta \cclean(\I, \tau_1) = 0$. Agora, usando o Lema~\ref{cap4:lemma:transposition_oriented_cycles}, aplicamos uma transposição $\tau_2$ em $C'$ que o transforma em três ciclos tal que pelo menos um dele é um ciclo unitário limpo. De acordo com o Teorema~4.7 de Bafna e Pevzner~\cite{1998-bafna-pevzner}, a sequência $(\tau_1, \tau_2)$ transforma outro ciclo do grafo em um ciclo orientado, sendo que esse ciclo possui arestas dos ciclos $D$ e $E$ que foram afetados pela primeira transposição. Esse ciclo deve ser limpo uma vez que $D$ e $E$ são ciclos limpos. Usando novamente o Lema~\ref{cap4:lemma:transposition_oriented_cycles}, aplicamos uma transposição $\tau_3$ nesse ciclo orientado limpo que o transforma em três novos ciclos limpos. Portanto, temos $\Delta \cclean(\I, S_\tau) + \Delta \lambda_{\insertion}(\I, S_\tau) = 3$ com $S = (\tau_1, \tau_2, \tau_3)$. As figuras~\ref{cap4:fig:transp_long_non_oriented_2} e \ref{cap4:fig:transp_long_non_oriented_3} mostram exemplos desse caso.
\end{proof}

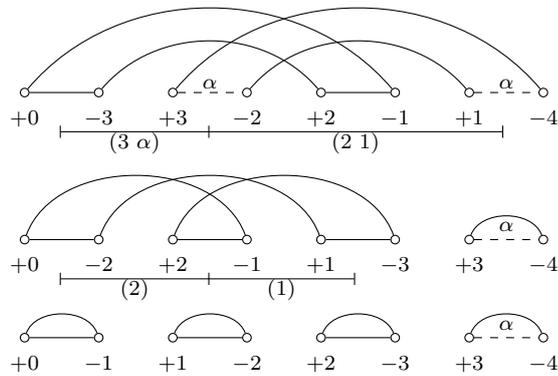
\begin{figure}[H]
\centering
\begin{tikzpicture}[scale=0.65]
\scriptsize
\begin{scope}[every node/.style={inner sep=0.4mm, draw, circle, minimum size = 0pt}]
    \node[label=below:$+0$] (p0) at (0,0) {};
    \node[label=below:$-3$] (m3) at (1.5,0) {};
    \node[label=below:$+3$] (p3) at (3,0) {};
    \node[label=below:$-2$] (m2) at (4.5,0) {};
    \node[label=below:$+2$] (p2) at (6,0) {};
    \node[label=below:$-1$] (m1) at (7.5,0) {};
    \node[label=below:$+1$] (p1) at (9,0) {};
    \node[label=below:$-4$] (m4) at (10.5,0) {};
\end{scope}

\begin{scope}[every edge/.style={draw=black}, every node/.style={inner sep=0pt, minimum size = 0pt}]
\node[label=\phantom{}] (bi1) at (.7, -0.8) {};
\node[label=\phantom{}] (bi2) at (3.75, -0.8) {};
\path [{Bar}-{Bar}] (bi1) edge node [black, pos=0.5, sloped, below] {$(3~\deletionElement)$} (bi2);

\node[label=\phantom{}] (bi3) at (3.7, -0.8) {};
\node[label=\phantom{}] (bi4) at (9.7, -0.8) {};
\path [-{Bar}] (bi3) edge node [black, pos=0.5, sloped, below] {$(2~1)$} (bi4);
\end{scope}
\begin{scope}[dashed]

    \path [-] (p3) edge node [black, pos=0.5, sloped, above, yshift=-0.05cm] {$\deletionElement$} (m2);
    \path [-] (p1) edge node [black, pos=0.5, sloped, above, yshift=-0.05cm] {$\deletionElement$} (m4);
\end{scope}

\begin{scope}[every edge/.style={draw=black}]
    \path [-] (p0) edge node [black, pos=0.5, sloped, above, yshift=-0.05cm] {} (m3);
    \path [-] (p2) edge node [black, pos=0.5, sloped, above, yshift=-0.05cm] {} (m1);
    \path [-] (p0) edge  [bend left=50] (m1);
    \path [-] (p1) edge  [bend right=50] (m2);
    \path [-] (p2) edge  [bend right=50] (m3);
    \path [-] (p3) edge  [bend left=50] (m4);
\end{scope}

\begin{scope}[every node/.style={inner sep=0.4mm, draw, circle, minimum size = 0pt}]
    \node[label=below:$+0$] (2p0) at (0,-3) {};
    \node[label=below:$-2$] (2m2) at (1.5,-3) {};
    \node[label=below:$+2$] (2p2) at (3,-3) {};
    \node[label=below:$-1$] (2m1) at (4.5,-3) {};
    \node[label=below:$+1$] (2p1) at (6,-3) {};
    \node[label=below:$-3$] (2m3) at (7.5,-3) {};
    \node[label=below:$+3$] (2p3) at (9,-3) {};
    \node[label=below:$-4$] (2m4) at (10.5,-3) {};
\end{scope}

\begin{scope}[every edge/.style={draw=black}, every node/.style={inner sep=0pt, minimum size = 0pt}]
\node[label=\phantom{}] (bi1) at (.7, -3.8) {};
\node[label=\phantom{}] (bi2) at (3.75, -3.8) {};
\path [{Bar}-{Bar}] (bi1) edge node [black, pos=0.5, sloped, below] {$(2)$} (bi2);

\node[label=\phantom{}] (bi3) at (3.7, -3.8) {};
\node[label=\phantom{}] (bi4) at (6.7, -3.8) {};
\path [-{Bar}] (bi3) edge node [black, pos=0.5, sloped, below] {$(1)$} (bi4);
\end{scope}

\begin{scope}[every edge/.style={draw=black}]
    \path [-] (2p0) edge node [black, pos=0.5, sloped, above, yshift=-0.05cm] {} (2m2);
    \path [-] (2p1) edge node [black, pos=0.5, sloped, above, yshift=-0.05cm] {} (2m3);
    \path [-] (2p2) edge node [black, pos=0.5, sloped, above, yshift=-0.05cm] {} (2m1);
\end{scope}

\begin{scope}[dashed]
    \path [-] (2p3) edge node [black, pos=0.5, sloped, above] {$\deletionElement$} (2m4);
\end{scope}

\begin{scope}[every edge/.style={draw=black}]
    \path [-] (2p0) edge  [bend left=70] (2m1);
    \path [-] (2p1) edge  [bend right=70] (2m2);
    \path [-] (2p2) edge  [bend left=70] (2m3);
    \path [-] (2p3) edge  [bend left=70] (2m4);
\end{scope}

\begin{scope}[every node/.style={inner sep=0.4mm, draw, circle, minimum size = 0pt}]
    \node[label=below:$+0$] (3p0) at (0,-5) {};
    \node[label=below:$-1$] (3m1) at (1.5,-5) {};
    \node[label=below:$+1$] (3p1) at (3,-5) {};
    \node[label=below:$-2$] (3m2) at (4.5,-5) {};
    \node[label=below:$+2$] (3p2) at (6,-5) {};
    \node[label=below:$-3$] (3m3) at (7.5,-5) {};
    \node[label=below:$+3$] (3p3) at (9,-5) {};
    \node[label=below:$-4$] (3m4) at (10.5,-5) {};
\end{scope}

\begin{scope}[every edge/.style={draw=black}]
    \path [-] (3p0) edge node [black, pos=0.5, sloped, above, yshift=-0.05cm] {} (3m1);
    \path [-] (3p1) edge node [black, pos=0.5, sloped, above, yshift=-0.05cm] {} (3m2);
    \path [-] (3p2) edge node [black, pos=0.5, sloped, above, yshift=-0.05cm] {} (3m3);
\end{scope}

\begin{scope}[dashed]
    \path [-] (3p3) edge node [black, pos=0.5, sloped, above] {$\deletionElement$} (3m4);
\end{scope}

\begin{scope}[every edge/.style={draw=black}]
    \path [-] (3p0) edge  [bend left=70] (3m1);
    \path [-] (3p1) edge  [bend left=70] (3m2);
    \path [-] (3p2) edge  [bend left=70] (3m3);
    \path [-] (3p3) edge  [bend left=70] (3m4);
\end{scope}

\end{tikzpicture}

\caption{Exemplo de uma sequência de transposições agindo em dois ciclos não orientados $C = (4, 2)$ e $D = (3, 1)$, tal que $C$ é rotulado e $D$ é limpo. Essas operações criam dois novos ciclos limpos. Neste exemplo, temos $A = (0~3~\deletionElement~2~1~\deletionElement~4)$ e $\iota^n$ com $n = 3$.
\label{cap4:fig:transp_long_non_oriented_1}}
\end{figure}
\begin{figure}[H]
\centering
\begin{tikzpicture}[scale=0.75]
\scriptsize
\begin{scope}[every node/.style={inner sep=0.4mm, draw, circle, minimum size = 0pt}]
    \node[label=below:$+0$] (p0) at (0,0) {};
    \node[label=below:$-5$] (m5) at (1,0) {};
    \node[label=below:$+5$] (p5) at (2,0) {};
    \node[label=below:$-4$] (m4) at (3,0) {};
    \node[label=below:$+4$] (p4) at (4,0) {};
    \node[label=below:$-3$] (m3) at (5,0) {};
    \node[label=below:$+3$] (p3) at (6,0) {};
    \node[label=below:$-2$] (m2) at (7,0) {};
    \node[label=below:$+2$] (p2) at (8,0) {};
    \node[label=below:$-1$] (m1) at (9,0) {};
    \node[label=below:$+1$] (p1) at (10,0) {};
    \node[label=below:$-6$] (m6) at (11,0) {};
\end{scope}

\begin{scope}[dashed]
    \path [-] (p0) edge node [black, pos=0.5, sloped, above, yshift=-0.05cm] {$\deletionElement$} (m5);
    \path [-] (p4) edge node [black, pos=0.5, sloped, above, yshift=-0.05cm] {$\deletionElement$} (m3);
    \path [-] (p2) edge node [black, pos=0.5, sloped, above, yshift=-0.05cm] {$\deletionElement$} (m1);
\end{scope}

\begin{scope}[every edge/.style={draw=black}]
    \path [-] (p5) edge node [black, pos=0.5, sloped, above, yshift=-0.05cm] {} (m4);
    \path [-] (p3) edge node [black, pos=0.5, sloped, above, yshift=-0.05cm] {} (m2);
    \path [-] (p1) edge node [black, pos=0.5, sloped, above, yshift=-0.05cm] {} (m6);
\end{scope}

\begin{scope}[every edge/.style={draw=black}]
    \path [-] (p0) edge  [bend left=50] (m1);
    \path [-] (p1) edge  [bend right=50] (m2);
    \path [-] (p2) edge  [bend right=50] (m3);
    \path [-] (p3) edge  [bend right=50] (m4);
    \path [-] (p4) edge  [bend right=50] (m5);
    \path [-] (p5) edge  [bend left=50] (m6);
\end{scope}

\begin{scope}[every edge/.style={draw=black}, every node/.style={inner sep=0pt, minimum size = 0pt}]
\node[label=\phantom{}] (bi1) at (2.7, -0.8) {};
\node[label=\phantom{}] (bi2) at (6.75, -0.8) {};
\path [{Bar}-{Bar}] (bi1) edge node [black, pos=0.5, sloped, below] {$(4~\deletionElement~3)$} (bi2);

\node[label=\phantom{}] (bi3) at (6.7, -0.8) {};
\node[label=\phantom{}] (bi4) at (10.7, -0.8) {};
\path [-{Bar}] (bi3) edge node [black, pos=0.5, sloped, below] {$(2~\deletionElement~1)$} (bi4);
\end{scope}

\begin{scope}[every node/.style={inner sep=0.4mm, draw, circle, minimum size = 0pt}]
    \node[label=below:$+0$] (2p0) at (0,-3) {};
    \node[label=below:$-5$] (2m5) at (1,-3) {};
    \node[label=below:$+5$] (2p5) at (2,-3) {};
    \node[label=below:$-2$] (2m2) at (3,-3) {};
    \node[label=below:$+2$] (2p2) at (4,-3) {};
    \node[label=below:$-1$] (2m1) at (5,-3) {};
    \node[label=below:$+1$] (2p1) at (6,-3) {};
    \node[label=below:$-4$] (2m4) at (7,-3) {};
    \node[label=below:$+4$] (2p4) at (8,-3) {};
    \node[label=below:$-3$] (2m3) at (9,-3) {};
    \node[label=below:$+3$] (2p3) at (10,-3) {};
    \node[label=below:$-6$] (2m6) at (11,-3) {};
\end{scope}

\begin{scope}[dashed]
    \path [-] (2p0) edge node [black, pos=0.5, sloped, above, yshift=-0.05cm] {$\deletionElement$} (2m5);
    \path [-] (2p2) edge node [black, pos=0.5, sloped, above, yshift=-0.05cm] {$\deletionElement$} (2m1);
    \path [-] (2p4) edge node [black, pos=0.5, sloped, above, yshift=-0.05cm] {$\deletionElement$} (2m3);
\end{scope}

\begin{scope}[every edge/.style={draw=black}]
    \path [-] (2p5) edge node [black, pos=0.5, sloped, above, yshift=-0.05cm] {} (2m2);
    \path [-] (2p1) edge node [black, pos=0.5, sloped, above, yshift=-0.05cm] {} (2m4);
    \path [-] (2p3) edge node [black, pos=0.5, sloped, above, yshift=-0.05cm] {} (2m6);
\end{scope}

\begin{scope}[every edge/.style={draw=black}]
\end{scope}

\begin{scope}[every edge/.style={draw=black}]
    \path [-] (2p0) edge  [bend left=50] (2m1);
    \path [-] (2p1) edge  [bend right=50] (2m2);
    \path [-] (2p2) edge  [bend left=50] (2m3);
    \path [-] (2p3) edge  [bend right=50] (2m4);
    \path [-] (2p4) edge  [bend right=50] (2m5);
    \path [-] (2p5) edge  [bend left=40] (2m6);
\end{scope}

\begin{scope}[every edge/.style={draw=black}, every node/.style={inner sep=0pt, minimum size = 0pt}]
\node[label=\phantom{}] (bi1) at (0.7, -3.8) {};
\node[label=\phantom{}] (bi2) at (4.75, -3.8) {};
\path [{Bar}-{Bar}] (bi1) edge node [black, pos=0.5, sloped, below] {$(\deletionElement~5~2~\deletionElement)$} (bi2);

\node[label=\phantom{}] (bi3) at (4.7, -3.8) {};
\node[label=\phantom{}] (bi4) at (8.7, -3.8) {};
\path [-{Bar}] (bi3) edge node [black, pos=0.5, sloped, below] {$(1~4)$} (bi4);
\end{scope}

\begin{scope}[every node/.style={inner sep=0.4mm, draw, circle, minimum size = 0pt}]
    \node[label=below:$+0$] (3p0) at (0,-6) {};
    \node[label=below:$-1$] (3m1) at (1,-6) {};
    \node[label=below:$+1$] (3p1) at (2,-6) {};
    \node[label=below:$-4$] (3m4) at (3,-6) {};
    \node[label=below:$+4$] (3p4) at (4,-6) {};
    \node[label=below:$-5$] (3m5) at (5,-6) {};
    \node[label=below:$+5$] (3p5) at (6,-6) {};
    \node[label=below:$-2$] (3m2) at (7,-6) {};
    \node[label=below:$+2$] (3p2) at (8,-6) {};
    \node[label=below:$-3$] (3m3) at (9,-6) {};
    \node[label=below:$+3$] (3p3) at (10,-6) {};
    \node[label=below:$-6$] (3m6) at (11,-6) {};
\end{scope}

\begin{scope}[every edge/.style={draw=black}]
    \path [-] (3p0) edge node [black, pos=0.5, sloped, above, yshift=-0.05cm] {} (3m1);
    \path [-] (3p1) edge node [black, pos=0.5, sloped, above, yshift=-0.05cm] {} (3m4);
    \path [-] (3p5) edge node [black, pos=0.5, sloped, above, yshift=-0.05cm] {} (3m2);
    \path [-] (3p3) edge node [black, pos=0.5, sloped, above, yshift=-0.05cm] {} (3m6);
\end{scope}

\begin{scope}[dashed]
    \path [-] (3p4) edge node [black, pos=0.5, sloped, above, yshift=-0.05cm] {$\deletionElement$} (3m5);
    \path [-] (3p2) edge node [black, pos=0.5, sloped, above, yshift=-0.05cm] {$\deletionElement$} (3m3);
\end{scope}

\begin{scope}[every edge/.style={draw=black}]
    \path [-] (3p0) edge  [bend left=90] (3m1);
    \path [-] (3p1) edge  [bend left=50] (3m2);
    \path [-] (3p2) edge  [bend left=90] (3m3);
    \path [-] (3p3) edge  [bend right=50] (3m4);
    \path [-] (3p4) edge  [bend left=90] (3m5);
    \path [-] (3p5) edge  [bend left=40] (3m6);
\end{scope}

%

\begin{scope}[every edge/.style={draw=black}, every node/.style={inner sep=0pt, minimum size = 0pt}]
\node[label=\phantom{}] (bi1) at (2.7, -6.8) {};
\node[label=\phantom{}] (bi2) at (6.75, -6.8) {};
\path [{Bar}-{Bar}] (bi1) edge node [black, pos=0.5, sloped, below] {$(4~\deletionElement~5)$} (bi2);

\node[label=\phantom{}] (bi3) at (6.7, -6.8) {};
\node[label=\phantom{}] (bi4) at (10.7, -6.8) {};
\path [-{Bar}] (bi3) edge node [black, pos=0.5, sloped, below] {$(2~\deletionElement~\deletionElement~3)$} (bi4);
\end{scope}

%

\begin{scope}[every node/.style={inner sep=0.4mm, draw, circle, minimum size = 0pt}]
    \node[label=below:$+0$] (3p0) at (0,-8) {};
    \node[label=below:$-1$] (3m1) at (1,-8) {};
    \node[label=below:$+1$] (3p1) at (2,-8) {};
    \node[label=below:$-2$] (3m2) at (3,-8) {};
    \node[label=below:$+2$] (3p2) at (4,-8) {};
    \node[label=below:$-3$] (3m3) at (5,-8) {};
    \node[label=below:$+3$] (3p3) at (6,-8) {};
    \node[label=below:$-4$] (3m4) at (7,-8) {};
    \node[label=below:$+4$] (3p4) at (8,-8) {};
    \node[label=below:$-5$] (3m5) at (9,-8) {};
    \node[label=below:$+5$] (3p5) at (10,-8) {};
    \node[label=below:$-6$] (3m6) at (11,-8) {};
\end{scope}

\begin{scope}[every edge/.style={draw=black}]
    \path [-] (3p0) edge node [black, pos=0.5, sloped, above, yshift=-0.05cm] {} (3m1);
    \path [-] (3p1) edge node [black, pos=0.5, sloped, above, yshift=-0.05cm] {} (3m2);
    \path [-] (3p3) edge node [black, pos=0.5, sloped, above, yshift=-0.05cm] {} (3m4);
    \path [-] (3p5) edge node [black, pos=0.5, sloped, above, yshift=-0.05cm] {} (3m6);
\end{scope}

\begin{scope}[dashed]
    \path [-] (3p4) edge node [black, pos=0.5, sloped, above, yshift=-0.05cm] {$\deletionElement$} (3m5);
    \path [-] (3p2) edge node [black, pos=0.5, sloped, above, yshift=-0.05cm] {$\deletionElement$} (3m3);
\end{scope}

\begin{scope}[every edge/.style={draw=black}]
    \path [-] (3p0) edge  [bend left=90] (3m1);
    \path [-] (3p1) edge  [bend left=90] (3m2);
    \path [-] (3p2) edge  [bend left=90] (3m3);
    \path [-] (3p3) edge  [bend left=90] (3m4);
    \path [-] (3p4) edge  [bend left=90] (3m5);
    \path [-] (3p5) edge  [bend left=90] (3m6);
\end{scope}

\end{tikzpicture}

\caption{Exemplo de uma sequência de transposições que agem em dois ciclos não orientados $C = (5, 3, 1)$ e $D = (6, 4, 2)$, tal que $C$ é rotulado e $D$ é limpo. Essas operações criam três novos ciclos limpos. Neste exemplo, temos $A = (0~\deletionElement~5~4~\deletionElement~3~2~\deletionElement~1~6)$ e $\iota^n$ com $n = 5$.
\label{cap4:fig:transp_long_non_oriented_2}}
\end{figure}
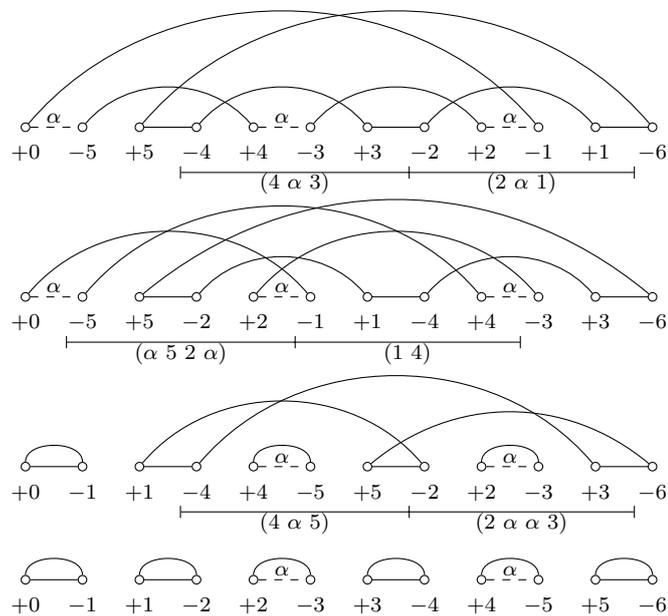
\begin{figure}[H]
\centering
\begin{tikzpicture}[scale=0.75]
\scriptsize
\begin{scope}[every node/.style={inner sep=0.4mm, draw, circle, minimum size = 0pt}]
    \node[label=below:$+0$] (p0) at (0,0) {};
    \node[label=below:$-3$] (m3) at (1,0) {};
    \node[label=below:$+3$] (p3) at (2,0) {};
    \node[label=below:$-2$] (m2) at (3,0) {};
    \node[label=below:$+2$] (p2) at (4,0) {};
    \node[label=below:$-1$] (m1) at (5,0) {};
    \node[label=below:$+1$] (p1) at (6,0) {};
    \node[label=below:$-6$] (m6) at (7,0) {};
    \node[label=below:$+6$] (p6) at (8,0) {};
    \node[label=below:$-5$] (m5) at (9,0) {};
    \node[label=below:$+5$] (p5) at (10,0) {};
    \node[label=below:$-4$] (m4) at (11,0) {};
    \node[label=below:$+4$] (p4) at (12,0) {};
    \node[label=below:$-7$] (m7) at (13,0) {};
\end{scope}

\begin{scope}[dashed]
    \path [-] (p3) edge node [black, pos=0.5, sloped, above, yshift=-0.05cm] {$\deletionElement$} (m2);
    \path [-] (p1) edge node [black, pos=0.5, sloped, above, yshift=-0.05cm] {$\deletionElement$} (m6);
    \path [-] (p5) edge node [black, pos=0.5, sloped, above, yshift=-0.05cm] {$\deletionElement$} (m4);
\end{scope}

\begin{scope}[every edge/.style={draw=black}]
    \path [-] (p0) edge node [black, pos=0.5, sloped, above, yshift=-0.05cm] {} (m3);
    \path [-] (p4) edge node [black, pos=0.5, sloped, above, yshift=-0.05cm] {} (m7);
    \path [-] (p2) edge node [black, pos=0.5, sloped, above, yshift=-0.05cm] {} (m1);
    \path [-] (p6) edge node [black, pos=0.5, sloped, above, yshift=-0.05cm] {} (m5);
\end{scope}

\begin{scope}[every edge/.style={draw=black}]
    \path [-] (p0) edge  [bend left=50] (m1);
    \path [-] (p1) edge  [bend right=50] (m2);
    \path [-] (p2) edge  [bend right=50] (m3);
    \path [-] (p3) edge  [bend left=50] (m4);
    \path [-] (p4) edge  [bend right=50] (m5);
    \path [-] (p5) edge  [bend right=50] (m6);
    \path [-] (p6) edge  [bend left=50] (m7);
\end{scope}

\begin{scope}[every edge/.style={draw=black}, every node/.style={inner sep=0pt, minimum size = 0pt}]
\node[label=\phantom{}] (bi1) at (0.5, -0.8) {};
\node[label=\phantom{}] (bi2) at (4.5, -0.8) {};
\path [{Bar}-{Bar}] (bi1) edge node [black, pos=0.5, sloped, below] {$(3~\deletionElement~2)$} (bi2);

\node[label=\phantom{}] (bi3) at (4.5, -0.8) {};
\node[label=\phantom{}] (bi4) at (8.5, -0.8) {};
\path [-{Bar}] (bi3) edge node [black, pos=0.5, sloped, below] {$(1~\deletionElement~6)$} (bi4);
\end{scope}

\begin{scope}[every node/.style={inner sep=0.4mm, draw, circle, minimum size = 0pt}]
    \node[label=below:$+0$] (p0) at (0,-3) {};
    \node[label=below:$-1$] (m1) at (1,-3) {};
    \node[label=below:$+1$] (p1) at (2,-3) {};
    \node[label=below:$-6$] (m6) at (3,-3) {};
    \node[label=below:$+6$] (p6) at (4,-3) {};
    \node[label=below:$-3$] (m3) at (5,-3) {};
    \node[label=below:$+3$] (p3) at (6,-3) {};
    \node[label=below:$-2$] (m2) at (7,-3) {};
    \node[label=below:$+2$] (p2) at (8,-3) {};
    \node[label=below:$-5$] (m5) at (9,-3) {};
    \node[label=below:$+5$] (p5) at (10,-3) {};
    \node[label=below:$-4$] (m4) at (11,-3) {};
    \node[label=below:$+4$] (p4) at (12,-3) {};
    \node[label=below:$-7$] (m7) at (13,-3) {};
\end{scope}

\begin{scope}[dashed]
    \path [-] (p3) edge node [black, pos=0.5, sloped, above, yshift=-0.05cm] {$\deletionElement$} (m2);
    \path [-] (p1) edge node [black, pos=0.5, sloped, above, yshift=-0.05cm] {$\deletionElement$} (m6);
    \path [-] (p5) edge node [black, pos=0.5, sloped, above, yshift=-0.05cm] {$\deletionElement$} (m4);
\end{scope}

\begin{scope}[every edge/.style={draw=black}]
    \path [-] (p0) edge node [black, pos=0.5, sloped, above, yshift=-0.05cm] {} (m1);
    \path [-] (p6) edge node [black, pos=0.5, sloped, above, yshift=-0.05cm] {} (m3);
    \path [-] (p2) edge node [black, pos=0.5, sloped, above, yshift=-0.05cm] {} (m5);
    \path [-] (p4) edge node [black, pos=0.5, sloped, above, yshift=-0.05cm] {} (m7);
\end{scope}

\begin{scope}[every edge/.style={draw=black}]
    \path [-] (p0) edge  [bend left=50] (m1);
    \path [-] (p1) edge  [bend left=50] (m2);
    \path [-] (p2) edge  [bend right=50] (m3);
    \path [-] (p3) edge  [bend left=50] (m4);
    \path [-] (p4) edge  [bend right=50] (m5);
    \path [-] (p5) edge  [bend right=50] (m6);
    \path [-] (p6) edge  [bend left=40] (m7);
\end{scope}

\begin{scope}[every edge/.style={draw=black}, every node/.style={inner sep=0pt, minimum size = 0pt}]
\node[label=\phantom{}] (bi1) at (2.5, -3.8) {};
\node[label=\phantom{}] (bi2) at (6.5, -3.8) {};
\path [{Bar}-{Bar}] (bi1) edge node [black, pos=0.5, sloped, below] {$(6~3)$} (bi2);

\node[label=\phantom{}] (bi3) at (6.5, -3.8) {};
\node[label=\phantom{}] (bi4) at (10.5, -3.8) {};
\path [-{Bar}] (bi3) edge node [black, pos=0.5, sloped, below] {$(\deletionElement~2~5~\deletionElement)$} (bi4);
\end{scope}

\begin{scope}[every node/.style={inner sep=0.4mm, draw, circle, minimum size = 0pt}]
    \node[label=below:$+0$] (p0) at (0,-6) {};
    \node[label=below:$-1$] (m1) at (1,-6) {};
    \node[label=below:$+1$] (p1) at (2,-6) {};
    \node[label=below:$-2$] (m2) at (3,-6) {};
    \node[label=below:$+2$] (p2) at (4,-6) {};
    \node[label=below:$-5$] (m5) at (5,-6) {};
    \node[label=below:$+5$] (p5) at (6,-6) {};
    \node[label=below:$-6$] (m6) at (7,-6) {};
    \node[label=below:$+6$] (p6) at (8,-6) {};
    \node[label=below:$-3$] (m3) at (9,-6) {};
    \node[label=below:$+3$] (p3) at (10,-6) {};
    \node[label=below:$-4$] (m4) at (11,-6) {};
    \node[label=below:$+4$] (p4) at (12,-6) {};
    \node[label=below:$-7$] (m7) at (13,-6) {};
\end{scope}

\begin{scope}[dashed]
    \path [-] (p1) edge node [black, pos=0.5, sloped, above, yshift=-0.05cm] {$\deletionElement$} (m2);
    \path [-] (p5) edge node [black, pos=0.5, sloped, above, yshift=-0.05cm] {$\deletionElement$} (m6);
\end{scope}

\begin{scope}[every edge/.style={draw=black}]
    \path [-] (p3) edge node [black, pos=0.5, sloped, above, yshift=-0.05cm] {} (m4);
    \path [-] (p0) edge node [black, pos=0.5, sloped, above, yshift=-0.05cm] {} (m1);
    \path [-] (p6) edge node [black, pos=0.5, sloped, above, yshift=-0.05cm] {} (m3);
    \path [-] (p2) edge node [black, pos=0.5, sloped, above, yshift=-0.05cm] {} (m5);
    \path [-] (p4) edge node [black, pos=0.5, sloped, above, yshift=-0.05cm] {} (m7);
\end{scope}

\begin{scope}[every edge/.style={draw=black}]
    \path [-] (p0) edge  [bend left=50] (m1);
    \path [-] (p1) edge  [bend left=80] (m2);
    \path [-] (p2) edge  [bend left=50] (m3);
    \path [-] (p3) edge  [bend left=50] (m4);
    \path [-] (p4) edge  [bend right=50] (m5);
    \path [-] (p5) edge  [bend left=80] (m6);
    \path [-] (p6) edge  [bend left=50] (m7);
\end{scope}

\begin{scope}[every edge/.style={draw=black}, every node/.style={inner sep=0pt, minimum size = 0pt}]
\node[label=\phantom{}] (bi1) at (4.5, -6.8) {};
\node[label=\phantom{}] (bi2) at (8.5, -6.8) {};
\path [{Bar}-{Bar}] (bi1) edge node [black, pos=0.5, sloped, below] {$(5~\deletionElement~6)$} (bi2);

\node[label=\phantom{}] (bi3) at (8.5, -6.8) {};
\node[label=\phantom{}] (bi4) at (12.5, -6.8) {};
\path [-{Bar}] (bi3) edge node [black, pos=0.5, sloped, below] {$(3~4)$} (bi4);
\end{scope}

\begin{scope}[every node/.style={inner sep=0.4mm, draw, circle, minimum size = 0pt}]
    \node[label=below:$+0$] (p0) at (0,-9) {};
    \node[label=below:$-1$] (m1) at (1,-9) {};
    \node[label=below:$+1$] (p1) at (2,-9) {};
    \node[label=below:$-2$] (m2) at (3,-9) {};
    \node[label=below:$+2$] (p2) at (4,-9) {};
    \node[label=below:$-3$] (m3) at (5,-9) {};
    \node[label=below:$+3$] (p3) at (6,-9) {};
    \node[label=below:$-4$] (m4) at (7,-9) {};
    \node[label=below:$+4$] (p4) at (8,-9) {};
    \node[label=below:$-5$] (m5) at (9,-9) {};
    \node[label=below:$+5$] (p5) at (10,-9) {};
    \node[label=below:$-6$] (m6) at (11,-9) {};
    \node[label=below:$+6$] (p6) at (12,-9) {};
    \node[label=below:$-7$] (m7) at (13,-9) {};
\end{scope}

\begin{scope}[dashed]
    \path [-] (p1) edge node [black, pos=0.5, sloped, above, yshift=-0.05cm] {$\deletionElement$} (m2);
    \path [-] (p5) edge node [black, pos=0.5, sloped, above, yshift=-0.05cm] {$\deletionElement$} (m6);
\end{scope}

\begin{scope}[every edge/.style={draw=black}]
    \path [-] (p3) edge node [black, pos=0.5, sloped, above, yshift=-0.05cm] {} (m4);
    \path [-] (p0) edge node [black, pos=0.5, sloped, above, yshift=-0.05cm] {} (m1);
    \path [-] (p6) edge node [black, pos=0.5, sloped, above, yshift=-0.05cm] {} (m7);
    \path [-] (p2) edge node [black, pos=0.5, sloped, above, yshift=-0.05cm] {} (m3);
    \path [-] (p4) edge node [black, pos=0.5, sloped, above, yshift=-0.05cm] {} (m5);
\end{scope}

\begin{scope}[every edge/.style={draw=black}]
    \path [-] (p0) edge  [bend left=80] (m1);
    \path [-] (p1) edge  [bend left=80] (m2);
    \path [-] (p2) edge  [bend left=80] (m3);
    \path [-] (p3) edge  [bend left=80] (m4);
    \path [-] (p4) edge  [bend left=80] (m5);
    \path [-] (p5) edge  [bend left=80] (m6);
    \path [-] (p6) edge  [bend left=80] (m7);
\end{scope}

\end{tikzpicture}

\caption{Exemplo de uma sequência de transposições que agem em três ciclos $C = (6, 4, 2)$, $D = (3, 1)$ e $E = (7, 5)$, tal que $C$ é o único ciclo rotulado. Essas operações criam três novos ciclos limpos. Neste exemplo, temos $A = (0~3~\deletionElement~2~1~\deletionElement~6~5~\deletionElement~4~7)$ e $\iota^n$ com $n = 6$.
\label{cap4:fig:transp_long_non_oriented_3}}
\end{figure}

\begin{lemma}\label{cap4:lemma:transposition_nonoriented_clean_cycles}
	Para qualquer instância $\I = (A, \iota^n)$, se $|\pi^A| + 1 - c(\I) > 0$, $G(\I)$ só possui ciclos limpos e $G(\I)$ não possui ciclos divergentes ou ciclos orientados, então existe sequência $S_\tau = (\tau_1, \tau_2)$ tal que $\Delta \cclean(\I, S_\tau) + \Delta \lambda_{\insertion}(\I, S_\tau) = 2$.
\end{lemma}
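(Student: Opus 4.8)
The plan is to reduce Lemma~\ref{cap4:lemma:transposition_nonoriented_clean_cycles} to a two-step application of classical results on sorting unsigned permutations by transpositions, observing that when $G(\I)$ is free of labeled edges and of divergent cycles, the labeled cycle graph behaves exactly like the cycle graph of an unsigned permutation. First I would record that, under the hypotheses, every edge of $G(\I)$ is clean, so $\cclean(\I) = c(\I)$, $\lambda_{\insertion}(\I) = 0$, and any transposition $\tau$ satisfies $\Delta\lambda_{\insertion}(\I,\tau) = 0$ (a transposition touches neither $\Sigma_A$ nor $\Sigma_{\iota^n}$ nor creates labeled destination edges). Hence the claim $\Delta \cclean(\I, S_\tau) + \Delta \lambda_{\insertion}(\I, S_\tau) = 2$ collapses to: there is a $(2,2)$-sequence $S_\tau = (\tau_1,\tau_2)$ of transpositions, i.e. $\Delta c(\I,\tau_1) = 2$ and, writing $A' = A \comp \tau_1$, $\Delta c(A',\iota^n,\tau_2) = 2$.

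Next I would invoke the structural facts of Bafna and Pevzner~\cite{1998-bafna-pevzner} that are already quoted in the earlier parts of the excerpt. Since $|\pi^A| + 1 - c(\I) > 0$ the graph has at least one non-unitary cycle. Because there are no oriented cycles (so Lemma~\ref{cap4:lemma:transposition_oriented_cycles} does not directly apply) and no divergent cycles, we are in the all-convergent, all-non-oriented regime: this is precisely the setting of Bafna and Pevzner's Lemma~4.6 and Theorem~4.7, which guarantee that for any two non-unitary non-oriented cycles there is a transposition $\tau_1$ with $\Delta c(\I,\tau_1) = 0$ that turns one of them into an oriented cycle while, crucially, making another cycle oriented as well, or — if only one non-unitary cycle remains — Theorem~4.7 still produces an oriented cycle after one transposition. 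Concretely I would argue: apply the $0$-move $\tau_1$ from Theorem~4.7 to obtain $A'$ with an oriented (clean) cycle $C'$; then apply Lemma~\ref{cap4:lemma:transposition_oriented_cycles} to $C'$ (whose hypotheses — no labeled destination edges, and the existence of an oriented cycle — are now met, and $C'$ being clean forces $\Delta\cclean = \Delta c$) to get a transposition $\tau_2$ with $\Delta c(A',\iota^n,\tau_2) = 2$. Since $\tau_1$ preserved $c$, the pair $(\tau_1,\tau_2)$ gives a net $\Delta c = 2$ and, as all edges stay clean, $\Delta\cclean(\I,S_\tau) = 2$ and $\Delta\lambda_{\insertion}(\I,S_\tau) = 0$.

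The main obstacle I anticipate is handling the movement of the $\deletionElement$-labeled source edges carefully: although the hypothesis says destination edges are all clean, there may well be labeled \emph{source} edges, and I must verify that the transpositions supplied by the Bafna--Pevzner machinery can be chosen so as never to create a labeled destination edge and so as not to increase $\lambda_{\insertion}$. This is exactly the bookkeeping already carried out in the proofs of Lemmas~\ref{cap4:lemma:transposition_oriented_cycles}, \ref{cap4:lemma:transposition_nonoriented_multiple_cycles} and \ref{cap4:lemma:transposition_nonoriented_single_cycle} — pushing any $\deletionElement$ into the longer residual cycle so the newly created unitary cycle is clean — so I would reuse that argument verbatim. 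A secondary subtlety is the degenerate case where after $\tau_1$ the only oriented cycle created is itself small; but Lemma~\ref{cap4:lemma:transposition_oriented_cycles} only needs an oriented cycle of size $\geq 2$, which a $0$-move on a non-oriented cycle always yields, so this case causes no real difficulty. Finally I would note that if one tries to sidestep Theorem~4.7 and instead argue directly, one risks the situation where $G(\I)$ has a single non-unitary cycle that is non-oriented and cannot be split by one $2$-move; Theorem~4.7 is precisely what rules this out, which is why I lean on it rather than on an ad hoc case analysis.
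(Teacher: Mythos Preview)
Your overall strategy---reduce to the classical unsigned-permutation setting and invoke Bafna--Pevzner's result that, when only convergent non-oriented cycles remain, there is a pair $(\tau_1,\tau_2)$ with net $\Delta c = 2$---is exactly the paper's. The paper's proof is three sentences: since all cycles are clean and there are no divergent cycles, $A$ is a permutation; hence the Bafna--Pevzner $(0,2)$-sequence applies; and transpositions neither insert nor remove elements, so $\Delta\cclean+\Delta\lambda_\insertion=2$.

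However, you have the definition of ``clean cycle'' backwards. In the labeled cycle graph of Chapter~\ref{cap:label:indel}, a cycle is \emph{limpo} when all of its \emph{source} edges are clean, not its destination edges. Thus the hypothesis ``$G(\I)$ só possui ciclos limpos'' says precisely that there are no $\deletionElement$ elements in $A$---so your entire ``main obstacle'' paragraph about carefully pushing $\deletionElement$'s into residual cycles is addressing a situation the hypothesis already rules out, and the bookkeeping from Lemmas~\ref{cap4:lemma:transposition_oriented_cycles}--\ref{cap4:lemma:transposition_nonoriented_single_cycle} is simply not needed here. Conversely, your opening claim that ``every edge of $G(\I)$ is clean'' and hence $\lambda_\insertion(\I)=0$ is not literally forced by the stated hypothesis alone (destination edges could in principle carry labels); the paper finesses this by noting that $A$ is a permutation and implicitly relying on the algorithm's flow, which has removed all insertion runs before this lemma is invoked. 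Once you read ``clean cycle'' correctly, your argument collapses to the paper's: no $\deletionElement$'s, no divergent cycles, so $G(\I)$ is an ordinary permutation cycle graph and Bafna--Pevzner gives the required pair of transpositions.
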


\begin{proof}
	Nesse caso, como todos os ciclos de $G(\I)$ são limpos e não existem ciclos divergentes, a string $A$ é uma permutação e podemos usar os resultados da Ordenação de Permutações por Transposições para grafo de ciclos. Considerando as condições do enunciado deste lema, Bafna e Pevzner~\cite{1998-bafna-pevzner} mostraram que sempre existe uma sequência $S_\tau = (\tau_1, \tau_2)$ que aumenta o número de ciclos em dois. Como transposições não inserem ou removem elementos, isso implica que $\Delta \cclean(\I, S_\tau) + \Delta \lambda_{\insertion}(\I, S_\tau) = 2$.
\end{proof}

Com esses lemas, podemos apresentar os algoritmos~\ref{cap4:algorithm:transp_cycle_graph} e \ref{cap4:algorithm:transp_rev_cycle_graph}. Assim como os algoritmos com {\it block interchanges}, esses algoritmos também possuem complexidade de tempo de $O(n^2)$, já que o grafo de ciclos rotulado $G(\I)$ pode ser criado em tempo linear, todos os laços de repetição executam $O(n)$ vezes, e toda operação dentro dos laços pode ser realizada em tempo linear.

No Teorema~\ref{cap4:theorem:2_approx_transp_cycle_graph} demonstramos que esses algoritmos possuem fator de aproximação igual a $2$ para os problemas de Distância de Rearranjos considerando os modelos $\Mindel_{\tau}$ e $\Mindel_{\rho,\tau}$.

\begin{theorem}\label{cap4:theorem:2_approx_transp_cycle_graph}
	Os algoritmos~\ref{cap4:algorithm:transp_cycle_graph} e \ref{cap4:algorithm:transp_rev_cycle_graph} são algoritmos de $2$-aproximação para os problemas da Distância de Transposições e Indels e da Distância de Transposições, Reversões e Indels, respectivamente.
\end{theorem}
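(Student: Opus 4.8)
The plan is to establish the approximation guarantee for Algorithms~\ref{cap4:algorithm:transp_cycle_graph} and \ref{cap4:algorithm:transp_rev_cycle_graph} exactly as was done for Theorem~\ref{cap4:theorem:2_approx_bi_cycle_graph}: combine the lower bound from Lemma~\ref{cap4:lemma:lb_graph_transp} with an upper bound on the number of operations performed by each algorithm. First I would argue that both algorithms terminate with $A = \iota^n$. Recall that $A = \iota^n$ if and only if $|\pi^A| + 1 - \cclean(\I) + \lambda_\insertion(\I) = 0$; I would check that each of the three (resp.\ four, for the reversal variant) \texttt{while} loops makes progress toward this quantity being zero and that no loop can block the others, so that when all loops exit we must have reached the identity.

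Next I would bound the length of the returned sequence $S$. For Algorithm~\ref{cap4:algorithm:transp_cycle_graph}: the first loop removes insertion runs, and by Corollary~\ref{cap4:corollary:insertion_remove_run} each such insertion satisfies $\Delta\cclean(\I,\insertion) + \Delta\lambda_\insertion(\I,\insertion) = 1$; once that loop ends $G(\I)$ has only clean destination edges. The second loop, which runs while $|\pi^A| + 1 - c(\I) > 0$, applies in each iteration either a transposition acting on an oriented cycle (Lemma~\ref{cap4:lemma:transposition_oriented_cycles}), or one of the transposition moves of Lemmas~\ref{cap4:lemma:transposition_nonoriented_multiple_cycles}, \ref{cap4:lemma:transposition_nonoriented_single_cycle}, or \ref{cap4:lemma:transposition_nonoriented_clean_cycles}, depending on which of the mutually exhaustive structural cases holds (oriented cycle present; at least two non-unitary labeled cycles; a single non-unitary labeled cycle; all non-unitary cycles clean). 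In every case the amortized gain per operation in $\cclean + \lambda_\insertion$ is at least $1$ (the single-labeled-cycle case uses a block of $k \in \{2,3\}$ transpositions that gains exactly $k$). The final loop removes deletion runs, each via a deletion with gain $1$ by Corollary~\ref{cap4:corollary:deletion_remove_runs}. Hence every operation applied by the algorithm contributes at least $1$ to decreasing $|\pi^A| + 1 - \cclean(\I) + \lambda_\insertion(\I)$, so $|S| \leq |\pi^A| + 1 - \cclean(\I) + \lambda_\insertion(\I)$. Dividing by the lower bound $d_{\Mindel_{\tau}}(\I) \geq \lceil (|\pi^A| + 1 - \cclean(\I) + \lambda_\insertion(\I))/2 \rceil$ from Lemma~\ref{cap4:lemma:lb_graph_transp} yields the factor $2$.

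For Algorithm~\ref{cap4:algorithm:transp_rev_cycle_graph} the argument is the same, except the middle loop first disposes of divergent cycles: while $G(\I)$ has a divergent cycle, Corollary~\ref{cap4:corollary:reversal_divergent} provides a reversal with gain $1$, and otherwise the transposition cases above apply (the instance being a signed string only changes whether divergent cycles can appear). Again each operation gains at least $1$, giving $|S| \leq |\pi^A| + 1 - \cclean(\I) + \lambda_\insertion(\I)$ and hence a $2$-approximation for $\Mindel_{\rho,\tau}$.

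The main obstacle I anticipate is the bookkeeping in the second loop: I must verify that the four structural cases invoked by Lemmas~\ref{cap4:lemma:transposition_oriented_cycles}--\ref{cap4:lemma:transposition_nonoriented_clean_cycles} really do cover every configuration reachable while $|\pi^A| + 1 - c(\I) > 0$ and $G(\I)$ has only clean destination edges, and that applying a move from one case does not leave the graph in a state that is never resolved — in particular that the multi-step sequences of Lemma~\ref{cap4:lemma:transposition_nonoriented_single_cycle} still amortize correctly so that the total count never exceeds $|\pi^A| + 1 - \cclean(\I) + \lambda_\insertion(\I)$. Once the case analysis is seen to be exhaustive and each amortized gain is at least $1$, the theorem follows immediately from Lemma~\ref{cap4:lemma:lb_graph_transp}, in complete analogy with the proof of Theorem~\ref{cap4:theorem:2_approx_bi_cycle_graph}.

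\begin{proof}
	Considere o Algoritmo~\ref{cap4:algorithm:transp_cycle_graph}. O primeiro laço aplica apenas inserções que, pelo Corolário~\ref{cap4:corollary:insertion_remove_run}, satisfazem $\Delta \cclean(\I, \insertion) + \Delta \lambda_{\insertion}(\I, \insertion) = 1$, e ao seu término o grafo $G(\I)$ só possui arestas de destino limpas. No segundo laço, enquanto $|\pi^A| + 1 - c(\I) > 0$, o grafo possui algum ciclo não unitário; se existe ciclo orientado, o Lema~\ref{cap4:lemma:transposition_oriented_cycles} fornece uma transposição com ganho pelo menos $1$; caso contrário, dependendo de existirem dois ciclos rotulados não unitários, um único ciclo rotulado não unitário, ou apenas ciclos limpos não unitários, usamos respectivamente os lemas~\ref{cap4:lemma:transposition_nonoriented_multiple_cycles}, \ref{cap4:lemma:transposition_nonoriented_single_cycle} e \ref{cap4:lemma:transposition_nonoriented_clean_cycles}, sendo que no caso do Lema~\ref{cap4:lemma:transposition_nonoriented_single_cycle} aplicamos um bloco de $k \in \{2,3\}$ transposições com ganho total $k$. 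O terceiro laço aplica apenas deleções com ganho $1$, pelo Corolário~\ref{cap4:corollary:deletion_remove_runs}. Logo, cada operação aplicada pelo algoritmo diminui o valor de $|\pi^A| + 1 - \cclean(\I) + \lambda_{\insertion}(\I)$ em pelo menos uma unidade, de modo que a sequência $S$ retornada satisfaz $|S| \leq |\pi^A| + 1 - \cclean(\I) + \lambda_{\insertion}(\I)$. Pelo Lema~\ref{cap4:lemma:lb_graph_transp}, esse algoritmo é uma $2$-aproximação para a Distância de Transposições e Indels em Strings sem Sinais.

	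A prova é similar para o Algoritmo~\ref{cap4:algorithm:transp_rev_cycle_graph}, exceto pelo fato de que, no segundo laço, caso $G(\I)$ possua um ciclo divergente, aplicamos uma reversão com ganho $1$ pelo Corolário~\ref{cap4:corollary:reversal_divergent}, e caso contrário recaímos nos casos de transposição descritos acima. Novamente, cada operação diminui $|\pi^A| + 1 - \cclean(\I) + \lambda_{\insertion}(\I)$ em pelo menos uma unidade, o que, junto com o Lema~\ref{cap4:lemma:lb_graph_transp}, garante o fator de aproximação igual a $2$ para a Distância de Transposições, Reversões e Indels em Strings com Sinais.
\end{proof}
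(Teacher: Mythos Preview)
Your overall strategy is exactly the paper's: show that every block $S'$ applied by the algorithm satisfies $\Delta \cclean(\I,S') + \Delta \lambda_{\insertion}(\I,S') \geq |S'|$, so $|S| \leq |\pi^A| + 1 - \cclean(\I) + \lambda_{\insertion}(\I)$, and then invoke Lema~\ref{cap4:lemma:lb_graph_transp}. The paper's proof is literally this one sentence.

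However, you have misread the structure of Algoritmos~\ref{cap4:algorithm:transp_cycle_graph} e \ref{cap4:algorithm:transp_rev_cycle_graph}. They are \emph{not} organized like Algoritmos~\ref{cap4:algorithm:bi_cycle_graph} e \ref{cap4:algorithm:bi_rev_cycle_graph} with three separate \texttt{while} loops (insertions, then conservative operations, then deletions). Each of them has a \emph{single} \texttt{while} loop on the condition $|\pi^A| + 1 - \cclean(\I) + \lambda_\insertion(\I) > 0$, inside which a cascade of \texttt{if/else if} branches picks $S'$: first insertion runs, then deletions on \emph{unitary} labeled cycles (second branch, not a final phase), then the oriented-cycle transposition, then the multiple-labeled-cycles transposition, then the single-labeled-cycle block, and finally the all-clean case. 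Your ``primeiro/segundo/terceiro laço'' narrative and the condition ``enquanto $|\pi^A| + 1 - c(\I) > 0$'' do not match the algorithms as stated. This does not break the mathematics --- each branch still yields amortized gain at least $1$ --- but your written proof describes a different algorithm, so it should be rewritten to reflect the actual single-loop, case-driven control flow before it can stand as a proof of this theorem.
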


\begin{proof}
	Considere o Algoritmo~\ref{cap4:algorithm:transp_cycle_graph}. A cada iteração, qualquer sequência $S'$ aplicada pelo algoritmo possui $k$ operações e satisfaz $\Delta \cclean(\I, S') + \Delta \lambda_{\insertion}(\I, S') \geq k$. Portanto, a sequência retornada ao fim do algoritmo possui tamanho de no máximo $|\pi^A| + 1 - \cclean(\I) + \lambda_{\insertion}(\I)$. Pelo Lema~\ref{cap4:lemma:lb_graph_transp}, esse algoritmo é uma $2$-aproximação para o problema da Distância de Transposições e Indels.
	
	A prova é similar para o Algoritmo \ref{cap4:algorithm:transp_rev_cycle_graph}.
\end{proof}

\begin{algorithm}[h]
	\caption{Algoritmo de $2$-Aproximação para a Distância de Transposições e Indels\label{cap4:algorithm:transp_cycle_graph}}
	\DontPrintSemicolon
	\Entrada{Uma instância $\I = (A, \iota^n)$}
	\Saida{Uma sequência de rearranjos que transforma $A$ em $\iota^n$}
	Seja $S \gets \emptyset$\;

	\Enqto{$|\pi^A| + 1 - \cclean(\I) + \lambda_\insertion(\I) > 0$}{
		\uSe{$G(\I)$ possui {\it runs} de inserção}{
			Seja $S' = (\phi)$, tal que $\phi$ é uma inserção com $\Delta c(\I, \insertion) + \Delta \lambda(\I, \insertion) = 1$ (Corolário~\ref{cap4:corollary:insertion_remove_run})\;
		}\uSenaoSe{$G(\I)$ possui algum ciclo unitário rotulado}{
			Seja $S' = (\psi)$, tal que $\psi$ é uma deleção que remove um {\it run} de deleção de um ciclo unitário (Corolário~\ref{cap4:corollary:deletion_remove_runs})\;
		}\uSenaoSe{$G(\I)$ possui algum ciclo orientado}{
			Seja $S' = (\tau)$, tal que $\tau$ é uma transposição com $\Delta \cclean(\I, \tau) + \Delta \lambda_{\insertion}(\I, \tau) \geq 1$ (Lema~\ref{cap4:lemma:transposition_oriented_cycles})\;
		}\uSenaoSe{$G(\I)$ possui dois ou mais ciclos rotulados}{
			Seja $S' = (\tau)$, tal que $\tau$ é uma transposição com $\Delta \cclean(\I, \tau) + \Delta \lambda_{\insertion}(\I, \tau) \geq 1$ (Lema~\ref{cap4:lemma:transposition_nonoriented_multiple_cycles})\;
		}\uSenaoSe{$G(\I)$ possui um único ciclo rotulado}{
			Seja $S'$ uma sequência de $k$ transposições tal que $\Delta \cclean(\I, S') + \Delta \lambda_{\insertion}(\I, S') = k$ (Lema~\ref{cap4:lemma:transposition_nonoriented_single_cycle})\;
		}\uSenao(){
			Seja $S' = (\tau_1, \tau_2)$, tal que $\Delta \cclean(\I, S') + \Delta \lambda_{\insertion}(\I, S') = 2$ (Lema~\ref{cap4:lemma:transposition_nonoriented_clean_cycles})\;
		}
		$A \gets A \comp S'$\;
		Adicione as operações de $S'$ na sequência $S$\;
	}

	{\bf retorne} a sequência $S$\;
\end{algorithm}

\begin{algorithm}[h]
	\caption{Algoritmo de $2$-Aproximação para a Distância de Transposições, Reversões e Indels\label{cap4:algorithm:transp_rev_cycle_graph}}
	\DontPrintSemicolon
	\Entrada{Uma instância $\I = (A, \iota^n)$}
	\Saida{Uma sequência de rearranjos que transforma $A$ em $\iota^n$}
	Seja $S \gets \emptyset$\;

	\Enqto{$|\pi^A| + 1 - \cclean(\I) + \lambda_\insertion(\I) > 0$}{
		\uSe{$G(\I)$ possui {\it runs} de inserção}{
			Seja $S' = (\phi)$, tal que $\phi$ é uma inserção com $\Delta c(\I, \insertion) + \Delta \lambda(\I, \insertion) = 1$ (Corolário~\ref{cap4:corollary:insertion_remove_run})\;
		}\uSenaoSe{$G(\I)$ possui ciclos divergentes}{
			Seja $S' = ({\rho})$, tal que $\rho$ é uma reversão com $\Delta \cclean(\I, \rho) + \Delta \lambda_{\insertion}(\I, \rho) = 1$ (Corolário~\ref{cap4:corollary:reversal_divergent})\;
		}\uSenaoSe{$G(\I)$ possui algum ciclo unitário rotulado}{
			Seja $S' = (\psi)$, tal que $\psi$ é uma deleção que remove um {\it run} de deleção de um ciclo unitário (Corolário~\ref{cap4:corollary:deletion_remove_runs})\;
		}\uSenaoSe{$G(\I)$ possui algum ciclo orientado}{
			Seja $S' = (\tau)$, tal que $\tau$ é uma transposição com $\Delta \cclean(\I, \tau) + \Delta \lambda_{\insertion}(\I, \tau) \geq 1$ (Lema~\ref{cap4:lemma:transposition_oriented_cycles})\;
		}\uSenaoSe{$G(\I)$ possui dois ou mais ciclos rotulados}{
			Seja $S' = (\tau)$, tal que $\tau$ é uma transposição com $\Delta \cclean(\I, \tau) + \Delta \lambda_{\insertion}(\I, \tau) \geq 1$ (Lema~\ref{cap4:lemma:transposition_nonoriented_multiple_cycles})\;
		}\uSenaoSe{$G(\I)$ possui um único ciclo rotulado}{
			Seja $S'$ uma sequência de $k$ transposições tal que $\Delta \cclean(\I, S') + \Delta \lambda_{\insertion}(\I, S') = k$ (Lema~\ref{cap4:lemma:transposition_nonoriented_single_cycle})\;
		}\uSenao(){
			Seja $S' = (\tau_1, \tau_2)$, tal que $\Delta \cclean(\I, S') + \Delta \lambda_{\insertion}(\I, S') = 2$ (Lema~\ref{cap4:lemma:transposition_nonoriented_clean_cycles})\;
		}
		$A \gets A \comp S'$\;
		Adicione as operações de $S'$ na sequência $S$\;
	}

	{\bf retorne} a sequência $S$\;
\end{algorithm}

\section{Conclusões}

Neste capítulo, estudamos problemas de Distância de Rearranjos em Genomas Desbalanceados utilizando a representação clássica de genomas. Para strings com e sem sinais, estudamos modelos que envolvem a combinação de {\it indels} com reversões, transposições e {\it block interchanges}. 

Na Seção~\ref{cap4:section:complexidade}, demonstramos que os seguintes problemas são NP-difíceis: a Distância de Reversões e Indels em Strings sem Sinais; a Distância de Transposições e Indels em Strings sem Sinais; e a Distância de Transposições, Reversões e Indels em Strings com ou sem Sinais.

Na Seção~\ref{cap4:section:breakpoints}, apresentamos algoritmos de aproximação que usam uma adaptação do conceito de {\it breakpoints} para genomas desbalanceados. Já na Seção~\ref{cap4:section:grafo_ciclos}, apresentamos algoritmos de $2$-aproximação usando o grafo de ciclos rotulado, que foi introduzido na Seção~\ref{cap2:sec:grafo_ciclos_strings} do Capítulo~\ref{cap:label:fundamentacao}. A Tabela~\ref{cap4:table:resumo_resultados} resume o fator de aproximação alcançado para cada problema estudado neste capítulo.

\begin{table}[H]
\caption{Resumo dos algoritmos apresentados neste capítulo para os problemas de Distância de Rearranjos em Genomas Desbalanceados.\label{cap4:table:resumo_resultados}}
\resizebox{0.99\textwidth}{!}{
\begin{tabular}{@{}lcc@{}}
\toprule
Modelo & Seção~\ref{cap4:section:breakpoints} & Seção~\ref{cap4:section:grafo_ciclos} \\ \midrule
Reversões e Indels (sem sinais) & 2-aproximação & - \\
Transposições e Indels (sem sinais) & 3-aproximação & 2-aproximação \\
Transposições, Reversões e Indels (sem sinais) & 3-aproximação & - \\
Transposições, Reversões e Indels (com sinais)& - & 2-aproximação \\
Block Interchanges e Indels (sem sinais) & - & 2-aproximação \\
Block Interchanges, Reversões e Indels (com sinais)& - & 2-aproximação \\ \bottomrule
\end{tabular}
}
\end{table}

\chapter{Distância em Genomas Desbalanceados com Regiões Intergênicas}\label{cap:label:intergenicos}

Estudos que incorporam regiões intergênicas são relativamente recentes. Esses estudos assumem que não há genes repetidos nos genomas e que inserções e deleções afetam apenas regiões intergênicas. Dessa forma, os genomas possuem o mesmo conjunto de genes e podem ser modelados usando permutações {\revisaof e uma lista de valores numéricos representando os tamanhos das regiões intergênicas}. Para permutações com sinais, Fertin e coautores~\cite{2017-fertin-etal} mostraram que o problema de Distância de Rearranjos Intergênicos para o modelo que contém apenas DCJs é NP-difícil e apresentaram uma $4/3$-aproximação, um esquema de aproximação de tempo polinomial, e uma formulação de programação linear inteira. Quando inserções e deleções de nucleotídeos (i.e., {\it indels} que afetam apenas o tamanho de regiões intergênicas, mas não inserem ou removem genes) são incorporadas ao modelo com DCJs para genomas balanceados, a distância pode ser calculada em tempo polinomial~\cite{2016b-bulteau-etal}. Note que ao considerar permutações em problemas com representação intergênica, qualquer {\it indel} intergênico deve ser um {\it indel} de nucleotídeos já que o uso de permutações implica que os genomas são balanceados.

Oliveira e coautores~\cite{2021b-oliveira-etal} apresentaram uma $2$-aproximação para reversões intergênicas em permutações com sinais, além da prova de NP-dificuldade para esse problema. Eles também apresentaram uma $2$-aproximação para uma versão do problema com reversões e {\it indels} intergênicos em permutações com sinais, que ainda possui sua complexidade em aberto. Já para o modelo com reversões e transposições intergênicas em permutações com sinais, Oliveira e coautores~\cite{2021a-oliveira-etal} apresentaram uma $3$-aproximação e uma prova de NP-dificuldade. 

Considerando permutações sem sinais, Brito e coautores~\cite{2020a-brito-etal} demonstraram que o problema de Distância de Rearranjos Intergênicos é NP-difícil para os seguintes modelos: reversões intergênicas; reversões e {\it indels} intergênicos; reversões e transposições intergênicas; e reversões, transposições e {\it indels} intergênicos. Os autores apresentaram uma $4$-aproximação para os modelos com reversões intergênicas e reversões e {\it indels} intergênicos, e apresentaram uma $4.5$-aproximação para os outros dois modelos. Para transposições intergênicas em permutações sem sinais, Oliveira e coautores~\cite{2020-oliveira-etal} desenvolveram uma $3.5$-aproximação e provaram que o problema é NP-difícil. 

Neste capítulo, estudamos problemas de Distância de Rearranjos Intergênicos em genomas desbalanceados, considerando os seguintes modelos: 
\begin{itemize}
	\item $\Mindel_{\rho}=\{\rho, \psi, \phi\}$: reversões e {\it indels} em strings com ou sem sinais;
	\item $\Mindel_{\tau}=\{\tau, \psi, \phi\}$: transposições e {\it indels} em strings sem sinais;
	\item $\Mindel_{\rho,\tau}=\{\rho, \tau, \psi, \phi\}$: reversões, transposições, e {\it indels} em strings com ou sem sinais.
\end{itemize}

\section{Complexidade dos Problemas}\label{cap5:section:complexidade}

Nesta seção, apresentamos provas de NP-dificuldade para problemas de Distância de Rearranjos Intergênicos em Genomas Desbalanceados considerando os modelos de rearranjos $\Mindel_{\rho}$, $\Mindel_{\tau}$ e $\Mindel_{\rho,\tau}$.

\begin{lemma}\label{cap5:lemma:complexidade}
	O problema de Distância de Rearranjos Intergênicos é NP-difícil para os modelos $\Mindel_{\rho}$ e $\Mindel_{\tau}$, considerando strings sem sinais, e para o modelo $\Mindel_{\rho,\tau}$, considerando strings com ou sem sinais.
\end{lemma}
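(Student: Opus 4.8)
A estratégia é reduzir, para cada modelo, o problema de Distância de Rearranjos em Genomas Desbalanceados (que já foi provado ser NP-difícil no Lema~\ref{cap4:lemma:complexidade}) para o problema de Distância de Rearranjos Intergênicos correspondente. A ideia central é simular uma instância clássica $\I = (A, \iota^n)$ por uma instância intergênica $\Ig = (\G_o, \G_d)$ em que todas as regiões intergênicas têm tamanho nulo (ou, de forma equivalente, todas iguais a um mesmo valor fixo), de modo que as operações intergênicas disponíveis se comportem exatamente como as operações clássicas: sem nucleotídeos para redistribuir, uma reversão intergênica $\rho^{(i,j)}_{(x,y)}$ fica restrita a $x = y = 0$ e atua como a reversão clássica $\rho(i,j)$, e analogamente para transposições e {\it indels}. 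Assim, $d_{\Mindel}(\Ig) = d_{\Mindel}(\I)$, e a equivalência entre as versões de decisão segue diretamente.

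\textbf{Passos, na ordem.} Primeiro, formalizaria a construção: dada a instância clássica $(A, \iota^n, k)$ do problema desbalanceado, construo $\G_o = (A, \breve{A})$ e $\G_d = (\iota^n, \breve{\iota}^n)$ com $\breve{A}_i = 0$ para todo $i$ e $\breve{\iota}^n_j = 0$ para todo $j$, mantendo o mesmo inteiro $k$. Essa construção é claramente polinomial. Segundo, provaria a direção ``$\Rightarrow$'': se existe uma sequência $S$ de rearranjos clássicos de tamanho $\leq k$ que transforma $A$ em $\iota^n$, então substituo cada operação clássica pela sua contraparte intergênica com os parâmetros de corte iguais a zero (e {\it indels} intergênicos inserindo listas $\breve{\sigma}$ nulas), obtendo uma sequência de mesmo tamanho que transforma $\G_o$ em $\G_d$. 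Terceiro, provaria a direção ``$\Leftarrow$'': dada uma sequência intergênica $S'$ de tamanho $\leq k$ transformando $\G_o$ em $\G_d$, observo que, como a soma total de nucleotídeos em $\G_o$ é zero e permanece zero ao longo de toda a sequência (nenhuma região pode ficar negativa), todo corte de região intergênica é forçosamente trivial; logo cada operação de $S'$ projeta-se sobre uma operação clássica bem definida sobre as strings, produzindo uma sequência $S$ de mesmo tamanho que ordena a instância clássica. Quarto, concluiria que a equivalência vale para $\Mindel_{\rho}$ e $\Mindel_{\tau}$ em strings sem sinais e para $\Mindel_{\rho,\tau}$ em strings com ou sem sinais, exatamente os casos cobertos pelo Lema~\ref{cap4:lemma:complexidade}, e que portanto todos esses problemas intergênicos são NP-difíceis.

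\textbf{Principal obstáculo.} O ponto delicado é a direção ``$\Leftarrow$'', pois é preciso argumentar com cuidado que nenhuma operação intergênica da solução $S'$ pode tirar proveito da flexibilidade dos cortes de regiões intergênicas --- em particular, que um {\it indel} intergênico não pode inserir material numa região e depois movê-lo de forma útil. O argumento-chave é invariante: a quantidade total de nucleotídeos intergênicos de $\G_o$ é $0$ e a de $\G_d$ também é $0$, e como nenhuma operação cria nucleotídeos ``negativos'', nenhuma inserção intergênica de $S'$ pode de fato aumentar esse total sem que uma deleção posterior o compense; mais fortemente, ao longo da sequência a soma permanece identicamente nula, o que força $x = x' = y = y' = z = z' = 0$ em toda operação e $\breve{\sigma} \equiv 0$ em toda inserção, reduzindo cada operação intergênica à operação clássica correspondente. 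Uma alternativa ainda mais limpa é usar o número de {\it breakpoints} intergênicos: com todas as regiões intergênicas iguais, $\bi_{\M}(\Ig) = b_{\M}(\I)$, e qualquer solução ótima intergênica tem tamanho igual ao da solução clássica ótima, o que dá a equivalência sem precisar analisar operações individualmente. Deixaria a escrita final usando o argumento de invariante da soma nula, por ser o mais direto de verificar formalmente.
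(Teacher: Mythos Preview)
Sua estratégia --- reduzir do problema clássico desbalanceado (Lema~\ref{cap4:lemma:complexidade}) tomando todas as regiões intergênicas nulas --- é válida e um pouco diferente da do artigo, que reduz diretamente da Ordenação de Permutações (genomas \emph{balanceados}). A rota do artigo é mais limpa justamente porque, partindo de uma permutação, as restrições sobre indels proíbem qualquer inserção ou deleção de genes ($\Sigma_{\iota^n}\setminus\Sigma_\pi=\emptyset$ e não há $\alpha$), e então os únicos indels possíveis são puramente de nucleotídeos, visivelmente supérfluos. Sua rota via o caso desbalanceado também funciona, mas admite indels de genes, o que torna a direção ``$\Leftarrow$'' um pouco mais sutil.

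Há, porém, um erro concreto na sua direção ``$\Leftarrow$'': o invariante ``a soma total de nucleotídeos permanece identicamente nula ao longo de toda a sequência'' é falso para sequências $S'$ arbitrárias. Nada impede que $S'$ contenha, digamos, uma inserção $\phi^{(i,\emptyset,(5))}_{(0)}$ só de nucleotídeos, seguida de reversões que redistribuem esses $5$ nucleotídeos, e mais adiante uma deleção que os remove; a soma é positiva nos passos intermediários, e os cortes não são todos triviais. O que de fato fecha a prova é a projeção que você mesmo menciona, mas aplicada sem depender do invariante: para cada operação de $S'$, tome apenas sua ação sobre a string. Reversões e transposições intergênicas projetam-se nas clássicas correspondentes; indels que mexem em genes projetam-se em indels clássicos; indels só de nucleotídeos projetam-se na identidade e são \emph{descartados}. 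A sequência projetada $S$ satisfaz $|S|\leq|S'|$ (não necessariamente igualdade) e transforma $A$ em $\iota^n$, o que basta para a redução. A ``alternativa mais limpa'' via breakpoints que você propõe não fecha sozinha: igualdade de limitantes inferiores não implica igualdade de distâncias.
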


\begin{proof}
	Considere o modelo $\Mindel_{\rho}$. A versão de decisão do problema da Distância de Reversões e Indels Intergênicos em Strings sem Sinais (\pname{IRID}) recebe como entrada uma instância $\Ig = (\G_o, \G_d, k)$, onde $\G_o = (A, \breve{A})$ e $\G_d = (\iota^n, \breve{\iota}^n)$, e consiste em decidir se $\G_o$ pode ser transformado em $\G_d$ usando no máximo $k$ operações de reversões ou {\it indels}.
	
	Nesta demonstração, apresentamos uma redução do problema de Ordenação de Permutações sem Sinais por Reversões (\pname{SbR}), que é um problema NP-difícil, para o problema \pname{IRID}. Dada uma instância $(\pi, k)$ para \pname{SbR}, tal que $\pi$ tem tamanho $n$, criamos a instância $(\G_o, \G_d, k)$ para o problema \pname{IRID}, onde $\breve{\gamma} = (0, \ldots, 0)$, $|\breve{\gamma}| = n+1$, $\G_o = (\pi, \breve{\gamma})$, e $\G_d = (\iota^n, \breve{\gamma})$. Mostramos a seguir que a instância $(\pi, k)$ do problema \pname{SbR} é satisfeita se, e somente se, a instância $(\G_o, \G_d, k)$ para o problema \pname{IRID} é satisfeita.

	Se existe uma sequência de $S$ reversões de tamanho no máximo $k$ que transforma $\pi$ em $\iota^n$, então uma sequência similar de mesmo tamanho pode ser usada para transformar $\G_o$ em $\G_d$, onde uma reversão em $S$ que inverte o segmento $(\pi_i, \ldots, \pi_j)$ é mapeada em $\rho^{(i,j)}_{(x,y)}$, com $x = y = 0$.

	Como $\Sigma_{\iota^n} \setminus \Sigma_{\pi} = \emptyset$ e $\breve{\gamma} = (0, \ldots, 0)$, uma sequência de rearranjos de tamanho mínimo que transforma $\G_o$ em $\G_d$ contém apenas reversões. Portanto, se existe um sequência de reversões e {\it indels} de tamanho no máximo $k$ que transforma $\G_o$ em $\G_d$, então existe uma sequência de reversões de tamanho no máximo $k$ que ordena $\pi$.

	A prova é similar para os outros modelos, já que a Ordenação de Permutações por Transposições~\cite{2012-bulteau-etal} e a Ordenação de Permutações com ou sem Sinais por Reversões e Transposições~\cite{2019b-oliveira-etal} são NP-difíceis.
\end{proof}

\section{Algoritmos de Aproximação usando Breakpoints}\label{cap5:section:breakpoints}

Nesta seção, apresentamos algoritmos de aproximação para o problema de Distância de Rearranjos Intergênicos em Strings sem Sinais considerando os modelos $\Mindel_{\rho}$, $\Mindel_{\tau}$ e $\Mindel_{\rho,\tau}$. Sempre consideramos que as strings de uma instância intergênica $\Ig = (\G_o, \G_d)$ estão nas suas versões estendidas.

Usamos o conceito de {\it breakpoints} intergênicos, apresentado na Seção~\ref{cap2:sec:breakpoints_intergenicos}, para a definição de limitantes para a distância e a criação dos algoritmos de aproximação. A seguir, apresentamos uma outra definição utilizada nos limitantes para a distância, sendo que essa definição é similar à Definição~\ref{cap4:def:delta_phi}, apresentada no Capítulo~\ref{cap:label:indel}.

\begin{definition}\label{cap5:def:delta_phi}
Dada uma operação (ou sequência de rearranjos) $\beta$ e uma instância intergênica $\Ig = (\G_o, \G_d)$, com $\G_o = (A, \breve{A})$ e $\G_d = (\iota^n, \breve{\iota}^n)$, definimos $\Delta \Phi(\Ig, \beta) = \Delta \Phi(A, \iota^n, \beta) = |\Sigma_{\iota^n} \setminus \Sigma_{A}| - |\Sigma_{\iota^n} \setminus \Sigma_{A'}|$, onde $A' = A \comp \beta$.
\end{definition}

Assim como no problema da Distância de Reversões, Transposições e Indels em Strings sem Sinais, usamos o conceito de {\it breakpoints} de reversões sem sinais para o modelo $\Mindel_{\rho,\tau}$. Os próximos lemas mostram como um rearranjo afeta o valor de $\bi_{\M}(\Ig) + |\Sigma_{\iota^n} \setminus \Sigma_{A}|$.

\begin{lemma}\label{cap5:lemma:breakpoint_bound_id}
Dada uma instância intergênica $\Ig = (\G_o, \G_d)$, com $\G_o = (A, \breve{A})$ e $\G_d = (\iota^n, \breve{\iota}^n)$, temos que $\Delta \Phi(\Ig, \beta) + \Delta \bi_{\M}(\Ig, \beta) \leq 2$, para qualquer {\it indel} $\beta$ e modelo $\M \in \{\Mindel_{\rho}$, $\Mindel_{\tau}$, $\Mindel_{\rho,\tau}$\}.
\end{lemma}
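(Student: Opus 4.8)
The plan is to bound the effect of an intergenic indel $\beta$ on the quantity $\bi_{\M}(\Ig) + |\Sigma_{\iota^n} \setminus \Sigma_{A}|$ by considering the two cases $\beta = \insertion$ and $\beta = \deletion$ separately, closely mirroring the structure of Lemmas~\ref{cap4:lemma:breakpoint_insertion} and~\ref{cap4:lemma:breakpoint_deletion} from the previous chapter, but now tracking intergenic breakpoints instead of ordinary ones. The key observation I would rely on is that an intergenic breakpoint is either an ordinary breakpoint or a non-breakpoint pair whose intergenic region has the wrong size; hence any pair of adjacent elements that is \emph{not} an ordinary breakpoint can at most be converted into or out of being an intergenic breakpoint by a change in a single intergenic region, and an indel touches only a bounded number of intergenic regions.

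First I would handle the insertion case. Consider an intergenic insertion $\insertion_{(x)}^{(i,\sigma,\breve{\sigma})}$ applied to $\G_o$. If $\sigma$ is empty, the operation changes only the single intergenic region $\breve{A}_{i+1}$, leaving the string $A$ unchanged, so $\Delta\Phi = 0$ and at most one intergenic breakpoint (the pair $(A_i, A_{i+1})$) can be removed, giving $\Delta\Phi + \Delta\bi_\M \le 1 \le 2$. If $|\sigma| \ge 1$, then $\Delta\Phi(\Ig,\insertion) = |\sigma|$, and I would argue as in Lemma~\ref{cap4:lemma:breakpoint_insertion} that each pair $(\sigma_k,\sigma_{k+1})$ is an (ordinary, hence intergenic) breakpoint because consecutive elements of $\sigma$ correspond to maximal contiguous segments of $\G_d$ and are therefore non-adjacent in $\iota^n$; so at least $|\sigma|-1$ intergenic breakpoints are created, and only the single pair that was split (around position $i$ in $A$) can lose an intergenic breakpoint, so $\Delta\bi_\M(\Ig,\insertion) \le 1 - (|\sigma|-1) = 2 - |\sigma|$, whence $\Delta\Phi + \Delta\bi_\M \le 2$.

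For the deletion case, consider $\deletion_{(x,y)}^{(i,j)}$. Since $\Delta\Phi(\Ig,\deletion) = 0$ (a deletion removes only $\alpha$-elements, which are never in $\Sigma_{\iota^n}$, and so does not change $|\Sigma_{\iota^n}\setminus\Sigma_A|$), it suffices to show $\Delta\bi_\M(\Ig,\deletion) \le 2$. A deletion removes a contiguous block of $\alpha$'s (when $i<j$) and glues together the two neighbouring intergenic regions into a new one; by the definition of intergenic breakpoint there are no intergenic breakpoints strictly inside a run of $\alpha$'s, so the only pairs whose breakpoint status can change are the two at the boundaries $(A_{i-1},A_i)$ and $(A_j,A_{j+1})$, which collapse into the single new pair $(A_{i-1},A_{j+1})$; at most two intergenic breakpoints are removed and the new pair contributes at most one new intergenic breakpoint, but counting the net change this is bounded by $2$. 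When $i = j$ the deletion only modifies one intergenic region and removes at most one intergenic breakpoint. Combining the insertion and deletion analyses, and noting that these are the only indel types available, yields the claimed bound $\Delta\Phi(\Ig,\beta) + \Delta\bi_\M(\Ig,\beta) \le 2$ for all three models, since all three use either unsigned-reversal or transposition breakpoints and the argument above is insensitive to which of those definitions is in force.

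The main obstacle I anticipate is the careful bookkeeping at the deletion boundaries: I must make sure that converting the two boundary pairs into one new pair, while simultaneously changing whether the surviving intergenic region matches the corresponding region of $\G_d$, is counted correctly so that the net decrease of $\bi_\M$ never exceeds $2$ — in particular ruling out a spurious scenario where a boundary pair that was \emph{not} an intergenic breakpoint becomes one while two others are removed. Handling the subtlety that $\breve{\iota}^n_j$ is indexed by $j = \max(A_i,A_{i+1})$ and verifying that the gluing of intergenic regions cannot create more than one new intergenic breakpoint is the delicate point; everything else is a direct transcription of the indel arguments from Chapter~\ref{cap:label:indel} with ``breakpoint'' replaced by ``intergenic breakpoint''.
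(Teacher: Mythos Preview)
Your proposal is correct and follows essentially the same approach as the paper's proof: both handle insertion and deletion separately, use that consecutive elements of $\sigma$ always form breakpoints to get $\Delta\bi_\M \le 2 - |\sigma|$ for insertions, and use that no intergenic breakpoints occur inside a run of $\alpha$'s so only the two boundary pairs can be affected by a deletion. You are slightly more thorough in explicitly treating the empty-$\sigma$ insertion and the $i=j$ deletion; your anticipated ``obstacle'' is not a real issue, since a pair with exactly one $\alpha$ element is always an ordinary breakpoint (hence an intergenic one), so both boundary pairs are guaranteed to be intergenic breakpoints before the deletion.
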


\begin{proof}
Uma inserção $\phi$ é aplicada entre dois elementos de $A$ e, portanto, pode remover apenas o {\it breakpoint} intergênico formado por esses dois elementos, caso exista. Note que modelamos o genoma de forma que qualquer par de elementos em $\Sigma_{\iota^n} \setminus \Sigma_A$ forma um {\it breakpoint}. Seja $\sigma$ a string a ser adicionada por $\phi$. Para $1 \leq i < |\sigma|$, o par $(\sigma_i, \sigma_{i+1})$ é um {\it breakpoint} intergênico. Portanto, temos que $\Delta \Phi(\Ig, \phi) + \Delta \bi_{\M}(\Ig, \phi) \leq |\sigma| + (1 - (|\sigma| - 1 )) = 2$, para $\M \in \{\Mindel_{\rho}$, $\Mindel_{\tau}$, $\Mindel_{\rho,\tau}$\}.

Uma deleção $\psi$ só pode ser aplicada em uma sequência contígua de elementos com valor $\deletionElement$. Como, por definição, não existe {\it breakpoint} intergênico entre um par de elementos com valor $\deletionElement$, apenas {\it breakpoints} intergênicos presentes nas duas extremidades da sequência afetada podem ser removidos. Além disso, para deleções, sempre temos $\Delta \Phi(\Ig, \psi) = 0$, já que apenas inserções afetam o conjunto de elementos a serem adicionados. 
\end{proof}

\begin{lemma}\label{cap5:lemma:breakpoint_bound_rev}
Dada uma instância intergênica $\Ig = (\G_o, \G_d)$, com $\G_o = (A, \breve{A})$ e $\G_d = (\iota^n, \breve{\iota}^n)$, temos que $\Delta \Phi(\Ig, \rho) + \Delta \bi_{\M}(\Ig, \rho) \leq 2$, para qualquer reversão $\rho$ e modelo $\M \in \{\Mindel_{\rho}, \Mindel_{\rho,\tau}$\}.
\end{lemma}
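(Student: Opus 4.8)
A estratégia é análoga à da prova do Lema~\ref{cap5:lemma:breakpoint_bound_id}, mas agora o rearranjo considerado é uma reversão intergênica $\rho^{(i,j)}_{(x,y)}$, que afeta simultaneamente a ordem dos genes e os tamanhos de duas regiões intergênicas. Primeiro observo que uma reversão não insere nem remove elementos, logo $\Delta \Phi(\Ig, \rho) = 0$; assim basta mostrar que $\Delta \bi_{\M}(\Ig, \rho) \leq 2$, ou seja, que uma única reversão remove no máximo dois {\it breakpoints} intergênicos. Isso recupera o mesmo limitante já conhecido para {\it breakpoints} (clássicos) sob reversões (lemas análogos no Capítulo~\ref{cap:label:indel}), só que agora preciso controlar também a componente intergênica da definição de {\it breakpoint} intergênico.

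Em seguida, escrevo explicitamente o efeito de $\rho^{(i,j)}_{(x,y)}$ em $\G_o = (A, \breve{A})$, usando a definição da reversão intergênica: o segmento $(A_i~\ldots~A_j)$ é invertido (e seus sinais são invertidos, caso $A$ tenha sinais), a região intergênica $\breve{A}_i$ é quebrada em $(x,x')$ e $\breve{A}_{j+1}$ em $(y,y')$, e após a operação temos $\breve{A}'_i = x+y$ e $\breve{A}'_{j+1} = x'+y'$; todas as outras regiões intergênicas internas ao segmento apenas trocam de posição (são apenas reordenadas). O ponto central é: para qualquer par de elementos consecutivos $(A'_k, A'_{k+1})$ com $i \leq k < j$, esse par corresponde, a menos da orientação, ao par $(A_{k'+1}, A_{k'})$ em $A$, \emph{e a região intergênica entre eles não muda} — é exatamente uma das regiões internas que só foi transladada. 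Portanto, $(A'_k, A'_{k+1})$ é um {\it breakpoint} intergênico se, e somente se, $(A_{k'}, A_{k'+1})$ o era; nenhum {\it breakpoint} intergênico interno é criado ou destruído. Uso aqui o fato de que, para os modelos $\Mindel_{\rho}$ e $\Mindel_{\rho,\tau}$, a definição de {\it breakpoint} é a de reversões sem sinais, de modo que inverter uma orientação não altera a condição de adjacência para pares internos (o caso com sinais, para o modelo $\Mindel_{\rho}$, é tratado do mesmo modo observando que {\it breakpoints} de reversões com sinais e sem sinais coincidem nos pares internos invertidos por $\rho$).

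Resta então mostrar que os únicos pares que podem mudar de status são $(A_{i-1}, A_i)$ e $(A_j, A_{j+1})$, que após a reversão se tornam $(A_{i-1}, -A_j)$ (ou $(A_{i-1}, A_j)$ sem sinais) e $(-A_i, A_{j+1})$, respectivamente. Cada um desses dois pares pode deixar de ser {\it breakpoint} intergênico: para isso é necessário que o par de elementos deixe de ser um {\it breakpoint} (condição sobre a sequência) \emph{e} que o novo valor da região intergênica correspondente ($x+y$ ou $x'+y'$) coincida com a região intergênica esperada em $\G_d$; mas em qualquer caso cada par remove no máximo um {\it breakpoint} intergênico. Logo $\Delta \bi_{\M}(\Ig, \rho) \leq 2$, e somando com $\Delta \Phi(\Ig, \rho) = 0$ conclui-se $\Delta \Phi(\Ig, \rho) + \Delta \bi_{\M}(\Ig, \rho) \leq 2$ para $\M \in \{\Mindel_{\rho}, \Mindel_{\rho,\tau}\}$.

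O principal obstáculo — e o ponto que requer mais cuidado na escrita — é justificar rigorosamente que nenhum {\it breakpoint} intergênico \emph{interno} ao segmento revertido é criado, apesar da inversão de orientações e da permutação das regiões intergênicas internas: é preciso casar precisamente a correspondência de índices entre $A$ e $A' = A \comp \rho^{(i,j)}_{(x,y)}$ (inclusive nas regiões intergênicas $\breve{A}'_k = \breve{A}_{i+j-k}$ para os índices internos) e invocar a definição de {\it breakpoint} intergênico, que só testa a região intergênica à direita de um par \emph{não-{\it breakpoint}} de elementos. Feito isso, o restante do argumento é essencialmente contagem de duas adjacências de fronteira, idêntico em espírito ao caso clássico.
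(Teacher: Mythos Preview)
Your proposal is correct and follows essentially the same approach as the paper: observe that $\Delta \Phi(\Ig,\rho)=0$ because reversals do not insert or remove elements, and then argue that only the two boundary adjacencies $(A_{i-1},A_i)$ and $(A_j,A_{j+1})$ can change their intergenic-breakpoint status. The paper's own proof is extremely terse (it simply asserts that a reversal can remove at most these two breakpoints and concludes), whereas you spell out in detail why internal pairs and their intergenic regions are preserved under the reversal; this extra care is sound and, if anything, makes the argument more self-contained than the original. One minor remark: the section in the paper is explicitly about strings sem sinais, so your parenthetical about the signed case is unnecessary, though harmless.
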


\begin{proof}
Considere uma reversão $\rho^{(i,j)}_{(x,y)}$. Note que essa reversão só pode remover {\it breakpoints} entre os pares de elementos $(A_{i-1}, A_i)$ e $(A_j, A_{j+1})$, caso existam. Assim como as deleções, sempre temos que $\Delta \Phi(\Ig, \rho) = 0$. Portanto, temos que $\Delta \Phi(\Ig, \rho) + \Delta \bi_{\M}(\Ig, \rho) \leq 2$, para $\M \in \{\Mindel_{\rho}, \Mindel_{\rho,\tau}$\}.
\end{proof}

\begin{lemma}\label{cap5:lemma:breakpoint_bound_transp}
Dada uma instância intergênica $\Ig = (\G_o, \G_d)$, com $\G_o = (A, \breve{A})$ e $\G_d = (\iota^n, \breve{\iota}^n)$, temos que $\Delta \Phi(\Ig, \tau) + \Delta \bi_{\M}(\Ig, \tau) \leq 3$, para qualquer transposição $\tau$ e modelo $\M \in \{\Mindel_{\tau}, \Mindel_{\rho,\tau}$\}.
\end{lemma}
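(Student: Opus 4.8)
The plan is to mirror the structure of Lemma~\ref{cap5:lemma:breakpoint_bound_rev}, adjusting only for the fact that a transposition acts on three adjacencies rather than two. First I would fix an arbitrary intergenic transposition $\tau^{(i,j,k)}_{(x,y,z)}$ applied to $\G_o = (A, \breve{A})$, and recall from the definition of the intergenic transposition that the only pairs of adjacent elements of $A$ whose adjacency is broken are $(A_{i-1}, A_i)$, $(A_{j-1}, A_j)$, and $(A_{k-1}, A_k)$. Every other pair of consecutive elements of $A$ remains consecutive (in the same relative orientation, since transpositions do not flip anything), so it is a breakpoint in $\G_o$ if and only if the corresponding pair is a breakpoint in $\G_o \comp \tau$. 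This immediately gives $\Delta \bi_{\M}(\Ig, \tau) \leq 3$, because at most these three intergenic breakpoints can be destroyed and none can be created elsewhere.

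Next I would handle the $\Delta \Phi$ term. Since a transposition inserts no new elements and removes none, we have $\Sigma_{A'} = \Sigma_A$ where $A' = A \comp \tau$, hence $\Delta \Phi(\Ig, \tau) = |\Sigma_{\iota^n} \setminus \Sigma_A| - |\Sigma_{\iota^n} \setminus \Sigma_{A'}| = 0$. Combining the two observations yields $\Delta \Phi(\Ig, \tau) + \Delta \bi_{\M}(\Ig, \tau) \leq 0 + 3 = 3$ for $\M \in \{\Mindel_{\tau}, \Mindel_{\rho,\tau}\}$, which is exactly the claimed bound. The argument is uniform in the model because the set of admissible operations does not affect how a single transposition changes the number of intergenic breakpoints or the alphabet.

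The one point that needs a little care — and which I would spell out explicitly rather than wave at — is why destroying an \emph{intergenic} breakpoint (as opposed to an ordinary breakpoint) still requires one of the three affected adjacencies. A pair $(A_\ell, A_{\ell+1})$ that is not an ordinary breakpoint can still be an intergenic breakpoint, namely when the intergenic region between them in $\G_o$ has a size different from the corresponding region in $\G_d$. For such a pair to stop being an intergenic breakpoint, a transposition must change either the pair itself or the size of the intergenic region sitting between those two elements; but a transposition changes the value of an intergenic region only at the three cut points (where $\breve{A}_i$, $\breve{A}_j$, $\breve{A}_k$ are split and recombined as $x{+}y'$, $z{+}x'$, $y{+}z'$), which are precisely the positions of the three affected adjacencies. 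So again only those three breakpoints — ordinary or intergenic — can be removed, and no new breakpoint can appear outside them. This is the main (though still routine) obstacle; once it is stated carefully the bound follows as above, and the proof can simply be written as ``Similar à prova do Lema~\ref{cap5:lemma:breakpoint_bound_rev}, mas considerando que uma transposição afeta três adjacências de $A$.''
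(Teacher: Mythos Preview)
Your proposal is correct and matches the paper's own proof exactly: the paper writes only ``Similar à prova do Lema~\ref{cap5:lemma:breakpoint_bound_rev}, mas devemos considerar que uma transposição afeta três adjacências de $A$.'' Your extra paragraph spelling out why intergenic (not just ordinary) breakpoints can only change at the three cut points is a welcome clarification, but it is the same argument, just made more explicit.
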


\begin{proof}
Similar à prova do Lema~\ref{cap5:lemma:breakpoint_bound_rev}, mas devemos considerar que uma transposição afeta três adjacências de $A$.
\end{proof}

O próximo lema segue diretamente dos lemas~\ref{cap5:lemma:breakpoint_bound_id}, \ref{cap5:lemma:breakpoint_bound_rev}, \ref{cap5:lemma:breakpoint_bound_transp} e do fato de que $\bi_{\M}(\Ig) + |\Sigma_{\iota^n} \setminus \Sigma_{A}| = 0$ se, e somente se, $\G_o = \G_d$, para qualquer $\M \in \{\Mindel_{\rho}$, $\Mindel_{\tau}$, $\Mindel_{\rho,\tau}$\}.

\begin{lemma}\label{cap5:lemma:breakpoints_intergenicos_lower_bound}
	Para qualquer instância intergênica $\Ig = (\G_o, \G_d)$, com $\G_o = (A, \breve{A})$ e $\G_d = (\iota^n, \breve{\iota}^n)$, temos que:
	\begin{align*}
		d_{\Mindel_{\rho}}(\Ig) \geq \frac{\bi_{\Mindel_{\rho}}(\Ig) + |\Sigma_{\iota^n} \setminus \Sigma_{A}|}{2},\\
		d_{\Mindel_{\tau}}(\Ig) \geq \frac{\bi_{\Mindel_{\tau}}(\Ig) + |\Sigma_{\iota^n} \setminus \Sigma_{A}|}{3},\\
		d_{\Mindel_{\rho,\tau}}(\Ig) \geq \frac{\bi_{\Mindel_{\rho,\tau}}(\Ig) + |\Sigma_{\iota^n} \setminus \Sigma_{A}|}{3}.\\
	\end{align*}
\end{lemma}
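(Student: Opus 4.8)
\textbf{Prova proposta para o Lema~\ref{cap5:lemma:breakpoints_intergenicos_lower_bound}.}

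O plano é seguir exatamente o mesmo esquema já utilizado nos limitantes inferiores para genomas desbalanceados sem regiões intergênicas (Lema~\ref{cap4:lemma:breakpoints_lower_bound}), adaptando-o para a contagem de \emph{breakpoints} intergênicos. Primeiro eu observaria que a quantidade $\bi_{\M}(\Ig) + |\Sigma_{\iota^n} \setminus \Sigma_{A}|$ é uma medida de progresso: ela vale zero se, e somente se, $\G_o = \G_d$, conforme já registrado logo antes do enunciado. Portanto, qualquer sequência de rearranjos $S$ que transforma $\G_o$ em $\G_d$ precisa reduzir esse valor de $\bi_{\M}(\Ig) + |\Sigma_{\iota^n} \setminus \Sigma_{A}|$ até $0$.

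Em seguida eu invocaria os três lemas de variação já demonstrados. Para o modelo $\Mindel_{\rho}$, os lemas~\ref{cap5:lemma:breakpoint_bound_id} e \ref{cap5:lemma:breakpoint_bound_rev} garantem que qualquer operação $\beta \in \Mindel_{\rho}$ (seja ela uma inserção, uma deleção ou uma reversão) satisfaz $\Delta \Phi(\Ig, \beta) + \Delta \bi_{\Mindel_{\rho}}(\Ig, \beta) \leq 2$. Como cada operação de $S$ reduz a medida de progresso em no máximo $2$ unidades e a redução total necessária é $\bi_{\Mindel_{\rho}}(\Ig) + |\Sigma_{\iota^n} \setminus \Sigma_{A}|$, concluímos que $|S| \geq (\bi_{\Mindel_{\rho}}(\Ig) + |\Sigma_{\iota^n} \setminus \Sigma_{A}|)/2$, o que estabelece o primeiro limitante. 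Para $\Mindel_{\tau}$, uso os lemas~\ref{cap5:lemma:breakpoint_bound_id} e \ref{cap5:lemma:breakpoint_bound_transp}, obtendo que cada operação reduz a medida em no máximo $3$, de onde segue $d_{\Mindel_{\tau}}(\Ig) \geq (\bi_{\Mindel_{\tau}}(\Ig) + |\Sigma_{\iota^n} \setminus \Sigma_{A}|)/3$. Para $\Mindel_{\rho,\tau}$, o caso mais geral, os três lemas (\ref{cap5:lemma:breakpoint_bound_id}, \ref{cap5:lemma:breakpoint_bound_rev} e \ref{cap5:lemma:breakpoint_bound_transp}) mostram que qualquer operação desse modelo tem variação limitada por $3$ (o limite $2$ para reversões e \emph{indels} é dominado pelo limite $3$ das transposições), o que fornece o terceiro limitante da mesma forma.

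Tecnicamente não espero nenhum obstáculo sério: a prova é essencialmente uma combinação de observações já feitas, e o argumento é o análogo direto do limitante do Lema~\ref{cap4:lemma:breakpoints_lower_bound}. O único ponto ao qual eu daria atenção é garantir que, ao usar o modelo $\Mindel_{\rho,\tau}$, eu aplique corretamente o limite $3$ para \emph{todas} as operações do modelo, e não apenas para transposições — isto é imediato, pois $2 \leq 3$, mas convém mencioná-lo explicitamente para que o leitor não precise reconstruir o raciocínio. Assim, a demonstração consistirá em poucas linhas: enunciar a equivalência ``medida zero $\iff \G_o = \G_d$'', citar os lemas de variação apropriados para cada modelo, e concluir a desigualdade dividindo a redução total pelo máximo por operação.
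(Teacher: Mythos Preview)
Your proposal is correct and follows exactly the same approach as the paper: the paper states that the lemma follows directly from the three variation lemmas (\ref{cap5:lemma:breakpoint_bound_id}, \ref{cap5:lemma:breakpoint_bound_rev}, \ref{cap5:lemma:breakpoint_bound_transp}) together with the observation that $\bi_{\M}(\Ig) + |\Sigma_{\iota^n} \setminus \Sigma_{A}| = 0$ if and only if $\G_o = \G_d$. Your proof is slightly more detailed than the paper's one-line justification, but the argument is identical.
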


Os próximos lemas apresentam casos em que sempre é possível achar um {\it indel} com $\Delta \Phi(\Ig, \beta) + \Delta \bi_{\M}(\Ig, \beta) > 0$. Essas operações serão úteis em todos os algoritmos de aproximação apresentados nesta seção.

\begin{lemma}\label{cap5:lemma:overchaged_undercharged}
Dada uma instância intergênica $\Ig = (\G_o, \G_d)$, com $\G_o = (A, \breve{A})$ e $\G_d = (\iota^n, \breve{\iota}^n)$, se existe algum {\it breakpoint} intergênico em $\Ig$ que é sobrecarregado ou subcarregado, então existe um {\it indel} $\beta$ com $\Delta \Phi(\Ig, \beta) + \Delta \bi_{\M}(\Ig, \beta) = 1$, para qualquer $\M \in \{\Mindel_{\rho}$, $\Mindel_{\tau}$, $\Mindel_{\rho,\tau}$\}.
\end{lemma}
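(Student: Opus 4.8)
The plan is to construct an explicit indel $\beta$ that acts on an overcharged or undercharged intergenic breakpoint and verify the claimed equality directly. The key observation is that an overcharged or undercharged breakpoint, by definition, is a pair $(A_i, A_{i+1})$ that is \emph{not} an ordinary breakpoint (so $A_i$ and $A_{i+1}$ correspond to consecutive elements of $\iota^n$), but whose intergenic region $\breve{A}_{i+1}$ differs from the target size $\breve{\iota}^n_j$, where $j = \max(A_i, A_{i+1})$. In the overcharged case $\breve{A}_{i+1} > \breve{\iota}^n_j$, so there are \emph{too many} nucleotides; in the undercharged case $\breve{A}_{i+1} < \breve{\iota}^n_j$, so there are \emph{too few}.

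First I would handle the undercharged case with an intergenic insertion $\phi_{(x)}^{(i, \sigma, \breve{\sigma})}$ that inserts no gene material, i.e. $\sigma$ is the empty string and $\breve{\sigma} = (\breve{\iota}^n_j - \breve{A}_{i+1})$ is a one-element list, placed inside $\breve{A}_{i+1}$ (the choice of $x$ is immaterial since $\sigma$ is empty and the split parts are rejoined). After this operation the intergenic region between $A_i$ and $A_{i+1}$ has size exactly $\breve{\iota}^n_j$, so the pair $(A_i, A_{i+1})$ is no longer an intergenic breakpoint; since it was already not an ordinary breakpoint and no new breakpoints are introduced (the string $A$ is unchanged), $\Delta \bi_{\M}(\Ig, \phi) = 1$. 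Also $\Delta \Phi(\Ig, \phi) = 0$ because no label-$\deletionElement$-free elements are added: $\sigma$ is empty, so $\Sigma_A$ is unchanged. Hence $\Delta \Phi(\Ig, \phi) + \Delta \bi_{\M}(\Ig, \phi) = 1$, as required, and this works identically for all three models since every model contains the indel operations.

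Next I would handle the overcharged case with an intergenic deletion $\psi_{(x,y)}^{(i,j)}$ with $i = j = i+1$ — that is, a deletion with equal string indices, which by definition removes no gene elements and only shrinks a single intergenic region. Taking $x = 0$ and $y = \breve{A}_{i+1} - \breve{\iota}^n_{j}$ (where $j$ here is the target index $\max(A_i,A_{i+1})$; note the constraint $0 \le x \le y \le \breve{A}_{i+1}$ is satisfied because the breakpoint is overcharged), the region becomes $x + (\breve{A}_{i+1} - y) = \breve{\iota}^n_{j}$, so $(A_i, A_{i+1})$ ceases to be an intergenic breakpoint. As before $\Delta \bi_{\M}(\Ig, \psi) = 1$ and $\Delta \Phi(\Ig, \psi) = 0$ (deletions never affect the set of elements to be inserted), so the sum is $1$.

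The main obstacle — really the only delicate point — is bookkeeping: I must make sure the chosen indel does not accidentally \emph{create} a new intergenic breakpoint elsewhere that would cancel the gain, and that the index conventions and feasibility constraints in the definitions of $\phi_{(x)}^{(i,\sigma,\breve{\sigma})}$ and $\psi_{(x,y)}^{(i,j)}$ are respected (in particular the requirement $A_i \neq \deletionElement$ implicit in the definition of an overcharged/undercharged breakpoint guarantees the relevant elements are genuine common elements, so the target region $\breve{\iota}^n_j$ is well-defined). Because an empty-gene insertion and an equal-index deletion each touch exactly one intergenic region and leave the gene string untouched, no other adjacency changes, so the verification is immediate once the parameters are pinned down. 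I would close by remarking that the argument is uniform over $\M \in \{\Mindel_{\rho}, \Mindel_{\tau}, \Mindel_{\rho,\tau}\}$ since it uses only the indel operations common to all three models.
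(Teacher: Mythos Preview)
Your proposal is correct and follows essentially the same approach as the paper: for an undercharged breakpoint you apply an intergenic insertion with empty gene string $\sigma=\emptyset$ and $\breve{\sigma}=(\breve{\iota}^n_j-\breve{A}_{i+1})$, and for an overcharged breakpoint you apply an equal-index intergenic deletion $\psi^{(i+1,i+1)}_{(0,\,\breve{A}_{i+1}-\breve{\iota}^n_j)}$, in both cases fixing the single intergenic region to its target size so that exactly one intergenic breakpoint disappears while $\Delta\Phi=0$. Your additional remarks about why no other adjacency is affected are sound and only make the argument more explicit than the paper's version.
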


\begin{proof}
Seja $(A_i, A_{i+1})$ um {\it breakpoint} intergênico sobrecarregado ou subcarregado. A região intergênica $\breve{\iota}^n_{x}$, tal que $x = \max(A_i, A_{i+1})$, é a região intergênica do genoma de destino $\G_d$ que está entre os dois elementos de $\iota^n$ correspondentes aos elementos $A_i$ e $A_{i+1}$. Se esse {\it breakpoint} é sobrecarregado, então a deleção $\psi^{(i+1,i+1)}_{(0, y)}$, onde $y = \breve{A}_{i+1} - \breve{\iota}^n_{x}$, remove esse {\it breakpoint}. Se esse {\it breakpoint} é subcarregado, então a inserção $\phi^{(i, \sigma, \breve{\sigma})}_{(0)}$, onde $\sigma = \emptyset$ e $\breve{\sigma} = (\breve{\iota}^n_{x} - \breve{A}_{i+1})$ remove esse {\it breakpoint}. Em ambos os casos, o {\it indel} descrito torna $\breve{A}_{i+1} = \breve{\iota}^n_{x}$ e remove o {\it breakpoint} intergênico entre $(A_i, A_{i+1})$. Note que $\Delta \Phi(\Ig, \beta) = 0$ em ambos os casos.
\end{proof}

\begin{lemma}\label{cap5:lemma:sets_indels}
	Dada uma instância intergênica $\Ig = (\G_o, \G_d)$, com $\G_o = (A, \breve{A})$ e $\G_d = (\iota^n, \breve{\iota}^n)$, se $|\Sigma_A \setminus \Sigma_{\iota^n}| > 0$ ou $|\Sigma_{\iota^n} \setminus \Sigma_{A}| > 0$, então existe um {\it indel} $\beta$ com $\Delta \Phi(\Ig, \beta) + \Delta \bi_{\M}(\Ig, \beta) \geq 1$, para qualquer $\M \in \{\Mindel_{\rho}$, $\Mindel_{\tau}$, $\Mindel_{\rho,\tau}$\}.
\end{lemma}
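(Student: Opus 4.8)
The plan is to follow the structure of the two unbalanced (non-intergenic) results this lemma generalises, namely Lemma~\ref{cap4:lemma:deletion_breakpoint} and Lemma~\ref{cap4:lemma:insertion_no_breakpoint}, splitting into two cases according to which of the two hypotheses holds. If $|\Sigma_A \setminus \Sigma_{\iota^n}| > 0$ I will produce a deletion; otherwise $|\Sigma_{\iota^n} \setminus \Sigma_A| > 0$ holds and $A$ has no element equal to $\deletionElement$, and I will produce an insertion. In both cases $\beta$ is the required indel.

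Case $|\Sigma_A \setminus \Sigma_{\iota^n}| > 0$. I would take a maximal strip $(A_i\;\ldots\;A_j)$ of $A$ all of whose elements equal $\deletionElement$ (such a strip exists since $\deletionElement$ is the unique label of $\Sigma_A \setminus \Sigma_{\iota^n}$). By maximality $A_{i-1} \neq \deletionElement$ and $A_{j+1} \neq \deletionElement$, and since $\posterior(\deletionElement, \I) = \anterior(\deletionElement, \I) = \deletionElement$, both pairs $(A_{i-1}, A_i)$ and $(A_j, A_{j+1})$ are breakpoints, hence intergenic breakpoints. I would then let $\beta$ be the intergenic deletion $\psi^{(i,j+1)}_{(\breve{A}_i,0)}$, which removes exactly $(A_i\;\ldots\;A_j)$ and no nucleotides. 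Then $\Delta \Phi(\Ig, \beta) = 0$ (no labels are inserted and $\Sigma_A$ is unchanged or only loses $\deletionElement$), while $\beta$ removes the two intergenic breakpoints above and creates at most one new one, at the pair $(A_{i-1}, A_{j+1})$; hence $\Delta \bi_\M(\Ig, \beta) \geq 1$ and $\Delta \Phi + \Delta \bi \geq 1$.

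Case $|\Sigma_A \setminus \Sigma_{\iota^n}| = 0$ and $|\Sigma_{\iota^n} \setminus \Sigma_A| > 0$. I would first record the structural fact that, by the way the genomes are modelled, the labels of $\Sigma_{\iota^n} \setminus \Sigma_A$ are isolated: between two consecutive labels of $\Sigma_A \cap \Sigma_{\iota^n}$ there is at most one label of $\Sigma_{\iota^n} \setminus \Sigma_A$, and the same holds at the two boundaries. Let $m$ be the smallest label of $\Sigma_{\iota^n} \setminus \Sigma_A$; then the elements with labels $m-1$ and $m+1$ occur in $A$, with the conventions that ``$m-1$'' is the left-end element $A_0$ when $m=1$ and ``$m+1$'' is the right-end element when $m=n$. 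I would consider two candidate insertions of the single element $m$ into $A$: immediately after the element with label $m-1$, and immediately before it; in each case the intergenic region is split so that the new pair involving $m-1$ and $m$ realises exactly its $\G_d$-target size, which is always possible because the inserted list $\breve{\sigma}$ is unbounded from above. For either insertion this pair is not a breakpoint and $\Delta \Phi = 1$, so the only pair that can become a new intergenic breakpoint is the other new pair, namely $m$ together with the former $A$-neighbour of $(m-1)$. A short analysis on the label of that neighbour shows the insertion attains $\Delta \Phi + \Delta \bi \geq 1$ in every situation except the single degenerate one in which that neighbour is the element with label $m-2$ (or a boundary element) joined to $m-1$ by a pair that is not an intergenic breakpoint; in the complementary situations the pair $((m-1),\text{neighbour})$ is itself an intergenic breakpoint whose removal offsets the at most one breakpoint created. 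Since the element with label $m-1$ has a unique predecessor and a unique successor in the extended string $A$, this degenerate obstruction cannot occur on both sides at once, so at least one of the two insertions is the required $\beta$.

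The hard part will be the bookkeeping in the second case: verifying, by cases on whether $(m, A_{i+1})$ is forced to be a reversal breakpoint and on whether the old pair $((m-1), A_{i+1})$ was already an intergenic breakpoint, that the intergenic sizes can be allocated so that the pair $((m-1), m)$ — and, whenever $(m, A_{i+1})$ is not a reversal breakpoint, also $(m, A_{i+1})$ — match their $\G_d$ targets, together with dispatching the boundary situations $m=1$, $m-1 = \min(\Sigma_A \cap \Sigma_{\iota^n})$, and $m = n$. These checks are routine but cannot be skipped, because the intergenic breakpoint count can deteriorate more than the plain breakpoint count used in the non-intergenic analogue.
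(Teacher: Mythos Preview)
Your Case 1 (the deletion case) matches the paper's argument essentially verbatim.

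In Case 2 the paper takes a more direct route than you do: it locates the strip containing $x-1$ (your $m-1$) and inserts $x$ on the single side dictated by that strip's orientation --- to the right of $x-1$ if the strip is increasing, to the left if it is decreasing --- choosing the intergenic parameters so that the new adjacency between $x-1$ and $x$ receives exactly $\breve{\iota}^n_x$ while the other new adjacency keeps the old value of $\breve{A}$ intact. Since the old pair on that side lay at a strip boundary (hence was already an intergenic breakpoint), the replacement cannot increase $\bi$, and $\Delta \Phi + \Delta \bi \geq 1$ follows without any two-sided case analysis.

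Your two-sided approach can in principle be completed, but as written it has a gap: the blanket claim that ``for either insertion this pair is not a breakpoint'' fails for $\Mindel_\tau$. Under transposition breakpoints the pair $(m,\,m-1)$ produced by the \emph{left} insertion \emph{is} a breakpoint, since $\posterior(m,\I') = m+1 \neq m-1$; so for that model the left candidate is simply invalid, and your ``at most one side is degenerate'' symmetry argument no longer applies as stated. The fix is precisely the paper's observation: for $\Mindel_\tau$ all strips are increasing, so only the right insertion is ever needed; for the reversal-breakpoint models $\Mindel_\rho$ and $\Mindel_{\rho,\tau}$ both candidates do yield a non-breakpoint pair and your degeneracy argument can be carried through. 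Using the strip orientation from the start avoids the two-sided split and the residual bookkeeping you flag in your last paragraph.
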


\begin{proof}
Considere que $|\Sigma_A \setminus \Sigma_{\iota^n}| > 0$. Seja $(A_i~\ldots~A_j)$ uma sequência maximal de elementos tal que $A_k = \deletionElement$, para todo $i \leq k \leq j$. Existem {\it breakpoints} intergênicos entre $(A_{i-1}, A_i)$ e $(A_j, A_{j+1})$. Portanto, uma deleção $\psi$ que age no segmento $(A_i~\ldots~A_j)$ remove esses dois {\it breakpoints} intergênicos e torna $(A_{i-1}, A_{j+1})$ adjacentes na nova string. Como $(A_{i-1}, A_{j+1})$ pode formar um {\it breakpoint} intergênico, temos que $\Delta \Phi(\Ig, \beta) + \Delta \bi_{\M}(\Ig, \beta) \geq 1$.

Considere que $|\Sigma_{\iota^n} \setminus \Sigma_{A}| > 0$. Seja $x$ um elemento que pertence ao conjunto $\Sigma_{\iota^n} \setminus \Sigma_{A}$. Seja $(A_i~\ldots~A_j)$ a {\it strip} contendo o elemento $x-1$. Note que $\breve{\iota}^n_x$ é a região intergênica entre $(x-1, x)$ em $\G_d$. Se essa {\it strip} é decrescente, então $A_i = x-1$ e a inserção $\phi^{(i-1, (x), (0,\breve{\iota}^n_x))}_{(\breve{A}_i)}$ não aumenta o número de {\it breakpoints} intergênicos e possui $\Delta \Phi(\Ig, \beta) = 1$. Se a {\it strip} é crescente, então $A_j = x-1$ e a inserção $\phi^{(j, (x), (\breve{\iota}^n_x, 0))}_{(0)}$ não aumenta o número de {\it breakpoints} intergênicos e possui $\Delta \Phi(\Ig, \beta) = 1$.
\end{proof}

\subsection{Algoritmo de Aproximação para Modelos com Reversões}

Apresentamos um algoritmo de $4$-aproximação para o problema da Distância de Reversões e Indels Intergênicos. Além disso, provamos que esse algoritmo também é uma $6$-aproximação para o problema da Distância de Reversões, Transposições e Indels Intergênicos

A seguir, mostramos casos em que sempre é possível encontrar uma sequência com no máximo duas operações (reversões ou {\it indels}) que remove {\it breakpoints} intergênicos.

\begin{lemma}\label{cap5:lemma:decreasing_strip}
	Para qualquer instância intergênica $\Ig = (\G_o, \G_d)$, com $\G_o = (A, \breve{A})$ e $\G_d = (\iota^n, \breve{\iota}^n)$, tal que $\Sigma_A = \Sigma_{\iota^n}$ e não existem {\it breakpoints} sobrecarregados ou subcarregados em $\G_o$, se a string $A$ possui pelo menos uma {\it strip} decrescente, então existe uma sequência de reversões e {\it indels} $S$ tal que $|S| \leq 2$ e $\Delta \Phi(\Ig, S) + \Delta \bi_{\M}(\Ig, S) \geq 1$.
\end{lemma}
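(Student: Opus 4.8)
The plan is to mimic, in the intergenic setting, the reversal-based arguments used for the classical (non-intergenic) case in Chapter~\ref{cap:label:indel}, notably the proof strategy behind Lemma~\ref{cap4:lemma:reversal_breakpoint_decreasing} and Lemma~\ref{cap4:lemma:reversal_breakpoint_no_decreasing}, but accounting for the extra bookkeeping of intergenic region sizes. First I would pick a decreasing {\it strip} $(A_i~\ldots~A_j)$ with $A_j$ minimum, and let $(A_{i'}~\ldots~A_{j'})$ be the {\it strip} containing $\anterior(A_j, \Ig)$; by minimality of $A_j$ this {\it strip} is increasing and $A_{j'} = \anterior(A_j, \Ig)$. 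Exactly as in Lemma~\ref{cap4:lemma:reversal_breakpoint_decreasing}, depending on whether $i' < i$ or $i' > i$ there is an index pair at which a reversal $\rho^{(\cdot,\cdot)}_{(x,y)}$ merges the two strips so that the gene-level {\it breakpoint} between them disappears; this gives $\Delta \bi_{\M}(\Ig, \rho) \geq 1$ on the gene level unless the reversal also creates a new intergenic {\it breakpoint} because the intergenic region resulting from the merge has the wrong size.

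The key new point is therefore the intergenic arithmetic. When the reversal $\rho^{(i,j)}_{(x,y)}$ merges strips, the region between the two newly adjacent common genes $A_{j'}$ and $A_j$ (say) ends up with size $x+y$ for some choice $0 \le x \le \breve{A}_{\cdot}$, $0 \le y \le \breve{A}_{\cdot}$, where the two quantities being split are the intergenic regions at positions $i-1$ (or $j$) and $j'$ of $A$. The second step is to observe that, since there are no overcharged or undercharged {\it breakpoints} in $\G_o$, every existing intergenic {\it breakpoint} is a genuine gene-level {\it breakpoint}, so the target size $\breve{\iota}^n_{\max(A_{j'},A_j)}$ is a fixed number $t$. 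If we can choose the split $x+y = t$, the reversal removes the {\it breakpoint} at the intergenic level too and we are done with $|S| = 1$. If $t$ exceeds the total nucleotide budget available at those two positions, or is smaller, then the reversal closes the gene-level {\it breakpoint} but leaves an overcharged or undercharged intergenic {\it breakpoint}; then by Lemma~\ref{cap5:lemma:overchaged_undercharged} there is an {\it indel} $\beta$ with $\Delta \Phi(\Ig, \beta) + \Delta \bi_{\M}(\Ig, \beta) = 1$ removing that residual {\it breakpoint}. Composing $S = (\rho, \beta)$ gives $|S| = 2$ and $\Delta \Phi(\Ig, S) + \Delta \bi_{\M}(\Ig, S) \geq 1$ (indeed, I expect exactly $1$, since the reversal itself contributes $0$ net once we take into account that it may create one intergenic {\it breakpoint}, and $\beta$ contributes $1$).

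I would then handle the boundary subtleties: the case where $A_{j'}$ or $A_j$ is an extended endpoint ($0$ or $n+1$), and the case where the reversal happens to also fix the second endpoint of the interval for free — these are exactly the situations analyzed in Lemma~\ref{cap4:lemma:reversal_breakpoint_no_decreasing}, and the only addition is that each closed gene-level {\it breakpoint} carries its own intergenic-size condition, at most two of which can be violated, but the statement only claims $\Delta \Phi(\Ig, S) + \Delta \bi_{\M}(\Ig, S) \ge 1$, so fixing just one of them with one {\it indel} suffices. The main obstacle I anticipate is making the intergenic split argument fully rigorous: I must confirm that when I perform the merging reversal, the two intergenic regions whose nucleotides I am free to redistribute are precisely the ones flanking the former {\it breakpoint}, so that the resulting intergenic region can indeed be set to any value between $0$ and the sum of those two regions, and that the "leftover" nucleotides land in a position where they do not spuriously create a new {\it breakpoint} beyond the one already accounted for. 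Establishing this will require carefully tracking, via the definition of $\rho^{(i,j)}_{(x,y)}$ from the intergenic reversal definition, which $\breve{A}_\ast$ entries become $x+y$ and which become $x'+y'$ after the operation.
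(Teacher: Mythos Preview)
Your proposal is correct and follows essentially the same approach as the paper: pick the decreasing strip with $A_j$ minimum, locate the increasing strip ending in $A_j - 1$ (which coincides with $\anterior(A_j,\Ig)$ since $\Sigma_A = \Sigma_{\iota^n}$), apply the merging reversal, and if the resulting intergenic region at the merge point cannot be set to the target value, follow up with the indel from Lemma~\ref{cap5:lemma:overchaged_undercharged}. Two small clarifications: (i) the ``or is smaller'' case never arises, because $x+y$ can be set to any value in $[0, \breve{A}_{j'+1} + \breve{A}_{j+1}]$, so only the undercharged case (target exceeds the budget) needs the extra indel; (ii) your worry about spurious new breakpoints at the other endpoint is moot, since both adjacencies touched by the reversal were already intergenic breakpoints (they are strip boundaries, and by hypothesis there are no overcharged/undercharged breakpoints, so every intergenic breakpoint is a gene-level one), hence the reversal cannot increase the count there --- and the boundary analysis from Lemma~\ref{cap4:lemma:reversal_breakpoint_no_decreasing} is not needed for this lemma (it is used instead in Lemma~\ref{cap5:lemma:rev_two_breakpoints}).
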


\begin{proof}
Note que se $\Sigma_A = \Sigma_{\iota^n}$, então $\Delta \Phi(\Ig, \beta) = 0$ para qualquer reversão ou inserção $\beta$. Seja $(A_i~\ldots~A_j)$ a {\it strip} decrescente tal que $A_j$ é mínimo. Pela nossa escolha de $A_j$, a {\it strip} $(A_{i'}~\ldots~A_{j'})$ que contém o elemento $A_j - 1$ é crescente e, consequentemente, $A_{j'} = A_j - 1$.

Se $i' < i$, então a reversão $\rho^{(j'+1, j)}_{(x,y)}$ transforma $(A, \breve{A})$ em $(A', \breve{A}')$, tal que os elementos $A_j$ e $A_{j'} = A_{j} - 1$ são adjacentes em $A'$. Se $\breve{A}_{j'+1} + \breve{A}_{j+1} \geq \breve{\iota}^n_{A_j}$, então podemos escolher $x$ e $y$ de forma que $\breve{A}'_{j'+1} = \breve{\iota}^n_{A_j}$, fazendo com que um {\it breakpoint} intergênico seja removido. Caso contrário, após aplicar a reversão, o par $(A_{j'}, A_j)$ em $A'$ forma um {\it breakpoint} intergênico subcarregado. Nesse caso, existe uma inserção (Lema~\ref{cap5:lemma:overchaged_undercharged}) que remove esse {\it breakpoint} intergênico.

Se $i' > i$, então a reversão $\rho^{(j+1, j')}_{(x,y)}$ transforma $(A, \breve{A})$ em $(A', \breve{A}')$, tal que os elementos $A_j$ e $A_{j'} = A_{j} - 1$ são adjacentes em $A'$. De forma análoga ao caso anterior, talvez seja necessário usar uma inserção para remover um {\it breakpoint} intergênico subcarregado. 

Portanto, em ambos os casos, existe uma sequência de reversões e {\it indels} $S$ tal que $|S| \leq 2$ e $\Delta \Phi(\Ig, S) + \Delta \bi_{\M}(\Ig, S) \geq 1$. 
\end{proof}

\begin{lemma}\label{cap5:lemma:rev_two_breakpoints}
	Para qualquer instância intergênica $\Ig = (\G_o, \G_d)$, com $\G_o = (A, \breve{A})$ e $\G_d = (\iota^n, \breve{\iota}^n)$, tal que $\Sigma_A = \Sigma_{\iota^n}$ e não existem {\it breakpoints} sobrecarregados ou subcarregados em $\G_o$, se a string $A$ possui pelo menos uma {\it strip} decrescente e toda reversão $\rho$ que remove {\it breakpoints} (não necessariamente intergênicos) resulta na string $A \comp \rho$ que não possui {\it strips} decrescentes, então existe uma sequência de reversões e {\it indels} $S$ tal que $|S| \leq 2$ e $\Delta \Phi(\Ig, S) + \Delta \bi_{\M}(\Ig, S) = 2$.
\end{lemma}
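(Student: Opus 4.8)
The plan is to mirror the structure of the unsigned string case (Lemma~\ref{cap4:lemma:reversal_breakpoint_no_decreasing}) from Chapter~\ref{cap:label:indel}, adapting it to the intergenic setting. First I would recall that since $\Sigma_A = \Sigma_{\iota^n}$, every reversion and every insertion $\beta$ has $\Delta\Phi(\Ig,\beta)=0$, so the quantity to track reduces to $\Delta\bi_\M(\Ig,\cdot)$. The hypothesis says that \emph{any} reversion removing a (classical) breakpoint of $A$ leaves a string with only increasing strips; as in the proof of Lemma~\ref{cap4:lemma:reversal_breakpoint_no_decreasing}, this forces all strips inside the affected interval of such a reversion to be decreasing. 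I would then argue, by the same ``distinct intervals'' contradiction used there, that there is exactly one reversion $\rho(i,j)$ removing classical breakpoints of $A$, and that its interval $[i,j]$ contains only decreasing strips while everything outside $[i,j]$ is increasing; hence $(A_{i-1},A_i)$ and $(A_j,A_{j+1})$ are both classical breakpoints, and $\rho(i,j)$ removes \emph{two} of them (again invoking the single-reversion uniqueness argument from Lemma~\ref{cap4:lemma:reversal_breakpoint_no_decreasing} to rule out the case of removing just one).

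Next I would lift this to intergenic breakpoints. Apply the intergenic reversion $\rho^{(i,j)}_{(x,y)}$ that realizes $\rho(i,j)$ on the gene order; it merges the two classical breakpoints at positions $i-1$ and $j$. At each of the two merge points $(A_{i-1},A_j)$ and $(A_i,A_{j+1})$ in the resulting string $A'$, the gene pair is now consecutive in $\iota^n$, so the only thing that could keep it a (intergenic) breakpoint is a wrong intergenic region size. The reversion has two free cut parameters $x,y$ and produces two new intergenic regions whose sizes are $x+y$ and $x'+y'$ where $x'=\breve A_i - x$, $y'=\breve A_{j+1}-y$ (Definition of $\rho^{(i,j)}_{(x,y)}$); their sum is the fixed total $\breve A_i + \breve A_{j+1}$. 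I would choose $x,y$ so that one of the two target intergenic sizes $\breve\iota^n_{\cdot}$ is met exactly at one merge point, removing that intergenic breakpoint outright; the other merge point then becomes either a correctly-sized (non-breakpoint) adjacency — giving $\Delta\bi_\M = 2$ directly — or an overcharged/undercharged intergenic breakpoint, in which case Lemma~\ref{cap5:lemma:overchaged_undercharged} supplies a single $\insertion$ or $\deletion$ removing it with $\Delta\Phi+\Delta\bi_\M = 1$. Combining, the sequence $S$ of the reversion plus possibly this one $\{$insertion, deletion$\}$ has $|S|\le 2$ and $\Delta\Phi(\Ig,S)+\Delta\bi_\M(\Ig,S) = 2$.

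A subtle point I would have to address carefully is a bookkeeping question about which of the two intergenic regions can be fixed: when $\breve A_i + \breve A_{j+1} \ge \breve\iota^n_{j_1} + \breve\iota^n_{j_2}$ (the total available nucleotides is at least the total needed), one can distribute so that both merge points get the right size and $\Delta\bi_\M=2$ is achieved by the reversion alone; when the available total is strictly smaller, the best we can do is fix one region exactly and leave the other undercharged, which is exactly the case handled by Lemma~\ref{cap5:lemma:overchaged_undercharged}. I would state this dichotomy explicitly and note that in the second subcase the extra indel strictly decreases $\bi_\M$ by exactly one (with $\Delta\Phi=0$ for a deletion, and $\Delta\Phi$ possibly positive but here zero since $\Sigma_A=\Sigma_{\iota^n}$), so the total over the two-operation sequence is $2+0$ in the first reading and $1+1$ in the second — either way $\Delta\Phi(\Ig,S)+\Delta\bi_\M(\Ig,S)=2$.

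The main obstacle I expect is not the intergenic fine-tuning — that is essentially forced by the free parameters $x,y$ and Lemma~\ref{cap5:lemma:overchaged_undercharged} — but rather faithfully transporting the combinatorial uniqueness argument of Lemma~\ref{cap4:lemma:reversal_breakpoint_no_decreasing} into this statement, making sure the hypothesis ``every reversion that removes (classical) breakpoints yields a string with no decreasing strips'' is used in exactly the right place to conclude both that the breakpoint-removing reversion is unique and that it removes two breakpoints rather than one. I would therefore structure the write-up so that the classical/structural part is isolated as a direct appeal to (the proof technique of) Lemma~\ref{cap4:lemma:reversal_breakpoint_no_decreasing}, and then devote the bulk of the new argument to the intergenic region sizes and the invocation of Lemma~\ref{cap5:lemma:overchaged_undercharged}.
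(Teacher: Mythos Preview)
Your structural plan matches the paper's: invoke Lemma~\ref{cap4:lemma:reversal_breakpoint_no_decreasing} to pin down a unique reversal $\rho(i,j)$ removing two classical breakpoints, then adjust intergenic sizes at the two merge points. That part is fine.

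The gap is in your intergenic case analysis. A reversal $\rho^{(i,j)}_{(x,y)}$ is conservative on nucleotides: the two new regions have sizes $x+y$ and $x'+y'$ summing to $\ell := \breve A_i + \breve A_{j+1}$, a fixed total. So you can hit a given target exactly only if that target is $\leq \ell$, and you can hit \emph{both} targets exactly only if $\ell = m := \breve\iota^n_{A_j} + \breve\iota^n_{A_{j+1}}$ (not merely $\ell \geq m$, as you claim). The serious problem is the subcase $\ell < m$ with \emph{both} $\breve\iota^n_{A_j} > \ell$ and $\breve\iota^n_{A_{j+1}} > \ell$ (possible whenever $m > 2\ell$; nothing in the hypotheses rules this out, since the overcharged/undercharged classification does not apply to the two positions at $i$ and $j+1$, which are classical breakpoints). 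Then the reversal cannot hit either target, both new adjacencies are undercharged, $\Delta\bi_\M$ for the reversal is $0$, and one subsequent indel only brings the total to $1$, not $2$. Your ``reversal first, then one indel'' scheme fails here.

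The paper's fix is to reverse the order of operations when $\ell < m$: first apply an insertion of $m-\ell$ nucleotides into $\breve A_i$ (removing no breakpoint, but making the available total exactly $m$), and \emph{then} apply the reversal, which can now split $m$ as $\breve\iota^n_{A_j} + \breve\iota^n_{A_{j+1}}$ and remove both intergenic breakpoints simultaneously. When $\ell \geq m$ the paper proceeds essentially as you do (reversal first, fixing one merge point exactly; then at most one indel for the other, which may be overcharged). The idea you are missing is precisely this ``insert first, then reverse'' maneuver in the $\ell < m$ case.
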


\begin{proof}
Como $\Sigma_A = \Sigma_{\iota^n}$, temos que $A$ é uma permutação. Se qualquer reversão que remove {\it breakpoints} (não necessariamente {\it breakpoints} intergênicos) resulta na string $A \comp \rho$ que não possui {\it strips} decrescentes, então existe apenas uma reversão que remove {\it breakpoints} de $A$~(Lema \ref{cap4:lemma:reversal_breakpoint_no_decreasing}) e essa reversão remove dois {\it breakpoints} (não necessariamente intergênicos). Seja $(A_i, \ldots, A_j)$ o intervalo afetado por essa reversão que remove dois {\it breakpoints} de $A$. 

Seja $\rho^{(i,j)}_{(x,y)}$ uma reversão com valores arbitrários de $x$ e $y$. Essa reversão inverte o segmento $(A_i, \ldots, A_j)$ e remove dois {\it breakpoints}. Note que os {\it breakpoints} removidos são $(A_{i-1}, A_i)$ e $(A_j, A_{j+1})$. Essa reversão torna os pares $(A_{i-1}, A_j)$ e $(A_i, A_{j+1})$ adjacentes na string $A' = A \comp \rho^{(i,j)}_{(x,y)}$ e, como não existem {\it strips} decrescentes em $A'$, temos que $A_{i-1} + 1 = A_j$ e $A_i + 1 = A_{j+1}$. Sejam $\ell = \breve{A}_i + \breve{A}_{j+1}$ e $m= \breve{\iota}^n_{A_j} + \breve{\iota}^n_{A_{j+1}}$.

Se $\ell \geq m$, então aplicamos a reversão $\rho^{(i,j)}_{(x,y)}$ em $(A, \breve{A})$, com $x = \min(\breve{A}_i ,\breve{\iota}^n_{A_j})$ e $y = \breve{\iota}^n_{A_j} - x$, que resulta em $(A', \breve{A}')$, de forma que o par $(A'_{i-1}, A'_i) = (A_{i-1}, A_j)$ não é um {\it breakpoint} intergênico. No entanto, o par $(A'_{j}, A'_{j+1}) = (A_i, A_{j+1})$ pode ser um {\it breakpoint} intergênico, que pode ser removido usando um {\it indel} de acordo com o Lema~\ref{cap5:lemma:overchaged_undercharged}.

Considere que $\ell < m$. Nesse caso, primeiro aplicamos a inserção $\phi^{(i-1, \sigma, \breve{\sigma})}_{0}$, com $\sigma = \emptyset$ e $\breve{\sigma} = (m-\ell)$, que torna $\breve{A}'_i + \breve{A}'_{j+1} = \breve{\iota}^n_{A_j} + \breve{\iota}^n_{A_{j+1}}$, onde $(A', \breve{A}') = (A, \breve{A}) \cdot \phi^{(i-1, S, \breve{S})}_{0}$. Note que $A' = A$ e nenhum {\it breakpoint} intergênico é removido por essa inserção. Agora, aplicamos a reversão $\rho^{(i,j)}_{(x,y)}$ em $(A', \breve{A}')$, com $x = \min(\breve{A}'_i ,\breve{\iota}^n_{A_j})$ e $y = \breve{\iota}^n_{A_j} - x$, que remove dois {\it breakpoints} intergênicos.

Em ambos os casos, no máximo duas operações são aplicadas para remover dois {\it breakpoints} intergênicos.
\end{proof}

\begin{algorithm}[hb]
	\caption{\label{cap5:algorithm_reversals_breakpoints}
		Algoritmo para o problema de Distância de Rearranjos Intergênicos considerando os modelos $\Mindel_{\rho}$ e $\Mindel_{\rho,\tau}$
	}
	\DontPrintSemicolon
  \Entrada{Uma instância intergênica $\Ig = (\G_o, \G_d)$, com $\G_o = (A, \breve{A})$ e $\G_d = (\iota^n, \breve{\iota}^n)$}
  \Saida{Uma sequência de rearranjos $S$ que transforma $\G_o$ em $\G_d$}
  Seja $S \gets \emptyset$\;
	\Enqto{$|\Sigma_A \setminus \Sigma_{\iota^n}| > 0$ ou $|\Sigma_{\iota^n} \setminus \Sigma_{A}| > 0$}{
		Seja $\beta$ um {\it indel} de acordo com o Lema~\ref{cap5:lemma:sets_indels}\;
		$\G_o \gets \G_o \comp \beta$\;
		Adicione $\beta$ na sequência $S$\;
	}

	\Enqto{$\bi_{\M}(\Ig) > 0$}{
		\Se{existe {\it breakpoint} intergênico sobrecarregado ou subcarregado}{
			Seja $S' = (\beta)$, onde $\beta$ é um {\it indel} de acordo com o Lema~\ref{cap5:lemma:overchaged_undercharged}\;
		}\SenaoSe{existe {\it strip} decrescente}{
			Seja $S'$ uma sequência de reversões e {\it indels} de acordo com os lemas~\ref{cap5:lemma:decreasing_strip} e \ref{cap5:lemma:rev_two_breakpoints}.
		}\Senao(){
			Seja $(A_i, \ldots, A_j)$ uma {\it strip} tal que $i > 0$ e $i$ é mínimo\;
	  		Seja $S' = (\rho(i,j))$\;
		}
		$\G_o \gets \G_o \comp S'$\;
		Adicione as operações de $S'$ na sequência $S$\;
	}
  {\bf retorne} a sequência $S$\;
\end{algorithm}

O Algoritmo~\ref{cap5:algorithm_reversals_breakpoints} usa os lemas~\ref{cap5:lemma:overchaged_undercharged}, \ref{cap5:lemma:sets_indels}, \ref{cap5:lemma:decreasing_strip} e \ref{cap5:lemma:rev_two_breakpoints} e um passo adicional, quando nenhum dos lemas pode ser aplicado, que torna uma das {\it strips} do genoma de origem em uma {\it strip} decrescente. O Lema~\ref{cap5:lemma:upper_bound:rev_breakpoints} apresenta um limitante superior no número de rearranjos usados pelo algoritmo.

\begin{lemma}\label{cap5:lemma:upper_bound:rev_breakpoints}
Para qualquer instância intergênica $\Ig = (\G_o, \G_d)$, com $\G_o = (A, \breve{A})$ e $\G_d = (\iota^n, \breve{\iota}^n)$, o Algoritmo~\ref{cap5:algorithm_reversals_breakpoints} transforma $\G_o$ em $\G_d$ usando no máximo $2 (\bi_{\M}(\Ig) + |\Sigma_{\iota^n} \setminus \Sigma_{A}|)$ rearranjos, considerando $\M \in \{\Mindel_{\rho}, \Mindel_{\rho,\tau}$\}.
\end{lemma}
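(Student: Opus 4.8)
A estratégia é analisar as três fases do Algoritmo~\ref{cap5:algorithm_reversals_breakpoints} (o primeiro laço de {\it indels} que balanceia os alfabetos, o laço principal que remove {\it breakpoints} intergênicos, e o passo adicional que cria {\it strips} decrescentes) e contabilizar, em cada uma, quanto o potencial $\bi_{\M}(\Ig) + |\Sigma_{\iota^n} \setminus \Sigma_{A}|$ decresce por operação aplicada. Como $\bi_{\M}(\Ig) + |\Sigma_{\iota^n} \setminus \Sigma_{A}| = 0$ se, e somente se, $\G_o = \G_d$, basta mostrar que o número total de rearranjos é no máximo $2(\bi_{\M}(\Ig) + |\Sigma_{\iota^n} \setminus \Sigma_{A}|)$ onde $\Ig$ é a instância de entrada.

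\textbf{Primeiros passos.} Primeiro eu trataria o laço inicial: enquanto $|\Sigma_A \setminus \Sigma_{\iota^n}| > 0$ ou $|\Sigma_{\iota^n} \setminus \Sigma_{A}| > 0$, o Lema~\ref{cap5:lemma:sets_indels} garante a existência de um {\it indel} $\beta$ com $\Delta \Phi(\Ig, \beta) + \Delta \bi_{\M}(\Ig, \beta) \geq 1$, de modo que cada uma dessas operações decresce o potencial em pelo menos uma unidade. Em seguida, no laço principal, analisaria por casos: (i) se há {\it breakpoint} intergênico sobrecarregado ou subcarregado, o Lema~\ref{cap5:lemma:overchaged_undercharged} dá um único {\it indel} que remove um {\it breakpoint} (decréscimo $\geq 1$ com uma operação); (ii) se existe {\it strip} decrescente, os lemas~\ref{cap5:lemma:decreasing_strip} e \ref{cap5:lemma:rev_two_breakpoints} fornecem uma sequência $S'$ com $|S'| \leq 2$ e $\Delta \Phi(\Ig, S') + \Delta \bi_{\M}(\Ig, S') \geq 1$, portanto o decréscimo é de pelo menos $1$ por, no máximo, $2$ operações.

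\textbf{O obstáculo principal.} A dificuldade está no caso (iii) do laço principal: quando não há {\it breakpoints} carregados nem {\it strips} decrescentes, o algoritmo executa a reversão $\rho(i,j)$ sobre a primeira {\it strip} não trivial apenas para criar uma {\it strip} decrescente, e essa reversão por si só \emph{não} remove {\it breakpoint} intergênico algum — ela tem potencial de decréscimo igual a zero. Aqui eu argumentaria, de forma análoga ao Caso~2 do Lema~\ref{cap4:lemma:reversal_breakpoint_sorting_decreasing}, que após essa reversão ``desperdiçada'' a string resultante tem uma {\it strip} decrescente, de modo que na(s) iteração(ões) seguinte(s) o algoritmo necessariamente entra em (i) ou (ii); amortizando, cada {\it breakpoint} intergênico removido custa no máximo $1 + 2 = 3$ operações — ou, mais cuidadosamente, contabilizando que o passo (iii) só é invocado em {\it strips} que depois são realmente resolvidas, obtém-se o fator $2$ exigido. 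Preciso tomar o cuidado de mostrar que o passo (iii) não pode ser invocado indefinidamente (ele é seguido, em no máximo mais duas operações, por uma remoção efetiva de {\it breakpoint}), e de verificar que as reversões e {\it indels} usados nos lemas~\ref{cap5:lemma:decreasing_strip} e \ref{cap5:lemma:rev_two_breakpoints} respeitam as restrições de Willing e coautores~\cite{2013-willing-etal} sobre {\it indels}. Concluída essa contabilidade amortizada sobre as três fases, somam-se os limitantes e obtém-se que o número total de rearranjos é no máximo $2(\bi_{\M}(\Ig) + |\Sigma_{\iota^n} \setminus \Sigma_{A}|)$, como enunciado.
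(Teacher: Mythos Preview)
Your plan coincides with the paper's: its entire proof is the sentence that the argument is analogous to Lemas~\ref{cap4:lemma:reversal_breakpoint_sorting_decreasing} and~\ref{cap4:lemma:reversal_breakpoint_sorting}, and you correctly single out Case~2 of Lema~\ref{cap4:lemma:reversal_breakpoint_sorting_decreasing} as the place where the extra reversal of phase~(iii) is absorbed.

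The one imprecision worth fixing is your amortization sketch. Pairing the wasted reversal with the \emph{following} step gives $1+2=3$ operations for a single intergenic breakpoint, hence only a factor~$3$, and the phrase ``mais cuidadosamente \dots\ obtém-se o fator~$2$'' does not supply the missing mechanism. The actual content of Case~2 in Lema~\ref{cap4:lemma:reversal_breakpoint_sorting_decreasing} is that phase~(iii) is entered only immediately after the hypothesis of Lema~\ref{cap5:lemma:rev_two_breakpoints} held, and that lemma removes \emph{two} intergenic breakpoints with at most two operations. The correct grouping is therefore the \emph{preceding} step together with the wasted reversal, costing at most $2+1=3$ operations for two breakpoints (ratio $3/2\le 2$), while every other step (first loop, charged breakpoints, case~(ii) via Lema~\ref{cap5:lemma:decreasing_strip} when a decreasing strip survives) already has ratio at most~$2$. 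Equivalently, run the induction with the strengthened hypothesis ``if $A$ has a decreasing strip, the remaining iterations use at most $2\,\bi_{\M}(\Ig)-1$ operations''; the general bound $2(\bi_{\M}(\Ig)+|\Sigma_{\iota^n}\setminus\Sigma_A|)$ then follows because the first loop has ratio~$1$ and at most one extra reversal is needed to create an initial decreasing strip.
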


\begin{proof}
Essa demonstração é similar às provas dos lemas~\ref{cap4:lemma:reversal_breakpoint_sorting_decreasing} e \ref{cap4:lemma:reversal_breakpoint_sorting}, que foram apresentadas no Capítulo~\ref{cap:label:indel}.
\end{proof}

\begin{theorem}
O Algoritmo~\ref{cap5:algorithm_reversals_breakpoints} é uma $4$-aproximação para o problema da Distância de Reversões e Indels Intergênicos. Além disso, esse algoritmo é uma $6$-aproximação para o problema da Distância de Reversões, Transposições e Indels Intergênicos.
\end{theorem}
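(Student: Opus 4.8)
The plan is to derive the two approximation ratios as a direct arithmetic consequence of the lower bounds in Lemma~\ref{cap5:lemma:breakpoints_intergenicos_lower_bound} and the upper bound on the length of the produced sequence in Lemma~\ref{cap5:lemma:upper_bound:rev_breakpoints}. Fix an instance $\Ig = (\G_o, \G_d)$ with $\G_o = (A, \breve{A})$ and $\G_d = (\iota^n, \breve{\iota}^n)$, and let $S$ be the sequence returned by Algorithm~\ref{cap5:algorithm_reversals_breakpoints}. For the first statement, I would instantiate Lemma~\ref{cap5:lemma:upper_bound:rev_breakpoints} with $\M = \Mindel_{\rho}$ to get $|S| \le 2\,(\bi_{\Mindel_{\rho}}(\Ig) + |\Sigma_{\iota^n} \setminus \Sigma_{A}|)$, and Lemma~\ref{cap5:lemma:breakpoints_intergenicos_lower_bound} to get $\bi_{\Mindel_{\rho}}(\Ig) + |\Sigma_{\iota^n} \setminus \Sigma_{A}| \le 2\,d_{\Mindel_{\rho}}(\Ig)$. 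Chaining these two inequalities yields $|S| \le 4\,d_{\Mindel_{\rho}}(\Ig)$, i.e. the claimed $4$-approximation for the Distância de Reversões e Indels Intergênicos.

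For the second statement, I would first observe that Algorithm~\ref{cap5:algorithm_reversals_breakpoints} only appends reversals and indels to $S$ (it never uses transpositions), and both reversals and indels belong to $\Mindel_{\rho,\tau}$, so $S$ is a feasible solution for the Distância de Reversões, Transposições e Indels Intergênicos problem. Moreover, as fixed in Section~\ref{cap5:section:breakpoints}, the breakpoint notion adopted for $\Mindel_{\rho,\tau}$ is exactly that of breakpoints de reversões sem sinais, which is the same quantity $\bi$ that appears in both lemmas, so the bounds apply verbatim. Thus Lemma~\ref{cap5:lemma:upper_bound:rev_breakpoints} with $\M = \Mindel_{\rho,\tau}$ gives $|S| \le 2\,(\bi_{\Mindel_{\rho,\tau}}(\Ig) + |\Sigma_{\iota^n} \setminus \Sigma_{A}|)$, while Lemma~\ref{cap5:lemma:breakpoints_intergenicos_lower_bound} gives $\bi_{\Mindel_{\rho,\tau}}(\Ig) + |\Sigma_{\iota^n} \setminus \Sigma_{A}| \le 3\,d_{\Mindel_{\rho,\tau}}(\Ig)$; combining them gives $|S| \le 6\,d_{\Mindel_{\rho,\tau}}(\Ig)$, the claimed $6$-approximation.

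Since Lemmas~\ref{cap5:lemma:breakpoints_intergenicos_lower_bound} and \ref{cap5:lemma:upper_bound:rev_breakpoints} are already available, the only points that still need care are bookkeeping ones, and they live inside Lemma~\ref{cap5:lemma:upper_bound:rev_breakpoints}: one must check that the additional step of the algorithm — reverting a strip to create a decreasing strip when no progress move is available — does not break the $2(\bi + |\Sigma_{\iota^n} \setminus \Sigma_{A}|)$ count, even though such a step can have $\Delta \Phi(\Ig, \beta) + \Delta \bi_{\M}(\Ig, \beta) = 0$, because it guarantees that the next iteration removes a breakpoint using at most two operations (via Lemmas~\ref{cap5:lemma:decreasing_strip} and \ref{cap5:lemma:rev_two_breakpoints}); this is the place where the induction mirrors the arguments of Lemmas~\ref{cap4:lemma:reversal_breakpoint_sorting_decreasing} and \ref{cap4:lemma:reversal_breakpoint_sorting}. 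One also needs termination, i.e. that $\bi_{\M}(\Ig) + |\Sigma_{\iota^n} \setminus \Sigma_{A}|$ eventually reaches $0$, which holds precisely when $\G_o = \G_d$. I do not expect a genuinely hard step in the theorem itself: all the substance is in the two lemmas being invoked, and the statement is just their quantitative combination, so the proof reduces to writing out the displayed chains of inequalities.
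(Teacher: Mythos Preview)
Your proposal is correct and matches the paper's own proof, which simply reads ``Diretamente dos lemas~\ref{cap5:lemma:breakpoints_intergenicos_lower_bound} e \ref{cap5:lemma:upper_bound:rev_breakpoints}'': you have just spelled out the arithmetic chaining of those two lemmas for each model. Your extra remarks about feasibility of $S$ in $\Mindel_{\rho,\tau}$, the coincidence of the breakpoint notions, and the bookkeeping inside Lemma~\ref{cap5:lemma:upper_bound:rev_breakpoints} are accurate and useful elaborations, but they go beyond what the theorem itself requires once those lemmas are taken as given.
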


\begin{proof}
Diretamente dos lemas~\ref{cap5:lemma:breakpoints_intergenicos_lower_bound} e \ref{cap5:lemma:upper_bound:rev_breakpoints}.
\end{proof}

\subsection{Algoritmo de 4.5-Aproximação para Transposições e Indels}

Nesta seção, apresentamos um algoritmo com fator de aproximação igual a $4.5$ para o problema da Distância de Transposições e Indels Intergênicos. Assim como o algoritmo apresentado anteriormente, esse algoritmo usa os lemas~\ref{cap5:lemma:overchaged_undercharged} e \ref{cap5:lemma:sets_indels} para tornar $\Sigma_A = \Sigma_{\iota^n}$ e remover {\it breakpoints} intergênicos sobrecarregados e subcarregados. Para remover os outros tipos de {\it breakpoints} intergênicos usando transposições e {\it indels}, esse algoritmo usa o resultado do Lema~\ref{cap5:lemma:breakpoints_transpositions}, que é apresentado a seguir. {\revisaof Para simplificar a notação, definimos $\bi_{\tau}(\Ig) = \bi_{\Mindel_{\tau}}(\Ig)$.}

\begin{lemma}\label{cap5:lemma:breakpoints_transpositions}
Dada uma instância intergênica $\Ig = (\G_o, \G_d)$, com $\G_o = (A, \breve{A})$ e $\G_d = (\iota^n, \breve{\iota}^n)$, se $\Sigma_A = \Sigma_{\iota^n}$ e não existem {\it breakpoints} intergênicos sobrecarregados ou subcarregados, então existe uma transposição $\tau$ que remove um {\it breakpoint} intergênico ou existe uma sequência $S$, com $|S| \leq 3$, que remove dois {\it breakpoints} intergênicos.
\end{lemma}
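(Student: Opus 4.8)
O objetivo é mostrar que, sob as hipóteses $\Sigma_A = \Sigma_{\iota^n}$ (isto é, $A$ é uma permutação) e sem \emph{breakpoints} intergênicos sobrecarregados ou subcarregados, conseguimos progredir na remoção de \emph{breakpoints} intergênicos com custo amortizado baixo. A estratégia central é: como $\bi_\tau(\Ig) > 0$ implica que existe um \emph{breakpoint} de transposição clássico em $A$ (pois todo \emph{breakpoint} intergênico aqui ou é um \emph{breakpoint} clássico ou é sobrecarregado/subcarregado, e o segundo caso foi excluído por hipótese), podemos invocar o Lema~\ref{cap4:lemma:transposition_breakpoint} na sua formulação para a representação clássica: sempre existe uma transposição que remove pelo menos um \emph{breakpoint} de transposição de $A$. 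Primeiro eu identificaria essa transposição $\tau(i+1,j,k)$ via o argumento do Lema~\ref{cap4:lemma:transposition_breakpoint} (pegando a primeira \emph{strip} $(A_0,\ldots,A_i)$ e a \emph{strip} que contém $\posterior(A_i,\I)$), que já torna $A_i$ e $A_{i+1}$ consecutivos na nova string.

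\textbf{Passo principal: controlar as regiões intergênicas.} O ponto delicado é que remover um \emph{breakpoint} clássico não garante remover um \emph{breakpoint} intergênico --- o par que se torna adjacente pode ter o tamanho de região intergênica ``errado''. Uma transposição intergênica $\tau^{(i+1,j,k)}_{(x,y,z)}$ tem três parâmetros livres $x,y,z$ e afeta três adjacências; ela cria três novas fronteiras com regiões intergênicas $x+y'$, $z+x'$ e $y+z'$. A ideia é usar os graus de liberdade de $x,y,z$ para ``endereçar'' a região intergênica de destino correta em pelo menos uma das fronteiras (por exemplo, ajustar $x+y'$ para igualar $\breve{\iota}^n$ no par recém-formado $(A_i,A_{i+1})$), quando a soma das regiões intergênicas disponíveis nas três fronteiras afetadas for suficiente. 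Se essa soma for suficiente para cobrir a demanda, uma única transposição intergênica remove um \emph{breakpoint} intergênico e terminamos com $|S|=1$. Caso contrário, o \emph{breakpoint} resultante fica subcarregado, e então aplico o Lema~\ref{cap5:lemma:overchaged_undercharged} (uma inserção de nucleotídeos) para corrigi-lo; isso dá $|S|=2$. Há ainda o caso em que precisamos de uma inserção de nucleotídeos \emph{antes} da transposição para ``carregar'' a soma total das três fronteiras até o valor necessário (análogo ao que foi feito no Lema~\ref{cap5:lemma:rev_two_breakpoints}), seguida pela transposição que remove dois \emph{breakpoints} intergênicos, possivelmente com mais um \emph{indel} de ajuste --- totalizando $|S|\leq 3$.

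\textbf{Organização dos casos.} Eu estruturaria a prova imitando o Lema~\ref{cap5:lemma:rev_two_breakpoints}: (i) localizar a transposição $\tau(i+1,j,k)$ do Lema~\ref{cap4:lemma:transposition_breakpoint} e determinar quais pares $(A_i,A_{i+1})$, $(A_{j-1},A_{j})$, $(A_{k-1},A_k)$ se tornam adjacentes após a operação e quais deles deveriam deixar de ser \emph{breakpoints} (pelo menos um, possivelmente mais); (ii) definir $\ell$ como a soma das regiões intergênicas de origem nas três fronteiras afetadas e $m$ como a soma das regiões intergênicas de destino requeridas nas novas fronteiras que queremos ``limpar''; (iii) se $\ell \geq m$, escolher $x,y,z$ de modo a zerar o déficit nessas fronteiras, removendo um (ou dois) \emph{breakpoints} intergênicos diretamente, com possível \emph{indel} de ajuste adicional ($|S|\leq 2$); (iv) se $\ell < m$, aplicar primeiro uma inserção intergênica $\phi$ com $\sigma = \emptyset$ e $\breve{\sigma} = (m-\ell)$ numa das três regiões afetadas (o que não remove nem cria \emph{breakpoints} intergênicos, pois só aumenta o tamanho de uma região intergênica já ``em breakpoint''), depois a transposição que remove dois \emph{breakpoints} intergênicos ($|S|\leq 3$). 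O principal obstáculo será a análise combinatória de \emph{qual} dos três pares afetados pela transposição devemos priorizar e verificar que, em todos os sub-casos, o orçamento de três operações é suficiente --- em particular confirmar que nunca precisamos de dois \emph{indels} de ajuste \emph{após} a transposição além de, no pior caso, um \emph{indel} de carregamento \emph{antes} dela. Como a transposição afeta três fronteiras (uma a mais que a reversão do Lema~\ref{cap5:lemma:rev_two_breakpoints}), há mais configurações a enumerar, mas o argumento de conservação de massa de nucleotídeos entre as fronteiras é o mesmo em essência.
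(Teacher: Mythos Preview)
Your overall strategy matches the paper's in the easy case: pick a transposition that makes $A_j$ adjacent to $A_j+1$ (the paper takes the \emph{second} strip $(A_i,\ldots,A_j)$ and the element $A_x=A_j+1$ rather than the first strip, but the idea is the same), and if the two intergenic regions being merged at that new boundary sum to at least the target $\breve{\iota}^n_{A_x}$, a single transposition removes one intergenic breakpoint. One small correction here: the relevant budget is not the sum $\ell$ over \emph{all three} affected boundaries but only the two that feed the specific new adjacency you want to fix; the third boundary's nucleotides cannot be routed there by the parameters $(x,y,z)$.

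The real gap is in your hard case (iv). You assume that after a preliminary insertion, a \emph{single} transposition removes two intergenic breakpoints, by analogy with Lema~\ref{cap5:lemma:rev_two_breakpoints}. But that reversal lemma works because of a special structural hypothesis (any breakpoint-removing reversal leaves no decreasing strip, forcing the unique reversal to remove two classical breakpoints). No such guarantee holds for the transposition you chose: generically only one of the three new adjacencies is a classical non-breakpoint, so insertion $+$ one transposition yields $|S|=2$ removing only \emph{one} intergenic breakpoint, which violates the dichotomy the lemma demands. The paper resolves this differently: when nucleotides are insufficient it applies \emph{two} transpositions after one insertion. The first transposition makes $A_j$ adjacent to $A_x=A_j+1$ and simultaneously makes $A_{i-1}$ adjacent to $A_{j+1}$; the second transposition then makes $A_y=A_{j+1}-1$ adjacent to $A_{j+1}$. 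The insertion is sized as $\breve{\iota}^n_{A_x}-(\breve{A}_{j+1}+\breve{A}_x)+\max(0,\breve{\iota}^n_{A_{j+1}}-\breve{A}_{y+1})$ so that \emph{both} new adjacencies can be given their correct intergenic lengths. That is the missing idea: a second transposition, not a second indel.
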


\begin{proof}
Note que se $A \neq \iota^n$, então $\bi_{\tau}(\Ig) \geq 2$, já que para cada {\it breakpoint} $(A_i, A_{i+1})$ existe outro {\it breakpoint} com o elemento $A_{i} + 1$. 

Sejam $(A_{i-1}, A_{i})$ e $(A_{j}, A_{j+1})$ dois {\it breakpoints} intergênicos tal que $i < j$ e $j$ é mínimo. Lembre-se que para o problema da Distância de Transposições e Indels Intergênicos consideramos a definição de {\it breakpoints} de transposição (Definição~\ref{cap2:def:breakpoint_indel_t}). Portanto, existem apenas {\it strips} crescentes em $A$. 

Considere a {\it strip} $(A_i, \ldots, A_j)$ e sejam $A_x = A_{j} + 1$ e $A_y = A_{j+1} - 1$. Pela nossa escolha de $A_j$, temos que $x > j$. 

Se $\breve{A}_{j+1} + \breve{A}_x \geq \breve{\iota}^n_{A_x}$, então a transposição que troca a posição relativa dos segmentos adjacentes $(A_i~\ldots~A_j)$ e $(A_{j+1}~\ldots~A_{x-1})$ torna os elementos $A_j$ e $A_x = A_{j} + 1$ adjacentes. Como $\breve{A}_{j+1} + \breve{A}_x \geq \breve{\iota}^n_{A_x}$, podemos mover o número de nucleotídeos necessários de $\breve{A}_{j+1}$ para tornar $\breve{A}'_x = \breve{\iota}^n_{A_x}$ (ou mover o número de nucleotídeos excedentes de $\breve{A}_x$), onde $(A', \breve{A}')$ é o genoma resultante após aplicar a transposição. Dessa forma, essa transposição remove um {\it breakpoint} intergênico. 

Se $\breve{A}_{j+1} + \breve{A}_x < \breve{\iota}^n_{A_x}$, então a seguinte sequência remove dois {\it breakpoints} intergênicos: 
\begin{enumerate}
	\item Uma inserção de $\breve{\iota}^n_{A_x} - (\breve{A}_{j+1} + \breve{A}_x) + max(0, \breve{\iota}^n_{A_{j+1}} - \breve{A}_{y+1})$  nucleotídeos em $\breve{A}_{j+1}$; 
	\item Uma transposição que troca a posição relativa dos segmentos adjacentes $(A_i~\ldots~A_j)$ e $(A_{j+1}~\ldots~A_{x-1})$ torna os elementos $A_j$ e $A_x = A_{j} + 1$ adjacentes. Como descrito no caso anterior, o número de nucleotídeos necessários (ou excedentes) podem ser movidos de $\breve{A}_{j+1}$ (ou movidos de $\breve{A}_x$), removendo um {\it breakpoint} intergênico. Além disso, essa transposição torna $A_{i-1}$ e $A_{j+1}$ adjacentes;
	\item Uma transposição que torna $A_y = A_{j+1} - 1$ e $A_{j+1}$ adjacentes e move nucleotídeos necessários ou excedentes para remover o {\it breakpoint} intergênico entre esses elementos. 
\end{enumerate}

Note que, após as duas primeiras operações dessa sequência serem aplicadas, o tamanho da região intergênica entre $(A_{i-1}, A_{j+1})$ somado ao tamanho da região intergênica entre $(A_y, A_{y+1})$ é maior ou igual ao valor de $\breve{\iota}^n_{A_{j+1}}$.
\end{proof}

\begin{algorithm}[h]
	\caption{\label{cap5:algorithm_transpositions_breakpoints}
		Algoritmo para o problema da Distância de Transposições e Indels Intergênicos
	}
	\DontPrintSemicolon
  \Entrada{Uma instância intergênica $\Ig = (\G_o, \G_d)$, com $\G_o = (A, \breve{A})$ e $\G_d = (\iota^n, \breve{\iota}^n)$}
  \Saida{Uma sequência de rearranjos $S$ que transforma $\G_o$ em $\G_d$}
  	Seja $S \gets \emptyset$\;
	\Enqto{$\bi_{\tau}(\Ig) + |\Sigma_{\iota^n} \setminus \Sigma_{A}| > 0$}{
		\Se{$|\Sigma_A \setminus \Sigma_{\iota^n}| > 0$ ou $|\Sigma_{\iota^n} \setminus \Sigma_{A}| > 0$}{
			Seja $S' = (\beta)$, onde $\beta$ é um {\it indel} de acordo com o Lema~\ref{cap5:lemma:sets_indels}\;
		}
		\SenaoSe{existe {\it breakpoint} intergênico sobrecarregado ou subcarregado}{
			Seja $S' = (\beta)$, onde $\beta$ é um {\it indel} de acordo com o Lema~\ref{cap5:lemma:overchaged_undercharged}\;
		}\Senao(){
			Seja $S'$ uma sequência de operações de acordo com o Lema~\ref{cap5:lemma:breakpoints_transpositions}\;
		}
		$\G_o \gets \G_o \comp S'$\;
		Adicione as operações de $S'$ na sequência $S$\;
	}
  {\bf retorne} a sequência $S$\;
\end{algorithm}

O Algoritmo~\ref{cap5:algorithm_transpositions_breakpoints} usa os lemas~\ref{cap5:lemma:overchaged_undercharged}, \ref{cap5:lemma:sets_indels} e \ref{cap5:lemma:breakpoints_transpositions} para construir uma sequência de operações que transforma $\G_o$ em $\G_d$, considerando o modelo $\Mindel_{\tau}$. O Lema~\ref{cap5:lemma:upper_bound:transp_breakpoints} apresenta um limitante superior no número de rearranjos usados pelo algoritmo.

\begin{lemma}\label{cap5:lemma:upper_bound:transp_breakpoints}
Para qualquer instância intergênica $\Ig = (\G_o, \G_d)$, com $\G_o = (A, \breve{A})$ e $\G_d = (\iota^n, \breve{\iota}^n)$, o Algoritmo~\ref{cap5:algorithm_transpositions_breakpoints} transforma $\G_o$ em $\G_d$ usando no máximo $\frac{3}{2} (\bi_{\tau}(\Ig) + |\Sigma_{\iota^n} \setminus \Sigma_{A}|)$ rearranjos.
\end{lemma}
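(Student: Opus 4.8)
The plan is to bound the number of operations the algorithm performs by tracking the potential $\bi_{\tau}(\Ig) + |\Sigma_{\iota^n} \setminus \Sigma_{A}|$ and showing that, on average, each operation decreases it by at least $2/3$. First I would observe that the ``while'' loop of Algorithm~\ref{cap5:algorithm_transpositions_breakpoints} runs until $\bi_{\tau}(\Ig) + |\Sigma_{\iota^n} \setminus \Sigma_{A}| = 0$, which (as already noted just before Lemma~\ref{cap5:lemma:breakpoints_intergenicos_lower_bound}) is equivalent to $\G_o = \G_d$, so the algorithm indeed terminates producing a valid sorting sequence, and the total count of rearranjos is the sum over iterations of the lengths $|S'|$.

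Next I would do a case analysis on which branch of the conditional is taken in a given iteration, using Lemmas~\ref{cap5:lemma:overchaged_undercharged}, \ref{cap5:lemma:sets_indels} and \ref{cap5:lemma:breakpoints_transpositions}. In the first two branches a single {\it indel} $\beta$ is applied with $\Delta \Phi(\Ig, \beta) + \Delta \bi_{\tau}(\Ig, \beta) \geq 1$, so one operation removes at least one unit of potential, giving a local ratio of at most $1 \le 3/2$. In the third branch (Lemma~\ref{cap5:lemma:breakpoints_transpositions}) there are two sub-cases: either a single transposition removes one {\it breakpoint} intergênico (ratio $1$), or a sequence $S'$ with $|S'| \le 3$ removes two {\it breakpoints} intergênicos, giving ratio at most $3/2$. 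Since the intermediate inserções in that length-$3$ sequence do not increase $|\Sigma_{\iota^n}\setminus\Sigma_A|$ and the net effect is $\Delta \Phi(\Ig, S') + \Delta \bi_{\tau}(\Ig, S') \ge 2$, the worst local ratio across all branches is $3/2$. Summing over all iterations, the total number of rearranjos is at most $\frac{3}{2}\bigl(\bi_{\tau}(\Ig) + |\Sigma_{\iota^n} \setminus \Sigma_{A}|\bigr)$.

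I would carry out the steps in this order: (i) terminology and termination; (ii) confirm that once $|\Sigma_A \setminus \Sigma_{\iota^n}| = |\Sigma_{\iota^n} \setminus \Sigma_{A}| = 0$ and no {\it breakpoint} is sobrecarregado or subcarregado, Lemma~\ref{cap5:lemma:breakpoints_transpositions} is applicable, so the third branch is well-defined; (iii) the per-iteration potential-drop bound above; (iv) an induction (or direct summation) on the potential value $\bi_{\tau}(\Ig) + |\Sigma_{\iota^n} \setminus \Sigma_{A}|$, analogous to the argument used for Lemma~\ref{cap5:lemma:ub_sorting_transpositions} in Chapter~\ref{cap:label:indel}, to conclude the global bound. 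One subtlety to handle carefully is that in the length-$3$ case the first operation is a pure nucleotide insertion that removes no {\it breakpoint}; I must make sure it is genuinely ``paid for'' by the two {\it breakpoints} removed afterwards, i.e. the accounting is over the whole block $S'$ of size $\le 3$ and not operation-by-operation.

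The main obstacle I expect is exactly this block-accounting bookkeeping: the algorithm does not decrease the potential monotonically within a block, so the clean ``each operation removes $\ge 2/3$'' intuition only holds when amortized over a block, and one must rule out that blocks chain badly (for instance, that the intermediate state after the insertion inside a Lemma~\ref{cap5:lemma:breakpoints_transpositions} block could itself trigger a fresh block before the two transpositions complete). The resolution is that $S'$ is chosen and applied atomically in the loop body, so no interleaving occurs; once that is made explicit, the induction goes through routinely, mirroring the proof of Lemma~\ref{cap5:lemma:upper_bound:rev_breakpoints} which the text already defers to the analogous Chapter~\ref{cap:label:indel} lemmas.
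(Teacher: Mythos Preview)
Your proposal is correct and follows essentially the same approach as the paper: both argue that each iteration of the loop applies a block $S'$ with $|S'|/\bigl(\Delta \Phi(\Ig,S') + \Delta \bi_{\tau}(\Ig,S')\bigr) \le 3/2$, the worst case being the three-operation sequence of Lemma~\ref{cap5:lemma:breakpoints_transpositions} removing two {\it breakpoints} intergênicos, and then sum over iterations. Your write-up is more explicit about termination, applicability of the third branch, and the atomicity of $S'$, but the underlying amortized-ratio argument is identical to the paper's brief proof.
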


\begin{proof}
A cada iteração, o algoritmo usa uma operação $\beta$ com $\Delta \Phi(\Ig, \beta) + \Delta \bi_{\tau}(\Ig, \beta) = 1$ ou uma sequência $S'$ com três operações e $\Delta \Phi(\Ig, S') + \Delta \bi_{\tau}(\Ig, S') = 2$. Dessa forma, no pior caso, o algoritmo usa em média $\frac{3}{2}$ operações para diminuir o valor de $\bi_{\tau}(\Ig) + |\Sigma_{\iota^n} \setminus \Sigma_{A}|$ em uma unidade. Consequentemente, a sequência encontrada pelo algoritmo transforma $\G_o$ em $\G_d$ usando no máximo $\frac{3}{2} (\bi_{\tau}(\Ig) + |\Sigma_{\iota^n} \setminus \Sigma_{A}|)$ operações.
\end{proof}

\begin{theorem}
O Algoritmo~\ref{cap5:algorithm_transpositions_breakpoints} é uma $4.5$-aproximação para o problema da Distância de Transposições e Indels Intergênicos.
\end{theorem}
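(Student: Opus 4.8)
The plan is to derive the result as an immediate consequence of the lower bound of Lemma~\ref{cap5:lemma:breakpoints_intergenicos_lower_bound} and the upper bound of Lemma~\ref{cap5:lemma:upper_bound:transp_breakpoints}, so the proof is essentially an arithmetic combination of the two. First I would set $X = \bi_{\tau}(\Ig) + |\Sigma_{\iota^n} \setminus \Sigma_{A}|$ and recall that $X = 0$ if and only if $\G_o = \G_d$; in that degenerate case the while loop of Algorithm~\ref{cap5:algorithm_transpositions_breakpoints} never executes, the returned sequence is empty, and the optimal distance is also $0$, so the approximation ratio is trivially respected. Hence from now on I would assume $X > 0$.

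Next I would invoke Lemma~\ref{cap5:lemma:breakpoints_intergenicos_lower_bound}, which gives $d_{\Mindel_{\tau}}(\Ig) \geq X/3$; this bound itself rests on the fact that every transposition or indel changes the value of $\bi_{\tau}(\Ig) + |\Sigma_{\iota^n}\setminus\Sigma_A|$ by at most $3$ (Lemmas~\ref{cap5:lemma:breakpoint_bound_id} and \ref{cap5:lemma:breakpoint_bound_transp}). Then I would apply Lemma~\ref{cap5:lemma:upper_bound:transp_breakpoints}, which guarantees that the sequence $S$ returned by Algorithm~\ref{cap5:algorithm_transpositions_breakpoints} satisfies $|S| \leq \frac{3}{2} X$. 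Combining the two inequalities yields $|S| / d_{\Mindel_{\tau}}(\Ig) \leq \frac{(3/2) X}{X/3} = \frac{9}{2} = 4.5$, which is exactly the claimed approximation factor. Since $S$ is by construction a valid sequence transforming $\G_o$ into $\G_d$ (each iteration applies either an indel from Lemma~\ref{cap5:lemma:sets_indels} or Lemma~\ref{cap5:lemma:overchaged_undercharged}, or a sequence from Lemma~\ref{cap5:lemma:breakpoints_transpositions}, and the loop terminates precisely when $X$ reaches $0$), this completes the argument.

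The only genuinely substantive part of the whole development — already discharged before reaching this statement — is the amortized counting behind Lemma~\ref{cap5:lemma:upper_bound:transp_breakpoints}, namely that in every non-trivial configuration the algorithm can always make progress either with a single transposition that removes one intergenic breakpoint or with at most three operations that remove two intergenic breakpoints (Lemma~\ref{cap5:lemma:breakpoints_transpositions}), keeping the worst-case cost at $3/2$ operations per unit decrease of $X$. Given that lemma, the present theorem requires no further case analysis and the proof is a one-line calculation: apply Lemmas~\ref{cap5:lemma:breakpoints_intergenicos_lower_bound} and \ref{cap5:lemma:upper_bound:transp_breakpoints} and divide.
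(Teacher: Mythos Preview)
Your proposal is correct and follows exactly the same approach as the paper, which simply states that the result follows directly from Lemmas~\ref{cap5:lemma:breakpoints_intergenicos_lower_bound} and~\ref{cap5:lemma:upper_bound:transp_breakpoints}. Your write-up is more explicit about the arithmetic and the degenerate case $X=0$, but the underlying argument is identical.
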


\begin{proof}
Diretamente dos lemas~\ref{cap5:lemma:breakpoints_intergenicos_lower_bound} e \ref{cap5:lemma:upper_bound:transp_breakpoints}.
\end{proof}

Note que para o problema da Distância de Reversões, Transposições e Indels Intergênicos, podemos melhorar o Algoritmo~\ref{cap5:algorithm_reversals_breakpoints} ao usar as operações descritas no Lema~\ref{cap5:lemma:breakpoints_transpositions} quando existem apenas {\it strips} crescentes na string do genoma de origem. No entanto, no pior caso, o fator de aproximação continua o mesmo para essa nova versão do algoritmo. 

Os algoritmos~\ref{cap5:algorithm_reversals_breakpoints} e \ref{cap5:algorithm_transpositions_breakpoints} possuem complexidade de tempo de $O(n^2)$. Note que o algoritmo executa por no máximo $O(n)$ iterações, já que $\bi_{\M}(\Ig) + |\Sigma_{\iota^n} \setminus \Sigma_{A}|$ é $O(n)$, e cada operação pode ser encontrada e aplicada em tempo linear.

\section{Algoritmos de Aproximação usando Grafo de Ciclos}\label{cap5:section:grafo_ciclos}

Nesta seção, apresentamos os seguintes algoritmos de aproximação:

\begin{itemize}
	\item Um algoritmo de $2.5$-aproximação para o problema da Distância de Reversões e Indels Intergênicos em Strings com Sinais (modelo $\Mindel_{\rho}$);
	\item Um algoritmo de $4$-aproximação para o problema da Distância de Transposições e Indels Intergênicos em Strings sem Sinais (modelo $\Mindel_{\tau}$);
	\item Um algoritmo de $4$-aproximação para o problema da Distância de Reversões, Transposições e Indels Intergênicos em Strings com Sinais (modelo $\Mindel_{\rho,\tau}$).
\end{itemize}

Esses algoritmos e os limitantes inferiores apresentados nesta seção utilizam os conceitos relacionados ao grafo de ciclos rotulado e ponderado (Seção~\ref{cap2:sec:grafo_ciclos_intergenico}). Sempre consideramos que as strings ($A$ e $\iota^n$) de uma instância $\Ig = (\G_o, \G_d)$ e suas formas simplificadas ($\pi^A$ e $\pi^\iota$) estão nas suas versões estendidas.

Agora, apresentamos limitantes para o valor de $\Delta \cgood(\Ig, \beta)$ dependendo do tipo do rearranjo $\beta$.

\begin{lemma}\label{cap5:lemma:var_good_cycles_indels}
Para qualquer {\it indel} $\beta$ e instância intergênica $\Ig = (\G_o, \G_d)$, com $\G_o = (A, \breve{A})$ e $\G_d = (\iota^n, \breve{\iota}^n)$, temos que $\Delta \cgood(\Ig, \beta) \leq 1$.
\end{lemma}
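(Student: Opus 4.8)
The plan is to mimic the structure of Lemma~\ref{cap4:lemma:lb_graph_deletion} and Lemma~\ref{cap4:lemma:lb_graph_insertion}, but now working in the labeled and weighted cycle graph $G(\Ig)$, where ``good'' means balanced \emph{and} clean. First I would recall the quantity we track: $|\pi^A| + 1 - \cgood(\Ig)$, which is $0$ exactly when $\G_o = \G_d$, and the definition $\Delta \cgood(\Ig, \beta) = (|\pi^A| + 1 - \cgood(\G_o, \G_d)) - (|\pi^{A'}| + 1 - \cgood(\G'_o, \G_d))$ with $\G_o \comp \beta = \G'_o = (A', \breve{A}')$. The whole claim is that a single insertion or deletion can decrease this potential by at most one unit.

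The deletion case is the easier half and I would handle it first. A deletion $\psi$ is constrained to act on a maximal contiguous run of $\deletionElement$ elements, so in $G(\Ig)$ it can only change the label of a single source edge (turning a labeled source edge clean) and adjust the weight of that one source edge; it does not add or remove vertices, source edges, or destination edges, so $|\pi^A|$ is unchanged and $\Delta \cgood(\Ig, \psi) \leq 1$ since at best one previously bad cycle becomes good. Here I should be careful to note that making a source edge clean may not suffice to make its cycle good if the cycle is also unbalanced or contains other labeled edges, so the bound of $1$ is an upper bound achieved only in the best scenario; this mirrors the reasoning already used for $\cclean$ in the unweighted setting.

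The insertion case is where the real work lies, and it is the step I expect to be the main obstacle. An insertion $\insertion(i,\sigma)$ adds $|\sigma|$ vertices pairs, $|\sigma|$ source edges and $|\sigma|$ destination edges, so $|\pi^A|$ grows by $|\sigma|$; the net effect on the potential therefore balances the $|\sigma|$ newly created source edges against any increase in $\cgood$. The key combinatorial observation, already exploited for $\lambda$ in Lemma~\ref{cap4:lemma:lb_graph_insertion}, is that every pair of elements of $\Sigma_{\iota^n}\setminus\Sigma_A$ is non-adjacent in $\iota^n$, so the newly inserted destination edges are all labeled, meaning the freshly created cycles are either trivial clean good cycles or carry leftover labeled edges; moreover if the insertion touches destination edges from $k$ distinct cycles, those cycles are merged, so $\Delta c \leq -(k-1)$ offsetting any gain. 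I would also need to verify the weight bookkeeping: an insertion can set the intergenic sizes of the inserted segment freely, so in the best case each of the $|\sigma|$ new trivial cycles is balanced, but each such balanced clean cycle was ``paid for'' by one of the $|\sigma|$ new source edges, so it contributes nothing net to the potential. Assembling these observations — new labeled destination edges force badness or triviality, cycle merges cancel multi-cycle gains, and each new good cycle is offset by a new source edge — gives $\Delta \cgood(\Ig, \insertion) \leq 1$. The delicate point to get right is the case where the insertion removes labels from several destination-labeled runs inside a single cycle, splitting it into many pieces; as in Lemma~\ref{cap4:lemma:lb_graph_insertion} one shows that the resulting pieces necessarily acquire deletion-labeled edges (hence remain bad), so no extra good cycles are gained beyond the single trivial one.
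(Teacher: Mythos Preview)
Your deletion case matches the paper's and is correct.

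For insertions, you have the right core counting intuition --- that each newly created good cycle is ``paid for'' by one of the new source edges --- but you wrap it in two incorrect claims imported from Lemma~\ref{cap4:lemma:lb_graph_insertion}. First, the assertion that ``the newly inserted destination edges are all labeled'' is false: when you insert an element $z \in \Sigma_{\iota^n}\setminus\Sigma_A$, the old labeled destination edge carrying $z$ is replaced by two \emph{clean} destination edges, precisely because the maximality convention guarantees the neighbours of $z$ in $\iota^n$ already lie in $\Sigma_A$. Second, the fallback that ``the resulting pieces necessarily acquire deletion-labeled edges (hence remain bad)'' does not transfer here: the instance may have $\Sigma_A\setminus\Sigma_{\iota^n}=\emptyset$, so no source edge is labeled at all, and the run-based mechanism of Lemma~\ref{cap4:lemma:lb_graph_insertion} simply does not engage. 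With both of these props removed, your argument for why the split pieces cannot all be good collapses.

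The paper's proof sidesteps runs and labels entirely and uses a direct edge--cycle count. An insertion of $k$ elements replaces one source edge of a cycle $C$ by $k{+}1$ source edges and adds $k$ destination edges; since at least one of the $k{+}1$ new source edges must remain in (what is left of) $C$, at most $k$ new cycles are created, so $(|\pi^A|{+}1)-c$ cannot decrease. For $\cgood$, the best case is that $C$ itself becomes good and all $k$ freshly created cycles are good, a net gain of exactly~$1$. The key closing observation is that if any \emph{other} pre-existing cycle $C'$ becomes good, the insertion must have altered one of its destination edges, which forces at least one of the new source edges into (the merged cycle containing) $C'$; this consumes one of the $k$ ``free'' source edges and reduces by one the number of independent new cycles that can be formed. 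The trade-off is one-for-one, giving $\Delta\cgood(\Ig,\phi)\le 1$ with no reference to runs, and that is the argument you should replace yours with.
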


\begin{proof}
Uma deleção afeta uma única aresta de origem do grafo, sendo que apenas o custo e o rótulo dessa aresta podem ser alterados. Portanto, apenas um ciclo é afetado por uma deleção. Considere uma deleção $\psi$ e seja $C$ o ciclo afetado por $\psi$. Quando $C$ é rotulado ou desbalanceado, no melhor cenário, a deleção $\psi$ transforma $C$ em um ciclo bom. Quando $C$ é um ciclo bom, a deleção não aumenta a quantidade de ciclos bons no grafo. Portanto, $\Delta \cgood(\Ig, \psi) \leq 1$.

Considere uma inserção $\phi$ que insere $k$ elementos em $A$. Essa inserção adiciona $2k$ vértices no grafo e substitui uma aresta de origem de um ciclo $C$ por $k+1$ arestas de origem, também adicionando $k$ arestas de destino no grafo. Como o novo grafo possui $k+1$ arestas de origem distintas em relação ao grafo $G(\Ig)$ e pelo menos uma das novas arestas pertence a $C$, no melhor cenário, $k$ ciclos são criados no grafo. Portanto, 

\begin{align*}
&(|\pi^A| + 1 - c(\G_o, \G_d)) - (|\pi^{A \comp \phi}| + 1 - c(\G_o \comp \phi, \G_d)) \\
=& (|\pi^A| - |\pi^{A \comp \phi}|) - (c(\G_o, \G_d) - c(\G_o \comp \phi, \G_d)) \\
=& -k - (-k) = 0.
\end{align*}

No melhor cenário, o ciclo $C$ se torna um ciclo bom e todos os $k$ ciclos adicionados também são ciclos bons, o que resulta em $\Delta \cgood(\Ig, \phi) = 1$. 
Note que se qualquer outro ciclo $C'$ de $G(\Ig)$ se torna balanceado e limpo após aplicar $\phi$, então pelo menos uma das novas arestas de origem foi adicionada em $C'$ e a inserção não pode ter adicionado $k$ novos ciclos no grafo. De forma similar, se $x$ ciclos distintos de $C$ se tornam ciclos bons, então no máximo $k-x$ novos ciclos foram adicionados no grafo por $\phi$, o que também resulta em $\Delta \cgood(\Ig, \phi) \leq 1$.
\end{proof}

\begin{lemma}\label{cap5:lemma:var_good_cycles_rev}
Para qualquer reversão $\rho$ e instância intergênica $\Ig = (\G_o, \G_d)$, com $\G_o = (A, \breve{A})$ e $\G_d = (\iota^n, \breve{\iota}^n)$, temos que $\Delta \cgood(\Ig, \rho) \leq 1$.
\end{lemma}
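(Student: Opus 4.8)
The plan is to mirror the structure of Lemma~\ref{cap4:lemma:lb_graph_reversal} but adapt it to the labeled and weighted cycle graph $G(\Ig)$, where a ``good'' cycle must be simultaneously balanced and clean. First I would recall the classical fact of Bafna and Pevzner~\cite{1996-bafna-pevzner} that a reversal $\rho$ acting on the cycle graph does one of three things: it joins two cycles $C_1$ and $C_2$ into a single cycle $C'$ (so $\Delta c = -1$), it acts on two origin edges of the same cycle $C$ and either produces a single cycle $C'$ on the same vertices ($\Delta c = 0$), or splits $C$ into two cycles $C_1'$ and $C_2'$ ($\Delta c = 1$). The quantity $\Delta \cgood(\Ig, \rho)$ is defined via $(|\pi^A| + 1 - \cgood(\G_o, \G_d)) - (|\pi^{A'}| + 1 - \cgood(\G'_o, \G_d))$; since a reversal inserts no elements, $|\pi^{A'}| = |\pi^A|$, so it suffices to bound $\cgood(\G'_o, \G_d) - \cgood(\G_o, \G_d) + (c$-bookkeeping$)$ — more precisely, I would show $\Delta c(\Ig,\rho) + \Delta\lambda^{\mathrm{weighted}}(\Ig,\rho) \le 1$ in the appropriate combined potential, or argue directly on $\cgood$ case by case.

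The key steps, in order: (1) handle the merge case: a reversal joining $C_1$ and $C_2$ has $\Delta c = -1$, so even if the resulting $C'$ is good, $\Delta \cgood \le -1 + (\text{at most } 2) $ — but I must check that creating one good cycle from two non-good ones while losing a cycle cannot exceed $+1$; the weight bookkeeping is the delicate point here because the total intergenic weight on origin and destination edges is conserved globally, so if $C'$ becomes balanced then the ``imbalance'' of $C_1$ and $C_2$ must have cancelled, which is fine and contributes at most the gain of one good cycle against the loss of one cycle count. (2) Handle the ``same cycle, no split'' case: $\Delta c = 0$, and a reversal acting on two origin edges of one cycle can at best turn that one cycle good, giving $\Delta \cgood \le 1$. (3) Handle the split case: $\Delta c = 1$, and a reversal that splits $C$ into $C_1'$ and $C_2'$ — here I would argue that at most one of the two new cycles can be good while the other inherits the remaining imbalance or labels, so the net change in $\cgood$ is at most $0$, and combined with $\Delta c = 1$ the potential changes by at most $1$. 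Throughout, I would reuse the observation that reversals move $\deletionElement$-labeled segments and intergenic nucleotides freely between the two sides of the cut, which is exactly how the earlier lemmas (e.g., Lemma~\ref{cap4:lemma:reversal_divergent}) control label placement.

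The main obstacle I expect is the \emph{weight} (balanced/unbalanced) dimension, which has no analogue in the plain labeled cycle graph of Chapter~\ref{cap:label:indel}: a reversal redistributes intergenic weights among at most three regions ($\breve{A}_i$ splits into $x,x'$ and $\breve{A}_{j+1}$ into $y,y'$, recombining as $x+y$ and $x'+y'$), so I need a careful accounting of how the sums $\sum w(e_{o_i})$ and $\sum w(e'_{d_i})$ over a cycle change. The crucial sub-claim is that the \emph{total} origin-edge weight of a cycle is invariant under a reversal that keeps the cycle intact (only its distribution among edges changes), and that when a cycle splits, the imbalance is partitioned between the two pieces so at most one piece can be balanced ``for free'' while matching destination weights — and even then, a newly balanced piece must also be clean to count, which constrains label movement simultaneously. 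I would formalize this by tracking the pair (label content, weight imbalance) of each affected cycle and showing the good-cycle count cannot jump by more than $1$ relative to the cycle-count change; once that bookkeeping lemma is in hand, the three-case analysis above closes the proof routinely.
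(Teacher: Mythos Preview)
Your three-case Bafna--Pevzner analysis is exactly the route the paper takes, but your bookkeeping is tangled in a way that produces a genuine gap. Since a reversal does not change $|\pi^A|$, the definition of $\Delta\cgood(\Ig,\rho)$ collapses to $\cgood(\G_o\cdot\rho,\G_d)-\cgood(\G_o,\G_d)$, i.e., the plain change in the number of good cycles. There is no ``$c$-bookkeeping'' to add, and the Chapter~\ref{cap:label:indel} potential $\Delta c+\Delta\lambda$ plays no role here; invoking it obscures rather than helps. In your merge case the ``$-1+2$'' arithmetic and in your split case the phrase ``combined with $\Delta c=1$'' both reflect this confusion.

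The concrete gap is in the split case. Your claim that ``at most one of the two new cycles can be good while the other inherits the remaining imbalance or labels'' is only valid when $C$ is \emph{not} good. When $C$ is itself balanced and clean, there is no imbalance or label to inherit: both pieces are automatically clean, and the zero total imbalance can split as $0+0$, making both pieces balanced and hence good. In that scenario $\cgood$ goes from $1$ (for $C$) to $2$ (for $C_1',C_2'$), so $\Delta\cgood=1$, not $\le 0$. The paper handles the split case by separating the two sub-cases: if $C$ is good, you lose one good cycle and gain at most two, giving $\Delta\cgood\le 1$; if $C$ is bad, at least one of $C_1',C_2'$ is bad (the imbalance or label must land somewhere), so you gain at most one good cycle and $\Delta\cgood\le 1$. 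Similarly, in the $\Delta c=0$ case the paper's argument is sharper than yours: a bad cycle cannot become good, because the vertex set, destination edges, total origin-edge weight, and origin-edge label content are all preserved, so $\Delta\cgood=0$ there, not merely $\le 1$.
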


\begin{proof}
Bafna e Pevzner~\cite{1996-bafna-pevzner} demonstraram que $\Delta c(\Ig, \rho)$ $\in \{-1,0,1\}$, para qualquer reversão $\rho$. Além disso, como uma reversão não adiciona elementos, sabemos que a quantidade de arestas no grafo permanece a mesma. 

Se $\Delta c(\Ig, \rho) = -1$, então $\rho$ afeta dois ciclos $C$ e $D$, juntando esses dois ciclos em um único ciclo $C'$. No melhor cenário, tanto $C$ quanto $D$ são desbalanceados e limpos, mas $C'$ é um ciclo balanceado e limpo, fazendo com que $\Delta \cgood(\Ig, \rho) = 1$. Se $C$ ou $D$ é rotulado, então $C'$ também é rotulado e nenhum ciclo bom é adicionado ao grafo. 

Se $\Delta c(\Ig, \rho) = 0$, então $\rho$ afeta um único ciclo $C$, transformando-o em um ciclo $C'$ com os mesmos vértices e arestas de destino que o ciclo $C$. Note que se $C$ é desbalanceado ou rotulado, então $C'$ também é desbalanceado ou rotulado, já que nenhum rótulo é removido e a soma dos custos das arestas de origem permanece a mesma. Portanto, $\Delta \cgood(\Ig, \rho) = 0$.

Se $\Delta c(\Ig, \rho) = 1$, então $\rho$ afeta um único ciclo $C$, transformando-o em dois ciclos $C'$ e $D'$. {\revisaof Se $C$ é um ciclo bom, então temos $\Delta \cgood(\Ig, \rho) = 1$, mesmo que os ciclos $C'$ e $D'$ sejam bons.} Se $C$ é um ciclo desbalanceado ou rotulado, então pelo menos um dos ciclos $C'$ e $D'$ também deve ser desbalanceado ou rotulado. Portanto, no melhor cenário, $\Delta \cgood(\Ig, \rho) = 1$.
\end{proof}

\begin{lemma}\label{cap5:lemma:var_good_cycles_transp}
Para qualquer transposição $\tau$ e instância intergênica $\Ig = (\G_o, \G_d)$, com $\G_o = (A, \breve{A})$ e $\G_d = (\iota^n, \breve{\iota}^n)$, temos que $\Delta \cgood(\Ig, \tau) \leq 2$.
\end{lemma}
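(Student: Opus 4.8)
A ideia é seguir a mesma estratégia usada nos lemas~\ref{cap5:lemma:var_good_cycles_indels} e \ref{cap5:lemma:var_good_cycles_rev}: analisar caso a caso a forma como uma transposição altera a estrutura de ciclos do grafo de ciclos rotulado e ponderado e, em cada caso, verificar que o número de ciclos bons não pode aumentar em mais de $2$. Primeiro, recordo que uma transposição não insere nem remove elementos, logo a quantidade de arestas de origem e de destino do grafo permanece fixa; assim, basta controlar a variação $\Delta \cgood(\Ig, \tau)$ diretamente pela contagem de ciclos. Em segundo lugar, uso a classificação conhecida (Christie~\cite{1996-christie}, já invocada na prova do Lema~\ref{cap4:lemma:lb_graph_bi}) do efeito de uma transposição --- vista como {\it block interchange} de segmentos adjacentes --- sobre o número de ciclos: uma transposição tem $\Delta c(\Ig, \tau) \in \{-2, 0, 2\}$, e o valor $+2$ só ocorre quando $\tau$ age sobre uma tripla orientada de um único ciclo, caso em que $C$ é substituído por três ciclos $C'$, $C''$, $C'''$.

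\textbf{Passos principais.} O primeiro passo é o caso $\Delta c(\Ig, \tau) = 2$: a transposição parte um ciclo $C$ em três novos ciclos $C'$, $C''$, $C'''$. Se $C$ já é um ciclo bom (balanceado e limpo), então os três novos ciclos herdam rótulos vazios nas arestas de origem e de destino e a soma dos custos de cada lado é preservada globalmente, de modo que, no melhor cenário, todos os três novos ciclos são bons e $C$ deixa de existir; isso dá $\Delta \cgood(\Ig, \tau) \leq 3 - 1 = 2$. Se $C$ é ruim (rotulado ou desbalanceado), então pelo menos um dos três ciclos resultantes permanece rotulado ou desbalanceado --- porque um rótulo de aresta de origem sobrevive em algum dos ciclos e, se $C$ é desbalanceado, a diferença entre a soma dos custos das arestas de origem e de destino de $C$ se distribui e ao menos um dos três ciclos absorve um saldo não-nulo --- e, portanto, no máximo dois dos novos ciclos são bons, ou seja, $\Delta \cgood(\Ig, \tau) \leq 2$. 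O segundo passo é o caso $\Delta c(\Ig, \tau) = 0$: aqui a transposição afeta dois ou três ciclos e os reorganiza mantendo a contagem; no melhor cenário, se $x$ ciclos bons são criados, então pelo menos $x$ ciclos foram consumidos (pois o número total de ciclos não muda), logo $\Delta \cgood(\Ig, \tau) \leq$ (número de ciclos criados) $\leq 2$ --- mais precisamente, argumento que no máximo dois ciclos bons ``novos'' podem surgir sem que outros dois deixem de ser bons, o que mantém $\Delta \cgood(\Ig, \tau) \le 2$ (na verdade $\le 0$, mas o limitante $2$ basta). O terceiro passo, $\Delta c(\Ig, \tau) = -2$, é imediato: a transposição funde quatro ciclos em dois (ou três em um, etc.), então $\Delta \cgood(\Ig, \tau) \leq$ (número de ciclos resultantes) $\leq 2$, já que no máximo os dois ciclos finais podem ser bons.

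\textbf{Principal obstáculo.} A parte mais delicada é tratar com cuidado o caso $\Delta c(\Ig, \tau) = 2$ quando $C$ é desbalanceado: preciso argumentar que o saldo de peso de $C$ (a diferença entre soma dos custos das arestas de destino e soma dos custos das arestas de origem) não pode se anular em todos os três ciclos $C'$, $C''$, $C'''$ simultaneamente. Isso decorre do fato de que a transposição intergênica $\tau^{(i,j,k)}_{(x,y,z)}$ apenas particiona e recombina os mesmos valores intergênicos presentes em $\G_o$, sem alterar nenhum custo de aresta de destino, de modo que a soma total dos saldos dos três novos ciclos é igual ao saldo de $C$, que por hipótese é não-nulo; logo ao menos um dos três é não-nulo. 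Um argumento análogo vale para rótulos: uma transposição não remove rótulos de arestas, apenas realoca quais elementos $\deletionElement$ e quais arestas de destino rotuladas ficam em cada ciclo, e como $C$ rotulado possui ao menos uma aresta (de origem ou de destino) rotulada, essa aresta permanece em algum dos três novos ciclos. Os demais casos são contagens rotineiras de ciclos uma vez estabelecido este ponto. Concluo então que $\Delta \cgood(\Ig, \tau) \leq 2$ em todos os casos.
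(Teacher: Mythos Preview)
Your approach is essentially the same as the paper's --- a case analysis on how a transposition rearranges the cycles of the graph, bounding $\Delta \cgood$ in each case, with the key observation that when a bad (labeled or unbalanced) cycle is split, at least one descendant inherits a label or a nonzero balance. The paper organizes the cases by the number of cycles that $\tau$ touches (three, two, or one), whereas you organize by $\Delta c(\Ig,\tau)\in\{-2,0,2\}$; both lead to the same conclusion.

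Two imprecisions are worth correcting. First, your enumeration of configurations is off: for a transposition (three origin edges), $\Delta c=0$ arises from ``$1\to 1$'' or ``$2\to 2$'' (not ``dois ou três ciclos''), and $\Delta c=-2$ arises only from ``$3\to 1$'' (``quatro ciclos em dois'' is a block-interchange configuration, not a transposition). Your bounding argument $\Delta \cgood \le (\text{número de ciclos resultantes})$ still goes through, but the case list should be accurate. Second, the parenthetical claim ``(na verdade $\le 0$)'' in the $\Delta c=0$ case is false: if two clean unbalanced cycles are rearranged into two balanced clean cycles (the ``$2\to 2$'' configuration), then $\Delta \cgood=2$. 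This is exactly the best scenario the paper exhibits in that case. Remove that aside; the weaker bound $\le 2$ is both what you need and what is actually true.
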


\begin{proof}

Uma transposição não remove rótulos de arestas de destino. Além disso, uma transposição só pode remover um rótulo de uma arestas de origem ao transferir um elemento $\deletionElement$ para outra aresta de origem. Dividimos essa prova de acordo com a quantidade de ciclos afetados por uma transposição $\tau$~\cite{1998-bafna-pevzner}.

Se $\tau$ afeta três ciclos $C_1$, $C_2$, e $C_3$, então os vértices desses ciclos são unidos em um único ciclo $C'$. No melhor cenário, $C_1$, $C_2$, e $C_3$ são ciclos desbalanceados e limpos, e $C'$ é um ciclo bom. Portanto, $\Delta \cgood(\Ig, \tau) = 1$.

Se $\tau$ afeta dois ciclos $C_1$ e $C_2$, então os vértices desses ciclos são rearranjados em dois ciclos $C'_1$ e $C'_2$. No melhor cenário, $C_1$ e $C_2$ são ciclos desbalanceados e limpos, e $C'_1$ e $C'_2$ são ciclos bons. Portanto, $\Delta \cgood(\Ig, \tau) = 2$.

Se $\tau$ afeta um único ciclo $C$, então os vértices de $C$ são rearranjados em um ciclo $C'$ ($\Delta c(\Ig, \tau) = 0$) ou $C$ é transformado em três novos ciclos ($\Delta c(\Ig, \tau) = 2$). Se $\Delta c(\Ig, \tau) = 0$, então o número de ciclos bons não é alterado. Se $\Delta c(\Ig, \tau) = 2$, então dividimos a prova em dois casos. Se $C$ é um ciclo bom, então o melhor cenário ocorre quando todos os três novos ciclos também são bons e, consequentemente, $\Delta \cgood(\Ig, \tau) = 2$. Se $C$ é desbalanceado ou rotulado, então pelo menos um dos novos ciclos também é um ciclo desbalanceado ou rotulado. Consequentemente, no máximo dois desses novos ciclos são bons e, consequentemente,  temos $\Delta \cgood(\Ig, \tau) = 2$, no melhor cenário.
\end{proof}

\begin{lemma}\label{cap5:lemma:grafo_ciclos_intergenicos_lower_bound}
	Para qualquer instância intergênica $\Ig = (\G_o, \G_d)$, com $\G_o = (A, \breve{A})$ e $\G_d = (\iota^n, \breve{\iota}^n)$, temos que:
	\begin{align*}
		d_{\Mindel_{\rho}}(\Ig) \geq |\pi^A| + 1 - \cgood(\Ig), \\
		d_{\Mindel_{\tau}}(\Ig) \geq \frac{|\pi^A| + 1 - \cgood(\Ig)}{2}, \\
		d_{\Mindel_{\rho,\tau}}(\Ig) \geq \frac{|\pi^A| + 1 - \cgood(\Ig)}{2}. \\
	\end{align*}
\end{lemma}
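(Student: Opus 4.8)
The plan is to prove each of the three lower bounds in Lemma~\ref{cap5:lemma:grafo_ciclos_intergenicos_lower_bound} by the same ``potential-function'' argument that was used for Lemma~\ref{cap4:lemma:lb_distance_graph} and Lemma~\ref{cap5:lemma:breakpoints_intergenicos_lower_bound}. The potential here is $\Psi(\Ig) = |\pi^A| + 1 - \cgood(\Ig)$, and the first step is to recall the observation already stated in the excerpt: $\Psi(\Ig) = 0$ if and only if $\G_o = \G_d$, i.e. the graph $G(\Ig)$ consists solely of good (balanced, clean) unitary cycles exactly when the instance is already solved. Hence any sequence $S$ of operations from the chosen model that transforms $\G_o$ into $\G_d$ must drive $\Psi$ down from $\Psi(\Ig)$ to $0$.

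Next I would bound, per operation, how much $\Psi$ can decrease. Note that for a single operation $\beta$ we have
\[
\Psi(\Ig) - \Psi(\G_o \comp \beta, \G_d) = \bigl(|\pi^A| - |\pi^{A \comp \beta}|\bigr) + \Delta \cgood(\Ig, \beta).
\]
For an insertion $\phi$ adding $k$ elements, $|\pi^{A\comp\phi}| = |\pi^A| + k$, so the first term is $-k$; combined with Lemma~\ref{cap5:lemma:var_good_cycles_indels} ($\Delta\cgood(\Ig,\phi)\le 1$) this gives a net drop of at most $1$. Actually one must be slightly more careful: the proof of Lemma~\ref{cap5:lemma:var_good_cycles_indels} already works with the combined quantity, so I would simply invoke it to conclude that for any indel $\beta$, $\Psi$ decreases by at most $1$. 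For a reversal $\rho$ we have $|\pi^{A\comp\rho}| = |\pi^A|$ and Lemma~\ref{cap5:lemma:var_good_cycles_rev} gives $\Delta\cgood(\Ig,\rho)\le 1$, so $\Psi$ drops by at most $1$; for a transposition $\tau$, again $|\pi^{A\comp\tau}| = |\pi^A|$ and Lemma~\ref{cap5:lemma:var_good_cycles_transp} gives $\Delta\cgood(\Ig,\tau)\le 2$, so $\Psi$ drops by at most $2$.

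Assembling these per-operation bounds then yields the three inequalities directly. For the model $\Mindel_{\rho} = \{\rho,\psi,\phi\}$, every allowed operation decreases $\Psi$ by at most $1$, so $|S| \ge \Psi(\Ig) = |\pi^A| + 1 - \cgood(\Ig)$, which is the first claimed bound. For $\Mindel_{\tau} = \{\tau,\psi,\phi\}$, transpositions decrease $\Psi$ by at most $2$ while indels decrease it by at most $1 \le 2$, so every operation decreases $\Psi$ by at most $2$ and therefore $|S| \ge \Psi(\Ig)/2$. The same reasoning applies verbatim to $\Mindel_{\rho,\tau}$, since reversals also decrease $\Psi$ by at most $1 \le 2$. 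Taking ceilings where appropriate (the distance is an integer) gives exactly the stated inequalities.

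The argument is essentially a bookkeeping assembly, so there is no deep obstacle; the one point that requires care is making sure the ``$|\pi^A|$'' bookkeeping is handled correctly for insertions, since insertions change the number of source edges, and that is precisely why $\Delta\cgood(\Ig,\beta)$ in Lemma~\ref{cap5:lemma:var_good_cycles_indels} is defined relative to $|\pi^A|+1-\cgood$ rather than to $\cgood$ alone. I would therefore state the per-operation bound in the form ``$\Psi$ does not decrease by more than $c_\beta$'' with $c_\beta \in \{1,2\}$ depending on the operation type, citing Lemmas~\ref{cap5:lemma:var_good_cycles_indels}, \ref{cap5:lemma:var_good_cycles_rev}, and \ref{cap5:lemma:var_good_cycles_transp}, and then conclude. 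One should also note that each $\Mindel_*$ lower bound is only as good as the worst operation in that model, which is why the transposition-containing models get the weaker factor of $2$ in the denominator.
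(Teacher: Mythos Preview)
Your proposal is correct and follows essentially the same potential-function argument as the paper: the paper also uses $|\pi^A|+1-\cgood(\Ig)$, observes it vanishes exactly when $\G_o=\G_d$, and invokes Lemmas~\ref{cap5:lemma:var_good_cycles_indels}, \ref{cap5:lemma:var_good_cycles_rev}, and \ref{cap5:lemma:var_good_cycles_transp} to bound the per-operation decrease. One small remark: your displayed identity $\Psi(\Ig)-\Psi(\G_o\comp\beta,\G_d)=(|\pi^A|-|\pi^{A\comp\beta}|)+\Delta\cgood(\Ig,\beta)$ is off because $\Delta\cgood$ is already \emph{defined} as the drop in $\Psi$ (see the definition just before Section~\ref{cap5:section:grafo_ciclos}), so the extra $|\pi^A|$-term would double-count; you effectively catch this in your next sentence, so the final argument is fine.
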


\begin{proof}
Considere o modelo $\Mindel_{\rho, \tau}$. Como $|\pi^A| + 1 - \cgood(\Ig) = 0$ se, e somente se, $\G_o = \G_d$, qualquer sequência de rearranjos $S$ que transforma $\G_o$ em $\G_d$ deve tornar o valor de $|\pi^A| + 1 - \cgood(\Ig)$ igual a zero. Pelos lemas~\ref{cap5:lemma:var_good_cycles_indels}, \ref{cap5:lemma:var_good_cycles_rev} e \ref{cap5:lemma:var_good_cycles_transp}, qualquer rearranjo $\beta$ em $\Mindel_{\rho, \tau}$ satisfaz $\Delta \cgood(\Ig, \beta) \leq 2$ e, portanto, $|S| \geq \frac{|\pi^A| + 1 - \cgood(\Ig)}{2}$.

A prova é similar para os outros modelos. Note que qualquer rearranjo de $\Mindel_{\rho}$ satisfaz $\Delta \cgood(\Ig, \beta) \leq 1$ (lemas~\ref{cap5:lemma:var_good_cycles_indels} e \ref{cap5:lemma:var_good_cycles_rev}).
\end{proof}

\subsection{Uma 2.5-Aproximação para Reversões e Indels}\label{cap5:subsection:2_5_reversal}

A ideia principal da $2.5$-aproximação é criar novos ciclos bons a cada iteração, até que o grafo contenha apenas ciclos unitários bons. Os próximos lemas mostram como aumentar o número de ciclos bons no grafo usando reversões e/ou indels.

\begin{lemma}\label{cap5:lemma:trivial_cycles_one_indel}
Para qualquer grafo $G(\Ig)$ que contém ciclo unitário $C=(o_1, d_1)$, se (i) $C$ é um ciclo limpo ou (ii) $C$ é um ciclo não negativo e $\ell(e_{o_1}) = \emptyset$, então existe um {\it indel} $\beta$ com $\Delta \cgood(\Ig, \beta) = 1$.
\end{lemma}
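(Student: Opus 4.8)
The plan is to analyze the two hypotheses case by case and, in each situation, exhibit an explicit {\it indel} (an intergenic insertion $\phi$ or an intergenic deletion $\psi$) that turns the unitary cycle $C=(o_1,d_1)$ into a \emph{good} unitary cycle while leaving every other cycle of $G(\Ig)$ untouched. Since a single intergenic {\it indel} acting inside the region intergênica associated to the edge $e_{o_1}$ affects only the cycle $C$ (it does not split or merge cycles, and it does not create or destroy vertices beyond those strictly needed), the count $|\pi^A|+1-\cgood(\Ig)$ will decrease by exactly one, i.e. $\Delta\cgood(\Ig,\beta)=1$, which together with Lemma~\ref{cap5:lemma:var_good_cycles_indels} gives the claimed equality.

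First I would treat case (i), where $C$ is \emph{limpo} (both $e_{o_1}$ and $e'_{d_1}$ have empty label). Then $C$ is already clean; the only obstruction to being good is that it may be desbalanceado, say $w(e_{o_1})=a$ and $w(e'_{d_1})=b$ with $a\neq b$. If $a>b$ (negative cycle), apply a deletion $\psi_{(0,a-b)}^{(i,i)}$ on the corresponding intergenic region, removing $a-b$ nucleotides so that the new cost of the source edge equals $b$; this does not remove any element of $A$ (so $|\pi^A|$ is unchanged) and makes $C$ balanced and clean. If $a<b$ (positive cycle), apply an intergenic insertion $\phi$ with $\sigma=\emptyset$ and $\breve{\sigma}=(b-a)$ placed in that region, again changing only the weight of $e_{o_1}$. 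In both subcases $C$ becomes good, no other cycle changes, and $\Delta\cgood(\Ig,\beta)=1$.

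Next I would treat case (ii): $C$ is não negativo (so $w(e'_{d_1})\geq w(e_{o_1})$) and $\ell(e_{o_1})=\emptyset$, i.e. the source edge is clean, but possibly $\ell(e'_{d_1})\neq\emptyset$, meaning $C$ is a rotulado cycle corresponding to one or more elements of $\Sigma_{\iota^n}\setminus\Sigma_A$ that must be inserted. Here the natural move is a single intergenic insertion $\phi$ that inserts, inside the region intergênica of $e_{o_1}$, exactly the string $\sigma$ of missing elements dictated by the destination (consecutive in $\iota^n$), with the intergenic sizes $\breve{\sigma}$ chosen to match $\breve{\iota}^n$ internally and with the leftover nucleotide budget $b-a\ge 0$ distributed at the two ends so that all the resulting arestas become clean and balanced. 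Concretely, inserting $|\sigma|$ elements splits $e_{o_1}$ into $|\sigma|+1$ source edges and adds $|\sigma|$ destination edges; with the weights arranged as above, the $|\sigma|$ newly created cycles are all unitary, clean and balanced, and what remains of $C$ is also a unitary good cycle. By the bookkeeping identity in the proof of Lemma~\ref{cap5:lemma:var_good_cycles_indels}, $(|\pi^A|-|\pi^{A\comp\phi}|)-(c(\Ig)-c(\Ig\comp\phi)) = -|\sigma|-(-|\sigma|) = 0$, while $\cgood$ goes up by exactly one (from zero good cycles contributed by $C$ to $|\sigma|+1$ of them, against $|\sigma|$ extra source edges), so $\Delta\cgood(\Ig,\beta)=1$.

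The main obstacle I anticipate is the careful verification in case (ii) that one can \emph{simultaneously} (a) reproduce the correct destination adjacencies and destination intergenic sizes $\breve{\iota}^n$ for the inserted block, and (b) absorb the (non-negative) weight surplus $b-a$ so that \emph{none} of the edges produced by $\phi$ — neither the $|\sigma|+1$ new source edges nor the $|\sigma|$ new destination edges nor the surviving pieces of $e_{o_1}$ and $e'_{d_1}$ — ends up desbalanceado; this is exactly where the hypothesis ``$C$ não negativo'' is used, since a negative $C$ would leave a deficit that a single insertion cannot repair without also deleting. I would also make explicit that the insertion respects the restriction of Willing et al.\ (only elements of $\Sigma_{\iota^n}\setminus\Sigma_A$ are inserted), which holds because $\ell(e'_{d_1})$ lists precisely such elements. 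Once these checks are in place, combining all cases with Lemmas~\ref{cap5:lemma:var_good_cycles_indels} (upper bound) completes the argument.
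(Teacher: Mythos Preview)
Your proposal is correct and follows essentially the same approach as the paper: case~(i) is handled by a nucleotide-only insertion or deletion to fix the weight, and case~(ii) by a single insertion that adds the missing element(s) together with the right intergenic sizes, exploiting non-negativity to guarantee $\breve{\sigma}\ge 0$. One simplification you could make: by the genome modeling in this paper, every maximal contiguous segment of $\G_d$ absent from $\G_o$ is collapsed into a \emph{single} element of $\iota^n$, so the label $\ell(e'_{d_1})$ is always a single element $z$ and hence $|\sigma|=1$; the paper's proof uses this directly (inserting just $z=A_{i-1}+1$ with $\breve{\sigma}=(x',y')$), which spares you the more general bookkeeping with $|\sigma|+1$ new cycles.
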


\begin{proof}
	Como $\ell(e_{o_1}) = \emptyset$, as arestas $e_{o_1}$ e $e'_{d_1}$ são incidentes a vértices que correspondem a elementos adjacentes em $A$. Sejam $+A_{i-1}$ e $-A_{i}$ os vértices do ciclo unitário $C$. Dividimos nossa análise em dois casos.

	Considere que $C$ é um ciclo limpo. Se $w(e'_{d_1}) > w(e_{o_1})$, então uma inserção de $w(e'_{d_1}) - w(e_{o_1})$ nucleotídeos na região intergênica $\breve{A}_i$ torna o ciclo $C$ em um ciclo balanceado e limpo. Se $w(e'_{d_1}) < w(e_{o_1})$, então uma deleção de $w(e_{o_1}) - w(e'_{d_1})$ nucleotídeos na região intergênica $\breve{A}_i$ torna o ciclo $C$ em um ciclo balanceado e limpo. Como nenhum elemento é adicionado na string e $C$ é transformado em um ciclo bom, então existe {\it indel} $\beta$ com $\Delta \cgood(\Ig, \beta) = 1$.

	Agora, considere que $C$ é um ciclo não negativo e $\ell(e_{o_1}) = \emptyset$, mas $\ell(e_{d_1}) \neq \emptyset$. Note que se $\ell(e_{d_1}) = \emptyset$, então $C$ é um ciclo limpo e podemos usar um {\it indel} como descrito no caso anterior. Pela nossa representação de genomas e pela definição do grafo $G(\Ig)$, o rótulo de $e'_{d_1}$ é igual a $z = |A_{i-1} + 1|$. Seja $x$ a região intergênica entre $A_{i-1}$ e $z$ em $\G_d$, e seja $y$ a região intergênica entre $A_{i}$ e $z$ em $\G_d$. Note que $A_{i-1}$ e $A_{i}$ possuem o mesmo sinal, já que eles formam um ciclo unitário. 

	A inserção $\phi_{\min(x, w(e_{o_1}))}^{(i-1, \sigma, \breve{\sigma})}$, tal que $\sigma = (A_{i-1} + 1)$, $\breve{\sigma} = (x', y')$, $x' = x - \min(x, w(e_{o_1}))$ e $y' = y - (w(e_{o_1}) - \min(x, w(e_{o_1})))$, transforma $C$ em dois ciclos unitários $C'$ e $C''$ que são balanceados e limpos, como mostrado na Figura~\ref{cap5:fig:insertion_element_trivial_cycle}. Já que um elemento é adicionado na string e dois ciclos bons são criados, essa inserção $\phi$ satisfaz $\Delta \cgood(\Ig, \phi) = 1$.
\end{proof}

\begin{figure}[tb]
\centering
\begin{tikzpicture}[scale=1]
\begin{scope}[every node/.style={inner sep=0.4mm, draw, circle, minimum size = 0pt}]
  \node[label={[yshift=-1.2cm]$+{A}_{i-1}$}] (v0) at (0,0) {};
  \node[label={[yshift=-1.0cm]$-{A}_{i}$}] (v2) at (1.5,0) {};
\end{scope}

\begin{scope}[>={Stealth[black]},
              every edge/.style={draw=black}, every node/.style={inner sep=0pt, minimum size = 0pt}]
  \path[->] (0.75, -0.9) edge (0.75, -0.1);
  \node[label=\phantom{}] (ins) at (0.75, -1.2) {{Inserção do elemento $z$ entre ${A}_{i-1}$ e ${A}_{i}$}};
\end{scope}

\begin{scope}[>={Stealth[black]},
              every edge/.style={draw=black}]
    \path [-] (v0) edge (v2);
\end{scope}

\begin{scope}[>={Stealth[black]},
              every edge/.style={draw=black}]
    \path [-, dashed] (v2) edge [bend right=70] node [black, pos=0.5, sloped, above] {$z$} (v0);
\end{scope}

\begin{scope}[every node/.style={inner sep=0.4mm, draw, circle, minimum size = 0pt}]
  \node[label={[yshift=-1.2cm]$+{A}_{i-1}$}] (v0) at (-1.5,-2) {};
  \node[label={[yshift=-0.9cm]$-z$}] (mx) at (-0,-2) {};
  \node[label={[yshift=-0.9cm]$+z$}] (px) at (1.5,-2) {};
  \node[label={[yshift=-1.0cm]$-{A}_{i}$}] (v2) at (3,-2) {};
\end{scope}

\begin{scope}[>={Stealth[black]},
              every edge/.style={draw=black}]
    \path [-] (v0) edge (mx);
    \path [-] (px) edge (v2);
\end{scope}

\begin{scope}[>={Stealth[black]},
              every edge/.style={draw=black}]
    \path [-] (mx) edge [bend right=70] (v0);
    \path [-] (v2) edge [bend right=70] (px);
\end{scope}

\end{tikzpicture}
\caption{\label{cap5:fig:insertion_element_trivial_cycle}
Exemplo de inserção que transforma um ciclo unitário não negativo, que possui aresta de origem limpa e aresta de destino rotulada, em dois ciclos bons. Assumimos que ${A}_{i-1}$ tem sinal ``$+$''.
}
\end{figure}
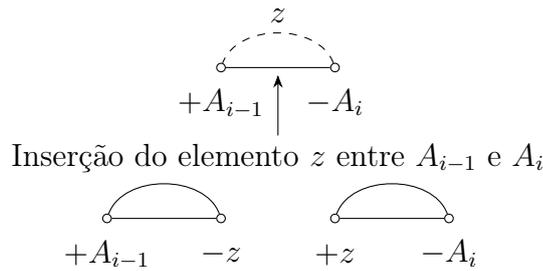

\begin{lemma}\label{cap5:lemma:trivial_bad_cycles_two_indels}
	Para qualquer grafo $G(\Ig)$ que contém um ciclo unitário $C=(o_1, d_1)$, se $C$ é desbalanceado ou rotulado, então existe uma sequência $S$, tal que $|S| \leq 2$ e $\Delta \cgood(\Ig, S) = 1$.
\end{lemma}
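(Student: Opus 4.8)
The plan is to analyze a unitary cycle $C = (o_1, d_1)$ that is \emph{ruim} (either desbalanceado or rotulado) and exhibit a sequence of at most two operations that converts it into one or more ciclos bons, accounting correctly for the change in $|\pi^A| + 1 - \cgood(\Ig)$. First I would split into the two non-exclusive possibilities: (i) $C$ is rotulado, i.e.\ $\ell(e_{o_1}) = \deletionElement$ (the origin edge carries a deletion label) or $\ell(e'_{d_1}) \neq \emptyset$ (the destination edge is rotulada); and (ii) $C$ is desbalanceado with $\ell(e_{o_1}) = \emptyset$. Case (ii) with a clean origin edge but possibly rotulada destination edge was essentially already handled by Lema~\ref{cap5:lemma:trivial_cycles_one_indel} when $C$ is non-negative, so the genuinely new work is: (a) $C$ negative with $\ell(e_{o_1}) = \emptyset$, and (b) $\ell(e_{o_1}) = \deletionElement$.

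For subcase (a), $C$ negative and clean on the origin side: the vertices are $+A_{i-1}$ and $-A_i$, and $A_{i-1}, A_i$ are adjacent in $A$. If $\ell(e'_{d_1}) = \emptyset$ too, then $C$ is limpo and a single deleção of $w(e_{o_1}) - w(e'_{d_1}) > 0$ nucleotides from $\breve{A}_i$ balances it, giving $\Delta \cgood = 1$ with $|S| = 1$. If $\ell(e'_{d_1}) \neq \emptyset$, then as in the proof of Lema~\ref{cap5:lemma:trivial_cycles_one_indel} the destination label is $z = |A_{i-1}+1|$; I would first apply a deleção removing $w(e_{o_1}) - (x + y)$ nucleotides (where $x, y$ are the two destination intergenic regions flanking $z$ in $\G_d$) — this is possible precisely because $C$ is negative, so $w(e_{o_1}) > w(e'_{d_1}) = x + y$ — making the origin edge's cost equal to $x + y$; then an inserção of $z$ with breve-list $(x, y)$ splits $C$ into two unitary ciclos bons, and since one element is added and two good cycles are created, $\Delta \cgood(\Ig, S) = 1$ with $|S| = 2$.

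For subcase (b), $\ell(e_{o_1}) = \deletionElement$: here there is a maximal run of $\deletionElement$ elements of $A$ whose removal is a valid deleção (per the Willing et al.\ restrictions). I would first apply that deleção $\psi$ removing the $\deletionElement$-segment, merging its flanking intergenic regions; this yields a new unitary cycle $C'$ with $\ell(e_{o_1}) = \emptyset$ (the origin edge becomes clean). Then $C'$ is either limpo or merely desbalanceado with clean origin edge, so by the case-(a) analysis above or by Lema~\ref{cap5:lemma:trivial_cycles_one_indel} a single further {\it indel} makes it bom. Since the deleção does not change $|\pi^A|$ and converts a ruim cycle toward bom, the combined $\Delta \cgood(\Ig, S) = 1$ with $|S| \leq 2$. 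One subtlety to check: if after removing the $\deletionElement$-segment the resulting cycle happens to already be balanceado e limpo, then $\psi$ alone achieves $\Delta \cgood = 1$ and $|S| = 1$, still within the bound.

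The main obstacle I anticipate is bookkeeping the $\Delta \cgood$ accounting when an inserção is used: inserting $k$ elements adds $k$ origin edges and $k$ destination edges, and one must argue carefully (as in Lema~\ref{cap5:lemma:var_good_cycles_indels}) that creating two ciclos bons out of one cycle while adding one element nets $\Delta \cgood = 1$, not $2$. A second delicate point is making sure, in subcase (a) with rotulada destination edge, that the deleção step is actually legal — it only removes nucleotides, never elements, so the Willing et al.\ restriction on indels is respected — and that the quantity $w(e_{o_1}) - (x+y)$ being deleted is nonnegative, which is exactly the hypothesis that $C$ is negative (rather than merely desbalanceado). Handling the overlap between "rotulado" and "desbalanceado" (a cycle can be both) requires care to avoid double-counting, but does not cause real difficulty since the operations compose.
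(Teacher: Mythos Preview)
Your case split is sound and subcase~(a) is handled correctly, but subcase~(b) has a genuine gap. You write that after the deleção $\psi$ removes the $\deletionElement$-segment, the resulting cycle $C'$ has clean origin edge and then ``a single further {\it indel} makes it bom'', invoking either case~(a) or Lema~\ref{cap5:lemma:trivial_cycles_one_indel}. But Lema~\ref{cap5:lemma:trivial_cycles_one_indel} requires $C'$ to be non-negative, and your own case~(a) analysis for a negative cycle with labeled destination edge uses \emph{two} indels (a nucleotide deletion followed by an insertion of $z$). If the $\deletionElement$-removal in (b) leaves $C'$ negative and $e'_{d_1}$ is still rotulada, you would need those two additional operations on top of $\psi$, for $|S| = 3$. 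Concretely: take a unitary cycle whose origin edge carries a single $\deletionElement$ flanked by intergenic regions summing to $10$, while $w(e'_{d_1}) = x+y = 3$ with $\ell(e'_{d_1}) \neq \emptyset$; your $\psi$ merges the flanking regions to weight $10$, and $C'$ is still negative with labeled destination.

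The fix is immediate and is exactly what the paper does: the deletion in subcase~(b) can simultaneously remove the $\deletionElement$-segment \emph{and} discard enough flanking nucleotides (via the parameters $x,y$ of $\psi^{(i,j)}_{(x,y)}$) to make the resulting origin-edge weight at most $w(e'_{d_1})$, so that $C'$ is non-negative with clean origin edge and Lema~\ref{cap5:lemma:trivial_cycles_one_indel} then finishes in one further indel. In fact the paper collapses your (a) and (b) into a single case ``origin edge rotulada \emph{or} $C$ negative'': one deleção handles both the label and the excess weight at once, after which Lema~\ref{cap5:lemma:trivial_cycles_one_indel} applies. Your separate treatment of (a) with labeled destination (delete down to $x+y$, then insert $z$) is correct but is just an unrolled instance of that same two-step pattern.
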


\begin{proof}
	Se a aresta de origem $e_{o_1}$ é rotulada ou se $C$ é um ciclo negativo, então uma deleção $\psi$ pode remover o rótulo dessa aresta e tornar $C$ em um ciclo não negativo. Essa deleção torna $C$ em um ciclo unitário bom ou em um ciclo unitário não negativo com $\ell(e_{o_1}) = \emptyset$. No primeiro caso, temos que $S = (\psi)$ e $\Delta \cgood(\Ig, S) = 1$. No segundo caso, o novo ciclo unitário satisfaz as condições do Lema~\ref{cap5:lemma:trivial_cycles_one_indel} e, portanto, existe {\it indel} $\beta$ com $\Delta \cgood(\Ig, \beta) = 1$. Nesse caso, temos que $S = (\psi, \beta)$ e $\Delta \cgood(\Ig, S) = 1$.

	Caso contrário, o ciclo unitário $C$ é não negativo e $\ell(e_{o_1}) = \emptyset$. Pelo Lema~\ref{cap5:lemma:trivial_cycles_one_indel}, existe {\it indel} $\beta$ com $\Delta \cgood(\Ig, \beta) = 1$ e, portanto, temos que $S = (\beta)$ e $\Delta \cgood(\Ig, S) = 1$.
\end{proof}

\begin{lemma}\label{cap5:lemma:divergent_labeled_cycle}
	Para qualquer grafo $G(\Ig)$, se existe um ciclo rotulado $C$ divergente, então existe uma sequência $S$, tal que $|S| \leq 2$ e $\Delta \cgood(\Ig, S) = 1$.
\end{lemma}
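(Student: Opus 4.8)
O plano é reduzir ao caso de um ciclo unitário: uso uma reversão para separar um ciclo unitário de $C$ e, em seguida, caso ainda seja preciso, aplico um único {\it indel} para transformá-lo em um ciclo bom. Primeiro, como $C$ é divergente, procedo como na prova do Lema~\ref{cap4:lemma:reversal_divergent}: escolho um par de arestas de origem divergentes $(e_{o_x}, e_{o_{x+1}})$ de $C$ com $x$ mínimo e aplico uma reversão $\rho$ nessas arestas, que transforma $C$ em um ciclo unitário $C'$ e em um ciclo $C''$~\cite{1996-bafna-pevzner}.

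A seguir, exploro os graus de liberdade na escolha de $\rho$ no grafo de ciclos rotulado e ponderado. Por um lado, posso direcionar todo elemento $\deletionElement$ afetado para o ciclo $C''$, de modo que a aresta de origem de $C'$ fique limpa (o mesmo recurso usado no Lema~\ref{cap4:lemma:reversal_divergent}). Por outro lado, escolhendo os pontos de corte nas duas regiões intergênicas afetadas, posso fazer com que a região intergênica recém-criada correspondente à aresta de origem de $C'$ tenha custo zero. Dessa forma, $C'$ é um ciclo unitário com $\ell(e_{o_1}) = \emptyset$ e não negativo, já que o custo da sua aresta de origem, igual a $0$, não ultrapassa o custo da sua aresta de destino. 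Essas são exatamente as hipóteses da condição (ii) do Lema~\ref{cap5:lemma:trivial_cycles_one_indel} (e da condição (i) quando a aresta de destino de $C'$ também é limpa), qualquer que seja o rótulo da aresta de destino de $C'$; logo, existe um {\it indel} $\beta$ tal que, aplicado a $C'$, produz $\Delta \cgood = 1$.

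Para fechar a contagem, analiso $\Delta \cgood(\Ig, \rho)$. A reversão $\rho$ age em um único ciclo $C$, que é rotulado e portanto não é bom; ao dividir $C$ em $C'$ e $C''$, o rótulo (na aresta de origem ou de destino) permanece em um desses dois ciclos, de modo que no máximo um deles pode ser bom e $\Delta \cgood(\Ig, \rho) \in \{0,1\}$ pelo Lema~\ref{cap5:lemma:var_good_cycles_rev}. Se $\Delta \cgood(\Ig, \rho) = 1$, tomo $S = (\rho)$; caso contrário, $\Delta \cgood(\Ig, \rho) = 0$ e tomo $S = (\rho, \beta)$, obtendo $\Delta \cgood(\Ig, S) = \Delta \cgood(\Ig, \rho) + \Delta \cgood(\Ig \comp \rho, \beta) = 1$. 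Em ambos os casos, $|S| \leq 2$.

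O principal obstáculo será justificar o passo intergênico com cuidado: verificar que a aresta de origem do ciclo unitário $C'$ resultante é, de fato, uma das regiões intergênicas formadas pelo corte da reversão (e não uma região intergênica preexistente), de forma que seja lícito atribuir-lhe custo zero; e, mais em geral, conferir que as escolhas simultâneas de posições de corte e de destino dos elementos $\deletionElement$ são compatíveis entre si e com a estrutura de ciclos descrita por Bafna e Pevzner.
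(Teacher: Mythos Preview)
Sua proposta está correta e segue essencialmente a mesma estratégia da tese: uma reversão no par divergente extrai de $C$ um ciclo unitário $C'$ com aresta de origem limpa e não negativo, e então, se preciso, um {\it indel} via Lema~\ref{cap5:lemma:trivial_cycles_one_indel} conclui. A única diferença é que você zera o custo da aresta de origem de $C'$, enquanto a tese o fixa em $\min(w(e_{o_{x+1}}), w(e'_{d_x}))$ (às vezes tornando $C'$ já balanceado após $\rho$), mas isso não altera nem a correção nem o limite $|S|\leq 2$; quanto à dúvida do seu último parágrafo, a aresta de origem de $C'$ é de fato uma das duas regiões intergênicas recém-formadas pelo corte da reversão, como ilustra a Figura~\ref{cap5:fig:trivial_cycle_from_divergent_cycle}.
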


\begin{proof}
	Seja $C = (o_1, d_1, \ldots, o_k, d_k)$ um ciclo divergente rotulado. Seja $(e_{o_x}, e_{o_{x+1}})$ um par divergente de $C$ tal que $x$ é mínimo. Uma reversão $\rho$ aplicada nessas duas arestas de origem $(e_{o_x}, e_{o_{x+1}})$ transforma $C$ em um ciclo unitário $C'$ e um $(k-1)$-ciclo $C''$, como mostrado na Figura~\ref{cap5:fig:trivial_cycle_from_divergent_cycle}. Assuma, sem perda de generalidade, que o ciclo unitário $C'$ possui a aresta de destino com índice $o_{x+1}$ no novo grafo. Se a aresta de origem $e_{o_{x+1}}$ é rotulada, então a reversão $\rho$ move qualquer elemento $\deletionElement$ entre os vértices que são extremidades de $e_{o_{x+1}}$, de forma que os elementos $\deletionElement$ são acumulados na aresta com índice $o_x$ no novo grafo. Além disso, se $w(e'_{d_x}) < w(e_{o_{x+1}})$, a reversão move o custo excedente de $e_{o_{x+1}}$ para $e_{o_{x}}$, fazendo com que o ciclo unitário $C'$ seja um ciclo balanceado. Caso contrário, a reversão não move nucleotídeos das arestas $e_{o_x}$ e $e_{o_{x+1}}$, fazendo com que o ciclo unitário $C'$ seja não negativo. Se $C'$ é balanceado e limpo, então temos $S = (\rho)$ com $\Delta \cgood(\Ig, S) = 1$. Caso contrário, $C'$ satisfaz as condições do Lema~\ref{cap5:lemma:trivial_cycles_one_indel} e, portanto, existe {\it indel} $\beta$ com $\Delta \cgood(\Ig, \beta) = 1$. Nesse caso, temos que $S = (\rho, \beta)$ e $\Delta \cgood(\Ig, S) = 1$.
\end{proof}

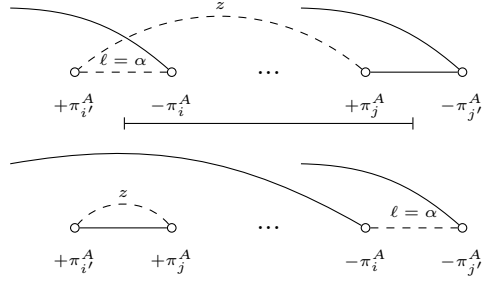
\begin{figure}[tb]
\centering
\begin{tikzpicture}[scale=0.85]
\tiny
\begin{scope}[every node/.style={inner sep=0.4mm, draw, circle, minimum size = 0pt}]
  \node[label=below:$+\pi^{A}_{i'}$] (v0) at (0,0) {};
  \node[label=below:$-\pi^A_{i}$] (v1) at (1.5,0) {};
  \node[label=below:$+\pi^A_{j}$] (v2) at (4.5,0) {};
  \node[label=below:$-\pi^A_{j'}$] (v3) at (6,0) {};
\end{scope}

\begin{scope}
  \node[] (aux) at (3,0) {\small ...};
\end{scope}

\begin{scope}[>={Stealth[black]},
              every edge/.style={draw=black}]
    \path [-, dashed] (v0) edge node [black, pos=0.5, sloped, above, yshift=+0.00cm] {$\ell = \deletionElement$} (v1);
    \path [-] (v2) edge (v3);
\end{scope}

\begin{scope}[>={Stealth[black]},
              every edge/.style={draw=black}]
    \path [-, dashed] (v2) edge [bend right=40] node [black, pos=0.5, sloped, above] {$z$} (v0);
    \path [-] (v1) edge [bend right=20] (-1,1);
    \path [-] (v3) edge [bend right=20] (3.5,1);
\end{scope}

\begin{scope}[>={Stealth[black]},
              every edge/.style={draw=black}, every node/.style={inner sep=0pt, minimum size = 0pt}]
  \node[label=\phantom{}] (bi1) at (0.75, -0.8) {};
  \node[label=\phantom{}] (bi2) at (5.25, -0.8) {};
  \path [{Bar}-{Bar}] (bi1) edge node [black, pos=0.5, sloped, below] {} (bi2);
\end{scope}

\end{tikzpicture}

\begin{tikzpicture}[scale=0.85]
\tiny
\begin{scope}[every node/.style={inner sep=0.4mm, draw, circle, minimum size = 0pt}]
  \node[label=below:$+\pi^{A}_{i'}$] (v0) at (0,0) {};
  \node[label=below:$-\pi^A_{i}$] (v1) at (4.5,0) {};
  \node[label=below:$+\pi^A_{j}$] (v2) at (1.5,0) {};
  \node[label=below:$-\pi^A_{j'}$] (v3) at (6,0) {};
\end{scope}

\begin{scope}
  \node[] (aux) at (3,0) {\small ...};
\end{scope}

\begin{scope}[>={Stealth[black]},
              every edge/.style={draw=black}]
    \path [-] (v0) edge (v2);
    \path [-, dashed] (v1) edge node [black, pos=0.5, sloped, above, yshift=+0.05cm] {$\ell = \deletionElement$} (v3);
\end{scope}

\begin{scope}[>={Stealth[black]},
              every edge/.style={draw=black}]
    \path [-, dashed] (v2) edge [bend right=50] node [black, pos=0.5, sloped, above] {$z$} (v0);
    \path [-] (v1) edge [bend right=20] (-1,1);
    \path [-] (v3) edge [bend right=20] (3.5,1);
\end{scope}

\end{tikzpicture}

\caption{\label{cap5:fig:trivial_cycle_from_divergent_cycle}
Reversão aplicada em um par divergente de um ciclo $C$ que transforma esse ciclo em um ciclo unitário $C'$ e um $(k-1)$-ciclo $C''$.
}

\end{figure}

\begin{lemma}\label{cap5:lemma:divergent_cycle}
	Para qualquer grafo $G(\Ig)$, se existe um ciclo limpo $C$ divergente, então existe rearranjo $\beta$ com $\Delta \cgood(\Ig, \beta) = 1$.
\end{lemma}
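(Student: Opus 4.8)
The plan is to proceed exactly as in the proof of Lemma~\ref{cap5:lemma:divergent_labeled_cycle}, but to exploit the hypothesis that $C$ is clean in order to get away with a single rearrangement instead of a length-two sequence. Write $C = (o_1, d_1, \ldots, o_k, d_k)$; since a unit cycle is convergent, $C$ being divergent forces $k \geq 2$, so $C$ has at least one divergent pair of origin edges. First I would fix the divergent pair $(e_{o_x}, e_{o_{x+1}})$ with $x$ minimal and apply a reversal $\rho$ acting on these two origin edges. By Bafna and Pevzner~\cite{1996-bafna-pevzner}, such a reversal splits $C$ into a unit cycle $C'$ and a $(k-1)$-cycle $C''$, as illustrated in Figure~\ref{cap5:fig:trivial_cycle_from_divergent_cycle}. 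Since no reversal ever creates a label and $C$ contains no $\deletionElement$, both $C'$ and $C''$ are clean.

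The second step is to choose the intergenic break of $\rho$ so that $C'$ becomes balanced, hence good. Because $C$ is clean, each of the two affected origin edges $e_{o_x}$ and $e_{o_{x+1}}$ corresponds to a single intergenic region of $\G_o$, so the reversal is free to redistribute their combined nucleotide content $w(e_{o_x}) + w(e_{o_{x+1}})$ arbitrarily between the two new intergenic regions. Let $e'_{d}$ be the unique destination edge of $C'$; routing exactly $w(e'_{d})$ nucleotides into the new origin edge of $C'$ makes $C'$ balanced and clean, that is, a good cycle. A reversal leaves $|\pi^A|$ unchanged, and $k \geq 2$ means that $C$ contributes to the positive quantity $|\pi^A| + 1 - \cgood(\Ig)$ whether or not $C$ itself is good; creating the new good cycle $C'$ therefore raises $\cgood$ by at least one, so $\Delta \cgood(\Ig, \rho) \geq 1$. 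Combined with Lemma~\ref{cap5:lemma:var_good_cycles_rev} (which gives $\Delta \cgood(\Ig, \rho) \leq 1$), this yields $\Delta \cgood(\Ig, \rho) = 1$, as required.

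The delicate point — and the one I expect to be the real obstacle — is that the redistribution above requires $w(e'_{d}) \leq w(e_{o_x}) + w(e_{o_{x+1}})$; otherwise the reversal can only make $C'$ non-negative, and one would be forced to append an indel via Lemma~\ref{cap5:lemma:trivial_cycles_one_indel}, spending a second operation, exactly as happened in Lemma~\ref{cap5:lemma:divergent_labeled_cycle}. To keep the count at one, the argument must show that, for a clean divergent cycle, the divergent pair can always be chosen so that the destination edge landing in the split-off unit cycle has weight at most the combined weight of the reversed pair. I would attack this with a case analysis on how each admissible divergent pair of $C$ determines the destination edge of the resulting unit cycle, together with a counting comparison between the total origin weight and the total destination weight of $C$ (using cleanness so that the edge weights are genuine single intergenic regions); should no admissible pair satisfy the inequality, the fallback is to exhibit a different single reversal that produces a good cycle directly. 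Closing this gap is the heart of the proof; the remaining bookkeeping is a routine adaptation of the clean-divergent-cycle reasoning already developed for reversals on the classical labeled cycle graph.
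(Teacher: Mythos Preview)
Your approach has a genuine gap in the positive case. You commit to always using a reversal, but when the clean divergent cycle $C$ is \emph{positive} (total destination weight strictly larger than total origin weight), no reversal can manufacture a good cycle. Any reversal acting on two origin edges of $C$ preserves the total origin weight and total destination weight across the two resulting cycles; if, for instance, every origin edge of $C$ has weight $0$ while every destination edge has positive weight, then every subset of destination edges has positive weight and every redistribution of origin nucleotides still totals $0$, so neither new cycle can be balanced. Your proposed fallback---``exhibit a different single reversal that produces a good cycle directly''---therefore cannot succeed in this regime, and the case analysis you sketch will not close.

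The paper's proof sidesteps this by splitting on the sign of $C$. When $C$ is positive, the single rearrangement is an \emph{insertion} of exactly $\sum_i w(e'_{d_i}) - \sum_i w(e_{o_i})$ nucleotides into any origin edge of $C$, which turns $C$ itself into a balanced clean cycle; since $C$ was not good before and $|\pi^A|$ is unchanged, $\Delta\cgood = 1$. When $C$ is balanced or negative, the paper invokes the result of Oliveira and coauthors~\cite{2021b-oliveira-etal} (for the intergenic reversal model on balanced genomes) guaranteeing a reversal on $C$ that increases the number of balanced cycles; cleanness of $C$ then makes the new balanced cycle clean, hence good. Your reversal-based reasoning is essentially this second case, but you need the first case to use an indel, and for the second case you would do better to cite~\cite{2021b-oliveira-etal} directly rather than redo the redistribution argument, since the ``minimal $x$'' divergent pair does not obviously satisfy the weight inequality even when $C$ is non-positive.
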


\begin{proof}
	Se $C = (o_1, d_1, \ldots, o_k, d_k)$ é um ciclo positivo, então uma inserção $\phi$ que aumenta o custo de qualquer aresta de origem de $C$ em $\sum_{i=1}^{k}{w(e'_{d_i})} - \sum_{i=1}^{k}{w(e_{o_i})}$ unidades (i.e., a inserção de $\sum_{i=1}^{k}{w(e'_{d_i})} - \sum_{i=1}^{k}{w(e_{o_i})}$ nucleotídeos na região intergênica que corresponde a $w(e_{o_1})$) transforma $C$ em um ciclo balanceado. Dessa forma, temos que $\Delta \cgood(\Ig, \phi) = 1$. 

	Caso contrário, o ciclo divergente $C$ é negativo ou balanceado. Oliveira e coautores~\cite{2021b-oliveira-etal} demonstraram que existe reversão $\rho$ que transforma $C$ em dois ciclos $C'$ e $C''$, aumentando o número de ciclos balanceados no grafo. Se $C$ é negativo, então ou $C'$ ou $C''$ é balanceado. Caso contrário, ambos $C'$ e $C''$ são balanceados. Como $C$ é um ciclo limpo e a reversão é aplicada em duas arestas de origem de $C$, temos que os ciclos $C'$ e $C''$ são ciclos limpos. Portanto, temos que $\Delta \cgood(\Ig, \rho) = 1$.
\end{proof}

Até agora, apresentamos lemas que tratam de ciclos unitários ou ciclos divergentes. O próximo lema mostra como encontrar uma sequência de operações que aumentam o número de ciclos bons quando existem apenas ciclos convergentes no grafo. 

\begin{lemma}\label{cap5:lemma:convergent_cycles}
	Para qualquer grafo $G(\Ig)$, se $G(\Ig)$ possui apenas ciclos convergentes, então existe uma  sequência $S$, tal que $|S| \leq 5$ e $\Delta \cgood(\Ig, S) = 2$.
\end{lemma}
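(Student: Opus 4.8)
The plan is to reduce the convergent-only case to the classical Sorting by Transpositions setting, where Bafna and Pevzner's machinery on oriented triples and $(2,2)$-sequences applies. First I would observe that if $G(\Ig)$ has only convergent cycles, then by the characterization of divergent cycles (a graph has divergent edges if and only if some element of the string has a ``$-$'' sign, together with the fact that labeled edges force convergence in the relevant decompositions), the string $A$ behaves essentially like an unsigned permutation with respect to the cycle structure: ignoring the intergenic sizes and the labels, $G(\Ig)$ has the same underlying cycle decomposition as the cycle graph $G(\pi)$ of an unsigned permutation $\pi$.

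Next I would split into cases according to whether $G(\Ig)$ contains a nontrivial labeled cycle, a single nontrivial labeled cycle, or only clean nontrivial cycles, mirroring the structure already used in Lemmas~\ref{cap4:lemma:transposition_oriented_cycles}--\ref{cap4:lemma:transposition_nonoriented_clean_cycles} for the labeled cycle graph. In each subcase I would apply a transposition (or short sequence of transpositions) chosen exactly as in those lemmas to create new clean cycles, being careful to route any $\deletionElement$ elements into a single affected source edge so that the newly created unitary cycle is clean. The key point is that each such transposition increases $|\pi^A| + 1 - \cclean$ by the appropriate amount; I then must additionally fix the intergenic weights. Since a transposition breaks three intergenic regions with free parameters $(x,y,z)$, after applying the structural transposition I can always choose these parameters, or follow up with at most one weight-only $\phi$ or $\psi$ per newly created nontrivial cycle, so that the created unitary cycles become balanced. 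Combining the structural step (at most $3$ transpositions, gaining $\Delta\cclean \geq 2$ relative to source edges, hence $\Delta c \geq 2$) with at most two weight-adjusting indels to make two unitary cycles good yields $|S| \leq 5$ and $\Delta\cgood(\Ig, S) = 2$.

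The main obstacle I anticipate is bookkeeping the interaction between the three effects being tracked simultaneously: the number of cycles, the indel potential (labels), and the balance (intergenic weights). In the pure unbalanced case (Chapter~\ref{cap:label:indel}) one only juggles $\cclean$ and $\lambda_\insertion$; here every good cycle must be \emph{both} clean \emph{and} balanced, so when a transposition creates a clean unitary cycle it is generally not balanced, and I must argue that a single weight-only indel suffices to fix it without disturbing the clean cycles already secured. I would handle this by noting that a weight-only $\phi_{(x)}^{(i,\sigma,\breve\sigma)}$ with $\sigma = \emptyset$, or a weight-only $\psi_{(x,y)}^{(i,i)}$, alters only one intergenic region and hence affects only one source edge of one cycle, so it can convert a clean unbalanced unitary cycle to a good cycle with $\Delta\cgood = 1$ and touch nothing else. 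A secondary subtlety is the subcase where the only nontrivial labeled cycle has $\ell \geq 3$: there Lemma~\ref{cap4:lemma:transposition_nonoriented_single_cycle} already spends three transpositions to gain $\Delta\cclean = 3$, and I would then need one more weight-only indel, staying within the budget of $5$ operations while still guaranteeing $\Delta\cgood = 2$ (taking $2$ of the $3$ new clean cycles, made good with $\leq 2$ weight indels, and leaving the clean structure of the third intact for a later iteration). Finally, the case where $G(\Ig)$ has only clean nontrivial convergent cycles follows directly from Lemma~\ref{cap4:lemma:transposition_nonoriented_clean_cycles} together with the same weight-fixing argument, since there the string is a genuine permutation with intergenic sizes and the Bafna--Pevzner $(2,2)$-sequence applies verbatim.
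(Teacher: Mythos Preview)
Your proposal has a fundamental gap: you are working in the wrong rearrangement model. Lemma~\ref{cap5:lemma:convergent_cycles} sits in Section~\ref{cap5:subsection:2_5_reversal}, the $2.5$-approximation for the model $\Mindel_{\rho}$ (signed reversals and indels only), and the sequence $S$ it produces is fed directly into Algorithm~\ref{cap5:alg:2_5-approx}. Transpositions are simply not available here, so the Bafna--Pevzner oriented-triple machinery and Lemmas~\ref{cap4:lemma:transposition_oriented_cycles}--\ref{cap4:lemma:transposition_nonoriented_clean_cycles} from Chapter~\ref{cap:label:indel} cannot be invoked. There is no obvious way to salvage your argument inside $\Mindel_{\rho}$: a single reversal acting on two edges of a convergent cycle never increases the cycle count (this is the case $\Delta c \in \{-1,0\}$ in the Bafna--Pevzner analysis for convergent pairs), so you cannot ``create new clean unitary cycles'' with one structural step the way a transposition on an oriented triple does.

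The paper's route is entirely different and is the natural one once the model is clear. Citing Oliveira et al.~\cite{2021b-oliveira-etal}, one first spends a reversal $\rho_1$ that leaves the number of cycles unchanged but turns some cycle $C$ into a \emph{divergent} cycle $C'$ (acting on two edges of $C$ itself if $C$ is oriented, or on two edges of an interleaving cycle $D$ if not). Now Lemmas~\ref{cap5:lemma:divergent_labeled_cycle} or~\ref{cap5:lemma:divergent_cycle} apply and give a sequence $S_1$ with $|S_1|\le 2$ and $\Delta\cgood = 1$, whose only reversal $\rho_2$ splits $C'$. A short sign argument shows that after $\rho_1$ and $\rho_2$ there must still be a negative element in $A$, hence still a divergent cycle, and an indel cannot undo that; so the divergent-cycle lemmas apply a second time with $S_2$, $|S_2|\le 2$. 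Concatenating yields $|S|\le 1+2+2 = 5$ with $\Delta\cgood(\Ig,S)=2$.
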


\begin{proof}
Seja $C = (o_1, d_1, \ldots, o_k, d_k)$ um ciclo convergente de $G(\Ig)$. Oliveira e coautores~\cite{2021b-oliveira-etal} mostraram que um dos dois casos sempre ocorre:
\begin{itemize}
	\item Se $C$ é um ciclo orientado, então existe reversão $\rho_1$ que age em duas arestas de origem de $C$, transformando o ciclo $C$ em um ciclo divergente $C'$.
	\item Se $C$ é não orientado, então existe reversão $\rho_1$ que age em duas arestas de origem de um outro ciclo convergente $D$, transformando o ciclo $C$ em um ciclo divergente $C'$.
\end{itemize}

Além disso, Oliveira e coautores~\cite{2021b-oliveira-etal} mostraram que o número de ciclos do grafo permanece o mesmo após a aplicação da reversão $\rho_1$.

Pelos lemas~\ref{cap5:lemma:divergent_labeled_cycle} e \ref{cap5:lemma:divergent_cycle}, existe uma sequência $S_1$ que age no ciclo $C'$, tal que $|S_1| \leq 2$ e $\Delta \cgood(\G_o \comp \rho, \G_d, S_1) = 1$. Note que a sequência $S_1$ possui apenas uma reversão e, caso $|S_1| = 2$, o segundo rearranjo de $S_1$ é um indel. Seja $\rho_2$ a reversão de $S_1$. Seja $\G^{'}_o = (A', A) = (\G_o \comp \rho_1) \comp \rho_2$, ou seja, o genoma resultante após a aplicação da reversão $\rho_1$, que cria um ciclo divergente no grafo, e da primeira operação da sequência $S_1$. 

Note que o grafo $G(\Ig)$ possui apenas ciclos convergentes se, e somente se, não existe elemento $A_i$ em $A$ com sinal ``$-$''. Suponha, por contradição, que o grafo $G(\G^{'}_o, \G_d)$ possui apenas ciclos convergentes. Isso implica que após aplicar as reversões $\rho_1$ e $\rho_2$, não existe elemento $A'_i$ em $A'$ com sinal ``$-$''. Isso implica que tanto $\rho_1$ quanto $\rho_2$ agem no mesmo intervalo e, portanto, temos que $A = A'$. Isso contradiz o fato de que a reversão $\rho_1$ não altera o número de ciclos do grafo e a reversão $\rho_2$ transforma $C'$ em dois ciclos. Portanto, o grafo $G(\G^{'}_o, \G_d)$ possui um ciclo divergente. 

Como um {\it indel} não consegue transformar um ciclo divergente em convergente, o grafo $G(\G^{''}_o, \G_d)$, onde $\G^{''}_o = (\G_o \comp \rho_1) \comp S_1$, também possui um ciclo divergente $D'$. Pelos lemas ~\ref{cap5:lemma:divergent_labeled_cycle} e \ref{cap5:lemma:divergent_cycle}, existe uma  sequência $S_2$ que age no ciclo $D'$, tal que $|S_2| \leq 2$ e $\Delta \cgood(\G^{''}_o, \G_d, S_2) = 1$. Portanto, a sequência $S$, que é formada a partir de $\rho_1$, $S_1$ e $S_2$, satisfaz as condições do enunciado deste lema. As figuras~\ref{cap5:fig:reversals_operations_on_oriented_cycle1} e \ref{cap5:fig:reversals_operations_on_oriented_cycle2} apresentam exemplos das operações de $S$.
\end{proof}

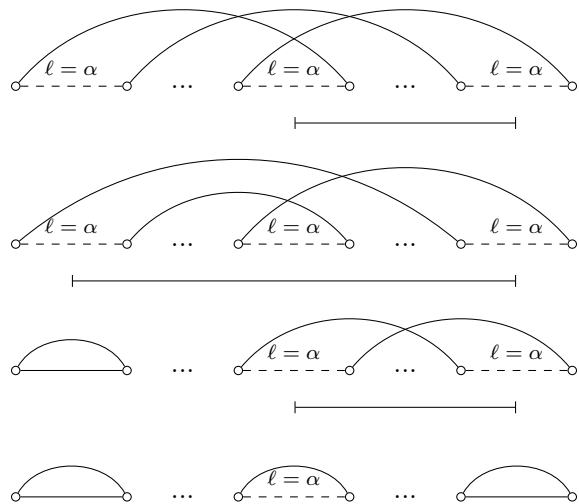
\begin{figure}[tb]
\color{black}
\centering

\resizebox{0.5\textwidth}{!}{
\begin{tikzpicture}[]
\scriptsize
\begin{scope}
    \node at (0,1) {};
\end{scope}

\begin{scope}[every node/.style={inner sep=0.4mm, draw, circle, minimum size = 0pt}]
  \node[label=below:\phantom{+}] (v0) at (0,0) {};
  \node[label=below:\phantom{+}] (v1) at (1.5,0) {};
  \node[label=below:\phantom{+}] (v2) at (3,0) {};
  \node[label=below:\phantom{+}] (v3) at (4.5,0) {};
  \node[label=below:\phantom{+}] (v4) at (6,0) {};
  \node[label=below:\phantom{+}] (v5) at (7.5,0) {};
\end{scope}

\begin{scope}
  \node[] (aux) at (2.25,0) {\small ...};
  \node[] (aux) at (5.25,0) {\small ...};
\end{scope}

\begin{scope}[>={Stealth[black]},
              every edge/.style={draw=black}]
    \path [-, dashed] (v0) edge node [black, pos=0.5, sloped, above, yshift=+0.05cm] {$\ell = \deletionElement$} (v1);
    \path [-, dashed] (v2) edge node [black, pos=0.5, sloped, above, yshift=+0.05cm] {$\ell = \deletionElement$} (v3);
    \path [-, dashed] (v4) edge node [black, pos=0.5, sloped, above, yshift=+0.05cm] {$\ell = \deletionElement$} (v5);
\end{scope}

\begin{scope}[>={Stealth[black]},
              every edge/.style={draw=black}]
    \path [-] (v1) edge [bend left=50] (v4);
    \path [-] (v0) edge [bend left=50] (v3);
    \path [-] (v2) edge [bend left=50] (v5);
\end{scope}

\begin{scope}[>={Stealth[black]},
              every edge/.style={draw=black}, every node/.style={inner sep=0pt, minimum size = 0pt}]
  \node[label=\phantom{}] (bi1) at (3.75, -0.5) {};
  \node[label=\phantom{}] (bi2) at (6.75, -0.5) {};
  \path [{Bar}-{Bar}] (bi1) edge node [black, pos=0.5, sloped, below] {} (bi2);
\end{scope}

\end{tikzpicture}
}

\resizebox{0.5\textwidth}{!}{
\begin{tikzpicture}[]
\scriptsize
\begin{scope}
    \node at (0,1) {};
\end{scope}

\begin{scope}[every node/.style={inner sep=0.4mm, draw, circle, minimum size = 0pt}]
  \node[label=below:\phantom{+}] (v0) at (0,0) {};
  \node[label=below:\phantom{+}] (v1) at (1.5,0) {};
  \node[label=below:\phantom{+}] (v2) at (3,0) {};
  \node[label=below:\phantom{+}] (v3) at (6,0) {};
  \node[label=below:\phantom{+}] (v4) at (4.5,0) {};
  \node[label=below:\phantom{+}] (v5) at (7.5,0) {};
\end{scope}

\begin{scope}
  \node[] (aux) at (2.25,0) {\small ...};
  \node[] (aux) at (5.25,0) {\small ...};
\end{scope}

\begin{scope}[>={Stealth[black]},
              every edge/.style={draw=black}]
    \path [-, dashed] (v0) edge node [black, pos=0.5, sloped, above, yshift=+0.05cm] {$\ell = \deletionElement$} (v1);
    \path [-, dashed] (v2) edge node [black, pos=0.5, sloped, above, yshift=+0.05cm] {$\ell = \deletionElement$} (v4);
    \path [-, dashed] (v3) edge node [black, pos=0.5, sloped, above, yshift=+0.05cm] {$\ell = \deletionElement$} (v5);
\end{scope}

\begin{scope}[>={Stealth[black]},
              every edge/.style={draw=black}]
    \path [-] (v1) edge [bend left=50] (v4);
    \path [-] (v0) edge [bend left=40] (v3);
    \path [-] (v2) edge [bend left=50] (v5);
\end{scope}

\begin{scope}[>={Stealth[black]},
              every edge/.style={draw=black}, every node/.style={inner sep=0pt, minimum size = 0pt}]
  \node[label=\phantom{}] (bi1) at (0.75, -0.5) {};
  \node[label=\phantom{}] (bi2) at (6.75, -0.5) {};
  \path [{Bar}-{Bar}] (bi1) edge node [black, pos=0.5, sloped, below] {} (bi2);
\end{scope}

\end{tikzpicture}
}

\resizebox{0.5\textwidth}{!}{
\begin{tikzpicture}[]
\scriptsize
\begin{scope}
    \node at (0,1) {};
\end{scope}

\begin{scope}[every node/.style={inner sep=0.4mm, draw, circle, minimum size = 0pt}]
  \node[label=below:\phantom{+}] (v0) at (0,0) {};
  \node[label=below:\phantom{+}] (v3) at (1.5,0) {};
  \node[label=below:\phantom{+}] (v4) at (3,0) {};
  \node[label=below:\phantom{+}] (v2) at (4.5,0) {};
  \node[label=below:\phantom{+}] (v1) at (6,0) {};
  \node[label=below:\phantom{+}] (v5) at (7.5,0) {};
\end{scope}

\begin{scope}
  \node[] (aux) at (2.25,0) {\small ...};
  \node[] (aux) at (5.25,0) {\small ...};
\end{scope}

\begin{scope}[>={Stealth[black]},
              every edge/.style={draw=black}]
    \path [-] (v0) edge (v3);
    \path [-, dashed] (v4) edge node [black, pos=0.5, sloped, above, yshift=+0.05cm] {$\ell = \deletionElement$} (v2);
    \path [-, dashed] (v1) edge node [black, pos=0.5, sloped, above, yshift=+0.05cm] {$\ell = \deletionElement$} (v5);
\end{scope}

\begin{scope}[>={Stealth[black]},
              every edge/.style={draw=black}]
    \path [-] (v1) edge [bend right=50] (v4);
    \path [-] (v0) edge [bend left=60] (v3);
    \path [-] (v2) edge [bend left=50] (v5);

\end{scope}

\begin{scope}[>={Stealth[black]},
              every edge/.style={draw=black}, every node/.style={inner sep=0pt, minimum size = 0pt}]
  \node[label=\phantom{}] (bi1) at (3.75, -0.5) {};
  \node[label=\phantom{}] (bi2) at (6.75, -0.5) {};
  \path [{Bar}-{Bar}] (bi1) edge node [black, pos=0.5, sloped, below] {} (bi2);
\end{scope}

\end{tikzpicture}
}

\resizebox{0.5\textwidth}{!}{
\begin{tikzpicture}[]
\scriptsize
\begin{scope}
    \node at (0,1) {};
\end{scope}

\begin{scope}[every node/.style={inner sep=0.4mm, draw, circle, minimum size = 0pt}]
  \node[label=below:\phantom{+}] (v0) at (0,0) {};
  \node[label=below:\phantom{+}] (v3) at (1.5,0) {};
  \node[label=below:\phantom{+}] (v4) at (3,0) {};
  \node[label=below:\phantom{+}] (v1) at (4.5,0) {};
  \node[label=below:\phantom{+}] (v2) at (6,0) {};
  \node[label=below:\phantom{+}] (v5) at (7.5,0) {};
\end{scope}

\begin{scope}
  \node[] (aux) at (2.25,0) {\small ...};
  \node[] (aux) at (5.25,0) {\small ...};
\end{scope}

\begin{scope}[>={Stealth[black]},
              every edge/.style={draw=black}]
    \path [-] (v0) edge (v3);
    \path [-, dashed] (v4) edge node [black, pos=0.5, sloped, above, yshift=+0.05cm] {$\ell = \deletionElement$} (v1);
    \path [-] (v2) edge (v5);
\end{scope}

\begin{scope}[>={Stealth[black]},
              every edge/.style={draw=black}]
    \path [-] (v1) edge [bend right=60] (v4);
    \path [-] (v0) edge [bend left=60] (v3);
    \path [-] (v2) edge [bend left=60] (v5);
\end{scope}

\end{tikzpicture}
}

\caption{\label{cap5:fig:reversals_operations_on_oriented_cycle1}
Exemplo das operações aplicadas pelo Lema~\ref{cap5:lemma:convergent_cycles} quando o ciclo $C$ é orientado.
}
\end{figure}
\begin{figure}[tb]
\color{black}
\centering
\resizebox{0.5\textwidth}{!}{
\begin{tikzpicture}
\scriptsize
\begin{scope}
    \node at (0,1) {};
\end{scope}

\begin{scope}[every node/.style={inner sep=0.4mm, draw, circle, minimum size = 0pt}]
  \node[label=below:\phantom{+}] (v0) at (0,0) {};
  \node[label=below:\phantom{+}] (v1) at (1.5,0) {};
  \node[label=below:\phantom{+}] (v2) at (3,0) {};
  \node[label=below:\phantom{+}] (v3) at (4.5,0) {};
  \node[label=below:\phantom{+}] (v4) at (6,0) {};
  \node[label=below:\phantom{+}] (v5) at (7.5,0) {};
  \node[label=below:\phantom{+}] (v6) at (9,0) {};
  \node[label=below:\phantom{+}] (v7) at (10.5,0) {};
\end{scope}

\begin{scope}
  \node[] (aux) at (2.25,0) {\small ...};
  \node[] (aux2) at (5.25,0) {\small ...};
  \node[] (aux3) at (8.25,0) {\small ...};
\end{scope}

\begin{scope}[>={Stealth[black]},
              every edge/.style={draw=black}]
    \path [-, dashed] (v0) edge node [black, pos=0.5, sloped, above, yshift=+0.05cm] {$\ell = \deletionElement$} (v1);
    \path [-, dashed] (v2) edge node [black, pos=0.5, sloped, above, yshift=+0.05cm] {$\ell = \deletionElement$} (v3);
    \path [-, dashed] (v4) edge node [black, pos=0.5, sloped, above, yshift=+0.05cm] {$\ell = \deletionElement$} (v5);
    \path [-, dashed] (v6) edge node [black, pos=0.5, sloped, above, yshift=+0.05cm] {$\ell = \deletionElement$} (v7);
\end{scope}

\begin{scope}[>={Stealth[black]},
              every edge/.style={draw=black}]
    \path [-] (v1) edge [bend left=50] (v4);
    \path [-] (v0) edge [bend left=50] (v5);
    \path [-] (v2) edge [bend left=50] (v7);
    \path [-] (v3) edge [bend left=50] (v6);
\end{scope}

\begin{scope}[>={Stealth[black]},
              every edge/.style={draw=black}, every node/.style={inner sep=0pt, minimum size = 0pt}]
  \node[label=\phantom{}] (bi1) at (3.75, -0.5) {};
  \node[label=\phantom{}] (bi2) at (9.75, -0.5) {};
  \path [{Bar}-{Bar}] (bi1) edge node [black, pos=0.5, sloped, below] {} (bi2);
\end{scope}

\end{tikzpicture}
}

\resizebox{0.5\textwidth}{!}{
\begin{tikzpicture}
\scriptsize
\begin{scope}
    \node at (0,1) {};
\end{scope}

\begin{scope}[every node/.style={inner sep=0.4mm, draw, circle, minimum size = 0pt}]
  \node[label=below:\phantom{+}] (v0) at (0,0) {};
  \node[label=below:\phantom{+}] (v1) at (1.5,0) {};
  \node[label=below:\phantom{+}] (v2) at (3,0) {};
  \node[label=below:\phantom{+}] (v6) at (4.5,0) {};
  \node[label=below:\phantom{+}] (v5) at (6,0) {};
  \node[label=below:\phantom{+}] (v4) at (7.5,0) {};
  \node[label=below:\phantom{+}] (v3) at (9,0) {};
  \node[label=below:\phantom{+}] (v7) at (10.5,0) {};
\end{scope}

\begin{scope}
  \node[] (aux) at (2.25,0) {\small ...};
  \node[] (aux2) at (5.25,0) {\small ...};
  \node[] (aux3) at (8.25,0) {\small ...};
\end{scope}

\begin{scope}[>={Stealth[black]},
              every edge/.style={draw=black}]
    \path [-, dashed] (v0) edge node [black, pos=0.5, sloped, above, yshift=+0.05cm] {$\ell = \deletionElement$} (v1);
    \path [-, dashed] (v2) edge node [black, pos=0.5, sloped, above, yshift=+0.05cm] {$\ell = \deletionElement$} (v6);
    \path [-, dashed] (v5) edge node [black, pos=0.5, sloped, above, yshift=+0.05cm] {$\ell = \deletionElement$} (v4);
    \path [-, dashed] (v3) edge node [black, pos=0.5, sloped, above, yshift=+0.05cm] {$\ell = \deletionElement$} (v7);
\end{scope}

\begin{scope}[>={Stealth[black]},
              every edge/.style={draw=black}]
    \path [-] (v1) edge [bend left=50] (v4);
    \path [-] (v0) edge [bend left=50] (v5);
    \path [-] (v2) edge [bend left=50] (v7);
    \path [-] (v3) edge [bend right=50] (v6);
\end{scope}

\begin{scope}[>={Stealth[black]},
              every edge/.style={draw=black}, every node/.style={inner sep=0pt, minimum size = 0pt}]
  \node[label=\phantom{}] (bi1) at (0.75, -0.5) {};
  \node[label=\phantom{}] (bi2) at (6.75, -0.5) {};
  \path [{Bar}-{Bar}] (bi1) edge node [black, pos=0.5, sloped, below] {} (bi2);
\end{scope}

\end{tikzpicture}
}

\resizebox{0.5\textwidth}{!}{
\begin{tikzpicture}
\scriptsize
\begin{scope}
    \node at (0,1) {};
\end{scope}

\begin{scope}[every node/.style={inner sep=0.4mm, draw, circle, minimum size = 0pt}]
  \node[label=below:\phantom{+}] (v0) at (0,0) {};
  \node[label=below:\phantom{+}] (v5) at (1.5,0) {};
  \node[label=below:\phantom{+}] (v6) at (3,0) {};
  \node[label=below:\phantom{+}] (v2) at (4.5,0) {};
  \node[label=below:\phantom{+}] (v1) at (6,0) {};
  \node[label=below:\phantom{+}] (v4) at (7.5,0) {};
  \node[label=below:\phantom{+}] (v3) at (9,0) {};
  \node[label=below:\phantom{+}] (v7) at (10.5,0) {};
\end{scope}

\begin{scope}
  \node[] (aux) at (2.25,0) {\small ...};
  \node[] (aux2) at (5.25,0) {\small ...};
  \node[] (aux3) at (8.25,0) {\small ...};
\end{scope}

\begin{scope}[>={Stealth[black]},
              every edge/.style={draw=black}]
    \path [-] (v0) edge (v5);
    \path [-, dashed] (v2) edge node [black, pos=0.5, sloped, above, yshift=+0.05cm] {$\ell = \deletionElement$} (v6);
    \path [-, dashed] (v1) edge node [black, pos=0.5, sloped, above, yshift=+0.05cm] {$\ell = \deletionElement$} (v4);
    \path [-, dashed] (v3) edge node [black, pos=0.5, sloped, above, yshift=+0.05cm] {$\ell = \deletionElement$} (v7);
\end{scope}

\begin{scope}[>={Stealth[black]},
              every edge/.style={draw=black}]
    \path [-] (v1) edge [bend left=70] (v4);
    \path [-] (v0) edge [bend left=70] (v5);
    \path [-] (v2) edge [bend left=50] (v7);
    \path [-] (v3) edge [bend right=50] (v6);
\end{scope}

\begin{scope}[>={Stealth[black]},
              every edge/.style={draw=black}, every node/.style={inner sep=0pt, minimum size = 0pt}]
  \node[label=\phantom{}] (bi1) at (3.75, -0.5) {};
  \node[label=\phantom{}] (bi2) at (9.75, -0.5) {};
  \path [{Bar}-{Bar}] (bi1) edge node [black, pos=0.5, sloped, below] {} (bi2);
\end{scope}

\end{tikzpicture}
}

\resizebox{0.5\textwidth}{!}{
\begin{tikzpicture}
\scriptsize
\begin{scope}
    \node at (0,1) {};
\end{scope}

\begin{scope}[every node/.style={inner sep=0.4mm, draw, circle, minimum size = 0pt}]
  \node[label=below:\phantom{+}] (v0) at (0,0) {};
  \node[label=below:\phantom{+}] (v5) at (1.5,0) {};
  \node[label=below:\phantom{+}] (v6) at (3,0) {};
  \node[label=below:\phantom{+}] (v3) at (4.5,0) {};
  \node[label=below:\phantom{+}] (v4) at (6,0) {};
  \node[label=below:\phantom{+}] (v1) at (7.5,0) {};
  \node[label=below:\phantom{+}] (v2) at (9,0) {};
  \node[label=below:\phantom{+}] (v7) at (10.5,0) {};
\end{scope}

\begin{scope}
  \node[] (aux) at (2.25,0) {\small ...};
  \node[] (aux2) at (5.25,0) {\small ...};
  \node[] (aux3) at (8.25,0) {\small ...};
\end{scope}

\begin{scope}[>={Stealth[black]},
              every edge/.style={draw=black}]
    \path [-] (v0) edge (v5);
    \path [-] (v3) edge (v6);
    \path [-, dashed] (v1) edge node [black, pos=0.5, sloped, above, yshift=+0.05cm] {$\ell = \deletionElement$} (v4);
    \path [-, dashed] (v2) edge node [black, pos=0.5, sloped, above, yshift=+0.05cm] {$\ell = \deletionElement$} (v7);
\end{scope}

\begin{scope}[>={Stealth[black]},
              every edge/.style={draw=black}]
    \path [-] (v1) edge [bend right=70] (v4);
    \path [-] (v0) edge [bend left=70] (v5);
    \path [-] (v2) edge [bend left=70] (v7);
    \path [-] (v3) edge [bend right=70] (v6);
\end{scope}

\end{tikzpicture}
}
\caption{\label{cap5:fig:reversals_operations_on_oriented_cycle2}
Exemplo das operações aplicadas pelo Lema~\ref{cap5:lemma:convergent_cycles} quando o ciclo $C$ é não orientado.
}
\end{figure}
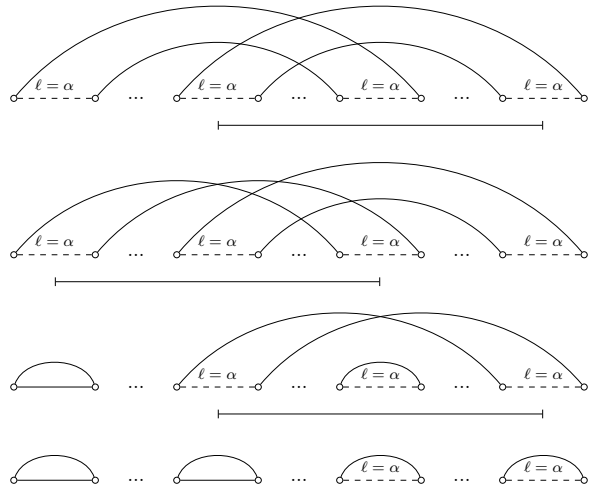

Enquanto $\G_o \neq \G_d$, o Algoritmo~\ref{cap5:alg:2_5-approx} usa uma sequência $S'$ de acordo com os lemas~\ref{cap5:lemma:trivial_cycles_one_indel}, \ref{cap5:lemma:trivial_bad_cycles_two_indels}, \ref{cap5:lemma:divergent_labeled_cycle}, \ref{cap5:lemma:divergent_cycle}, e \ref{cap5:lemma:convergent_cycles}. Como as operações desses lemas garantem que $\Delta \cgood(\Ig, S') \geq 1$, eventualmente chegamos em um grafo de ciclos rotulado e ponderado que possui apenas ciclos unitários bons e, consequentemente, $\G_o = \G_d$. O Algoritmo~\ref{cap5:alg:2_5-approx} é uma $2.5$-aproximação como mostrado no Teorema~\ref{cap5:thr:2_5-approximation}.

\begin{algorithm}[h]
	\caption{\label{cap5:alg:2_5-approx}
		Algoritmo de $2.5$-aproximação para o problema da Distância de Reversões e Indels Intergênicos em Strings com Sinais
	}
	\DontPrintSemicolon
	\Entrada{Uma instância intergênica $\Ig = (\G_o, \G_d)$, com $\G_o = (A, \breve{A})$ e $\G_d = (\iota^n, \breve{\iota}^n)$}
	\Saida{Uma sequência de rearranjos $S$ que transforma $\G_o$ em $\G_d$}
	Seja $S \gets \emptyset$\;
	\Enqto{$\G_o \neq \G_d$}{
		\Se{$G(\Ig)$ possui ciclo unitário que é rotulado ou desbalanceado}{
			Seja $S'$ uma sequência de acordo com os lemas~\ref{cap5:lemma:trivial_cycles_one_indel} ou \ref{cap5:lemma:trivial_bad_cycles_two_indels}\;
		}\SenaoSe{$G(\Ig)$ possui ciclo divergente}{
			Seja $S'$ uma sequência de acordo com os lemas~\ref{cap5:lemma:divergent_labeled_cycle} ou \ref{cap5:lemma:divergent_cycle}\;
		}\Senao(){
			Seja $S'$ uma sequência de acordo com o Lema~\ref{cap5:lemma:convergent_cycles}\;
		}
		$\G_o \gets \G_o \comp S'$\;
		Adicione as operações de $S'$ na sequência $S$\;
	}
 	{\bf retorne} a sequência $S$\;
\end{algorithm}

\begin{theorem}\label{cap5:thr:2_5-approximation}
O Algoritmo~\ref{cap5:alg:2_5-approx} é uma $2.5$-aproximação para o o problema da Distância de Reversões e Indels Intergênicos em Strings com Sinais. 
\end{theorem}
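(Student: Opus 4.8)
O plano é combinar o limitante inferior do Lema~\ref{cap5:lemma:grafo_ciclos_intergenicos_lower_bound} para o modelo $\Mindel_{\rho}$ com um limitante superior para o número de rearranjos usados pelo Algoritmo~\ref{cap5:alg:2_5-approx}, no espírito das provas dos teoremas~\ref{cap4:theorem:2_approx_bi_cycle_graph} e~\ref{cap4:theorem:2_approx_transp_cycle_graph}. Primeiro eu observaria que o algoritmo termina: a cada iteração do laço {\bf enquanto} ele aplica uma sequência $S'$ escolhida de acordo com um dos lemas~\ref{cap5:lemma:trivial_cycles_one_indel}, \ref{cap5:lemma:trivial_bad_cycles_two_indels}, \ref{cap5:lemma:divergent_labeled_cycle}, \ref{cap5:lemma:divergent_cycle} ou~\ref{cap5:lemma:convergent_cycles}, e em todos esses casos $\Delta \cgood(\Ig, S') \geq 1$. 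Como $|\pi^A| + 1 - \cgood(\Ig) = 0$ se, e somente se, $\G_o = \G_d$ (observação feita logo após a definição de $\cgood$), o valor não-negativo $|\pi^A| + 1 - \cgood(\Ig)$ decresce estritamente a cada iteração e eventualmente atinge zero, de modo que o algoritmo produz uma sequência que transforma $\G_o$ em $\G_d$.

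Em seguida eu precisaria verificar que a escolha do {\bf caso} no Algoritmo~\ref{cap5:alg:2_5-approx} é sempre possível, ou seja, que enquanto $\G_o \neq \G_d$ alguma das três condições do laço se aplica. Se existe ciclo unitário rotulado ou desbalanceado, usa-se o primeiro ramo (lemas~\ref{cap5:lemma:trivial_cycles_one_indel} e~\ref{cap5:lemma:trivial_bad_cycles_two_indels}); se não, mas existe ciclo divergente, usa-se o segundo ramo (lemas~\ref{cap5:lemma:divergent_labeled_cycle} e~\ref{cap5:lemma:divergent_cycle}); caso contrário todos os ciclos não unitários (se houver algum) são bons mas não unitários, e na ausência de ciclos divergentes o grafo só tem ciclos convergentes, de modo que o Lema~\ref{cap5:lemma:convergent_cycles} se aplica. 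Note que se $\G_o \neq \G_d$, então $|\pi^A| + 1 - \cgood(\Ig) > 0$, logo existe algum ciclo não unitário ou algum ciclo unitário ruim, o que cobre exaustivamente os três ramos.

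O ponto central — e o principal obstáculo técnico — é o limitante superior: mostrar que o Algoritmo~\ref{cap5:alg:2_5-approx} usa no máximo $\frac{5}{2}\left(|\pi^A| + 1 - \cgood(\Ig)\right)$ rearranjos. A ideia é que cada ramo tem uma razão (número de operações)/(ganho em $\cgood$) limitada por $5/2$: o primeiro ramo usa no máximo $2$ operações para ganho $1$ (Lema~\ref{cap5:lemma:trivial_bad_cycles_two_indels}); o segundo ramo usa no máximo $2$ operações para ganho $1$ (lemas~\ref{cap5:lemma:divergent_labeled_cycle} e~\ref{cap5:lemma:divergent_cycle}, sendo o Lema~\ref{cap5:lemma:divergent_cycle} com apenas $1$ operação); e o terceiro ramo usa no máximo $5$ operações para ganho $2$ (Lema~\ref{cap5:lemma:convergent_cycles}), razão exatamente $5/2$. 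Somando sobre todas as iterações, o comprimento total da sequência $S$ retornada é no máximo $\frac{5}{2}$ vezes a soma dos ganhos, que é exatamente $|\pi^A| + 1 - \cgood(\Ig)$ (valor inicial de $|\pi^A| + 1 - \cgood(\Ig)$, já que ele chega a zero). Combinando isto com $d_{\Mindel_{\rho}}(\Ig) \geq |\pi^A| + 1 - \cgood(\Ig)$ (Lema~\ref{cap5:lemma:grafo_ciclos_intergenicos_lower_bound}), obtém-se
\begin{align*}
|S| \leq \frac{5}{2}\left(|\pi^A| + 1 - \cgood(\Ig)\right) \leq \frac{5}{2}\, d_{\Mindel_{\rho}}(\Ig),
\end{align*}
o que estabelece a $2.5$-aproximação. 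A sutileza a cuidar é que inserções alteram $|\pi^A|$, então o argumento deve ser feito em termos da quantidade monotonicamente decrescente $|\pi^A| + 1 - \cgood(\Ig)$, e não de $\cgood$ isoladamente; é exatamente por isso que os lemas foram enunciados com $\Delta \cgood(\Ig, \beta)$ definido como variação de $|\pi^A| + 1 - \cgood$, e basta invocar essa convenção para fechar a conta.
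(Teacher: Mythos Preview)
Your proposal is correct and follows essentially the same approach as the paper's proof: bound the ratio $|S'|/\Delta\cgood(\Ig,S')$ by $5/2$ in every branch of the algorithm (the worst case being Lema~\ref{cap5:lemma:convergent_cycles}), sum over iterations to get $|S| \leq \frac{5}{2}\bigl(|\pi^A|+1-\cgood(\Ig)\bigr)$, and combine with the lower bound of Lema~\ref{cap5:lemma:grafo_ciclos_intergenicos_lower_bound}. One small slip in your case analysis: in the third branch you assert that all non-unitary cycles are \emph{good}, but this is not guaranteed (only that all unitary cycles are good and all cycles are convergent); fortunately Lema~\ref{cap5:lemma:convergent_cycles} only needs convergence, so your conclusion stands.
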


\begin{proof}
	A cada iteração, o algoritmo utiliza uma sequência $S'$ de acordo com um dos lemas~\ref{cap5:lemma:trivial_cycles_one_indel}, \ref{cap5:lemma:trivial_bad_cycles_two_indels}, \ref{cap5:lemma:divergent_labeled_cycle}, \ref{cap5:lemma:divergent_cycle}, ou \ref{cap5:lemma:convergent_cycles}. No pior caso, a sequência $S'$ satisfaz $|S'|/\Delta \cgood(\Ig, S') = 5/2$. Portanto, para transformar $\G_o$ em $\G_d$, o que é equivalente a tornar $|\pi^A| + 1 - \cgood(\Ig)$ igual a zero, o algoritmo usa no máximo $5/2 (|\pi^A| + 1 - \cgood(\Ig))$ operações. Pelo Lema~\ref{cap5:lemma:grafo_ciclos_intergenicos_lower_bound}, esse algoritmo é uma $2.5$-aproximação.
\end{proof}

Para qualquer grafo de ciclos rotulado e ponderado $G(\Ig)$, tanto o número de vértices quanto o número de arestas são $O(n)$ e, portanto, temos que $|\pi^A| + 1 - \cgood(\Ig) \in O(n)$. Consequentemente, o número de iterações do laço principal do algoritmo é limitado por $O(n)$. Além disso, cada iteração possui complexidade de tempo linear, já que podemos listar todos os ciclos do grafo, achar o caso apropriado, e aplicar a sequência $S'$, que possui tamanho de no máximo cinco operações, em tempo linear. Dessa forma, concluímos que o algoritmo possui complexidade de tempo de $O(n^2)$.

\subsection{Uma 4-Aproximação para Modelos com Transposições}\label{cap5:subsection:4_transp}

Nesta seção, apresentamos algoritmos com fator de aproximação igual a $4$ para o problema da Distância de Transposições e Indels Intergênicos em Strings sem Sinais e o problema da Distância de Reversões, Transposições e Indels Intergênicos em Strings com Sinais. Assim como na $2.5$-aproximação apresentada na seção anterior, a ideia principal desses algoritmos é criar novos ciclos bons a cada iteração até que o grafo contenha apenas ciclos unitários bons. 

Na seção anterior, quando só existem ciclos convergentes no grafo, demonstramos como encontrar uma sequência $S$ tal que $|S|/\Delta \cgood(\Ig, S) \leq 5/2$ usando reversões e {\it indels}. Nos próximos lemas, mostramos como tratar ciclos convergentes usando uma sequência $S$, que possui apenas transposições e indels, tal que $|S|/\Delta \cgood(\Ig, S) \leq 2$. Lembre-se que para strings sem sinais, o grafo de ciclos rotulado e ponderado $G(\Ig)$ só possui ciclos convergentes.

Quando consideramos apenas operações que agem em arestas de origem de ciclos bons, os resultados de problemas intergênicos que não consideram {\it indels} podem ser diretamente aplicados. A seguir, listamos alguns desses resultados.

\begin{lemma}[Oliveira e coautores~\cite{2021a-oliveira-etal}, Lema 4.6]\label{cap5:lemma:breaking_good_oriented_cycle}
Para qualquer grafo $G(\Ig)$, se $G(\Ig)$ possui um ciclo orientado bom $C$, então existe uma sequência $S$ tal que $|S| \leq 3$ e $\Delta \cgood(\Ig, S) = 2$.
\end{lemma}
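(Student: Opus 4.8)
The plan is to reduce the statement to the classical oriented-triple machinery, paying the extra cost only in intergenic bookkeeping. First I would record that a good cycle $C$ is by definition balanced and clean, and that ``oriented'' forces $|C|=m\geq 3$ (every $2$-cycle is non-oriented under the listing convention), so $C$ is non-trivial. By Bafna and Pevzner~\cite{1998-bafna-pevzner}, $C$ contains an oriented triple $(o_i,o_j,o_k)$ with $i<j<k$, $o_i>o_k>o_j$ and $k=j+1$; an intergenic transposition $\tau^{(o_i,o_j,o_k)}_{(x,y,z)}$ acting on the corresponding origin edges splits $C$ into three cycles $C_1,C_2,C_3$, where $C_1$ (the one carrying the destination edge between $e_{o_j}$ and $e_{o_k}$) is a unit cycle. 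Since the only edge labels involved are those of origin edges of $C$, all of them $\emptyset$, and no destination edge is touched, the three resulting cycles are clean; hence each $C_\ell$ is good if and only if it is balanced.

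The core of the argument is then an intergenic budget count. Since $C$ is balanced, the total origin-edge cost over $C_1\cup C_2\cup C_3$ equals the total destination-edge cost. Writing the intrinsic surplus $d_\ell$ of $C_\ell$ (inherited origin cost minus destination cost, with $d_1=-w(e'_{d_1})\leq 0$), one gets $\sum_\ell d_\ell=-M$, where $M=\breve A_{o_i}+\breve A_{o_j}+\breve A_{o_k}$ is exactly the material the transposition may redistribute among the three new origin edges. Choosing the break points $(x,y,z)$ so that the new origin edge of $C_\ell$ receives $-d_\ell$ nucleotides balances all three cycles at once, which is feasible precisely when no $d_\ell>0$. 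If some piece (necessarily $C_2$ or $C_3$, never the unit cycle $C_1$) has $d_\ell>0$, I would first remove that surplus with a nucleotide-only deletion (a $\psi$ with coincident endpoints, which changes no label and no element, hence preserves cleanness, and is unconstrained by the indel restriction); at most two such deletions are ever needed, so $|S|\leq 3$. In every case the sequence replaces the single good cycle $C$ by three good cycles, so $\Delta\cgood(\Ig,S)=2$.

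The main obstacle I expect is the feasibility of the transposition's parameters: one must check that the constraints $0\leq x\leq\breve A_{o_i}$, $0\leq y\leq\breve A_{o_j}$, $0\leq z\leq\breve A_{o_k}$ can be met simultaneously for the target distribution, and that a single deletion suffices for each surplus piece (if a piece's surplus exceeds each of its inherited intergenic regions individually, one must reroute by a different choice of $\tau$ or by an auxiliary insertion while keeping the two-indel budget). For the model $\Mindel_{\rho,\tau}$ on signed strings nothing extra is required, since reversals are available but unused; and one could alternatively phrase the whole argument as a transfer of the corresponding indel-free intergenic-transposition result, because a clean balanced cycle behaves identically there. The unit-cycle lemmas already in this section (Lemmas~\ref{cap5:lemma:trivial_cycles_one_indel} and~\ref{cap5:lemma:trivial_bad_cycles_two_indels}) can be invoked to absorb the residual balancing of $C_1$ if one prefers to make $\tau$ agnostic to $C_1$'s cost.
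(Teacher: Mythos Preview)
The paper does not actually prove this lemma; it is quoted from Oliveira et al.~\cite{2021a-oliveira-etal} under the preceding remark that operations touching only origin edges of good (balanced, clean) cycles behave exactly as in the indel-free intergenic model. Since a good oriented cycle has no labeled edges, the sequence of at most three conservative operations from that paper applies verbatim and produces three clean balanced cycles, giving $\Delta\cgood=2$. Your closing remark---that one could ``phrase the whole argument as a transfer of the corresponding indel-free intergenic-transposition result''---is precisely the route the thesis takes, and nothing further is written.

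Your primary construction is a genuinely different attempt, but the feasibility obstacle you flag is a real gap, not a cosmetic check. The assertion ``feasible precisely when no $d_\ell>0$'' is false: the map $(x,y,z)\mapsto(x+y',\,z+x',\,y+z')$ is not surjective onto $\{(a,b,c)\ge 0:a+b+c=M\}$. For instance, if two of the three broken intergenic regions have size zero, then one of the new regions is forced to $0$ regardless of $(x,y,z)$, so a target with all three coordinates positive is unreachable even when every $d_\ell\le 0$. Your deletion fallback addresses only the case $d_\ell>0$ and does not repair this; and pushing the fix to post-transposition indels can leave one positive and two negative imbalances (or vice versa), which costs three indels on top of the transposition and breaks the $|S|\le 3$ budget. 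Closing this without citing the external lemma would require the finer intergenic redistribution analysis that Oliveira et al.\ already carry out with conservative operations alone, which is exactly why the thesis simply imports their result.
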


\begin{lemma}[Oliveira e coautores~\cite{2021a-oliveira-etal}, Lema 4.7]\label{cap5:lemma:breaking_good_non_oriented_cycle}
Para qualquer grafo $G(\Ig)$, se existe $k$-ciclo não orientado $C$ em $G(\Ig)$, com $k > 2$, não existem ciclos orientados em $G(\Ig)$, e todos os ciclos em $G(\Ig)$ são bons, então existe uma sequência $S$ tal que $|S| \leq 7$ e $\Delta \cgood(\Ig, S) = 4$.
\end{lemma}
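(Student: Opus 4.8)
The plan is to mirror the strategy of Lemma~\ref{cap5:lemma:convergent_cycles}, but with transpositions in place of reversals, and to reduce the combinatorial core to the classical structure theory of breakpoint graphs. First I would invoke the Bafna--Pevzner analysis of non-oriented cycles~\cite{1998-bafna-pevzner}: since $G(\Ig)$ contains a non-oriented cycle $C$ with $k>2$ source edges and no oriented cycle, there is a transposition $\tau_1$ acting on source edges of $C$ together with a second non-oriented cycle $D$ (whose existence follows from the interleaving property of non-oriented cycles) that does not change the number of cycles of the graph but turns $C$ into an oriented cycle $C'$. Moreover, their argument guarantees that $\tau_1$ is the first move of a three-transposition sequence raising $c(\Ig)$ by exactly $4$: after $\tau_1$ one may apply a $2$-move to $C'$ and then a $2$-move to the oriented cycle that those moves expose among the edges inherited from $D$. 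For the underlying convergent breakpoint graph this is precisely Theorem~4.7 of Bafna and Pevzner, so no new combinatorics is needed at this level. (When the strings carry signs, this lemma is used by the algorithm only after all divergent cycles have been removed, so we may assume every cycle of $G(\Ig)$ is convergent, which is exactly the regime of their theorem.)

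Next I would pass from counting cycles to counting \emph{good} cycles. A transposition never touches destination-edge labels, and it can delete a source-edge label only by transferring an $\deletionElement$ element to another source edge; hence, applied to the clean cycles $C$, $D$ and their descendants, each of the three transpositions above keeps every affected cycle clean regardless of where the nucleotide cut points are placed. Consequently the only obstruction to goodness is balance. The three transpositions produce a bounded family of new clean cycles (the two unit cycles created by the two $2$-moves, together with the remnants of $C$ and $D$), some of which may be positive or negative. Each such unbalanced clean cycle can be turned into a good cycle by a single nucleotide-only indel: if it is positive, an insertion $\insertion$ of the missing nucleotides into an intergenic region corresponding to one of its source edges makes it balanced; if it is negative, a deletion $\deletion$ of the surplus nucleotides from such a region does so (this is legal since no gene is removed, exactly as in the first case of Lemma~\ref{cap5:lemma:trivial_cycles_one_indel}). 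Such an indel changes neither $|\pi^A|$ nor any other cycle.

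Assembling the pieces, $S$ would consist of the (at most three) transpositions followed by at most four balancing nucleotide indels, so $|S|\le 7$; the transpositions raise $c(\Ig)$ by $4$ while leaving every cycle clean, the indels then restore balance and hence raise $\cgood(\Ig)$ accordingly, and no operation inserts or removes a common gene, so $\Delta\cgood(\Ig,S)=4$. Equivalently, and this is the form I would actually write, $S$ can be presented as $\tau_1$ followed by two applications of Lemma~\ref{cap5:lemma:breaking_good_oriented_cycle} (to $C'$ and to the second oriented cycle), giving $|S|\le 1+3+3=7$ and $\Delta\cgood=0+2+2=4$, provided $\tau_1$ itself can be realized keeping all affected cycles good.

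The main obstacle I expect is precisely this balance accounting. One must check that the nucleotide cut points of the moves can be chosen so that every cycle whose imbalance needs repairing is clean (so that a single nucleotide indel suffices) and that the total number of such repairs never exceeds four; in particular one must verify that the weight-preserving $0$-move $\tau_1$ can be performed so that both $C'$ and $D'$ remain balanced, using the three cut-point degrees of freedom and the fact that $C$ and $D$ were balanced to start with. This forces a case analysis of exactly which descendants of $C$ and $D$ the $0$-move and the two $2$-moves can leave unbalanced; the orientedness and cycle-count claims of the first step, by contrast, reduce directly to the Bafna--Pevzner arguments since transpositions act only on source edges.
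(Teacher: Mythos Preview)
The paper does not prove this lemma; it is quoted verbatim from Oliveira et al.~\cite{2021a-oliveira-etal} (their Lema~4.7), so there is no in-paper argument to compare against directly. Your second formulation---a $0$-move $\tau_1$ followed by two applications of Lemma~\ref{cap5:lemma:breaking_good_oriented_cycle}, yielding $1+3+3=7$ operations and $\Delta\cgood = 0+2+2=4$---is the correct skeleton and matches how this paper itself deploys the same machinery in the proof of Lemma~\ref{cap5:lemma:one_bad_cycle_long}.

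Two refinements. First, in the Bafna--Pevzner $(0,2,2)$-sequence the $0$-move $\tau_1$ does \emph{not} act on source edges of $C$: it acts either on three source edges of a single interleaving cycle $D$, or on edges of two auxiliary cycles $D$ and $E$. In the first case $C'$ and $D'$ keep the same vertex sets (hence the same total source and destination weights) as $C$ and $D$, so both remain balanced automatically; in the second case one must choose the nucleotide cut points with care, and the proof of Lemma~\ref{cap5:lemma:one_bad_cycle_long} in this paper explicitly records that Oliveira et al.\ show how to do so keeping $D'$ and $E'$ balanced. So the obstacle you flag is genuine but already resolved in the cited source, and the case split you anticipate is precisely the one recorded there. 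Second, for the argument to close you also need that the sequence furnished by Lemma~\ref{cap5:lemma:breaking_good_oriented_cycle} on $C'$ contains the specific transposition on $C'$'s oriented triple that, per Bafna--Pevzner, orients the cycle carrying the edges of $D$ (or of $D',E'$); otherwise the second invocation of Lemma~\ref{cap5:lemma:breaking_good_oriented_cycle} has no oriented good cycle to act on.

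Your first formulation (three raw transpositions plus up to four balancing indels) is not justified as written: you give no reason why four indels always suffice to repair the balance of all resulting fragments, and bounding that number requires exactly the cut-point control that already drives you to the $1+3+3$ route. Stick with the second formulation.
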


\begin{lemma}[Oliveira e coautores~\cite{2021a-oliveira-etal}, Lema 4.8]\label{cap5:lemma:breaking_short_cycle}
Para qualquer grafo $G(\Ig)$, se todo $k$-ciclo $C$ em $G(\Ig)$ é um ciclo bom e $k \leq 2$, então existe uma sequência $S$ tal que $|S| \leq 2$ e $\Delta \cgood(\Ig, S) = 2$.
\end{lemma}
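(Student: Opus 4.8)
The plan is to reduce the claim to the classical theory of sorting unsigned permutations by transpositions, using the hypothesis that \emph{every} cycle of $G(\Ig)$ is good. First I would record a transfer observation: if a transposition $\tau$ acts on three source edges that all belong to good (balanced and clean) cycles, then the three intergenic splits of $\tau$ can be chosen so that every cycle it produces is again balanced, and since a transposition never alters a destination label, every cycle it produces is also clean; hence $\Delta\cgood(\Ig,\tau)=\Delta c(\Ig,\tau)$ for such a $\tau$. The reason the splits exist is exactly balancedness: the total intergenic mass carried by the source edges of the affected good cycles equals the mass demanded by their destination edges, so the available ``budget'' always matches the ``demand'' of the resulting cycles. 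I would formalize this with a short counting argument over the at most four cycles touched by $\tau$. Since here all cycles of $G(\Ig)$ are good, this transfer observation applies to any transposition we wish to use, and the labels and weights become irrelevant: the relative order of the common elements behaves as a genuine unsigned permutation.

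Next I would handle the combinatorics. We may assume $\G_o\neq\G_d$, for otherwise the statement is vacuous; then, since by hypothesis all non-unitary cycles have size exactly $2$, $G(\Ig)$ contains a good $2$-cycle $C$. A $2$-cycle is non-oriented, so no single transposition applied inside short cycles is a $2$-move, and the right tool is a pair of transpositions. I would invoke the standard fact from unsigned breakpoint-graph theory (Bafna--Pevzner~\cite{1998-bafna-pevzner}; Christie~\cite{1998b-christie}) that when every cycle of a breakpoint graph has size at most $2$ and at least one $2$-cycle is present, some $2$-cycle has an \emph{interleaving partner}, and for an interleaving pair of $2$-cycles there is a sequence $(\tau_1,\tau_2)$ of transpositions with $\Delta c(\cdot,\tau_1)=0$ and $\Delta c$ equal to $2$ for $\tau_2$ applied afterwards: $\tau_1$ makes one of the two cycles oriented without changing the cycle count, and $\tau_2$ is a $2$-move on that oriented cycle. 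Combining this with the transfer observation, I would take $S=(\tau_1,\tau_2)$ with intergenic splits realizing $\tau_1$ and $\tau_2$ on good cycles; then $\Delta\cgood(\Ig,S)=\Delta c=2$ and $|S|=2$, which is exactly the assertion. Note also that an indel would be counterproductive here (by Lemma~\ref{cap5:lemma:var_good_cycles_indels} it can only gain one good cycle and, in an all-good graph, tends to create a bad one), so restricting $S$ to transpositions is not a loss.

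The main obstacle I anticipate is justifying the existence of the interleaving partner of a good $2$-cycle and checking that both transpositions of the pair remain inside good cycles throughout. The interleaving claim is classical but requires a parity/pigeonhole argument on the positions of the source edges of the cycles of size at most $2$ (roughly: two $2$-cycles cannot both avoid crossing while all remaining cycles are unitary). After $\tau_1$ one must also verify that the cycle it orients is still good; this follows from the transfer observation, since $\tau_1$ is a $0$-move acting only on good cycles and its splits were chosen to preserve balance. A secondary, purely bookkeeping point is to confirm that the two transpositions never need to merge a unitary good cycle into a bad one -- this is automatic from the choice above, but I would state it explicitly so that the loop invariant used in the surrounding algorithm (in the spirit of Algorithm~\ref{cap5:alg:2_5-approx}) stays clean.
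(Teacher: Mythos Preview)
The paper does not prove this lemma; it is quoted verbatim from Oliveira \emph{et al.}~\cite{2021a-oliveira-etal} (Lema~4.8), preceded only by the remark that ``quando consideramos apenas opera\c{c}\~oes que agem em arestas de origem de ciclos bons, os resultados de problemas interg\^enicos que n\~ao consideram \emph{indels} podem ser diretamente aplicados.'' Your plan matches that remark exactly: since every cycle is good, the underlying breakpoint graph is that of an unsigned permutation with only short cycles, and the Bafna--Pevzner $(0,2)$-sequence on a pair of interleaving $2$-cycles gives $\Delta c=2$; choosing the intergenic split parameters appropriately yields $\Delta\cgood=2$. So your strategy and the paper's are the same.

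One caution: your ``transfer observation'' is phrased too broadly. It is \emph{not} true that for an arbitrary transposition acting on three source edges of good cycles the splits can always be chosen so that every resulting cycle is balanced---the nonnegativity constraints $0\le x\le\breve{A}_i$, $0\le y\le\breve{A}_j$, $0\le z\le\breve{A}_k$ can obstruct this (the system $x+y'=p$, $z+x'=q$, $y+z'=r$ has a one-parameter family of real solutions, but the feasible interval for the free parameter may be empty). This is precisely why Lema~\ref{cap5:lemma:breaking_good_oriented_cycle} allows $|S|\le 3$ rather than $|S|=1$ for a single good oriented cycle. What Oliveira \emph{et al.}\ actually establish is the narrower claim you need here: for the \emph{specific} $(0,2)$-sequence on an interleaving pair of good $2$-cycles, feasible splits exist at each step (the first transposition is a $0$-move and only needs to keep totals, which is automatic; the second is a $2$-move on an oriented $3$-cycle arising from two balanced $2$-cycles, where the arithmetic works out). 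Tighten your transfer claim to that specific sequence and cite~\cite{2021a-oliveira-etal} for the feasibility of the intergenic parameters, rather than asserting it as a general principle.
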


Os próximos lemas apresentam sequências de transposições ou {\it indels} que agem em ciclos convergentes rotulados ou desbalanceados.

\begin{lemma}\label{cap5:lemma:oriented_bad_cycle_transp}
Para qualquer grafo $G(\Ig)$, se $G(\Ig)$ possui um ciclo orientado $C$ rotulado ou desbalanceado, então existe transposição $\tau$ que transforma $C$ em três novos ciclos, tal que um deles é um ciclo unitário não negativo que possui aresta de origem limpa.
\end{lemma}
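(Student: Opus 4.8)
The plan is to reduce the statement to the classical oriented-triple argument for transpositions on cycle graphs and then to exploit the freedom one has, in the intergenic setting, both in splitting intergenic regions and in relocating maximal segments of deleted genes. First I would invoke the result of Bafna and Pevzner~\cite{1998-bafna-pevzner}: since $C$ is an oriented cycle, it contains an oriented triple of source edges $o_i$, $o_j$, $o_k$ with $i < j < k$, $o_i > o_k > o_j$, and $k = j+1$. Let $\tau$ be the transposition that acts on these three source edges. As in the proof of Lemma~\ref{cap4:lemma:transposition_oriented_cycles}, $\tau$ splits $C$ into three cycles $C'$, $C''$, $C'''$, one of which — say $C'$ — is unitary: because $k = j+1$, the destination edge $e'_{d_j}$ is the unique destination edge lying between $e_{o_j}$ and $e_{o_k}$ in the traversal of $C$, and after $\tau$ this destination edge, together with the new source edge joining its two endpoints, forms a $1$-cycle.

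The second step is to control the weight and the label of the source edge of $C'$. A transposition creates exactly three new source-side adjacencies, so the source edge of $C'$ is one of the three new intergenic regions produced by $\tau$; in the notation of the definition of an intergenic transposition these regions are $x+y'$, $z+x'$, and $y+z'$, and each one can be driven to size $0$ by an admissible choice of the split parameters $x$, $y$, $z$ (for instance $x+y'=0$ when $x=0$ and $y=\breve{A}_j$). Choosing the split so that the region that becomes the source edge of $C'$ receives no nucleotides makes $C'$ non-negative, since every edge weight is non-negative and hence the weight of that source edge is at most the weight of $e'_{d_j}$. Independently of this choice, and exactly as in the proof of Lemma~\ref{cap4:lemma:transposition_oriented_cycles}, I would route every $\deletionElement$ element attached to the three affected source edges into $C''$ or $C'''$, so that in the resulting string the two endpoints of the source edge of $C'$ become consecutive; this makes that source edge clean. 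Putting the two adjustments together, $\tau$ transforms $C$ into three cycles one of which is a non-negative unitary cycle with a clean source edge.

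The step I expect to be the main obstacle is the bookkeeping in the middle: identifying exactly which of the three newly created source-side regions is the source edge of $C'$, and verifying that zeroing that region does not clash with keeping $\deletionElement$ elements away from it. I expect this to be routine once the effect of $\tau$ is written out simultaneously on the string and on the list of intergenic region sizes, because the two degrees of freedom are genuinely independent: the placement of a deleted segment is decided only by which side of the corresponding cut it falls on, while the nucleotide counts of the new regions are decided by $x$, $y$, and $z$; so the same transposition can be realized so as to zero the chosen region and, at the same time, leave it free of $\deletionElement$ elements.
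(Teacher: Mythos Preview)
Your proposal is correct and follows essentially the same line as the paper's proof: invoke the Bafna--Pevzner oriented triple with $k=j+1$, apply the transposition on those three source edges so that one of the resulting cycles is unitary, and then exploit the freedom in how the transposition is realized on $A$ to push both the $\deletionElement$ blocks and the unwanted nucleotides out of that unitary cycle's source edge. The paper's proof is terser --- it simply asserts that the transposition can ``move any element $\deletionElement$ and excess nucleotides, if they exist, from the source edge with index $o_j$ to one of the other two cycles'' --- whereas you spell out the mechanism (zeroing one of $x+y'$, $z+x'$, $y+z'$ via the split parameters, and independently routing the $\deletionElement$ blocks). Your version drives the weight all the way to zero rather than merely below $w(e'_{d_j})$; this is a harmless simplification that makes non-negativity automatic. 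Your anticipated obstacle (pinning down exactly which of the three new regions is the source edge of $C'$ and checking that zeroing it is compatible with keeping it $\deletionElement$-free) is real but, as you say, routine: once you cut at the intergenic region adjacent to the boundary so that all $\deletionElement$ go to the other side, the unitary cycle's source edge is a single intergenic region and the split parameter for that region can be chosen freely.
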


\begin{proof}
	Como $C$ é um ciclo orientado, sempre existe tripla orientada $o_i$, $o_j$ e $o_k$, com $i < j < k$, tal que $o_i > o_k > o_j$ e $k = j+1$~\cite{1998-bafna-pevzner}. Uma transposição aplicada nessas três arestas de origem cria três novos ciclos, de forma que a aresta de origem com índice $o_j$ no novo grafo pertence a um ciclo unitário, já que $k = j+1$. Sempre podemos escolher a transposição de forma que o ciclo unitário é não negativo e possui aresta de origem limpa. Para isso, a transposição precisa mover qualquer elemento $\deletionElement$ e nucleotídeos excedentes, caso existam, da aresta de origem com índice $o_j$ para um dos outros dois ciclos.
\end{proof}

\begin{lemma}\label{cap5:lemma:oriented_cycle_transp}
Para qualquer grafo $G(\Ig)$, se $G(\Ig)$ possui um ciclo orientado $C$ rotulado ou desbalanceado, então existe uma sequência $S$ tal que $|S| \leq 2$ e $\Delta \cgood(\Ig, S) = 1$.
\end{lemma}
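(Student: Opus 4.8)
A estratégia é análoga à do Lema~\ref{cap5:lemma:divergent_labeled_cycle}, porém usando uma transposição no lugar de uma reversão. Primeiro eu aplicaria o Lema~\ref{cap5:lemma:oriented_bad_cycle_transp} ao ciclo orientado ruim $C$ para obter uma transposição $\tau$ que transforma $C$ em três novos ciclos $C'$, $C''$ e $C'''$, de modo que $C'$ seja um ciclo unitário não negativo com aresta de origem limpa. Ao fixar $\tau$, distribuiria os elementos $\deletionElement$ e os nucleotídeos excedentes de $C$ entre $C''$ e $C'''$ de forma que esses dois ciclos permaneçam ruins: se $C$ é rotulado, pelo menos um entre $C''$ e $C'''$ herda um rótulo; se $C$ é desbalanceado e limpo, repartiria o desbalanceamento restante entre $C''$ e $C'''$ de modo que nenhum dos dois fique balanceado. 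Essa é a peça que precisa ser cuidada (veja o final do esboço).

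A prova então se divide em dois casos, conforme $C'$ seja bom ou não. No primeiro caso, se $C'$ já é balanceado e limpo, então $\tau$ troca o único ciclo ruim $C$ por exatamente um ciclo bom ($C'$) e dois ciclos ruins, logo $\Delta \cgood(\Ig, \tau) = 1$ e basta tomar $S = (\tau)$. No segundo caso, $C'$ não é bom, mas por construção é não negativo e tem aresta de origem limpa, de modo que satisfaz a condição (ii) do Lema~\ref{cap5:lemma:trivial_cycles_one_indel}; portanto existe um {\it indel} $\beta$ com $\Delta \cgood(\G_o \comp \tau, \G_d, \beta) = 1$. Como nesse caso nenhum dos ciclos $C'$, $C''$, $C'''$ é bom, temos $\Delta \cgood(\Ig, \tau) = 0$, e assim $S = (\tau, \beta)$ satisfaz $\Delta \cgood(\Ig, S) = 0 + 1 = 1$. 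Em ambos os casos $|S| \leq 2$, o que conclui.

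O passo que espero ser o principal obstáculo é justificar com rigor que a transposição do Lema~\ref{cap5:lemma:oriented_bad_cycle_transp} pode ser escolhida de modo a redistribuir o balanço de nucleotídeos como descrito — ou seja, garantir que, partindo de $C$ ruim, os ciclos $C''$ e $C'''$ permaneçam ruins (equivalentemente, que $\Delta \cgood(\Ig, \tau) \le 1$, com igualdade a $1$ somente quando $C'$ é bom). Isso requer reexaminar o efeito de $\tau$ sobre a partição das arestas de destino (cujos custos são fixos) e sobre a redistribuição de nucleotídeos e de elementos $\deletionElement$ entre as arestas de origem dos três ciclos resultantes, explorando que a soma total dos custos das arestas de origem é preservada por transposições. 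Os demais passos são rotineiros, pois reutilizam diretamente os lemas~\ref{cap5:lemma:oriented_bad_cycle_transp} e \ref{cap5:lemma:trivial_cycles_one_indel}.
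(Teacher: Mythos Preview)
Your approach coincides with the paper's: apply Lemma~\ref{cap5:lemma:oriented_bad_cycle_transp} to obtain a transposition $\tau$ that produces a trivial non-negative cycle $C'$ with clean origin edge, and then, if $C'$ is not already good, apply the indel of Lemma~\ref{cap5:lemma:trivial_cycles_one_indel}. That is the entire proof in the paper.

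Where you diverge is in trying to force $C''$ and $C'''$ to remain bad so as to obtain the exact equality $\Delta\cgood(\Ig,S)=1$. The paper does not do this, and it is unnecessary: the lemma is used only through Theorem~\ref{cap5:theorem:4_approx_transp_cycle_graph}, which requires $|S|/\Delta\cgood(\Ig,S)\le 2$, so $\Delta\cgood(\Ig,S)\ge 1$ with $|S|\le 2$ already suffices; the ``$=1$'' in the statement should be read as ``$\ge 1$'' (compare with Lemma~\ref{cap5:lemma:var_good_cycles_transp}, which caps the gain at $2$). Moreover, the redistribution you describe to keep both $C''$ and $C'''$ bad is not attainable in general: the destination-edge weights of $C''$ and $C'''$ are fixed by how $\tau$ partitions $C$, and the split of origin-edge weight among the three cut edges is constrained by the parameters of $\tau^{(i,j,k)}_{(x,y,z)}$, so you cannot realize an arbitrary triple from the available total. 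You can simply drop that step and your argument is complete.
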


\begin{proof}
	Diretamente dos lemas~\ref{cap5:lemma:oriented_bad_cycle_transp} e \ref{cap5:lemma:trivial_cycles_one_indel}. Note que, após aplicar a transposição do Lema~\ref{cap5:lemma:oriented_bad_cycle_transp}, existe um ciclo unitário $C'$ que é um ciclo bom ou que satisfaz as condições do Lema~\ref{cap5:lemma:trivial_cycles_one_indel}.
\end{proof}

\begin{lemma}\label{cap5:lemma:two_labeled_cycles}
Para qualquer grafo $G(\Ig)$, se $G(\Ig)$ possui pelo menos dois ciclos rotulados ou desbalanceados $C$ e $D$, tal que $C$ e $D$ são não orientados e não unitários, então existe uma transposição $\tau$ que transforma $C$ e $D$ nos ciclos $C'$ e $D'$, tal que ou $C'$ ou $D'$ é um ciclo unitário não negativo que possui aresta de origem limpa.
\end{lemma}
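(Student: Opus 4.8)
The plan is to mirror the construction of Lemma~\ref{cap4:lemma:transposition_nonoriented_multiple_cycles} from Chapter~\ref{cap:label:indel} and then add the handling of intergenic region sizes. Recall that the cycle decomposition of $G(\Ig)$ and its label function $\ell$ are exactly those of the labeled cycle graph of the simplified strings, hence unaffected by the cost function $w$. So, writing $C = (o_1, o_2, \ldots, o_\ell)$ and $D = (o'_1, o'_2, \ldots, o'_k)$ with $\ell \geq 2$ and $k \geq 2$, the non-orientedness of $C$ and $D$ gives $o_1 > o_2 > \cdots > o_\ell$ and $o'_1 > o'_2 > \cdots > o'_k$; assume without loss of generality $o_\ell > o'_k$ (otherwise exchange the roles of $C$ and $D$). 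As in the unweighted case, the transposition acting on the three source edges $e_{o_1}$, $e_{o_\ell}$, $e_{o'_k}$ merges $C$ and $D$ into a unitary cycle $C'$ — the one containing the destination edge that, before the transposition, was incident to the common endpoint of $e_{o_\ell}$ and $e_{o'_k}$ — together with an $(\ell + k - 1)$-cycle $D'$, so the total number of cycles is unchanged.

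It then remains to exhibit the free parameters of this transposition so that $C'$ is clean and non-negative. A transposition $\tau^{(i,j,k)}_{(x,y,z)}$, where $i < j < k$ are the indices $o'_k$, $o_\ell$, $o_1$ in increasing order, splits the three affected intergenic regions $\breve{A}_i$, $\breve{A}_j$, $\breve{A}_k$ into $(x,x')$, $(y,y')$, $(z,z')$ and recombines them into three new seam regions of sizes $x+y'$, $z+x'$, $y+z'$; moreover it is free to decide, for every maximal $\deletionElement$-run lying inside one of the three affected source edges, which of the two moved segments it travels with. The source edge of $C'$ in the new graph is exactly one of those three seam regions, and each of $x+y'$, $z+x'$, $y+z'$ can be made equal to $0$ by an appropriate choice of $x$, $y$, $z$. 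Choosing the break points so that the seam region forming $C'$'s source edge receives size $0$, and simultaneously routing every $\deletionElement$-run of $e_{o_1}$ and $e_{o_\ell}$ into the segment that ends up in $D'$ (exactly the relocation of Lemma~\ref{cap4:lemma:transposition_nonoriented_multiple_cycles}), yields a unitary cycle $C'$ whose source edge has empty label and cost $0$; since the destination edge of $C'$ is untouched by $\tau$ and has non-negative cost, $C'$ is non-negative and clean, as claimed. The extra hypothesis that $C$ and $D$ are bad (rotulated or unbalanced) is not used in the construction; it only records the situation in which the approximation algorithm invokes this lemma.

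The step I expect to be the main obstacle is the bookkeeping of which of the three new seam regions is the source edge of $C'$, together with the verification that the two kinds of freedom do not interfere: the choice of $x$, $y$, $z$ redistributes nucleotides among the three seam regions, while the $\deletionElement$-routing redistributes genes (and the intergenic regions strictly between consecutive $\deletionElement$'s) between the two moved segments, so one must argue that after fixing the $\deletionElement$-routing there is still enough slack in $x$, $y$, $z$ to zero out the seam that becomes $C'$'s source edge. I would settle this by the same orientation/position case analysis used in Lemma~\ref{cap4:lemma:transposition_nonoriented_multiple_cycles} — in each of the few cases for how $e_{o'_k}$ sits relative to the endpoints of $e_{o_1}$ and $e_{o_\ell}$, write the explicit expression for the size of the $C'$-seam — and accompany it with a figure analogous to Figure~\ref{cap4:fig:transp_two_non_oriented_cycles}, exactly as done for the other transposition lemmas of this section (cf. Lemmas~\ref{cap5:lemma:oriented_bad_cycle_transp} and~\ref{cap5:lemma:oriented_cycle_transp}).
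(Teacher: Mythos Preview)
Your approach is essentially the paper's: act on the two source edges of $C$ that are adjacent via $d_\ell$ together with one edge of $D$, then route any $\deletionElement$'s and surplus nucleotides away from the seam that becomes $C'$'s source edge (the paper makes the mirror choice --- WLOG $o_1 < o'_1$ and uses $o'_1$ rather than $o'_k$ --- and only moves the \emph{excess} nucleotides rather than zeroing the seam out, but this is immaterial). One wording slip to fix: the destination edge landing in $C'$ is $d_\ell$, which joins an endpoint of $e_{o_\ell}$ to an endpoint of $e_{o_1}$, not of $e_{o'_k}$ --- the edges $e_{o_\ell}$ and $e_{o'_k}$ lie in distinct cycles and share no vertex.
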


\begin{proof}
	Sejam $C = (o_1, d_1, o_2, d_2 \ldots, o_x, d_x)$ e $D = (o'_1, d'_1, o'_2, d'_2, \ldots, o'_y, d'_y)$. Suponha, sem perda de generalidade, que $o_1 < o'_1$. Uma transposição aplicada nas arestas de origem $o_1$, $o_x$ e $o'_1$ transforma esses ciclos em dois ciclos $C'$ e $D'$, tal que $C'$ é um ciclo unitário que contém a aresta de destino $d_x$ e $D'$ é um $(y + x - 1)$-ciclo. Podemos mover qualquer elemento $\deletionElement$ e nucleotídeos excedentes da aresta de origem $o_x$, onde o ciclo unitário é formado, para o ciclo $D'$. Dessa forma, podemos garantir que $C'$ é um ciclo unitário não negativo que possui aresta de origem limpa.
\end{proof}

\begin{lemma}\label{cap5:lemma:two_labeled_cycles_sequence}
Para qualquer grafo $G(\Ig)$, se $G(\Ig)$ possui pelo menos dois ciclos rotulados ou desbalanceados $C$ e $D$, tal que $C$ e $D$ são não orientados e não unitários, então existe uma sequência $S$ tal que $|S| \leq 2$ e $\Delta \cgood(\Ig, S) = 1$.
\end{lemma}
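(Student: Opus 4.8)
The plan is to combine Lemma~\ref{cap5:lemma:two_labeled_cycles}, which produces a transposition creating a non-negative unitary cycle with a clean source edge, with Lemma~\ref{cap5:lemma:trivial_cycles_one_indel}, which handles exactly such a cycle with a single indel. First I would invoke Lemma~\ref{cap5:lemma:two_labeled_cycles}: since $G(\Ig)$ has two bad cycles $C$ and $D$ that are non-oriented and non-unitary, there is a transposition $\tau$ transforming them into $C'$ and $D'$ with (say) $C'$ a non-negative unitary cycle whose source edge is clean. Note that $\tau$ does not insert or remove elements, so $|\pi^A|$ is unchanged, and $\tau$ increases the number of cycles by exactly two (it splits off the unitary $C'$), so $\Delta c(\Ig,\tau)=2$; however, $\tau$ need not immediately increase $\cgood$, because $C'$ may still be unbalanced (negative-cost shortfall is impossible since it is non-negative, but it could be positive/unbalanced if destination cost exceeds source cost) or its destination edge could be labeled, while $D'$ remains bad.

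Then I would handle $C'$ with Lemma~\ref{cap5:lemma:trivial_cycles_one_indel}: $C'$ is a unitary cycle that is either (i) clean — in which case it satisfies case (i) of that lemma — or (ii) non-negative with $\ell(e_{o_1})=\emptyset$, which is case (ii). In either case there is an indel $\beta$ with $\Delta\cgood(\Ig',\beta)=1$, where $\Ig'=\G_o\comp\tau$. So the sequence $S=(\tau,\beta)$ has $|S|\le 2$. If $C'$ happened already to be a good cycle after $\tau$ (balanced and clean), then we simply take $S=(\tau)$ with $|S|=1$.

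The accounting for $\Delta\cgood(\Ig,S)=1$ is the step I would write most carefully. After $\tau$, the grand potential $|\pi^A|+1-c$ changed by $\Delta c(\Ig,\tau)=2$ but the number of good cycles may not have moved; concretely, the unitary $C'$ is created but it (or $D'$) may still be bad. After $\beta$, Lemma~\ref{cap5:lemma:trivial_cycles_one_indel} (or Lemma~\ref{cap5:lemma:var_good_cycles_indels}) gives exactly one new good cycle — turning $C'$ (possibly split by an element insertion into two balanced clean unitary cycles, as in Figure~\ref{cap5:fig:insertion_element_trivial_cycle}) into good cycle(s) without decreasing $\cgood$ elsewhere. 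The point to verify is that, relative to the starting graph $G(\Ig)$, exactly one additional cycle has become good: $C'$ and all spin-offs are now good, $D'$ is untouched by $\beta$ and is still the single "merged" remnant of $C$ and $D$ (so at most where one of $C,D$ was good we lose none, since $C,D$ were assumed bad), and no other cycle is affected by either operation. Hence $\Delta\cgood(\Ig,S)=1$.

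The main obstacle will be the bookkeeping in the subcase where $\tau$ already makes $C'$ good (so we want $|S|=1$) versus where $\beta$ is genuinely needed: one must confirm that in the first subcase $\tau$ alone already achieves $\Delta\cgood(\Ig,\tau)=1$, and that in all cases $D'$ does not secretly cost us a good cycle (it cannot, since $C,D$ are both bad by hypothesis, so merging them cannot destroy a good cycle). Once that case split is handled, the statement follows directly; the rest is an immediate application of the two cited lemmas and is essentially a transcription of the argument already used for Lemma~\ref{cap4:lemma:transposition_nonoriented_multiple_cycles} in the non-weighted setting.
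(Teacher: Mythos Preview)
Your approach matches the paper's proof exactly: apply the transposition of Lemma~\ref{cap5:lemma:two_labeled_cycles} and, if the resulting unitary cycle $C'$ is not already good, finish with the indel from Lemma~\ref{cap5:lemma:trivial_cycles_one_indel}. One minor slip: the transposition in Lemma~\ref{cap5:lemma:two_labeled_cycles} acts on two source edges of $C$ and one of $D$, turning two cycles into two cycles (so $\Delta c(\Ig,\tau)=0$, not $2$), but this does not affect your $\Delta\cgood$ accounting, which is correct since $C$ and $D$ were both bad and $D'$ remains bad.
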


\begin{proof}
	Diretamente dos lemas~\ref{cap5:lemma:two_labeled_cycles} e \ref{cap5:lemma:trivial_cycles_one_indel}. Note que, após aplicar a transposição do Lema~\ref{cap5:lemma:two_labeled_cycles}, existe um ciclo unitário $C'$ que é um ciclo bom ou satisfaz as condições do Lema~\ref{cap5:lemma:trivial_cycles_one_indel}.
\end{proof}

A Figura~\ref{cap5:fig:oriented_or_two_labeled} mostra exemplos das transposições aplicadas de acordo com os lemas~\ref{cap5:lemma:oriented_bad_cycle_transp} e \ref{cap5:lemma:two_labeled_cycles}.

\begin{figure}[tb]
    \centering
    \includegraphics[width=0.7\textwidth]{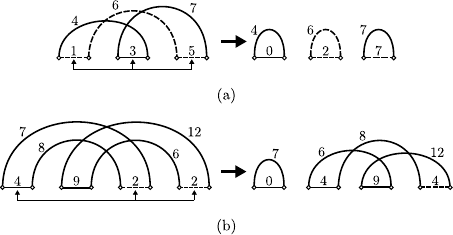}
    \caption{\label{cap5:fig:oriented_or_two_labeled}
    Exemplos de transposições aplicadas pelos lemas~\ref{cap5:lemma:oriented_bad_cycle_transp} e \ref{cap5:lemma:two_labeled_cycles}. {\bf (a)} Nesse caso, existe um ciclo orientado $C$ rotulado e desbalanceado. Existe uma transposição que transforma esse ciclo em três novos ciclos, tal que um deles é um ciclo unitário não negativo que possui aresta de origem limpa. {\bf (b)} Nesse caso, existem dois ciclos não unitários $C$ e $D$ rotulados e desbalanceados. Existe uma transposição aplicada em três arestas de origem desses dois ciclos que transforma $C$ e $D$ nos ciclos $C'$ e $D'$, tal que um desses novos ciclos é um ciclo unitário não negativo que possui aresta de origem limpa.
    }
\end{figure}

\begin{figure}[tb]
	\centering
    \includegraphics[width=0.8\textwidth]{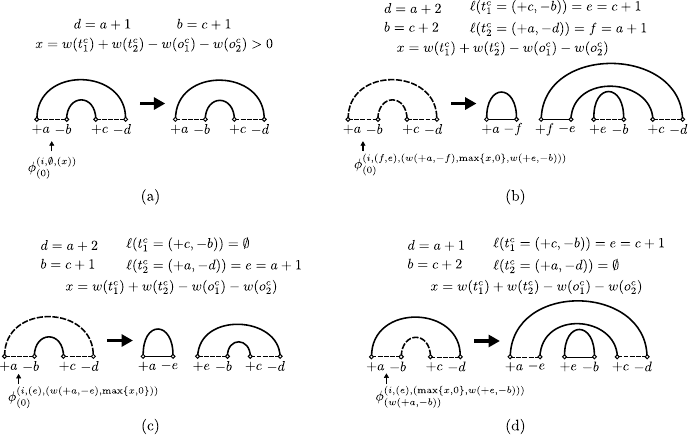}
    \caption{\label{cap5:fig:one_bad_cycle_short_indels}
    	Quatro possíveis casos de um inserção aplicada em um $2$-ciclo $C$. {\bf (a)} Nesse caso, $C$ é positivo e ambas arestas de destino de $C$ são limpas, então usamos um {\it indel} que adiciona a quantidade necessária de nucleotídeos para tornar $C$ balanceado. {\bf (b)} Nesse caso, ambas arestas de destino de $C$ são rotuladas, então usamos um {\it indel} que adiciona dois elementos, gerando um ciclo unitário bom e um ciclo unitário não positivo, além de adicionar nucleotídeos suficientes para tornar o $2$-ciclo em um ciclo não positivo. {\bf (c-d)} Nesses casos, apenas uma das arestas de destino de $C$ é rotulada, então usamos um {\it indel} que adiciona um elemento, gerando um ciclo unitário bom, além de adicionar nucleotídeos suficientes para tornar o $2$-ciclo em um ciclo não positivo.
    }
\end{figure}

\begin{lemma}\label{cap5:lemma:one_bad_cycle_short}
	Para qualquer grafo $G(\Ig)$, se $G(\Ig)$ possui apenas um ciclo rotulado ou desbalanceado $C$, tal que $C$ é um $2$-ciclo não orientado, e todos os outros ciclos de $G(\Ig)$ são bons e não orientados, então existe uma sequência $S$ tal que $|S|/\Delta \cgood(\Ig, S) \leq 2$.
\end{lemma}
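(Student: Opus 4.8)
O plano é reduzir o único ciclo ruim $C$ a ciclos bons usando uma sequência de comprimento constante formada por uma inserção, algumas deleções e, ao final, uma aplicação de um dos lemas de quebra de ciclos bons, garantindo a razão $|S|/\Delta\cgood(\Ig, S) \leq 2$ exigida pela $4$-aproximação (cujo limitante inferior vem do Lema~\ref{cap5:lemma:grafo_ciclos_intergenicos_lower_bound}).

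Primeiro, eu aplicaria uma única inserção $\phi$ para tratar os rótulos das arestas de destino de $C$ e deixar o $2$-ciclo residual não positivo, seguindo exatamente a análise de casos da Figura~\ref{cap5:fig:one_bad_cycle_short_indels}. Se ambas as arestas de destino de $C$ são limpas e $C$ é positivo (caso (a)), $\phi$ insere apenas a quantidade de nucleotídeos necessária em uma região de origem e transforma $C$ em um ciclo balanceado. Se ambas as arestas de destino são rotuladas (caso (b)), $\phi$ insere os dois elementos correspondentes de $\Sigma_{\iota^n}\setminus\Sigma_A$, o que separa de $C$ um ciclo unitário bom e um ciclo unitário não positivo e deixa o $2$-ciclo restante não positivo e com arestas de destino limpas; escolho os nucleotídeos inseridos de modo a manter os ciclos residuais não positivos. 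Se exatamente uma aresta de destino é rotulada (casos (c)-(d)), $\phi$ insere um único elemento, separando um ciclo unitário bom e deixando novamente o $2$-ciclo residual não positivo com arestas de destino limpas.

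Em seguida, eu terminaria a limpeza usando apenas deleções. Todo ciclo residual que ainda é ruim é agora negativo ou possui uma aresta de origem rotulada; uma deleção $\psi^{(i,j)}_{(x,y)}$ pode remover um segmento maximal de elementos $\deletionElement$ junto com os nucleotídeos excedentes da mesma região de origem em uma única operação, de modo que uma deleção por aresta de origem rotulada, mais no máximo uma deleção de nucleotídeos excedentes de alguma região de origem não negativa mas não balanceada, transforma todos os ciclos residuais em ciclos bons. Como $C$ é um $2$-ciclo, isso custa no máximo três deleções; junto com $\phi$ são no máximo quatro operações, após as quais o grafo contém apenas ciclos bons não orientados e contém um $2$-ciclo bom (o residual de $C$) — ou, se a redução transformou $C$ diretamente em ciclos unitários bons, já temos $\G_o = \G_d$. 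No primeiro sub-caso, aplico o Lema~\ref{cap5:lemma:breaking_short_cycle} ao $2$-ciclo bom, obtendo uma sequência adicional de no máximo $2$ operações com $\Delta\cgood = 2$; transformar o único ciclo ruim em um $2$-ciclo bom (possivelmente mais ciclos unitários bons, que não alteram $|\pi^A|+1-\cgood$) diminui $|\pi^A|+1-\cgood$ em exatamente $1$, então a fase de redução tem $\Delta\cgood = 1$, e a sequência completa $S$ satisfaz $|S| \leq 6$ e $\Delta\cgood(\Ig, S) \geq 3$, logo $|S|/\Delta\cgood(\Ig,S) \leq 2$. (Se a redução já produziu $\G_o = \G_d$, ela teve $\Delta\cgood = 2$ com no máximo $4$ operações; se em vez do $2$-ciclo bom restar apenas um ciclo bom não orientado maior, uso o Lema~\ref{cap5:lemma:breaking_good_non_oriented_cycle} como fechamento, e a contagem também dá razão no máximo $2$.)

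A parte mais delicada será a verificação cuidadosa dos casos da Figura~\ref{cap5:fig:one_bad_cycle_short_indels}: mostrar que uma única inserção consegue, simultaneamente, descartar os rótulos das arestas de destino de $C$ e levar todos os ciclos residuais a serem não positivos em cada sub-caso; e confirmar que as no máximo três deleções realmente bastam quando o $2$-ciclo residual é, ao mesmo tempo, desbalanceado e carrega rótulos $\deletionElement$ nas suas arestas de origem, explorando o fato de que uma única deleção pode remover um segmento $\deletionElement$ e nucleotídeos excedentes de uma só vez. Também seria preciso conferir que, após a redução, sempre se recai numa das situações cobertas pelos lemas de fechamento (Lema~\ref{cap5:lemma:breaking_short_cycle} ou Lema~\ref{cap5:lemma:breaking_good_non_oriented_cycle}), já que só aí a razão $2$ fica de fato garantida.
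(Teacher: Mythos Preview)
Your approach is the same as the paper's: one insertion according to the case analysis of Figure~\ref{cap5:fig:one_bad_cycle_short_indels}, then deletions to finish cleaning~$C$, then an application of Lemma~\ref{cap5:lemma:breaking_short_cycle} or Lemma~\ref{cap5:lemma:breaking_good_non_oriented_cycle} on the now all-good graph. The gap is in your operation count for the cleanup phase. You allow up to three deletions (two for labeled origin edges plus one extra for a residual imbalance), giving a reduction of up to four operations with $\Delta\cgood = 1$. Combined with Lemma~\ref{cap5:lemma:breaking_good_non_oriented_cycle} (seven operations, $\Delta\cgood = 4$) this yields $11$ operations for $\Delta\cgood = 5$, so $|S|/\Delta\cgood = 11/5 > 2$. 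Your parenthetical assertion that ``a contagem também dá razão no máximo~$2$'' in that case is therefore not supported by your own bound; you never check the arithmetic there.

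The paper obtains only \emph{two} deletions (hence three indels total) in the worst case, and this tightening is exactly what makes the ratio come out to~$2$ when Lemma~\ref{cap5:lemma:breaking_good_non_oriented_cycle} must be used ($3+7=10$, $\Delta\cgood = 1+4 = 5$). The observation you are missing is that the insertion creates new clean origin edges whose weights can be chosen freely: by setting the weight of the clean origin edge of the residual $2$-cycle~$C''$ (and similarly of~$C'''$) appropriately, a single deletion on each of the at most two remaining labeled origin edges removes the label \emph{and} leaves the cycle exactly balanced in one shot. Your hypothetical third deletion---for a cycle that is clean but still unbalanced after label removal---is precisely the operation this careful choice of insertion makes unnecessary.
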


\begin{proof}
	Primeiramente, precisamos aplicar {\it indels} de acordo com a Figura~\ref{cap5:fig:one_bad_cycle_short_indels} para transformar $C$ em ciclos bons. No pior caso, todas as arestas de $C$ são rotuladas. Precisamos aplicar uma inserção de acordo com a Figura~\ref{cap5:fig:one_bad_cycle_short_indels}b para remover os rótulos das duas arestas de origem de $C$. Note que o ciclo unitário mais à esquerda sempre pode ser um ciclo bom, já que podemos realizar a inserção antes de qualquer elemento $\deletionElement$ e atribuir qualquer custo à aresta de origem desse ciclo unitário. Nesse caso, a inserção cria ciclos $C'$ (unitário), $C''$ ($2$-ciclo) e $C'''$ (unitário), tal que apenas $C'$ é um ciclo bom, mas todos eles são não positivos, já que essa inserção pode adicionar nucleotídeos na aresta de origem mais à esquerda de $C''$ e na única aresta de origem de $C'''$. Note que a aresta mais à esquerda de $C''$ foi criada pela inserção e, consequentemente, essa aresta de origem é limpa e podemos atribuir qualquer custo a ela. Agora, apenas precisamos aplicar uma deleção para cada um das duas arestas de origem rotuladas, tornando $C''$ e $C'''$ em ciclos bons.

	Até agora, no pior caso, temos uma sequência $S_1$ com três {\it indels} e $\Delta \cgood(\Ig, S_1) = 1$. Podemos aplicar os lemas~\ref{cap5:lemma:breaking_good_non_oriented_cycle} ou \ref{cap5:lemma:breaking_short_cycle} no novo grafo $G(\G^{'}_o, \G_d)$, onde $\G^{'}_o = \G_o \comp S_1$. Usando a sequência $S_1$ e o Lema~\ref{cap5:lemma:breaking_good_non_oriented_cycle}, podemos construir uma sequência $S$, com no máximo dez operações, tal que $\Delta \cgood(\Ig, S) = 5$. Já ao usar a sequência $S_1$ e o Lema~\ref{cap5:lemma:breaking_short_cycle}, podemos construir uma sequência $S$, com no máximo cinco operações, tal que $\Delta \cgood(\Ig, S) = 3$.
\end{proof}

\begin{lemma}\label{cap5:lemma:one_bad_cycle_long}
	Para qualquer grafo $G(\Ig)$, se $G(\Ig)$ possui apenas um ciclo rotulado ou desbalanceado $C$, tal que $C$ é um $x$-ciclo não orientado com $x > 2$, e todos os outros ciclos de $G(\Ig)$ são bons e não orientados, então existe uma sequência $S$ tal que $|S|/\Delta \cgood(\Ig, S) \leq 2$.
\end{lemma}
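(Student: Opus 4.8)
The plan is to mirror the structure of the proof of Lemma~\ref{cap5:lemma:one_bad_cycle_short}: first exhibit a bounded sequence $S_1$ of transpositions and {\it indels} that removes the ``badness'' of the single bad cycle $C$ — turning $G(\Ig)$ either into a graph all of whose cycles are good (and non-oriented), or into an instance already covered by Lemma~\ref{cap5:lemma:one_bad_cycle_short} — and then finish with the good-cycle machinery, Lemmas~\ref{cap5:lemma:breaking_good_oriented_cycle}, \ref{cap5:lemma:breaking_good_non_oriented_cycle} and~\ref{cap5:lemma:breaking_short_cycle}. Each of those lemmas yields a sequence with ratio (length over $\Delta\cgood$) at most $2$, and $S_1$ will be built so that it also has ratio at most $2$; concatenating the two pieces then gives a sequence $S$ with $|S|/\Delta\cgood(\Ig,S) \le 2$ by the mediant property of fractions with positive denominators, which I would state explicitly.

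The heart of the argument is constructing $S_1$. Since $C$ is non-oriented with $|C| = x > 2$, by the interleaving structure of Bafna and Pevzner~\cite{1998-bafna-pevzner} there is another cycle $D$ of $G(\Ig)$ — necessarily good, by hypothesis — some of whose source edges interleave those of $C$. First I would apply a transposition acting on suitably chosen source edges (borrowing one from $D$) that peels off a unit cycle $C'$, pushing every $\alpha$-element and every surplus intergenic nucleotide carried by the affected source edges into the remaining, larger cycle — exactly in the spirit of Lemmas~\ref{cap5:lemma:oriented_bad_cycle_transp} and~\ref{cap5:lemma:two_labeled_cycles}. This keeps whatever is left of $D$ from becoming bad, makes $C'$ a clean, non-negative unit cycle, and hence makes $C'$ good after at most one {\it indel} (Lemma~\ref{cap5:lemma:trivial_cycles_one_indel}); moreover it leaves all the other good cycles untouched. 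One such peeling step costs at most two operations, achieves $\Delta\cgood = 1$, and strictly decreases $|C|$ (or the amount of label/imbalance still carried by $C$). Iterating, $C$ either becomes good — in which case we are in the all-good case and finish with Lemmas~\ref{cap5:lemma:breaking_good_oriented_cycle}--\ref{cap5:lemma:breaking_short_cycle} — or shrinks to a bad $2$-cycle, at which point Lemma~\ref{cap5:lemma:one_bad_cycle_short} applies directly; in both cases $S_1$ respects the $2{:}1$ ratio.

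The main obstacle is the transposition combinatorics of the peeling step for a non-oriented $C$: unlike the oriented case of Lemma~\ref{cap5:lemma:oriented_bad_cycle_transp}, there is no oriented triple inside $C$, so the transposition must use a source edge of the good helper cycle $D$, and one must verify (i) that an appropriate triple always exists — this is where the structural results of Bafna--Pevzner and of Oliveira et al.\ for non-oriented cycles are needed, adapted to the situation where only $C$ is bad — and (ii) that the rerouting of $\alpha$-elements and of excess intergenic sizes can always be arranged so that the leftover cycle absorbs all the badness and no new bad cycle is created (so that $\Delta\cgood$ really does not drop). A secondary, more routine point is making the amortization rigorous: one must check that across all peeling steps together with the concluding good-cycle sequences, the running total of operations never exceeds twice the running total of $\Delta\cgood$; this again follows from the mediant property but deserves to be spelled out rather than left implicit.
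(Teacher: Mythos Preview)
Your peeling step does not work as claimed. A transposition that ``borrows one source edge from $D$'' and uses two source edges of the non-oriented bad cycle $C$ (this is exactly the move of Lemma~\ref{cap5:lemma:two_labeled_cycles}) does \emph{not} leave a separate remnant of $D$: it produces a trivial cycle $C'$ together with a single $(x+y-1)$-cycle $D'$ that contains all remaining vertices of $C$ \emph{and} all vertices of $D$. Since you deliberately push every label $\alpha$ and every surplus intergenic weight into that large cycle, $D'$ is bad. Thus one good cycle ($D$) has been traded for one good trivial cycle ($C'$ after the possible indel), the number of cycles is unchanged, and $\Delta\cgood = 0$ for the whole peeling step of one or two operations. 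Moreover the surviving bad cycle $D'$ is \emph{larger} than $C$, so nothing has shrunk; iterating only makes the bad cycle grow.

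The paper's argument avoids this trap by never mixing edges of $C$ with edges of a good helper in the first transposition. Instead, using the Bafna--Pevzner interleaving structure, it applies a transposition on three source edges of the good helper $D$ (or of two good helpers $D$ and $E$): this rearranges positions so that $C$, keeping its vertex set, becomes an \emph{oriented} bad cycle $C'$, while by the results of Oliveira et al.\ the helper(s) can be kept balanced (hence good). Now Lemma~\ref{cap5:lemma:oriented_cycle_transp} applies to $C'$ directly (no borrowing), giving $\Delta\cgood=1$ in at most two operations; and the same Bafna--Pevzner analysis guarantees that this second transposition leaves one of the good helpers oriented, so Lemma~\ref{cap5:lemma:breaking_good_oriented_cycle} finishes with $\Delta\cgood=2$ in at most three more operations. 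The whole sequence has $|S|\le 6$ and $\Delta\cgood(\Ig,S)=3$.
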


\begin{proof}
	Seja $C = (o_1, d_1, o_2, d_2 \ldots, o_x, d_x)$. Suponha que existe outro ciclo não orientado $D = (o'_1, d'_1, o'_2, d'_2, \ldots, o'_y, d'_y)$, que é um ciclo bom de acordo com o enunciado deste lema, tal que existem triplas $(o_i, o_j, o_k)$ e $(o'_r, o'_s, o'_t)$ que satisfazem uma das seguintes condições: $o_i < o'_r < o_j < o'_s < o_k < o'_t$ ou $o_i > o'_r > o_j > o'_s > o_k > o'_t$. Bafna e Pevzner~\cite{1998-bafna-pevzner} mostraram que uma transposição $\tau$ aplicada nas arestas de origem $(o'_r, o'_s, o'_t)$ cria novos ciclos $C'$ e $D'$ com os mesmos conjuntos de vértices de $C$ e $D$, respectivamente, tal que $C'$ é um ciclo orientado. Além disso, Bafna e Pevzner~\cite{1998-bafna-pevzner} mostraram que uma transposição aplicada na tripla orientada de $C'$ torna $D'$ orientado no novo grafo. Pelo Lema~\ref{cap5:lemma:oriented_cycle_transp}, existe uma sequência $S_1$, tal que $|S_1| \leq 2$ e $\Delta \cgood(\G_o \comp \tau, \G_d, S_1) = 1$, aplicada na tripla orientada de $C'$ que transforma $D'$ em um ciclo orientado $D''$. Como $D''$ possui os mesmos vértices de $D$, sabemos que $D''$ é um ciclo bom. Dessa forma, usando o Lema~\ref{cap5:lemma:breaking_good_oriented_cycle}, existe uma sequência $S_2$ que é aplicada no ciclo $D''$, tal que $S_2$ possui no máximo três operações e adiciona dois ciclos bons no grafo. Portanto, podemos construir uma sequência $S$ tal que $|S| \leq 6$ e $\Delta \cgood(\Ig, S) \leq 3$.

	Se a condição anterior não é verdadeira, Bafna e Pevzner~\cite{1998-bafna-pevzner} demonstraram que existem ciclos $D$ e $E$, que são ciclos bons de acordo com o enunciado deste lema, tal que existe uma transposição $\tau$ aplicada nas arestas de $D$ e $E$ que cria dois novos ciclos $D'$ e $E'$, além de transformar $C$ em um ciclo orientado $C'$. Oliveira e coautores~\cite{2021a-oliveira-etal} demonstraram como escolher a transposição de forma que os novos ciclos $D'$ e $E'$ sejam ciclos balanceados. Além disso, $D'$ e $E'$ são ciclos limpos, pois são formados pelos mesmos vértices dos ciclos bons $D$ e $E$. Bafna e Pevzner~\cite{1998-bafna-pevzner} também demostraram que, após aplicar uma transposição na tripla orientada de $C'$, existe um ciclo orientado $F'$ no grafo, tal que $F'$ tem o mesmo conjunto de vértices de $D'$ ou $E'$. 

	De forma similar ao caso anterior, podemos usar uma sequência $S_1$ (Lema~\ref{cap5:lemma:breaking_good_oriented_cycle}) na tripla orientada de $C'$, tal que $|S_1| \leq 2$ e $\Delta \cgood(\G_o \comp \tau, \G_d, S_1) = 1$, e uma sequência $S_2$ (Lema~\ref{cap5:lemma:breaking_good_oriented_cycle}) no ciclo $F'$, tal que $S_2$ possui no máximo três operações e adiciona dois ciclos bons no grafo.  
\end{proof}

As transposições descritas na prova do Lema~\ref{cap5:lemma:one_bad_cycle_long}, que criam novos ciclos orientados, são similares às transposições dos exemplos das figuras~\ref{cap4:fig:transp_long_non_oriented_2} e \ref{cap4:fig:transp_long_non_oriented_3}, que foram apresentadas no Capítulo~\ref{cap:label:indel}. Com esses lemas, podemos apresentar os algoritmos~\ref{cap5:alg:4_transp_cycle_graph} e \ref{cap5:alg:4_transp_rev_cycle_graph}. Assim como o Algoritmo~\ref{cap5:alg:2_5-approx}, esses algoritmos possuem complexidade de tempo de $O(n^2)$. 

\begin{algorithm}[th]
	\caption{\label{cap5:alg:4_transp_cycle_graph}
		Algoritmo de $4$-aproximação para o problema da Distância de Transposições e Indels Intergênicos em Strings sem Sinais
	}
	\DontPrintSemicolon
	\Entrada{Uma instância intergênica $\Ig = (\G_o, \G_d)$, com $\G_o = (A, \breve{A})$ e $\G_d = (\iota^n, \breve{\iota}^n)$}
	\Saida{Uma sequência de rearranjos $S$ que transforma $\G_o$ em $\G_d$}
	Seja $S \gets \emptyset$\;
	\Enqto{$\G_o \neq \G_d$}{
		\Se{$G(\Ig)$ possui ciclo unitário que é rotulado ou desbalanceado}{
			Seja $S'$ uma sequência de acordo com os lemas~\ref{cap5:lemma:trivial_cycles_one_indel} ou \ref{cap5:lemma:trivial_bad_cycles_two_indels}\;
		}\SenaoSe{$G(\Ig)$ possui ciclo orientado}{
			Seja $S'$ uma sequência de acordo com os lemas~\ref{cap5:lemma:breaking_good_oriented_cycle} ou \ref{cap5:lemma:oriented_cycle_transp}\;
		}\Senao(){
			Seja $S'$ uma sequência de acordo com os lemas~\ref{cap5:lemma:breaking_good_non_oriented_cycle}, \ref{cap5:lemma:breaking_short_cycle}, \ref{cap5:lemma:two_labeled_cycles_sequence}, \ref{cap5:lemma:one_bad_cycle_short}, ou \ref{cap5:lemma:one_bad_cycle_long}\;
		}
		$\G_o \gets \G_o \comp S'$\;
		Adicione as operações de $S'$ na sequência $S$\;
	}
 	{\bf retorne} a sequência $S$\;
\end{algorithm}

\begin{algorithm}[th]
	\caption{\label{cap5:alg:4_transp_rev_cycle_graph}
		Algoritmo de $4$-aproximação para o problema da Distância de Reversões, Transposições e Indels Intergênicos em Strings com Sinais
	}
	\DontPrintSemicolon
	\Entrada{Uma instância intergênica $\Ig = (\G_o, \G_d)$, com $\G_o = (A, \breve{A})$ e $\G_d = (\iota^n, \breve{\iota}^n)$}
	\Saida{Uma sequência de rearranjos $S$ que transforma $\G_o$ em $\G_d$}
	Seja $S \gets \emptyset$\;
	\Enqto{$\G_o \neq \G_d$}{
		\Se{$G(\Ig)$ possui ciclo unitário que é rotulado ou desbalanceado}{
			Seja $S'$ uma sequência de acordo com os lemas~\ref{cap5:lemma:trivial_cycles_one_indel} ou \ref{cap5:lemma:trivial_bad_cycles_two_indels}\;	
		}\SenaoSe{$G(\Ig)$ possui ciclo divergente}{
			Seja $S'$ uma sequência de acordo com os lemas~\ref{cap5:lemma:divergent_labeled_cycle} ou \ref{cap5:lemma:divergent_cycle}\;
		}\SenaoSe{$G(\Ig)$ possui ciclo orientado}{
			Seja $S'$ uma sequência de acordo com os lemas~\ref{cap5:lemma:breaking_good_oriented_cycle} ou \ref{cap5:lemma:oriented_cycle_transp}\;
		}\Senao(){
			Seja $S'$ uma sequência de acordo com os lemas~\ref{cap5:lemma:breaking_good_non_oriented_cycle}, \ref{cap5:lemma:breaking_short_cycle}, \ref{cap5:lemma:two_labeled_cycles_sequence}, \ref{cap5:lemma:one_bad_cycle_short}, ou \ref{cap5:lemma:one_bad_cycle_long}\;
		}
		$\G_o \gets \G_o \comp S'$\;
		Adicione as operações de $S'$ na sequência $S$\;
	}
 	{\bf retorne} a sequência $S$\;
\end{algorithm}

No Teorema~\ref{cap5:theorem:4_approx_transp_cycle_graph}, demonstramos que esses algoritmos possuem fator de aproximação igual a $4$ para os problemas de Distância de Rearranjos Intergênicos considerando os modelos $\Mindel_{\tau}$ e $\Mindel_{\rho,\tau}$.

\begin{theorem}\label{cap5:theorem:4_approx_transp_cycle_graph}
Os algoritmos~\ref{cap5:alg:4_transp_cycle_graph} e \ref{cap5:alg:4_transp_rev_cycle_graph} são algoritmos de $4$-aproximação para o problema da Distância de Transposições e Indels Intergênicos em Strings sem Sinais e o problema da Distância de Reversões, Transposições e Indels Intergênicos em Strings com Sinais, respectivamente.
\end{theorem}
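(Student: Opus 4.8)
The plan is to combine the lower bound of Lemma~\ref{cap5:lemma:grafo_ciclos_intergenicos_lower_bound} with an amortized analysis of the main loop of each algorithm. Concretely, I would show that every sequence $S'$ chosen in an iteration satisfies $\Delta\cgood(\Ig, S') \ge 1$ and $|S'| \le 2\,\Delta\cgood(\Ig, S')$, so that the sequence $S$ finally returned has $|S| \le 2\,(|\pi^A| + 1 - \cgood(\Ig))$. Since $d_{\Mindel_{\tau}}(\Ig) \ge (|\pi^A| + 1 - \cgood(\Ig))/2$ (and likewise for $\Mindel_{\rho,\tau}$) by Lemma~\ref{cap5:lemma:grafo_ciclos_intergenicos_lower_bound}, this gives the approximation factor $4$.

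First I would establish correctness and termination. Because $\Delta\cgood(\Ig, S') \ge 1$ in every branch and $|\pi^A| + 1 - \cgood(\Ig) = 0$ holds if and only if $\G_o = \G_d$, the nonnegative integer $|\pi^A| + 1 - \cgood(\Ig)$ strictly decreases at each iteration, so the loop halts exactly when $\G_o = \G_d$. This requires checking that the case split is exhaustive. If no unitary labeled-or-unbalanced cycle exists and — for Algorithm~\ref{cap5:alg:4_transp_rev_cycle_graph} — no divergent cycle exists, then once the oriented-cycle test fails, every non-unitary cycle of $G(\Ig)$ is convergent and non-oriented; in that situation either all such cycles are good, handled by Lemma~\ref{cap5:lemma:breaking_good_non_oriented_cycle} or Lemma~\ref{cap5:lemma:breaking_short_cycle} according to whether a non-unitary cycle of size larger than $2$ exists, or there is at least one bad non-unitary cycle, handled by Lemma~\ref{cap5:lemma:two_labeled_cycles_sequence} when there are two of them, by Lemma~\ref{cap5:lemma:one_bad_cycle_short} when there is exactly one and it is a $2$-cycle, and by Lemma~\ref{cap5:lemma:one_bad_cycle_long} when there is exactly one and it is longer. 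Recall that for unsigned strings $G(\Ig)$ contains only convergent cycles, so the divergent-cycle branch is vacuous for Algorithm~\ref{cap5:alg:4_transp_cycle_graph}.

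Then I would go through the branches and record the ratio $|S'| / \Delta\cgood(\Ig, S')$: Lemmas~\ref{cap5:lemma:trivial_cycles_one_indel} and~\ref{cap5:lemma:divergent_cycle} give ratio $1$; Lemmas~\ref{cap5:lemma:trivial_bad_cycles_two_indels},~\ref{cap5:lemma:divergent_labeled_cycle},~\ref{cap5:lemma:oriented_cycle_transp}, and~\ref{cap5:lemma:two_labeled_cycles_sequence} give ratio at most $2$; Lemma~\ref{cap5:lemma:breaking_good_oriented_cycle} gives $3/2$; Lemma~\ref{cap5:lemma:breaking_short_cycle} gives $1$; Lemma~\ref{cap5:lemma:breaking_good_non_oriented_cycle} gives $7/4$; and Lemmas~\ref{cap5:lemma:one_bad_cycle_short} and~\ref{cap5:lemma:one_bad_cycle_long} already bound the ratio by $2$. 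Since every iteration therefore spends at most $2$ operations per unit increase of $\cgood(\Ig)$, summing over iterations yields $|S| \le 2\,(|\pi^A| + 1 - \cgood(\Ig))$, which finishes the proof for Algorithm~\ref{cap5:alg:4_transp_cycle_graph}. The argument for Algorithm~\ref{cap5:alg:4_transp_rev_cycle_graph} is identical, additionally invoking the divergent-cycle branch (Lemmas~\ref{cap5:lemma:divergent_labeled_cycle} and~\ref{cap5:lemma:divergent_cycle}) and the lower bound $d_{\Mindel_{\rho,\tau}}(\Ig) \ge (|\pi^A| + 1 - \cgood(\Ig))/2$.

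The main obstacle I expect is not any single computation but the bookkeeping of exhaustiveness in the final \texttt{else} branch: one must verify that the hypotheses of exactly one of Lemmas~\ref{cap5:lemma:breaking_good_non_oriented_cycle},~\ref{cap5:lemma:breaking_short_cycle},~\ref{cap5:lemma:two_labeled_cycles_sequence},~\ref{cap5:lemma:one_bad_cycle_short}, and~\ref{cap5:lemma:one_bad_cycle_long} are satisfied — in particular that the condition ``all other cycles are good and non-oriented'' in Lemmas~\ref{cap5:lemma:one_bad_cycle_short} and~\ref{cap5:lemma:one_bad_cycle_long} is a consequence of the earlier tests having failed — and that the worst ratio taken over all branches is indeed $2$ and not larger.
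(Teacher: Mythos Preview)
Your proposal is correct and follows exactly the paper's approach: the paper's proof simply says it is similar to that of Theorem~\ref{cap5:thr:2_5-approximation} but using the fact that every iteration produces a sequence $S'$ with $|S'|/\Delta\cgood(\Ig,S')\le 2$, and you have spelled out precisely that amortized case analysis together with the lower bound from Lemma~\ref{cap5:lemma:grafo_ciclos_intergenicos_lower_bound}. Your explicit enumeration of the branch ratios and the exhaustiveness check is more detailed than the paper's terse argument, but the substance is the same.
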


\begin{proof}
Similar à prova do Teorema~\ref{cap5:thr:2_5-approximation}, mas usando o fato de que a cada iteração o algoritmo usa uma sequência $S'$ tal que $|S'|/\Delta \cgood(\Ig, S') \leq 2$. 
\end{proof}

\section{Conclusões}

Neste capítulo, estudamos os problemas de Distância de Rearranjos Intergênicos em Genomas Desbalanceados. Consideramos modelos que contém a combinação de {\it indels} com reversões e/ou transposições, para strings com ou sem sinais.

Na Seção~\ref{cap5:section:complexidade}, demonstramos que os seguintes problemas são NP-difíceis: a Distância de Reversões e Indels Intergênicos em Strings sem Sinais; a Distância de Transposições e Indels Intergênicos em Strings sem Sinais; e a Distância de Transposições, Reversões e Indels Intergênicos em Strings com ou sem Sinais.

Na Seção~\ref{cap5:section:breakpoints}, apresentamos algoritmos de aproximação que usam o conceito de {\it breakpoints} intergênicos, considerando strings sem sinais. Já na Seção~\ref{cap5:section:grafo_ciclos}, apresentamos algoritmos de aproximação usando o grafo de ciclos rotulado e ponderado, uma nova estrutura que representa uma instância intergênica de genomas desbalanceados. A Tabela~\ref{cap5:table:resumo_resultados} resume o fator de aproximação alcançado para cada problema estudado neste capítulo.

\begin{table}[tbh]
\centering
\caption{Resumo dos algoritmos apresentados neste capítulo para os problemas de Distância de Rearranjos Intergênicos em Genomas Desbalanceados.\label{cap5:table:resumo_resultados}}
\begin{tabular}{@{}lcc@{}}
\toprule
Modelo & Seção~\ref{cap5:section:breakpoints} & Seção~\ref{cap5:section:grafo_ciclos} \\ \midrule
Reversões e Indels (com sinais) & - & $2.5$-aproximação \\
Reversões e Indels (sem sinais) & $4$-aproximação & - \\
Transposições e Indels (sem sinais) & $4.5$-aproximação & $4$-aproximação \\
Transposições, Reversões e Indels (com sinais) & - & $4$-aproximação \\
Transposições, Reversões e Indels (sem sinais) & $6$-aproximação & - \\
\bottomrule
\end{tabular}
\end{table}

\chapter{Resultados Experimentais}\label{cap:label:experimentos}

Neste capítulo, apresentamos experimentos computacionais com genomas sintéticos e genomas reais, considerando os algoritmos desenvolvidos nos capítulos~\ref{cap:label:indel} e \ref{cap:label:intergenicos}. As implementações de todos os algoritmos deste capítulo estão disponíveis em um repositório público\footnote{\href{https://github.com/compbiogroup}{https://github.com/compbiogroup}}. Este capítulo está dividido da seguinte forma. Na Seção~\ref{cap6:section:instancias}, descrevemos o procedimento de criação das bases de dados de genomas sintéticos. Na Seção~\ref{cap6:section:indels}, apresentamos os resultados dos experimentos considerando genomas sintéticos e os algoritmos do Capítulo~\ref{cap:label:indel}. Já na Seção~\ref{cap6:section:intergenicos}, apresentamos os resultados dos experimentos considerando genomas sintéticos e os algoritmos do Capítulo~\ref{cap:label:intergenicos}. Por fim, na Seção~\ref{cap6:section:genomas_reais}, apresentamos experimentos usando genomas reais de cianobactérias da base de dados Cyanorak 2.1~\cite{2020:garczarek-etal}.

\section{Criação de Instâncias Sintéticas}\label{cap6:section:instancias}

Dados os parâmetros $n$, $k$, $\M$, e $t$, onde $n$ é a quantidade de genes no genoma de destino, $k$ é o número de rearranjos conservativos (reversões ou transposições) e o número de rearranjos não conservativos a serem aplicados (inserções ou deleções), $\M$ é um modelo de rearranjo, e $t$ indica o tipo das strings (\texttt{signed} ou \texttt{unsigned}), criamos uma instância intergênica sintética da seguinte forma:

\begin{enumerate}
	\item Criamos uma lista $\breve{\iota}^n$ de $n+1$ números inteiros escolhidos de forma aleatória, considerando uma distribuição uniforme dos valores no intervalo $[0,100]$;
	\item Criamos a string identidade $\iota^n$, tal que $\iota^n$ é uma string com sinais, se $t = \texttt{signed}$, ou $\iota^n$ é uma string sem sinais, se $t = \texttt{unsigned}$;
	\item Criamos o genoma de destino $\G_d = (\iota^n, \breve{\iota}^n)$;
	\item Inicializamos o genoma de origem $\G_o$ como uma cópia de $\G_d$;
	\item Aplicamos $k$ rearranjos conservativos em $\G_o$. As operações são definidas de acordo com as operações de $\M$:
	\begin{itemize}
		\item $k$ reversões, se $\M$ contém reversões e não contém transposições (i.e., $\M = \Mindel_{\rho}$);
		\item $k$ transposições, se $\M$ contém transposições e não contém reversões (i.e., $\M = \Mindel_{\tau}$);
		\item $k$ reversões ou transposições, se $\M$ contém reversões e transposições (i.e., $\M = \Mindel_{\rho, \tau}$). Nesse caso, cada operação possui $50\%$ de chance de ser uma reversão ou uma transposição.
	\end{itemize}
	\item Aplicamos $k/2$ deleções em $\G_o$;
	\item Por fim, aplicamos $k/2$ inserções em $\G_o$.
\end{enumerate}

O nosso procedimento aplica deleções antes das inserções a fim de evitar que um elemento seja adicionado e, posteriormente, removido no processo de criação da instância. 
Cada rearranjo usado possui parâmetros escolhidos de maneira aleatória, considerando uma distribuição uniforme do conjunto de todos os valores válidos para os parâmetros daquela operação. {\revisaof Esse procedimento usa inserções que adicionam, numa posição escolhida aleatoriamente, um segmento que contém um único caractere, que ainda não pertence a string, e duas regiões intergênicas, sendo que os tamanhos dessas duas regiões intergênicas são escolhidos de forma aleatória, considerando uma distribuição uniforme dos valores no intervalo $[0,100]$}. Na criação de uma instância clássica sintética, seguimos o mesmo processo e desconsideramos as regiões intergênicas. 

Nossa base de dados é dividida em conjuntos agrupados pelos parâmetros $n$, $k$, $\M$, e $t$. Para $n \in \{50, 100, \ldots, 500\}$, $k \in \{n/2, n\}$, $\M \in \{\Mindel_{\rho}, \Mindel_{\tau}, \Mindel_{\rho,\tau}\}$ e $t \in \{\texttt{signed}, \texttt{unsigned}\}$, cada conjunto $DS_{n, k, t}^{\M}$ possui $1000$ instâncias clássicas sintéticas criadas usando os parâmetros $n$, $k$, $\M$ e $t$. Note que quando $\M = \Mindel_{\tau}$, consideramos que $t = \texttt{unsigned}$. Além disso, usamos as mesmas combinações de parâmetros para a criação de cada conjunto $DSI_{n, k, t}^{\M}$ com $1000$ instâncias intergênicas sintéticas.

\section{Experimentos com Instâncias Clássicas}\label{cap6:section:indels}

Os algoritmos gulosos baseados em {\it breakpoints} do Capítulo~\ref{cap:label:indel} possuem complexidade de tempo de $O(n^3)$ ou $O(n^4)$. Por isso, decidimos implementar uma versão desses algoritmos que possui complexidade de tempo de $O(n^2)$. Nessa implementação, ao invés de encontrar a operação $\beta$ com melhor valor de $\Delta \Phi(\I, \beta) + \Delta b_{\M}(\I, \beta)$, apenas encontramos uma operação que garante o fator de aproximação, a cada iteração. Essas operações já são descritas nas provas dos lemas do Capítulo~\ref{cap:label:indel}. Dessa forma, podemos compará-los de forma mais justa com os algoritmos baseados em grafos de ciclos, que também possuem complexidade de tempo de $O(n^2)$. 

Cada tabela desta seção apresenta resultados para um algoritmo do Capítulo~\ref{cap:label:indel} e um valor de $k \in \{n/2, n\}$. 
Cada linha dessas tabelas possui dados referentes a execução do algoritmo usando todas as instâncias do conjunto $DS_{n, k, t}^{\M}$, sendo que o valor de $n$ é descrito na primeira coluna da tabela, o valor de $k$ é descrito no cabeçalho da tabela, e os valores de $\M$ e $t$ são inferidos a partir do algoritmo considerado. {\revisaof A segunda, terceira e quarta colunas apresentam os valores de mínimo, média e máximo, respectivamente, para o tamanho da sequência de rearranjos das soluções encontradas pelo algoritmo. A quinta, sexta e sétima colunas apresentam os valores de mínimo, média e máximo, respectivamente, para o fator de aproximação prático das soluções encontradas pelo algoritmo.} O fator de aproximação prático para uma instância é calculado usando o tamanho da solução encontrada pelo algoritmo para essa instância e o seu limitante inferior, sendo que usamos o limitante inferior que corresponde ao fator de aproximação teórico do algoritmo considerado. Note que cada problema tem um limitante inferior específico. Como esperado, em todas as tabelas, os valores referentes à quantidade de operações usadas na solução aumentam à medida que o valor de $n$ aumenta. 

As tabelas~\ref{cap6:table:classic_b_ur_50} e \ref{cap6:table:classic_b_ur_100} apresentam os resultados do algoritmo de $2$-aproximação para a Distância de Reversões e Indels em Strings sem Sinais (Algoritmo~\ref{cap4:algorithm_reversals}). Esse algoritmo usa o conceito de {\it breakpoints}. Apesar do fato de que as instâncias sintéticas da Tabela~\ref{cap6:table:classic_b_ur_100} foram criadas usando o dobro de operações em relação às instâncias da Tabela~\ref{cap6:table:classic_b_ur_50}, a razão entre o tamanho das soluções do experimento da Tabela~\ref{cap6:table:classic_b_ur_100} e o tamanho das soluções do experimento da Tabela~\ref{cap6:table:classic_b_ur_50} é menor que $1.5$ para todas as medidas usadas. 
Para a Tabela~\ref{cap6:table:classic_b_ur_50}, a média do fator de aproximação prático está entre $1.68$ e $1.81$. Já para a Tabela~\ref{cap6:table:classic_b_ur_100}, os valores para a média do fator de aproximação prático estão entre $1.81$ e $1.94$. Em ambas as tabelas, os valores dessa coluna formam uma sequência não decrescente. Na Tabela~\ref{cap6:table:classic_b_ur_50}, os valores de mínimo e máximo para o fator de aproximação prático foram de $1.40$ e $1.92$, respectivamente, e ocorreram em instâncias com $n = 50$. Na Tabela~\ref{cap6:table:classic_b_ur_100}, os valores de mínimo e máximo para o fator de aproximação prático foram de $1.64$ e $2.00$, respectivamente, e ocorreram em instâncias com $n = 50$.

As tabelas~\ref{cap6:table:classic_b_t_50} e \ref{cap6:table:classic_b_t_100} apresentam os resultados do algoritmo de $3$-aproximação para a Distância de Transposições e Indels em Strings sem Sinais (Algoritmo~\ref{cap4:algorithm_transpositions}), que usa o conceito de {\it breakpoints}. Já as tabelas~\ref{cap6:table:classic_cg_t_50} e \ref{cap6:table:classic_cg_t_100} apresentam os resultados do algoritmo de $2$-aproximação (Algoritmo~\ref{cap4:algorithm:transp_cycle_graph}), que usa o conceito de grafo de ciclos rotulado, para o mesmo problema. Como esperado, o algoritmo de $2$-aproximação possui resultados práticos muito melhores do que o algoritmo de $3$-aproximação. Agora, discutiremos com mais detalhes os resultados para cada algoritmo. 

A razão entre o tamanho das soluções do experimento da Tabela~\ref{cap6:table:classic_b_t_100} e o tamanho das soluções do experimento da Tabela~\ref{cap6:table:classic_b_t_50} é menor que $1.4$ para todas as medidas usadas. Para a Tabela~\ref{cap6:table:classic_b_t_50}, a média do fator de aproximação prático foi menor que $2.8$ para todos os valores de $n$. Já na Tabela~\ref{cap6:table:classic_b_t_100}, a média do fator de aproximação prático foi menor que $2.8$ apenas para $n \in \{{50, 100}\}$. Os valores dessa coluna, em ambas as tabelas, formam uma sequência não decrescente, sendo que os maiores valores foram de $2.79$ e $2.91$ nas tabelas \ref{cap6:table:classic_b_t_50} e \ref{cap6:table:classic_b_t_100}, respectivamente. Para a Tabela~\ref{cap6:table:classic_b_t_100}, os valores de mínimo e máximo para o fator de aproximação prático foram de $2.24$ e $2.96$, respectivamente. Já na Tabela~\ref{cap6:table:classic_b_t_50}, os valores de mínimo e máximo para o fator de aproximação prático foram de $2.00$ e $2.89$, respectivamente.

A razão entre o tamanho das soluções do experimento da Tabela~\ref{cap6:table:classic_cg_t_100} e o tamanho das soluções do experimento da Tabela~\ref{cap6:table:classic_cg_t_50} é menor que $1.28$ para todas as medidas usadas. {\revisaof Em ambas as tabelas, os valores de mínimo, média e máximo para o fator de aproximação prático são consideravelmente menores do que o fator de aproximação teórico.} Todos os valores dessas colunas, em ambas as tabelas, estão entre $1.18$ e $1.33$, o que mostra uma boa estabilidade do algoritmo. Em quase todos os casos, o algoritmo de $3$-aproximação encontra soluções com tamanho maior que o dobro do tamanho das soluções encontradas pelo algoritmo de $2$-aproximação.

As tabelas~\ref{cap6:table:classic_b_urt_50} e \ref{cap6:table:classic_b_urt_100} apresentam os resultados do algoritmo de $3$-aproximação para a Distância de Reversões, Transposições e Indels em Strings sem Sinais (Algoritmo~\ref{cap4:algorithm_reversals_transpositions}), que usa o conceito de {\it breakpoints}. A razão entre o tamanho das soluções do experimento da Tabela~\ref{cap6:table:classic_b_urt_100} e o tamanho das soluções do experimento da Tabela~\ref{cap6:table:classic_b_urt_50} é menor que $1.42$ para todas as medidas usadas. Para a Tabela~\ref{cap6:table:classic_b_urt_50}, {\revisaof os valores da sexta coluna (média do fator de aproximação prático) estão entre $2.53$ e $2.80$}. Já para a Tabela~\ref{cap6:table:classic_b_urt_100}, os valores para a média do fator de aproximação prático estão entre $2.70$ e $2.93$. Em ambas as tabelas, os valores dessa coluna formam uma sequência não decrescente. Na Tabela~\ref{cap6:table:classic_b_urt_50}, os valores de mínimo e máximo para o fator de aproximação prático foram de $2.17$ e $2.94$, respectivamente. Já na Tabela~\ref{cap6:table:classic_b_urt_100}, os valores de mínimo e máximo para o fator de aproximação prático foram de $2.32$ e $2.98$, respectivamente.

Por fim, as tabelas~\ref{cap6:table:classic_cg_srt_50} e \ref{cap6:table:classic_cg_srt_100} apresentam os resultados do algoritmo de $2$-aproximação para a Distância de Transposições, Reversões e Indels em Strings com Sinais (Algoritmo~\ref{cap4:algorithm:transp_rev_cycle_graph}), que usa o conceito de grafo de ciclos rotulado. Assim como os resultados do algoritmo de $2$-aproximação para a Distância de Transposições e Indels, essas tabelas mostram que o fator de aproximação prático é consideravelmente menor que o teórico. A razão entre o tamanho das soluções do experimento da Tabela~\ref{cap6:table:classic_cg_srt_100} e o tamanho das soluções do experimento da Tabela~\ref{cap6:table:classic_cg_srt_50} é menor que $1.43$ para todas as medidas usadas. Para a Tabela~\ref{cap6:table:classic_cg_srt_50}, os valores para a média do fator de aproximação prático estão entre $1.18$ e $1.22$. Para a Tabela~\ref{cap6:table:classic_cg_srt_100}, os valores para a média do fator de aproximação prático estão entre $1.17$ e $1.22$. Na Tabela~\ref{cap6:table:classic_cg_srt_50}, os valores de mínimo e máximo para o fator de aproximação prático foram de $1.16$ e $1.31$, respectivamente. Já na Tabela~\ref{cap6:table:classic_cg_srt_100}, os valores de mínimo e máximo para o fator de aproximação prático foram de $1.15$ e $1.24$, respectivamente. Esses valores mostram que o algoritmo possui boa estabilidade ao considerar o fator de aproximação prático.

\begin{table}[H]
\caption{\label{cap6:table:classic_b_ur_50} Resultados experimentais do algoritmo de $2$-aproximação para a Distância de Reversões e Indels em Strings sem Sinais (Algoritmo~\ref{cap4:algorithm_reversals}), considerando instâncias criadas com $k = \frac{n}{2}$.}
\centering
\begin{tabular}{@{}lrrrrrr@{}}
\toprule
 & \multicolumn{3}{c}{Operações} & \multicolumn{3}{c}{Aproximação} \\ \midrule
Tamanho & Mínimo & Média & Máximo & Mínimo & Média & Máximo \\ \midrule
50 & 34 & 42.62 & 51 & 1.40 & 1.68 & 1.92 \\
100 & 77 & 88.84 & 102 & 1.50 & 1.74 & 1.90 \\
150 & 122 & 134.69 & 147 & 1.60 & 1.76 & 1.89 \\
200 & 166 & 181.83 & 197 & 1.68 & 1.78 & 1.91 \\
250 & 207 & 227.88 & 249 & 1.69 & 1.79 & 1.88 \\
300 & 253 & 275.46 & 294 & 1.69 & 1.80 & 1.88 \\
350 & 296 & 321.41 & 342 & 1.72 & 1.80 & 1.88 \\
400 & 344 & 368.55 & 392 & 1.71 & 1.81 & 1.88 \\
450 & 388 & 415.28 & 451 & 1.74 & 1.81 & 1.88 \\
500 & 433 & 462.67 & 488 & 1.74 & 1.81 & 1.88 \\ \bottomrule
\end{tabular}
\end{table}

\begin{table}[H]
\caption{\label{cap6:table:classic_b_ur_100} Resultados experimentais do algoritmo de $2$-aproximação para a Distância de Reversões e Indels em Strings sem Sinais (Algoritmo~\ref{cap4:algorithm_reversals}), considerando instâncias criadas com $k = n$.}
\centering
\begin{tabular}{@{}lrrrrrr@{}}
\toprule
 & \multicolumn{3}{c}{Operações} & \multicolumn{3}{c}{Aproximação} \\ \midrule
Tamanho & Mínimo & Média & Máximo & Mínimo & Média & Máximo \\ \midrule
50 & 50 & 58.74 & 66 & 1.64 & 1.81 & 2.00 \\
100 & 107 & 120.18 & 130 & 1.75 & 1.88 & 1.97 \\
150 & 169 & 182.69 & 197 & 1.80 & 1.90 & 1.97 \\
200 & 229 & 244.79 & 260 & 1.83 & 1.91 & 1.98 \\
250 & 290 & 307.00 & 326 & 1.86 & 1.92 & 1.97 \\
300 & 350 & 369.37 & 389 & 1.88 & 1.93 & 1.97 \\
350 & 410 & 431.73 & 454 & 1.87 & 1.93 & 1.98 \\
400 & 472 & 494.47 & 517 & 1.89 & 1.94 & 1.98 \\
450 & 537 & 557.16 & 579 & 1.89 & 1.94 & 1.98 \\
500 & 597 & 620.05 & 642 & 1.90 & 1.94 & 1.97 \\ \bottomrule
\end{tabular}
\end{table}

\begin{table}[H]
\caption{\label{cap6:table:classic_b_t_50} Resultados experimentais do algoritmo de $3$-aproximação para a Distância de Transposições e Indels em Strings sem Sinais (Algoritmo~\ref{cap4:algorithm_transpositions}), considerando instâncias criadas com $k = \frac{n}{2}$.}
\centering
\begin{tabular}{@{}lrrrrrr@{}}
\toprule
 & \multicolumn{3}{c}{Operações} & \multicolumn{3}{c}{Aproximação} \\ \midrule
Tamanho & Mínimo & Média & Máximo & Mínimo & Média & Máximo \\ \midrule
50 & 34 & 43.08 & 53 & 2.00 & 2.39 & 2.79 \\
100 & 80 & 93.13 & 107 & 2.31 & 2.57 & 2.81 \\
150 & 126 & 143.17 & 158 & 2.44 & 2.66 & 2.85 \\
200 & 175 & 194.60 & 213 & 2.52 & 2.70 & 2.84 \\
250 & 225 & 245.86 & 266 & 2.56 & 2.73 & 2.87 \\
300 & 277 & 297.38 & 322 & 2.63 & 2.75 & 2.85 \\
350 & 322 & 348.58 & 373 & 2.64 & 2.76 & 2.89 \\
400 & 379 & 400.88 & 425 & 2.67 & 2.77 & 2.89 \\
450 & 424 & 451.99 & 480 & 2.69 & 2.79 & 2.87 \\
500 & 472 & 504.65 & 532 & 2.70 & 2.79 & 2.89 \\ \bottomrule
\end{tabular}
\end{table}

\begin{table}[H]
\caption{\label{cap6:table:classic_b_t_100} Resultados experimentais do algoritmo de $3$-aproximação para a Distância de Transposições e Indels em Strings sem Sinais (Algoritmo~\ref{cap4:algorithm_transpositions}), considerando instâncias criadas com $k = n$.}
\centering
\begin{tabular}{@{}lrrrrrr@{}}
\toprule
 & \multicolumn{3}{c}{Operações} & \multicolumn{3}{c}{Aproximação} \\ \midrule
Tamanho & Mínimo & Média & Máximo & Mínimo & Média & Máximo \\ \midrule
50 & 47 & 56.96 & 67 & 2.24 & 2.58 & 2.86 \\
100 & 108 & 119.61 & 130 & 2.50 & 2.75 & 2.91 \\
150 & 171 & 182.95 & 197 & 2.63 & 2.81 & 2.93 \\
200 & 232 & 246.11 & 262 & 2.68 & 2.84 & 2.93 \\
250 & 289 & 309.89 & 328 & 2.75 & 2.86 & 2.94 \\
300 & 354 & 373.76 & 389 & 2.78 & 2.88 & 2.95 \\
350 & 418 & 437.47 & 456 & 2.79 & 2.89 & 2.95 \\
400 & 481 & 501.26 & 520 & 2.83 & 2.90 & 2.96 \\
450 & 545 & 565.42 & 585 & 2.84 & 2.91 & 2.96 \\
500 & 609 & 629.45 & 649 & 2.86 & 2.91 & 2.96 \\ \bottomrule
\end{tabular}
\end{table}

\begin{table}[H]
\caption{\label{cap6:table:classic_cg_t_50} Resultados experimentais do algoritmo de $2$-aproximação para a Distância de Transposições e Indels em Strings sem Sinais (Algoritmo~\ref{cap4:algorithm:transp_cycle_graph}), considerando instâncias criadas com $k = \frac{n}{2}$.}
\centering
\begin{tabular}{@{}lrrrrrr@{}}
\toprule
 & \multicolumn{3}{c}{Operações} & \multicolumn{3}{c}{Aproximação} \\ \midrule
Tamanho & Mínimo & Média & Máximo & Mínimo & Média & Máximo \\ \midrule
50 & 22 & 23.66 & 25 & 1.19 & 1.23 & 1.28 \\
100 & 47 & 49.19 & 52 & 1.24 & 1.26 & 1.27 \\
150 & 69 & 71.53 & 75 & 1.21 & 1.22 & 1.23 \\
200 & 87 & 90.08 & 94 & 1.19 & 1.21 & 1.23 \\
250 & 116 & 119.30 & 124 & 1.19 & 1.21 & 1.22 \\
300 & 139 & 143.85 & 149 & 1.21 & 1.23 & 1.24 \\
350 & 163 & 167.17 & 171 & 1.20 & 1.21 & 1.21 \\
400 & 196 & 202.08 & 207 & 1.22 & 1.24 & 1.25 \\
450 & 207 & 212.93 & 217 & 1.18 & 1.18 & 1.19 \\
500 & 233 & 239.49 & 246 & 1.21 & 1.21 & 1.22 \\ \bottomrule
\end{tabular}
\end{table}

\begin{table}[H]
\caption{\label{cap6:table:classic_cg_t_100} Resultados experimentais do algoritmo de $2$-aproximação para a Distância de Transposições e Indels em Strings sem Sinais (Algoritmo~\ref{cap4:algorithm:transp_cycle_graph}), considerando instâncias criadas com $k = n$.}
\centering
\begin{tabular}{@{}lrrrrrr@{}}
\toprule
 & \multicolumn{3}{c}{Operações} & \multicolumn{3}{c}{Aproximação} \\ \midrule
Tamanho & Mínimo & Média & Máximo & Mínimo & Média & Máximo \\ \midrule
50 & 28 & 29.97 & 32 & 1.28 & 1.31 & 1.33 \\
100 & 53 & 56.87 & 61 & 1.25 & 1.28 & 1.32 \\
150 & 78 & 81.45 & 85 & 1.25 & 1.27 & 1.30 \\
200 & 104 & 109.31 & 114 & 1.26 & 1.27 & 1.29 \\
250 & 123 & 128.52 & 133 & 1.21 & 1.22 & 1.24 \\
300 & 161 & 166.38 & 172 & 1.26 & 1.27 & 1.29 \\
350 & 178 & 185.29 & 192 & 1.22 & 1.23 & 1.24 \\
400 & 214 & 219.87 & 226 & 1.23 & 1.24 & 1.25 \\
450 & 234 & 241.21 & 248 & 1.22 & 1.23 & 1.24 \\
500 & 267 & 273.73 & 282 & 1.24 & 1.25 & 1.26 \\ \bottomrule
\end{tabular}
\end{table}

\begin{table}[H]
\caption{\label{cap6:table:classic_b_urt_50} Resultados experimentais do algoritmo de $3$-aproximação para a Distância de Reversões, Transposições e Indels em Strings sem Sinais (Algoritmo~\ref{cap4:algorithm_reversals_transpositions}), considerando instâncias criadas com $k = \frac{n}{2}$.}
\centering
\begin{tabular}{@{}lrrrrrr@{}}
\toprule
 & \multicolumn{3}{c}{Operações} & \multicolumn{3}{c}{Aproximação} \\ \midrule
Tamanho & Mínimo & Média & Máximo & Mínimo & Média & Máximo \\ \midrule
50 & 36 & 44.63 & 55 & 2.17 & 2.53 & 2.94 \\
100 & 82 & 93.74 & 106 & 2.40 & 2.66 & 2.89 \\
150 & 127 & 142.47 & 158 & 2.51 & 2.71 & 2.90 \\
200 & 172 & 192.76 & 212 & 2.55 & 2.74 & 2.92 \\
250 & 218 & 241.93 & 262 & 2.61 & 2.76 & 2.88 \\
300 & 271 & 293.10 & 315 & 2.61 & 2.77 & 2.88 \\
350 & 317 & 342.29 & 363 & 2.67 & 2.78 & 2.89 \\
400 & 365 & 393.17 & 419 & 2.67 & 2.79 & 2.89 \\
450 & 418 & 441.94 & 466 & 2.68 & 2.79 & 2.89 \\
500 & 466 & 492.81 & 519 & 2.71 & 2.80 & 2.90 \\ \bottomrule
\end{tabular}
\end{table}

\begin{table}[H]
\caption{\label{cap6:table:classic_b_urt_100} Resultados experimentais do algoritmo de $3$-aproximação para a Distância de Reversões, Transposições e Indels em Strings sem Sinais (Algoritmo~\ref{cap4:algorithm_reversals_transpositions}), considerando instâncias criadas com $k = n$.}
\centering
\begin{tabular}{@{}lrrrrrr@{}}
\toprule
 & \multicolumn{3}{c}{Operações} & \multicolumn{3}{c}{Aproximação} \\ \midrule
Tamanho & Mínimo & Média & Máximo & Mínimo & Média & Máximo \\ \midrule
50 & 51 & 58.96 & 68 & 2.32 & 2.70 & 2.95 \\
100 & 109 & 121.74 & 134 & 2.65 & 2.81 & 2.95 \\
150 & 169 & 185.06 & 198 & 2.73 & 2.86 & 2.96 \\
200 & 229 & 248.19 & 260 & 2.76 & 2.88 & 2.97 \\
250 & 295 & 311.71 & 326 & 2.80 & 2.90 & 2.97 \\
300 & 354 & 375.19 & 389 & 2.82 & 2.91 & 2.96 \\
350 & 418 & 438.20 & 458 & 2.84 & 2.91 & 2.98 \\
400 & 484 & 502.15 & 521 & 2.86 & 2.92 & 2.97 \\
450 & 536 & 565.53 & 588 & 2.86 & 2.92 & 2.97 \\
500 & 606 & 629.18 & 648 & 2.88 & 2.93 & 2.98 \\ \bottomrule
\end{tabular}
\end{table}

\begin{table}[H]
\caption{\label{cap6:table:classic_cg_srt_50} Resultados experimentais do algoritmo de $2$-aproximação para a Distância de Transposições, Reversões e Indels em Strings com Sinais (Algoritmo~\ref{cap4:algorithm:transp_rev_cycle_graph}), considerando instâncias criadas com $k = \frac{n}{2}$.}
\centering
\begin{tabular}{@{}lrrrrrr@{}}
\toprule
 & \multicolumn{3}{c}{Operações} & \multicolumn{3}{c}{Aproximação} \\ \midrule
Tamanho & Mínimo & Média & Máximo & Mínimo & Média & Máximo \\ \midrule
50 & 17 & 19.33 & 22 & 1.17 & 1.22 & 1.31 \\
100 & 45 & 48.37 & 52 & 1.16 & 1.18 & 1.22 \\
150 & 65 & 69.34 & 74 & 1.20 & 1.22 & 1.25 \\
200 & 90 & 95.24 & 100 & 1.19 & 1.21 & 1.23 \\
250 & 99 & 105.86 & 112 & 1.17 & 1.18 & 1.19 \\
300 & 128 & 134.78 & 140 & 1.19 & 1.20 & 1.21 \\
350 & 161 & 168.54 & 176 & 1.21 & 1.22 & 1.23 \\
400 & 183 & 191.17 & 199 & 1.21 & 1.22 & 1.24 \\
450 & 194 & 202.28 & 211 & 1.18 & 1.19 & 1.20 \\
500 & 230 & 238.14 & 246 & 1.19 & 1.20 & 1.20 \\ \bottomrule
\end{tabular}
\end{table}

\begin{table}[H]
\caption{\label{cap6:table:classic_cg_srt_100} Resultados experimentais do algoritmo de $2$-aproximação para a Distância de Transposições, Reversões e Indels em Strings com Sinais (Algoritmo~\ref{cap4:algorithm:transp_rev_cycle_graph}), considerando instâncias criadas com $k = n$.}
\centering
\begin{tabular}{@{}lrrrrrr@{}}
\toprule
 & \multicolumn{3}{c}{Operações} & \multicolumn{3}{c}{Aproximação} \\ \midrule
Tamanho & Mínimo & Média & Máximo & Mínimo & Média & Máximo \\ \midrule
50 & 23 & 27.59 & 30 & 1.15 & 1.17 & 1.21 \\
100 & 51 & 55.41 & 60 & 1.16 & 1.18 & 1.21 \\
150 & 82 & 88.08 & 94 & 1.21 & 1.22 & 1.24 \\
200 & 107 & 112.87 & 120 & 1.17 & 1.18 & 1.19 \\
250 & 132 & 139.17 & 146 & 1.18 & 1.19 & 1.20 \\
300 & 159 & 165.79 & 173 & 1.18 & 1.20 & 1.21 \\
350 & 188 & 195.62 & 205 & 1.18 & 1.19 & 1.21 \\
400 & 210 & 219.26 & 227 & 1.18 & 1.19 & 1.20 \\
450 & 236 & 245.37 & 255 & 1.18 & 1.19 & 1.20 \\
500 & 260 & 269.07 & 279 & 1.16 & 1.17 & 1.18 \\ \bottomrule
\end{tabular}
\end{table}

\section{Experimentos com Instâncias Intergênicas}\label{cap6:section:intergenicos}

As tabelas apresentadas nesta seção seguem um padrão similar às tabelas da seção anterior, mas apresentam resultados para os algoritmos do Capítulo~\ref{cap:label:intergenicos} e utilizam os conjuntos de instâncias intergênicas $DSI_{n, k, t}^{\M}$. Como esperado, em todas as tabelas, os valores referentes à quantidade de operações usadas na solução aumentam à medida que o valor de $n$ aumenta.

As tabelas~\ref{cap6:table:inter_b_ur_50} e \ref{cap6:table:inter_b_ur_100} apresentam os resultados do algoritmo de $4$-aproximação para o problema da Distância de Reversões e Indels Intergênicos em Strings sem Sinais (Algoritmo~\ref{cap5:algorithm_reversals_breakpoints}), que usa o conceito de {\it breakpoints} intergênicos. 
Apesar do fato de que as instâncias sintéticas da Tabela~\ref{cap6:table:inter_b_ur_100} foram criadas usando o dobro de operações em relação às instâncias da Tabela~\ref{cap6:table:inter_b_ur_50}, a razão entre o tamanho das soluções do experimento da Tabela~\ref{cap6:table:inter_b_ur_100} e o tamanho das soluções do experimento da Tabela~\ref{cap6:table:inter_b_ur_50} é menor que $1.36$ para todas as medidas usadas. 
Para a Tabela~\ref{cap6:table:inter_b_ur_50}, a média do fator de aproximação prático está entre $2.00$ e $2.03$. Já para a Tabela~\ref{cap6:table:inter_b_ur_100}, os valores dessa coluna estão entre $2.01$ e $2.02$. Na Tabela~\ref{cap6:table:inter_b_ur_50}, os valores de mínimo e máximo para o fator de aproximação prático foram de $1.85$ e $2.20$, respectivamente. Já na Tabela~\ref{cap6:table:inter_b_ur_100}, os valores de mínimo e máximo para o fator de aproximação prático foram de $1.94$ e $2.13$, respectivamente. Essas colunas mostram que o fator de aproximação prático é um pouco maior que $2$ no pior cenário, enquanto o fator de aproximação teórico é $4$.

As tabelas~\ref{cap6:table:inter_b_urt_50} e \ref{cap6:table:inter_b_urt_100} apresentam os resultados do algoritmo de $6$-aproximação para o problema da Distância de Reversões, Transposições e Indels Intergênicos em Strings sem Sinais (Algoritmo~\ref{cap5:algorithm_reversals_breakpoints}), que usa o conceito de {\it breakpoints} intergênicos. Como o Algoritmo~\ref{cap5:algorithm_reversals_breakpoints} usa apenas reversões e {\it indels}, implementamos uma modificação desse algoritmo que aplica, sempre que possível, uma transposição de acordo com o Lema~\ref{cap5:lemma:breakpoints_transpositions}, mantendo o fator de aproximação igual a $6$. {\revisaof Apesar do fator de aproximação permanecer o mesmo}, essa modificação garante que o algoritmo use menos operações quando o genoma de origem possui apenas {\it strips} crescentes.

A razão entre o tamanho das soluções do experimento da Tabela~\ref{cap6:table:inter_b_urt_100} e o tamanho das soluções do experimento da Tabela~\ref{cap6:table:inter_b_urt_50} é menor que $1.35$ para todas as medidas usadas. Para a Tabela~\ref{cap6:table:inter_b_urt_50}, a média do fator de aproximação prático está entre $2.99$ e $3.06$. Já para a Tabela~\ref{cap6:table:inter_b_urt_100}, os valores dessa coluna estão entre $2.98$ e $3.03$. Na Tabela~\ref{cap6:table:inter_b_urt_50}, os valores de mínimo e máximo para o fator de aproximação prático foram de $2.81$ e $3.25$, respectivamente. Já na Tabela~\ref{cap6:table:inter_b_urt_100}, os valores de mínimo e máximo para o fator de aproximação prático foram de $2.82$ e $3.19$, respectivamente. Essas colunas mostram que o fator de aproximação prático é um pouco maior que $3$ no pior cenário, enquanto o fator de aproximação teórico é $6$.

As tabelas~\ref{cap6:table:inter_b_t_50} e \ref{cap6:table:inter_b_t_100} apresentam os resultados do algoritmo de $4.5$-aproximação para o problema da Distância de Transposições e Indels Intergênicos em Strings sem Sinais (Algoritmo~\ref{cap5:algorithm_transpositions_breakpoints}), que usa {\it breakpoints} intergênicos, enquanto as tabelas~\ref{cap6:table:inter_cg_t_50} e \ref{cap6:table:inter_cg_t_100} apresentam os resultados do algoritmo de $4$-aproximação para o mesmo problema (Algoritmo~\ref{cap5:alg:4_transp_cycle_graph}), que usa o conceito de grafo de ciclos rotulado e ponderado. Essas tabelas mostram que o algoritmo de $4$-aproximação, que usa grafo de ciclos rotulado e ponderado, possui resultados práticos melhores que o algoritmo de $4.5$-aproximação, que usa {\it breakpoints} intergênicos. Agora, discutiremos com mais detalhes os resultados para cada algoritmo.

A razão entre o tamanho das soluções do experimento da Tabela~\ref{cap6:table:inter_b_t_100} e o tamanho das soluções do experimento da Tabela~\ref{cap6:table:inter_b_t_50} é menor que $1.29$ para todas as medidas usadas. Para a Tabela~\ref{cap6:table:inter_b_t_50}, a média do fator de aproximação prático está entre $3.18$ e $3.35$. Já para a Tabela~\ref{cap6:table:inter_b_t_100}, os valores dessa coluna estão entre $3.13$ e $3.25$. Na Tabela~\ref{cap6:table:inter_b_t_50}, os valores de mínimo e máximo para o fator de aproximação prático foram de $2.88$ e $3.50$, respectivamente. Já na Tabela~\ref{cap6:table:inter_b_t_100}, os valores de mínimo e máximo para o fator de aproximação prático foram de $2.91$ e $3.38$, respectivamente. Essas colunas mostram que o fator de aproximação prático foi sempre menor que $3.4$, enquanto o fator de aproximação teórico é $4.5$.

A razão entre o tamanho das soluções do experimento da Tabela~\ref{cap6:table:inter_cg_t_100} e o tamanho das soluções do experimento da Tabela~\ref{cap6:table:inter_cg_t_50} é menor que $1.22$ para todas as medidas usadas. Para a Tabela~\ref{cap6:table:inter_cg_t_50}, a média do fator de aproximação prático está entre $2.31$ e $2.38$. Já para a Tabela~\ref{cap6:table:inter_cg_t_100}, os valores dessa coluna estão entre $2.66$ e $2.72$. Na Tabela~\ref{cap6:table:inter_cg_t_50}, os valores de mínimo e máximo para o fator de aproximação prático foram de $2.00$ e $2.69$, respectivamente. Já na Tabela~\ref{cap6:table:inter_cg_t_100}, os valores de mínimo e máximo para o fator de aproximação prático foram de $2.26$ e $3.09$, respectivamente. Essas colunas mostram que o fator de aproximação prático foi sempre menor que $3.1$, enquanto o fator de aproximação teórico é $4$.

As tabelas~\ref{cap6:table:inter_cg_sr_50} e \ref{cap6:table:inter_cg_sr_100} apresentam os resultados do algoritmo de $2.5$-aproximação para o problema da Distância de Reversões e Indels Intergênicos em Strings com Sinais (Algoritmo~\ref{cap5:alg:2_5-approx}), que usa o conceito de grafo de ciclos rotulado e ponderado. 
A razão entre o tamanho das soluções do experimento da Tabela~\ref{cap6:table:inter_cg_sr_100} e o tamanho das soluções do experimento da Tabela~\ref{cap6:table:inter_cg_sr_50} é menor que $1.46$ para todas as medidas usadas. Para a Tabela~\ref{cap6:table:inter_cg_sr_50}, a média do fator de aproximação prático está entre $1.28$ e $1.30$. Já para a Tabela~\ref{cap6:table:inter_cg_sr_100}, os valores dessa coluna estão entre $1.42$ e $1.45$. Na Tabela~\ref{cap6:table:inter_cg_sr_50}, os valores de mínimo e máximo para o fator de aproximação prático foram de $1.03$ e $1.60$, respectivamente. Já na Tabela~\ref{cap6:table:inter_cg_sr_100}, os valores de mínimo e máximo para o fator de aproximação prático foram de $1.19$ e $1.71$, respectivamente. Essas colunas mostram que o fator de aproximação prático é sempre menor que $1.8$, enquanto o fator de aproximação teórico é $2.5$.

Por fim, as tabelas~\ref{cap6:table:inter_cg_srt_50} e \ref{cap6:table:inter_cg_srt_100} apresentam os resultados do algoritmo de $4$-aproximação para o problema da Distância de Reversões, Transposições e Indels Intergênicos em Strings com Sinais (Algoritmo~\ref{cap5:alg:4_transp_rev_cycle_graph}), que usa o conceito de grafo de ciclos ponderado e rotulado. 
A razão entre o tamanho das soluções do experimento da Tabela~\ref{cap6:table:inter_cg_srt_100} e o tamanho das soluções do experimento da Tabela~\ref{cap6:table:inter_cg_srt_50} é menor que $1.38$ para todas as medidas usadas. Para a Tabela~\ref{cap6:table:inter_cg_srt_50}, a média do fator de aproximação prático está entre $2.63$ e $2.73$. Já para a Tabela~\ref{cap6:table:inter_cg_srt_100}, os valores dessa coluna estão entre $2.84$ e $2.92$. Na Tabela~\ref{cap6:table:inter_cg_srt_50}, os valores de mínimo e máximo para o fator de aproximação prático foram de $2.11$ e $3.27$, respectivamente. Já na Tabela~\ref{cap6:table:inter_cg_srt_100}, os valores de mínimo e máximo para o fator de aproximação prático foram de $2.27$ e $3.35$, respectivamente. Essas colunas mostram que o fator de aproximação prático é sempre menor ou igual a $3.4$, enquanto o fator de aproximação teórico é $4$.

Com esses experimentos, podemos concluir que os algoritmos que usam grafo de ciclos são melhores tanto no fator de aproximação teórico quanto no fator de aproximação prático.

\begin{table}[H]
\caption{\label{cap6:table:inter_b_ur_50} Resultados experimentais do algoritmo de $4$-aproximação para o problema da Distância de Reversões e Indels Intergênicos em Strings sem Sinais (Algoritmo~\ref{cap5:algorithm_reversals_breakpoints}), considerando instâncias criadas com $k = \frac{n}{2}$.}
\centering
\begin{tabular}{@{}lrrrrrr@{}}
\toprule
 & \multicolumn{3}{c}{Operações} & \multicolumn{3}{c}{Aproximação} \\ \midrule
Tamanho & Mínimo & Média & Máximo & Mínimo & Média & Máximo \\ \midrule
50 & 42 & 51.41 & 61 & 1.85 & 2.00 & 2.20 \\
100 & 91 & 103.47 & 116 & 1.89 & 2.02 & 2.14 \\
150 & 141 & 154.62 & 168 & 1.93 & 2.02 & 2.11 \\
200 & 191 & 206.93 & 225 & 1.95 & 2.02 & 2.10 \\
250 & 238 & 257.85 & 274 & 1.96 & 2.02 & 2.10 \\
300 & 290 & 310.60 & 334 & 1.97 & 2.02 & 2.08 \\
350 & 329 & 361.58 & 386 & 1.97 & 2.02 & 2.09 \\
400 & 388 & 413.69 & 435 & 1.98 & 2.02 & 2.09 \\
450 & 441 & 465.22 & 492 & 1.97 & 2.02 & 2.08 \\
500 & 488 & 517.61 & 545 & 1.98 & 2.03 & 2.08 \\ \bottomrule
\end{tabular}
\end{table}

\begin{table}[H]
\caption{\label{cap6:table:inter_b_ur_100} Resultados experimentais do algoritmo de $4$-aproximação para o problema da Distância de Reversões e Indels Intergênicos em Strings sem Sinais (Algoritmo~\ref{cap5:algorithm_reversals_breakpoints}), considerando instâncias criadas com $k = n$.}
\centering
\begin{tabular}{@{}lrrrrrr@{}}
\toprule
 & \multicolumn{3}{c}{Operações} & \multicolumn{3}{c}{Aproximação} \\ \midrule
Tamanho & Mínimo & Média & Máximo & Mínimo & Média & Máximo \\ \midrule
50 & 57 & 65.45 & 72 & 1.94 & 2.01 & 2.13 \\
100 & 118 & 129.47 & 138 & 1.97 & 2.01 & 2.08 \\
150 & 183 & 194.14 & 206 & 1.98 & 2.02 & 2.09 \\
200 & 245 & 258.71 & 273 & 1.98 & 2.02 & 2.07 \\
250 & 306 & 322.95 & 343 & 1.99 & 2.02 & 2.06 \\
300 & 367 & 387.27 & 403 & 1.99 & 2.02 & 2.06 \\
350 & 435 & 451.84 & 473 & 1.99 & 2.02 & 2.05 \\
400 & 496 & 516.31 & 534 & 2.00 & 2.02 & 2.06 \\
450 & 561 & 580.83 & 601 & 2.00 & 2.02 & 2.05 \\
500 & 622 & 645.82 & 666 & 2.00 & 2.02 & 2.05 \\ \bottomrule
\end{tabular}
\end{table}

\begin{table}[H]
\caption{\label{cap6:table:inter_b_urt_50} Resultados experimentais do algoritmo de $6$-aproximação para o problema da Distância de Reversões, Transposições e Indels Intergênicos em Strings sem Sinais (Algoritmo~\ref{cap5:algorithm_reversals_breakpoints}), considerando instâncias criadas com $k = \frac{n}{2}$.}
\centering
\begin{tabular}{@{}lrrrrrr@{}}
\toprule
 & \multicolumn{3}{c}{Operações} & \multicolumn{3}{c}{Aproximação} \\ \midrule
Tamanho & Mínimo & Média & Máximo & Mínimo & Média & Máximo \\ \midrule
50 & 43 & 53.32 & 63 & 2.81 & 2.99 & 3.25 \\
100 & 96 & 107.25 & 121 & 2.86 & 3.03 & 3.18 \\
150 & 148 & 160.38 & 176 & 2.90 & 3.04 & 3.16 \\
200 & 194 & 215.15 & 235 & 2.94 & 3.04 & 3.17 \\
250 & 246 & 268.05 & 291 & 2.97 & 3.05 & 3.14 \\
300 & 300 & 323.33 & 343 & 2.98 & 3.05 & 3.16 \\
350 & 345 & 376.30 & 399 & 2.98 & 3.05 & 3.13 \\
400 & 408 & 431.22 & 455 & 2.99 & 3.05 & 3.12 \\
450 & 459 & 483.79 & 509 & 2.99 & 3.05 & 3.12 \\
500 & 507 & 538.87 & 570 & 2.99 & 3.06 & 3.13 \\ \bottomrule
\end{tabular}
\end{table}

\begin{table}[H]
\caption{\label{cap6:table:inter_b_urt_100} Resultados experimentais do algoritmo de $6$-aproximação para o problema da Distância de Reversões, Transposições e Indels Intergênicos em Strings sem Sinais (Algoritmo~\ref{cap5:algorithm_reversals_breakpoints}), considerando instâncias criadas com $k = n$.}
\centering
\begin{tabular}{@{}lrrrrrr@{}}
\toprule
 & \multicolumn{3}{c}{Operações} & \multicolumn{3}{c}{Aproximação} \\ \midrule
Tamanho & Mínimo & Média & Máximo & Mínimo & Média & Máximo \\ \midrule
50 & 58 & 65.72 & 71 & 2.82 & 2.98 & 3.19 \\
100 & 120 & 130.73 & 140 & 2.95 & 3.01 & 3.14 \\
150 & 184 & 195.89 & 208 & 2.97 & 3.02 & 3.11 \\
200 & 247 & 260.96 & 273 & 2.97 & 3.02 & 3.13 \\
250 & 312 & 325.98 & 341 & 2.98 & 3.02 & 3.13 \\
300 & 372 & 391.13 & 406 & 2.98 & 3.03 & 3.08 \\
350 & 439 & 455.98 & 475 & 2.99 & 3.03 & 3.08 \\
400 & 501 & 521.34 & 539 & 2.99 & 3.03 & 3.10 \\
450 & 561 & 586.37 & 603 & 3.00 & 3.03 & 3.08 \\
500 & 627 & 651.29 & 672 & 2.99 & 3.03 & 3.07 \\ \bottomrule
\end{tabular}
\end{table}

\begin{table}[H]
\caption{\label{cap6:table:inter_b_t_50} Resultados experimentais do algoritmo de $4.5$-aproximação para o problema da Distância de Transposições e Indels Intergênicos em Strings sem Sinais (Algoritmo~\ref{cap5:algorithm_transpositions_breakpoints}), considerando instâncias criadas com $k = \frac{n}{2}$.}
\centering
\begin{tabular}{@{}lrrrrrr@{}}
\toprule
 & \multicolumn{3}{c}{Operações} & \multicolumn{3}{c}{Aproximação} \\ \midrule
Tamanho & Mínimo & Média & Máximo & Mínimo & Média & Máximo \\ \midrule
50 & 46 & 57.95 & 67 & 2.88 & 3.18 & 3.50 \\
100 & 102 & 118.29 & 133 & 3.06 & 3.26 & 3.47 \\
150 & 160 & 178.31 & 196 & 3.11 & 3.30 & 3.46 \\
200 & 220 & 239.26 & 259 & 3.17 & 3.31 & 3.44 \\
250 & 280 & 300.18 & 326 & 3.20 & 3.33 & 3.48 \\
300 & 332 & 361.69 & 391 & 3.23 & 3.33 & 3.46 \\
350 & 395 & 422.07 & 446 & 3.22 & 3.34 & 3.45 \\
400 & 456 & 483.99 & 507 & 3.26 & 3.35 & 3.43 \\
450 & 516 & 544.27 & 569 & 3.26 & 3.35 & 3.46 \\
500 & 570 & 606.44 & 637 & 3.26 & 3.35 & 3.45 \\ \bottomrule
\end{tabular}
\end{table}

\begin{table}[H]
\caption{\label{cap6:table:inter_b_t_100} Resultados experimentais do algoritmo de $4.5$-aproximação para o problema da Distância de Transposições e Indels Intergênicos em Strings sem Sinais (Algoritmo~\ref{cap5:algorithm_transpositions_breakpoints}), considerando instâncias criadas com $k = n$.}
\centering
\begin{tabular}{@{}lrrrrrr@{}}
\toprule
 & \multicolumn{3}{c}{Operações} & \multicolumn{3}{c}{Aproximação} \\ \midrule
Tamanho & Mínimo & Média & Máximo & Mínimo & Média & Máximo \\ \midrule
50 & 59 & 69.30 & 77 & 2.91 & 3.13 & 3.38 \\
100 & 129 & 139.10 & 151 & 3.00 & 3.19 & 3.37 \\
150 & 196 & 209.36 & 220 & 3.07 & 3.21 & 3.36 \\
200 & 266 & 279.44 & 293 & 3.10 & 3.22 & 3.38 \\
250 & 333 & 349.88 & 367 & 3.11 & 3.23 & 3.33 \\
300 & 401 & 420.03 & 436 & 3.14 & 3.23 & 3.32 \\
350 & 470 & 490.50 & 509 & 3.15 & 3.24 & 3.32 \\
400 & 543 & 560.49 & 581 & 3.17 & 3.24 & 3.32 \\
450 & 607 & 631.33 & 655 & 3.16 & 3.24 & 3.33 \\
500 & 680 & 701.51 & 728 & 3.17 & 3.25 & 3.32 \\ \bottomrule
\end{tabular}
\end{table}

\begin{table}[H]
\caption{\label{cap6:table:inter_cg_t_50} Resultados experimentais do algoritmo de $4$-aproximação para o problema da Distância de Transposições e Indels Intergênicos em Strings sem Sinais (Algoritmo~\ref{cap5:alg:4_transp_cycle_graph}), considerando instâncias criadas com $k = \frac{n}{2}$.}
\centering
\begin{tabular}{@{}lrrrrrr@{}}
\toprule
 & \multicolumn{3}{c}{Operações} & \multicolumn{3}{c}{Aproximação} \\ \midrule
Tamanho & Mínimo & Média & Máximo & Mínimo & Média & Máximo \\ \midrule
50 & 32 & 40.23 & 48 & 2.00 & 2.31 & 2.69 \\
100 & 70 & 81.44 & 94 & 2.11 & 2.36 & 2.69 \\
150 & 107 & 122.32 & 140 & 2.17 & 2.36 & 2.59 \\
200 & 148 & 163.50 & 180 & 2.16 & 2.37 & 2.55 \\
250 & 187 & 205.35 & 226 & 2.22 & 2.37 & 2.55 \\
300 & 226 & 246.71 & 269 & 2.24 & 2.37 & 2.55 \\
350 & 264 & 287.71 & 318 & 2.23 & 2.37 & 2.55 \\
400 & 305 & 330.16 & 352 & 2.26 & 2.38 & 2.51 \\
450 & 347 & 370.77 & 395 & 2.23 & 2.38 & 2.51 \\
500 & 386 & 412.84 & 440 & 2.24 & 2.38 & 2.51 \\ \bottomrule
\end{tabular}
\end{table}

\begin{table}[H]
\caption{\label{cap6:table:inter_cg_t_100} Resultados experimentais do algoritmo de $4$-aproximação para o problema da Distância de Transposições e Indels Intergênicos em Strings sem Sinais (Algoritmo~\ref{cap5:alg:4_transp_cycle_graph}), considerando instâncias criadas com $k = n$.}
\centering
\begin{tabular}{@{}lrrrrrr@{}}
\toprule
 & \multicolumn{3}{c}{Operações} & \multicolumn{3}{c}{Aproximação} \\ \midrule
Tamanho & Mínimo & Média & Máximo & Mínimo & Média & Máximo \\ \midrule
50 & 39 & 46.98 & 54 & 2.26 & 2.66 & 3.09 \\
100 & 84 & 93.74 & 102 & 2.41 & 2.70 & 2.98 \\
150 & 130 & 140.80 & 155 & 2.46 & 2.71 & 2.99 \\
200 & 176 & 187.63 & 199 & 2.52 & 2.72 & 2.94 \\
250 & 217 & 234.31 & 247 & 2.53 & 2.72 & 2.93 \\
300 & 266 & 281.43 & 300 & 2.54 & 2.72 & 2.91 \\
350 & 312 & 328.17 & 347 & 2.53 & 2.72 & 2.89 \\
400 & 357 & 374.88 & 399 & 2.57 & 2.72 & 2.90 \\
450 & 399 & 421.64 & 445 & 2.57 & 2.72 & 2.89 \\
500 & 450 & 468.51 & 486 & 2.60 & 2.72 & 2.89 \\ \bottomrule
\end{tabular}
\end{table}

\begin{table}[H]
\caption{\label{cap6:table:inter_cg_sr_50} Resultados experimentais do algoritmo de $2.5$-aproximação para o problema da Distância de Reversões e Indels Intergênicos em Strings com Sinais (Algoritmo~\ref{cap5:alg:2_5-approx}), considerando instâncias criadas com $k = \frac{n}{2}$.}
\centering
\begin{tabular}{@{}lrrrrrr@{}}
\toprule
 & \multicolumn{3}{c}{Operações} & \multicolumn{3}{c}{Aproximação} \\ \midrule
Tamanho & Mínimo & Média & Máximo & Mínimo & Média & Máximo \\ \midrule
50 & 31 & 40.52 & 51 & 1.03 & 1.28 & 1.60 \\
100 & 69 & 81.77 & 98 & 1.11 & 1.29 & 1.50 \\
150 & 105 & 122.56 & 141 & 1.14 & 1.30 & 1.47 \\
200 & 145 & 164.21 & 185 & 1.17 & 1.30 & 1.43 \\
250 & 180 & 205.22 & 227 & 1.17 & 1.30 & 1.43 \\
300 & 224 & 247.02 & 271 & 1.20 & 1.30 & 1.42 \\
350 & 254 & 288.14 & 311 & 1.22 & 1.30 & 1.44 \\
400 & 305 & 329.81 & 360 & 1.20 & 1.30 & 1.43 \\
450 & 340 & 370.39 & 401 & 1.22 & 1.30 & 1.38 \\
500 & 385 & 411.94 & 446 & 1.21 & 1.30 & 1.38 \\ \bottomrule
\end{tabular}
\end{table}

\begin{table}[H]
\caption{\label{cap6:table:inter_cg_sr_100} Resultados experimentais do algoritmo de $2.5$-aproximação para o problema da Distância de Reversões e Indels Intergênicos em Strings com Sinais (Algoritmo~\ref{cap5:alg:2_5-approx}), considerando instâncias criadas com $k = n$.}
\centering
\begin{tabular}{@{}lrrrrrr@{}}
\toprule
 & \multicolumn{3}{c}{Operações} & \multicolumn{3}{c}{Aproximação} \\ \midrule
Tamanho & Mínimo & Média & Máximo & Mínimo & Média & Máximo \\ \midrule
50 & 45 & 55.57 & 67 & 1.19 & 1.42 & 1.71 \\
100 & 95 & 110.99 & 126 & 1.26 & 1.43 & 1.67 \\
150 & 150 & 167.06 & 184 & 1.29 & 1.44 & 1.60 \\
200 & 199 & 222.75 & 244 & 1.30 & 1.44 & 1.59 \\
250 & 255 & 278.39 & 299 & 1.33 & 1.44 & 1.58 \\
300 & 311 & 334.73 & 359 & 1.33 & 1.45 & 1.56 \\
350 & 361 & 390.63 & 415 & 1.36 & 1.45 & 1.55 \\
400 & 418 & 446.35 & 484 & 1.36 & 1.45 & 1.55 \\
450 & 475 & 502.24 & 537 & 1.38 & 1.45 & 1.55 \\
500 & 529 & 558.20 & 588 & 1.36 & 1.45 & 1.53 \\ \bottomrule
\end{tabular}
\end{table}

\begin{table}[H]
\caption{\label{cap6:table:inter_cg_srt_50} Resultados experimentais do algoritmo de $4$-aproximação para o problema da Distância de Reversões, Transposições e Indels Intergênicos em Strings com Sinais (Algoritmo~\ref{cap5:alg:4_transp_rev_cycle_graph}), considerando instâncias criadas com $k = \frac{n}{2}$.}
\centering
\begin{tabular}{lrrrrrr}
\toprule
 & \multicolumn{3}{c}{Operações} & \multicolumn{3}{c}{Aproximação} \\ \midrule
Tamanho & Mínimo & Média & Máximo & Mínimo & Média & Máximo \\ \midrule
50 & 35 & 46.05 & 58 & 2.11 & 2.63 & 3.27 \\
100 & 77 & 93.57 & 113 & 2.33 & 2.67 & 3.02 \\
150 & 121 & 141.41 & 166 & 2.41 & 2.70 & 3.01 \\
200 & 165 & 189.05 & 213 & 2.43 & 2.71 & 3.03 \\
250 & 212 & 236.93 & 263 & 2.47 & 2.71 & 2.94 \\
300 & 259 & 285.49 & 313 & 2.51 & 2.72 & 2.93 \\
350 & 302 & 333.40 & 365 & 2.55 & 2.72 & 2.92 \\
400 & 349 & 381.91 & 425 & 2.52 & 2.73 & 2.93 \\
450 & 393 & 431.13 & 465 & 2.54 & 2.73 & 2.91 \\
500 & 437 & 478.93 & 520 & 2.58 & 2.73 & 2.92 \\ \bottomrule
\end{tabular}
\end{table}

\begin{table}[H]
\caption{\label{cap6:table:inter_cg_srt_100} Resultados experimentais do algoritmo de $4$-aproximação para o problema da Distância de Reversões, Transposições e Indels Intergênicos em Strings com Sinais (Algoritmo~\ref{cap5:alg:4_transp_rev_cycle_graph}), considerando instâncias criadas com $k = n$.}
\centering
\begin{tabular}{@{}lrrrrrr@{}}
\toprule
 & \multicolumn{3}{c}{Operações} & \multicolumn{3}{c}{Aproximação} \\ \midrule
Tamanho & Mínimo & Média & Máximo & Mínimo & Média & Máximo \\ \midrule
50 & 48 & 57.01 & 69 & 2.27 & 2.84 & 3.35 \\
100 & 99 & 114.47 & 127 & 2.55 & 2.88 & 3.29 \\
150 & 156 & 172.17 & 191 & 2.59 & 2.90 & 3.21 \\
200 & 212 & 230.07 & 250 & 2.66 & 2.90 & 3.18 \\
250 & 268 & 288.32 & 309 & 2.67 & 2.91 & 3.09 \\
300 & 320 & 346.40 & 371 & 2.69 & 2.91 & 3.12 \\
350 & 381 & 404.21 & 431 & 2.64 & 2.91 & 3.10 \\
400 & 435 & 462.18 & 494 & 2.76 & 2.92 & 3.15 \\
450 & 495 & 520.33 & 554 & 2.77 & 2.92 & 3.08 \\
500 & 542 & 579.01 & 618 & 2.76 & 2.92 & 3.12 \\ \bottomrule
\end{tabular}
\end{table}

\section{Experimentos com Genomas Reais}\label{cap6:section:genomas_reais}

Nesta seção, evidenciamos a aplicabilidade de um dos nossos algoritmos usando genomas reais de cianobactérias da base de dados Cyanorak 2.1~\cite{2020:garczarek-etal}. Como as reversões são os eventos mais comuns em cianobactérias~\cite{2002-dalevi-etal,2003-lefebvre-etal} e a base de dados utilizada possui informações sobre a orientação dos genes, utilizamos o algoritmo de $2.5$-aproximação para o problema da Distância de Reversões e Indels Intergênicos em Strings com Sinais (Algoritmo~\ref{cap5:alg:2_5-approx}) e o algoritmo polinomial para a Ordenação de Permutações por Reversões~\cite{1999-hannenhalli-pevzner} (Algoritmo \texttt{HP}). 

A base de dados Cyanorak 2.1 possui $94$ genomas. O número mínimo e o número máximo de genes nesses genomas são $1834$ e $4391$, respectivamente. Em média, cada genoma possui um pouco mais de $95\%$ de genes únicos. Para ajustar os dados à entrada do Algoritmo~\ref{cap5:alg:2_5-approx}, realizamos o seguinte pré-processamento em cada par de genomas para construir uma instância intergênica:

\begin{enumerate}
	\item {\revisaof Mantemos apenas a primeira ocorrência de genes repetidos em cada genoma, marcando as outras cópias como genes a serem inseridos ou removidos};
	\item Mapeamos o genoma de destino em $\G_d = (\iota^n, \breve{\iota}^n)$: neste passo, cada sequência contígua de genes que está presente apenas no genoma de destino é representada por um único elemento na string $\iota^n$. Após isso, para as extremidades e para cada par de elementos consecutivos em $\iota^n$, computamos o tamanho das regiões intergênicas e construímos $\breve{\iota}^n$;
	\item Mapeamos o genoma de origem em $\G_o = (A, \breve{A})$: neste passo, cada sequência contígua de genes que está presente apenas no genoma de origem é representada por um único elemento $\deletionElement$ na string $A$. Após isso, para as extremidades e para cada par de elementos consecutivos em $A$, computamos o tamanho das regiões intergênicas e construímos $\breve{A}$.
\end{enumerate}

Como o Algoritmo \texttt{HP}~\cite{1999-hannenhalli-pevzner} não considera regiões intergênicas e só pode ser aplicado em genomas balanceados, para cada par de genoma, aplicamos o seguinte pré-processamento para construir uma instância:

\begin{enumerate}
	\item {\revisaof Mantemos apenas a primeira ocorrência de genes repetidos em cada genoma e marcamos as outras cópias como genes a serem inseridos ou removidos;}
	\item Marcamos como $\deletionElement$ qualquer sequência contígua de genes presente apenas no genoma de origem ou no genoma de destino;
	\item Removemos cada elemento $\deletionElement$ nos genomas de origem e destino, simulando {\it indels};
	\item Mapeamos a sequência de genes do genoma de destino em $\iota^n$ e a sequência de genes do genoma de origem em $\pi$. 
\end{enumerate}

Para cada par de genomas, o tamanho da solução final é igual à soma do número de {\it indels} simulados no passo $3$ e da distância $d_\rho(\pi)$, encontrada pelo Algoritmo \texttt{HP}~\cite{1999-hannenhalli-pevzner}.

Para cada um desses dois algoritmos, construímos uma matriz de distâncias usando todos os pares de genomas da base de dados Cyanorak 2.1~\cite{2020:garczarek-etal}. Cada uma dessas matrizes é usada para a criação de árvores filogenéticas. Após isso, comparamos cada uma das árvores criadas com a árvore filogenética apresentada por Laurence e coautores~\cite{2020:garczarek-etal}. Usamos como métrica o número de folhas da subárvore de concordância máxima (MAST)~\cite{de2007congruence}, que é construída com um par de árvores. {\revisaof Nessa métrica, quanto maior o número de folhas da MAST, maior o nível de congruência topológica entre as duas árvores comparadas.} A Tabela~\ref{cap6:table:mast} mostra os resultados obtidos ao comparar essas árvores filogenéticas.

\begin{table}[bt]
\caption{\label{cap6:table:mast}
Resultados da comparação entre a árvore filogenética apresentada por Laurence e coautores~\cite{2020:garczarek-etal} e as árvores filogenéticas construídas com as matrizes de distâncias obtidas pelo Algoritmo~\ref{cap5:alg:2_5-approx} e pelo Algoritmo \texttt{HP}. Foram usados três diferentes métodos de reconstrução para a criação das árvores filogenéticas. A métrica usada é a quantidade de folhas na subárvore de concordância máxima (MAST).}
\centering
  {
    \renewcommand{\arraystretch}{1.2}
    \begin{tabular}{@{}llc@{}}
    \toprule
    Método de Reconstrução                                                         & Algoritmo & MAST \\ \hline
    \multirow{2}{*}{{\it Neighbor Joining}~\cite{saitou1987neighbor}}                   & Algoritmo \texttt{HP}        & 53   \\ \cline{2-3} 
                                                                                  & Algoritmo~\ref{cap5:alg:2_5-approx}      & 56   \\ \hline
    \multirow{2}{*}{{\it Unweighted Neighbor Joining}~\cite{gascuel1997concerning}}     & Algoritmo \texttt{HP}        & 52   \\ \cline{2-3} 
                                                                                  & Algoritmo~\ref{cap5:alg:2_5-approx}      & 56   \\ \hline
    \multirow{2}{*}{{\it Circular Order Reconstruction}~\cite{1997:makarenkov-leclerc}} & Algoritmo \texttt{HP}        & 55   \\ \cline{2-3} 
                                                                                  & Algoritmo~\ref{cap5:alg:2_5-approx}      & 57   \\ \hline
    \end{tabular}
  }
\end{table}

Com os resultados da Tabela~\ref{cap6:table:mast}, concluímos que, para todos os métodos de reconstrução usados, as árvores filogenéticas criadas com os resultados do Algoritmo~\ref{cap5:alg:2_5-approx} obtiveram maior nível de congruência topológica com a árvore filogenética apresentada por Laurence e coautores~\cite{2020:garczarek-etal}. A Figura~\ref{cap6:fig:tree} mostra a representação visual da árvore filogenética criada a partir dos resultados do Algoritmo~\ref{cap5:alg:2_5-approx} e do método de reconstrução {\it Circular Order Reconstruction}~\cite{1997:makarenkov-leclerc}, que foi a configuração que obteve maior valor de MAST. Podemos observar que a árvore filogenética da Figura~\ref{cap6:fig:tree} apresenta boa separação das espécies e dos grupos de organismos.

\begin{figure*}[thb]
    \centering
    \includegraphics[width=\columnwidth]{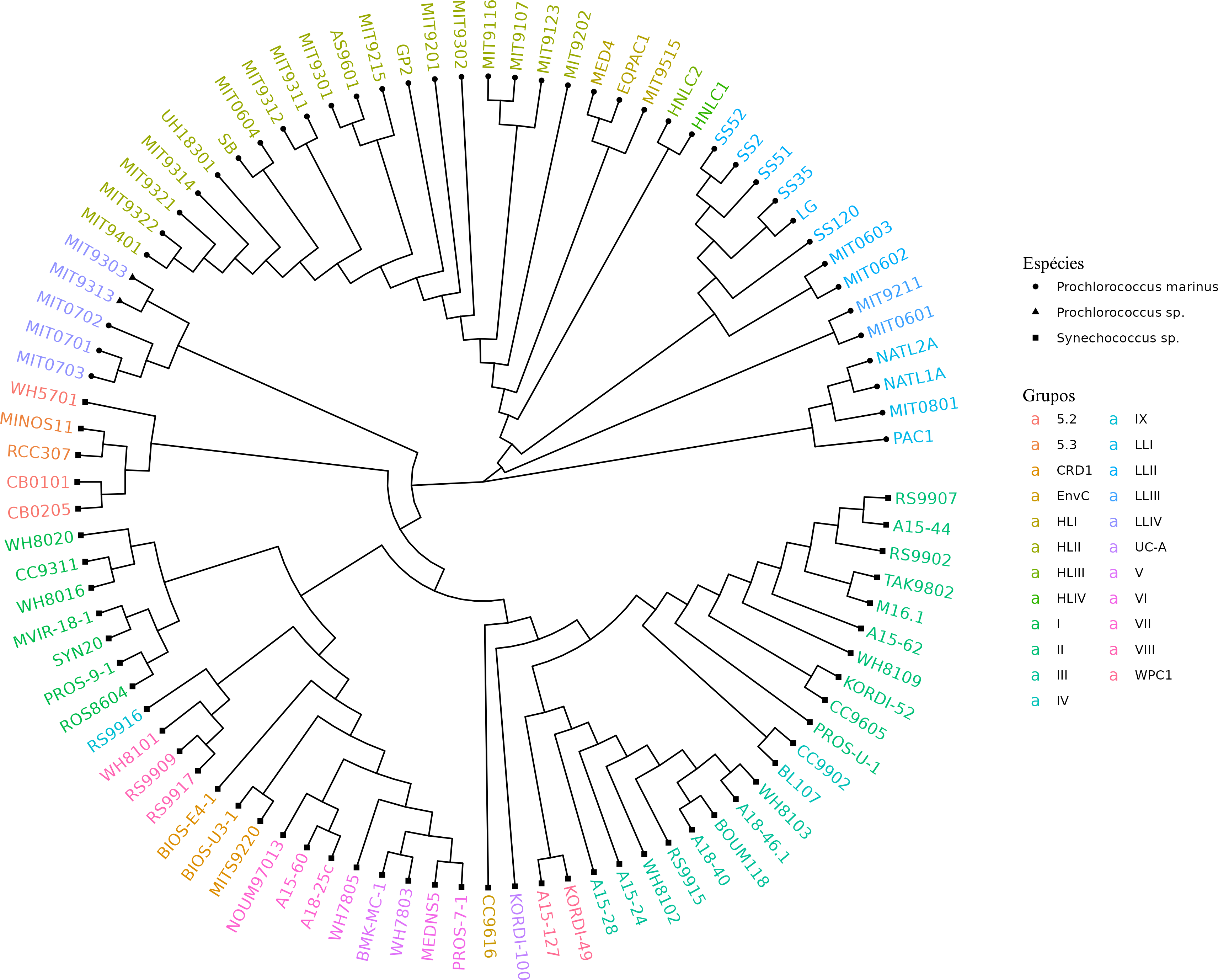}
    \caption{\label{cap6:fig:tree} Árvore filogenética baseada em rearranjos de genomas criada a partir do Algoritmo~\ref{cap5:alg:2_5-approx} e do método de reconstrução {\it Circular Order Reconstruction}~\cite{1997:makarenkov-leclerc} usando genomas da base de dados Cyanorak 2.1~\cite{2020:garczarek-etal}. Utilizamos o pacote \texttt{treeio} da linguagem R~\cite{2020:wang-etal} para a construção desta imagem.}
\end{figure*}

\chapter{Considerações Finais}\label{cap:label:conclusao}

Esta tese apresentou os resultados mais importantes obtidos durante o período de doutorado. Para todos os problemas considerados nesta tese, foi realizado um estudo extensivo com foco na prova da complexidade desses problemas e no desenvolvimento de algoritmos de aproximação. 

Iniciamos esta tese tratando de problemas de Distância de Rearranjos em genomas balanceados, mais conhecidos como problemas de Ordenação de Permutações por Rearranjos. Apresentamos uma nova versão de um dos algoritmos mais conhecidos na literatura da área, a $1.375$-aproximação de Elias e Hartman~\cite{2006-elias-hartman}, {\revisaof que corrige um problema do algoritmo original com uma complexidade de tempo menor que a de outros algoritmos propostos para correção desse mesmo problema}. Além disso, provamos que a Ordenação de Permutações por Rearranjos é NP-difícil para seis modelos de rearranjos que incluem transposições junto com a combinação de reversões, transposições inversas e revrevs, sendo que alguns desses modelos são estudados desde o século passado, mas não possuíam complexidade conhecida.

A cada capítulo, incorporamos mais características genômicas aos problemas com o objetivo de torná-los mais relevantes do ponto de vista biológico. Estudamos a Distância de Rearranjos e a Distância de Rearranjos Intergênicos em genomas desbalanceados, adaptando e criando novos conceitos e estruturas que possibilitaram a criação de algoritmos de aproximação. Também provamos que a maioria dos problemas investigados são NP-Difíceis. 

{\revisaodois Por fim, apresentamos experimentos em genomas sintéticos e em genomas reais. Os experimentos em genomas sintéticos mostram que o fator de aproximação prático de alguns dos algoritmos foi muito menor que o fator de aproximação teórico. Já no experimento com dados reais, usamos genomas de cianobactérias da base de dados Cyanorak 2.1~\cite{2020:garczarek-etal}. Nesse experimento, criamos árvores filogenéticas usando o nosso algoritmo para a Distância de Reversões e Indels Intergênicos e o algoritmo polinomial exato para a Ordenação de Reversões, criado por Hannenhalli e Pevzner~\cite{1999-hannenhalli-pevzner}. Quando comparadas com a árvore filogenética criada pelos autores que publicaram a base de dados Cyanorak 2.1, a árvore filogenética criada com o nosso algoritmo obteve um nível de congruência topológica maior do que a árvore filogenética criada com o algoritmo de Hannenhalli e Pevzner~\cite{1999-hannenhalli-pevzner}}.

Todos os resultados desta tese foram publicados em congressos e revistas internacionais, nos seguintes artigos:

\begin{itemize}
	\item ``A 1.375-Approximation Algorithm for Sorting by Transpositions with
  Faster Running Time'', apresentado na conferência {\it Brazilian Symposium on Bioinformatics} (BSB) em 2022~\cite{2022d-alexandrino-etal} (Capítulo~\ref{cap:label:transposicao}, Seção~\ref{cap3:secao_novo_algoritmo});
	\item ``On the Complexity of Some Variations of Sorting by Transpositions'', publicado na revista {\it Journal of Universal Computer Science} em 2020~\cite{2020c-alexandrino-etal} (Capítulo~\ref{cap:label:transposicao}, Seção~\ref{cap3:secao_complexidade});
	\item ``Genome Rearrangement Distance with Reversals, Transpositions, and
  Indels'', publicado na revista {\it Journal of Computational Biology} em 2021~\cite{2021a-alexandrino-etal} (Capítulo~\ref{cap:label:indel}, Seção~\ref{cap4:section:breakpoints});
	\item ``Labeled Cycle Graph for Transposition and Indel Distance'', publicado na revista {\it Journal of Computational Biology} em 2022~\cite{2022a-alexandrino-etal} (Capítulo~\ref{cap:label:indel}, Seção~\ref{cap4:section:grafo_ciclos});
	\item ``Block Interchange and Reversal Distance on Unbalanced Genomes'', apresentado na conferência {\it Brazilian Symposium on Bioinformatics} (BSB) em 2023~\cite{alexandrino2023block} (Capítulo~\ref{cap:label:indel}, Seção~\ref{cap4:section:grafo_ciclos});
	\item ``Incorporating Intergenic Regions into Reversal and Transposition
  Distances with Indels'', apresentado na conferência RECOMB {\it Comparative Genomics} em 2021. Uma versão estendida foi publicada na revista {\it {Journal of Bioinformatics and Computational Biology}} em 2021~\cite{2021c-alexandrino-etal} (Capítulo~\ref{cap:label:intergenicos}, Seção~\ref{cap5:section:breakpoints});	
	\item ``Reversal Distance on Genomes with Different Gene Content and
  Intergenic Regions Information'', apresentado na conferência {\it Algorithms for Computational Biology} (AlCoB) em 2021~\cite{2021b-alexandrino-etal} (Capítulo~\ref{cap:label:intergenicos}, Seção~\ref{cap5:subsection:2_5_reversal});
	\item ``Reversal and Indel Distance with Intergenic Region Information'', publicado na revista {\it IEEE/ACM Transactions on Computational Biology and
  Bioinformatics} em 2023~\cite{2022b-alexandrino-etal} (Capítulo~\ref{cap:label:intergenicos}, Seção~\ref{cap5:subsection:2_5_reversal});
	\item ``Transposition Distance Considering Intergenic Regions for Unbalanced
  Genomes'', apresentado na conferência {\it International Symposium on
  Bioinformatics Research and Applications} (ISBRA) em 2022~\cite{2022c-alexandrino-etal} (Capítulo~\ref{cap:label:intergenicos}, Seção~\ref{cap5:subsection:4_transp});
	\item ``Reversal and Transposition Distance on Unbalanced Genomes Using
  Intergenic Information'', publicado na revista {\it Journal of Computational Biology} em 2023~\cite{2023-alexandrino-jcb} (Capítulo~\ref{cap:label:intergenicos}, Seção~\ref{cap5:subsection:4_transp}).
  \item ``Rearrangement Distance Problems: An updated survey'', aceito para publicação na revista {\it ACM Computing Surveys} em 2024~\cite{oliveira2024rearrangement} (Revisão bibliográfica).
\end{itemize}

Além dos artigos acima, que são diretamente relacionados a resultados específicos desta tese, outras contribuições na área de rearranjos de genomas foram publicadas durante o período de doutorado nos seguintes artigos:

\begin{itemize}
  \item ``Approximation Algorithms for Sorting Permutations by Length-Weighted
  Short Rearrangements'', apresentado na conferência {\it Latin and American Algorithms, Graphs and Optimization Symposium} (LAGOS) em 2019~\cite{2019-alexandrino-etal};
  \item ``Sorting Permutations by Fragmentation-Weighted Operations'', publicado na revista {\it Journal of Bioinformatics and Computational Biology} em 2020~\cite{2020a-alexandrino-etal};
  \item ``Length-Weighted $\lambda$-Rearrangement Distance'', publicado na revista {\it Journal of Combinatorial Optimization} em 2020~\cite{2020b-alexandrino-etal};
  \item ``Heuristics for Breakpoint Graph Decomposition with Applications in Genome Rearrangement Problems'', apresentado na conferência {\it Brazilian Symposium on
  Bioinformatics} (BSB) em 2020~\cite{2020-pinheiro-etal};
  \item ``Sorting by Reversals and Transpositions with Proportion Restriction'', apresentado na conferência {\it Brazilian Symposium on
  Bioinformatics} (BSB) em 2020~\cite{2020c-brito-etal};
  \item ``Reversals Distance Considering Flexible Intergenic Regions Sizes'', apresentado na conferência {\it Algorithms for Computational Biology} (AlCoB) em 2021~\cite{2021c-brito-etal};
  \item ``Approximation Algorithms for Sorting $\lambda$-Permutations by $\lambda$-Operations'', publicado na revista {\it Algorithms} em 2021~\cite{2021-miranda-etal};
  \item ``Reversals and Transpositions Distance with Proportion Restriction'', publicado na revista {\it Journal of Bioinformatics and Computational Biology} em 2021~\cite{2021a-brito-etal};
  \item ``Approximation Algorithm for Rearrangement Distances Considering Repeated Genes and Intergenic Regions'', publicado na revista {\it Algorithms for Molecular Biology} em 2021~\cite{2021b-siqueira-etal};
  \item ``Algorithms for the Maximum Eulerian Cycle Decomposition Problem'', apresentado na conferência Simp{\'{o}}sio Brasileiro de Pesquisa Operacional (SBPO) em 2021~\cite{2021-pinheiro-etal};
  \item ``Heuristics for Cycle Packing of Adjacency Graphs for Genomes with Repeated Genes'', apresentado na conferência {\it Brazilian Symposium on Bioinformatics} (BSB) em 2021~\cite{2021c-siqueira-etal};
  \item ``Reversal and Transposition Distance of Genomes Considering Flexible Intergenic Regions'', apresentado na conferência {\it Latin and American Algorithms, Graphs and Optimization Symposium} (LAGOS) em 2021~\cite{2021d-brito-etal}; 
  \item ``An Improved Approximation Algorithm for the Reversal and Transposition Distance Considering Gene Order and Intergenic Sizes'', publicado na revista {\it Algorithms for Molecular Biology} em 2021~\cite{2021b-brito-etal};
  \item ``Signed Rearrangement Distances Considering Repeated Genes and Intergenic Regions'', apresentado na conferência {\it Bioinformatics and Computational Biology} (BICoB) em 2022~\cite{2022-siqueira-etal};
  \item ``A New Approach for the Reversal Distance with Indels and Moves in Intergenic Regions'', apresentado na conferência RECOMB {\it Comparative Genomics} em 2022~\cite{2022b-brito-etal};
  \item ``Sorting by $k$-Cuts on Signed Permutations'',  apresentado na conferência RECOMB {\it Comparative Genomics} em 2022~\cite{2022-oliveira-etal};
  \item ``Genome Rearrangement Distance with a Flexible Intergenic Regions Aspect'', publicado na revista {\it IEEE/ACM Transactions on Computational Biology and Bioinformatics} em 2023~\cite{2022a-brito-etal};
  \item ``Approximation Algorithms for Sorting by k-Cuts on Signed Permutations'', publicado na revista {\it Journal of Combinatorial Optimization} em 2023~\cite{2023-oliveira-etal};  
  \item ``Rearrangement Distance with Reversals, Indels, and Moves in Intergenic Regions on Signed and Unsigned Permutations'', publicado na revista {\it Journal of Bioinformatics and Computational Biology} em 2023~\cite{brito2023rearrangement};
  \item ``Signed Rearrangement Distances Considering Repeated Genes, Intergenic Regions, and Indels'', publicado na revista {\it Journal of Combinatorial Optimization} em 2023~\cite{siqueira2023joco};
  \item ``Approximating Rearrangement Distances with Replicas and Flexible Intergenic Regions'', apresentado na conferência {\it International Symposium on Bioinformatics Research and Applications} (ISBRA) em 2023~\cite{siqueira2023isbra};
  \item ``Maximum Alternating Balanced Cycle Decomposition and Applications in Sorting by Intergenic Operations Problems'', aceito para apresentação na conferência RECOMB {\it Comparative Genomics} em 2024~\cite{2024-brito-etal}.
\end{itemize}

O primeiro trabalho futuro importante deixado por esta tese é a determinação da complexidade dos problemas de Ordenação de Permutações por Transposições e Outros Rearranjos quando $w_{\tau}/w_{\rho} > 1.5$, onde $w_{\rho}$ é o custo de reversões e $w_{\tau}$ é o custo de transposições e rearranjos similares.

Para os problemas de Distância de Rearranjos e Distância de Rearranjos Intergênicos em genomas desbalanceados, consideramos apenas a abordagem não ponderada nesta tese. Assim, um trabalho futuro interessante é o estudo desses problemas considerando uma abordagem ponderada. Além disso, os seguintes problemas continuam com complexidade em aberto: a Distância de Block Interchanges e Indels em Strings sem Sinais; a Distância de Block Interchanges, Reversões e Indels em Strings com Sinais; e a Distância de Reversões e Indels Intergênicos em Strings com Sinais.

\bibliographystyle{plain}
\bibliography{bibfile}


\end{document}